\setlist[itemize]{leftmargin=*}
\newenvironment{simulation}[1][htb]
  {
   \begin{algorithm}[#1]%
  }{\end{algorithm}}
\newcommand{\sforall}{\;\forall\;}
\newcommand{\E}{\mathbb{E}}
\renewcommand{\Pr}{\mathbb{P}}
\newcommand{\Var}{\text{\normalfont Var}}
\newcommand{\argmin}[1]{\underset{#1}{\text{\normalfont arg min}}}
\newcommand{\argmax}[1]{\underset{#1}{\text{\normalfont arg max}}}
\newtheorem{assumption}{Assumption}
\newtheorem{theorem}{Theorem}
\newtheorem{proposition}{Proposition}
\newtheorem{corollary}{Corollary}
\newtheorem{remark}{Remark}
\newtheorem{definition}{Definition}
\newtheorem{lemma}{Lemma}
\title{A Bayesian framework for change-point detection with uncertainty quantification}
\author[1]{Davis Berlind}
\author[2]{Lorenzo Cappello\thanks{L.C. acknowledges the support of Ayudas Fundacion BBVA a Proyectos de Investigacion Cientifica 2021, the Spanish Ministry of Economy and Competitiveness grant PID2022-138268NB-I00, financed by MCIN/AEI/10.13039/501100011033, FSE+MTM2015-67304-P, and FEDER, EU;  the grant Ramon y Cajal 2022 RYC2022-038467-I, financed by MCIN/AEI/10.13039/501100011033 and FSE+, and the Severo Ochoa Programme for Centres of Excellence in R\&D (Barcelona School of Economics CEX2019-000915-S), funded by MCIN/AEI/10.13039/50110001103}}
\author[3]{Oscar Hernan Madrid Padilla\thanks{O.H.M.P. was partially supported by the Hellman Fellowship, 2023-2024.}}
\affil[1]{Department of Statistics \& Data Science, University of California, Los Angeles}
\affil[2]{Department of Economics and Business, Universitat Pompeu Fabra; Data Science Center, Barcelona School of Economics}
\affil[3]{Department of Statistics \& Data Science, University of California, Los Angeles}
\date{\today}
\begin{document}

\numberwithin{equation}{section}

\maketitle

\begin{abstract}
    We introduce a novel Bayesian method that can detect multiple structural breaks in the mean and variance of a length $T$ time-series. Our method quantifies uncertainty by returning $\alpha$-level credible sets around the estimated locations of the breaks. In the case of a single change in the mean and/or the variance of an independent sub-Gaussian sequence, we prove that our method attains a localization rate that is minimax optimal up to a $\log T$ factor. For an $\alpha$-mixing sequence with dependence, we prove this optimality holds up to $\log^2 T$ factor. For $d$-dimensional mean changes, we show that if $d \gg \log T$ and the mean signal is dense, then our method exactly recovers the location of the change. Our method detects multiple change-points by modularly ``stacking'' single change-point models and searching for a variational approximation to the posterior distribution. This approach is applicable to both continuous and count data. Extensive simulation studies demonstrate that our method is competitive with the state-of-the-art in terms of speed and performance, and we produce credible sets that are an order of magnitude smaller than our competitors without sacrificing nominal coverage guarantees. We apply our method to real data by detecting i) the gating of an ion channel in the outer membrane of a bacterial cell, and ii) changes in the lithological structure of an oil well. 

\end{abstract}

\section{Introduction}
\label{sec:intro}

Change-point detection (CPD) has been a perennial topic of interest in statistical inference since the introduction of the CUSUM algorithm by \cite{Page54}. Broadly, CPD involves identifying the locations of structural breaks in an ordered sequence of data $\mathbf{y}_{1:T}:= \{\mathbf{y}_t\}_{t=1}^T$ where $\mathbf{y}_t \in \mathbb{R}^d$. Suppose there are $L$ indices $\boldsymbol{\tau}_{1:L}\subset \{1,\ldots,T\}$, with $\tau_0:=1 < \tau_1 < \ldots< \tau_L < \tau_{L+1} := T+1$, and a collection of $L+1$ distributions $\{F_\ell\}_{\ell=0}^L$ such that $F_\ell \neq F_{\ell+1}$ and:
\begin{align}\label{eq:cdp-def}
    \mathbf{y}_t \sim F_\ell, \;\sforall t \in [\tau_\ell,\tau_{\ell+1}).
\end{align}
The aim of any CPD method is to consistently estimate the number of changes $L$ and their locations $\boldsymbol{\tau}_{1:L}$. Due to the generality of this construction, CPD problems appear in a variety of fields. Examples include: detecting adjustments to the real interest rate (\citealp{Bai03}), structural changes in a DNA sequence (\citealp{Muggeo11}), radiological anomalies \citep{madrid2019sequential}, and deforestation events (\citealp{Wendelberger21}). In addition to estimating the number and the locations of the changes underlying $\mathbf{y}_{1:T}$, a limited set of methods can return localized regions of $\{1,\ldots,T\}$ that contain change-points at a prescribed significance level. Such measures of uncertainty are essential in applications like the design of medical procedures. For example, \cite{Gao19} use a CPD method to determine whether a human liver is viable for transplant. In this case, knowing whether the estimated change-point is likely to be off by minutes versus hours will materially affect the outcome of the procedure. 

Early attempts to provide confidence sets for change-point estimates were either limited to the case of a single mean change (\citealp{Siegmund86, Worsley86,Jirak15,Horvath17}), required knowledge of $L$ (\citealp{Bai03}), or could only produce approximate sets based on the limiting distribution of the estimator (\citealp{Bai10}). Since the introduction of SMUCE by \cite{Frick14}, the state-of-the-art has advanced rapidly. There are now methods that can generate $\alpha$-level confidence sets in the presence of a variety of types of changes, data structures, and dependence settings (\citealp{Pein17, Eichinger18, Dette20, Fang20, fang2021detectionestimationlocalsignals, Chen22, Cho22, madrid2023change, Fryzlewicz24median, Fryzlewicz24}). Still, there remains room for improvement. \cite{chen2014discussion} note that the sets returned by SMUCE exhibit undesirable coverage properties as $\alpha$ decreases. \cite{Fryzlewicz24} attempted to address this issue, but like SMUCE, the large confidence sets returned by their method tend to be overly conservative. Additionally, methods capable of producing confidence sets in the case of multivariate data or changes in the variance of $\mathbf{y}_{1:T}$ remain underdeveloped.

Bayesian CPD (BCPD) traces its origins to \cite{chernoff1964estimating} and can offer a parsimonious framework for uncertainty quantification by fully characterizing the posterior distribution of $\{\boldsymbol{\tau}_{1:L}, L\}$. Important early developments came from \cite{Barry93}, who formulated (\ref{eq:cdp-def}) as a product partition model (\citealp{Barry92}) and used closed-form recursions to efficiently draw from the posterior distribution. Another popular approach to BCPD involves introducing a latent change-point indicator $z_t$ and using techniques from the HMM literature to simulate the posterior distribution of $\mathbf{z}_{1:T}$ (\citealp{Chib98, Fearnhead06, Nam12, Harle16, Fan17}). Despite these innovations, BCPD methods remain an order of magnitude slower than state-of-the-art approaches, and tend to lack theoretical development (a notable exception being \cite{Kim24}). Additionally, the posteriors returned by existing methods characterize the joint distribution of all the change-point locations; however, most methods do not offer a clear approach for condensing this information into meaningful summaries like credible sets. In practice, the majority of BCPD methods simply return the marginal posterior probability that any given index $t$ is a change-point. 

\subsection{List of Contributions}

\textbf{MICH}: 
We introduce the Multiple Independent CHange-point (MICH) model, a Bayesian method that can identify changes in the mean and variance of $\mathbf{y}_{1:T}$. We construct MICH by modularly combining many single change-point models, a concept that was first introduced for mean changes by \cite{Wang20} and for variance changes by \cite{Cappello25}. This modular construction enables us to efficiently approximate the posterior distribution of the change-points using a variational algorithm (\citealp{Jordan99, Blei17}) that we implement as a deterministic backfitting procedure (\citealp{Breiman85}). Unlike other Bayesian methods, MICH requires virtually no hyperparameter tuning. In simulations, MICH is competitive with state-of-the-art models at little additional computational cost. In fact, MICH returns credible sets that are an order of magnitude smaller than the sets returned by competing methods without sacrificing nominal coverage guarantees.


\textbf{Minimax Optimal Localization}: We show that our promising empirical results are partially explained by the optimal consistency properties of the single change-point models that make up MICH. For univariate $\mathbf{y}_{1:T}$ with a single change-point, we show that for the cases of a mean-only change, a variance-only change, and a simultaneous mean and variance change, our model achieves the minimax localization rate up to a logarithmic factor. For the mean-only and variance-only cases, this result matches the optimal rates recently achieved by the Bayesian models of \cite{Liu17}, \cite{Cappello21}, and \cite{Kim24}. For the simultaneous mean and variance change case, this optimal rate is a novel result for a Bayesian CPD method, the proof of which relies on a careful treatment of the many non-linear terms in our model's likelihood.

\textbf{Non-Gaussian and Dependent Data}: Though our primary model assumes i.i.d. Gaussian data, we show that our idea to combine multiple change-point models extends to other data types. In Appendix \ref{app:poisson} we show how to adapt MICH to Poisson data that exhibits change-points in the rate. We also show that the single change-point localization rates remain unchanged when the entries of $\mathbf{y}_{1:T}$ are independent and sub-Gaussian, somewhat in-line with the robustness to misspecification observed for other Bayesian methods (\citealp{lai2011simple}). Beyond independence, we generalize our results to univariate $\alpha$-mixing sequences. Using novel concentration results from \cite{Padilla23}, we show that MICH achieves a localization rate of approximately order $\mathcal{O}(\log^2 T)$ for dependent data. 

\textbf{High-Dimensional Mean Changes}: In the multivariate setting, MICH can localize a single mean change at the minimax optimal rate when $\min\{\tau_1,T-\tau_1\}\lVert\E[\mathbf{y}_{\tau_1} - \mathbf{y}_{\tau_1 - 1}]\rVert^2_2 \gg d\log T$. In high-dimensions, this condition is satisfied when we observe a ``dense" signal, i.e. many coordinates of $\E[\mathbf{y}_t]$ undergo small changes. To the best of our knowledge, MICH is the first Bayesian method to achieve the optimal rate in high-dimensions in the presence of dense mean changes. Furthermore, when $d \gg \log T$, we recover the result of \cite{Bai10} by showing that our estimator converges to the true location of the change.

\subsection{Related Work}

Frequentist methods for detecting changes in the mean and variance of $\mathbf{y}_{1:T}$ broadly fall into two categories: i) greedy algorithms like Binary Segmentation and its variants that search for changes by recursively partitioning $\mathbf{y}_{1:T}$ (\citealp{Scott74, Sen75, Vostrikova81, Olshen04, Fryzlewicz14, Kovacs22}), and ii) exact search methods that jointly estimate the locations of the changes by solving an optimization problem, often using dynamic programming (\citealp{Auger89, Killick12, Baranowski19, Padilla22}). There is also a vast literature for the ``on-line" setting, where the task is to detect changes as each $\mathbf{y}_t$ is sequentially observed. We refer the reader to \cite{Namoano19} and \cite{Yu02102023} for recent surveys of on-line methods.

The high-dimensional CPD literature is still nascent and is mostly concerned with detecting sparse signals where $\tilde{d} = o(d)$ coordinates of the mean vector jump at a given change-point (\citealp{Cho14, Jirak15, Wang17, Enikeeva19, Yu20, Chen22}). Most high-dimensional methods employ $\ell^\infty$-norm based aggregations of the data to leverage this sparsity, which differs from the $\ell^2$-norm aggregation used by our method (see \cite{Li23} for a recent survey of $\ell^2$ methods). Only \cite{Jirak15} and \cite{Chen22} characterize the limiting distribution of their estimators and provide confidence sets for the high-dimensional regime. \cite{Kim24} recently developed the first Bayesian high-dimensional method capable of localizing sparse mean and covariance changes at the minimax optimal rate; however, they approach the problem from a testing perspective and do not provide credible sets.

While our study focuses on localizing change-points with credible sets, we note that uncertainty quantification for CPD is often approached as a problem of hypothesis testing and post-selection inference (\citealp{Horvath12,Preuss15,Duy20,Hyun21,Jewell22}). This approach has received considerable attention in the literature and we direct the reader to \cite{Fryzlewicz24} for a recent survey and discussion of the limitations of testing and post-selection inference for CPD. 

\subsection{Outline of the Paper}

The organization of this article is as follows: In Section \ref{sec:scp}, we introduce three Bayesian models for localizing a single change in either the mean, variance, or both. We also characterize the localization rates for each of these models. In Section \ref{sec:mich}, we introduce the MICH model for multiple change-points. We develop a backfitting procedure in Algorithm \ref{alg:mich} that facilitates inference and returns a variational approximation to the true posterior distribution under MICH. In Section \ref{sec:simulations}, we conduct an extensive simulation study to compare the performance of MICH with state-of-the art competitors. In Section \ref{sec:data}, we apply MICH to electric current recordings from an ion channel in a bacterial cell and to multivariate lithological data from oil well-log. In Section \ref{sec:discussion}, we conclude and discuss future directions. An \texttt{R} package implementing MICH is available at \url{https://github.com/davis-berlind/MICH}. All proofs are deferred to the \hyperlink{appendix}{Appendix}. 

\subsection{Notation}

Throughout this article, we denote the index set $[T] := \{1, \ldots, T\}$ for $T\in \mathbb{N}$. For a set $S$, we denote its cardinality by $|S|$. For an event $E$, we define the indicator $\mathbbm{1}_{\{E\}}$ function to be equal to one when $E$ occurs and zero otherwise. For a sequence $\{x_t\}_{t\in\mathbb{Z}}$ and $r,s \in \mathbb{Z}$ with $r<s$, we define $\mathbf{x}_{r:s} := \{x_t\}_{t=r}^s$. For $\mathbf{x}\in\mathbb{R}^d$ and $p \geq 1$, we denote the $p$-norm $\lVert \mathbf{x} \rVert_p := (\sum_{i=1}^d |x_i|^p)^{1/p}$ and let $\mathcal{S}^d := \{\mathbf{x}\in\mathbb{R}^d_+ \::\: \lVert \mathbf{x} \rVert_1 =1\}$ be the $d$-dimensional probability simplex. For $\boldsymbol{\pi} \in \mathcal{S}^d$, we say $\tau \sim \text{Categorical}(\boldsymbol{\pi})$ if $\tau \in [d]$ and $\Pr(\tau = i) = \pi_i$ for each $i \in [d]$. For a pair of random variables $X$ and $Y$, we write $\{X,Y\}\sim\text{Normal-Gamma}(\mu,\lambda,\alpha,\beta)$ if $X|Y\sim \mathcal{N}(\mu,(\lambda Y)^{-1})$ and $Y\sim\text{Gamma}(\alpha,\beta)$. We write $\E_{g} [X]$ to denote the expectation taken with respect to a generic distribution $g$. For $n> 0$, let $\lVert X \rVert_{\psi_n} := \inf\{t > 0 : \E[\exp[(X /t)^n]] \leq 2\}$ be the $n$-Orlicz norm of $X$ and let $X\in\mathcal{SG}(\sigma)$ and $X\in\mathcal{SE}(\nu,\alpha)$ respectively indicate that $X$ is sub-Gaussian with parameter $\sigma$ or sub-exponential with parameters $\nu$ and $\alpha$. For functions $f$ and $g$, we write $f(T) = \mathcal{O}(g(T))$ if there exists some constant $M > 0$ and some $T^* > 0$ so that $|f(T)| \leq M g(T)$ for all $T > T^*$. Similarly, we write $g(T) \gtrsim f(T)$ if there exists $M > 0$ so that $f(T) \leq M g(T)$ for any $T$. Lastly, we write $f(T) = o(g(T))$, or equivalently $g(T) \gg f(T)$, when $\lim_{T\to\infty} \frac{f(T)}{g(T)} = 0$ 
\section{Single Change-Point (SCP) Models}
\label{sec:scp}

We begin by introducing three single change-point (SCP) models that will form the basic components of the multiple change-point method in Section \ref{sec:mich}. Suppose that we have $T$ observations of a $d$-dimensional time-series $\mathbf{y}_{1:T}$ where:
\begin{align}\label{eq:dgp}
    \mathbf{y}_t \:|\: \boldsymbol{\mu}_t, \boldsymbol{\Lambda}_t \overset{\text{ind.}}{\sim} \mathcal{N}_d\left(\boldsymbol{\mu}_t, \boldsymbol{\Lambda}^{-1}_t\right), \;\sforall t \in [T].
\end{align}
We assume that either $\boldsymbol{\mu}_{1:T}$, $\boldsymbol{\Lambda}_{1:T}$, or both can be decomposed into a known trend and an unknown piece-wise constant structure with a single change occurring at some unknown time $\tau \in [T]$, where:
\begin{align}
    \tau \sim \text{Categorical}(\boldsymbol{\pi}_{1:T}), \; \boldsymbol{\pi}_{1:T} \in \mathcal{S}^T. \label{eq:tau-cat}
\end{align}
Then, the posterior distribution of $\tau$ is given by:
\begin{align}
    \tau \:|\: \mathbf{y}_{1:T} &\sim \text{Categorical}(\overline{\boldsymbol{\pi}}_{1:T}), \label{eq:gamma-post-cat1} \\ 
    \overline{\pi}_t &\propto  \pi_t p(\mathbf{y}_{1:T} \;|\; \tau = t). \label{eq:gamma-post-cat2}
\end{align}
By choosing conditionally conjugate prior distributions for the jumps in $\boldsymbol{\mu}_{1:T}$ and $\boldsymbol{\Lambda}_{1:T}$, we arrive at closed-form expressions for $\overline{\pi}_t$ (see Appendix \ref{app:posterior-parameters}). Given $\overline{\boldsymbol{\pi}}_{1:T}$, a natural point-estimate for the location of the change-point is the posterior mode of $\tau$, i.e. the maximum \textit{a posteriori} (MAP) estimator:
\begin{align}\label{eq:map}
    \hat{\tau}_{\text{MAP}} := \argmax{1 \leq t \leq T} \; \overline{\pi}_t.
\end{align}
We give a detailed characterization of the consistency properties of $\hat{\tau}_{\text{MAP}}$ in Section \ref{sec:localization}. The posterior probabilities $\overline{\boldsymbol{\pi}}_{1:T}$ also facilitate our goal of quantifying the uncertainty around $\tau$ with $\alpha$-level credible sets, which we construct by solving the following optimization problem:
\begin{align}\label{eq:cs}
    \mathcal{CS}(\alpha, \overline{\boldsymbol{\pi}}_{1:T}) := \argmin{S \subseteq[T]} |S| \;\text{ s.t. } \sum_{t \in S} \overline{\pi}_t \geq 1-\alpha.
\end{align}
\begin{remark}[Knapsack Problem]\label{rmk:knapsack}
    The optimization task (\ref{eq:cs}) is equivalent to solving the integer program $\mathbf{z} := \text{\normalfont arg min}_{\mathbf{x}\in\{0,1\}^T} \; \langle\mathbf{1}, \mathbf{x}\rangle \text{ s.t. } \langle \overline{\boldsymbol{\pi}}_{1:T}, \mathbf{x}\rangle \geq 1-\alpha$, and setting $\mathcal{CS}(\alpha, \overline{\boldsymbol{\pi}}_{1:T}) := \{t \in [T]: z_t = 1\}$. This is an example of a knapsack problem (\citealp{Santini24}), which we solve by adding indices to $\mathcal{CS}(\alpha, \overline{\boldsymbol{\pi}}_{1:T})$ in decreasing order of the value of $\overline{\boldsymbol{\pi}}_{1:T}$ until $\sum_{t \in \mathcal{CS}(\alpha, \overline{\boldsymbol{\pi}}_{1:T})} \overline{\pi}_t$ exceeds $1-\alpha$. 
\end{remark}
\vspace{-10pt}

\subsection{Mean Change-Point Model}
\label{sec:smcp}

Suppose that we have $T$ observations from (\ref{eq:dgp}) where the sequence of positive definite precision matrices $\boldsymbol{\Lambda}_{1:T}$ is known. Given $\tau$ from (\ref{eq:tau-cat}) and a prior precision parameter $\omega_0 > 0$, we place a Gaussian prior on the jump in $\boldsymbol{\mu}_{1:T}$ to complete the mean single change-point (mean-scp) model:
\begin{align} \label{eq:smcp-start}
    \boldsymbol{\mu}_t &= \mathbf{b}\mathbbm{1}_{\left\{t\geq \tau \right\}}, \\
    \mathbf{b} &\sim \mathcal{N}_d(\mathbf{0},\omega_0^{-1} \mathbf{I}_d). \label{eq:smcp-end}
\end{align}
Under (\ref{eq:smcp-start}), the coordinates of $\mathbf{y}_{t}$ begin centered at zero and jump by $\mathbf{b}\in\mathbb{R}^d$ at time $\tau$. Choosing to construct $\boldsymbol{\mu}_{1:T}$ in this way and include time varying precision terms $\boldsymbol{\Lambda}_{1:T}$ may initially appear convoluted, but this formulation will become key to facilitating multiple change-point inference in Section \ref{sec:mich}. For each $t \in [T]$, we arrive at the following closed forms for the posterior distribution and expectation of $\boldsymbol{\mu}_t$:
\begin{align}
    \mathbf{b} \:|\: \tau = t, \: \mathbf{y}_{1:T} &\sim \mathcal{N}_d\left(\overline{\mathbf{b}}_{t}, \overline{\boldsymbol{\Omega}}_{t}^{-1}\right), \label{eq:b-smcp} \\ 
    \E[\boldsymbol{\mu}_{t} \;|\; \mathbf{y}_{1:T}] &= \sum_{t'=1}^t \overline{\mathbf{b}}_{t'} \overline{\pi}_{t'}. \label{eq:mu-post-mean}
\end{align}
The posterior mean $\overline{\mathbf{b}}_{t}$, precision $\overline{\boldsymbol{\Omega}}_{t}$, and $\overline{\pi}_t$ are calculated as in (\ref{eq:mean-scp-post-omega})-(\ref{eq:mean-scp-post-pi}). When $d = 1$, the mean-scp model is identical to the SER model of \cite{Wang20} applied to a $T\times T$ covariate matrix $\mathbf{X}_T$ with $(\mathbf{X}_T)_{ij} = \mathbbm{1}_{\{i \geq j\}}$. Later, when we generalize to multiple change-points, it will be convenient to define a function that takes $\mathbf{y}_{1:T}$ and the model parameters as inputs and returns the posterior parameters: 
\begin{align}\label{eq:mean-scp-fn}
    \texttt{mean-scp}\left(\mathbf{y}_{1:T} \:;\: \boldsymbol{\Lambda}_{1:T}, \omega_0, \boldsymbol{\pi}_{1:T}\right) := \{\overline{\mathbf{b}}_t, \overline{\boldsymbol{\Omega}}_t, \overline{\pi}_t\}_{t=1}^T.
\end{align}

\subsection{Variance Change-Point Model}
\label{sec:sscp}

In this section, we restrict $d = 1$ so that $y_t$ is univariate and $\Var(y_t) := \lambda_t^{-1}$. We assume that $\boldsymbol{\mu}_{1:T}$ is known, therefore it is without loss of generality to assume $\mu_{t} \equiv 0$. We now introduce a positive vector $\boldsymbol{\omega}_{1:T}$ that represents the known trend component of $\boldsymbol{\lambda}_{1:T}$. The inclusion of $\boldsymbol{\lambda}_{1:T}$ will help facilitate multiple change-point detection in Section \ref{sec:mich}. Given $\tau$ from (\ref{eq:tau-cat}) and prior shape and rate parameters $u_0, v_0 > 0$, we place a gamma prior on the jump in $\boldsymbol{\lambda}_{1:T}$ to complete the variance single change-point (var-scp) model:
\begin{align}\label{eq:sscp-start}
    \lambda_t &= \omega_t s^{\mathbbm{1}\{t \geq \tau\}}, \\
    s &\sim \text{Gamma}(u_0,v_0).
    \label{eq:sscp-end}
\end{align}
Suppose that $\tau > t$, then $s^{\mathbbm{1}\{t \geq \tau\}} = 1$, and if $\tau \leq t$, then we have $s^{\mathbbm{1}\{t \geq \tau\}} = s$. Thus the construction in (\ref{eq:sscp-start}), decomposes $\lambda_t$ into the known component $\omega_{t}$ and a piecewise constant component equal to one or $s$. When $\omega_t \equiv 1$, then the var-scp model recovers the single variance change-point model of \cite{Cappello25}. Again, the conjugate prior on $s$ results in the following closed forms for the posterior distribution and expectation of $\lambda_t$:
\begin{align}
    s \:|\: \tau = t, \: \mathbf{y}_{1:T} &\sim \text{Gamma}\left(\overline{u}_{t}, \overline{v}_{t}\right), \label{eq:s-sscp} \\
    \E[ \lambda_{t}\;|\;\mathbf{y}_{1:T}] &= \omega_t\left[\sum_{t'=1}^{t} \frac{\overline{u}_{t'}\overline{\pi}_{t'}}{\overline{v}_{t'}}  + \sum_{t'=t+1}^T \overline{\pi}_{t'}\right]. \label{eq:lambda-post-mean}
\end{align}
The posterior shape $\overline{u}_{t}$, rate $\overline{v}_{t}$, and $\overline{\pi}_t$ are calculated as in (\ref{eq:var-scp-post-u})-(\ref{eq:var-scp-post-pi}). As with the mean-scp model, we define a function \texttt{var-scp}: 
\begin{align}\label{eq:var-scp-fn}
    \texttt{var-scp}\left(\mathbf{y}_{1:T} \:;\: \boldsymbol{\omega}_{1:T}, u_0, v_0, \boldsymbol{\pi}_{1:T}\right) := \{\overline{u}_t, \overline{v}_t, \overline{\pi}_t\}_{t=1}^T.
\end{align}

\subsection{Mean-Variance Change-Point Model}
\label{sec:smscp}

We now merge the mean-scp and var-scp settings by allowing $\mathbf{y}_{1:T}$ to have a single change-point where both the mean and variance shift simultaneously. We again restrict $d=1$ and define $\mu_t$ as in (\ref{eq:smcp-start}) and $\lambda_t$ as in (\ref{eq:sscp-start}), only now $\mu_t$ and $\lambda_t$ share the same $\tau$ from (\ref{eq:tau-cat}) and we place a joint prior on $\{b,s\}$ to specify the mean-variance single change-point (meanvar-scp) model:
\begin{align}
    \{b,s\} &\sim \text{Normal-Gamma}(0,\omega_0, u_0, v_0).
    \label{eq:smscp-end}
\end{align}
A version of this model with $\omega_t\equiv 1$ first appeared in \cite{Smith75}, though the constructions (\ref{eq:smcp-start}) and (\ref{eq:sscp-start}) were notably not present. In addition to the posterior distribution of $\{b,s\}$, we provide a closed form for the posterior correlation between $\mu_t$ and $\lambda_t$, given that they are now dependent under the meanvar-scp model:
\begin{align}
    \{b,s\} \:|\: \tau = t, \mathbf{y}_{1:T} &\sim \text{Normal-Gamma}(\overline{b}_t, \overline{\omega}_t, \overline{u}_t, \overline{v}_t), \label{eq:bs-smscp} \\
   \E[\lambda_t\mu_t\;|\; \mathbf{y}_{1:T}] &= \omega_t\sum_{t'=1}^t \frac{\overline{b}_{t'}\overline{u}_{t'}\overline{\pi}_{t'} }{\overline{v}_{t'}}. \label{eq:mu-lambda-post-mean}
\end{align}
Closed forms for the parameters $\overline{b}_t$, $\overline{\omega}_t$, $\overline{u}_{t}$, $\overline{v}_{t}$, and $\overline{\pi}_t$ are given in (\ref{eq:meanvar-scp-post-omega})-(\ref{eq:meanvar-scp-post-pi}). Note that the posterior means of $\mu_t$ and $\lambda_t$ are of the same form as in (\ref{eq:mu-post-mean}) and (\ref{eq:lambda-post-mean}). As before, we define a function \texttt{meanvar-scp}: 
\begin{align}\label{eq:meanvar-scp-fn}
    \texttt{meanvar-scp}\left(\mathbf{y}_{1:T} \:;\: \boldsymbol{\omega}_{1:T}, \omega_0, u_0, v_0, \boldsymbol{\pi}_{1:T}\right) := \{\overline{b}_t, \overline{\omega}_t, \overline{u}_t, \overline{v}_t, \overline{\pi}_t\}_{t=1}^T.
\end{align}

\subsection{Single Change-Point Theory}
\label{sec:localization}
 
In this section, we study the asymptotic behavior of $\hat{\tau}_{\text{MAP}}$ as defined in (\ref{eq:map}) for each of the SCP models. Suppose that $t_0 \in [T]$ is the true location of the change-point and define the \textit{minimum spacing condition} $\Delta_T := \min\{t_0,T-t_0 + 1\}$. Our goal is to characterize the smallest \textit{localization error} $\{\epsilon_T\}_{T\geq 1}$, as well as a minimal set of conditions on $\Delta_T$ and the size of the breaks in the mean and variance of $\mathbf{y}_{1:T}$ so that $\hat{\tau}_{\text{MAP}}$ is consistent in following sense:
\begin{align}
    \lim_{T\to\infty} \Pr\left(|t_0 - \hat{\tau}_{\text{MAP}}| \leq \epsilon_T\right) = 1 \text{ and } \lim_{T\to\infty} \frac{\epsilon_T}{\Delta_T} = 0. \label{def:loc-rate}
\end{align}
As is standard throughout the CPD literature, we refer to the quantity $\Delta_T^{-1}\epsilon_T$ as the \textit{localization rate}. 

\subsubsection{Mean-SCP Localization Rate in High-Dimensions}
\label{sec:smscp-theory}

Suppose that $\boldsymbol{\mu}_{1:T}$ jumps by $\mathbf{b}_0\in\mathbb{R}^d$ at time $t_0$, then the mean-scp model can detect this change provided that the aggregated jump-size is large enough as measured by the signal-to-noise ratio $\Delta_T\min_{1\leq t \leq T}\lVert \boldsymbol{\Lambda}_t^{1/2}\mathbf{b}_0 \rVert^2_2$. Assumption \ref{assumption:mean} formalizes this condition.

\begin{assumption}[Detectable Mean Change]\label{assumption:mean}  
    Suppose $\E[\mathbf{y}_t] = \mathbf{b}_0\mathbbm{1}_{\{t \geq t_0\}}$ for some $t_0 \in [T]$ and $\mathbf{b}_0\in\mathbb{R}^d$. Assume that $\Delta_T \gtrsim\log T$ and $\Delta_T\min_{1\leq t \leq T}\lVert \boldsymbol{\Lambda}_t^{\frac{1}{2}}\mathbf{b}_0 \rVert^2_2 \gg d\log T$, where $\normalfont{\Var}(\mathbf{y}_t) = \boldsymbol{\Lambda}^{-1}_t$.
\end{assumption}
\vspace{-5pt}

The assumption that $\E[\mathbf{y}_t] = \mathbf{0}$ for $t < t_0$ is made out of notational convenience. In reality, only the magnitude of the jump at $t_0$ matters and the results in this section still hold if $\E[\mathbf{y}_{t_0-1}]\neq \mathbf{0}$ and we replace $\mathbf{b}_0$ with $\E[\mathbf{y}_{t_0} -\mathbf{y}_{t_0-1}]$ in Assumption \ref{assumption:mean}. 

When $d=1$, \cite{Wang2020_localization} showed that consistent localization in the (\ref{def:loc-rate}) sense is not possible if $\Delta_T\min_{1\leq t \leq T}\lVert \boldsymbol{\Lambda}_t^{1/2}\mathbf{b}_0 \rVert^2_2 \lesssim d\log T$, making Assumption \ref{assumption:mean} a necessary condition for detection when only the mean of $\mathbf{y}_{1:T}$ changes. When $\min\{d,T\}\to\infty$ and $\lVert \mathbf{b}_0\rVert_\infty$ is bounded, Assumption \ref{assumption:mean} places a non-sparsity condition on $\mathbf{b}_0$ that has appeared elsewhere in the analysis of $\ell^2$-based methods (\citealp{Bai10, Horvath12, Li23}). To see this, suppose that $\boldsymbol{\Lambda}_t=\mathbf{I}_d$ and $\mathbf{b}_0$ is sparse in the sense that $\lVert\mathbf{b}_0\rVert_0 \leq d_0$ for some $d_0 \ll d$ and $\Delta_T = \log^{1+\varepsilon} T$ for some $\varepsilon > 0$. Then $\Delta_T\lVert \mathbf{b}_0 \rVert^2_2 \lesssim d_0\log^{1+\varepsilon}T$, meaning that Assumption \ref{assumption:mean} cannot be met even under the relatively weak sparsity condition $d_0 = \mathcal{O}(d\log^{-\varepsilon} T)$. Conversely, Assumption \ref{assumption:mean} will be met when $\mathbf{b}_0$ is dense in the sense that many coordinates undergo a change, even if $\lim_{T\to\infty}\lVert \mathbf{b}_0\rVert_\infty = 0$. For example, if some fraction of the coordinates of $\mathbf{b}_0$ are equal to $\log^{-\varepsilon/2}T$ and $\Delta_T \geq \log^{1+\varepsilon}T$.

\begin{theorem}[Mean-SCP Localization Rate]\label{theorem:smcp}
    Let $\{\mathbf{y}_t\}_{t=1}^T$ be a sequence of independent, sub-Gaussian random vectors with $\mathbf{y}_t \in \mathbb{R}^d$, $\normalfont{\Var}(\mathbf{y}_t) = \boldsymbol{\Lambda}^{-1}$ for all $t \in [T]$, and $\lVert \mathbf{y}_t\rVert_{\psi_2} =\mathcal{O}(1)$. Define $\hat{\tau}_{\normalfont \text{MAP}}$ as in (\ref{eq:map}) by fitting the mean-scp model in (\ref{eq:gamma-post-cat1}) and (\ref{eq:b-smcp}) with $\omega_0 > 0$. For any $\beta > 0$, if Assumption \ref{assumption:mean} holds and $\max_{t\in[T]} |\log \pi_{t}| \leq C_\pi \log T$ for some constant $C_\pi > 0$, then there exist constants $C,C_\beta > 0$ so that:
    \vspace{-5pt}
    \begin{align}
        \Pr\left(|\hat{\tau}_{\normalfont \text{MAP}}-t_0| \leq \frac{C_\beta \log T}{\lVert\boldsymbol{\Lambda}^{\frac{1}{2}}\mathbf{b}_0\rVert_2^2}\right) \geq 1-\frac{C}{T^{\beta}}. \label{eq:thm-1}
    \end{align}
\end{theorem}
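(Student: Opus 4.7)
The plan is to upper bound $\Pr(|\hat\tau_{\text{MAP}} - t_0| > \epsilon_T)$ via a union bound over $\{t \in [T] : |t - t_0| > \epsilon_T\}$, with $\epsilon_T := C_\beta \log T / \|\boldsymbol{\Lambda}^{1/2}\mathbf{b}_0\|_2^2$. For each such $t$, by (\ref{eq:gamma-post-cat2}) the event $\{\overline\pi_t \geq \overline\pi_{t_0}\}$ is equivalent to $\{\log(\pi_t/\pi_{t_0}) + \log[p(\mathbf{y}_{1:T}\mid\tau=t)/p(\mathbf{y}_{1:T}\mid\tau=t_0)] \geq 0\}$; the prior ratio contributes at most $2C_\pi\log T$ in absolute value, which will be absorbed into $C_\beta$. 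Conjugacy of the Gaussian prior makes the marginal likelihood explicit: writing $n_t := T-t+1$, $S_t := \sum_{s=t}^T \mathbf{y}_s$, and $A_t := \boldsymbol{\Lambda}(\omega_0\mathbf{I}_d+n_t\boldsymbol{\Lambda})^{-1}\boldsymbol{\Lambda}$, the log-Bayes factor reduces to
\begin{align*}
-\tfrac{1}{2}\log\frac{\det(\omega_0\mathbf{I}_d + n_t\boldsymbol{\Lambda})}{\det(\omega_0\mathbf{I}_d + n_{t_0}\boldsymbol{\Lambda})} \;+\; \tfrac{1}{2}\bigl(S_t^\top A_t S_t - S_{t_0}^\top A_{t_0} S_{t_0}\bigr).
\end{align*}
The determinant term is deterministic and of order $\mathcal{O}(d\log T)$ in absolute value, which is absorbable under Assumption~\ref{assumption:mean}.

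Next, decompose $\mathbf{y}_s = \mathbf{b}_0\mathbbm{1}_{\{s\geq t_0\}} + \boldsymbol{\varepsilon}_s$ with mean-zero sub-Gaussian $\boldsymbol{\varepsilon}_s$, so $S_t = M_t + E_t$ with $M_t = \min(n_t, T-t_0+1)\mathbf{b}_0$ and $E_t=\sum_{s=t}^T\boldsymbol{\varepsilon}_s$. The quadratic $S_t^\top A_t S_t - S_{t_0}^\top A_{t_0} S_{t_0}$ decomposes into a \emph{signal} piece $M_t^\top A_t M_t - M_{t_0}^\top A_{t_0} M_{t_0}$, a \emph{cross} piece $2(M_t^\top A_t E_t - M_{t_0}^\top A_{t_0} E_{t_0})$, and a \emph{noise--noise} piece $E_t^\top A_t E_t - E_{t_0}^\top A_{t_0} E_{t_0}$. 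Since $\|\mathbf{y}_t\|_{\psi_2}=\mathcal{O}(1)$ forces $\lambda_{\min}(\boldsymbol{\Lambda})\gtrsim 1$, we have $A_t = n_t^{-1}\boldsymbol{\Lambda} + \mathcal{O}(\omega_0 n_t^{-2})$ for $n_t$ large. An algebraic computation (for $t\leq t_0$, symmetric for $t>t_0$) then yields
\begin{align*}
M_t^\top A_t M_t - M_{t_0}^\top A_{t_0} M_{t_0} \;\leq\; -\,c\,\min\!\bigl(|t-t_0|,\,\Delta_T\bigr)\,\|\boldsymbol{\Lambda}^{1/2}\mathbf{b}_0\|_2^2,
\end{align*}
for some absolute $c>0$, relying on the identity $(T-t_0+1)^2/(T-t+1)-(T-t_0+1) = -(T-t_0+1)(t_0-t)/(T-t+1)$, the ratio $n_{t_0}/n_t$ staying bounded below by a constant when $|t-t_0|\leq\Delta_T$, and Assumption~\ref{assumption:mean}'s condition $\Delta_T\|\boldsymbol{\Lambda}^{1/2}\mathbf{b}_0\|_2^2 \gg d\log T$ covering the regime $|t-t_0|>\Delta_T$.

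The last step is to show the random pieces are $o(\min(|t-t_0|,\Delta_T)\|\boldsymbol{\Lambda}^{1/2}\mathbf{b}_0\|_2^2)$ uniformly in $t$. The cross piece is linear in $(E_t,E_{t_0})$ with sub-Gaussian proxy variance $\lesssim|t-t_0|\,\|\boldsymbol{\Lambda}^{1/2}\mathbf{b}_0\|_2^2$, so a Hoeffding bound at level $\log T$ gives fluctuations $\lesssim\sqrt{|t-t_0|\,\|\boldsymbol{\Lambda}^{1/2}\mathbf{b}_0\|_2^2\log T}$, which the signal dominates once $|t-t_0|\gtrsim \log T/\|\boldsymbol{\Lambda}^{1/2}\mathbf{b}_0\|_2^2 = \epsilon_T$. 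For the noise--noise piece, after rescaling by $\boldsymbol{\Lambda}^{1/2}$ and using $E_t-E_{t_0}=\sum_{s=t}^{t_0-1}\boldsymbol{\varepsilon}_s$ (for $t\leq t_0$), one obtains a sub-exponential quadratic form to which a Hanson--Wright / Bernstein inequality applies, producing fluctuations of order $\sqrt{d\log T}+d\log T$ (suitably normalized), dominated by the signal under Assumption~\ref{assumption:mean}. To pay only logarithmically in the union bound I would peel $\{t:|t-t_0|>\epsilon_T\}$ into dyadic blocks $\{2^k\epsilon_T<|t-t_0|\leq 2^{k+1}\epsilon_T\}$ and apply each concentration inequality as a maximal bound within each block; summing the block failure probabilities and choosing $C_\beta$ large relative to $\beta$, $C_\pi$, and $\omega_0$ yields the advertised $1-CT^{-\beta}$ rate. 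The main obstacle I expect is the noise--noise term: controlling the Frobenius and operator norms of $A_t-A_{t_0}$ uniformly across dyadic blocks, while correctly accounting for the covariance between $E_t$ and $E_{t_0}$, is what forces the $d\log T$ requirement in Assumption~\ref{assumption:mean} and is the step where the high-dimensional structure must be handled most delicately.
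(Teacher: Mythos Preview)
Your approach is correct and substantially the same as the paper's: explicit Gaussian-conjugate marginal likelihood, decomposition of $S_t^\top A_t S_t - S_{t_0}^\top A_{t_0} S_{t_0}$ into signal/cross/noise--noise pieces, and sub-Gaussian plus Hanson--Wright concentration for the random terms. The paper differs procedurally in two places: (i) rather than dyadic peeling, it fixes a single high-probability event $\mathcal{E}$ (Lemma~\ref{lemma:thm1-event-bound}, the intersection of six concentration events holding simultaneously for all $t\in\mathcal{T}_\beta$) and argues deterministically on $\mathcal{E}$; (ii) it sends $\omega_0\to 0$ up front so that $A_t$ becomes exactly $n_t^{-1}\boldsymbol{\Lambda}$, avoiding the $\mathcal{O}(\omega_0 n_t^{-2})$ bookkeeping.

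One caution on the noise--noise piece, which you correctly flag as the crux. A uniform bound of order $\sqrt{d\log T}$ (let alone $d\log T$) on $E_t^\top A_t E_t - E_{t_0}^\top A_{t_0}E_{t_0}$ is \emph{not} sufficient: at the boundary $|t-t_0|=\epsilon_T$ the signal is only $C_\beta\log T$, and $\sqrt{d\log T}\gg \log T$ once $d\gg\log T$, which is exactly the high-dimensional regime of interest. What rescues the argument is that the two quadratic forms share most of their summands. Writing $Z_{t_0}=Z_{[t_0,t)}+Z_t$ (with $Z_t=\boldsymbol{\Lambda}^{1/2}E_t$) and centering each squared norm by its mean $d$, the $d$'s cancel exactly and every surviving term carries a prefactor of $(t-t_0)/n_{t_0}$ or $\sqrt{(t-t_0)n_t}/n_{t_0}$; see the paper's identity~(\ref{eq:thm1-bd2}). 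That scaling, combined with Assumption~\ref{assumption:mean}, is what makes the noise--noise fluctuation $o(\log T)$ at the boundary and is precisely where the $d\log T$ condition enters. Your Hanson--Wright route through $E_t-E_{t_0}$ and $A_t-A_{t_0}$ will get there, but only if you apply it to the increment $Z_{[t_0,t)}$ and to the centered quantity $\|Z_t\|^2/n_t-d$ separately; applying it to the two full quadratic forms and subtracting afterward loses the needed cancellation.
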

\vspace{-5pt}

\begin{remark}[Mean-SCP Minimax Rate]\label{rmk:thm-1-minimax}
    Under the settings of Theorem \ref{theorem:smcp}, \cite{Wang17} and \cite{Wang2020_localization} have shown that the minimax optimal localization rate is proportional to $(\Delta_T\lVert\boldsymbol{\Lambda}^{\frac{1}{2}}\mathbf{b}_0\rVert_2^{2})^{-1}$. Theorem \ref{theorem:smcp} implies the localization rate for the mean-scp model is minimax optimal aside from a $\log T$ factor. 
\end{remark}

\begin{remark}[Exact Recovery]\label{rmk:exact-recover}
The localization error in (\ref{eq:thm-1}) vanishes when $\lVert\boldsymbol{\Lambda}^{\frac{1}{2}}\mathbf{b}_0\rVert_2^2 \gg \log T$. When $\lVert\mathbf{b}_0\rVert_\infty$ and the largest eigenvalue of $\boldsymbol{\Lambda}$ are both $\mathcal{O}(1)$, then exact recovery is only possible in the high-dimensional regime with $d \gg \log T$ as in Theorem 3.2 of \cite{Bai10}. To see this, note that $\lVert\boldsymbol{\Lambda}^{\frac{1}{2}}\mathbf{b}_0\rVert_2^2 \lesssim \lVert\mathbf{b}_0\rVert_2^2 \leq d\lVert\mathbf{b}_0\rVert_\infty^2$, so $\lVert\boldsymbol{\Lambda}^{\frac{1}{2}}\mathbf{b}_0\rVert_2^2 \lesssim \log T$ when $d \lesssim \log T$. 
\end{remark}

\begin{remark}[Bounded Prior]\label{rmk:prior-bd}
    The bound on $\boldsymbol{\pi}_{1:T}$ guarantees that our choice of prior does not overwhelm the evidence in the data. This assumption is trivially satisfied if each time is equally likely to be the location of the change \textit{a priori}, i.e. $\pi_t = T^{-1}$. We discuss the choice of $\boldsymbol{\pi}_{1:T}$ in greater detail in Appendix \ref{app:prior}. 
\end{remark}

\begin{remark}[Bounded Orlicz Norm]\label{rmk:sub-g}
    When $\min\{d,T\}\to\infty$, the assumption that $\lVert \mathbf{y}_t\rVert_{\psi_2}$ is uniformly bounded restricts the strength of the dependence between the $d$ time-series. Two separate sufficient conditions for this restriction include: i) each entry of $\mathbf{y}_t$ is independent and $\mathcal{SG}(\sigma)$ for some $\sigma < \infty$, and ii) $\mathbf{y}_t \sim \mathcal{N}_d(\boldsymbol{\mu}_t, \boldsymbol{\Lambda}^{-1})$ and $\lambda^{-1}_{\min} = \mathcal{O}(1)$, where $\lambda_{\min}$ is the smallest eigenvalue of $\boldsymbol{\Lambda}$. When $d$ is fixed, the coordinates of $\mathbf{y}_t$ can display arbitrary levels of dependence so long as they are sub-Gaussian (see Appendix \ref{app:thm1-events}).
\end{remark}
\vspace{-5pt}

%

\subsubsection{Var-SCP and MeanVar-SCP Localization Rates for Univariate Data}

To analyze the localization rates of the var-scp and meanvar-scp models, we once again restrict $d=1$ so that we are dealing with a univariate sequence. For a change in the variance of $\mathbf{y}_{1:T}$ to be detectable, we require a signal strength assumption analogous to Assumption \ref{assumption:mean}.
\begin{assumption}[Detectable Scale Change]\label{assumption:scale}
    Suppose $\normalfont{\Var}(y_t) = (s_0^2)^{\mathbbm{1}\{t\geq t_0\}}$ for some $t_0 \in [T]$ and that there are constants $\underline{s}, \overline{s}>0$ so that $0<\underline{s} < s_0<\overline{s} < \infty$. Assume that $\Delta_T \gtrsim\log T$ and $\Delta_T (s_0^2-1)^2 \gg \log T$. 
\end{assumption}
\vspace{-5pt}
Again, the assumption that $\Var(y_t) =1$ for $t < t_0$ is for notational convenience and the results in this section continue to hold when $\Var(y_{t_0-1}) \neq 1$. We simply replace $s_0^2$ in Assumption \ref{assumption:scale} with $\Var(y_{t_0}) / \Var(y_{t_0-1})$. Note that $(s_0^2-1)^2$ measures the signal strength of $s_0^2$ as $(s_0^2-1)^2$ decreases on $(0,1)$, increases on $(1,\infty)$, and achieves a unique minimum at zero when $s_0^2 = 1$. Assumption \ref{assumption:scale} also allows for a vanishing signal, e.g. if $\Delta_T = \log^{1+\varepsilon} T$ for some $\varepsilon > 0$ and $s_0^2 = 1 \pm \log^{-\epsilon/4}T$.
 
\begin{theorem}[Var-SCP Localization Rate]\label{theorem:sscp}
    Let $\{y_t\}_{t=1}^T$ be a sequence of independent, sub-Gaussian random variables with $\E[y_t]=0$, and $\lVert y_t\rVert_{\psi_2} = \mathcal{O}(1)$. Define $\hat{\tau}_{\normalfont \text{MAP}}$ as in (\ref{eq:map}) by fitting the var-scp model in (\ref{eq:gamma-post-cat1})-(\ref{eq:gamma-post-cat2}) and (\ref{eq:s-sscp}) with $u_0,v_0>0$. For any $\beta > 0$, there exist constants $C,C_{1,\beta}, C_{2,\beta} > 0$ so that if Assumption \ref{assumption:scale} holds, $\pi_t =0$ for each $t \geq C_{2,\beta}\log(\log T)$, and $\max_{t\in[T]} |\log \pi_{t}| \leq C_\pi \log T$ for some $C_\pi >0$, then:
    \vspace{-2.5pt}
    \begin{align}
        \Pr\left(|\hat{\tau}_{\normalfont \text{MAP}}- t_0| \leq \frac{C_{1,\beta} \log T}{(s_0^2-1)^2} \right) \geq 1 - \frac{C}{T^{\beta}} - \frac{C_{2,\beta}}{\log^{\beta}T}. \label{eq:thm-2}
    \end{align}
\end{theorem}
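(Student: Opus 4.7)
The plan is to analyze the log-posterior ratio
\[
\log\frac{\overline{\pi}_t}{\overline{\pi}_{t_0}} \;=\; \log\frac{\pi_t}{\pi_{t_0}} + \ell(t) - \ell(t_0),\qquad \ell(t):=\log p(y_{1:T}\mid \tau=t),
\]
and show it is strictly negative for every $|t-t_0|>\epsilon_T:=C_{1,\beta}\log T/(s_0^2-1)^2$ on a high-probability event. Integrating out $s$ under the Gamma prior yields the closed form
\[
\ell(t) \;=\; -\tfrac{1}{2}A_t - \overline{u}_t\log \overline{v}_t + \log \Gamma(\overline{u}_t) + \text{const},
\]
with $A_t:=\sum_{t'<t}y_{t'}^2$, $S_t:=\sum_{t'\geq t}y_{t'}^2$, $n_t:=T-t+1$, $\overline{u}_t:=u_0+n_t/2$, $\overline{v}_t:=v_0+S_t/2$. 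Under the hypothesis $\max_t|\log \pi_t|\leq C_\pi\log T$, the prior term is $O(\log T)$ and will be absorbed into $C_{1,\beta}$, while the restriction $\pi_t=0$ near the right boundary of $[T]$ lets us ignore those $t$ for which $\overline{u}_t$ and $\overline{v}_t$ are too small for the Stirling approximation to be accurate, contributing the $C_{2,\beta}/\log^\beta T$ term in the failure probability.

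I focus on $t>t_0$; the case $t<t_0$ is analogous. Setting $W:=A_t-A_{t_0}=\sum_{t'=t_0}^{t-1}y_{t'}^2$ yields $\overline{u}_{t_0}=\overline{u}_t+(t-t_0)/2$ and $\overline{v}_{t_0}=\overline{v}_t+W/2$. A second-order Taylor expansion of $\log(\overline{v}_t+W/2)$ together with Stirling applied to $\log\Gamma(\overline{u}_{t_0})-\log\Gamma(\overline{u}_t)$ reduces the likelihood difference to
\[
\ell(t_0)-\ell(t) \;=\; \tfrac{1}{2}(1-r_t)W \;+\; \tfrac{t-t_0}{2}\log r_t \;+\; R(t,t_0),
\]
where $r_t:=\overline{u}_t/\overline{v}_t$ and $R(t,t_0)=O((t-t_0)^2/\overline{u}_t)$. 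Because every index contributing to $S_t$ and $W$ lies in $[t_0,T]$, $r_t$ concentrates around $s_0^{-2}$ and $\E[W]=(t-t_0)s_0^2$, giving the positive drift
\[
\E[\ell(t_0)-\ell(t)] \;\approx\; \tfrac{t-t_0}{2}\bigl(s_0^2-1-\log s_0^2\bigr)\;=\;\tfrac{t-t_0}{2}\,f(s_0^2),
\]
with $f(x):=x-1-\log x\geq c(\underline{s},\overline{s})(x-1)^2$ on $[\underline{s}^2,\overline{s}^2]$ by Assumption~\ref{assumption:scale}, so the drift is $\gtrsim(t-t_0)(s_0^2-1)^2$. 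For $t<t_0$, $S_t$ mixes observations from both regimes, so $r_t$ concentrates at a weighted mean of $1$ and $s_0^{-2}$, but the parallel calculation delivers the same $f$-type lower bound.

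The delicate step, and the main obstacle, is recognizing that the dominant stochastic term carries an additional factor of $|s_0^2-1|$. Splitting $\tfrac{1}{2}(1-r_t)=\tfrac{s_0^2-1}{2s_0^2}+\tfrac{1}{2}(s_0^{-2}-r_t)$, the leading noise in $\tfrac{1}{2}(1-r_t)W$ is $\tfrac{s_0^2-1}{2s_0^2}(W-\E W)$; the remaining pieces, together with the fluctuations of $\tfrac{t-t_0}{2}\log r_t$ and of $R(t,t_0)$, are $O((t-t_0)/\sqrt{n_t})$ and are strictly smaller than $\sqrt{t-t_0}$ since $|t-t_0|\ll\Delta_T\leq n_t$ under Assumption~\ref{assumption:scale}. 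Because $y_{t'}^2$ is sub-exponential with $\|y_{t'}^2\|_{\psi_1}=O(1)$, Bernstein's inequality controls $|W-\E W|$ by $\sqrt{(t-t_0)\log T}+\log T$ with probability at least $1-T^{-\beta'}$, so the total fluctuation of $\ell(t_0)-\ell(t)$ is at most $|s_0^2-1|\sqrt{(t-t_0)\log T}+O(\log T)$. A union bound over all $|t-t_0|>\epsilon_T$ makes the drift dominate the fluctuation plus the $O(\log T)$ prior contribution as soon as $C_{1,\beta}$ is large enough. Without isolating the $|s_0^2-1|$ factor in front of the noise one would only recover the suboptimal rate $\log T/(s_0^2-1)^4$, so this cancellation is the crux of the argument; the two-sided analysis ($t<t_0$) is the other main source of bookkeeping, since $r_t$ no longer concentrates at a single value there.
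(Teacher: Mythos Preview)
Your overall strategy and the key insights match the paper's: the drift $\tfrac{t-t_0}{2}f(s_0^2)\asymp (t-t_0)(s_0^2-1)^2$, the crucial observation that the dominant noise carries an extra $|s_0^2-1|$ factor (without which you only get the $(s_0^2-1)^{-4}$ rate), and the use of Bernstein for the sub-exponential $y_t^2$. The paper proves this as a special case of the meanvar-scp theorem, with exactly the same decomposition into drift and $|s_0^2-1|\sqrt{(t-t_0)\log T}$ noise.

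There is, however, a real gap in your remainder control. The claim $R(t,t_0)=O((t-t_0)^2/\overline u_t)$ and the companion claim that the secondary noise pieces are $O((t-t_0)/\sqrt{n_t})$ both rely on ``$|t-t_0|\ll\Delta_T\le n_t$'', which is false: for $t>t_0$ the gap $t-t_0$ ranges up to $T-t_0$, while $n_t=T-t+1$ can be as small as $C_{2,\beta}\log\log T$ under the right-boundary restriction. When $t-t_0\gtrsim n_t$ (roughly $t>(T+t_0)/2$), the expansion parameter $W/(2\overline v_t)\approx (t-t_0)/(T-t+1)$ is not small, so the second-order Taylor expansion of $\log(\overline v_t+W/2)$ is invalid and your remainder $(t-t_0)^2/\overline u_t$ swamps the drift $(t-t_0)f(s_0^2)$. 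The paper avoids this entirely by never Taylor-expanding: it keeps the exact log terms and uses the two-sided inequality $\tfrac{x}{1+x}\le\log(1+x)\le x$ for $x>-1$, which gives the needed one-sided bounds uniformly over all $t$ in the admissible range. If you want to rescue the Taylor route you would need a separate (and cruder) argument for $t$ in the right portion of $[t_0,T]$; the log-inequality approach is cleaner. The $t<t_0$ case is also not merely ``analogous'': because $S_t$ mixes the two variance regimes, the paper splits further into $t_0-t\le T-t_0+1$ versus $t_0-t>T-t_0+1$ and handles each with different centering; your sketch will need that case split as well.
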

\vspace{-5pt}
\begin{remark}[Var-SCP Minimax Rate]\label{rmk:thm-2-minimax}
    Under the settings of Theorem \ref{theorem:sscp}, \cite{Wang21} showed that the minimax optimal localization rate is proportional to $(\Delta_T(s_0^2-1)^{2})^{-1}$, so Theorem \ref{theorem:sscp} shows that the localization rate for the var-scp model is minimax optimal aside from a $\log T$ factor. 
\end{remark}
\vspace{-10pt}
\cite{Cappello25} previously established a similar result to Theorem \ref{theorem:sscp} for the case of an independent Gaussian sequence with $s_0\neq 1$ fixed and $\Delta_T \geq cT$ for some $c \in (0,1/2)$. However, their localization error was of order $\mathcal{O}(\sqrt{T\log T})$, so Theorem \ref{theorem:sscp} both strengthens and generalizes their result. Theorem \ref{theorem:sscp} does require that $\hat{\tau}_{\normalfont \text{MAP}}$ be bounded away from $T$ by some number of indices proportional to $\log(\log T)$, but this requirement is weak seeing as Assumption \ref{assumption:scale} implies that $t_0$ cannot fall in this region.

The meanvar-scp model looks for changes in both the mean and variance of $\mathbf{y}_{1:T}$. Naturally, we require that a change occurs in at least one of $\boldsymbol{\mu}_{1:T}$ or $\boldsymbol{\lambda}_{1:T}$, and that the signal from this change is strong enough for the model to detect. This amounts to assuming at least one of Assumptions \ref{assumption:mean} or \ref{assumption:scale} holds.

\begin{theorem}[MeanVar-SCP Localization Rate]\label{theorem:smscp}
    Let $\{y_t\}_{t=1}^T$ be a sequence of independent, sub-Gaussian random variables with $\lVert y_t\rVert_{\psi_2}= \mathcal{O}(1)$. Define $\hat{\tau}_{\normalfont \text{MAP}}$ as in (\ref{eq:map}) by fitting the  meanvar-scp model in (\ref{eq:gamma-post-cat1})-(\ref{eq:gamma-post-cat2}) and (\ref{eq:bs-smscp}) with $\omega_0, u_0,v_0>0$. For any $\beta > 0$, there exist constants $C,C_{1,\beta}, C_{2,\beta} > 0$ so that if either Assumption \ref{assumption:mean} or Assumption \ref{assumption:scale} holds, $\pi_t =0$ for each $t \geq C_{2,\beta}\log(\log T)$, and $\max_{t\in[T]} |\log \pi_{t}| \leq C_\pi \log T$ for some $C_\pi >0$, then:
    \vspace{-5pt}
    \begin{align}
        \Pr\left(|\hat{\tau}_{\normalfont \text{MAP}}- t_0| \leq \frac{C_{1,\beta} \log T}{\max\{\min\{b_0^2,b_0^2/s_0^2\}, (s_0^2-1)^2\}} \right) \geq 1 - \frac{C}{T^{\beta}} - \frac{C_{2,\beta}}{\log^{\beta}T}. \label{eq:thm-3}
    \end{align}
\end{theorem}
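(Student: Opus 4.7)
The proof follows the same template as the proofs of Theorems \ref{theorem:smcp} and \ref{theorem:sscp}: I would bound $\Pr(|\hat{\tau}_{\normalfont \text{MAP}} - t_0| > \epsilon_T)$ by the probability that the log-posterior ratio $R_t := \log \overline{\pi}_{t_0} - \log \overline{\pi}_t$ fails to be positive for some $t$ with $|t - t_0| > \epsilon_T$, then control this bad event by concentration inequalities. First, marginalizing the Normal-Gamma prior on $(b,s)$ in (\ref{eq:bs-smscp}) yields a closed form for $\log p(\mathbf{y}_{1:T}\;|\;\tau = t)$ in terms of the posterior quantities $\overline{b}_t,\overline{\omega}_t,\overline{u}_t,\overline{v}_t$, with the key identity
\[
\log p(\mathbf{y}_{1:T} \;|\; \tau = t) = C_0 - \tfrac{1}{2}\sum_{i < t} \omega_i y_i^2 + \log\Gamma(\overline{u}_t) - \overline{u}_t \log \overline{v}_t - \tfrac{1}{2}\log \overline{\omega}_t,
\]
where $C_0$ is constant in $t$. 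Adding $\log \pi_t$ and subtracting the analogous expression at $t_0$ produces a clean expression for $R_t$.

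Second, I would study $\E[R_t]$ under the true data-generating distribution. Writing $\overline{v}_t$ as its mean plus a zero-mean deviation and Taylor-expanding $\log \overline{v}_t$, I aim to show
\[
\E[R_t] \gtrsim |t - t_0| \cdot \Sigma - C\log T, \qquad \Sigma := \max\bigl\{\min\{b_0^2,\: b_0^2/s_0^2\},\: (s_0^2 - 1)^2\bigr\}.
\]
The factor $\min\{b_0^2, b_0^2/s_0^2\}$ appears because the effective mean signal, as measured by the KL divergence between the true Gaussian regime and the misspecified regime, equals $b_0^2/s_0^2$ when $t < t_0$ (the inferred post-change variance is $\approx s_0^2$) and $b_0^2$ when $t > t_0$ (the inferred pre-change variance is $\approx 1$); the minimum arises because $\hat{\tau}_{\normalfont \text{MAP}}$ may fall on whichever side is worse. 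The outer maximum with $(s_0^2-1)^2$ reflects that either the mean or variance signal alone is sufficient, matching the hypothesis that Assumption \ref{assumption:mean} or \ref{assumption:scale} holds.

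Third, the random part $R_t - \E[R_t]$ decomposes into a linear functional $\sum_i c_i(y_i - \E y_i)$ and a quadratic functional $\sum_i c_i'(y_i^2 - \E y_i^2)$, with coefficients supported on the symmetric difference of $[t,T]$ and $[t_0,T]$. Hoeffding's inequality for the linear sub-Gaussian sum and Bernstein's (or Hanson--Wright's) inequality for the quadratic sub-exponential sum give $|R_t - \E[R_t]| \lesssim \sqrt{|t - t_0|\log T} + \log T$ with probability at least $1 - T^{-\beta - 1}$, uniformly in $t$ after a union bound over $[T]$. Setting $\epsilon_T = C_{1,\beta} \log T / \Sigma$ and combining the signal bound with the noise bound yields $R_t > 0$ for all such $t$ with the desired probability. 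The cutoff $\pi_t = 0$ for $t \geq T - C_{2,\beta}\log\log T$ avoids the right-boundary region where $\overline{u}_t = u_0 + (T-t+1)/2$ is too small for the Stirling approximation of $\log \Gamma(\overline{u}_t)$ to be sharp, and accounts for the additive term $C_{2,\beta}/\log^\beta T$ in the failure probability.

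The main obstacle is the deterministic signal analysis in the second step. In Theorem \ref{theorem:smcp} the log-likelihood difference is a clean quadratic form in the data, and in Theorem \ref{theorem:sscp} the only nonlinearity is $-\overline{u}_t\log\overline{v}_t$ with $\overline{b}_t$ absent from the model. Here both complications combine: expanding $\log \overline{v}_t$ beyond leading order produces cross terms between $b_0$ and $s_0^2 - 1$ that must be shown to be dominated, so that the final lower bound takes the stated form $\max\{\min\{b_0^2, b_0^2/s_0^2\}, (s_0^2-1)^2\}$ rather than an additive combination. Resolving this requires a careful case split depending on the sign of $t - t_0$ and on which of the two signals dominates; once the signal bound is established, the concentration arguments are routine extensions of those used in the two previous theorems.
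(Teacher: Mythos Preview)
Your outline follows the paper's strategy: reduce to $\log(\overline\pi_{t_0}/\overline\pi_t)>C_\pi\log T$ uniformly over the bad index set, compute the marginal likelihood explicitly (Binet's formula for $\log\Gamma(\overline u_t)$, then send $\omega_0,u_0,v_0\to 0$), and control random fluctuations by sub-Gaussian/sub-exponential concentration. Two points of execution deserve correction.

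First, Step 3 is oversimplified in a way that matters. Because $R_t$ contains $-\overline u_t\log\overline v_t$ with $\overline v_t$ itself quadratic in the data, $R_t-\E[R_t]$ is \emph{not} a linear-plus-quadratic functional; the $\log$ introduces terms of all orders. The paper never separates signal from noise via expectations; instead it works directly on a high-probability event $\mathcal E$ (its Lemma~\ref{lemma:thm3-event-bound}) and linearizes via the two-sided bound $x/(1+x)\le\log(1+x)\le x$, which is only usable once one knows the normalized fluctuation satisfies $|x|<1$. Ensuring this is the role of events $\mathcal E_3,\mathcal E_6$ (bounding $(T-t+1)^{-1}|\sum_{t'\ge t}z_{t'}|\le 1/2$ and $(T-t+1)^{-1}|\sum_{t'\ge t}(z_{t'}^2-1)|\le 1/4$), and \emph{this}, not Stirling accuracy, is the reason for the $\pi_t=0$ cutoff near $T$ and the $C_{2,\beta}/\log^\beta T$ failure term: with only $O(\log\log T)$ observations past $t$, these ratio bounds hold only with probability $1-O(\log^{-\beta}T)$. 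Your Taylor-expand-then-bound-remainder route can be made to work, but the remainder control requires the same events and is not routine. Second, the case analysis for $t<t_0$ is finer than you indicate: the paper splits further by whether $t_0-t\lessgtr T-t_0+1$, since the empirical variance on $[t,T]$ is dominated by the post-$t_0$ regime (scale $s_0^2$) in one case and the pre-$t_0$ regime (scale $1$) in the other, which dictates whether one factors out $s_0^{-2}$ before linearizing. Each of these is then split by which signal dominates and by whether $s_0^2\to 1$ (needed because $f(s_0^2):=s_0^2-\log s_0^2-1\sim\tfrac12(s_0^2-1)^2$ only in that regime); the cross-term issue you correctly flag is resolved case-by-case rather than by a single inequality.
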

\begin{remark}[MeanVar-SCP Minimax Rate]\label{rmk:thm-3-minimax}
    The localization error for the meanvar-scp model is the smaller of the respective errors for the mean-scp and var-scp models, which in were in turn minimax optimal for mean and variance changes, so the localization rate for the meanvar-scp model should also be minimax optimal regardless of whether $b_0=0$ or $s_0^2=1$. 
\end{remark}
\vspace{-10pt}

\subsubsection{Localization Rate with Dependent Data}

Thus far we have exclusively considered sequences of independent observations. We can weaken this assumption by allowing $\mathbf{y}_{1:T}$ to display auto-correlation. In particular, we assume that $\{y_t\}_{t \geq 1}$ is an $\alpha$-mixing process (see Appendix \ref{app:notation} for a definition of $\alpha$-mixing): 

\begin{assumption}\label{assumption:alpha-mixing}
    Given the stochastic process $\{y_t\}_{t\geq 1}$, assume that for some $t_0 \in \mathbb{N}$, $b_0 \in\mathbb{R}$, $s_0>0$, and some distributions $F_0$ and $F_1$, there are stationary processes $\{z_{0,t}\}_{t \geq 1}$ and $\{z_{1,t}\}_{t \geq 1}$ such that $z_{0,t} \sim F_0$, $z_{1,t} \sim F_1$, $\E[z_{0,t}] = \E[z_{1,t}] = 0$, $\Var(z_{0,t}) = \Var(z_{1,t}) = 1$ and $y_t := z_{0,t} \mathbbm{1}_{\{t < t_0\}}  + (s_0z_{1,t} + b_0)\mathbbm{1}_{\{t \geq t_0\}}$. Additionally, assume that:
    \vspace{-10pt}
    \begin{enumerate}[label=\normalfont(\roman*)]
        \item $\{z_{0,t}\}_{t \geq 1}$ and $\{z_{1,t}\}_{t \geq 1}$ are $\alpha$-mixing processes with respective coefficients $\{\alpha_{0,k}\}_{k\geq 1}$ and $\{\alpha_{1,k}\}_{k\geq 1}$ that satisfy $\max\{\alpha_{0,k}, \alpha_{1,k}\} \leq e^{-C k}$ for some $C > 0$. \vspace{-5pt}
        \item There exist constants $D_1, D_2 > 0$ such that $\sup_{t \geq 1} \max\{\E\left[|z_{0,t}|^{4+D_1}\right],\;\E\left[|z_{1,t}|^{4+D_1}\right]\}\leq D_2.$  
    \end{enumerate}
\end{assumption}
\vspace{-10pt}
Under Assumption \ref{assumption:alpha-mixing} each of the SCP models can still consistently localize $t_0$, although with a localization rate that is approximately order $\mathcal{O}(\log^{2} T)$:
\begin{theorem}[$\alpha$-Mixing Localization Rate]\label{theorem:alpha-mixing}
    Let $\{y_t\}_{t \geq 1}$ be a univariate stochastic process satisfying Assumption \ref{assumption:alpha-mixing} for some $t_0 \in \mathbb{N}$ and let $\{a_T\}_{T \geq 1}$ be any sequence such that $a_T \to \infty$. Let Assumption \ref{assumption:mean}* and Assumption \ref{assumption:scale}* denote Assumptions \ref{assumption:mean} and \ref{assumption:scale} with each $\log T$ term replaced by $a_T\log ^2T$. For each estimator $\hat{\tau}_{\normalfont \text{MAP}}$ and localization error $\epsilon_T$ corresponding to Theorems \ref{theorem:smcp}-\ref{theorem:smscp}, if Assumptions \ref{assumption:mean}* and \ref{assumption:scale}* hold in place of Assumptions \ref{assumption:mean} and \ref{assumption:scale}, then there exist constants $C,C_{1,a}$, $C_{2,a} > 0$ such that if $\pi_t =0$ for each $t \geq C_{2,a}a_T\log^2 T$ and $\max_{t\in[T]} |\log \pi_{t}| \leq C_\pi \log^2 T$ for some $C_\pi >0$, then $\Pr(|\hat{\tau}_{\text{\normalfont MAP}}-t_0|\leq \tilde{\epsilon}_T )\geq 1 - Ca^{-1}_T$, where $\tilde{\epsilon}_T = \epsilon_TC_{1,a} a_T\log T$.
\end{theorem}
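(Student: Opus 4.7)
The plan is to follow the template of the proofs of Theorems \ref{theorem:smcp}--\ref{theorem:smscp} verbatim, but substitute every independent sub-Gaussian concentration inequality with its $\alpha$-mixing analogue from \cite{Padilla23}. Each of the independent-case proofs hinges on a short list of ``good events'' that control the deviation of partial sums and partial sums of squares of $\mathbf{y}_{1:T}$ from their means. I would first harvest this list from the proofs of the earlier theorems and record the precise tail bound each one requires.

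Second, for each good event I would invoke the $\alpha$-mixing Bernstein-type inequalities of \cite{Padilla23}. Under Assumption \ref{assumption:alpha-mixing}, the exponential decay of $\alpha_{0,k},\alpha_{1,k}$ together with the $(4+D_1)$-th moment bound on $z_{0,t},z_{1,t}$ imply that a centered partial sum of length $n$ deviates from zero by at most order $\sqrt{n\, a_T \log^2 T}$ with probability at least $1-Ca_T^{-1}$, and analogous bounds hold for partial sums of squares. This is exactly why the theorem inflates the signal condition from $\log T$ to $a_T\log^2 T$ and the prior radius from $\log T$ to $a_T \log^2 T$: the events that previously held with probability $1-C/T^\beta$ via sub-Gaussian concentration are now traded for events of probability $1-Ca_T^{-1}$ at the cost of an extra $a_T\log T$ factor inside the deviation bound.

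Third, I would propagate these weakened bounds through the MAP-comparison arguments. Each comparison between $\overline{\pi}_{t_0}$ and $\overline{\pi}_t$ for $t$ far from $t_0$ decomposes into (i) partial sums of $y_j-\E[y_j]$, (ii) partial sums of $y_j^2$, and (iii) deterministic signal terms involving $b_0$ and $s_0^2$. All three stochastic pieces are now dominated by the Padilla23 inequalities with the $a_T\log^2 T$ scaling, while the signal terms retain their original magnitude under Assumptions \ref{assumption:mean}* and \ref{assumption:scale}*. Balancing signal against noise exactly as in the independent proofs then produces a localization radius that is $a_T\log T$ times the i.i.d.\ radius, matching the stated $\tilde{\epsilon}_T = C_{1,a}\, a_T \epsilon_T\log T$.

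The main obstacle will be the variance-change portion, as in Theorem \ref{theorem:smscp}. Under only a $(4+D_1)$-th moment assumption the partial sums of $y_t^2$ are sub-exponential rather than sub-Gaussian, so one has to handle the quadratic pieces in the log-posterior with the sub-exponential branch of the mixing Bernstein inequality, preceded by a truncation that exchanges the $(4+D_1)$-th moment for a polynomial tail. A secondary bookkeeping challenge is the prior: the tuning $|\log \pi_t|\leq C_\pi \log^2 T$ and the cutoff $\pi_t=0$ for $t\geq C_{2,a}a_T\log^2 T$ (for the var-scp and meanvar-scp cases) are exactly calibrated so that the log-prior does not overwhelm the noise term at the new scale, and each application of the independent-case argument must be re-checked to confirm that the log-prior contribution is still dominated. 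Once these verifications are made, the conclusion $\Pr(|\hat{\tau}_{\text{MAP}}-t_0|\leq \tilde{\epsilon}_T)\geq 1-Ca_T^{-1}$ follows from the union bound over the re-scaled good events.
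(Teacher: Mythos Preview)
Your proposal is essentially correct and matches the paper's approach: define $\alpha$-mixing analogues of the good events (the paper's Lemma~\ref{lemma:thm4-event-bound}), establish them via the Padilla et al.\ concentration result (Lemmas~\ref{lemma:Padilla23} and~\ref{lemma:alpha-mix-sum}), and then re-run the case analysis of Theorem~\ref{theorem:smscp} with the inflated deviation scale $\sqrt{a_T\,n}\log T$.

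One simplification you are missing: no truncation or separate sub-exponential treatment is needed for the quadratic terms. The $(4+D_1)$-th moment in Assumption~\ref{assumption:alpha-mixing}(ii) is chosen precisely so that the centered squares $z_t^2-1$ also satisfy the $(2+\delta+\Delta)$-moment hypothesis of Lemma~\ref{lemma:Padilla23}; the paper applies that lemma directly to both $\{z_t\}$ and $\{z_t^2-1\}$ to obtain uniform bounds of the same $\sqrt{a_T\,n}\log T$ form (events $\mathcal{E}_2$ and $\mathcal{E}_4$ in Lemma~\ref{lemma:thm4-event-bound}). So the ``main obstacle'' you anticipate for the variance pieces dissolves, and the quadratic and linear partial sums are handled symmetrically.
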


\subsection{Credible Sets and Detection Rule}
\label{sec:cred-sets}

Even in the absence of a change-point, $\hat{\tau}_{\normalfont \text{MAP}}$ is well-defined as per (\ref{eq:map}). Therefore, we require some detection criterion to indicate whether $\hat{\tau}_{\normalfont \text{MAP}}$ does in fact identify a change-point. Figure \ref{fig:post-probs-plot} in Appendix \ref{app:prior} shows that the elements $\overline{\boldsymbol{\pi}}_{1:T}$ tend to be quite diffuse under the null model with no change to $\boldsymbol{\mu}_{1:T}$ or $\boldsymbol{\lambda}_{1:T}$. By this we mean that $\lim_{T\to\infty} \sum_{t\in\mathcal{T}} \overline{\pi}_t = 0$ for any fixed index set $\mathcal{T} \subseteq[T]$, so the credible sets constructed according to (\ref{eq:cs}) will contain a large subset of $[T]$ when no change is present. This observation motivated \cite{Cappello25} to adopt $|\mathcal{CS}(\alpha,\overline{\boldsymbol{\pi}}_{1:T})| \leq T/2$ as their detection criterion. This rule is somewhat \textit{ad hoc} and can lead to an inflated false positive rate in practice. To develop a more theoretically justified detection rule, we begin by noting that the appropriate localization error bounds the size of $\mathcal{CS}(\alpha, \overline{\boldsymbol{\pi}}_{1:T})$:

\begin{corollary} \label{cor:cred-sets}
    Let $\epsilon_T$ be the localization error corresponding to one of Theorems \ref{theorem:smcp}, \ref{theorem:sscp}, \ref{theorem:smscp}, or \ref{theorem:alpha-mixing}, then under the respective conditions of the these theorems, for any $\alpha > 0$, $\lim_{T \to \infty} \Pr\left(|\mathcal{CS}(\alpha, \overline{\boldsymbol{\pi}}_{1:T})| \leq 2 \epsilon_T \right) = 1.$
\end{corollary}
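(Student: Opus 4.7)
My plan is to upgrade each of Theorems \ref{theorem:smcp}--\ref{theorem:alpha-mixing} from a statement about the MAP being close to $t_0$ to a statement about the posterior mass concentrating near $t_0$, and then to observe that the greedy knapsack solution described in Remark \ref{rmk:knapsack} is bounded in size by any feasible window containing sufficient mass.

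Concretely, let $\epsilon_T$ denote the localization error in the theorem being invoked and set
\begin{equation*}
W_T := \{t \in [T] : |t - t_0| \leq \epsilon_T\}.
\end{equation*}
The key step is to show that, under the hypotheses of each theorem,
\begin{equation*}
\lim_{T \to \infty} \Pr\!\left(\sum_{t \notin W_T} \overline{\pi}_t \leq \alpha \right) = 1.
\end{equation*}
To see why this should follow from the proofs already in hand, note that those proofs necessarily control the ratio $\overline{\pi}_t / \overline{\pi}_{t_0}$ (or equivalently the log Bayes factor against $t_0$) uniformly in $t \notin W_T$: indeed, the bound on $\Pr(|\hat{\tau}_{\text{MAP}} - t_0| \leq \epsilon_T)$ in each of Theorems \ref{theorem:smcp}--\ref{theorem:alpha-mixing} is obtained by arguing that on a high-probability event, $\overline{\pi}_t < \overline{\pi}_{t_0}$ for every $t \notin W_T$, with the gap growing at a rate dictated by the signal-to-noise condition. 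Summing these ratio bounds and using the crude estimate $\overline{\pi}_{t_0} \leq 1$, together with the prior bound $\max_t |\log \pi_t| \leq C_\pi \log T$ (or $C_\pi \log^2 T$ under Assumption \ref{assumption:alpha-mixing}) to absorb the prior contribution, would yield $\sum_{t \notin W_T} \overline{\pi}_t = o_\Pr(1)$. Hence with probability tending to one, this sum is eventually below $\alpha$.

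Once this concentration is in hand, the conclusion is immediate from the knapsack characterization in Remark \ref{rmk:knapsack}. On the event $\{\sum_{t \notin W_T} \overline{\pi}_t \leq \alpha\}$, we have $\sum_{t \in W_T} \overline{\pi}_t \geq 1 - \alpha$, so $W_T$ is feasible for the integer program defining $\mathcal{CS}(\alpha, \overline{\boldsymbol{\pi}}_{1:T})$. By optimality,
\begin{equation*}
|\mathcal{CS}(\alpha, \overline{\boldsymbol{\pi}}_{1:T})| \leq |W_T| \leq 2\epsilon_T + 1,
\end{equation*}
and the $+1$ is absorbed for $T$ large enough that $\epsilon_T \geq 1$, yielding the stated bound $|\mathcal{CS}(\alpha, \overline{\boldsymbol{\pi}}_{1:T})| \leq 2\epsilon_T$.

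The main obstacle is the first step: extracting a quantitative posterior-concentration statement from each of Theorems \ref{theorem:smcp}--\ref{theorem:alpha-mixing}, as their stated form is only about the arg-max. In the i.i.d.\ sub-Gaussian cases (Theorems \ref{theorem:smcp}--\ref{theorem:smscp}) this should be routine because the underlying chaining/union-bound arguments naturally produce geometric decay of $\overline{\pi}_t$ in $|t-t_0|$. The genuinely delicate case is Theorem \ref{theorem:alpha-mixing}, where the $\alpha$-mixing concentration results from \citet{Padilla23} give weaker tail control and the relevant sums over $t \notin W_T$ involve the extra $\log T$ inflation of $\tilde\epsilon_T$; here one likely has to argue more carefully that the per-index bound is summable despite the slower polynomial-in-$a_T$ decay afforded by the mixing inequalities.
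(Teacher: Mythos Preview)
Your proposal is correct and follows essentially the same route as the paper: the proofs of Theorems~\ref{theorem:smcp}--\ref{theorem:alpha-mixing} in fact establish the uniform ratio bound $\log(\overline{\pi}_{t_0}/\overline{\pi}_t) > 2\log T$ (respectively $2\log^2 T$ in the $\alpha$-mixing case) for all $t \notin W_T$ on a high-probability event, whence $\sum_{t \notin W_T} \overline{\pi}_t \leq T \cdot \overline{\pi}_{t_0}/T^2 \leq 1/T$, and the knapsack feasibility argument you describe finishes exactly as you wrote. Your worry about Theorem~\ref{theorem:alpha-mixing} is unnecessary---the paper notes that the identical argument applies there with $\log T$ replaced by $\log^2 T$, so there is no additional subtlety in the mixing case.
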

\vspace{-5pt}
Suppose we use $|\mathcal{CS}(\alpha,\overline{\boldsymbol{\pi}}_{1:T})| \leq \log^{1+\delta} T$ as our detection criterion for some small $\delta > 0$. Given independent observations and fixed $b_0$ and $s_0$, each localization error $\epsilon_T$ from Section \ref{sec:localization} is dominated by $\log^{1+\delta} T$, so Corollary \ref{cor:cred-sets} implies the false negative rate will converge to zero with high probability under this rule. At the same time, $\log^{1+\delta} T \ll T / 2$, which will result in fewer false positives, even for moderately large $T$. If the observations are dependent, then the false negative rate still goes to zero so long as $\delta > 1$.
\section{Multiple Independent Change-Point (MICH) Model}
\label{sec:mich}

We now extend the SCP models to the case of multiple change-points. Let $\mathbf{y}_{1:T}$ be a univariate sequence of independent Gaussian observations with mean $\boldsymbol{\mu}_{1:T}$ and precision $\boldsymbol{\lambda}_{1:T}$ (the case  of multivariate $\boldsymbol{\mu}_{1:T}$ is addressed in Appendix \ref{app:multi-mich}). Assume that both $\boldsymbol{\mu}_{1:T}$ and $\boldsymbol{\lambda}_{1:T}$ have piecewise constant structures such that there are $L^* \geq 0$ breaks in $\boldsymbol{\mu}_{1:T}$, $K^* \geq 0$ breaks in $\boldsymbol{\lambda}_{1:T}$, and an additional $J^* \geq 0$ simultaneous breaks in both $\boldsymbol{\mu}_{1:T}$ and $\boldsymbol{\lambda}_{1:T}$. The parameters $L^*$, $K^*$, and $J^*$ are generally unknown, but if we can specify some $L \geq L^*$, $K \geq K^*$, and $J \geq J^*$ with $N:= L+K+J$, then we can detect the breaks in $\boldsymbol{\mu}_{1:T}$ and $\boldsymbol{\lambda}_{1:T}$ using the following Multiple Independent CHange-point (MICH) model:
\begin{align}
    y_t\:|\: \mu_{t}, \lambda_t  &\overset{\text{ind.}}{\sim} \mathcal{N}(\mu_t, \lambda_t^{-1}), && t \in [T], \label{eq:y-mich}\\
    \mu_{t} &:= \sum_{i = 1}^{J+L} \mu_{i t} := \sum_{j = 1}^{J} b_j\mathbbm{1}_{\{t\geq\tau_j\}} + \sum_{\ell = J + 1}^{J+L} b_\ell\mathbbm{1}_{\{t\geq\tau_\ell\}}, && \label{eq:mu_t} \\ 
    \lambda_t &:= \prod_{i=1}^{J+K} \lambda_{it} := \prod_{j=1}^{J} 
 s_{j}^{\mathbbm{1}\{t\geq\tau_{j}\}}\prod_{k=J+L+1}^{N} 
 s_{k}^{\mathbbm{1}\{t\geq\tau_{k}\}},&& \label{eq:lambda_t} \\
    \tau_i &\overset{\text{ind.}}{\sim} \text{Categorical}(\boldsymbol{\pi}_{i,1:T}),&&  i \in [N], \label{eq:tau-mich}\\
    \{b_j,s_j\} &\overset{\text{ind.}}{\sim} \text{Normal-Gamma}(0,\omega_0,u_0,v_0), &&  1\leq j \leq J, \label{eq:bs-mich}\\
     b_\ell &\overset{\text{ind.}}{\sim} \mathcal{N}(0,\omega_0^{-1}), && J < \ell \leq  J+L, \label{eq:b-mich} \\
     s_k &\overset{\text{ind.}}{\sim} \text{Gamma}(u_0,v_0), && J+L<k \leq N. \label{eq:s-mich}
     \vspace{-5pt}
\end{align}
MICH employs a modular framework by stacking $N$ changes in the construction of $\boldsymbol{\mu}_{1:T}$ and $\boldsymbol{\lambda}_{1:T}$. Each indicator $\mathbbm{1}_{\{\tau_i \geq t\}}$ in the sum (\ref{eq:mu_t}) and product (\ref{eq:lambda_t}) contributes an additional break in the piecewise constant structure of either $\boldsymbol{\mu}_{1:T}$, $\boldsymbol{\lambda}_{1:T}$, or both at time $\tau_i$. We use distinct index labels $\ell$, $k$, and $j$ to clarify that the latent change-points $\tau_\ell$, $\tau_k$, and $\tau_j$ correspond to mean-only, variance-only, and joint mean and variance changes. When $\E[y_1]\neq0$ or $\Var(y_1)\neq 1$, we can also include an intercept $\mu_0$ in (\ref{eq:mu_t}) and initial precision $\lambda_0$ in (\ref{eq:lambda_t}) (see Appendix \ref{app:empirical-bayes}). 

Each distribution for the break sizes in (\ref{eq:bs-mich})-(\ref{eq:s-mich}) corresponds to one of the SCP models, so we can quantify the uncertainty around each $\tau_i$ using the same approach we took in Section \ref{sec:scp}. This is the key insight of MICH; the model integrates multiple SCP models in a unified framework, a strategy that will lead to computationally efficient inference for $\boldsymbol{\tau}_{1:N}$. To demonstrate the rationale, let $\boldsymbol{\Theta}$ denote the set of all model variables in (\ref{eq:tau-mich})-(\ref{eq:s-mich}). Pretend for the moment that we have access to $\boldsymbol{\Theta} \setminus\{b_j,s_j,\tau_j\}$ for some $j \in [J]$. We can use these known parameters to calculate partial residual terms $r_{-j t} := y_t - \mu_t + \mu_{j t}$ and $\lambda_{-jt} := \lambda_{jt}^{-1}\lambda_t$, in which case $p(b_j,s_j,\tau_j\:|\:\mathbf{r}_{-j,1:T}, \boldsymbol{\lambda}_{-j,1:T})$ is just the meanvar-scp posterior as defined in (\ref{eq:gamma-post-cat1}) and (\ref{eq:bs-smscp}) with parameters determined by the following call of \texttt{meanvar-scp} (\ref{eq:meanvar-scp-fn}):
\begin{align}
    \overline{\boldsymbol{\theta}}_{j} := \{\overline{b}_{jt}, \overline{\omega}_{jt}, \overline{u}_{jt}, \overline{v}_{jt}, \overline{\pi}_{jt}\}_{t=1}^T = \texttt{meanvar-scp}(\mathbf{r}_{-j,1:T}\:;\: \boldsymbol{\lambda}_{-j,1:T}, \omega_0, u_0, v_0, \boldsymbol{\pi}_{1:T}).
\end{align}
Analogously, we can define $r_{-\ell t}$ and $\lambda_{-kt}$ for $J < \ell\leq J+L$ and $J + L< k \leq N$, then use these partial residuals to calculate $\overline{\boldsymbol{\theta}}_{\ell} := \{\overline{b}_{\ell t}, \overline{\omega}_{\ell t}, \overline{\pi}_{\ell t}\}_{t=1}^T$ and $\overline{\boldsymbol{\theta}}_{k} := \{ \overline{u}_{kt}, \overline{v}_{kt}, \overline{\pi}_{kt}\}_{t=1}^T$ with respective calls of \texttt{mean-scp} (\ref{eq:mean-scp-fn}) and \texttt{var-scp} (\ref{eq:var-scp-fn}). In reality, $\boldsymbol{\Theta}$ is unknown, so we cannot calculate $r_{-i t}$ or $\lambda_{-it}$ for any $i \in [N]$. Nonetheless, we will show that it is still possible to approximate the posterior $p(\boldsymbol{\Theta}\:|\:\mathbf{y}_{1:T})$ by fitting individual SCP models to some notion of a residual. Before detailing this result, we briefly point out that MICH recovers CPD version SuSiE with $L$ change-points when $K=J=0$ (see Section 6 of \cite{Wang20}), and when $L=J=0$, then MICH reduces to the $K$ component PRISCA model (\citealp{Cappello25}).

\subsection{Fitting MICH}
\label{sec:fit-mich}

The goal of Bayesian inference for MICH is to find a way to approximate $p(\boldsymbol{\Theta}\:|\:\mathbf{y}_{1:T})$. Implementing a Gibbs sampler (\citealp{Geman84}) is conceptually straightforward; however, the change-point locations $\boldsymbol{\tau}_{1:N}$ are discrete, highly correlated, and belong to a high-dimensional space as $N$ and $T$ increase. Thus, the sampler will often suffer from poor mixing and may even fail to converge (\citealp{Smith93, Fearnhead06, Cappello21}). With Algorithm \ref{alg:mich}, we circumvent the challenges of MCMC based inference by implementing a deterministic backfitting procedure (\citealp{Friedman81, Breiman85}) that returns an approximation to  $p(\boldsymbol{\Theta}\:|\:\mathbf{y}_{1:T})$. We first describe the algorithm, then characterize the nature of the approximation. 

\begin{algorithm}[!b]
\label{alg:mich}
\caption{Variational Bayes Approximation to MICH Posterior.}

\footnotesize
\SetAlgoLined
  Inputs: $\mathbf{y}_{1:T},\:L,\:K,\:J,\:\omega_0,\:u_0,v_0,\{\boldsymbol{\pi}_{i,1:T}\}_{i=1}^N$\:; \\
  Initialize: $\overline{\boldsymbol{\theta}}_{1:N}$\:;

  \Repeat {Convergence} {
  \For{$j=1$ \KwTo $J$} {
      $\tilde{r}_{-jt} := y_t - \sum_{j' \neq j} \E_{q_{j'}}[\lambda_{j't} \mu_{j't}] / \E_{q_{j'}}[\lambda_{j't}]- \sum_{\ell =J+ 1}^{J+L} \E_{q_\ell}[\mu_{\ell t}]$ \tcp*{j\textsuperscript{th} partial mean residual} 
      $\overline{\lambda}_{-jt} := \prod_{j' \neq j}\E_{q_{j'}}[\lambda_{j't}] \prod_{k=J+L+1}^N \E_{q_k}[\lambda_{kt}]$ \tcp*{j\textsuperscript{th} partial scale residual}
      Compute $\delta_{-j t}$, $\tilde{v}_{jt}$, and $\tilde{\pi}_{jt}$ by (\ref{eq:delta-j}), (\ref{eq:v-j-corrected}), and (\ref{eq:pi-j-corrected}) \tcp*{variance corrected priors} 
      $\overline{\boldsymbol{\theta}}_j := \texttt{meanvar-scp}(\tilde{\mathbf{r}}_{-j,1:T} \:;\:\overline{\boldsymbol{\lambda}}_{-j,1:T}, \omega_0, u_0, \tilde{\mathbf{v}}_{j,1:T}, \tilde{\boldsymbol{\boldsymbol{\pi}}}_{j,1:T})$ \tcp*{update meanvar-scp parameters}
    }
    \For{$\ell=J+1$ \KwTo $J+L$} {
      $\tilde{r}_{-\ell t} := y_t - \sum_{j =1}^J \E_{q_j}[\lambda_{jt} \mu_{jt}] / \E_{q_j}[\lambda_{jt}]- \sum_{\ell' \neq \ell} \E_{q_{\ell'}}[\mu_{\ell' t}]$ \tcp*{l\textsuperscript{th} partial mean residual}
      $\overline{\lambda}_{t} := \prod_{j=1}^J\E_{q_j}[\lambda_{jt}]\prod_{k=J+L+1}^N \E_{q_k}[\lambda_{kt}]$ \tcp*{precision of residual} 
      $\overline{\boldsymbol{\theta}}_\ell := \texttt{mean-scp}(\tilde{\mathbf{r}}_{-\ell,1:T} \:;\: \overline{\boldsymbol{\lambda}}_{1:T}, \omega_0, \boldsymbol{\boldsymbol{\pi}}_{1:T})$ \tcp*{update mean-scp parameters}
    }
    \For{$k=J+L+1$ \KwTo $N$} {
      $\tilde{r}_{t} := y_t  - \sum_{j=1}^J \E_{q_j}[\lambda_{jt} \mu_{jt}] / \E_{q_j}[\lambda_{jt}]- \sum_{\ell = J+1}^{J+L} \E_{q_\ell}[\mu_{\ell t}] $ \tcp*{mean residual} 
      $\overline{\lambda}_{-kt} := \prod_{j=1}^J\E_{q_j}[\lambda_{jt}]\prod_{k' \neq k} \E_{q_{k'}}[\lambda_{k't}] $ \tcp*{k\textsuperscript{th} partial scale residual} 
      Compute $\delta_t$, $\tilde{v}_{kt}$, and $\tilde{\pi}_{kt}$ by (\ref{eq:delta}), (\ref{eq:v-k-corrected}), and (\ref{eq:pi-k-corrected}) \tcp*{variance corrected priors} 
      $\overline{\boldsymbol{\theta}}_k := \texttt{var-scp}(\tilde{\mathbf{r}}_{1:T} 
      \:;\:\overline{\boldsymbol{\lambda}}_{-k,1:T}, u_0, \tilde{\mathbf{v}}_{k,1:T}, \tilde{\boldsymbol{\boldsymbol{\pi}}}_{k,1:T})$ \tcp*{update var-scp parameters}
    }
  }
  \Return{Posterior Parameters: $\overline{\boldsymbol{\theta}}_{1:N}$}.
\end{algorithm}

Algorithm \ref{alg:mich} begins by initializing some values for $\overline{\boldsymbol{\theta}}_{1:N}$, which are then used to approximate the residual $\tilde{r}_{t}$ and expected precision $\overline{\lambda}_t$ as per (\ref{eq:mod-resid}) and (\ref{eq:lambda-bar}). Each $\overline{\boldsymbol{\theta}}_i$ is then updated iteratively by fitting the appropriate SCP model with either $\tilde{r}_{t}$ or its partial analogue $\tilde{r}_{-it}$ in place of $y_t$, and $\overline{\lambda}_{t}$ or its partial $\overline{\lambda}_{-it}$ in place of the scale component $\omega_t$ (see (\ref{eq:sscp-start})). Fitting each SCP model returns a distribution, so given the value of $\overline{\boldsymbol{\theta}}_{1:N}$ at the current iteration of Algorithm \ref{alg:mich}, we have:
\begin{align}
    \{b_j,s_j\} \:|\: \tilde{\mathbf{r}}_{-j,1:T},  \overline{\boldsymbol{\lambda}}_{-j,1:T},\tau_j = t &\sim \text{\normalfont Normal-Gamma}(\overline{b}_{j t}, \overline{\omega}_{j t}, \overline{u}_{j t}, \overline{v}_{j t}), & 1 \leq j \leq J, \label{eq:mich-bs-post} \\
    b_\ell \:|\:  \tilde{\mathbf{r}}_{-\ell,1:T},  \overline{\boldsymbol{\lambda}}_{1:T}, \tau_\ell = t &\sim \mathcal{N}(\overline{b}_{\ell t}, 1/\overline{\omega}_{\ell t}), & J < \ell \leq J+L \label{eq:mich-b-post}\\
    s_k \:|\:  \tilde{\mathbf{r}}_{1:T},  \overline{\boldsymbol{\lambda}}_{-k,1:T},\tau_k = t &\sim \text{\normalfont Gamma}(\overline{u}_{k t}, \overline{v}_{k t}), & J+L < k  \leq N. \label{eq:mich-s-post}
\end{align}
Let $\{q_i\}_{i=1}^N$ stand in for the $N$ conditional distributions in (\ref{eq:mich-bs-post})-(\ref{eq:mich-s-post}) with $q_i(\tau_i = t) = \overline{\pi}_{it}$. We can use $\{q_i,\overline{\boldsymbol{\theta}}_i\}_{i=1}^N$ to efficiently compute the expected terms in Algorithm \ref{alg:mich} as well as any other statistics of interests for change-point detection. For example, we can use $\overline{\boldsymbol{\pi}}_{i,1:T}$ to construct estimators $\hat{\tau}_{\text{MAP}, i}$ and $\alpha$-level credible sets $\mathcal{CS}(\alpha, \overline{\boldsymbol{\pi}}_{i,1:T})$ as per (\ref{eq:map}) and (\ref{eq:cs}). 
Adopting the same detection rule as in Section \ref{sec:cred-sets}, we can estimate the total number of changes with $\hat{N}:=|\{ i\in [N] : |\mathcal{CS}(\alpha, \overline{\boldsymbol{\pi}}_{i,1:T})| \leq \log^{1+\delta} T \}|$ for some small $\delta > 0$. We can individually estimate $L^*$, $K^*$, and $J^*$ following the same approach.

\begin{remark}[Complexity of Algorithm \ref{alg:mich}] \label{rmk:computational-complexity}
     Fitting each SCP model and calculating each expectation in Algorithm \ref{alg:mich} requires taking the cumulative sum of $T$ terms (see Appendix \ref{app:posterior-parameters} and equations (\ref{eq:mu-post-mean}), (\ref{eq:lambda-post-mean}), and (\ref{eq:mu-lambda-post-mean})). Therefore, each outer loop of Algorithm \ref{alg:mich} is $\mathcal{O}(NT)$. The actual number of iterations required for Algorithm \ref{alg:mich} to converge is indeterminate and depends on the chosen stopping criterion (see Appendix \ref{app:convergence}). 
\end{remark}
\vspace{-10pt}

\subsection{Mean-Field Variational Approximation}
\label{sec:variational-bayes}

We now show that Algorithm \ref{alg:mich} is equivalent to a Variational Bayes (VB) algorithm that returns an approximation of $p:=p(\boldsymbol{\Theta}\:|\:\mathbf{y}_{1:T})$ under a certain Mean-Field assumption. Recall that VB methods specify a family of distributions $\mathcal{Q}$, then search for a $q^* \in \mathcal{Q}$ that minimizes the Kullback-Leibler divergence with $p$, i.e., $q^* =\text{arg}\min_{q \in  \mathcal{Q}}  \; \text{KL}( q \:\lVert\: p)$ (see \cite{Blei17} for a recent overview of VB inference). In practice, $\text{KL}( q \:\lVert\: p)$ tends to be intractable, but we can define the evidence lower bound (ELBO):
\begin{align}
    \text{ELBO}(q) &:= \int q(\boldsymbol{\Theta}) \log \frac{ p(\mathbf{y}_{1:T},\boldsymbol{\Theta})}{q(\boldsymbol{\Theta})} \; d\boldsymbol{\Theta} \label{eq:elbo}\\
    &\;= \log p(\mathbf{y}_{1:T}) - \text{KL}( q \:\lVert\: p)\label{eq:kl-decomp},
\end{align}
and solve $\text{arg}\max_{q \in  \mathcal{Q}}  \; \text{ELBO}( q)$ since the marginal likelihood in (\ref{eq:kl-decomp}) does not depend on $q$. We can simplify further by selecting a family $\mathcal{Q}$ that makes this optimization task tractable, e.g., by making a Mean-Field assumption so that $q$ factorizes over the blocks of $\boldsymbol{\Theta}$:
\begin{align}\label{eq:mean-field}
    \mathcal{Q}_{\text{MF}} := \left\{q: q(\boldsymbol{\Theta}) =  \prod_{j=1}^{J} q_j(b_j,s_j,\tau_j)\prod_{\ell=J+1}^{J+L} q_\ell(b_\ell,\tau_\ell)\prod_{k=J+L+1}^{N} q_k(s_k,\tau_k)\right\}.
\end{align}
Under (\ref{eq:mean-field}), it is possible to maximize the ELBO via coordinate ascent by iteratively updating $q_i$ for each $i \in [N]$. Proposition \ref{prop:coord-ascent} shows that this is exactly what Algorithm \ref{alg:mich} accomplishes.

\begin{proposition}
\label{prop:coord-ascent}
Assume that $\omega_0, u_0, v_0,\pi_{t} > 0$. Let $\{q^{(n)}_i\}_{i=1}^N$ be the distributions (\ref{eq:mich-bs-post})-(\ref{eq:mich-s-post}) returned by the $n^{\text{th}}$ iteration of Algorithm \ref{alg:mich} and define $q^{(n)}:= \prod_{i=1}^Nq^{(n)}_i$. Then Algorithm \ref{alg:mich} is a coordinate ascent procedure that maximizes $\text{\normalfont ELBO}(q)$ on $ \mathcal{Q}_{\text{MF}}$, 
and $q^{(n)}$ converges to a stationary point of $\text{\normalfont ELBO}(q)$.
\end{proposition}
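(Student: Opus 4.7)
The plan is to verify that the updates performed by Algorithm \ref{alg:mich} coincide with the standard coordinate-ascent variational inference (CAVI) updates for $\text{\normalfont ELBO}(q)$ on $\mathcal{Q}_{\text{MF}}$, then close the argument with a monotonicity-and-boundedness convergence step. Recall the CAVI fact that, under a mean-field factorization $q = \prod_i q_i$, the ELBO-maximizing update of block $i$ with all $q_{i'}$, $i' \neq i$, held fixed is
\begin{equation*}
    q_i^{*}(\theta_i) \;\propto\; \exp\!\bigl(\mathbb{E}_{q_{-i}}[\log p(\mathbf{y}_{1:T}, \boldsymbol{\Theta})]\bigr),
\end{equation*}
where $\theta_i$ denotes the block variables (e.g.\ $(b_j, s_j, \tau_j)$ for $1 \le j \le J$). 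The proposition reduces to computing this expectation under the MICH generative model (\ref{eq:y-mich})--(\ref{eq:s-mich}) and matching it, block by block, to the distribution produced by the \texttt{meanvar-scp}, \texttt{mean-scp}, or \texttt{var-scp} call performed by Algorithm \ref{alg:mich}.

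The main computation is the expected Gaussian log-likelihood, which up to $\theta_i$-free constants equals $\sum_t \tfrac{1}{2}\log \lambda_t - \tfrac{1}{2}\lambda_t(y_t - \mu_t)^2$. Because $\mu_t$ is a sum and $\lambda_t$ a product over independent blocks under $\mathcal{Q}_{\text{MF}}$, the term $\mathbb{E}_{q_{-i}}[\log \lambda_t]$ cleanly isolates the block-$i$ contribution. The quadratic term $\mathbb{E}_{q_{-i}}[\lambda_t (y_t - \mu_t)^2]$ is where both the residuals and the variance corrections enter: expanding $(y_t - \mu_t)^2$ and using block independence, the cross-moments factor, and the piece linear in $y_t$ collapses to $\overline{\lambda}_{-it}\tilde{r}_{-it}$ as defined in Algorithm \ref{alg:mich}. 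In particular, the weighted quotient $\mathbb{E}_{q_j}[\lambda_{jt}\mu_{jt}] / \mathbb{E}_{q_j}[\lambda_{jt}]$ appearing in the partial mean residual arises precisely because, in a joint block, $\mu_{jt}$ and $\lambda_{jt}$ are coupled under the Normal-Gamma $q_j$ and $\mathbb{E}_{q_j}[\lambda_{jt}\mu_{jt}] \neq \mathbb{E}_{q_j}[\lambda_{jt}]\mathbb{E}_{q_j}[\mu_{jt}]$. The remaining second-moment terms of the form $\mathbb{E}_{q_{i'}}[\mu_{i't}^2]$ and $\mathbb{E}_{q_{j'}}[\lambda_{j't}\mu_{j't}^2]$ cannot be absorbed into $\tilde{r}_{-it}^2$; these are precisely the additive slacks $\delta_{-jt}$ and $\delta_t$ of (\ref{eq:delta-j})--(\ref{eq:delta}). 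Collecting the coefficients of $s_k$ (respectively of $b_j, s_j$ or of $b_\ell$) in $\mathbb{E}_{q_{-i}}[\log p]$ and matching them against the closed-form SCP posteriors of Section \ref{sec:scp}, the slacks fold into a time-varying effective rate $\tilde{v}_{it}$ and, after renormalization of the categorical distribution of $\tau_i$, into $\tilde{\pi}_{it}$. The main obstacle is this bookkeeping: tracking every cross-moment across all coupled blocks and verifying term by term that the corrections (\ref{eq:v-j-corrected})--(\ref{eq:pi-j-corrected}) and (\ref{eq:v-k-corrected})--(\ref{eq:pi-k-corrected}) reproduce the CAVI optimum exactly. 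The mean-only update ($J < \ell \le J + L$) is the cleanest case because $\mu_{\ell t}$ does not feed into $\lambda_t$, so no variance correction is needed and the update reduces to \texttt{mean-scp} applied to $\tilde{\mathbf{r}}_{-\ell, 1:T}$ with weights $\overline{\boldsymbol{\lambda}}_{1:T}$.

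Once the update equivalence is established for each of the three block types, convergence is essentially automatic. Each block update is an exact maximizer of $\text{\normalfont ELBO}(q)$ with all other blocks fixed, so $\text{\normalfont ELBO}(q^{(n+1)}) \ge \text{\normalfont ELBO}(q^{(n)})$; combined with the upper bound $\text{\normalfont ELBO}(q) \le \log p(\mathbf{y}_{1:T})$ from (\ref{eq:kl-decomp}), the sequence $\text{\normalfont ELBO}(q^{(n)})$ is monotone and bounded, hence convergent. The positivity assumptions $\omega_0, u_0, v_0, \pi_t > 0$ ensure that all conjugate posterior hyperparameters remain strictly positive along the iteration, so the block-wise maximizers are continuous functions of the other blocks' variational parameters and the CAVI map is well-defined on the interior of $\mathcal{Q}_{\text{MF}}$. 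A standard coordinate-ascent argument (see, e.g.\ the discussion of CAVI in \citealp{Blei17}) then yields that any limit point of $q^{(n)}$ is a stationary point of $\text{\normalfont ELBO}$.
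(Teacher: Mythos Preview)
Your derivation of the CAVI updates matches the paper's approach: the paper proves Proposition~\ref{prop:coord-ascent} by deferring to an auxiliary Proposition~\ref{prop:vb}, which computes $\exp\bigl(\E_{q_{-i}}[\log p(\mathbf{y}_{1:T},\boldsymbol{\Theta})]\bigr)$ block by block and identifies the result with the SCP posterior returned by the corresponding \texttt{mean-scp}, \texttt{var-scp}, or \texttt{meanvar-scp} call, exactly as you outline. Your description of where the residuals, weighted quotients, and $\delta$-corrections come from is accurate and mirrors the paper's calculations.

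The convergence step, however, is where your argument falls short of what the proposition claims. Monotonicity and boundedness of $\text{ELBO}(q^{(n)})$ only give convergence of the \emph{objective values}, not convergence of the iterates $q^{(n)}$ themselves to a stationary point; and the citation to \cite{Blei17} does not supply a theorem to bridge that gap. The paper handles this more carefully: it first observes that once each $q_i$ lies in the conjugate family (Gaussian, Gamma, or Normal-Gamma mixture), the ELBO can be written as a continuously differentiable function $F(\overline{\boldsymbol{\theta}}_{1:N})$ of the finite-dimensional posterior hyperparameters. It then verifies, coordinate by coordinate, that the CAVI update is the \emph{unique} maximizer of $F$ in each block (this uses $\pi_t > 0$ to force an interior solution), and finally invokes Proposition~2.7.1 of \cite{Bertsekas97}, which guarantees convergence of block coordinate ascent to a stationary point under exactly these hypotheses (continuous differentiability plus unique block maximizers). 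Your sketch gestures at continuity of the update map but does not reduce to a finite-dimensional parameterization, does not establish uniqueness of the block maximizers, and does not name a result that delivers iterate convergence. To complete your proof you would need to fill in those three pieces.
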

\vspace{-10pt}

\subsection{Computational Details}

\subsubsection{Choice of \texorpdfstring{$L$}{L}, \texorpdfstring{$K$}{K}, and \texorpdfstring{$J$}{J}}
\label{sec:LKJ}

While it is technically feasible to fit MICH when $L,K,J>0$, we stress that the inclusion of both mean-only, variance-only, and simultaneous mean and variance changes is intended to facilitate our presentation of the model. In practice, we are typically interested in cases where every change in $\boldsymbol{\mu}_{1:T}$ and $\boldsymbol{\lambda}_{1:T}$ is simultaneous ($J^*>0$ and $L^*=K^*=0$), or where every change occurs independently ($L^*,K^*>0$, and $J^*=0$). Beyond this domain specific restriction, we must have $L \geq L^*$, $K \geq K^*$, and $J \geq J^*$ for MICH to correctly recover all the changes in $\boldsymbol{\mu}_{1:T}$ and $\boldsymbol{\lambda}_{1:T}$. As shown in both \cite{Wang20} and \cite{Cappello25}, the model is robust to overstating $L$, $K$, and $J$, as the extra $N - N^*$ components tend to capture null effects with diffuse posteriors $\overline{\boldsymbol{\pi}}_{i,1:T}$ that fail the detection criterion (see Appendix \ref{app:prior} for examples). Some prior knowledge is still required to ensure we do not underestimate the true number of components. In the absence of any information, we could set $L=K=J=\lceil T /\log T \rceil$, i.e. proportional $T$ over the localization error from Theorems \ref{theorem:smcp}-\ref{theorem:smscp}. This conservative default increases the complexity of Algorithm \ref{alg:mich} to $\mathcal{O}(T^2/\log T)$. To avoid including so many redundant components in the model, we propose using the value of $\text{ELBO}(q)$ to automatically select $L$, $K$, and $J$ (see Appendix \ref{app:LKJ-choice} for detail). 

\subsubsection{Prior Parameters}

Outside of choosing $L$, $K$, and $J$, MICH requires virtually no hyperparameter tuning. The sensitivity analysis in Appendix \ref{app:prior_sensitivity} indicates that the choice of prior parameters $\omega_0$, $u_0$, and $v_0$ does not meaningfully impact the output of Algorithm \ref{alg:mich} (so long as these values are not too large). This result has theoretical support, as the finite sample results in the proofs of Theorems \ref{theorem:smcp}-\ref{theorem:alpha-mixing} each hold for any small positive value of $\omega_0$, $u_0$, and $v_0$, and the asymptotic localization results hold for any fixed values for these parameters. By default, we set $\omega_0 = u_0 = v_0 = 0.001$. We can also vary these parameters across each component of MICH by introducing $\{\omega_i,u_i,b_i\}_{i=1}^N$. For the location priors, we choose $\pi_{it}$ so that $\overline{\pi}_{it} \approx T^{-1}$ under the null model (see Appendix \ref{app:prior} for more detail).

\subsubsection{Model Asymmetry}

By construction, the variable $\tau_i \in [T]$ is used in MICH to ``activate" the $i^{\text{th}}$ component of the model via the indicator $\mathbbm{1}_{\{t \geq \tau_i\}}$. The direction of this inequality is arbitrary, and we could just as easily have decided to ``deactivate" components with  $\mathbbm{1}_{\{t < \tau_i\}}$ instead. In Appendix \ref{app:prior}, we show that the direction of the inequality builds an asymmetry into the model, which may artificially inflate the elements of $\overline{\boldsymbol{\pi}}_{i,1:T}$ closer to index $T$. The default prior we have chosen for MICH attempts to mitigate the effects of this asymmetry, but the model still occasionally misses change-points that it would have identified if the order of $\mathbf{y}_{1:T}$ were reversed (see Figure \ref{fig:reverse-fit}). To overcome this undesirable behavior, we fit MICH using $\mathbf{y}_{T:1}$, then reverse the fitted parameters and use them to restart Algorithm \ref{alg:mich}. We return the resulting fit if it improves the ELBO over directly inputting $\mathbf{y}_{1:T}$ into Algorithm \ref{alg:mich}. In simulations, we observe that this procedure leads to a marked improvement in the performance of MICH. Since we can fit the model to $\mathbf{y}_{1:T}$ and $\mathbf{y}_{T:1}$ in parallel, this improvement comes at little additional cost. 

\subsubsection{Duplicate Components}
\label{sec:merge-procedure}

The mean field assumption (\ref{eq:mean-field}) is key to unlocking the computational simplicity of Algorithm \ref{alg:mich}, but because each block of $\boldsymbol{\Theta}$ is independent \textit{a priori}, there is a non-zero probability that $\tau_i = \tau_{i'}$ for some $i \neq i'$, i.e. two changes of the same class may be ``stacked" on top of each other. This construction differs from the product partition model of \cite{Barry93}, which rules out overlapping change-points by placing a joint prior on all possible fixed-sized partitions of $[T]$. Although Algorithm \ref{alg:mich} typically avoids assigning multiple components to a single change-point, we do observe instances of a single change splitting across multiple blocks of $\boldsymbol{\Theta}$ (see Figure \ref{fig:duplicate}), and if $L$, $K$, or $J$ are not chosen large enough to allow for some degree of redundancy, this can result in an underestimation of the true number of changes. To avoid this scenario and eliminate duplicates, we propose merging components $i$ and $i'$ if $\langle\overline{\boldsymbol{\pi}}_{i,1:T}, \overline{\boldsymbol{\pi}}_{i',1:T} \rangle$ exceeds some threshold $\beta > 0$. The details of this merge procedure are given in Appendix \ref{app:merge-procedure}.
\section{Simulation Study}
\label{sec:simulations}

In this section, we study MICH's ability to identify joint mean and variance changes by adapting experiment from \cite{Pein17}. We generate 5,000 replicates of $\mathbf{y}_{1:T}$ from Simulation \ref{sim:main} for $C = \sqrt{200}$ and $T \in\{\text{100, 500, 1,000}\}$, $\Delta_T \in\{\text{15, 30, 50}\}$, and $J^* \in \{2,5,10\}$. As noted by \cite{Pein17}, Simulation \ref{sim:main} ensures each jump in $\boldsymbol{\mu}_{1:T}$ and $\boldsymbol{\lambda}_{1:T}$ is similarly difficult to find. For each replicate, we fit MICH using the non-informative prior parameters $\omega_0=u_0=v_0=0.001$, the default prior $\boldsymbol{\pi}_{1:T}$ described in Appendix \ref{app:prior}, error tolerance $\epsilon = 10^{-7}$, and $\delta = 0.5$ for the detection rule described in Section \ref{sec:cred-sets} (see Appendix \ref{app:simulations} for sensitivity analyses of each parameter). Let MICH-Auto denote the version of the model with $J$ selected based on the ELBO (see Appendix \ref{app:LKJ-choice}), and let MICH-Ora be the oracle model with $J=J^*$. For comparison, we fit two state-of-the-art methods that can also detect joint mean and variance changes and return confidence intervals: H-SMUCE (\citealp{Pein17}) and the MOSUM method of \cite{Cho22}. We also compare to the pruned exact linear time (PELT; \citealp{Killick12}) and narrowest over the threshold (NOT; \citealp{Baranowski19}) methods, both of which can detect joint mean and variance changes, but do not provide measures of uncertainty quantification. Lastly, we include narrowest significance pursuit (NSP; \citealp{Fryzlewicz24}). Instead of estimating $\boldsymbol{\tau}_{1:J^*}$, NSP returns localized regions of $\mathbf{y}_{1:T}$ that contain a change-point at a prescribed global significance level. We compare these sets to the credible sets returned by MICH. Details regarding the implementation of each method are provided in Appendix \ref{app:simulations}.
\vspace{-5pt}

\setcounter{algocf}{0}
\begin{simulation}[!h]
\begin{enumerate}
\small
    \itemsep0em 
    \item Fix the number of observations $T$, the number of change-points $J^*$, the minimum spacing condition $\Delta_T$, and a constant $C > 0$.
    \item Draw $\boldsymbol{\tau}_{1:J^*}$ uniformly from $[T]$ subject to the minimum spacing condition $|\tau_{j+1} - \tau_j| \geq \Delta_T$ with $\tau_0 = 1$ and $\tau_{J^*+1} = T+1$. 
    \item Set $\mu_0 :=0$ and $s_0:=1$ and for $j >0$, draw $U_j \sim \text{Uniform}(-2,2)$ and $\xi_j \sim \text{Bernoulli}(0.5)$ and set: 
    \vspace{-5pt}
    \begin{align*}
        s_j &:= 2^{U_j} \\
        \mu_{j} &:= \mu_{j-1}  + (1-2\xi_j)C\left(\min\{s_{j}^{-1}\sqrt{\tau_{j+1} - \tau_j}, s_{j-1}^{-1}\sqrt{\tau_j - \tau_{j-1}}\}\right)^{-1}.
    \end{align*}
    \vspace{-15pt}
    \item Draw $e_t \overset{\text{ind.}}{\sim} \mathcal{N}\left(0,1 \right)$ and set $y_t:= \sum_{j=0}^{J^*} \mu_j\mathbbm{1}_{\{\tau_j \leq t < \tau_{j+1}\}} + e_t \sum_{j=0}^{J^*} \sigma_j\mathbbm{1}_{\{\tau_j \leq t < \tau_{j+1}\}}$.
    \vspace{-10pt}
\normalsize
\end{enumerate}
\caption{Mean-Variance Simulation Study}
\label{sim:main}
\end{simulation}

For each method we evaluate the accuracy of the returned estimates $\hat{\boldsymbol{\tau}}_{1:\hat{J}}$ by calculating the Hausdorff distance (\ref{eq:hausdorff}) between the estimated change-points and the true locations $\boldsymbol{\tau}_{1:J^*}$, as well as the false positive sensitive location error (FPSLE; \ref{eq:fpsle}) and false negative sensitive location error (FNSLE; \ref{eq:fnsle}). We also calculate the bias $|J^* - \hat{J}|$ to assess the ability of each method to recover the correct number of changes. For any method that returns $\alpha$-level confidence or credible sets $\{\mathcal{CS}(\alpha, \hat{\tau}_j)\}_{j=1}^{\hat{J}}$, we calculate the average set size as well as the coverage conditional on detection (CCD; \ref{eq:ccd}), i.e. the coverage of $\tau_j$ given that $\tau_j$ is within a $\min\{T^{1/2}/2, 15\}$ sized window of some estimated $\hat{\tau}_i$ \citep{Killick12}. 

\begin{figure}[!h]
    \centering
    \includegraphics[scale =0.42]{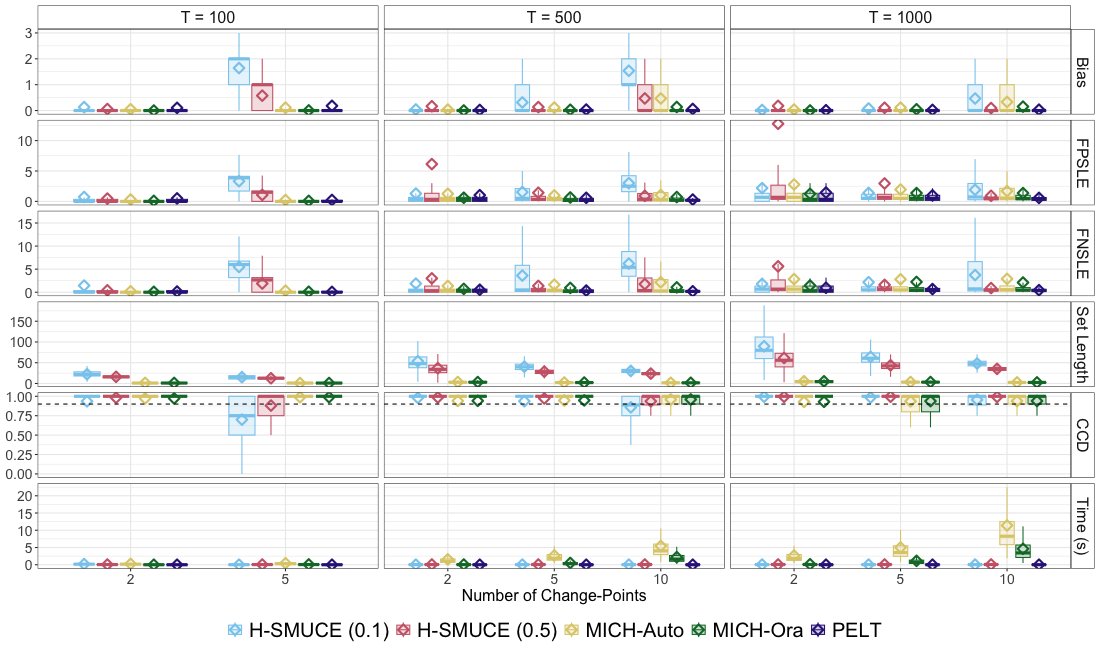}
    \caption{\small \textbf{Simulation \ref{sim:main} Results.} Box-plots of evaluation metrics aggregated over each value of $\Delta_T$ for 5,000 replicates of Simulation \ref{sim:main}. Diamonds ($\Diamond$) display mean of each statistic. Bias $|J^* - J|$ assesses each model's ability to estimate correct number of changes (lower is better). FPSLE and FNSLE assess each model's ability to accurately estimate the locations of the changes (lower is better). Set Length and CCD report the average size and coverage of credible/confidence sets for methods that provide uncertainty quantification (dashed line indicates nominal coverage level for $\alpha = 0.1$). Time reports run-time for each method in seconds.}
    \label{fig:sim_plot}
\end{figure}

Figure \ref{fig:sim_plot} summarizes the results of this simulation study. Note that for visual clarity we have omitted the Hausdorff statistic as well as the results for NOT, MOSUM, and NSP. Complete results for all methods are available in Appendix \ref{app:main_sim_results}. Our results show that MICH is on par with the state-of-the-art in terms of ability to estimate the number and locations of the changes this study. We see that MICH-Ora, MICH-Auto, and PELT are all virtually unbiased for the true number of changes $J^*$, with the average bias increasing negligibly as $J^*$ increases relative to $T$. Both MOSUM and NSP display significant bias for the number of changes. For H-SMUCE, $\alpha$ upper bounds the probability of overestimating $J^*$, but we see that when $J^*$ is large relative to $T$, this false discovery control results in H-SMUCE exhibiting much larger bias relative to MICH-Auto and PELT. Relaxing $\alpha$ from  0.1 to 0.5 decreases the magnitude of the error across each setting.

Regarding the accuracy of the estimated change-points, both MICH-Ora and MICH-Auto are indistinguishable from PELT in terms of FPSLE, while MICH slightly lags behind PELT in terms of the average FNSLE, indicating MICH may be missing some true changes even when $J=J^*$. Among the methods that provide uncertainty quantification, MICH and H-SMUCE clearly outperform the others. However, the accuracy of H-SMUCE heavily depends on the sample size $T$ and choice of $\alpha$. When $J^*/T$ is large and  $\alpha = 0.1$, H-SMUCE performs worse than either PELT or MICH across each metric, particularly for small $T$. On the other hand, if we are willing to compare MICH and PELT at different significance levels, then H-SMUCE beats MICH in terms of the FNSLE at $\alpha = 0.5$ and is even competitive with PELT. The reverse of this relationship with $\alpha$ is observed when $J^*/T$ is small. Thus, unlike PELT or MICH, H-SMUCE requires either a way to tune $\alpha$ or knowledge of $J^*/T$. 

Both MICH-Ora and MICH-Auto achieve the nominal coverage level of $90\%$ in each setting, while at the same time returning credible sets that are smaller than those returned by H-SMUCE, MOSUM, and NSP by a factor of 10-50. This empirical result reflects the desirable localization properties we established in Section \ref{sec:localization}. Though H-SMUCE achieves the desired coverage level for both $\alpha = 0.1$ and $0.5$, we see the counterintuitive result that the coverage of the H-SMUCE intervals \textit{increases} for larger $\alpha$. As a final note, we point out that although MICH converges much faster than a Gibbs sampler implementing (\ref{eq:y-mich})-(\ref{eq:s-mich}) would, the time needed to fit MICH still lags behind all other models except for NSP. Computation time decreases as we relax the convergence criterion, and the sensitivity analysis in Appendix \ref{app:simulations} shows that setting $\epsilon=10^{-5}$ brings the time to fit MICH in line with the other methods without significantly harming performance. 

\subsection{Additional Experiments}

In Appendix \ref{app:simulations}, we modify Simulation \ref{sim:main} to allow for autocorrelation and heavy-tailed errors. Our results indicate that MICH is somewhat robust to violations of independence and Gaussianity, particularly in comparison to PELT and NOT. We also introduce Simulation \ref{sim:multi} to test the multivariate version of MICH described in Algorithm \ref{alg:mich-multi} and compare its performance to other methods designed to detect multivariate mean changes, including: E-Divisive \citep{James15}, the informative sparse projection method (Inspect; \citealp{Wang17}), and the Two-Way MOSUM ($\ell^2$-HD) method of \cite{Li23}. MICH is the clear winner in Simulation \ref{sim:multi}, outperforming the other methods across each setting considered. As per the discussion of Assumption \ref{assumption:mean}, MICH is effective at detecting changes in dense signals. Conversely, Inspect is intended for sparse signals, making it somewhat surprising that MICH is more effective than Inspect at localizing sparse changes in our study. On the other hand, $\ell^2$-HD is also designed for dense signals, but even in the setting where every coordinate of $\mathbf{y}_{1:T}$ changes, MICH is still more effective at identifying those changes. 
\section{Applications}

\label{sec:data}

\subsection{Ion Channel Gating}

Porins are $\beta$-barrel shaped proteins present in the outer membrane of Gram-negative bacteria, as well as the mitochondria and plastids of eukaryotic cells. These proteins form molecule-specific channels that allow ions and certain other small molecules to permeate the outer membrane of the cell via passive diffusion (\citealp{SCHIRMER1998101}). A loop structure inside the porin controls whether the channel is in an open or partially blockaded state. Antibiotics enter bacteria via porins, so understanding the permeability of these channels, including being able to detect when they are in an open or closed state, plays an important role in the development of new drugs (\citealp{Hermansen22}). We seek to detect the gating of the porin by applying MICH to a sequence of ionic current recordings that were collected by the Steinem lab (Institute of Organic and Biomolecular Chemistry, University of G\"ottingen) using the voltage clamp technique (\citealp{neher1978extracellular}). We restrict our analysis to the same subsequence of 32,511 readings originally analyzed by \cite{Cappello21} and subsample every 11\textsuperscript{th} observation to account for local dependencies introduced during the collection process for this data (\citealp{Hotz13,Pein17}), leaving $T =$ 2,956 ionic current readings. 

\begin{figure}[!h]
    \centering
    \includegraphics[scale=0.37]{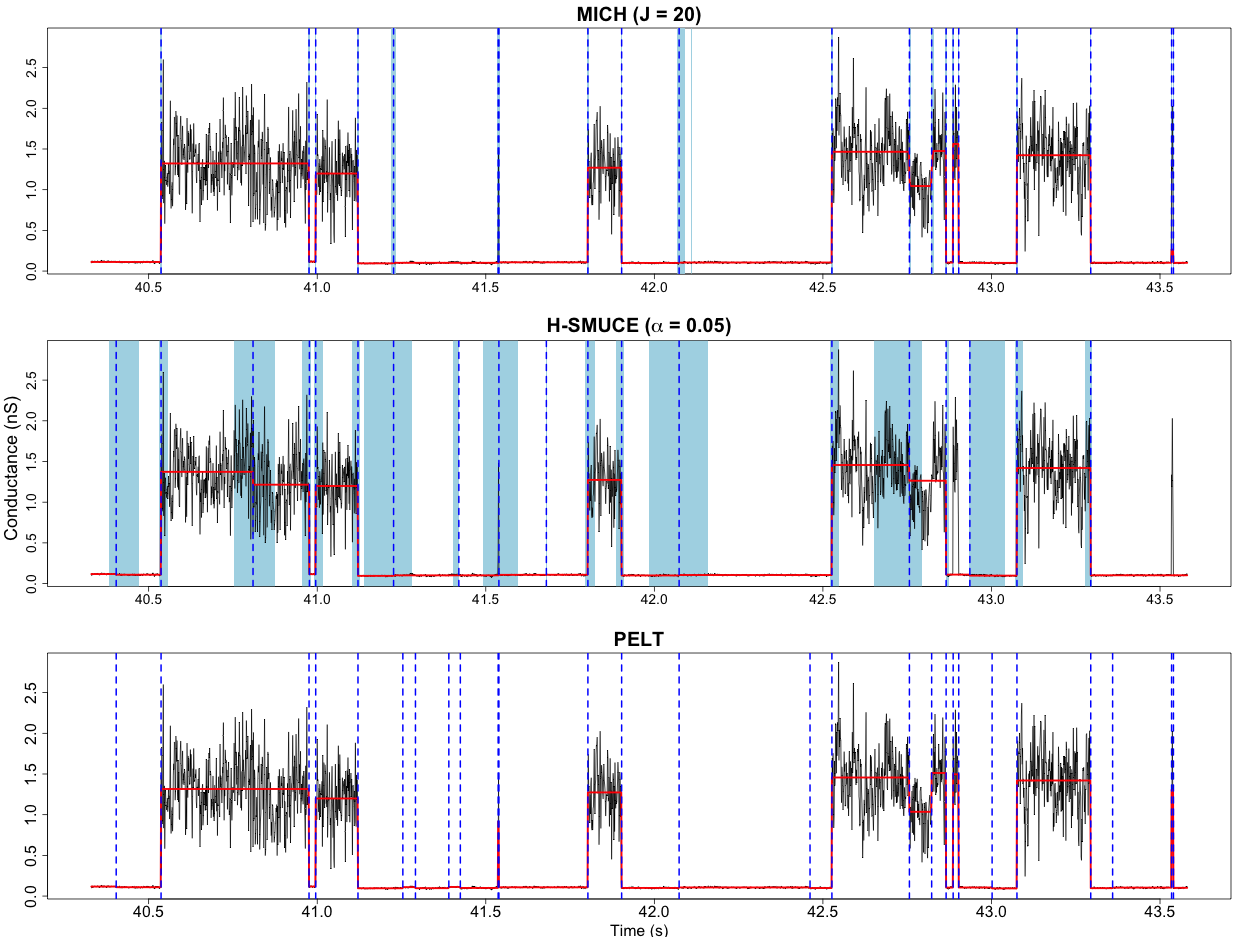}
    \caption{\small \textbf{Ion Channel Data}. Estimated mean signal (\textcolor{red}{\raisebox{0.5ex}{\rule{0.3cm}{1.5pt}}}) and change-points (\textcolor{blue}{\makebox[10pt][l]{\hdashrule[0.5ex]{10pt}{1.5pt}{4pt 1pt}}}) for 2,956 subsampled ionic current recordings (\textcolor{black}{\raisebox{0.5ex}{\rule{0.3cm}{1.5pt}}}) from a PorB porin collected by the Steinem lab. \textbf{Top}: MICH fit with $J = 20$ components and $\hat{J} = 20$ with 95\% credible sets shaded (\textcolor{cyan}{\raisebox{0.25ex}{\rule{0.3cm}{4pt}}}). \textbf{Middle}: H-SMUCE fit with $\hat{J} = 19$ and 95\% confidence intervals shaded (\textcolor{cyan}{\raisebox{0.25ex}{\rule{0.3cm}{4pt}}}). \textbf{Bottom}: PELT fit with $\hat{J} = 27$.}
    \label{fig:ion1}
\end{figure}

A key feature of this data set is that the readings exhibit open-channel noise, i.e. the variance increases significantly when the channel is open (see Figure \ref{fig:ion1}). To account for the simultaneous changes to $\boldsymbol{\mu}_{1:T}$ and $\boldsymbol{\lambda}_{1:T}$, we fit MICH with the restriction $L=K=0$, the non-informative prior $\omega_0=u_0=v_0=10^{-3}$, and use the ELBO to automatically select $J = 20$. Figure \ref{fig:ion1} displays the estimated mean signal and $\hat{J} = 20$ change-points with 95\% credible sets. MICH is able to detect each of the 16 visible gating events, including each of the three brief channel openings around seconds 41.54, 42.9, and 43.54. PELT and H-SMUCE are also capable of detecting simultaneous mean and variance changes. PELT estimates $\hat{J} = 27$ change-points, including all of the visible gating events, while H-SMUCE with $\alpha = 0.05$ estimates $\hat{J} = 19$ changes, but misses two of the short-lived changes. 

All three methods detect a number of potentially spurious changes, e.g. the when the channel is closed between seconds 42 and 42.5 and open around 42.7. \cite{Pein17} suggested that these detections indicate possible violations of the piece-wise constant assumption due to small holes in the cell membrane. We note that MICH detects about half as many of these changes as compared to PELT and H-SMUCE, and the uncertainty measures returned by MICH and H-SMUCE indicate that the models are least confident about the locations of these changes. In general, the credible sets returned by MICH are much smaller than the corresponding confidence intervals returned by H-SMUCE, as was the case in the simulation results of Section \ref{sec:simulations}. We also fit MICH with $J=K=0$ and $L=17$ (selected using the ELBO), as well as H-SMUCE with $\alpha = 0.5$ (see Figure \ref{fig:ion2}). Even at $\alpha = 0.5$, H-SMUCE still misses the channel openings around seconds 42.9 and 43.54. The performance of the $L=17$ MICH model is surprising in that it only misses the small spike near 41.54 despite not accounting for the noise heterogeneity, giving further credence to the robustness of Bayesian methods to misspecification. (\citealp{lai2011simple}).

\subsection{Shankle Oil Well Lithology}

Characterizing the lithology of the Earth's crust is a key task for resource extraction. When drilling for oil and gas, knowing the boundaries between lithofacies (stratified rock layers) and adjusting the pressure in the borehole accordingly can prevent catastrophic blow-outs (\citealp{FearnheadClifford03}), and a map of the well's structure can predict permeability and thus total gas production (\citealp{Bohling03}). We use MICH to detect changes in the facies of the B1-C formations of the Shankle well located in the Panoma oil and gas field of southwest Kansas. The data were collected by the Kansas Geological Survey by lowering a probe into a borehole and taking $T = 346$ recordings of $d=6$ petrophysical measures every half foot, including: average neutron and density porosity (AvgPhi), difference of neutron and density porosity (DeltaPhi), gamma ray emissions (GR), a marine/non-marine indicator (MnM), photoelectric factor (PE), and deep resistivity (Rt). A core sample taken from the well provides ground truth for the change-points between the facies in this case. We fit the multivariate version of MICH (see Section \ref{app:multi-mich}) with the non-informative prior $\omega_0 =0.001$ and again use the ELBO to automatically select $L = 42$. Figure \ref{fig:well_mich} shows the lithology of the well along with the estimated change-points and their 99\% credible sets. 

\begin{figure}[!h]
    \centering
    \includegraphics[scale=0.35]{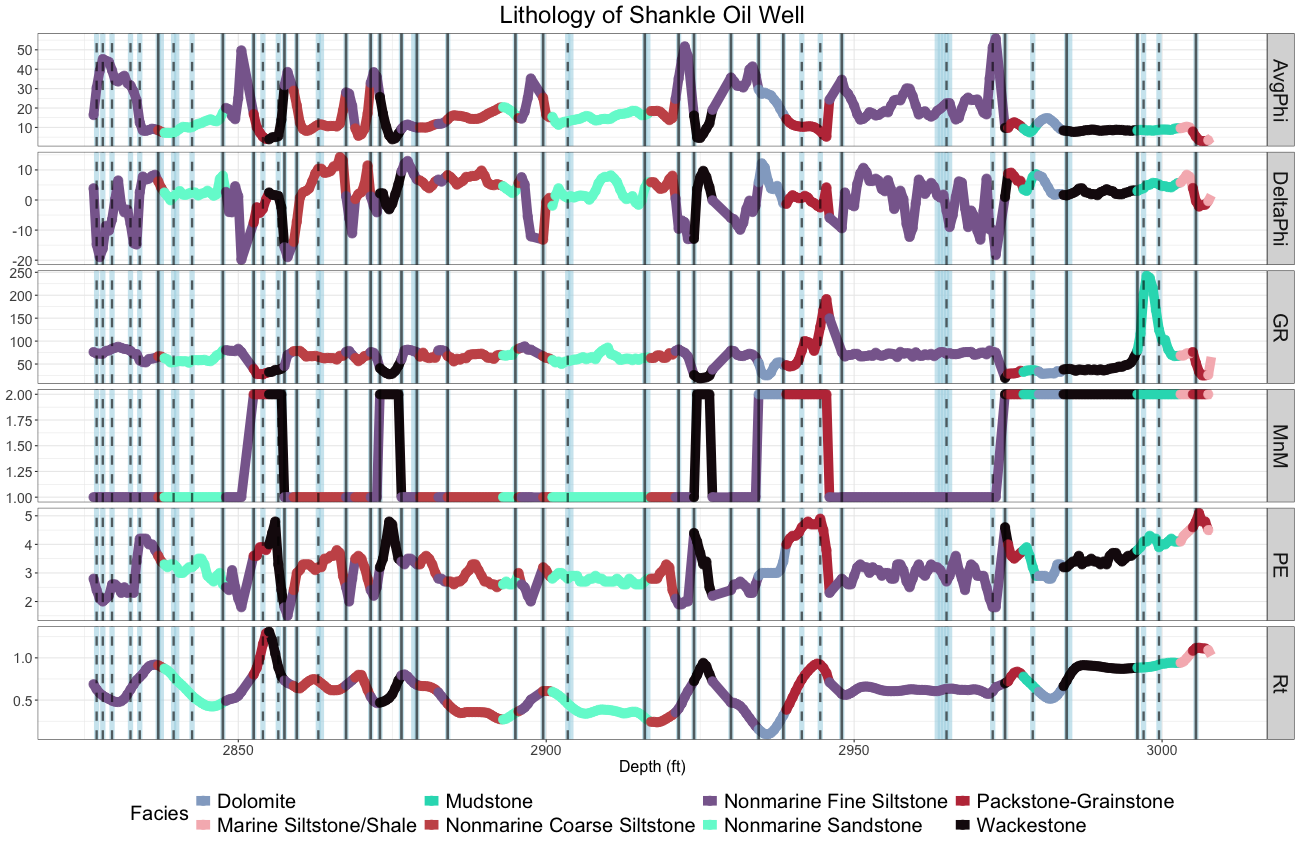}
    \caption{\small \textbf{Shankle Well Log.} Recordings from six petrophysical measures in the Shankle oil well. Color of segment indicates the corresponding lithofacie with changes in color indicating a true change-point. $\hat{L} =42$ estimated changes (\textcolor{black}{\makebox[10pt][l]{\hdashrule[0.5ex]{10pt}{1.5pt}{4pt 1pt}}}) with 99\% credible sets shaded (\textcolor{cyan}{\raisebox{0.25ex}{\rule{0.3cm}{4pt}}}). 26 of the estimated changes or their 99\% credible sets are within one index of a true change (\raisebox{0.5ex}{\rule{0.3cm}{1pt}}) and 24 of the true changes are covered by a credible set.}
    \label{fig:well_mich}
\end{figure}

Due to the many change-points in short succession, detecting changes in the lithology of the well is a particularly challenging task. Nonetheless, MICH is able to pick out most of the salient structural breaks in the data. Of the 36 true changes, 13 are covered by our 99\% credible sets and another 13 are missed only by a single index. Many of the false positives appear to stem from violations of the piece-wise constant assumption, e.g. MICH detects two changes in the final mudstone layer near 3,000 feet due to a trend in the GR measure. Similarly, MICH detects a change at 2,979 feet, which just misses the transition from mudstone to dolomite at 2,980 feet, likely due to the smoothness of the transition between these two layers. As a comparison, we apply the sparse projection method of \cite{Wang17} and the two-way MOSUM method of \cite{Li23} to the data. The results are displayed in Figures \ref{fig:well_inspect} and \ref{fig:well_l2}. Both methods perform considerably worse than MICH. The method of \cite{Wang17} detects $\hat{L} = 140$ change-points, perhaps due to the fact that this method is designed for sparse changes in the mean. Like MICH, the method of \cite{Li23} is also designed for dense changes. Despite this, their method lacks power in this application, detecting a single change-point that is also a false positive. 
\section{Discussion}
\label{sec:discussion}

Bayesian methods are not commonly used in change-point detection, as they are generally slower than competing approaches, less well understood from a theoretical perspective, and often produce posterior distributions that are challenging to summarize in terms of practical uncertainty measures. Our proposal addresses each of these limitations. First, our Bayesian SCP models possess theoretical properties that are optimal in minimax sense across a range of settings, including changes in the mean for high-dimensional $\mathbf{y}_{1:T}$, changes in the mean, variance, or both for univariate $\mathbf{y}_{1:T}$, and in the presence of dependence in $\mathbf{y}_{1:T}$. Second, these SCP models can be combined in a way that enables efficient inference for more complex models. Through empirical illustrations, we demonstrate that our method is highly competitive in terms of point estimation, while also producing credible sets that are narrower than existing alternatives--without compromising coverage.


Though we have focused on Gaussian data in this article, our general framework extends to other data types. As a proof of concept, when $y_t\:|\: \lambda_t \sim \text{Poisson}(\lambda_t)$, we can readily adapt MICH to detect changes in $\boldsymbol{\lambda}_{1:T}$ (see Appendix \ref{app:poisson}). The construction of Algorithm \ref{alg:mich-poisson} for this setting mirrors our approach for Gaussian scale parameters, providing evidence that MICH's rationale extends naturally to changes in both scale and location parameters across a broader class of models. Our theoretical results did not require an assumption of Gaussian data, further supporting the generality of the approach. 

MICH also shows promise for the on-line change-point detection setting. For example, after fitting MICH to $\mathbf{y}_{1:T}$ with $L$ mean changes, we use this fit to initialize an $L+1$ component model for $\mathbf{y}_{1:(T+1)}$, continuing in this manner until a new change-point is detected with the $L+1$ component. To control for false discoveries in this setting, we can adapt the approach to constructing credible sets described in \cite{Spector24}. This would allow us to apply MICH problems like detecting the occurrence deforestation events (\citealp{Wendelberger21}) or solar flares (\citealp{Xu21}) using satellite imagery. We leave these and other questions for future research. 


\setlength{\bibsep}{5pt}
\bibliographystyle{chicago}
\bibliography{MICH/references}

\begin{thebibliography}{}

\bibitem[\protect\citeauthoryear{Auger and Lawrence}{Auger and
  Lawrence}{1989}]{Auger89}
Auger, I.~E. and C.~E. Lawrence (1989).
\newblock Algorithms for the optimal identification of segment neighborhoods.
\newblock {\em Bulletin of Mathematical Biology\/}~{\em 51\/}(1), 39--54.

\bibitem[\protect\citeauthoryear{Bai}{Bai}{2010}]{Bai10}
Bai, J. (2010).
\newblock Common breaks in means and variances for panel data.
\newblock {\em Journal of Econometrics\/}~{\em 157\/}(1), 78--92.
\newblock Nonlinear and Nonparametric Methods in Econometrics.

\bibitem[\protect\citeauthoryear{Bai and Perron}{Bai and Perron}{2003}]{Bai03}
Bai, J. and P.~Perron (2003).
\newblock Computation and analysis of multiple structural change models.
\newblock {\em Journal of Applied Econometrics\/}~{\em 18\/}(1), 1--22.

\bibitem[\protect\citeauthoryear{Baranowski, Chen, and Fryzlewicz}{Baranowski
  et~al.}{2019}]{Baranowski19}
Baranowski, R., Y.~Chen, and P.~Fryzlewicz (2019).
\newblock Narrowest-over-threshold detection of multiple change points and
  change-point-like features.
\newblock {\em Journal of the Royal Statistical Society Series B: Statistical
  Methodology\/}~{\em 81\/}(3), 649--672.

\bibitem[\protect\citeauthoryear{Barry and Hartigan}{Barry and
  Hartigan}{1992}]{Barry92}
Barry, D. and J.~A. Hartigan (1992).
\newblock {Product Partition Models for Change Point Problems}.
\newblock {\em The Annals of Statistics\/}~{\em 20\/}(1), 260 -- 279.

\bibitem[\protect\citeauthoryear{Barry and Hartigan}{Barry and
  Hartigan}{1993}]{Barry93}
Barry, D. and J.~A. Hartigan (1993).
\newblock A bayesian analysis for change point problems.
\newblock {\em Journal of the American Statistical Association\/}~{\em
  88\/}(421), 309--319.

\bibitem[\protect\citeauthoryear{Bertsekas}{Bertsekas}{1997}]{Bertsekas97}
Bertsekas, D.~P. (1997).
\newblock Nonlinear programming.
\newblock {\em The Journal of the Operational Research Society\/}~{\em
  48\/}(3), 334--334.

\bibitem[\protect\citeauthoryear{Blei, Kucukelbir, and McAuliffe}{Blei
  et~al.}{2017}]{Blei17}
Blei, D.~M., A.~Kucukelbir, and J.~D. McAuliffe (2017).
\newblock Variational inference: A review for statisticians.
\newblock {\em Journal of the American Statistical Association\/}~{\em
  112\/}(518), 859--877.

\bibitem[\protect\citeauthoryear{Bohling and Dubois}{Bohling and
  Dubois}{2003}]{Bohling03}
Bohling, G. and M.~Dubois (2003).
\newblock An integrated application of neural network and markov chain
  techniques to the prediction of lithofacies from well logs: Kansas geological
  survey open-file report 2003-50, 6 p.
\newblock {\em Group\/}~{\em 6}.

\bibitem[\protect\citeauthoryear{Breiman and Friedman}{Breiman and
  Friedman}{1985}]{Breiman85}
Breiman, L. and J.~H. Friedman (1985).
\newblock Estimating optimal transformations for multiple regression and
  correlation.
\newblock {\em Journal of the American statistical Association\/}~{\em
  80\/}(391), 580--598.

\bibitem[\protect\citeauthoryear{Cappello and Padilla}{Cappello and
  Padilla}{2025}]{Cappello25}
Cappello, L. and O.~H.~M. Padilla (2025).
\newblock Bayesian variance change point detection with credible sets.
\newblock {\em IEEE Transactions on Pattern Analysis and Machine
  Intelligence\/}.

\bibitem[\protect\citeauthoryear{Cappello, Padilla, and Palacios}{Cappello
  et~al.}{2021}]{Cappello21}
Cappello, L., O.~H.~M. Padilla, and J.~A. Palacios (2021).
\newblock Scalable bayesian change point detection with spike and slab priors.
\newblock {\em arXiv preprint arXiv:2106.10383\/}.

\bibitem[\protect\citeauthoryear{Chen, Wang, and Wu}{Chen
  et~al.}{2022}]{Chen22}
Chen, L., W.~Wang, and W.~B. Wu (2022, October).
\newblock {Inference of Breakpoints in High-dimensional Time Series}.
\newblock {\em Journal of the American Statistical Association\/}~{\em
  117\/}(540), 1951--1963.

\bibitem[\protect\citeauthoryear{Chen, Shah, and Samworth}{Chen
  et~al.}{2014}]{chen2014discussion}
Chen, Y., R.~Shah, and R.~Samworth (2014).
\newblock Discussion of ‘multiscale change point inference’by frick, munk
  and sieling.
\newblock {\em Journal of the Royal Statistical Society: Series B\/}~{\em 76},
  544--546.

\bibitem[\protect\citeauthoryear{Chernoff and Zacks}{Chernoff and
  Zacks}{1964}]{chernoff1964estimating}
Chernoff, H. and S.~Zacks (1964).
\newblock Estimating the current mean of a normal distribution which is
  subjected to changes in time.
\newblock {\em The Annals of Mathematical Statistics\/}~{\em 35\/}(3),
  999--1018.

\bibitem[\protect\citeauthoryear{Chib}{Chib}{1998}]{Chib98}
Chib, S. (1998).
\newblock Estimation and comparison of multiple change-point models.
\newblock {\em Journal of Econometrics\/}~{\em 86\/}(2), 221--241.

\bibitem[\protect\citeauthoryear{Cho and Fryzlewicz}{Cho and
  Fryzlewicz}{2014}]{Cho14}
Cho, H. and P.~Fryzlewicz (2014, 07).
\newblock Multiple-change-point detection for high dimensional time series via
  sparsified binary segmentation.
\newblock {\em Journal of the Royal Statistical Society Series B: Statistical
  Methodology\/}~{\em 77\/}(2), 475--507.

\bibitem[\protect\citeauthoryear{Cho and Kirch}{Cho and Kirch}{2022}]{Cho22}
Cho, H. and C.~Kirch (2022, Aug).
\newblock Two-stage data segmentation permitting multiscale change points,
  heavy tails and dependence.
\newblock {\em Annals of the Institute of Statistical Mathematics\/}~{\em
  74\/}(4), 653--684.

\bibitem[\protect\citeauthoryear{Dempster, Laird, and Rubin}{Dempster
  et~al.}{1977}]{Dempster77}
Dempster, A.~P., N.~M. Laird, and D.~B. Rubin (1977).
\newblock Maximum likelihood from incomplete data via the em algorithm.
\newblock {\em Journal of the Royal Statistical Society. Series B
  (Methodological)\/}~{\em 39\/}(1), 1--38.

\bibitem[\protect\citeauthoryear{Dette, Eckle, and Vetter}{Dette
  et~al.}{2020}]{Dette20}
Dette, H., T.~Eckle, and M.~Vetter (2020).
\newblock Multiscale change point detection for dependent data.
\newblock {\em Scandinavian Journal of Statistics\/}~{\em 47\/}(4), 1243--1274.

\bibitem[\protect\citeauthoryear{Duy, Toda, Sugiyama, and Takeuchi}{Duy
  et~al.}{2020}]{Duy20}
Duy, V. N.~L., H.~Toda, R.~Sugiyama, and I.~Takeuchi (2020).
\newblock Computing valid p-value for optimal changepoint by selective
  inference using dynamic programming.
\newblock In H.~Larochelle, M.~Ranzato, R.~Hadsell, M.~Balcan, and H.~Lin
  (Eds.), {\em Advances in Neural Information Processing Systems}, Volume~33,
  pp.\  11356--11367. Curran Associates, Inc.

\bibitem[\protect\citeauthoryear{Eichinger and Kirch}{Eichinger and
  Kirch}{2018}]{Eichinger18}
Eichinger, B. and C.~Kirch (2018).
\newblock {A MOSUM procedure for the estimation of multiple random change
  points}.
\newblock {\em Bernoulli\/}~{\em 24\/}(1), 526 -- 564.

\bibitem[\protect\citeauthoryear{Enikeeva and Harchaoui}{Enikeeva and
  Harchaoui}{2019}]{Enikeeva19}
Enikeeva, F. and Z.~Harchaoui (2019).
\newblock {High-dimensional change-point detection under sparse alternatives}.
\newblock {\em The Annals of Statistics\/}~{\em 47\/}(4), 2051 -- 2079.

\bibitem[\protect\citeauthoryear{Fan and Mackey}{Fan and Mackey}{2017}]{Fan17}
Fan, Z. and L.~Mackey (2017).
\newblock {Empirical Bayesian analysis of simultaneous changepoints in multiple
  data sequences}.
\newblock {\em The Annals of Applied Statistics\/}~{\em 11\/}(4), 2200 -- 2221.

\bibitem[\protect\citeauthoryear{Fang, Li, and Siegmund}{Fang
  et~al.}{2020}]{Fang20}
Fang, X., J.~Li, and D.~Siegmund (2020).
\newblock {Segmentation and estimation of change-point models: False positive
  control and confidence regions}.
\newblock {\em The Annals of Statistics\/}~{\em 48\/}(3), 1615 -- 1647.

\bibitem[\protect\citeauthoryear{Fang and Siegmund}{Fang and
  Siegmund}{2021}]{fang2021detectionestimationlocalsignals}
Fang, X. and D.~Siegmund (2021).
\newblock Detection and estimation of local signals.

\bibitem[\protect\citeauthoryear{Fearnhead}{Fearnhead}{2006}]{Fearnhead06}
Fearnhead, P. (2006).
\newblock Exact and efficient bayesian inference for multiple changepoint
  problems.
\newblock {\em Statistics and computing\/}~{\em 16}, 203--213.

\bibitem[\protect\citeauthoryear{Fearnhead and Clifford}{Fearnhead and
  Clifford}{2003}]{FearnheadClifford03}
Fearnhead, P. and P.~Clifford (2003, 10).
\newblock On-line inference for hidden markov models via particle filters.
\newblock {\em Journal of the Royal Statistical Society Series B: Statistical
  Methodology\/}~{\em 65\/}(4), 887--899.

\bibitem[\protect\citeauthoryear{Frick, Munk, and Sieling}{Frick
  et~al.}{2014}]{Frick14}
Frick, K., A.~Munk, and H.~Sieling (2014).
\newblock Multiscale change point inference.
\newblock {\em Journal of the Royal Statistical Society Series B: Statistical
  Methodology\/}~{\em 76\/}(3), 495--580.

\bibitem[\protect\citeauthoryear{Friedman and Stuetzle}{Friedman and
  Stuetzle}{1981}]{Friedman81}
Friedman, J.~H. and W.~Stuetzle (1981).
\newblock Projection pursuit regression.
\newblock {\em Journal of the American statistical Association\/}~{\em
  76\/}(376), 817--823.

\bibitem[\protect\citeauthoryear{Fryzlewicz}{Fryzlewicz}{2014}]{Fryzlewicz14}
Fryzlewicz, P. (2014).
\newblock {Wild binary segmentation for multiple change-point detection}.
\newblock {\em The Annals of Statistics\/}~{\em 42\/}(6), 2243 -- 2281.

\bibitem[\protect\citeauthoryear{Fryzlewicz}{Fryzlewicz}{2024a}]{Fryzlewicz24}
Fryzlewicz, P. (2024a).
\newblock Narrowest significance pursuit: Inference for multiple change-points
  in linear models.
\newblock {\em Journal of the American Statistical Association\/}~{\em
  119\/}(546), 1633--1646.

\bibitem[\protect\citeauthoryear{Fryzlewicz}{Fryzlewicz}{2024b}]{Fryzlewicz24median}
Fryzlewicz, P. (2024b).
\newblock Robust narrowest significance pursuit: Inference for multiple
  change-points in the median.
\newblock {\em Journal of Business \& Economic Statistics\/}~{\em 42\/}(4),
  1389--1402.

\bibitem[\protect\citeauthoryear{Futschik, Hotz, Munk, and Sieling}{Futschik
  et~al.}{2014}]{Futschik14}
Futschik, A., T.~Hotz, A.~Munk, and H.~Sieling (2014, 04).
\newblock Multiscale dna partitioning: statistical evidence for segments.
\newblock {\em Bioinformatics\/}~{\em 30\/}(16), 2255--2262.

\bibitem[\protect\citeauthoryear{Gao, Shang, Du, and Robertson}{Gao
  et~al.}{2019}]{Gao19}
Gao, Z., Z.~Shang, P.~Du, and J.~L. Robertson (2019).
\newblock Variance change point detection under a smoothly-changing mean trend
  with application to liver procurement.
\newblock {\em Journal of the American Statistical Association\/}~{\em
  114\/}(526), 773--781.

\bibitem[\protect\citeauthoryear{Geman and Geman}{Geman and
  Geman}{1984}]{Geman84}
Geman, S. and D.~Geman (1984).
\newblock Stochastic relaxation, gibbs distributions, and the bayesian
  restoration of images.
\newblock {\em IEEE Transactions on Pattern Analysis and Machine
  Intelligence\/}~{\em PAMI-6\/}(6), 721--741.

\bibitem[\protect\citeauthoryear{Harlé, Chatelain, Gouy-Pailler, and
  Achard}{Harlé et~al.}{2016}]{Harle16}
Harlé, F., F.~Chatelain, C.~Gouy-Pailler, and S.~Achard (2016).
\newblock Bayesian model for multiple change-points detection in multivariate
  time series.
\newblock {\em IEEE Transactions on Signal Processing\/}~{\em 64\/}(16),
  4351--4362.

\bibitem[\protect\citeauthoryear{Hermansen, Linke, and Leo}{Hermansen
  et~al.}{2022}]{Hermansen22}
Hermansen, S., D.~Linke, and J.~C. Leo (2022).
\newblock Chapter four - transmembrane $\beta$-barrel proteins of bacteria:
  From structure to function.
\newblock In R.~Donev (Ed.), {\em Membrane Proteins}, Volume 128 of {\em
  Advances in Protein Chemistry and Structural Biology}, pp.\  113--161.
  Academic Press.

\bibitem[\protect\citeauthoryear{Heskes, Zoeter, and Wiegerinck}{Heskes
  et~al.}{2003}]{Heskes03}
Heskes, T., O.~Zoeter, and W.~Wiegerinck (2003).
\newblock Approximate expectation maximization.
\newblock In S.~Thrun, L.~Saul, and B.~Sch\"{o}lkopf (Eds.), {\em Advances in
  Neural Information Processing Systems}, Volume~16. MIT Press.

\bibitem[\protect\citeauthoryear{Horváth and Hušková}{Horváth and
  Hušková}{2012}]{Horvath12}
Horváth, L. and M.~Hušková (2012).
\newblock Change-point detection in panel data.
\newblock {\em Journal of Time Series Analysis\/}~{\em 33\/}(4), 631--648.

\bibitem[\protect\citeauthoryear{Horváth, Hušková, Rice, and Wang}{Horváth
  et~al.}{2017}]{Horvath17}
Horváth, L., M.~Hušková, G.~Rice, and J.~Wang (2017, April).
\newblock {Asymptotic Properties Of The Cusum Estimator For The Time Of Change
  In Linear Panel Data Models}.
\newblock {\em Econometric Theory\/}~{\em 33\/}(2), 366--412.

\bibitem[\protect\citeauthoryear{Hotz, Schütte, Sieling, Polupanow,
  Diederichsen, Steinem, and Munk}{Hotz et~al.}{2013}]{Hotz13}
Hotz, T., O.~M. Schütte, H.~Sieling, T.~Polupanow, U.~Diederichsen,
  C.~Steinem, and A.~Munk (2013).
\newblock Idealizing ion channel recordings by a jump segmentation
  multiresolution filter.
\newblock {\em IEEE Transactions on NanoBioscience\/}~{\em 12\/}(4), 376--386.

\bibitem[\protect\citeauthoryear{Hyun, Lin, G'Sell, and Tibshirani}{Hyun
  et~al.}{2021}]{Hyun21}
Hyun, S., K.~Z. Lin, M.~G'Sell, and R.~J. Tibshirani (2021, September).
\newblock Post-selection inference for changepoint detection algorithms with
  application to copy number variation data.
\newblock {\em Biometrics\/}~{\em 77\/}(3), 1037--1049.

\bibitem[\protect\citeauthoryear{James and Matteson}{James and
  Matteson}{2015}]{James15}
James, N.~A. and D.~S. Matteson (2015).
\newblock ecp: An r package for nonparametric multiple change point analysis of
  multivariate data.
\newblock {\em Journal of Statistical Software\/}~{\em 62\/}(7), 1–25.

\bibitem[\protect\citeauthoryear{Jewell, Fearnhead, and Witten}{Jewell
  et~al.}{2022}]{Jewell22}
Jewell, S., P.~Fearnhead, and D.~Witten (2022).
\newblock Testing for a change in mean after changepoint detection.
\newblock {\em Journal of the Royal Statistical Society Series B: Statistical
  Methodology\/}~{\em 84\/}(4), 1082--1104.

\bibitem[\protect\citeauthoryear{Jirak}{Jirak}{2015}]{Jirak15}
Jirak, M. (2015).
\newblock {Uniform change point tests in high dimension}.
\newblock {\em The Annals of Statistics\/}~{\em 43\/}(6), 2451 -- 2483.

\bibitem[\protect\citeauthoryear{Jordan, Ghahramani, Jaakkola, and Saul}{Jordan
  et~al.}{1999}]{Jordan99}
Jordan, M.~I., Z.~Ghahramani, T.~S. Jaakkola, and L.~K. Saul (1999).
\newblock An introduction to variational methods for graphical models.
\newblock {\em Machine learning\/}~{\em 37}, 183--233.

\bibitem[\protect\citeauthoryear{Killick, Fearnhead, and Eckley}{Killick
  et~al.}{2012}]{Killick12}
Killick, R., P.~Fearnhead, and I.~A. Eckley (2012).
\newblock Optimal detection of changepoints with a linear computational cost.
\newblock {\em Journal of the American Statistical Association\/}~{\em
  107\/}(500), 1590--1598.

\bibitem[\protect\citeauthoryear{Kim, Lee, and Lin}{Kim et~al.}{2024}]{Kim24}
Kim, J., K.~Lee, and L.~Lin (2024).
\newblock Bayesian optimal change point detection in high-dimensions.

\bibitem[\protect\citeauthoryear{Kovács, Bühlmann, Li, and Munk}{Kovács
  et~al.}{2022}]{Kovacs22}
Kovács, S., P.~Bühlmann, H.~Li, and A.~Munk (2022, 10).
\newblock Seeded binary segmentation: a general methodology for fast and
  optimal changepoint detection.
\newblock {\em Biometrika\/}~{\em 110\/}(1), 249--256.

\bibitem[\protect\citeauthoryear{Lai and Xing}{Lai and
  Xing}{2011}]{lai2011simple}
Lai, T.~L. and H.~Xing (2011).
\newblock A simple {Bayesian} approach to multiple change-points.
\newblock {\em Statistica Sinica\/}, 539--569.

\bibitem[\protect\citeauthoryear{Li, Chen, Wang, and Wu}{Li
  et~al.}{2023}]{Li23}
Li, J., L.~Chen, W.~Wang, and W.~B. Wu (2023).
\newblock $\ell^2$ inference for change points in high-dimensional time series
  via a two-way mosum.

\bibitem[\protect\citeauthoryear{Liu, Martin, and Shen}{Liu
  et~al.}{2017}]{Liu17}
Liu, C., R.~Martin, and W.~Shen (2017).
\newblock Empirical priors and posterior concentration in a piecewise
  polynomial sequence model.
\newblock {\em arXiv preprint arXiv:1712.03848\/}.

\bibitem[\protect\citeauthoryear{Madrid~Padilla, Xu, Wang, Madrid~Padilla, and
  Yu}{Madrid~Padilla et~al.}{2023}]{madrid2023change}
Madrid~Padilla, C.~M., H.~Xu, D.~Wang, O.~H. Madrid~Padilla, and Y.~Yu (2023).
\newblock Change point detection and inference in multivariate non-parametric
  models under mixing conditions.
\newblock {\em Advances in Neural Information Processing Systems\/}~{\em 36},
  21081--21134.

\bibitem[\protect\citeauthoryear{Madrid~Padilla, Athey, Reinhart, and
  Scott}{Madrid~Padilla et~al.}{2019}]{madrid2019sequential}
Madrid~Padilla, O.~H., A.~Athey, A.~Reinhart, and J.~G. Scott (2019).
\newblock Sequential nonparametric tests for a change in distribution: an
  application to detecting radiological anomalies.
\newblock {\em Journal of the American Statistical Association\/}~{\em
  114\/}(526), 514--528.

\bibitem[\protect\citeauthoryear{Meier, Kirch, and Cho}{Meier
  et~al.}{2021}]{Meier21}
Meier, A., C.~Kirch, and H.~Cho (2021).
\newblock mosum: A package for moving sums in change-point analysis.
\newblock {\em Journal of Statistical Software\/}~{\em 97\/}(8), 1–42.

\bibitem[\protect\citeauthoryear{Muggeo and Adelfio}{Muggeo and
  Adelfio}{2011}]{Muggeo11}
Muggeo, V.~M. and G.~Adelfio (2011).
\newblock Efficient change point detection for genomic sequences of continuous
  measurements.
\newblock {\em Bioinformatics\/}~{\em 27\/}(2), 161--166.

\bibitem[\protect\citeauthoryear{Nam, Aston, and Johansen}{Nam
  et~al.}{2012}]{Nam12}
Nam, C. F.~H., J.~A.~D. Aston, and A.~M. Johansen (2012).
\newblock Quantifying the uncertainty in change points.
\newblock {\em Journal of Time Series Analysis\/}~{\em 33\/}(5), 807--823.

\bibitem[\protect\citeauthoryear{Namoano, Starr, Emmanouilidis, and
  Cristobal}{Namoano et~al.}{2019}]{Namoano19}
Namoano, B., A.~Starr, C.~Emmanouilidis, and R.~C. Cristobal (2019).
\newblock Online change detection techniques in time series: An overview.
\newblock In {\em 2019 IEEE International Conference on Prognostics and Health
  Management (ICPHM)}, pp.\  1--10.

\bibitem[\protect\citeauthoryear{Neal and Hinton}{Neal and
  Hinton}{1998}]{Neal98}
Neal, R.~M. and G.~E. Hinton (1998).
\newblock A view of the em algorithm that justifies incremental, sparse, and
  other variants.
\newblock In {\em Learning in Graphical Models}.

\bibitem[\protect\citeauthoryear{Neher, Sakmann, and Steinbach}{Neher
  et~al.}{1978}]{neher1978extracellular}
Neher, E., B.~Sakmann, and J.~H. Steinbach (1978).
\newblock The extracellular patch clamp: a method for resolving currents
  through individual open channels in biological membranes.
\newblock {\em Pfl{\"u}gers Archiv\/}~{\em 375}, 219--228.

\bibitem[\protect\citeauthoryear{Olshen, Venkatraman, Lucito, and
  Wigler}{Olshen et~al.}{2004}]{Olshen04}
Olshen, A.~B., E.~S. Venkatraman, R.~Lucito, and M.~Wigler (2004).
\newblock Circular binary segmentation for the analysis of array-based dna copy
  number data.
\newblock {\em Biostatistics\/}~{\em 5\/}(4), 557--572.

\bibitem[\protect\citeauthoryear{Padilla, Xu, Wang, Padilla, and Yu}{Padilla
  et~al.}{2023}]{Padilla23}
Padilla, C. M.~M., H.~Xu, D.~Wang, O.~H.~M. Padilla, and Y.~Yu (2023).
\newblock Change point detection and inference in multivariable nonparametric
  models under mixing conditions.

\bibitem[\protect\citeauthoryear{Padilla}{Padilla}{2024}]{Padilla22}
Padilla, O. H.~M. (2024).
\newblock Variance estimation in graphs with the fused lasso.
\newblock {\em Journal of Machine Learning Research\/}~{\em 25\/}(250), 1--45.

\bibitem[\protect\citeauthoryear{Page}{Page}{1954}]{Page54}
Page, E.~S. (1954).
\newblock Continuous inspection schemes.
\newblock {\em Biometrika\/}~{\em 41\/}(1/2), 100--115.

\bibitem[\protect\citeauthoryear{Pein, Sieling, and Munk}{Pein
  et~al.}{2017}]{Pein17}
Pein, F., H.~Sieling, and A.~Munk (2017).
\newblock Heterogeneous change point inference.
\newblock {\em Journal of the Royal Statistical Society Series B: Statistical
  Methodology\/}~{\em 79\/}(4), 1207--1227.

\bibitem[\protect\citeauthoryear{Preuss, Puchstein, and Dette}{Preuss
  et~al.}{2015}]{Preuss15}
Preuss, P., R.~Puchstein, and H.~Dette (2015).
\newblock Detection of multiple structural breaks in multivariate time series.
\newblock {\em Journal of the American Statistical Association\/}~{\em
  110\/}(510), 654--668.

\bibitem[\protect\citeauthoryear{Santini and Malaguti}{Santini and
  Malaguti}{2024}]{Santini24}
Santini, A. and E.~Malaguti (2024).
\newblock The min-knapsack problem with compactness constraints and
  applications in statistics.
\newblock {\em European Journal of Operational Research\/}~{\em 312\/}(1),
  385--397.

\bibitem[\protect\citeauthoryear{Schirmer}{Schirmer}{1998}]{SCHIRMER1998101}
Schirmer, T. (1998).
\newblock General and specific porins from bacterial outer membranes.
\newblock {\em Journal of Structural Biology\/}~{\em 121\/}(2), 101--109.

\bibitem[\protect\citeauthoryear{Scott and Knott}{Scott and
  Knott}{1974}]{Scott74}
Scott, A.~J. and M.~Knott (1974).
\newblock A cluster analysis method for grouping means in the analysis of
  variance.
\newblock {\em Biometrics\/}~{\em 30\/}(3), 507--512.

\bibitem[\protect\citeauthoryear{Sen and Srivastava}{Sen and
  Srivastava}{1975}]{Sen75}
Sen, A. and M.~S. Srivastava (1975).
\newblock On tests for detecting change in mean.
\newblock {\em The Annals of Statistics\/}~{\em 3\/}(1), 98--108.

\bibitem[\protect\citeauthoryear{Siegmund}{Siegmund}{1986}]{Siegmund86}
Siegmund, D. (1986).
\newblock Boundary crossing probabilities and statistical applications.
\newblock {\em The Annals of Statistics\/}, 361--404.

\bibitem[\protect\citeauthoryear{Smith}{Smith}{1975}]{Smith75}
Smith, A. F.~M. (1975, 08).
\newblock {A Bayesian approach to inference about a change-point in a sequence
  of random variables}.
\newblock {\em Biometrika\/}~{\em 62\/}(2), 407--416.

\bibitem[\protect\citeauthoryear{Smith and Roberts}{Smith and
  Roberts}{1993}]{Smith93}
Smith, A. F.~M. and G.~O. Roberts (1993).
\newblock Bayesian computation via the gibbs sampler and related markov chain
  monte carlo methods.
\newblock {\em Journal of the Royal Statistical Society. Series B
  (Methodological)\/}~{\em 55\/}(1), 3--23.

\bibitem[\protect\citeauthoryear{Spector and Janson}{Spector and
  Janson}{2024}]{Spector24}
Spector, A. and L.~Janson (2024).
\newblock Controlled discovery and localization of signals via bayesian linear
  programming.
\newblock {\em Journal of the American Statistical Association\/}~{\em
  0\/}(ja), 1--21.

\bibitem[\protect\citeauthoryear{Vershynin}{Vershynin}{2018}]{Vershynin18}
Vershynin, R. (2018).
\newblock {\em High-Dimensional Probability: An Introduction with Applications
  in Data Science}.
\newblock Cambridge Series in Statistical and Probabilistic Mathematics.
  Cambridge University Press.

\bibitem[\protect\citeauthoryear{Vostrikova}{Vostrikova}{1981}]{Vostrikova81}
Vostrikova, L.~Y. (1981).
\newblock Detecting “disorder” in multidimensional random processes.
\newblock In {\em Doklady akademii nauk}, Volume 259, pp.\  270--274. Russian
  Academy of Sciences.

\bibitem[\protect\citeauthoryear{Wainwright}{Wainwright}{2019}]{Wainwright19}
Wainwright, M.~J. (2019).
\newblock {\em High-Dimensional Statistics: A Non-Asymptotic Viewpoint}.
\newblock Cambridge Series in Statistical and Probabilistic Mathematics.
  Cambridge University Press.

\bibitem[\protect\citeauthoryear{Wang, Yu, and Rinaldo}{Wang
  et~al.}{2020}]{Wang2020_localization}
Wang, D., Y.~Yu, and A.~Rinaldo (2020).
\newblock {Univariate mean change point detection: Penalization, CUSUM and
  optimality}.
\newblock {\em Electronic Journal of Statistics\/}~{\em 14\/}(1), 1917 -- 1961.

\bibitem[\protect\citeauthoryear{Wang, Yu, and Rinaldo}{Wang
  et~al.}{2021}]{Wang21}
Wang, D., Y.~Yu, and A.~Rinaldo (2021).
\newblock {Optimal covariance change point localization in high dimensions}.
\newblock {\em Bernoulli\/}~{\em 27\/}(1), 554 -- 575.

\bibitem[\protect\citeauthoryear{Wang, Sarkar, Carbonetto, and Stephens}{Wang
  et~al.}{2020}]{Wang20}
Wang, G., A.~Sarkar, P.~Carbonetto, and M.~Stephens (2020, 07).
\newblock {A Simple New Approach to Variable Selection in Regression, with
  Application to Genetic Fine Mapping}.
\newblock {\em Journal of the Royal Statistical Society Series B: Statistical
  Methodology\/}~{\em 82\/}(5), 1273--1300.

\bibitem[\protect\citeauthoryear{Wang and Samworth}{Wang and
  Samworth}{2017}]{Wang17}
Wang, T. and R.~J. Samworth (2017, 08).
\newblock {High Dimensional Change Point Estimation via Sparse Projection}.
\newblock {\em Journal of the Royal Statistical Society Series B: Statistical
  Methodology\/}~{\em 80\/}(1), 57--83.

\bibitem[\protect\citeauthoryear{Wendelberger, Gray, Reich, and
  Wilson}{Wendelberger et~al.}{2021}]{Wendelberger21}
Wendelberger, L.~J., J.~M. Gray, B.~J. Reich, and A.~G. Wilson (2021).
\newblock Monitoring deforestation using multivariate bayesian online
  changepoint detection with outliers.

\bibitem[\protect\citeauthoryear{Whittaker and Watson}{Whittaker and
  Watson}{1996}]{Whittaker96}
Whittaker, E.~T. and G.~N. Watson (1996).
\newblock {\em A Course of Modern Analysis\/} (4 ed.).
\newblock Cambridge Mathematical Library. Cambridge University Press.

\bibitem[\protect\citeauthoryear{Worsley}{Worsley}{1986}]{Worsley86}
Worsley, K.~J. (1986).
\newblock Confidence regions and tests for a change-point in a sequence of
  exponential family random variables.
\newblock {\em Biometrika\/}~{\em 73\/}(1), 91--104.

\bibitem[\protect\citeauthoryear{Xu, Günther, Kashyap, Lee, and Zezas}{Xu
  et~al.}{2021}]{Xu21}
Xu, C., H.~M. Günther, V.~L. Kashyap, T.~C.~M. Lee, and A.~Zezas (2021, mar).
\newblock Change-point detection and image segmentation for time series of
  astrophysical images.
\newblock {\em The Astronomical Journal\/}~{\em 161\/}(4), 184.

\bibitem[\protect\citeauthoryear{Yu and Chen}{Yu and Chen}{2020}]{Yu20}
Yu, M. and X.~Chen (2020, 12).
\newblock Finite sample change point inference and identification for
  high-dimensional mean vectors.
\newblock {\em Journal of the Royal Statistical Society Series B: Statistical
  Methodology\/}~{\em 83\/}(2), 247--270.

\bibitem[\protect\citeauthoryear{Yu, Padilla, Wang, and Rinaldo}{Yu
  et~al.}{2023}]{Yu02102023}
Yu, Y., O.~H.~M. Padilla, D.~Wang, and A.~Rinaldo (2023).
\newblock A note on online change point detection.
\newblock {\em Sequential Analysis\/}~{\em 42\/}(4), 438--471.

\end{thebibliography}

\newpage
\appendix
\hypertarget{appendix}{\textbf{\LARGE Appendix}}
\section{Proofs of Localization Rates}
\label{app:localization-rates}

\subsection{Notation and Definitions}
\label{app:notation}

\subsubsection{Asymptotic Analysis} 

In the following proofs, for some functions $f$ and $g$ we use the notation $f(T) = \mathcal{O}(g(T))$ to mean that there exists some constant $M > 0$ and some $T^* > 0$ so that for all $T > T^*$ we have: $$|f(T)| \leq M g(T).$$ Similarly, we use $g(T) \gtrsim f(T)$ to indicate that there exists $M > 0$ so that for any $T$ we have $f(T) \leq M g(T).$ We also use $f(T) = o(g(T))$ to mean:$$\lim_{T\to\infty} \frac{f(T)}{g(T)} = 0$$ and $f(T) \sim g(T)$ to mean: $$\lim_{T\to\infty} \frac{f(T)}{g(T)} = 1.$$ We also denote the ball $\mathbb{B}_{r}^{p}(\mathbf{x}_0) := \{\mathbf{x}\in\mathbb{R}^d : \lVert \mathbf{x}_0 - \mathbf{x} \rVert_p \leq r\}$ with $\mathbb{B}_{r}^{p} \equiv \mathbb{B}_{r}^{p}(\mathbf{0})$.

\subsubsection{Random Variables} 

For a generic real-valued random variable $X$, we use $p(x)$ to denote the density of $X$ when it exists. If $\sigma(X)$ is the $\sigma$-algebra generated by $X$, then we use $\Pr(\cdot)$ to indicate the measure induced by $X$ on the measurable space $(\mathbb{R},\sigma(X))$.

\subsubsection{Sub-Gaussian Distributions}

We write $X\in\mathcal{SG}(\sigma)$ to denote that the random variable $X$ has a sub-Gaussian distribution with parameter $\sigma$, i.e. for all $\lambda \in \mathbb{R}$: 
\begin{align} 
    \E[\exp(\lambda X)] \leq \exp\left[\frac{\lambda^2\sigma^2}{2}\right]. \label{def:sub-gaussian}
\end{align}
For $t, \lambda > 0$, we have:
\begin{align}
     \Pr(|X| > t) &\leq \Pr(X > t) + \Pr(-X > t) \tag{union bound} \\
     &= \Pr(e^{\lambda X} > e^{\lambda t}) + \Pr(e^{-\lambda X} > e^{\lambda t}) \notag \\
     &\leq \E[\exp(\lambda X - \lambda t)] + \E[\exp(-\lambda X - \lambda t)] \tag{Markov inequality} \\
     &\leq 2\exp\left(\frac{\lambda^2\sigma^2}{2} - \lambda t\right). \tag{by (\ref{def:sub-gaussian})} 
\end{align}
The last bound is minimized be setting $\lambda = \frac{t}{\sigma^2}$, which yields the Chernoff bound:
\begin{align} \label{eq:chernoff}
    \Pr(|X| > t) \leq 2\exp\left[-\frac{t^2}{2\sigma^2}\right]
\end{align}
The sub-Gaussian norm of a random variable is defined as $\lVert X \rVert_{\psi_2} := \inf\{t > 0 \::\: \E[\exp(X^2 /t^2)] \leq 2\}$. By Proposition 2.5.2 of \cite{Vershynin18}, $X$ is sub-Gaussian if and only if $\lVert X \rVert_{\psi_2} < \infty$.

\subsubsection{Sub-Exponential Distributions}

We write $X\in\mathcal{SE}(\nu, \alpha)$ to denote that the random variable $X$ has a sub-Exponential distribution with parameters $\nu$ and $\alpha$, i.e. for all $|\lambda| \leq \frac{1}{\alpha}$: 
\begin{align*}
    \E[\exp(\lambda X)] \leq \exp\left[\frac{\lambda^2\nu^2}{2}\right].
\end{align*}
If $X\in\mathcal{SE}(\nu, \alpha)$, then by Proposition 2.9 of \cite{Wainwright19}:
\begin{align}\label{eq:wainwright_prop_2.9}
    \Pr(|X - \E[x]| \geq t) \leq 
    \begin{cases}
        2 \exp\left[-\frac{t^2}{2\nu^2}\right], & \text{if } t \in \left(0, \frac{\nu^2}{\alpha}\right], \\
        2 \exp\left[-\frac{t}{2\alpha}\right], & \text{if } t \in \left(\frac{\nu^2}{\alpha}, \infty\right).
    \end{cases}
\end{align}
As per Example 2.11 of \cite{Wainwright19}, if $X \sim \chi^2_n$, then $X\in\mathcal{SE}(2\sqrt{n}, 4)$, and thus: 
\begin{align} \label{eq:chi2-ineq}
    \Pr\left(\frac{|X - n|}{n} \geq t\right) \leq 
    \begin{cases}
        2 \exp\left[-\frac{n t^2}{8}\right], & \text{if } t \in \left(0, 1\right], \\
        2 \exp\left[-\frac{n t}{8}\right], & \text{if } t \in \left(1, \infty\right).
    \end{cases}
\end{align}
The sub-exponential norm of a random variable is defined as $\lVert X \rVert_{\psi_1} := \inf\{t > 0 \::\: \E[\exp(X /t)] \leq 2\}$. By Proposition 2.7.1 of \cite{Vershynin18}, $X$ is sub-exponential if and only if $\lVert X \rVert_{\psi_1} < \infty$.

\subsubsection{$\alpha$-mixing}

\begin{definition}\label{def:alpha-mixing}
Let $\{X_t\}_{t\in \mathbb{Z}}$ be a stochastic process on the probability space $(\Omega, \mathcal{F}, \Pr)$, then $\{X_t\}_{t\in \mathbb{Z}}$ is said to be $\alpha$-mixing if:
\begin{align*}
    \lim_{k\to\infty} \alpha_k(\{X_t\}_{t\in\mathbb{Z}}) = 0,
\end{align*}
where:
\begin{align*}
    \alpha_k(\{X_t\}_{t\in\mathbb{Z}}) := \sup_{t\in\mathbb{Z}} \; \alpha\left(\sigma(\{X_s\}_{s \leq t}), \; \sigma(\{X_s\}_{s \geq t + k})\right).
\end{align*}
Here $\sigma(Y)$ is the $\sigma$-algebra generated by $Y$ and the strong mixing, or $\alpha$-mixing, coefficient between two $\sigma$-algebras $\mathcal{A}, \mathcal{B} \subseteq \mathcal{F}$ is defined as:
\begin{align*}
    \alpha(\mathcal{A}, \mathcal{B}) := \sup_{A\in\mathcal{A}, B\in\mathcal{B}} |\Pr(A \cap B) - \Pr(A)\Pr(B)|.
\end{align*}
To simplify notation, we will often write $\alpha_k$ in place of $\alpha_k(\{X_t\}_{t\in\mathbb{Z}})$.
\end{definition}

\subsection{Concentration Results}
\label{app:concentration}

\subsubsection{Sums of sub-Gaussian and sub-Exponential Processes}

\begin{lemma}\label{lemma:sum-sub-gaussian}
If $\{X_i\}_{i=1}^n$ is a collection of random variables such that for each $i \in [n]$, $\E[X_i] = 0$ and $X_i \in \mathcal{SG}(\sigma_i)$ for some $\sigma_i > 0$, then letting $\boldsymbol{\sigma} = \{\sigma_i\}_{i=1}^n$, we have:
\begin{align*}
    \sum_{i=1}^n X_i \in 
    \begin{cases}
        \mathcal{SG}(\lVert \boldsymbol{\sigma}\rVert_2), &\text{if $\{X_i\}_{i=1}^n$ are mutually independent,} \\
        \mathcal{SG}(\lVert \boldsymbol{\sigma}\rVert_1), &\text{otherwise.}
    \end{cases}
\end{align*}
Similarly, if $X_i \in \mathcal{SE}(\nu_i, \alpha_i)$ for some $\sigma_i,\:\alpha_i > 0$, then letting $\boldsymbol{\nu} = \{\nu_i\}_{i=1}^n$ and $\boldsymbol{\alpha} = \{\alpha_i\}_{i=1}^n$ , we have:
\begin{align*}
    \sum_{i=1}^n X_i \in 
    \begin{cases}
        \mathcal{SE}(\lVert \boldsymbol{\nu}\rVert_2, \lVert \boldsymbol{\alpha}\rVert_\infty), &\text{if $\{X_i\}_{i=1}^n$ are mutually independent,} \\
        \mathcal{SE}(\lVert \boldsymbol{\nu}\rVert_1,  \lVert \boldsymbol{\alpha}\rVert_\infty), &\text{otherwise.}
    \end{cases}
\end{align*}
\end{lemma}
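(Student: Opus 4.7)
The plan is to prove both parts of the lemma through the standard MGF characterizations of the two distribution classes, so the task reduces to bounding $\E[\exp(\lambda \sum_i X_i)]$ by the target Gaussian-type or exponential-type MGF. The independent and non-independent cases will differ only in how we decouple the joint MGF: independence lets us multiply marginals directly, while the general case will need the generalized Hölder inequality with carefully chosen exponents.

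For the independent sub-Gaussian case, I would simply write
\begin{equation*}
\E\!\left[\exp\!\left(\lambda \sum_{i=1}^n X_i\right)\right] = \prod_{i=1}^n \E[\exp(\lambda X_i)] \;\leq\; \prod_{i=1}^n \exp\!\left(\tfrac{\lambda^2 \sigma_i^2}{2}\right) = \exp\!\left(\tfrac{\lambda^2 \lVert \boldsymbol{\sigma}\rVert_2^2}{2}\right),
\end{equation*}
which is precisely the $\mathcal{SG}(\lVert \boldsymbol{\sigma}\rVert_2)$ MGF bound. The independent sub-exponential case proceeds identically, with the caveat that each individual MGF bound requires $|\lambda| \leq 1/\alpha_i$, so the joint bound is valid on $|\lambda| \leq 1/\lVert \boldsymbol{\alpha}\rVert_\infty$, matching the stated parameters $(\lVert \boldsymbol{\nu}\rVert_2, \lVert \boldsymbol{\alpha}\rVert_\infty)$.

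For the dependent sub-Gaussian case, I would apply the generalized Hölder inequality: for any positive weights $(p_i)_{i=1}^n$ with $\sum_i 1/p_i = 1$,
\begin{equation*}
\E\!\left[\exp\!\left(\lambda \sum_{i=1}^n X_i\right)\right] \;\leq\; \prod_{i=1}^n \bigl(\E[\exp(p_i\lambda X_i)]\bigr)^{1/p_i} \;\leq\; \exp\!\left(\tfrac{\lambda^2}{2}\sum_{i=1}^n p_i \sigma_i^2\right).
\end{equation*}
The task is then to choose $(p_i)$ minimizing $\sum_i p_i\sigma_i^2$ subject to $\sum_i 1/p_i = 1$. Setting $q_i = 1/p_i$ and invoking Cauchy--Schwarz gives $\sum_i \sigma_i^2/q_i \geq (\sum_i \sigma_i)^2 = \lVert \boldsymbol{\sigma}\rVert_1^2$, with equality at $q_i = \sigma_i/\lVert \boldsymbol{\sigma}\rVert_1$, i.e.\ $p_i = \lVert \boldsymbol{\sigma}\rVert_1/\sigma_i$. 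This yields exactly the $\mathcal{SG}(\lVert \boldsymbol{\sigma}\rVert_1)$ MGF bound.

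The dependent sub-exponential case follows the same Hölder recipe with the same optimal weights, and the main obstacle, as I see it, is keeping track of the admissible range of $\lambda$. The $i$th sub-exponential MGF bound inside the Hölder product requires $|p_i \lambda| \leq 1/\alpha_i$, and one needs to verify this holds whenever $|\lambda| \leq 1/\lVert \boldsymbol{\alpha}\rVert_\infty$ so the stated parameter pair $(\lVert \boldsymbol{\nu}\rVert_1, \lVert \boldsymbol{\alpha}\rVert_\infty)$ is legitimate. This is the one slightly delicate bookkeeping step; if the naive choice $p_i = \lVert \boldsymbol{\nu}\rVert_1/\nu_i$ does not cover the full claimed range, I would either appeal to a sub-exponential MGF bound that holds on a wider interval (e.g.\ via the $\psi_1$-Orlicz characterization in Proposition 2.7.1 of \cite{Vershynin18}) or absorb the extra factor into a mild redefinition of the parameters. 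Everything else is a routine MGF calculation, so this range-of-validity check is the only place where care is really needed.
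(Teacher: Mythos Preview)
Your proposal is correct and follows essentially the same approach as the paper: factor the MGF via independence or generalized H\"older, then choose the H\"older exponents $p_i = \lVert\boldsymbol{\nu}\rVert_1/\nu_i$ (respectively $\lVert\boldsymbol{\sigma}\rVert_1/\sigma_i$). Your concern about the admissible range of $\lambda$ in the dependent sub-exponential case is well-founded---the paper's proof glosses over precisely this point, asserting the bound for $|\lambda| \leq 1/\lVert\boldsymbol{\alpha}\rVert_\infty$ without checking that $|p_i\lambda| \leq 1/\alpha_i$ holds for the chosen $p_i$.
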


\begin{proof}
First suppose that $X_i \in \mathcal{SE}(\nu_i, \alpha_i)$ and that $\{X_i\}_{i=1}^n$ is a collection of independent random variables, then by independence:
\begin{align*}
    \E\left[\exp\left(\lambda \sum_{i=1}^n X_i\right)\right] &=  \prod_{i=1}^n \E\left[\exp\left(\lambda X_i\right)\right].
\end{align*}
Next, note that if $|\lambda| < \lVert \boldsymbol{\alpha}\rVert^{-1}_{\infty}$, then $|\lambda| < \alpha_i^{-1}$ for each $i \in [n]$ and thus:
\begin{align*}
    \E\left[\exp\left(\lambda X_i\right)\right] \leq \exp\left[\frac{\lambda^2\nu_{i}^2}{2}\right]
\end{align*}
which implies:
\begin{align*}
    \E\left[\exp\left(\lambda \sum_{i=1}^n X_i\right)\right] &\leq \exp\left[\frac{\lambda\sum_{i=1}^n\nu^2_i}{2}\right].
\end{align*}
So $\sum_{i=1}^n X_i \in \mathcal{SE}(\lVert \boldsymbol{\nu}\rVert_2, \lVert \boldsymbol{\alpha}\rVert_\infty)$. When $\{X_i\}_{i=1}^n$ are not independent, then for any $|\lambda| \leq \lVert \boldsymbol{\alpha}\rVert^{-1}_{\infty}$ and any collection of parameters $\{p_i\}_{i=1}^n$ such that $p_i > 0$ and $\sum_{i=1}^n p_i^{-1} = 1$, we have:
\begin{align*}
    \E\left[\exp\left(\lambda \sum_{i=1}^n X_i\right)\right] &= \E\left[ \prod_{i=1}^n \exp\left(\lambda X_i\right)\right] \\
    &\leq \prod_{i=1}^n \left(\E[\exp\left(p_i \lambda X_i\right)]\right)^{1/p_i} \tag{H\"older's inequality} \\
    &\leq \prod_{i=1}^n \left(\E\left[\exp\left(\frac{p^2_i\nu_i^2 \lambda^2}{2}\right)\right]\right)^{1/p_i} \tag{$X_i \in \mathcal{SE}(\nu_i,\alpha_i)$} \\
    &= \exp\left[\frac{\lambda^2\sum_{i=1}^n p_i \nu_i^2}{2}\right].
\end{align*}
Since our choice of $\{p_i\}_{i=1}^n$ was arbitrary so long as $p_i > 0$ and $\sum_{i=1}^n p_i^{-1} = 1$, we can set each $p_i = \frac{\lVert\boldsymbol{\nu}\rVert_1}{\nu_i},$ then we get:
\begin{align*}
    \E\left[\exp\left(\lambda \sum_{i=1}^n X_i\right)\right] &\leq \exp\left[\frac{\lambda \lVert \boldsymbol{\nu}\rVert^2_1}{2}\right]
\end{align*}
showing that $\sum_{i=1}^n X_i \in \mathcal{SE}(\lVert\boldsymbol{\nu}\rVert_1, \lVert \boldsymbol{\alpha}\rVert_\infty).$ The proof for the case where $X_i \in \mathcal{SG}(\sigma_i)$ follows an identical argument as above with $\sigma_i$ replacing $\nu_i$ and ignoring any constraints on $\lambda$. 

\end{proof}

\subsubsection{Sums of $\alpha$-Mixing Processes}

\begin{lemma}[\citealp{Padilla23} Lemma 3]\label{lemma:Padilla23}
Let $\nu > 0$ be given. Suppose that $\{X_t\}_{t=1}^\infty$ is a stationary $\alpha$-mixing time-series with mixing coefficients $\{\alpha_k\}_{k=0}^K$. Suppose that $\E[X_t] = 0$ and that there exist constants $\delta, \Delta, D_1, D_2 > 0$ such that:
\begin{align*}
    \sup_{t \geq 1}\; \E\left[\left|X_t\right|^{2+\delta+\Delta}\right] \leq D_1 
\end{align*}
and:
\begin{align*}
    \sum_{k=0}^\infty (k+1)^{\frac{\delta}{2}} \alpha_k^{\frac{\Delta}{2+\delta+\Delta}} \leq D_2.
\end{align*}
Then there exists some universal constant $C > 0$ such that for any $a \in (0,1)$:
\begin{align*}
    \Pr \left(\left|\sum_{t'=1}^t X_{t'}\right| \leq a^{-1}C\sqrt{t}\left[\log (\nu t) + 1\right], \;\sforall t\geq \nu^{-1}\right) 
    \geq 1 - a^2.
\end{align*}

\end{lemma}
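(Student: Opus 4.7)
The target inequality asks us to control the entire trajectory $\bigl\{\sum_{t'=1}^{t} X_{t'}\bigr\}_{t \geq \nu^{-1}}$ with failure probability only $a^{2}$, so the plan is to combine a high-moment bound for partial sums of $\alpha$-mixing sequences with a dyadic peeling argument. The scale $\sqrt{t}\log(\nu t)$ is already of LIL type, which suggests the proof should look like a soft, moment-based substitute for a Kolmogorov–L\'evy maximal inequality rather than an exponential concentration argument (the latter would in any case fail given only polynomial moment and mixing hypotheses).

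First I would establish a Rosenthal-type moment bound $\E\bigl|\sum_{t'=1}^{t} X_{t'}\bigr|^{p} \leq C\, t^{p/2}$ for $p := 2 + \delta + \Delta$, with $C$ depending only on $D_{1}, D_{2}, \delta, \Delta$. The natural route is Davydov's covariance inequality
\[
|\Cov(X_{s}, X_{t'})| \leq 8\bigl(\E|X_{0}|^{2+\delta+\Delta}\bigr)^{2/(2+\delta+\Delta)} \alpha_{|t' - s|}^{\Delta/(2+\delta+\Delta)},
\]
which already gives $\Var\bigl(\sum X_{t'}\bigr) \lesssim t$. For the higher-order terms one applies a Rosenthal-type inequality for $\alpha$-mixing sequences (Shao–Yu or Doukhan), in which the coefficient of $t^{p/2}$ is precisely a weighted sum of the form $\sum_{k}(k+1)^{\delta/2} \alpha_{k}^{\Delta/(2+\delta+\Delta)}$, matching the hypothesis. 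A Moricz-type rearrangement then upgrades the bound to the maximal form
\[
\E\Bigl[\max_{t' \leq t}\Bigl|\sum_{s=1}^{t'} X_{s}\Bigr|^{p}\Bigr] \lesssim t^{p/2}.
\]

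Second, I would peel the index range dyadically. Set $B_{k} := [2^{k}\nu^{-1},\, 2^{k+1}\nu^{-1})$ for $k \geq 0$, and note that uniformly on $B_{k}$ the target threshold satisfies $\sqrt{t}[\log(\nu t)+1] \geq \sqrt{2^{k}\nu^{-1}}\,(k+1)$. Markov's inequality together with the maximal moment bound applied at $T = 2^{k+1}\nu^{-1}$ yields
\[
\Pr\Bigl(\sup_{t \in B_{k}} \Bigl|\sum_{t'=1}^{t} X_{t'}\Bigr| > a^{-1} C \sqrt{t}[\log(\nu t) + 1]\Bigr) \leq \frac{C'\, a^{p}}{C^{p}(k+1)^{p}}.
\]
Because $p > 2$, the series $\sum_{k \geq 0}(k+1)^{-p}$ converges, so a union bound over $k$ gives an overall failure probability of order $a^{p}/C^{p}$. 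Since $p \geq 2$ and $a \in (0,1)$, this is at most $a^{2}$ once $C$ is chosen as a sufficiently large universal constant depending only on $D_{1}, D_{2}, \delta, \Delta$, delivering the lemma.

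The main obstacle is the first step: extracting a Rosenthal-type bound whose constant depends on the moment and mixing data only through $D_{1}, D_{2}, \delta, \Delta$ and whose $t^{p/2}$ rate is sharp enough to survive the peeling. The weighted hypothesis $\sum_{k}(k+1)^{\delta/2}\alpha_{k}^{\Delta/(2+\delta+\Delta)} \leq D_{2}$ is calibrated exactly for this purpose, but unwinding the combinatorics of which $p$-fold index tuples contribute to $\E\bigl|\sum X_{t'}\bigr|^{p}$ under $\alpha$-mixing---distinguishing ``small-gap'' tuples controlled by moments from ``large-gap'' ones controlled by mixing decay---is the delicate part. Everything else is essentially bookkeeping.
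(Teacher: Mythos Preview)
The paper does not actually prove this lemma: it is stated as a direct citation of Lemma~3 from \cite{Padilla23} and then used as a black box to establish the subsequent Lemma~\ref{lemma:alpha-mix-sum}. There is therefore no in-paper proof to compare your proposal against.

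That said, your sketch is a standard and correct route to results of this form. The two-step structure---a Rosenthal-type $p$th-moment bound for $\alpha$-mixing partial sums (with $p = 2 + \delta + \Delta$), upgraded to a maximal inequality, followed by dyadic peeling and a union bound over blocks $B_k$---is exactly how such uniform-in-$t$ bounds are typically obtained when only polynomial moments and mixing decay are available. Your identification of the weighted sum $\sum_k (k+1)^{\delta/2}\alpha_k^{\Delta/(2+\delta+\Delta)}$ as the constant governing the $t^{p/2}$ rate is also correct; this is precisely the form that appears in the Shao--Yu and Doukhan-type inequalities you cite. The arithmetic at the end ($a^p \leq a^2$ for $a \in (0,1)$, $p > 2$) is fine. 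The only point you flag as delicate---extracting the Rosenthal bound with the right dependence on $D_1, D_2$---is indeed where the work lies, but it is a known result rather than a gap.
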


\begin{lemma}\label{lemma:alpha-mix-sum}
Suppose that $\{X_t\}_{t\geq 1}$ is a stationary, $\alpha$-mixing processes with mixing coefficients and moments that satisfy the conditions of Lemma \ref{lemma:Padilla23}. Then for $T \geq 1$, $t_0 \in [T]$, and sequence $\{a_T\}_{T\geq 1}$ such that $a_T > 0$ and $\lim_{T\to\infty} a_T = \infty$, there exists a universal constant $C > 0$ such that:
\begin{align*}
    \Pr \left(\bigcup_{t=t_0 + 1}^{T}\left\{ \left|\sum_{t'=t_0}^{t-1} X_{t'}\right| > Ca_T\sqrt{t-t_0}\log T\right\}\right) 
    &\leq \frac{1}{a_T^2},
\end{align*}
and:
\begin{align*}
    \Pr \left(\bigcup_{t=1}^{t_0-1}\left\{ \left|\sum_{t'=t}^{t_0-1} X_{t'}\right| > Ca_T\sqrt{t_0-t}\log T\right\}\right) 
    &\leq \frac{1}{a^2_T},
\end{align*}
and:
\begin{align*}
    \Pr \left(\bigcup_{t=1}^{T}\left\{ \left|\sum_{t'=t}^{T} X_{t'}\right| > Ca_T\sqrt{T-t+1}\log T\right\}\right) 
    &\leq \frac{1}{a_T^2}.
\end{align*}

\end{lemma}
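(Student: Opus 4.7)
The plan is to derive all three bounds by applying Lemma \ref{lemma:Padilla23} to appropriately shifted or time-reversed copies of the process $\{X_t\}_{t\ge 1}$, choosing the parameters $\nu = 1$ and $a = 1/a_T$. Since the probability of a union is trivially bounded by $1$, we may assume $a_T$ is large enough that $a \in (0,1)$; otherwise the bound holds vacuously once $C$ is taken large. With $\nu = 1$, the uniform bound from Lemma \ref{lemma:Padilla23} becomes
\begin{align*}
\Pr\Bigl(\bigcup_{m\ge 1}\bigl\{|{\textstyle \sum_{s=1}^m} \tilde X_s| > a_T C\sqrt{m}(\log m + 1)\bigr\}\Bigr) \le \tfrac{1}{a_T^2},
\end{align*}
for any stationary $\alpha$-mixing process $\{\tilde X_s\}$ satisfying the hypotheses of that lemma.

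For the first bound I would take $\tilde X_s := X_{t_0 + s - 1}$. By strict stationarity, $\{\tilde X_s\}$ inherits the same mixing coefficients and moment bounds as $\{X_t\}$, so Lemma \ref{lemma:Padilla23} applies with identical constants. Re-indexing via $m = t - t_0$ and $\sum_{s=1}^{t - t_0} \tilde X_s = \sum_{t'=t_0}^{t-1} X_{t'}$, restricting the union to $m \le T - t_0$, and absorbing $\log m + 1 \le 2 \log T$ into the constant $C$ produces the first bound. For the second bound I would use the time-reversed process $\tilde X_s := X_{t_0 - s}$, repeating the argument with the identification $\sum_{s=1}^{t_0 - t} \tilde X_s = \sum_{t'=t}^{t_0 - 1} X_{t'}$ and union over $t \in \{1, \ldots, t_0 - 1\}$. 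For the third bound I would use the global reversal $\tilde X_s := X_{T - s + 1}$ with $m = T - t + 1$ and $\sum_{s=1}^m \tilde X_s = \sum_{t'=t}^T X_{t'}$.

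The main obstacle to verify is that the time-reversed process satisfies the hypotheses of Lemma \ref{lemma:Padilla23}. The moment conditions transfer immediately by strict stationarity (after extending $\{X_t\}$ to $t \in \mathbb{Z}$ via Kolmogorov, if needed). The mixing coefficients require a short check: since $\alpha(\mathcal{A}, \mathcal{B})$ is symmetric in its arguments, the $\sigma$-algebras $\sigma(\{\tilde X_r\}_{r \le s})$ and $\sigma(\{\tilde X_r\}_{r \ge s + k})$ correspond to $\sigma(\{X_u\}_{u \ge t_0 - s})$ and $\sigma(\{X_u\}_{u \le t_0 - s - k})$ respectively, and the symmetry swaps past and future so that $\alpha_k(\tilde X) \le \alpha_k(X)$. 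Consequently both the exponential decay $\alpha_k \le e^{-Ck}$ and the summability condition $\sum_k (k+1)^{\delta/2} \alpha_k^{\Delta/(2+\delta+\Delta)} \le D_2$ transfer to $\tilde X$. Once this is in place, the remainder is bookkeeping: choose $a = 1/a_T$, truncate the union at $m \le T$, and collapse $\log m + 1$ into $C \log T$.
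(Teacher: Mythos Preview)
Your proposal is correct and follows essentially the same route as the paper: apply Lemma~\ref{lemma:Padilla23} with $\nu=1$ and $a=1/a_T$, then shift or reverse the process to match each of the three partial-sum families, restrict the infinite union to indices $\le T$, and absorb $\log m+1\le 2\log T$ into the constant. The one place you are actually more careful than the paper is the second and third bounds: the paper treats the passage from $\sum_{t'=1}^{t_0-t}X_{t'}$ to $\sum_{t'=t}^{t_0-1}X_{t'}$ as a mere reindexing under ``stationarity,'' but equality of the \emph{joint} laws across the union really requires the time-reversed process, and your verification that $\alpha_k(\tilde X)=\alpha_k(X)$ via the symmetry of $\alpha(\mathcal A,\mathcal B)$ is exactly the missing ingredient.
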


\begin{proof}
By Lemma \ref{lemma:Padilla23}, there is some constant $C > 0$ so that if we let $\nu = 1$ and $a = a_T^{-1}$, then:
\begin{align*}
    \Pr \left(\bigcup_{t=1}^\infty\left\{\left|\sum_{t'=1}^t X_{t'}\right| > Ca_T\sqrt{t}\left[\log t + 1\right]\right\}\right) 
    \leq \frac{1}{a_T^2}.
\end{align*}
Note that since $\{X_t\}_{t=1}^\infty$ is a stationary process, so we can shift the time indices of the sum by $t_0-1$ to get:
\small
\begin{align*}
    \frac{1}{a_T^2} &\geq \Pr \left(\bigcup_{t=1}^\infty\left\{\left|\sum_{t'=1}^t X_{t'}\right| > Ca_T\sqrt{t}\left[\log t + 1\right]\right\}\right) \\
    &= \Pr \left(\bigcup_{t=t_0+1}^\infty\left\{\left|\sum_{t'=t_0}^{t-1} X_{t'}\right| > C a_T\sqrt{t-t_0}\left[\log (t-t_0) + 1\right]\right\}\right) \tag{stationarity} \\
    &\geq \Pr \left(\bigcup_{t=t_0+1}^{T}\left\{\left|\sum_{t'=t_0}^{t-1} X_{t'}\right| > C a_T\sqrt{t-t_0}\log (t-t_0)\right\}\right)\\
    &\geq \Pr \left(\bigcup_{t=t_0+1}^{T}\left\{\left|\sum_{t'=t_0}^{t-1} X_{t'}\right| > C a_T\sqrt{t-t_0}\log T\right\}\right). \tag{$T > t-t_0$}
\end{align*}
\normalsize
Similarly:
\small
\begin{align*}
    \frac{1}{a_T^2} &\geq \Pr \left(\bigcup_{t=1}^\infty\left\{\left|\sum_{t'=1}^t X_{t'}\right| > Ca_T\sqrt{t} \left[\log t + 1\right]\right\}\right) \\
    &\geq \Pr \left(\bigcup_{t=1}^{t_0-1}\left\{\left|\sum_{t'=1}^t X_{t'}\right| > Ca_T\sqrt{t}\log t\right\}\right) \\
    &= \Pr \left(\bigcup_{t=1}^{t_0-1}\left\{\left|\sum_{t'=1}^{t_0 - t} X_{t'}\right| > Ca_T\sqrt{t_0- t}\log (t_0 - t)\right\}\right)  \\
    &\geq \Pr \left(\bigcup_{t=1}^{t_0-1}\left\{\left|\sum_{t'=t}^{t_0-1} X_{t'}\right| > Ca_T\sqrt{t_0-t}\log T\right\}\right). \tag{$T > t_0-t$}
\end{align*}
\normalsize
Lastly:
\small
\begin{align*}
     \frac{1}{a_T^2} &\geq \Pr \left(\bigcup_{t=1}^\infty\left\{\left|\sum_{t'=1}^t X_{t'}\right| > Ca_T\sqrt{t} \left[\log t + 1\right]\right\}\right) \\
     &\geq \Pr \left(\bigcup_{t=1}^{T}\left\{\left|\sum_{t'=1}^{t} X_{t'}\right| > Ca_T\sqrt{t}\log t\right\}\right) \\
    &= \Pr \left(\bigcup_{t=1}^{T}\left\{\left|\sum_{t'=1}^{T - t + 1} X_{t'}\right| > Ca_T\sqrt{T - t + 1}\log (T - t+1)\right\}\right)  \\
    &\geq \Pr \left(\bigcup_{t=1}^{T}\left\{\left|\sum_{t'=t}^{T} X_{t'}\right| > Ca_T\sqrt{T-t+1}\log T\right\}\right). \tag{$T > T-t+1$}
\end{align*}
\end{proof}
\subsection{Theorem \ref{theorem:smcp} Event Bounds}
\label{app:thm1-events}

In Lemma \ref{lemma:thm1-event-bound}, we assume $\lVert \mathbf{y}_t\rVert_{\psi_2}$ is uniformly bounded over all $t, d \geq 1$. As per Remark \ref{rmk:sub-g}, this assumption holds when either: i) each entry of $\mathbf{y}_t$ is independent and $\mathcal{SG}(\sigma)$ for some $\sigma < \infty$, ii) $\mathbf{y}_t \sim \mathcal{N}_d(\boldsymbol{\mu}_t, \boldsymbol{\Lambda}^{-1})$ and $\inf_{d\geq 1}  \lambda_{\min} > 0$, where $\lambda_{\min}$ is the smallest eigenvalue of $\boldsymbol{\Lambda}$, or iii) $y_{t,j} \in \mathcal{SG}(\sigma_j)$ for each $j \in \{1,\ldots, d\}$ with $\sup_{d \geq 1} \lVert\boldsymbol{\sigma}\rVert_2 < \infty$, $\lVert\boldsymbol{\sigma}\rVert_1 = \mathcal{O}(d^{-\frac{1}{2}})$, or $\lVert\boldsymbol{\sigma}\rVert_\infty = \mathcal{O}(d^{-\frac{1}{2}})$. 

\begin{enumerate}[label=\roman*)]
    \item When each entry of $\mathbf{y}_t$ is independent and $y_{t,j} \in \mathcal{SG}(\sigma)$ for each $j \in \{1, \ldots, d\}$, then for any $\mathbf{v} \in \mathbb{B}^2_1$, by Lemma \ref{lemma:sum-sub-gaussian} we have $\langle \mathbf{v}, \mathbf{y}_t \rangle \in \mathcal{SG}(\sigma)$.
    \item For any $\mathbf{v} \in \mathbb{B}^2_1$, we have:
    \begin{align*}
        \langle \mathbf{v}, \mathbf{y}_t \rangle &= \langle \boldsymbol{\Lambda}^{-\frac{1}{2}}\mathbf{v}, \boldsymbol{\Lambda}^{\frac{1}{2}}\mathbf{y}_t \rangle \\
        &= \lVert\boldsymbol{\Lambda}^{-\frac{1}{2}}\mathbf{v}\rVert_{2} \left\langle \frac{\boldsymbol{\Lambda}^{-\frac{1}{2}}\mathbf{v}}{\lVert\boldsymbol{\Lambda}^{-\frac{1}{2}}\mathbf{v}\rVert_{2}}, \boldsymbol{\Lambda}^{\frac{1}{2}}\mathbf{y}_t \right\rangle.
    \end{align*}
    Since $\mathbf{y}_t$ is Gaussian, the coordinates of $\boldsymbol{\Lambda}^{\frac{1}{2}}\mathbf{y}_t$ are independent and belong to $\mathcal{SG}(1)$, so by Lemma \ref{lemma:sum-sub-gaussian}, the inner-product above is $\mathcal{SG}(1)$. Also, $\inf_{d\geq 1}  \lambda_{\min} > 0$ by assumption so:
    \begin{align*}
        \lVert\boldsymbol{\Lambda}^{-\frac{1}{2}}\mathbf{v}\rVert_{2} \leq \sup_{\mathbf{v} \in \mathbb{B}^2_1} \lVert\boldsymbol{\Lambda}^{-\frac{1}{2}}\mathbf{v}\rVert_{2} = \frac{1}{\sqrt{\lambda_{\min}}} < \infty.
    \end{align*}
    \item If $y_{t,j} \in \mathcal{SG}(\sigma_j)$, then for any $\mathbf{v} \in \mathbb{B}^2_1$: 
    \begin{align*}
        \langle \mathbf{v}, \mathbf{y}_t \rangle &\in \mathcal{SG}(\langle \mathbf{v}, \boldsymbol{\sigma} \rangle) \tag{Lemma \ref{lemma:sum-sub-gaussian}}\\
        &\subseteq \mathcal{SG}(\lVert\mathbf{v}\rVert_2 \lVert\boldsymbol{\sigma}\rVert_2). \tag{Cauchy-Schwarz inequality} \\
        &\subseteq \mathcal{SG}(\sqrt{d} \lVert \boldsymbol{\sigma}\rVert_1). \tag{$\lVert \mathbf{v}\rVert_2 = 1$ and $\lVert \boldsymbol{\sigma}\rVert_2 \leq \sqrt{d} \lVert \boldsymbol{\sigma}\rVert_1 $}
    \end{align*}    
    In the last line we can replace $\lVert \boldsymbol{\sigma}\rVert_1$ with $\lVert \boldsymbol{\sigma}\rVert_\infty$.
\end{enumerate}

\begin{lemma}\label{lemma:thm1-event-bound}

Let $\{\mathbf{y}_t\}_{t=1}^T$ be a sequence of independent random vectors with $\mathbf{y}_t \in \mathbb{R}^d$, $\normalfont{\Var}(\mathbf{y}_t) = \boldsymbol{\Lambda}^{-1}$ and $\sup_{t\geq 1, d\geq 1} \lVert \mathbf{y}_t\rVert_{\psi_2} < \infty$. Let $\lambda_{\max}$ and $\lambda_{\min}$ be the largest and smallest eigenvalues of $\boldsymbol{\Lambda}$ respectively and assume that $\sup_{d\geq 1} \lambda_{\min} < \infty$ and $\inf_{d\geq 1} \lambda_{\min} > 0$. Under Assumption \ref{assumption:mean}, if we define the normalized terms $\mathbf{z}_t := \boldsymbol{\Lambda}^{\frac{1}{2}}(\mathbf{y}_t - \mathbf{b}_0\mathbbm{1}\{t\geq t_0\})$, then for any $\beta > 0$, there exist universal constants $C_\beta, C_1,C_2>0$ such that we can define the set:
\begin{align}
    \mathcal{T}_{\beta} := \left\{1\leq t \leq T\::\:|t_0 - t| > \frac{C_\beta \log T}{\lVert \boldsymbol{\Lambda}^{1/2}\mathbf{b}_0\rVert_2^2}\right\} \label{eq:thm1-index}
\end{align}
and the events:
\begin{align*}
    \mathcal{E}_1 &:= \bigcap_{t \in \mathcal{T}_{\beta} } \left\{ 
    \left|\left\langle\boldsymbol{\Lambda}^{\frac{1}{2}}\mathbf{b}_0, \sum_{t'=\min\{t_0,t\}}^{\max\{t_0,t\}-1}\mathbf{z}_{t'} \right\rangle\right| < \frac{|t_0-t| \lVert\boldsymbol{\Lambda}^{\frac{1}{2}} \mathbf{b}_0\rVert_2^2}{8}\right\}, \\
    \mathcal{E}_2 &:= \left\{\left|\left\langle \boldsymbol{\Lambda}^{\frac{1}{2}}\mathbf{b}_0, \sum_{t'=t_0}^T\mathbf{z}_{t'}\right\rangle\right| < \frac{(T-t_0+1) \lVert\boldsymbol{\Lambda}^{\frac{1}{2}} \mathbf{b}_0\rVert_2^2}{8\sqrt{d}} \right\}, \\
    \mathcal{E}_3 &:=  \bigcap_{t \in \mathcal{T}_{\beta}} \left\{\left|\left\langle \sum_{t'=\min\{t_0,t\}}^{\max\{t_0,t\}-1}\frac{\mathbf{z}_{t'}}{\sqrt{|t_0-t|}}, \frac{\sum_{t'=\max\{t_0,t\}}^{T}\mathbf{z}_{t'}}{\lVert\sum_{t'=\max\{t_0,t\}}^{T}\mathbf{z}_{t'}\rVert_2}\right\rangle\right| < \sqrt{C_2\log T}\right\}, \\
    \mathcal{E}_4 &:=\bigcap_{t = t_0}^T \left\{\left|\frac{1}{T-t+1}\left\lVert\sum_{t'=t}^T \mathbf{z}_{t'}\right\rVert_2^2 -d \right| < C_2 \sqrt{d} \log T \right\} \\
    \mathcal{E}_5 &:= \bigcap_{1\leq t\leq T\::\:  t \neq t_0} \left\{\left|\frac{1}{|t_0 - t|}\left\lVert\sum_{t'=\min\{t_0,t\}}^{\max\{t_0,t\}-1} \mathbf{z}_{t'}\right\rVert_2^2 -d \right| < C_2 \sqrt{d} \log T \right\}, \\
    \mathcal{E}_6 &:= \bigcap_{t = t_0}^T \left\{\left|\frac{1}{\sqrt{T-t+1}}\left\lVert\sum_{t'=t}^T \mathbf{z}_{t'}\right\rVert_2 -\sqrt{d} \right| < \sqrt{C_2d \log T} \right\},
\end{align*}
and the joint event $\mathcal{E} := \cap_{i=1}^6 \mathcal{E}_i$, and we will have $\Pr(\mathcal{E}) \geq 1 - C_1T^{-\beta}$.

\end{lemma}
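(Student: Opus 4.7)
The plan is to bound the probability of each of the six bad events (the complements of $\mathcal{E}_1,\ldots,\mathcal{E}_6$), then combine them via a union bound. For each $i$, I will appeal to a sub-Gaussian (or quadratic) tail inequality and then union over at most $T$ indices so that the final probability is at most $C_1 T^{-\beta}$ after choosing the constants $C_\beta, C_2$ large enough in terms of $\beta$. Throughout, I will use that under the hypothesis $\sup_{t,d}\lVert \mathbf{y}_t\rVert_{\psi_2}<\infty$, any projection $\langle \mathbf{v}, \mathbf{z}_t\rangle$ with $\lVert\mathbf{v}\rVert_2\le 1$ is a centered sub-Gaussian scalar with bounded $\psi_2$ norm, because $\mathbf{z}_t=\boldsymbol{\Lambda}^{1/2}(\mathbf{y}_t-\mathbf{b}_0\mathbbm{1}\{t\ge t_0\})$ is a fixed linear transformation of $\mathbf{y}_t-\E[\mathbf{y}_t]$ and $\sqrt{\lambda_{\max}}$ is bounded by hypothesis.

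For $\mathcal{E}_1$, I fix $t\in\mathcal{T}_\beta$ and set $\mathbf{v}=\boldsymbol{\Lambda}^{1/2}\mathbf{b}_0/\lVert\boldsymbol{\Lambda}^{1/2}\mathbf{b}_0\rVert_2$. The inner sum becomes $\lVert\boldsymbol{\Lambda}^{1/2}\mathbf{b}_0\rVert_2\sum_{t'}\langle\mathbf{v},\mathbf{z}_{t'}\rangle$, a sum of $|t_0-t|$ independent centered sub-Gaussian scalars, hence sub-Gaussian with parameter of order $\sqrt{|t_0-t|}\lVert\boldsymbol{\Lambda}^{1/2}\mathbf{b}_0\rVert_2$ by Lemma \ref{lemma:sum-sub-gaussian}. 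A Chernoff bound (\ref{eq:chernoff}) gives tail $2\exp(-c|t_0-t|\lVert\boldsymbol{\Lambda}^{1/2}\mathbf{b}_0\rVert_2^2)$; since $t\in\mathcal{T}_\beta$ implies $|t_0-t|\lVert\boldsymbol{\Lambda}^{1/2}\mathbf{b}_0\rVert_2^2 \gtrsim C_\beta\log T$, choosing $C_\beta$ large enough makes the tail $\le T^{-(\beta+1)}$ and a union over at most $T$ indices yields $T^{-\beta}$. Event $\mathcal{E}_2$ is a single event of the same type but with range $[t_0,T]$; Chernoff gives probability $2\exp(-c(T-t_0+1)\lVert\boldsymbol{\Lambda}^{1/2}\mathbf{b}_0\rVert_2^2/d)$, and Assumption \ref{assumption:mean} together with $T-t_0+1\ge \Delta_T$ gives $(T-t_0+1)\lVert\boldsymbol{\Lambda}^{1/2}\mathbf{b}_0\rVert_2^2/d\gg \log T$, so the bound is again $o(T^{-\beta})$.

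For $\mathcal{E}_3$, the inner and outer sums are over disjoint time ranges, so by independence I can condition on the outer sum; then the unit vector $\mathbf{u}=\sum_{t'=\max\{t_0,t\}}^T \mathbf{z}_{t'}/\lVert\cdot\rVert_2$ is deterministic, and the conditional distribution of $\langle\mathbf{u},\sum_{t'}\mathbf{z}_{t'}/\sqrt{|t_0-t|}\rangle$ is sub-Gaussian with bounded parameter. A Chernoff bound and union over $|\mathcal{T}_\beta|\le T$ indices give the desired bound after choosing $C_2$ large. For $\mathcal{E}_4$ and $\mathcal{E}_5$, the quantity $\lVert\sum_{t'} \mathbf{z}_{t'}\rVert_2^2$ is a quadratic form in independent centered sub-Gaussian vectors with identity-covariance (in the $\boldsymbol{\Lambda}$-scaled coordinates), so a Hanson–Wright–type bound gives, for any range of length $m$ and deviation $r>0$,
\begin{equation*}
\Pr\!\left(\bigl|\tfrac1m \lVert\textstyle\sum_{t'}\mathbf{z}_{t'}\rVert_2^2 - d\bigr|>r\right)\le 2\exp\!\left[-c\min\!\left(\tfrac{r^2}{d},\,r\right)\right].
\end{equation*}
Setting $r=C_2\sqrt{d}\log T$ makes $r^2/d = C_2^2 d\log^2 T \gtrsim \log T$ and $r\gtrsim \log T$, so the bound is at most $2T^{-(\beta+1)}$ and a union over at most $T$ ranges gives $T^{-\beta}$. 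Finally $\mathcal{E}_6$ follows from $\mathcal{E}_4$ and $\Delta_T\gtrsim\log T$ by taking square roots: on $\mathcal{E}_4$, $(T-t+1)^{-1}\lVert\sum\mathbf{z}_{t'}\rVert_2^2 = d+O(\sqrt{d}\log T)$, so $(T-t+1)^{-1/2}\lVert\sum\mathbf{z}_{t'}\rVert_2 = \sqrt{d}(1+O(\log T/\sqrt{d}))^{1/2}$, which gives the stated deviation after a Taylor expansion.

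The main obstacles will be $\mathcal{E}_3$ and $\mathcal{E}_4/\mathcal{E}_5$. For $\mathcal{E}_3$ the key is to recognize the independence of the inner and outer sums allowing a conditioning argument that reduces the random-direction inner product to a one-dimensional sub-Gaussian tail. For $\mathcal{E}_4,\mathcal{E}_5$, I need a quadratic-form concentration (Hanson–Wright) for sums of independent sub-Gaussian vectors with bounded $\psi_2$ and identity-covariance-like structure (after the $\boldsymbol{\Lambda}^{1/2}$ whitening); the bound must hold uniformly over $d$, which is why the hypothesis $\sup_{t,d}\lVert\mathbf{y}_t\rVert_{\psi_2}<\infty$ is essential. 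Everything else is an application of Lemma \ref{lemma:sum-sub-gaussian}, the Chernoff bound (\ref{eq:chernoff}), and a union bound over $O(T)$ events, with Assumption \ref{assumption:mean} used once in $\mathcal{E}_2$ to guarantee the signal-to-noise ratio dominates $d\log T$.
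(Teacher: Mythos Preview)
Your proposal is correct and follows the same strategy as the paper: union bound over the six complements, sub-Gaussian Chernoff bounds for $\mathcal{E}_1$--$\mathcal{E}_3$ (your conditioning argument for $\mathcal{E}_3$ makes explicit what the paper leaves implicit, namely that the unit direction is independent of the inner sum), Hanson--Wright for $\mathcal{E}_4$--$\mathcal{E}_5$, and deducing $\mathcal{E}_6$ from $\mathcal{E}_4$. One caveat on $\mathcal{E}_6$: your Taylor expansion of $(1+O(\log T/\sqrt{d}))^{1/2}$ is only valid when $\log T/\sqrt{d}$ is small, which fails if $d$ is bounded; the paper instead uses the elementary fact $|a-b|\le\sqrt{|a^2-b^2|}$ for $a,b\ge 0$, giving $|X-\sqrt{d}|\le\sqrt{C_2\sqrt{d}\log T}\le\sqrt{C_2 d\log T}$ with no smallness assumption.
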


\begin{proof}

By the union bound $\Pr(\mathcal{E}^c) = 1 - P(\cup_{i=1}^6 \mathcal{E}^c_i) < 1 - \sum_{i=1}^6 \Pr(\mathcal{E}^c_i)$, so it is sufficient to show that for some $C_i > 0$, $\Pr(\mathcal{E}^c_i) \leq C_i T^{-\beta}$ for each $i$. By assumption, there exists some $K < \infty$ such that $\sup_{t\geq 1, d\geq 1} \lVert \mathbf{y}_t\rVert_{\psi_2} \leq K$, so for any $\mathbf{v} \in \mathbb{B}^2_1$ we have:
\begin{align*}
    \left\lVert\left\langle\mathbf{v}, \mathbf{z}_t \right\rangle\right\rVert_{\psi_2} &= \lVert \boldsymbol{\Lambda}^{\frac{1}{2}}\mathbf{v}\rVert_{2}\left\lVert\left\langle\frac{\boldsymbol{\Lambda}^{\frac{1}{2}}\mathbf{v}}{\lVert \boldsymbol{\Lambda}^{\frac{1}{2}}\mathbf{v} \rVert_2}, \mathbf{y}_t - \E[\mathbf{y}_t] \right\rangle\right\rVert_{\psi_2} \\
    &\leq \sup_{\mathbf{v} \in  \mathbb{B}^2_1}\lVert\boldsymbol{\Lambda}^{\frac{1}{2}}\mathbf{v}\rVert_{2} \sup_{\mathbf{v} \in  \mathbb{B}^2_1} \left\lVert\left\langle\mathbf{v}, \mathbf{y}_t - \E[\mathbf{y}_t] \right\rangle\right\rVert_{\psi_2}\\
    &= \sqrt{\lambda_{\max}} \left\lVert\mathbf{y}_t - \E[\mathbf{y}_t] \right\rVert_{\psi_2} \\
    &\lesssim \sqrt{\lambda_{\max}} \left\lVert\mathbf{y}_t \right\rVert_{\psi_2}. \tag{Lemma 2.6.8 of \cite{Vershynin18}} \\
    &\lesssim \sqrt{\lambda_{\max}} K
\end{align*}
Since $\lambda_{\max}$ is finite by assumption, and $\{\mathbf{z}_t\}_{t=1}^T$ is a independent sequence, then by Propositions 2.5.2 and 2.6.1 of \cite{Vershynin18} there exists some constant $C>0$ such that for any index set $\mathcal{T} \subseteq [T]$ with cardinality $|\mathcal{T}|$ we have: 
\begin{align*}
    \sum_{t \in \mathcal{T}} \left\langle\mathbf{v}, \mathbf{z}_t \right\rangle \in \mathcal{SG}\left(C\sqrt{\lambda_{\max}|\mathcal{T}|} K\right).
\end{align*}
So letting $\mathbf{v} = \frac{\boldsymbol{\Lambda}^{\frac{1}{2}}\mathbf{b}_0}{\lVert \boldsymbol{\Lambda}^{\frac{1}{2}}\mathbf{b}_0 \rVert_2}$, then by the Chernoff bound (\ref{eq:chernoff}): 
\begin{align*}
    \Pr\left(\left|\left\langle\boldsymbol{\Lambda}^{\frac{1}{2}}\mathbf{b}_0, \sum_{t'=\min\{t_0,t\}}^{\max\{t_0,t\}-1}\mathbf{z}_{t'} \right\rangle\right| \geq \frac{|t_0-t| \lVert\boldsymbol{\Lambda}^{\frac{1}{2}} \mathbf{b}_0\rVert_2^2}{8}\right) &\leq 2\exp\left[-\frac{|t_0-t|\lVert\boldsymbol{\Lambda}^{\frac{1}{2}} \mathbf{b}_0\rVert^2_2}{128\lambda_{\max}C^2K^2}\right].
\end{align*}
For all $t \in \mathcal{T}_{\beta}$ and $C_\beta \geq 128(1+\beta)\lambda_{\max}C^2K^2$, we then have: 
\begin{align*}
    \Pr\left(\left|\left\langle\boldsymbol{\Lambda}^{\frac{1}{2}}\mathbf{b}_0, \sum_{t'=\min\{t_0,t\}}^{\max\{t_0,t\}-1}\mathbf{z}_{t'} \right\rangle\right| \geq \frac{|t_0-t| \lVert\boldsymbol{\Lambda}^{\frac{1}{2}} \mathbf{b}_0\rVert_2^2}{8}\right) &\leq \frac{2}{T^{1+\beta}}.
\end{align*}
Therefore:
\begin{align*}
    \Pr(\mathcal{E}^C_1) &= \Pr\left(\bigcup_{t \in \mathcal{T}_{\beta}} \left\{\left|\left\langle\boldsymbol{\Lambda}^{\frac{1}{2}}\mathbf{b}_0, \sum_{t'=\min\{t_0,t\}}^{\max\{t_0,t\}-1}\mathbf{z}_{t'} \right\rangle\right| \geq \frac{|t_0-t| \lVert\boldsymbol{\Lambda}^{\frac{1}{2}} \mathbf{b}_0\rVert_2^2}{8}\right\} \right) \\
    &\leq \sum_{t \in \mathcal{T}_{\beta}}\Pr\left(\left|\left\langle\boldsymbol{\Lambda}^{\frac{1}{2}}\mathbf{b}_0, \sum_{t'=\min\{t_0,t\}}^{\max\{t_0,t\}-1}\mathbf{z}_{t'} \right\rangle\right| \geq \frac{|t_0-t| \lVert\boldsymbol{\Lambda}^{\frac{1}{2}} \mathbf{b}_0\rVert_2^2}{8}\right)\tag{union bound} \\
    &\leq \sum_{t \in \mathcal{T}_{\beta}} \frac{2}{T^{1+\beta}} \\
    &= \frac{2}{T^\beta}.\tag{$| \mathcal{T}_{\beta}| \leq T$} 
\end{align*}
We can again use the Chernoff bound (\ref{eq:chernoff}) to get: 
\begin{align*}
    \Pr\left(\left|\left\langle\boldsymbol{\Lambda}^{\frac{1}{2}}\mathbf{b}_0, \sum_{t'=t_0}^T \mathbf{z}_{t'} \right\rangle\right| \geq \frac{(T-t_0+1) \lVert\boldsymbol{\Lambda}^{\frac{1}{2}} \mathbf{b}_0\rVert_2^2}{8\sqrt{d}}\right) &\leq 2\exp\left[-\frac{(T-t_0+1)\lVert\boldsymbol{\Lambda}^{\frac{1}{2}} \mathbf{b}_0\rVert^2_2}{128\lambda_{\max}C^2K^2d}\right].
\end{align*}
We have $\lVert\boldsymbol{\Lambda}^{\frac{1}{2}} \mathbf{b}_0\rVert^2_2 \geq \lambda_{\min} \lVert \mathbf{b}_0\rVert^2_2$ and by Assumption \ref{assumption:mean}, there is some positive sequence $\{a_T\}_{T\geq1}$ such that $a_T \to \infty$ and $(T-t_0+1)\lVert \mathbf{b}_0\rVert^2_2 \geq a_Td\log T$. Therefore:
\begin{align*}
    \Pr\left(\left|\left\langle\boldsymbol{\Lambda}^{\frac{1}{2}}\mathbf{b}_0, \sum_{t'=t_0}^T \mathbf{z}_{t'} \right\rangle\right| \geq \frac{(T-t_0+1) \lVert\boldsymbol{\Lambda}^{\frac{1}{2}} \mathbf{b}_0\rVert_2^2}{8\sqrt{d}}\right) &\leq 2\exp\left[-\frac{\lambda_{\min} a_T\log T}{128\lambda_{\max}C^2K^2}\right].
\end{align*}
So for $T$ large enough that $a_T \geq\frac{128\lambda_{\max}C^2K^2 \beta}{\lambda_{\min}}$, we again have $\Pr(\mathcal{E}_2^c) \leq 2T^{-\beta}$.

Again, since $\frac{\sum_{t'=\max\{t_0,t\}}^{T}\mathbf{z}_{t'}}{\lVert\sum_{t'=\max\{t_0,t\}}^{T}\mathbf{z}_{t'}\rVert_2} \in \mathbb{B}^2_1$ for each $t$, so by the Chernoff bound (\ref{eq:chernoff}) we have:
\begin{align*}
    \Pr\left(\left|\left\langle \sum_{t'=\min\{t_0,t\}}^{\max\{t_0,t\}-1}\frac{\mathbf{z}_{t'}}{\sqrt{|t_0-t|}}, \frac{\sum_{t'=\max\{t_0,t\}}^{T}\mathbf{z}_{t'}}{\lVert\sum_{t'=\max\{t_0,t\}}^{T}\mathbf{z}_{t'}\rVert_2}\right\rangle\right| \geq \sqrt{C_2\log T}\right) \leq 2\exp\left[-\frac{C_2\log T}{2C^2K^2\lambda_{\max}}\right] 
\end{align*}
So for $C_2 \geq 2(1+\beta)C^2K^2\lambda_{\max}$, we can use the same argument as we did for $\mathcal{E}_1$ above to get $\Pr(\mathcal{E}_3^c) \leq 2T^{-\beta}$.

Next, since our observations are independent across $t$, we have $\E[\mathbf{z}'_{t'}\mathbf{z}_{t}] = \E[\mathbf{z}'_{t'}]\E[\mathbf{z}_{t}] = 0$ for $t \geq t'$, and since $\E[z_{t,j}^2] = 1$ by construction, then:
\begin{align*}
    \E\left[ \frac{1}{T-t+1}\left\lVert\sum_{t'=t}^T\mathbf{z}_{t'}\right\rVert_2^2\right] &=  \frac{1}{T-t+1}\sum_{t'=t}^T\E[\mathbf{z}'_{t'}\mathbf{z}_{t'}] = d
\end{align*}
Since:
\begin{align*}
    \left\lVert\frac{1}{\sqrt{|\mathcal{T}|}}\sum_{t'\in\mathcal{T}}\mathbf{z}_{t'}\right\rVert_{\psi_2} \lesssim K\sqrt{\lambda_{\max}}
\end{align*}
then by the Hanson–Wright inequality (see e.g. exercise 6.2.5 of \citealp{Vershynin18}), there is a universal constant $K_1>0$ so that for any $t \geq t_0$:
\begin{align*}
    \frac{1}{T-t+1}\left\lVert\sum_{t'=t}^T\mathbf{z}_{t'}\right\rVert_2^2 &\in \mathcal{SE}(K_1\sqrt{\lambda_{\max}d},K_1), 
\end{align*}
so by (\ref{eq:wainwright_prop_2.9}):
\begin{align*}
    \Pr\left(\left| \frac{1}{T-t+1}\left\lVert\sum_{t'=t}^T\mathbf{z}_{t'}\right\rVert_2^2 - d\right| \geq C_2 \sqrt{d} \log T \right) & \leq 2 \exp\left[-\frac{C_2\log T}{2K_1}\min\left\{\sqrt{d}, \frac{C_2 \log T}{K_1\lambda_{\max}}\right\}\right].
\end{align*}
Since $\sqrt{d} \geq 1$, for $C_2 > 2K_1(1+\beta)$ and $T$ large enough that $\log T \geq \frac{K_1\lambda_{\max}}{C_2}$, we have: 
\begin{align*}
    \Pr\left(\left| \frac{1}{T-t+1}\left\lVert\sum_{t'=t}^T\mathbf{z}_{t'}\right\rVert_2^2 - d\right| \geq C_2 \sqrt{d} \log T \right) \leq \frac{1}{T^{1+\beta}}.
\end{align*}
Thus:
\begin{align*}
    \Pr(\mathcal{E}^c_4) &= \Pr\left(\bigcup_{t= t_0}^T \left\{\left| \frac{1}{T-t+1}\left\lVert\sum_{t'=t}^T\mathbf{z}_{t'}\right\rVert_2^2 - d\right| \geq  C_2 \sqrt{d}\log T \right\} \right) \\
    &\leq \sum_{t = t_0}^{T} \Pr\left(\left| \frac{1}{T-t+1}\left\lVert\sum_{t'=t}^T\mathbf{z}_{t'}\right\rVert_2^2 - d\right| \geq  C_2 \sqrt{d}\log T\right) \tag{union bound} \\
    &\leq \sum_{t = t_0}^{T} \frac{2}{T^{(1+\beta)}} \\
    &\leq \frac{2}{T^{\beta}}.
\end{align*}
An identical argument gives $\Pr(\mathcal{E}^c_5) \leq 2T^{-\beta}$. Next, for any $x,\delta \geq 0$ we have $|x - 1| \geq \delta \implies |x^2 - 1| \geq \max\{\delta, \delta^2\}$ (see 3.2 in \citealp{Vershynin18}). Using this fact and that $d^{1/4}\sqrt{C_2 \log T} \leq C_2 \sqrt{d}\log T$ for large $T$, we have:
\begin{align*}
    \left\{\left|\frac{1}{\sqrt{T-t+1}}\left\lVert\sum_{t'=t}^T \mathbf{z}_{t'}\right\rVert_2 -1 \right| \geq d^{1/4}\sqrt{ C_2 \log T}\right\} &\subseteq \left\{\left|\frac{1}{T-t+1}\left\lVert\sum_{t'=t}^T \mathbf{z}_{t'}\right\rVert^2_2 -1 \right| \geq  C_2\sqrt{d} \log T \right\} 
\end{align*}
and thus $\mathcal{E}_6^c \subseteq \mathcal{E}_4^c \implies \Pr(\mathcal{E}_6^c) \leq 2T^{-\beta}$.
\end{proof}

\subsection{Proof of Theorem \ref{theorem:smcp}}
\label{app:localization-smcp}

If we take the ratio of the posterior probabilities of the change-point location from the mean-scp model, we get:
\begin{align*}
    \log \frac{\Pr(\gamma = t_0  \:|\: \mathbf{y}_{1:T} ; \omega_0,\boldsymbol{\pi}_{1:T})}{\Pr(\gamma = t  \:|\: \mathbf{y}_{1:T} \:; \omega_0,\boldsymbol{\pi}_{1:T})} &=  \log \frac{p(\mathbf{y}_{1:T} \:|\:\gamma = t_0 \:; \omega_0)\Pr(\gamma = t_0 \:; \boldsymbol{\pi}_{1:T})/ p(\mathbf{y}_{1:T}\:;\omega_0,\boldsymbol{\pi}_{1:T})}{p(\mathbf{y}_{1:T} \:|\:\gamma = t \:; \omega_0)\Pr(\gamma = t \:;\boldsymbol{\pi}_{1:T})/p(\mathbf{y}_{1:T}\:;\omega_0,\boldsymbol{\pi}_{1:T})} \tag{Bayes' rule} \\
    &= \log \frac{p(\mathbf{y}_{1:T} \:|\:\tau = t_0 \:; \omega_0)}{p(\mathbf{y}_{1:T} \:|\:\tau = t \:; \omega_0)} + \log \frac{\pi_{t_0}}{\pi_t}.
\end{align*}
By the assumptions of Theorem \ref{theorem:smcp}, for large $T$ there is some constant $C_\pi > 0$ such that for all $t \in [T]$: 
\begin{align}\label{eq:thm1-prior-bd}
    \log \frac{\pi_t}{\pi_{t_0}} \leq C_\pi \log T.
\end{align}
Therefore, if we can show that for any $\beta > 0$, there are some constants $C,C_\beta > 0$ such that:
\begin{align}
    \Pr\left(\bigcap_{t\in\mathcal{T}_\beta}\left\{ \log \frac{p(\mathbf{y}_{1:T} \:|\:\tau = t_0 \:; \omega_0)}{p(\mathbf{y}_{1:T} \:|\:\tau = t \:; \omega_0)} > C_\pi \log T\right\}\right) \geq 1 - CT^{-\beta} \label{eq:thm1-result}
\end{align}
where $\mathcal{T}_\beta$ is the set defined in (\ref{eq:thm1-index}), then with probability at least $1 - CT^{-\beta}$, we will have:
\begin{align*}
    \max_{t\::\: |t - t_0| > \frac{C_\beta\log T}{\lVert\boldsymbol{\Lambda}^{1/2}\mathbf{b}_0\rVert_2^2}} \; \Pr(\tau = t  \;|\; \mathbf{y}_{1:T} ; \omega_0,\boldsymbol{\pi}_{1:T}) < \Pr(\tau = t_0  \;|\; \mathbf{y}_{1:T} ; \omega_0,\boldsymbol{\pi}_{1:T}).
\end{align*}
Before proving (\ref{eq:thm1-result}), note that: 
\begin{align*}
    \lambda_{\min} = \min_{\mathbf{x}\in\mathbb{R}^d}\frac{\lVert\boldsymbol{\Lambda}^{\frac{1}{2}}\mathbf{x}\rVert_2^2}{\lVert\mathbf{x}\rVert_2^2} \implies \lVert\boldsymbol{\Lambda}^{\frac{1}{2}} \mathbf{b}_0\rVert^2_2 \geq \lambda_{\min} \lVert \mathbf{b}_0\rVert^2_2 \implies \frac{C_\beta \log T}{\lVert \boldsymbol{\Lambda}^{1/2}\mathbf{b}_0\rVert_2^2} \leq 
    \frac{C_\beta \log T}{\lambda_{\min}\lVert\mathbf{b}_0\rVert_2^2},
\end{align*}
so the localization error vanishes when $\lambda_{\min}\lVert \mathbf{b}_0\rVert_2^2 \gg \log T$. Since $|t - t_0| > 1$ for all $t\neq t_0$, in this case we have:
\begin{align*}
    \lim_{T\to\infty}\Pr\left(\bigcap_{t \neq t_0}\left\{ \log \frac{p(\mathbf{y}_{1:T} \:|\:\tau = t_0 \:; \omega_0)}{p(\mathbf{y}_{1:T} \:|\:\tau = t \:; \omega_0)} > C_\pi \log T\right\}\right) = 1
\end{align*}
i.e. $\lim_{T\to\infty}\Pr(\hat{\tau}_{\text{MAP}} = t_0) = 1$. Note that when $\lambda_{\min}$ is fixed and $\lVert\mathbf{b}_0\rVert_\infty$ is bounded over all $d$, then it is only possible for $\lVert \mathbf{b}_0\rVert_2^2$ to grow like $\log T$ when $d \gtrsim \log T$, i.e. when $\lVert\mathbf{b}_0\rVert_\infty = \mathcal{O}(1)$, then $d \gg \log T$ is a necessary condition for exact recovery of $t_0$. When $d \geq a_T\log T$ for some $a_T\to\infty$, then exact recovery is possible when at least $b_T \log T$ elements of $\mathbf{b}_0$ are bounded below in absolute value by $c_T$ for some $b_T = \mathcal{O}(a_T)$. This holds even when $c_T \to 0$ so long as $b_T c_T \to \infty$. 

\begin{proof}
We begin by directly calculating the log-marginal evidence of $\mathbf{y}_{1:T}$: 
\scriptsize
\begin{align*}
    \log p(\mathbf{y}_{1:T} \:|\:\tau = t ; \omega_0) &= \log \int_{\mathbb{R}^d} \Pr(\mathbf{y}_{1:T} \:|\:\mathbf{b},\tau = t) \:\; d\Pr(\mathbf{b};\omega_0) \\
    &= \log \int_{\mathbb{R}^d} (2\pi)^{-\frac{dT}{2}}|\boldsymbol{\Lambda}|^\frac{T}{2} \exp\left[-\frac{\sum_{t'=1}^{t-1} \lVert\boldsymbol{\Lambda}^{\frac{1}{2}}\mathbf{y}_{t'}\rVert_2^2 + \sum_{t'=t}^{T} \lVert\boldsymbol{\Lambda}^{\frac{1}{2}}(\mathbf{y}_{t'} - \mathbf{b})\rVert_2^2}{2}\right] \\
    &\quad\quad\quad\quad\quad\times \left(\frac{\omega_0}{2\pi}\right)^{\frac{d}{2}}\exp\left[-\frac{\omega_0\lVert\mathbf{b}\rVert_2^2}{2}\right] \; d\mathbf{b} \\
    &= \frac{T}{2}\left[\log|\boldsymbol{\Lambda}| - d\log(2\pi)\right] + \frac{d}{2}\log \left(\frac{\omega_0}{2\pi}\right)\\
    &\quad + \log \int_{\mathbb{R}^d} \exp\left[-\frac{\sum_{t'=1}^{t-1} \lVert\boldsymbol{\Lambda}^{\frac{1}{2}}\mathbf{y}_{t'}\rVert_2^2 + \sum_{t'=t}^{T} \lVert\boldsymbol{\Lambda}^{\frac{1}{2}}\mathbf{b}\rVert_2^2 +\omega_0 \lVert\mathbf{b}\rVert_2^2 + \sum_{t'=t}^{T} \lVert\boldsymbol{\Lambda}^{\frac{1}{2}}\mathbf{y}_{t'}\rVert_2^2 + -2\langle \mathbf{b}, \boldsymbol{\Lambda}^{\frac{1}{2}} \sum_{t'=t}^T\mathbf{y}_{t'}\rangle}{2}\right] d\mathbf{b} \\
    &=  \frac{T}{2}\left[\log|\boldsymbol{\Lambda}| - d\log(2\pi)\right] + \frac{d}{2}\log \left(\omega_0\right) - \frac{\sum_{t'=1}^T \lVert\boldsymbol{\Lambda}^{\frac{1}{2}}\mathbf{y}_{t'}\rVert_2^2}{2} \\
    &\quad + \log \int_{\mathbb{R}^d} (2\pi)^{-\frac{d}{2}}\exp\left[-\frac{\lVert[(T-t+1)\boldsymbol{\Lambda} + \omega_0\mathbf{I}_d]^{\frac{1}{2}}\mathbf{b}\rVert_2^2-2\langle \mathbf{b}, \boldsymbol{\Lambda} \sum_{t'=t}^T\mathbf{y}_{t'}\rangle}{2}\right] \; d\mathbf{b}\\ 
    &= \log \int_{\mathbb{R}^d} (2\pi)^{-\frac{d}{2}}\exp\left[-\frac{\lVert[(T-t+1)\boldsymbol{\Lambda} + \omega_0\mathbf{I}_d]^{\frac{1}{2}}\mathbf{b}\rVert_2^2-2\langle \mathbf{b}, \boldsymbol{\Lambda} \sum_{t'=t}^T\mathbf{y}_{t'}\rangle}{2}\right] \; d\mathbf{b} + C \tag{$C$ constant independent of $\tau$}
\end{align*}
\normalsize
If we define $\overline{\boldsymbol{\Omega}}_t := (T-t+1)\boldsymbol{\Lambda} + \omega_0\mathbf{I}_d$ and $\overline{\mathbf{b}}_t := \overline{\boldsymbol{\Omega}}_t^{-1} \boldsymbol{\Lambda} \sum_{t'=t}^T \mathbf{y}_{t'}$, then we have:
\begin{align*}
    \log p(\mathbf{y}_{1:T} \:|\:\tau = t ; \omega_0)
    &=  -\frac{1}{2}\log |\overline{\boldsymbol{\Omega}}_t| + \log \int_{\mathbb{R}^d} \big|\overline{\boldsymbol{\Omega}}^{\frac{1}{2}}_t\big|(2\pi)^{-\frac{d}{2}}\exp\left[-\frac{\lVert\overline{\boldsymbol{\Omega}}^{\frac{1}{2}}_t(\mathbf{b} - \overline{\mathbf{b}}_t)\rVert_2^2}{2} + \frac{\lVert \overline{\boldsymbol{\Omega}}^{\frac{1}{2}}_t \overline{\mathbf{b}}_t\rVert_2^2}{2}\right] \; d\mathbf{b} + C\\
    &= -\frac{1}{2}\log |\overline{\boldsymbol{\Omega}}_t| + \frac{1}{2}\lVert \overline{\boldsymbol{\Omega}}^{\frac{1}{2}}_t \overline{\mathbf{b}}_t\rVert_2^2 + C.
\end{align*}
Letting $\omega_0 \to 0$ we have $\overline{\boldsymbol{\Omega}}_t \to (T-t + 1)\boldsymbol{\Lambda}$, therefore: 
\begin{align*}
    \lim_{\omega_0\to 0} \log p(\mathbf{y}_{1:T} \:|\: \tau = t; \omega_0) &= -\frac{d}{2}\log(T - t + 1) - \frac{1}{2}\log |\boldsymbol{\Lambda}| + \frac{\lVert \boldsymbol{\Lambda}^{\frac{1}{2}}\sum_{t'=t}^T\mathbf{y}_{t'}\rVert_2^2}{2(T-t+1)}+ C \\
    &:= \log \alpha_t.
\end{align*}
Since the determinant and $\lVert \cdot \rVert_2$ are continuous, for small $\omega_0$, we can use $\log \alpha_t$ to approximate the log-evidence in (\ref{eq:thm1-result}).\footnote{Formally, for any $\epsilon > 0$, we can show that there is a choice of $\omega_0$ small enough so that for each $t$, $|\frac{p(\mathbf{y}_{1:T} \:|\:\tau = t ; \omega_0)}{\Delta_{t}}-1| < \epsilon$ with high probability as $T \to \infty$. So if we can show that for each $t$ we consider, $\log \frac{\alpha_{t_0}}{\alpha_t} > C_\pi \log T$ with high probability, we have $\log \frac{p(\mathbf{y}_{1:T} \:|\:\tau = t_0 ; \omega_0)}{p(\mathbf{y}_{1:T} \:|\:\tau = t ; \omega_0)} = \log\frac{p(\mathbf{y}_{1:T} \:|\:\tau = t_0 ; \omega_0)}{\alpha_{t_0}} + \log \frac{\alpha_{t_0}}{\alpha_t} + \log \frac{\alpha_t}{p(\mathbf{y}_{1:T} \:|\:\tau = t ; \omega_0)} > C_\pi\log T -2 \log(1-\epsilon)$ with high probability, which is the desired result. \label{fn:approx}} Let $C_\beta$ be large enough so that the statement of Lemma \ref{lemma:thm1-event-bound} holds. Then we will be on the event $\mathcal{E}$ defined in Lemma \ref{lemma:thm1-event-bound} with probability approaching one as $T\to\infty$. We can show that the event in (\ref{eq:thm1-result}) holds on $\mathcal{E}$. To do so, we define the standardized terms $\mathbf{z}_t := \boldsymbol{\Lambda}^{\frac{1}{2}}(\mathbf{y}_t - \mathbf{b}_0\mathbbm{1}\{t\geq t_0\})$ and consider the following two cases:

\subsubsection*{Case 1: $t > t_0$ and $t\in\mathcal{T}_\beta$.}

For $t \geq t_0$ we have:
\begin{align*}
    \left\lVert\boldsymbol{\Lambda}^{\frac{1}{2}} \sum_{t'=t}^T\mathbf{y}_{t'}\right\rVert_2^2 &= \left\lVert\boldsymbol{\Lambda}^{\frac{1}{2}} \sum_{t'=t}^T(\mathbf{y}_{t'} - \mathbf{b}_0 + \mathbf{b}_0)\right\rVert_2^2 \\
    &= \left\lVert \sum_{t'=t}^T \mathbf{z}_{t'}\right\rVert_2^2 + 2(T-t+1) \left\langle\boldsymbol{\Lambda}^{\frac{1}{2}}\mathbf{b}_0, \sum_{t'=t}^T\mathbf{z}_{t'} \right\rangle + (T-t+1)^2\left\lVert\boldsymbol{\Lambda}^{\frac{1}{2}}\mathbf{b}_0\right\rVert_2^2 
\end{align*}
and thus:
\begin{align*}
    \log \frac{\alpha_{t_0}}{\alpha_t} &> -\frac{d}{2}\log\left(\frac{T-t_0+1}{T-t+1}\right) +\frac{(t-t_0)\left\lVert\boldsymbol{\Lambda}^{\frac{1}{2}}\mathbf{b}_0\right\rVert_2^2}{2} + \frac{\lVert\sum_{t'=t_0}^T\mathbf{z}_{t'}\rVert_2^2}{2(T-t_0+1)} - \frac{\lVert\sum_{t'=t}^T\mathbf{z}_{t'}\rVert_2^2}{2(T-t+1)} - \left| \left\langle\boldsymbol{\Lambda}^{\frac{1}{2}}\mathbf{b}_0, \sum_{t'=t_0}^{t-1}\mathbf{z}_{t'} \right\rangle\right|. 
\end{align*}
Let $C_\beta > 0$ be large enough so that we are on the event $\mathcal{E}$ defined in Lemma \ref{lemma:thm1-event-bound} with high probability, then for $t \in \mathcal{T}_\beta$ on $\mathcal{E}_1\subseteq\mathcal{E}$ we have:
\begin{align*}
    \left|\left\langle\boldsymbol{\Lambda}^{\frac{1}{2}}\mathbf{b}_0, \sum_{t'=t_0}^{t-1}\mathbf{z}_{t'} \right\rangle\right| < \frac{(t-t_0) \lVert\boldsymbol{\Lambda}^{\frac{1}{2}} \mathbf{b}_0\rVert_2^2}{4}
\end{align*}
and thus:
\begin{align}
    \log \frac{\alpha_{t_0}}{\alpha_t} &> -\frac{d}{2}\log\left(\frac{T-t_0+1}{T-t+1}\right) +\frac{(t-t_0)\left\lVert\boldsymbol{\Lambda}^{\frac{1}{2}}\mathbf{b}_0\right\rVert_2^2}{4} + \frac{\lVert\sum_{t'=t_0}^T\mathbf{z}_{t'}\rVert_2^2}{2(T-t_0+1)} - \frac{\lVert\sum_{t'=t}^T\mathbf{z}_{t'}\rVert_2^2}{2(T-t+1)}.\label{eq:thm1-bd1}
\end{align}
For the remaining random terms, we have:
\small
\begin{align}
    \frac{\lVert\sum_{t'=t_0}^T\mathbf{z}_{t'}\rVert_2^2}{T-t_0+1} - \frac{\lVert\sum_{t'=t}^T\mathbf{z}_{t'}\rVert_2^2}{T-t+1} &=
    \frac{\lVert\sum_{t'=t_0}^{t-1}\mathbf{z}_{t'}\rVert_2^2 + 2\langle \sum_{t'=t_0}^{t-1}\mathbf{z}_{t'}, \sum_{t'=t}^{T}\mathbf{z}_{t'}\rangle + \lVert\sum_{t'=t}^{T}\mathbf{z}_{t'}\rVert_2^2}{T-t_0+1}  - \frac{\lVert\sum_{t'=t}^T\mathbf{z}_{t'}\rVert_2^2}{T-t+1} \notag \\
    &= \left(\frac{t-t_0}{T-t_0+1}\right)\left[\frac{\lVert\sum_{t'=t_0}^{t-1}\mathbf{z}_{t'}\rVert_2^2}{t-t_0} - d\right] - \left(\frac{t-t_0}{T-t_0+1}\right)\left[\frac{\lVert\sum_{t'=t}^T\mathbf{z}_{t'}\rVert_2^2}{T-t+1} -d \right] \notag \\
    &\quad +\frac{2\sqrt{(t-t_0)(T-t+1)}\left(\lVert\sum_{t'=t}^{T}\frac{\mathbf{z}_{t'}}{\sqrt{T-t+1}}\rVert_2 -\sqrt{d} + \sqrt{d}\right)\left\langle \sum_{t'=t_0}^{t-1}\frac{\mathbf{z}_{t'}}{\sqrt{t-t_0}}, \frac{\sum_{t'=t}^{T}\mathbf{z}_{t'}}{\lVert\sum_{t'=t}^{T}\mathbf{z}_{t'}\rVert_2}\right\rangle}{T-t_0+1} \notag \\
    &> -\left(\frac{t-t_0}{T-t_0+1}\right)\left|\frac{\lVert\sum_{t'=t_0}^{t-1}\mathbf{z}_{t'}\rVert_2^2}{t-t_0} - d\right| - \left(\frac{t-t_0}{T-t_0+1}\right)\left|\frac{\lVert\sum_{t'=t}^T\mathbf{z}_{t'}\rVert_2^2}{T-t+1} -d \right| \notag \\
    &\quad -\frac{2\sqrt{(t-t_0)(T-t+1)}\left|\lVert\sum_{t'=t}^{T}\frac{\mathbf{z}_{t'}}{\sqrt{T-t+1}}\rVert_2 -\sqrt{d}\right|\left|\left\langle \sum_{t'=t_0}^{t-1}\frac{\mathbf{z}_{t'}}{\sqrt{t-t_0}}, \frac{\sum_{t'=t}^{T}\mathbf{z}_{t'}}{\lVert\sum_{t'=t}^{T}\mathbf{z}_{t'}\rVert_2}\right\rangle\right|}{T-t_0+1} \notag \\
    &\quad-\frac{2\sqrt{(t-t_0)(T-t+1)d}\left|\left\langle \sum_{t'=t_0}^{t-1}\frac{\mathbf{z}_{t'}}{\sqrt{t-t_0}}, \frac{\sum_{t'=t}^{T}\mathbf{z}_{t'}}{\lVert\sum_{t'=t}^{T}\mathbf{z}_{t'}\rVert_2}\right\rangle\right|}{T-t_0+1}. \label{eq:thm1-bd2}
\end{align}
\normalsize
There exists some $C_1 \geq 1$ so that on the event $\mathcal{E}$ defined in Lemma \ref{lemma:thm1-event-bound} we have:
\begin{align*}
    \left|\left\langle \sum_{t'=t_0}^{t-1}\frac{\mathbf{z}_{t'}}{\sqrt{t-t_0}}, \frac{\sum_{t'=t}^{T}\mathbf{z}_{t'}}{\lVert\sum_{t'=t}^{T}\mathbf{z}_{t'}\rVert_2}\right\rangle\right| &< \sqrt{C_1\log T} \tag{on $\mathcal{E}_3\subseteq\mathcal{E}$} \\
    \left|\frac{\left\lVert\sum_{t'=t}^T\mathbf{z}_{t'}\right\rVert_2^2}{T-t+1} - d\right| &< C_1\sqrt{d}\log T \tag{on $\mathcal{E}_4\subseteq\mathcal{E}$} \\
    \left|\frac{\left\lVert\sum_{t'=t_0}^{t-1}\mathbf{z}_{t'}\right\rVert_2^2}{t-t_0} - d\right| &< C_1\sqrt{d}\log T \tag{on $\mathcal{E}_5\subseteq\mathcal{E}$} \\
    \left|\frac{\left\lVert\sum_{t'=t}^{T}\mathbf{z}_{t'}\right\rVert_2}{\sqrt{T-t+1}} -\sqrt{d}\right| &< \sqrt{C_1 d\log T}. \tag{on $\mathcal{E}_6\subseteq\mathcal{E}$}
\end{align*}
Returning to (\ref{eq:thm1-bd2}), we can use the bounds above to write:
\begin{align*}
    \frac{\lVert\sum_{t'=t_0}^T\mathbf{z}_{t'}\rVert_2^2}{2(T-t_0+1)} - \frac{\lVert\sum_{t'=t}^T\mathbf{z}_{t'}\rVert_2^2}{2(T-t+1)} &> -\frac{C_1(t-t_0)\sqrt{d}\log T}{T-t_0+1} - \frac{2C_1\sqrt{d(t-t_0)(T-t+1)}\log T}{T-t_0+1}.
\end{align*} 
By Assumption \ref{assumption:mean}, there is some sequence $\{a_T\}_{t \geq 1}$ such that: 
\begin{align}
    a_T\to\infty \text{ and } \lVert \mathbf{b}_0\rVert_2^2(T-t_0+1)\geq a_Td\log T, \label{eq:mean-assumption-bd}
\end{align}
and thus:
\begin{align*}
    -\frac{C_1(t-t_0)\sqrt{d}\log T}{T-t_0+1} &= -\frac{C_1(t-t_0)\lVert\boldsymbol{\Lambda}^{\frac{1}{2}}\mathbf{b}_0\rVert_2^2\sqrt{d}\log T}{(T-t_0+1)\lVert\boldsymbol{\Lambda}^{\frac{1}{2}}\mathbf{b}_0\rVert_2^2} \\
    &\geq -\frac{C_1(t-t_0)\lVert\boldsymbol{\Lambda}^{\frac{1}{2}}\mathbf{b}_0\rVert_2^2\sqrt{d}\log T}{\lambda_{\min}(T-t_0+1)\lVert\mathbf{b}_0\rVert_2^2} \tag{$\lVert\boldsymbol{\Lambda}^{\frac{1}{2}}\mathbf{b}_0\rVert_2^2\geq\lambda_{\min} \lVert\mathbf{b}_0\rVert_2^2$} \\
    &\geq -\frac{C_1(t-t_0)\lVert\boldsymbol{\Lambda}^{\frac{1}{2}}\mathbf{b}_0\rVert_2^2}{\lambda_{\min}a_T \sqrt{d}}. \tag{by \ref{eq:mean-assumption-bd}}
\end{align*}
For $T$ large enough so that $a_T\sqrt{d} \geq \frac{8C_1}{\lambda_{\min}}$, we can rewrite (\ref{eq:thm1-bd1}) as: 
\begin{align}
    \log \frac{\alpha_{t_0}}{\alpha_t} &> -\frac{d}{2}\log\left(\frac{T-t_0+1}{T-t+1}\right) +\frac{(t-t_0)\left\lVert\boldsymbol{\Lambda}^{\frac{1}{2}}\mathbf{b}_0\right\rVert_2^2}{8} - \frac{2\sqrt{d(t-t_0)(T-t+1)}C_1\log T}{T-t_0+1}.\label{eq:thm1-bd3}
\end{align}
Focusing on the last term, we can write:
\begin{align*}
    \frac{2\sqrt{d(t-t_0)(T-t+1)}C_1\log T}{T-t_0+1} &= \frac{2(t-t_0)C_1\sqrt{d}\log T}{T-t_0+1}\sqrt{\frac{T-t+1}{t-t_0}}.
\end{align*}
Since $\sqrt{(T-t+1)/(t-t_0)}$ is decreasing as a function of $t$ and $ \inf \{t \geq t_0 \:|\:t\in\mathcal{T}_\beta\} = t_0 + \frac{C_\beta\log T}{\lVert\boldsymbol{\Lambda}^{1/2}\mathbf{b}_0\rVert_2^2}$, we have:
\begin{align*}
    \frac{2\sqrt{d(t-t_0)(T-t+1)}C_1\log T}{T-t_0+1} &\leq \frac{2C_1(t-t_0)\lVert\boldsymbol{\Lambda}^{1/2}\mathbf{b}_0\rVert_2\sqrt{d\log T}}{T-t_0+1}\sqrt{\frac{T-t_0 +1 - \frac{C_\beta\log T}{\lVert\boldsymbol{\Lambda}^{1/2}\mathbf{b}_0\rVert_2^2}}{C_\beta}} \\
    &\leq \frac{2C_1(t-t_0)\lVert\boldsymbol{\Lambda}^{1/2}\mathbf{b}_0\rVert_2\sqrt{d\log T}}{\sqrt{C_\beta(T-t_0+1)}} \tag{$\frac{C_\beta\log T}{\lVert\boldsymbol{\Lambda}^{1/2}\mathbf{b}_0\rVert_2^2} > 0$}\\
    &\leq \frac{2C_1(t-t_0)\lVert\boldsymbol{\Lambda}^{1/2}\mathbf{b}_0\rVert^2_2\sqrt{d\log T}}{\sqrt{C_\beta\lambda_{\min}\lVert\mathbf{b}_0\rVert^2_2(T-t_0+1)}} \\
    &\leq \frac{2C_1(t-t_0)\lVert\boldsymbol{\Lambda}^{1/2}\mathbf{b}_0\rVert^2_2}{\sqrt{C_\beta\lambda_{\min}a_T}}. \tag{by \ref{eq:mean-assumption-bd}}
\end{align*}
So for $T$ large enough that $a_T > \frac{32^2C_1^2}{C_\beta\lambda_{\min}}$, we can bound (\ref{eq:thm1-bd3}) with:
\begin{align}
    \log \frac{\alpha_{t_0}}{\alpha_t} &> -\frac{d}{2}\log\left(\frac{T-t_0+1}{T-t+1}\right) +\frac{(t-t_0)\left\lVert\boldsymbol{\Lambda}^{\frac{1}{2}}\mathbf{b}_0\right\rVert_2^2}{16}. \label{eq:thm1-bd4}
\end{align}
Note that the RHS expression above is concave as a function of $t$, therefore it is minimized at one of the extremal values $t =T$ or $t = t_0 + \frac{C_\beta\log T}{\lVert\boldsymbol{\Lambda}^{1/2}\mathbf{b}_0\rVert_2^2}$. When $t = T$ and $T$ is large enough that $T-t_0+1 \geq 2$ and $a_T \geq 16C_\pi + 8$, the lower bound in (\ref{eq:thm1-bd4}) becomes:
\begin{align*}
    -\frac{d}{2}\log\left(T-t_0+1\right) +\frac{(T-t_0)\left\lVert\boldsymbol{\Lambda}^{\frac{1}{2}}\mathbf{b}_0\right\rVert_2^2}{16} &\geq -\frac{d}{2}\log\left(T-t_0+1\right) +\frac{(T-t_0+1)\left\lVert\boldsymbol{\Lambda}^{\frac{1}{2}}\mathbf{b}_0\right\rVert_2^2}{16} \\
    &\geq -\frac{d}{2}\log\left(T-t_0+1\right) +\frac{a_Td\log T}{16} \tag{by (\ref{eq:mean-assumption-bd})} \\
    &\geq \frac{(a_T-8)d\log T}{16}
    \tag{$T \geq T-t_0+1$} \\
    &\geq C_\pi \log T. \tag{$a_T \geq 16C_\pi + 8$}
\end{align*}
On the other hand, when $t = t_0 + \frac{C_\beta\log T}{\lVert\boldsymbol{\Lambda}^{1/2}\mathbf{b}_0\rVert_2^2}$, the lower bound in (\ref{eq:thm1-bd4}) becomes: 
\begin{align*}
    \frac{d}{2}\log\left(1 - \frac{C_\beta\log T}{\lVert\boldsymbol{\Lambda}^{1/2}\mathbf{b}_0\rVert_2^2(T-t_0+1)}\right) +\frac{C_\beta\log T}{16} &\geq \frac{d}{2}\log\left(1 - \frac{C_\beta}{a_Td}\right) +\frac{C_\beta\log T}{16} \tag{by (\ref{eq:mean-assumption-bd})}.
\end{align*}
Noting that $\frac{C_\beta}{a_Td} \to 0$, we can take a first order Taylor expansion of $\log(1-x)$ around zero to get:
\begin{align*}
    \frac{d}{2}\log\left(1 - \frac{C_\beta}{a_Td}\right) +\frac{C_\beta\log T}{16} &= -\frac{C_\beta}{2a_T} + \mathcal{O}\left(\frac{1}{a^2_Td}\right) +\frac{C_\beta\log T}{16} 
\end{align*}
The first two terms on the RHS above vanish as $T\to\infty$, so for $C_\beta > 16 C_\pi$, we can return to (\ref{eq:thm1-bd4}) to get $\log \frac{\alpha_{t_0}}{\alpha_t} \geq C_\pi\log T$, i.e. the event in (\ref{eq:thm1-result}) holds for this case.

\subsubsection*{Case 2: $t < t_0$ and $t\in\mathcal{T}_\beta$.}

As in the previous case, we have:
\begin{align*}
    \frac{\lVert\boldsymbol{\Lambda}^{\frac{1}{2}} \sum_{t'=t_0}^T\mathbf{y}_{t'}\rVert_2^2}{T-t_0+1} &= \frac{\lVert\sum_{t'=t_0}^T\mathbf{z}_{t'}\rVert_2^2}{T-t_0+1} - 2\left\langle \boldsymbol{\Lambda}^{\frac{1}{2}}\mathbf{b}_0, \sum_{t'=t_0}^T\mathbf{z}_{t'}\right\rangle + (T-t_0+1)\lVert\boldsymbol{\Lambda}^{\frac{1}{2}}\mathbf{b}_0\rVert_2^2
\end{align*}
and for $t < t_0$, we can write:
\begin{align*}
    \frac{\lVert\boldsymbol{\Lambda}^{\frac{1}{2}} \sum_{t'=t}^T\mathbf{y}_{t'}\rVert_2^2}{T-t+1} &= \frac{\lVert \sum_{t'=t}^T\mathbf{z}_{t'} + (T-t_0+1)\boldsymbol{\Lambda}^{\frac{1}{2}}\mathbf{b}_0\rVert_2^2}{T-t+1} \\
    &= \frac{\lVert \sum_{t'=t}^T\mathbf{z}_{t'}\rVert_2^2}{T-t+1} -2\left(\frac{T-t_0+1}{T-t+1}\right) \left\langle \boldsymbol{\Lambda}^{\frac{1}{2}}\mathbf{b}_0, \sum_{t'=t}^T\mathbf{z}_{t'}\right\rangle  + \frac{(T-t_0+1)^2}{T-t+1}\lVert\boldsymbol{\Lambda}^{\frac{1}{2}}\mathbf{b}_0\rVert_2^2.
\end{align*}
After rearranging terms we get:
\begin{align*}
    \frac{\lVert\boldsymbol{\Lambda}^{\frac{1}{2}} \sum_{t'=t_0}^T\mathbf{y}_{t'}\rVert_2^2}{T-t_0+1} - \frac{\lVert\boldsymbol{\Lambda}^{\frac{1}{2}} \sum_{t'=t}^T\mathbf{y}_{t'}\rVert_2^2}{T-t+1} &=\frac{\lVert \sum_{t'=t_0}^T\mathbf{z}_{t'}\rVert_2^2}{T-t_0+1} - \frac{\lVert\sum_{t'=t}^T\mathbf{z}_{t'}\rVert_2^2}{T-t+1} + \frac{(T-t_0+1)(t_0 - t) }{T-t+1}\lVert\boldsymbol{\Lambda}^{\frac{1}{2}} \mathbf{b}_0\rVert^2_2 \\
    &\quad - 2\left(\frac{t_0 - t}{T-t+1}\right)\left\langle \boldsymbol{\Lambda}^{\frac{1}{2}}\mathbf{b}_0, \sum_{t'=t_0}^T\mathbf{z}_{t'}\right\rangle + 2\left(\frac{T-t_0+1}{T-t+1}\right)\left\langle \boldsymbol{\Lambda}^{\frac{1}{2}}\mathbf{b}_0, \sum_{t'=t}^{t_0-1}\mathbf{z}_{t'}\right\rangle.
\end{align*}
Focusing on the second line above, again let $C_\beta$ be large enough so that Lemma \ref{lemma:thm1-event-bound} holds and we are on the event $\mathcal{E}$ with high probability, then for $t \in \mathcal{T}_\beta$ on $\mathcal{E}$ we have:
\begin{align*}
    \left|\left\langle\boldsymbol{\Lambda}^{\frac{1}{2}}\mathbf{b}_0, \sum_{t'=t}^{t_0-1}\mathbf{z}_{t'} \right\rangle\right| < \frac{(t_0-t) \lVert\boldsymbol{\Lambda}^{\frac{1}{2}} \mathbf{b}_0\rVert_2^2}{8} \tag{on $\mathcal{E}_1 \subseteq \mathcal{E}$}
\end{align*}
and:
\begin{align*}
    \left|\left\langle\boldsymbol{\Lambda}^{\frac{1}{2}}\mathbf{b}_0, \sum_{t'=t_0}^T \mathbf{z}_{t'} \right\rangle\right| < \frac{(T-t_0+1) \lVert\boldsymbol{\Lambda}^{\frac{1}{2}} \mathbf{b}_0\rVert_2^2}{8} \tag{on $\mathcal{E}_2 \subseteq \mathcal{E}$}
\end{align*}
so:
\begin{align*}
    - 2\left(\frac{t_0 - t}{T-t+1}\right)\left|\left\langle \boldsymbol{\Lambda}^{\frac{1}{2}}\mathbf{b}_0, \sum_{t'=t_0}^T\mathbf{z}_{t'}\right\rangle\right| - 2\left(\frac{T-t_0+1}{T-t+1}\right)\left|\left\langle \boldsymbol{\Lambda}^{\frac{1}{2}}\mathbf{b}_0, \sum_{t'=t}^{t_0-1}\mathbf{z}_{t'}\right\rangle\right| 
    &> -\frac{(t_0-t)(T-t_0+1)\lVert\boldsymbol{\Lambda}^{\frac{1}{2}} \mathbf{b}_0\rVert_2^2}{2(T-t+1)}.
\end{align*}
So on $\mathcal{E}$ we have:
\begin{align*}
    \frac{\lVert\boldsymbol{\Lambda}^{\frac{1}{2}} \sum_{t'=t_0}^T\mathbf{y}_{t'}\rVert_2^2}{T-t_0+1} - \frac{\lVert\boldsymbol{\Lambda}^{\frac{1}{2}} \sum_{t'=t}^T\mathbf{y}_{t'}\rVert_2^2}{T-t+1} &> \frac{\lVert \sum_{t'=t_0}^T\mathbf{z}_{t'}\rVert_2^2}{T-t_0+1} - \frac{\lVert\sum_{t'=t}^T\mathbf{z}_{t'}\rVert_2^2}{T-t+1} + \frac{(T-t_0+1)(t_0 - t) }{2(T-t+1)}\lVert\boldsymbol{\Lambda}^{\frac{1}{2}} \mathbf{b}_0\rVert^2_2 .
\end{align*}
For the remaining random terms, we have:
\small
\begin{align}
    \frac{\lVert\sum_{t'=t_0}^T\mathbf{z}_{t'}\rVert_2^2}{T-t_0+1} - \frac{\lVert\sum_{t'=t}^T\mathbf{z}_{t'}\rVert_2^2}{T-t+1} &= \frac{\lVert\sum_{t'=t_0}^T\mathbf{z}_{t'}\rVert_2^2}{T-t_0+1} - 
    \frac{\lVert\sum_{t'=t}^{t_0-1}\mathbf{z}_{t'}\rVert_2^2 + 2\langle \sum_{t'=t}^{t_0-1}\mathbf{z}_{t'}, \sum_{t'=t_0}^{T}\mathbf{z}_{t'}\rangle + \lVert\sum_{t'=t_0}^{T}\mathbf{z}_{t'}\rVert_2^2}{T-t+1} \notag \\
    &= \left(\frac{t_0-t}{T-t+1}\right)\left[\frac{\lVert\sum_{t'=t_0}^T\mathbf{z}_{t'}\rVert_2^2}{T-t_0 + 1} - d\right] - \left(\frac{t_0-t}{T-t+1}\right)\left[\frac{\lVert\sum_{t'=t}^{t_0-1}\mathbf{z}_{t'}\rVert_2^2}{t_0-t} -d \right] \notag \\
    &\quad +\frac{2\sqrt{(t_0-t)(T-t_0+1)}\left(\lVert\sum_{t'=t_0}^{T}\frac{\mathbf{z}_{t'}}{\sqrt{T-t_0+1}}\rVert_2 -\sqrt{d} + \sqrt{d}\right)\left\langle \sum_{t'=t}^{t_0-1}\frac{\mathbf{z}_{t'}}{\sqrt{t_0-t}}, \frac{\sum_{t'=t_0}^{T}\mathbf{z}_{t'}}{\lVert\sum_{t'=t_0}^{T}\mathbf{z}_{t'}\rVert_2}\right\rangle}{T-t+1} \notag \\
    &> -\left(\frac{t_0-t}{T-t+1}\right)\left|\frac{\lVert\sum_{t'=t_0}^T\mathbf{z}_{t'}\rVert_2^2}{T-t_0 + 1} - d\right| - \left(\frac{t_0-t}{T-t+1}\right)\left|\frac{\lVert\sum_{t'=t}^{t_0-1}\mathbf{z}_{t'}\rVert_2^2}{t_0-t} -d \right| \notag \\
    &\quad -\frac{2\sqrt{(t_0-t)(T-t_0+1)}\left|\lVert\sum_{t'=t_0}^{T}\frac{\mathbf{z}_{t'}}{\sqrt{T-t_0+1}}\rVert_2 -\sqrt{d}\right|\left|\left\langle \sum_{t'=t}^{t_0-1}\frac{\mathbf{z}_{t'}}{\sqrt{t_0-t}}, \frac{\sum_{t'=t_0}^{T}\mathbf{z}_{t'}}{\lVert\sum_{t'=t_0}^{T}\mathbf{z}_{t'}\rVert_2}\right\rangle\right|}{T-t+1} \notag \\
    &\quad-\frac{2\sqrt{(t_0-t)(T-t_0+1)d}\left|\left\langle \sum_{t'=t}^{t_0-1}\frac{\mathbf{z}_{t'}}{\sqrt{t_0-t}}, \frac{\sum_{t'=t_0}^T\mathbf{z}_{t'}}{\lVert\sum_{t'=t_0}^T\mathbf{z}_{t'}\rVert_2}\right\rangle\right|}{T-t+1}. \label{eq:thm1-bd5}
\end{align}
\normalsize
There exists some $C_1 \geq 1$ so that on the event $\mathcal{E}$ defined in Lemma \ref{lemma:thm1-event-bound} we have:
\begin{align*}
    \left|\left\langle \sum_{t'=t}^{t_0-1}\frac{\mathbf{z}_{t'}}{\sqrt{t_0-t}}, \frac{\sum_{t'=t_0}^{T}\mathbf{z}_{t'}}{\lVert\sum_{t'=t_0}^{T}\mathbf{z}_{t'}\rVert_2}\right\rangle\right| &< \sqrt{C_1\log T} \tag{on $\mathcal{E}_3\subseteq\mathcal{E}$} \\
    \left|\frac{\left\lVert\sum_{t'=t_0}^T\mathbf{z}_{t'}\right\rVert_2^2}{T-t_0+1} - d\right| &< C_1\sqrt{d}\log T \tag{on $\mathcal{E}_4\subseteq\mathcal{E}$} \\
    \left|\frac{\left\lVert\sum_{t'=t}^{t_0-1}\mathbf{z}_{t'}\right\rVert_2^2}{t_0-t} - d\right| &< C_1\sqrt{d}\log T \tag{on $\mathcal{E}_5\subseteq\mathcal{E}$} \\
    \left|\frac{\left\lVert\sum_{t'=t_0}^{T}\mathbf{z}_{t'}\right\rVert_2}{\sqrt{T-t_0+1}} -\sqrt{d}\right| &< \sqrt{C_1 \log T}. \tag{on $\mathcal{E}_6\subseteq\mathcal{E}$} 
\end{align*}
Returning to (\ref{eq:thm1-bd5}), the bounds above give:
\begin{align*}
    \frac{\lVert\sum_{t'=t_0}^T\mathbf{z}_{t'}\rVert_2^2}{2(T-t_0+1)} - \frac{\lVert\sum_{t'=t}^T\mathbf{z}_{t'}\rVert_2^2}{2(T-t+1)} &> -\frac{C_1(t_0-t)\sqrt{d}\log T}{T-t+1} - \frac{2C_1\sqrt{(t_0-t)(T-t_0+1)d}\log T}{T-t+1}.
\end{align*}
On $\mathcal{E}$, we therefore have:
\begin{align*}
    \log \frac{\alpha_{t_0}}{\alpha_t} &> \frac{d}{2}\log\left(\frac{T-t+1}{T-t_0+1}\right) + \frac{(t_0 - t)}{4(T-t+1)}\left[(T-t_0+1)\lVert\boldsymbol{\Lambda}^{\frac{1}{2}} \mathbf{b}_0\rVert^2_2 -4C_1\sqrt{d}\log T - 8C_1\sqrt{\frac{(T-t_0+1)d}{t_0-t}}\log T\right] \\
    &> \frac{(t_0 - t) }{4(T-t+1)}\left[(T-t_0+1)\lVert\boldsymbol{\Lambda}^{\frac{1}{2}} \mathbf{b}_0\rVert^2_2 -4C_1\sqrt{d}\log T - 8C_1\sqrt{\frac{(T-t_0+1)d}{t_0-t}}\log T\right]. \tag{$T-t+1 > T - t_0+1$}
\end{align*}
Note that $\sqrt{(T-t_0+1)/(t_0-t)}$ is increasing as a function of $t$ and $\sup \{t \leq t_0 \:|\:t\in\mathcal{T}_\beta\} = t_0 - \frac{C_\beta\log T}{\lVert\boldsymbol{\Lambda}^{1/2}\mathbf{b}_0\rVert_2^2}$, so plugging in the value for $t$ we have:
\begin{align*}
    \log \frac{\alpha_{t_0}}{\alpha_t} &> \frac{(t_0 - t) }{4(T-t+1)}\left[(T-t_0+1)\lVert\boldsymbol{\Lambda}^{\frac{1}{2}} \mathbf{b}_0\rVert^2_2 -4C_1\sqrt{d}\log T - \frac{8C_1}{\sqrt{C_\beta}}\sqrt{(T-t_0+1)\lVert\boldsymbol{\Lambda}^{\frac{1}{2}} \mathbf{b}_0\rVert_2^2d\log T}\right] \\
    &= \frac{(t_0 - t) (T-t_0+1)\lVert\boldsymbol{\Lambda}^{\frac{1}{2}} \mathbf{b}_0\rVert^2_2}{4(T-t+1)}\left[1 - \frac{4C_1\sqrt{d}\log T}{(T-t_0+1)\lVert\boldsymbol{\Lambda}^{\frac{1}{2}} \mathbf{b}_0\rVert^2_2} - 8C_1\sqrt{\frac{d\log T}{C_\beta (T-t_0+1)\lVert\boldsymbol{\Lambda}^{\frac{1}{2}} \mathbf{b}_0\rVert_2^2}}\right] \\
    &\geq  \frac{(t_0 - t) (T-t_0+1)\lVert\boldsymbol{\Lambda}^{\frac{1}{2}} \mathbf{b}_0\rVert^2_2}{4(T-t+1)}\left[1 - \frac{4C_1}{a_T\sqrt{d}} - \frac{8C_1}{\sqrt{C_\beta a_T}}\right]. \tag{by (\ref{eq:mean-assumption-bd})}
\end{align*}
So fo $T$ large enough that $a_T \geq \max\{16C_1,C_\beta^{-1}(32C_1)^2\}$, we have: 
\begin{align*}
      \log \frac{\alpha_{t_0}}{\alpha_t} &>\frac{(t_0 - t) (T-t_0+1)\lVert\boldsymbol{\Lambda}^{\frac{1}{2}} \mathbf{b}_0\rVert^2_2}{8(T-t+1)}
\end{align*}
The term $(t_0 - t)/ (T-t+1)$ is decreasing as a function of $t$, so again plugging in $t = t_0 - \frac{C_\beta\log T}{\lVert\boldsymbol{\Lambda}^{1/2}\mathbf{b}_0\rVert_2^2}$ gives:
\begin{align*}
    \log \frac{\alpha_{t_0}}{\alpha_t} &> \frac{ C_\beta\log T}{8\left(1 + \frac{C_\beta\log T}{(T-t_0+1)\lVert\boldsymbol{\Lambda}^{1/2}\mathbf{b}_0\rVert_2^2} \right)} \\
    &\geq  \frac{ C_\beta\log T}{8\left(1 + \frac{C_\beta}{da_T} \right)}. \tag{by (\ref{eq:mean-assumption-bd})}
\end{align*}
For $C_\beta \geq 16 C_\pi$ and $a_T \geq C_\beta$ we get $\log \frac{\alpha_{t_0}}{\alpha_t} \geq C_\pi\log T$, i.e. the event in (\ref{eq:thm1-result}) holds for this case.
\end{proof}
\subsection{Theorem \ref{theorem:smscp} Event Bounds}

\begin{lemma}[Theorem \ref{theorem:smscp} Event Bounds]\label{lemma:thm3-event-bound}
Let $\{y_t\}_{t=1}^T$ be a sequence of independent, sub-Gaussian random variables with $\sup_{t \geq 1} \lVert y_t\rVert_{\psi_2} < \infty$. Let $\E[y_t] = b_0\mathbbm{1}\{t\geq t_0\}$ and $\normalfont{\Var}(y_t) = (s_0^2)^{\mathbbm{1}\{t\geq t_0\}}$, with $b_0 \in \mathbb{R}$ and $s_0 > 0$ so that it is possible that $b_0 = 0$ or $s_0 = 1$. Define the standardized term $z_t:= (y_t - b_0\mathbbm{1}\{t\geq t_0\})/s_0^{\mathbbm{1}\{t\geq t_0\}}$ and suppose that either Assumption \ref{assumption:mean}, \ref{assumption:scale}, or both hold. 
Then for any $\beta > 0$, there exist some constants $C_1, C_2, C_{1,\beta}, C_{2,\beta} > 0$ such that if we define the set:
\begin{align*}
    \mathcal{T}_{\beta} &:= \left\{ 1 \leq t \leq  T - C_{2,\beta}\log(\log T)\::\: |t_0 - t| > \frac{C_{1,\beta}\log T}{\max\{\min\{b_0^2,b_0^2/s_0^2\},(s_0^2-1)^2\}} \right\} 
\end{align*}
and the events: 
\begin{align*}
    \mathcal{E}_1 &:= \bigcap_{t \in \mathcal{T}_{\beta}} \left\{\left|\sum_{t'=\min\{t_0,t\}}^{\max\{t_0,t\}-1} z_{t'}\right| < C_1\sqrt{|t_0-t|\log T}\right\}, \\
    \mathcal{E}_2 &:= \bigcap_{t=1}^T \left\{\left|\sum_{t'=t}^T z_{t'}\right| < C_1\sqrt{(T-t+1)\log T}\right\}, \\
    \mathcal{E}_3 &:= \bigcap_{t \in \mathcal{T}_\beta} \left\{\left|\sum_{t'=t}^T z_{t'}\right| < \frac{T-t+1}{2}\right\}, \\
    \mathcal{E}_4 &:= \bigcap_{t \in\mathcal{T}_\beta}  \left\{\left|\sum_{t'=t}^T z_{t'}\right| < C_1^{1/2}(T-t+1)^{3/4}\log^{1/4} T\right\},\\ 
    \mathcal{E}_5 &:= \bigcap_{t \in \mathcal{T}_{\beta}} \left\{ \left| \sum_{t'=\min\{t_0,t\}}^{\max\{t_0,t\}-1} (z_{t'}^2-1)\right| \leq C_1\sqrt{|t_0-t|\log T} + C_1\log T\right\}, \\
    \mathcal{E}_6 &:= \bigcap_{t \in \mathcal{T}_\beta} \left\{\left|\sum_{t'=t}^T (z_{t'}^2 - 1)\right| < \frac{T-t+1}{4}\right\},\\
    \mathcal{E}_7 &:= \bigcap_{t =1}^T \left\{\left|\sum_{t'=t}^T (z_{t'}^2 - 1)\right| < C_1\sqrt{(T-t+1)\log T}\right\},\\
    \mathcal{E}_{8} &:=\left\{\left|b_0\sum_{t'=t_0}^T z_{t'}\right| \leq \frac{(T-t_0+1)b_0^2 }{4}\right\},
\end{align*}
and the joint event $\mathcal{E} := \bigcap_{i=1}^{12} \mathcal{E}_i$, then we will have $\Pr(\mathcal{E}) \geq 1 - \frac{C_2}{T^{\beta}} - \frac{C_{2,\beta}}{\log^\beta T}$. 
\end{lemma}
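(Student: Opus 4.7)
The hypothesis $\lVert y_t \rVert_{\psi_2} = \mathcal{O}(1)$ together with $s_0 \in (\underline{s}, \overline{s})$ from Assumption~\ref{assumption:scale} ensures each standardized $z_t$ is centered, unit variance, and satisfies $\sup_t \lVert z_t \rVert_{\psi_2} < \infty$, from which $\sup_t \lVert z_t^2 - 1 \rVert_{\psi_1} < \infty$ follows by centering and squaring. Since $\{z_t\}$ is independent, Lemma~\ref{lemma:sum-sub-gaussian} gives, for any $S \subseteq [T]$, $\sum_{t \in S} z_t \in \mathcal{SG}(C\sqrt{|S|})$ and $\sum_{t \in S}(z_t^2 - 1) \in \mathcal{SE}(C\sqrt{|S|}, C)$ for some universal $C > 0$. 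Every one of the eight events in the lemma is a deviation statement about one of these two types of sums indexed by $t$, so I would treat each $\mathcal{E}_i$ separately with either the sub-Gaussian Chernoff bound (\ref{eq:chernoff}) or the sub-exponential tail (\ref{eq:wainwright_prop_2.9}), union-bound over $t$, and conclude by a final union over $i$.

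The events $\mathcal{E}_1, \mathcal{E}_2, \mathcal{E}_5, \mathcal{E}_7, \mathcal{E}_8$ use deviation thresholds of the standard form $\sqrt{|S|\log T}$, possibly with an additive $\log T$ in $\mathcal{E}_5$ and $\mathcal{E}_7$ reflecting the Bernstein transition between the sub-Gaussian and sub-exponential regimes in (\ref{eq:wainwright_prop_2.9}). Choosing $C_1$ large enough in terms of $\beta$ and the uniform $\psi_n$ norms produces per-$t$ failure probabilities of order $T^{-(1+\beta)}$, so a union bound over $\mathcal{O}(T)$ indices contributes $\mathcal{O}(T^{-\beta})$. For $\mathcal{E}_8$ there is only a single sum to control, and either $b_0 = 0$ (making the event trivial) or Assumption~\ref{assumption:mean} gives $(T - t_0 + 1)b_0^2 \gg \log T$, whence the Chernoff bound for $b_0 \sum_{t' = t_0}^T z_{t'} \in \mathcal{SG}(C|b_0|\sqrt{T-t_0+1})$ at deviation $(T-t_0+1)b_0^2/4$ vanishes faster than any polynomial in $T$. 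Collectively, these events yield the $C_2 T^{-\beta}$ term in the final bound.

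The subtlety lies in $\mathcal{E}_3, \mathcal{E}_4, \mathcal{E}_6$, whose thresholds $(T-t+1)/2$, $(T-t+1)^{3/4}\log^{1/4} T$, and $(T-t+1)/4$ scale with $T-t+1$ and become trivial as $t$ approaches $T$. This is precisely why $\mathcal{T}_\beta$ excludes the final $C_{2,\beta}\log\log T$ indices. For $\mathcal{E}_3$ the Chernoff bound gives per-$t$ failure probability $2\exp[-c(T-t+1)]$; reindexing by $k = T-t+1 \geq C_{2,\beta}\log\log T$ and summing the geometric tail produces $C\exp[-cC_{2,\beta}\log\log T] = C(\log T)^{-cC_{2,\beta}}$, which is $\mathcal{O}((\log T)^{-\beta})$ once $C_{2,\beta} \geq \beta/c$. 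The same geometric-tail idea handles $\mathcal{E}_4$, where the Chernoff exponent is of larger order $\sqrt{(T-t+1)\log T}$, and $\mathcal{E}_6$, where Chernoff is replaced by the sub-exponential tail (\ref{eq:wainwright_prop_2.9}) for $\sum(z_t^2-1)$ with matching leading behavior to $\mathcal{E}_3$. The main obstacle is choosing a single triple $(C_1, C_{1,\beta}, C_{2,\beta})$ that simultaneously works for all eight events: $C_{2,\beta}$ must be picked in terms of $\beta$ and the universal constants from (\ref{eq:wainwright_prop_2.9}), $C_{1,\beta}$ must be large enough that the concentration bounds in $\mathcal{E}_1$ and $\mathcal{E}_5$ are non-trivial on $\mathcal{T}_\beta$, and $C_1$ must absorb the universal factors appearing in Lemma~\ref{lemma:sum-sub-gaussian} as well as the uniform bound on $\lVert z_t \rVert_{\psi_2}$, which itself depends critically on $s_0$ staying in $(\underline{s}, \overline{s})$ so that post-break standardization does not inflate the sub-Gaussian norm.
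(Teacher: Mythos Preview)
Your proposal is correct and follows essentially the same approach as the paper: establish uniform sub-Gaussian/sub-exponential control on $z_t$ and $z_t^2-1$, apply Chernoff or Bernstein per index, and union-bound, with the key observation that the events $\mathcal{E}_3,\mathcal{E}_4,\mathcal{E}_6$ require the endpoint exclusion and a geometric-tail (equivalently, range-splitting) argument to produce the $(\log T)^{-\beta}$ term. One small correction: $\mathcal{E}_7$ also needs the range-splitting treatment, since the Bernstein bound for the sub-exponential sum at threshold $C_1\sqrt{(T-t+1)\log T}$ falls into the exponential regime when $T-t+1$ is small, yielding only $\exp[-c\sqrt{\log T}]$ rather than $T^{-(1+\beta)}$; the paper handles this exactly as you do for $\mathcal{E}_4$.
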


\begin{proof}
By the union bound $\Pr(\mathcal{E}^c) = 1 - P(\cup_{i=1}^{12} \mathcal{E}^c_i) < 1 - \sum_{i=1}^{12} \Pr(\mathcal{E}^c_i)$, so it is sufficient to show that $\Pr(\mathcal{E}^c_i) \leq \frac{C_i}{T^{\beta}} + \frac{C_{i,\beta}}{\log^\beta T}$ for each $i$. 

If Assumption \ref{assumption:scale} does not hold, then we assume $\Var(y_t)$ is constant and it is without loss of generality to assume that $s_0=1$. If Assumption \ref{assumption:scale} does hold, then there is some $\underline{s} > 0$ such that $s_0 > \underline{s}$. Let $M_s = \min\{1, \underline{s}\}$, then by Lemma 2.6.8 of \cite{Vershynin18} we have: 
\begin{align*}
    \lVert z_t \rVert_{\psi_2} &\leq 
    \frac{\left\lVert y_t - b_0\mathbbm{1}\{t\geq t_0\}\right\rVert_{\psi_2}}{M_s} \lesssim \frac{\left\lVert y_t\right\rVert_{\psi_2}}{M_s}.
\end{align*}
By assumption, there is some $K > 0$ so that $\sup_{t \geq 1} \lVert y_t\rVert_{\psi_2} < K$. Therefore, $\lVert z_t \rVert_{\psi_2} \lesssim M_s^{-1}K$, i.e. there exists some constant $K_1 \geq  M_s^{-1}K$ so that $z_t \in \mathcal{SG}(K_1)$. Then by Lemma \ref{lemma:sum-sub-gaussian}, for any index set $\mathcal{T}$, we have:
\begin{align*}
    \sum_{t\in\mathcal{T}} z_t \in\mathcal{SG}\left(\sqrt{|\mathcal{T}|}K_1\right)
\end{align*}
and thus, for any index set $\mathcal{T}_t$ that depends on $t$, we have:
\begin{align*}
    \Pr\left(\left|\sum_{t'\in\mathcal{T}_t} z_{t'}\right| \geq C_1 \sqrt{|\mathcal{T}_t|\log T}\right) \leq 2 \exp\left[-\frac{C_1^2\log T}{2K_1^2}\right]. \tag{Chernoff bound (\ref{eq:chernoff})} 
\end{align*}
So for $C_1 \geq \sqrt{2K_1^2(1+\beta)}$, we have: 
\begin{align*}
    \Pr\left(\bigcup_{t\in\mathcal{T}} \left\{\left|\sum_{t'\in\mathcal{T}} z_{t'}\right| \geq C_1\sqrt{ |\mathcal{T}_t|\log T}\right\} \right) &\leq \sum_{t\in\mathcal{T}}\Pr\left( \left|\sum_{t'\in\mathcal{T}_t} z_{t'}\right| \geq C_1 \sqrt{|\mathcal{T}_t|\log T} \right) \tag{union bound} \\
    &\leq \sum_{t\in\mathcal{T}}  2 \exp\left[-\frac{C^2_1\log T}{2K_1^2}\right] \\
    &\leq \frac{2}{T^\beta} \tag{$\mathcal{T}\subset[T]$ and $C^2_1 \geq 2K_1^2(1+\beta)$}
\end{align*}
which proves $\Pr(\mathcal{E}^c_1), \Pr(\mathcal{E}^c_2) \leq \frac{2}{T^\beta}$.

Next, we have: 
\begin{align*}
    \Pr(\mathcal{E}_3^c) &= \Pr\left(\bigcup_{t\in\mathcal{T}_\beta}\left\{\left|\sum_{t'=t}^T z_{t'}\right| \geq \frac{T-t+1}{4}\right\}\right) \\
    &\leq \sum_{t\in\mathcal{T}_\beta}\Pr\left(\left|\sum_{t'=t}^T z_{t'}\right| \geq \frac{T-t+1}{4}\right) \tag{union bound} \\
    &\leq \sum_{t\in\mathcal{T}_\beta} 2\exp\left[-\frac{T-t+1}{32K^2_1}\right] \tag{Chernoff bound (\ref{eq:chernoff})} \\
    &\leq \sum_{1 \leq t \leq T-C_{2,\beta} \log(\log T)} 2\exp\left[-\frac{T-t+1}{32K^2_1}\right] 
 \tag{$\mathcal{T}_\beta \subset [T-C_{2,\beta} \log(\log T)]$} \\
    &\leq \sum_{1\leq t \leq T-C_{2,\beta}\log T} 2\exp\left[-\frac{T-t+1}{32K^2_1}\right]\\
    &\quad+ \sum_{T- C_{2,\beta} \log T \leq t \leq T-C_{2,\beta} \log(\log T)} 2\exp\left[-\frac{T-t+1}{32K^2_1}\right].
\end{align*}
Note that for the indices of the first sum $T-t+1\geq C_{2,\beta}\log T$ and there are at most $T$ terms in the sum. Similarly, for the indices of the second sum $T-t+1 \geq C_{2,\beta} \log(\log T)$ and there are at most $C_{2,\beta} \log T$. Therefore:
\begin{align*}
    \sum_{1\leq t \leq T-C_{2,\beta}\log T} 2\exp\left[-\frac{T-t+1}{32K^2_1}\right] &\leq 2\exp\left[\log T-\frac{C_{2,\beta}\log T}{32K_1^2}\right] \\
    \sum_{T- C_{2,\beta} \log T \leq t \leq T-C_{2,\beta} \log(\log T)} 2\exp\left[-\frac{T-t+1}{32K^2_1}\right] &\leq  2C_{2,\beta} \exp\left[\log(\log T) - \frac{C_{2,\beta} \log(\log T)}{32 K_1^2} \right].
\end{align*}
So for $ C_{2,\beta}\geq 32K_1^2(1+\beta)$ we get:
\begin{align*}
    \Pr(\mathcal{E}_3^c) &\leq \frac{2}{T^{\beta}} + \frac{2C_{2,\beta}}{\log^\beta T}.
\end{align*}

Next, we have:
\begin{align*}
    \Pr(\mathcal{E}_4^c) &\leq \Pr\left(\bigcup_{t\in\mathcal{T}_\beta}\left\{\left|\sum_{t'=t}^T z_{t'}\right| \geq C_1^{1/2}(T-t+1)^{3/4}\log^{1/4} T\right\}\right) \\
    &\leq  \sum_{t\in\mathcal{T}_\beta} \Pr\left(\left|\sum_{t'=t}^T z_{t'}\right| \geq C_1^{1/2}(T-t+1)^{3/4}\log^{1/4} T\right)\tag{union bound} \\
    &\leq \sum_{t\in\mathcal{T}_\beta} 2\exp\left[- \frac{C_1\sqrt{(T-t+1)\log T}}{2K_1^2}\right] \tag{Chernoff bound (\ref{eq:chernoff})} \\
    &\leq  \sum_{1\leq t \leq T-\log T} 2\exp\left[- \frac{C_1\sqrt{(T-t+1)\log T}}{2K_1^2}\right] \\
    &\quad \sum_{T-\log T < t \leq T} 2\exp\left[- \frac{C_1\sqrt{(T-t+1)\log T}}{2K_1^2}\right] \\
    &\leq  2\exp\left[\log T- \frac{C_1\log T}{2K_1^2}\right] \\
    &\quad  2\exp\left[\log(\log (T))- \frac{C_1\sqrt{\log T}}{2K_1^2}\right]
\end{align*}
We can pick $C_1 \geq 2K_1^2(1+\beta)$, then since $\sqrt{T} \gg \log (\log T)$, an identical argument as before gives:
\begin{align*}
    \Pr(\mathcal{E}_4^c) = &\leq \frac{2}{T^{\beta}} + \frac{2}{\log^\beta T}.
\end{align*}

Next, since $\E[z^2_t] =1$, then by Exercise 2.7.10 \citealp{Vershynin18}, $\lVert z^2_t - 1 \rVert_{\psi_1} \lesssim \lVert z^2_t \rVert_{\psi_1}$, and by Lemma 2.7.6 of \citealp{Vershynin18}, $\lVert z^2_t \rVert_{\psi_1} = \lVert z_t \rVert_{\psi_2}^2 \leq K_1^2$. So by Proposition 2.7.1 of \citealp{Vershynin18} there exists some constant $K_2\geq K_1^2$ so that $z_t^2\in\mathcal{SE}(K_2,K_2)$. Then again by Lemma \ref{lemma:sum-sub-gaussian}:
\begin{align*}
    \sum_{t\in\mathcal{T}} (z^2_t -1) \in\mathcal{SE}\left(\sqrt{|\mathcal{T}|}K_2,K_2\right).
\end{align*}
Using Bernstein’s inequality (see Theorem 2.8.1 of \cite{Vershynin18}) for any $t \in [T]$ we get:
\footnotesize
\begin{align*}
     \Pr(\mathcal{E}_5^c)&\leq \Pr\left(\left|\sum_{t'=\min\{t_0,t\}}^{\max\{t_0,t\}-1} (z_{t'}^2-1)\right| >C_1\sqrt{|t_0-t| \log T}  + C_1\log T\right) \\
     &\leq \exp\left[-\frac{1}{2K_2} \min\left\{ C_1\sqrt{|t_0-t| \log T} + C_1\log T, \frac{C_1^2(\log T +\log^2T/|t_0-t|)}{K_2}\right\}\right] \\
     &\leq \exp\left[-\frac{C_1\log T}{2K^2_2} \right]. \tag{$C_1,K_2 > 1 $ WLOG}
\end{align*}
\normalsize
So for $C_1 >2K_2^2(1+\beta)$, we can use the same union bounding argument to get $\Pr(\mathcal{E}_5^c) < T^{-\beta}$.

Note that since $z_{t'}^2 \in \mathcal{SE}(K_2,K_2)$ and $K_1 \geq 1$ WLOG, then:
\begin{align*}
    \Pr\left(\left|\sum_{t'=t}^T (z_{t'}^2 - 1)\right| \geq \frac{T-t+1}{4}\right) \leq 2\exp\left[-\frac{T-t+1}{32K_2^2}\right].
\end{align*}
So by an identical argument as we used for $\mathcal{E}_3$ we can pick $ C_{2,\beta}\geq 32K_2^2(1+\beta)$ to get:
\begin{align*}
    \Pr(\mathcal{E}_6^c) = \Pr\left(\bigcup_{t\in\mathcal{T}_\beta}\left\{\left|\sum_{t'=t}^T z^2_{t'}-1\right| \geq \frac{T-t+1}{4}\right\}\right) &\leq \frac{2}{T^{\beta}} + \frac{2C_{2,\beta}}{\log^\beta T}.
\end{align*}
Similarly, we have:
\begin{align*}
    \Pr(\mathcal{E}_7^c) &= \Pr\left(\bigcup_{t=1}^T\left\{\left|\sum_{t'=t}^T (z_{t'}^2 - 1)\right| \geq C_1\sqrt{(T-t+1)\log T}\right\}\right) \\
    &\leq \sum_{t=1}^T\Pr\left(\left|\sum_{t'=t}^T (z_{t'}^2 - 1)\right| \geq C_1\sqrt{(T-t+1)\log T}\right) \tag{union bound} \\
    &\leq\sum_{t=1}^T 2\exp\left[-\frac{1}{2K_2}\min\left\{C_1\sqrt{(T-t+1)\log T}, \frac{C_1^2\log T}{K_2}\right\}\right] \tag{Bernstein inequality} \\
    &\leq \sum_{1\leq t \leq T-\log T} 2\exp\left[-\frac{1}{2K_2}\min\left\{C_1\sqrt{(T-t+1)\log T}, \frac{C_1^2\log T}{K_2}\right\}\right] \\
    &\quad + \sum_{T-\log T < t \leq T} 2\exp\left[-\frac{1}{2K_2}\min\left\{C_1\sqrt{(T-t+1)\log T}, \frac{C_1^2\log T}{K_2}\right\}\right] \\
    &\leq  2\exp\left[\log T-\frac{C_1\log T}{2K^2_2}\right] \tag{$T-t+1 \geq \log T$}\\
    &\quad + \sum_{T-\log T < t \leq T} 2\exp\left[\log(\log T)-\frac{C_1\sqrt{\log T}}{2K^2_2}\right]. \tag{$T-t+1 \geq 1$}
\end{align*}
So for $C_1 \geq 2K_2^2(1+\beta)$ the same argument we used for $\mathcal{E}_4$ gives:
\begin{align*}
    \Pr(\mathcal{E}_7^c) = &\leq \frac{2}{T^{\beta}} + \frac{2}{\log^\beta T}.
\end{align*}
Finally, note that event $\mathcal{E}_8$ holds trivially if $b_0 =0$, while if Assumption \ref{assumption:mean} holds, then $(T-t_0+1)b_0^2 \gg \log T$, so for large $T$ we have $(T-t_0+1)b_0^2\geq 32K_1^2(1+\beta)\log T$ we have:
\begin{align*}
    \Pr\left(\left|b_0\sum_{t'=t_0}^T z_{t'}\right| >\frac{(T-t_0+1)b_0^2 }{4}\right) &\leq 2\exp\left[- \frac{(T-t_0+1)b_0^2}{32K_1^2}\right] \tag{Chernoff bound (\ref{eq:chernoff})} \\
    &\leq \frac{2}{T^\beta}
\end{align*}
\end{proof}
\subsection{Proof of Theorems \ref{theorem:sscp} and \ref{theorem:smscp}}
\label{app:localization-smscp}

\cite{Cappello25} proved Theorem \ref{theorem:sscp} under the stronger assumptions of Gaussian data, $(s^2_0-1)^2>B$ for some fixed $B>0$, and the minimum spacing condition $\Delta_T > cT$ for some $c \in (0,1/2)$. The generalization of Theorem \ref{theorem:sscp} to sub-Gaussian data, and the signal strength and the minimum spacing conditions implied by Assumption \ref{assumption:scale} follows the same argument we use in proving Theorem \ref{theorem:smscp} and is therefore omitted.

If we take the ratio of the posterior probabilities of the change-point location from the meanvar-scp model, we get:
\begin{align*}
    \log \frac{\Pr(\tau = t_0  \:|\: \mathbf{y}_{1:T} ; \omega_0, u_0, v_0,\boldsymbol{\pi}_{1:T})}{\Pr(\tau = t  \:|\: \mathbf{y}_{1:T} \:; \omega_0, u_0, v_0,\boldsymbol{\pi})} &=  \log \frac{p(\mathbf{y}_{1:T} \:|\:\tau = t_0 \:; \omega_0, u_0, v_0)\Pr(\tau = t_0 \:; \boldsymbol{\pi}_{1:T})/ p(\mathbf{y}_{1:T}\:;\omega_0, u_0, v_0,\boldsymbol{\pi}_{1:T})}{p(\mathbf{y}_{1:T} \:|\:\tau = t \:; \omega_0, u_0, v_0)\Pr(\tau = t \:;\boldsymbol{\pi}_{1:T})/ p(\mathbf{y}_{1:T}\:;\omega_0, u_0, v_0,\boldsymbol{\pi}_{1:T})} \tag{Bayes' rule} \\
    &= \log \frac{p(\mathbf{y}_{1:T} \:|\:\tau = t_0 \:; \omega_0, u_0, v_0)}{p(\mathbf{y}_{1:T} \:|\:\tau = t \:; \omega_0, u_0, v_0)} + \log\frac{\pi_{t_0}}{\pi_t}.
\end{align*}
For some $C_{1,\beta},C_{2,\beta} > 0$ we can again define the set:
\begin{align*}
    \mathcal{T}_{\beta} &:= \left\{ 1 \leq t \leq  T - C_{2,\beta}\log(\log T)\::\: |t_0 - t| > \frac{C_{1,\beta}\log T}{\max\{\min\{b_0^2,b_0^2/s_0^2\},(s_0^2-1)^2\}} \right\} 
\end{align*}
By the assumptions of Theorem \ref{theorem:smscp}, for large $T$ there is some constant $C_\pi > 0$ such that for all $t \in [T]$: 
\begin{align}
    \log \frac{\pi_t}{\pi_{t_0}} \leq C_\pi \log T.
\end{align}
Therefore, if we can show that for large enough $C_{1,\beta}$ and $C_{2,\beta}$:
\begin{align}
    \lim_{T\to\infty}\Pr\left(\bigcap_{t\in\mathcal{T}_{\beta}}\left\{ \log \frac{p(\mathbf{y}_{1:T} \:|\:\tau = t_0 \:; \omega_0, u_0, v_0)}{p(\mathbf{y}_{1:T} \:|\:\tau = t \:; \omega_0, u_0, v_0)} > C_\pi \log T\right\}\right) = 1 \label{eq:thm3-result}
\end{align}
then with probability approaching one as $T \to \infty$ we will have:
\begin{align*}
    \max_{t\in\mathcal{T}_{\beta}} \; \Pr(\tau = t  \;|\; \mathbf{y}_{1:T} ; \omega_0, u_0, v_0,\boldsymbol{\pi}_{1:T}) < \Pr(\tau = t_0  \;|\; \mathbf{y}_{1:T} ; \omega_0, u_0, v_0,\boldsymbol{\pi}_{1:T}).
\end{align*}
We now prove that there exists a $C_{1,\beta},C_{2,\beta}> 0$ such that the event in (\ref{eq:thm3-result}) holds on the event $\mathcal{E}$ defined in Lemma \ref{lemma:thm3-event-bound}.

\begin{proof} 
We begin by directly calculating the marginal evidence of $\mathbf{y}_{1:T}$: 
\begin{align*}
    p(\mathbf{y}_{1:T} \;|\; \tau = t; \omega_0,u_0,v_0) &= \int_{\mathbb{R}^2} \mathbb{P}(\mathbf{y}_{1:T} \:|\:b,s,\tau = t) \:\; d\mathbb{P}(b,s;\omega_0, u_0, v_0) \\
    &= \int_{\mathbb{R}^2}  \prod_{t=1}^Tp(y_t \;|\; b, s, \tau =t) p(b|s;\omega_0)p(s;u_0,v_0)\;db\;ds\\
    &= \int_{0}^\infty\int_{-\infty}^\infty (2\pi)^{-T/2} s^{(T-t+1)/2} \exp\left[-\frac{\sum_{t'=1}^{t-1} y_{t'}^2 + \sum_{t'=t}^{T} s(y_{t'} - b)^2}{2}\right]  \\
    &\quad\quad\quad\quad\quad\quad\quad  \times (2\pi s\omega_0)^{\frac{1}{2}}\exp\left[-\frac{\omega_0 s b^2}{2}\right] \\
    &\quad\quad\quad\quad\quad\quad\quad  \times\frac{v_0^{u_0}}{\Gamma(u_0)}s^{u_0- 1}\exp(-v_0 s) \; db\; ds \\
    &\propto \exp\left[-\frac{1}{2}\sum_{t'=1}^{t-1} y_{t'}^2\right] \int_{0}^\infty\int_{-\infty}^\infty s^{\frac{1}{2}}\exp\left[-\frac{s}{2}\left((\omega_0 + T-t+1)b^2 - 2b\sum_{t'=t}^{T}y_{t'}\right)\right] \\
    &\quad\quad\quad\quad\quad\quad\quad\quad\quad\quad\quad\quad\quad   \times s^{u_0 + \frac{T-t+1}{2} - 1}\exp\left[-s \left(v_0 + \frac{1}{2}\sum_{t=t'}^Ty_{t'}^2\right)\right] \; db\; ds \\
    &= \exp\left[-\frac{1}{2}\sum_{t'=1}^{t-1} y_{t'}^2\right] \int_{0}^\infty\int_{-\infty}^\infty s^{\frac{1}{2}}\exp\left[-\frac{s\overline{\omega}_t}{2}\left(b - \overline{b}_{t}\right)^2\right] s^{\overline{u}_t  - 1}\exp\left[-s \overline{v}_t\right] \; db\; ds \tag{$\overline{\omega}_t = \omega_0 + T - t+1, \; \overline{b}_t = \overline{\omega}_t^{-1}\sum_{t'=t}^T y_{t'}, \;\overline{u}_t = u_0 + \frac{T-t+1}{2}, \;\overline{v}_t = v_0 +\frac{1}{2} \sum_{t'=t}^T y_{t'}^2 - \frac{\overline{\omega}_t\overline{b}_t^2}{2}$} \\ 
    &\propto \exp\left[-\frac{1}{2}\sum_{t'=1}^{t-1} y_{t'}^2\right] \overline{\omega}_t^{-\frac{1}{2}} \frac{\Gamma(\overline{u}_t)}{\overline{v}_t^{\overline{u}_T}} \int_{0}^\infty\int_{-\infty}^\infty p(b|s;\overline{b}_t,\overline{\omega}_t)p(s;\overline{u}_t,\overline{v}_t)\;db\;ds \\
    &= \exp\left[-\frac{1}{2}\sum_{t'=1}^{t-1} y_{t'}^2\right] \overline{\omega}_t^{-\frac{1}{2}} \frac{\Gamma(\overline{u}_t)}{\overline{v}_t^{\overline{u}_T}}. 
\end{align*}
Taking the log of the expression above we get:
\begin{align*}
    \log p(\mathbf{y}_{1:T} \;|\; \tau = t; \omega_0,u_0,v_0) &= -\frac{1}{2}\sum_{t'=1}^{t-1} y_{t'}^2 - \frac{1}{2} \log \overline{\omega}_t + \log \Gamma(\overline{u}_t) - \overline{u}_t\log \overline{v}_t \\
    &= -\frac{1}{2}\log(\omega_0 + T - t + 1) - \frac{1}{2}\sum_{t'=1}^{t-1} y_{t'}^2 + \log \Gamma\left(u_0 + \frac{T - t +1}{2}\right)  \\
    &\quad\: - \left(u_0 + \frac{T - t +1}{2}\right)\log\left[v_0 +\frac{1}{2}\left(\sum_{t'=t}^T y_{t'}^2 - \frac{\left(\sum_{t'=t}^T y_{t'}\right)^2 }{\omega_0 + T- t +1}\right)\right] \\
    &\quad\: + C
\end{align*}
where $C$ is a constant that does not depend on the value of $\tau$. Recall Binet's first formula for the log gamma function for $x > 0$ (see e.g. p. 249 of \citealp{Whittaker96}): 
\begin{align*}
    \log \Gamma(x) = \left(x - \frac{1}{2}\right) \log x - x + \frac{1}{2} \log 2 \pi + \int_{0}^\infty\left(\frac{1}{2} - \frac{1}{t} + \frac{1}{e^t - 1} \right)\frac{e^{-tx}}{t} \;dt.
\end{align*}
For $x \geq 1/2$ we have:
\begin{align*}
    \int_{0}^\infty\left(\frac{1}{2} - \frac{1}{t} + \frac{1}{e^t - 1} \right)\frac{e^{-tx}}{t} \;dt \leq \int_{0}^\infty\left(\frac{1}{2} - \frac{1}{t} + \frac{1}{e^t - 1} \right)\frac{e^{-t/2}}{t} \;dt \approx 0.153
\end{align*}
and since $\frac{T-t+1}{2} \geq \frac{1}{2}$, we can write:
\begin{align*}
    \log \Gamma\left(u_0 + \frac{T - t +1}{2}\right) = \left(u_0 + \frac{T - t}{2}\right) \log\left(u_0 + \frac{T - t +1}{2}\right) - \frac{T - t +1}{2} +\mathcal{O}(1)
\end{align*}
and thus:
\small
\begin{align*}
    \log p(\mathbf{y}_{1:T} \;|\; \tau = t; \omega_0,u_0,v_0) &= -\frac{1}{2}\log(\omega_0 + T - t + 1) - \frac{1}{2}\sum_{t'=1}^{t-1} y_{t'}^2  \\
    &\quad\: +\left(u_0 + \frac{T-t}{2}\right) \log \left(u_0 + \frac{T-t+1}{2}\right) - \frac{(T-t+1)}{2}  \\
    &\quad\: - \left(u_0 + \frac{T - t +1}{2}\right)\log\left[v_0 +\frac{1}{2}\left(\sum_{t'=t}^T y_{t'}^2 - \frac{\left(\sum_{t'=t}^T y_{t'}\right)^2}{\omega_0 + T- t +1}\right) \right] \\
    &\quad\:+ \mathcal{O}(1).
\end{align*}
\normalsize
Letting $\{\omega_0,u_0, v_0\} \to \mathbf{0}$, we have:
\small
\begin{align}
    \lim_{\omega_0,u_0, v_0 \to \mathbf{0}}\log p(\mathbf{y}_{1:T} \;|\; \tau = t; \omega_0,u_0,v_0) &= -\frac{1}{2}\log(T - t + 1) - \frac{1}{2}\sum_{t'=1}^{t-1} y_{t'}^2  \notag \\
    &\quad\: + \left(\frac{T-t}{2}\right) \log \left(\frac{T-t+1}{2}\right) - \frac{(T-t+1)}{2}  \notag\\
    &\quad\: - \left(\frac{T - t +1}{2}\right)\log\left[\frac{1}{2}\left(\sum_{t'=t}^T y_{t'}^2 - \frac{\left(\sum_{t'=t}^T y_{t'}\right)^2}{T- t +1}\right) \right] \notag \\
    &\quad\:+ \mathcal{O}(1)\notag \\
    &= -\log(T - t + 1) - \frac{1}{2}\sum_{t'=1}^{t-1} y_{t'}^2  - \frac{(T-t+1)}{2}  \notag\\
    &\quad\: - \left(\frac{T - t +1}{2}\right)\log\left[\frac{\sum_{t'=t}^T y_{t'}^2}{T-t+1} - \left(\frac{\sum_{t'=t}^T y_{t'}}{T-t+1}\right)^2 \right] \notag \\
    &\quad\:+ \mathcal{O}(1)\notag \\
    &:= \log \alpha_t. \label{eq:thm3-delta_t}
\end{align}
\normalsize
Note that when $t = T$, then:
\begin{align*}
    \frac{\sum_{t'=t}^T y_{t'}^2}{T-t+1} = \left(\frac{\sum_{t'=t}^T y_{t'}}{T-t+1}\right)^2  
\end{align*}
but $T \not\in \mathcal{T}_\beta$, so $\alpha_t$ is still well-defined for each $t\in\mathcal{T}_\beta$. As in the proof of Theorem \ref{theorem:smcp}, we can use $\alpha_t$ to approximate the log-evidence for small values of $\omega_0$, $u_0$, and $v_0$ (see Footnote \ref{fn:approx}). Let $C_{1,\beta}$ be large enough so that the statement of Lemma \ref{lemma:thm3-event-bound} holds. Then we will be on the event $\mathcal{E}$ defined in Lemma \ref{lemma:thm3-event-bound} with probability approaching one as $T\to\infty$. We can show that the event in (\ref{eq:thm3-result}) holds on $\mathcal{E}$. To do so, we define the standardized terms: 
\begin{align*}
    z_t = 
    \begin{cases}
        y_t, & \text{if } t < t_0, \\
        \frac{y_t - b_0}{s_0},& \text{if } t \geq t_0.
    \end{cases}
\end{align*}
There are two cases to consider: 

\subsubsection*{Case 1: $t > t_0$, and $t \in \mathcal{T}_{\beta}$.}

From (\ref{eq:thm3-delta_t}) we have:
\begin{align*}
    \log \frac{\alpha_{t_0}}{\alpha_t} &= \log \left(\frac{T - t + 1}{T - t_0 +1} \right) + \frac{1}{2} \sum_{t' = t_0}^{t - 1} y_{t'}^2 - \frac{(t-t_0)}{2} \\
    &\quad\: + \left(\frac{T - t +1}{2}\right)\log\left[\frac{\sum_{t'=t}^T y_{t'}^2}{T-t+1} - \left(\frac{\sum_{t'=t}^T y_{t'}}{T-t+1}\right)^2 \right] \\
    &\quad\: -\left(\frac{T - t_0 +1}{2}\right)\log\left[\frac{\sum_{t'=t_0}^T y_{t'}^2}{T-t_0+1} - \left(\frac{\sum_{t'=t_0}^T y_{t'}}{T-t_0+1}\right)^2 \right] \\
    &\quad\: + \mathcal{O}(1).
\end{align*}
Since $t > t_0$, we can standardize each term in the sums to get:
\begin{align*}
    \sum_{t' = t_0}^{t - 1} y_{t'}^2 &= \sum_{t' = t_0}^{t - 1} (s_0z_{t'} + b_0)^2 \\
    &= s_0^2 \sum_{t' = t_0}^{t - 1} (z_{t'}^2 -1) + 2 b_0 s_0 \sum_{t' = t_0}^{t - 1} z_{t'} + (t-t_0)(b_0^2 +s_0^2)
\end{align*}
and:
\begin{align*}
    \frac{\sum_{t'=t}^T y_{t'}^2}{T-t+1} - \left(\frac{\sum_{t'=t}^T y_{t'}}{T-t+1}\right)^2 &= \frac{\sum_{t'=t}^T (s_0z_{t'} + b_0)^2}{T-t+1} - \left(\frac{\sum_{t'=t}^T (s_0z_{t'} + b_0)}{T-t+1}\right)^2 \\
    &= s_0^2 \left[\frac{\sum_{t'=t}^T z_{t'}^2}{T-t+1} - \left(\frac{\sum_{t'=t}^T z_{t'}}{T-t+1}\right)^2\right].
\end{align*}
If we define the function $f(x) := x - \log(x)-1$,\footnote{Note that this function previously appeared in \cite{Bai10} as a measure of the signal strength of $s_0^2$.} then we now have:
\begin{align}
    \log \frac{\alpha_{t_0}}{\alpha_t} &= \log \left(\frac{T - t + 1}{T - t_0 +1} \right) + \frac{s_0^2}{2} \sum_{t' = t_0}^{t - 1} (z_{t'}^2-1) + b_0 s_0 \sum_{t' = t_0}^{t - 1} z_{t'} + \left(\frac{t-t_0}{2}\right)[b_0^2+ f(s_0^2)] \notag \\
    &\quad\: + \left(\frac{T - t +1}{2}\right)\log\left[\frac{\sum_{t'=t}^T z_{t'}^2}{T-t+1} - \left(\frac{\sum_{t'=t}^T z_{t'}}{T-t+1}\right)^2\right] \notag \\
    &\quad\: -\left(\frac{T - t_0 +1}{2}\right)\log\left[\frac{\sum_{t'=t_0}^T z_{t'}^2}{T-t_0+1} - \left(\frac{\sum_{t'=t_0}^T z_{t'}}{T-t_0+1}\right)^2\right] \notag \\
    &\quad\: + \mathcal{O}(1). \label{eq:thm3-cs1-bd0}
\end{align}
For any $t < T$, because $x^2$ is strictly convex on $(0,\infty)$, then with probability one we have:
\begin{align}
      \frac{\sum_{t'=t}^T z_{t'}^2}{T-t+1} > \left(\frac{\sum_{t'=t}^T z_{t'}}{T-t+1}\right)^2 \implies \frac{\sum_{t'=t}^T (z_{t'}^2 -1)}{T-t+1} - \left(\frac{\sum_{t'=t}^T z_{t'}}{T-t+1}\right)^2 > -1. \label{eq:neg1-bd}
\end{align}
Therefore, we can use the inequality $\frac{x}{1+x} \leq \log(1+x) \leq x$ for $x > -1$ to get:
\small
\begin{align*}
    -\log\left[1 + \frac{\sum_{t'=t_0}^T (z_{t'}^2 -1)}{T-t_0+1} - \left(\frac{\sum_{t'=t_0}^T z_{t'}}{T-t_0+1}\right)^2\right] &\geq -\frac{\sum_{t'=t_0}^T (z_{t'}^2 -1)}{T-t_0+1} + \left(\frac{\sum_{t'=t_0}^T z_{t'}}{T-t_0+1}\right)^2  >-\frac{\sum_{t'=t_0}^T (z_{t'}^2 -1)}{T-t_0+1} 
\end{align*}
\normalsize
and: 
\small
\begin{align*}
    \log\left[1 + \frac{\sum_{t'=t}^T (z_{t'}^2-1)}{T-t+1} - \left(\frac{\sum_{t'=t}^T z_{t'}}{T-t+1}\right)^2\right] &\geq \frac{\frac{\sum_{t'=t}^T (z_{t'}^2-1)}{T-t+1} - \left(\frac{\sum_{t'=t}^T z_{t'}}{T-t+1}\right)^2}{1 + \frac{\sum_{t'=t}^T (z_{t'}^2-1)}{T-t+1} - \left(\frac{\sum_{t'=t}^T z_{t'}}{T-t+1}\right)^2 } \\
    &= \frac{\sum_{t'=t}^T (z_{t'}^2-1)}{T-t+1} - \left(\frac{\sum_{t'=t}^T z_{t'}}{T-t+1}\right)^2 - \frac{\left[\frac{\sum_{t'=t}^T (z_{t'}^2-1)}{T-t+1} - \left(\frac{\sum_{t'=t}^T z_{t'}}{T-t+1}\right)^2 \right]^2}{1 + \frac{\sum_{t'=t}^T (z_{t'}^2-1)}{T-t+1} - \left(\frac{\sum_{t'=t}^T z_{t'}}{T-t+1}\right)^2 }.
\end{align*}
\normalsize
Combining these bounds with (\ref{eq:thm3-cs1-bd0}) gives: 
\begin{align}
    \log \frac{\alpha_{t_0}}{\alpha_t} &\geq  \log \left(\frac{T - t + 1}{T - t_0 +1} \right) + \left(\frac{t-t_0}{2}\right)\left[b_0^2 + f(s_0^2)\right]  + \frac{(s_0^2-1)}{2} \sum_{t' = t_0}^{t - 1} (z_{t'}^2-1) + b_0 s_0 \sum_{t' = t_0}^{t - 1} z_{t'} \notag \\
    &\quad\: - \frac{\left(\sum_{t'=t}^T z_{t'}\right)^2}{2(T-t+1)} - \frac{\frac{(T-t+1)}{2}\left[\frac{\sum_{t'=t}^T (z_{t'}^2-1)}{T-t+1} - \left(\frac{\sum_{t'=t}^T z_{t'}}{T-t+1}\right)^2 \right]^2}{1 + \frac{\sum_{t'=t}^T (z_{t'}^2-1)}{T-t+1} - \left(\frac{\sum_{t'=t}^T z_{t'}}{T-t+1}\right)^2 } + \mathcal{O}(1). \label{eq:thm3-cs1-bd1}
\end{align}
On the event $\mathcal{E}$ defined in Lemma \ref{lemma:thm3-event-bound}, for some constant $C_1 > 0$ we have:
\begin{align*}
    \left|\sum_{t'=t_0}^{t-1} z_{t'}\right| &\leq C_1\sqrt{(t-t_0)\log  T}, \tag{on $\mathcal{E}_1\subset\mathcal{E}$} \\
    \left|\sum_{t' = t_0}^{t - 1} (z_{t'}^2-1) \right| &\leq C_1\sqrt{(t-t_0)\log  T} +C_1\log T \tag{on $\mathcal{E}_5\subset\mathcal{E}$} 
\end{align*}
and since $-C_1|s_0^2 - 1| \log T \geq -C_1(\overline{s}^2 + 1) \log T = \mathcal{O}(\log T)$ and $\log \left(\frac{T - t + 1}{T - t_0 +1} \right) = \mathcal{O}(\log T)$, we have
\begin{align}
    \log \frac{\alpha_{t_0}}{\alpha_t} &\geq \left(\frac{t-t_0}{2}\right)\left[b_0^2 + f(s_0^2)\right]  - \left(\frac{|s_0^2-1|}{2}  + |b_0| s_0\right)C_1\sqrt{(t-t_0)\log  T} \notag \\
    &\quad\:  - \frac{\left(\sum_{t'=t}^T z_{t'}\right)^2}{2(T-t+1)} - \frac{\frac{(T-t+1)}{2}\left[\frac{\sum_{t'=t}^T (z_{t'}^2-1)}{T-t+1} - \left(\frac{\sum_{t'=t}^T z_{t'}}{T-t+1}\right)^2 \right]^2}{1 + \frac{\sum_{t'=t}^T (z_{t'}^2-1)}{T-t+1} - \left(\frac{\sum_{t'=t}^T z_{t'}}{T-t+1}\right)^2 } + \mathcal{O}(\log T). \label{eq:thm3-cs1-bd2}
\end{align}
Next, on the event $\mathcal{E}$ defined in Lemma \ref{lemma:thm3-event-bound}, for the same universal constant $C_1 > 0$ we have:
\begin{align*}
    \left|\sum_{t' = t}^{T} z_{t'}\right| &< C_1 \sqrt{(T-t+1)\log T} \tag{on $\mathcal{E}_2\subset\mathcal{E}$} 
\end{align*}
and thus:
\begin{align*}
    -\frac{\left(\sum_{t'=t}^T z_{t'}\right)^2}{2(T-t+1)} > - \frac{C_1^2}{2}\log T.
\end{align*}
Next, for $t\in\mathcal{T}_\beta$ on the event $\mathcal{E}$, we have:
\begin{align*}
    \left|\frac{\sum_{t'=t}^T z_{t'}}{T-t+1}\right| &\leq \frac{1}{2}, \tag{on $\mathcal{E}_3\subset\mathcal{E}$} \\
    \left|\sum_{t'=t}^T z_{t'}\right| &< C_1^{1/2}(T-t+1)^{3/4} \log^{1/4} T \tag{on $\mathcal{E}_4\subset\mathcal{E}$} \\
    \left| \frac{\sum_{t'=t}^T (z_{t'}^2-1)}{T-t+1}\right| &\leq \frac{1}{4}, \tag{on $\mathcal{E}_6\subset\mathcal{E}$} \\
    \left|\sum_{t'=t}^T (z_{t'}^2-1)\right| &< C_1\sqrt{(T-t+1)\log T}, \tag{on $\mathcal{E}_7\subset\mathcal{E}$}
\end{align*}
so on $\mathcal{E}$:
\begin{align*}
    \frac{\frac{(T-t+1)}{2}\left[\frac{\sum_{t'=t}^T (z_{t'}^2-1)}{T-t+1} - \left(\frac{\sum_{t'=t}^T z_{t'}}{T-t+1}\right)^2 \right]^2}{1 + \frac{\sum_{t'=t}^T (z_{t'}^2-1)}{T-t+1} - \left(\frac{\sum_{t'=t}^T z_{t'}}{T-t+1}\right)^2 } \leq 4C_1^2\log T.
\end{align*}
Combining these results, we can rewrite (\ref{eq:thm3-cs1-bd2}) on $\mathcal{E}$ as the deterministic bound:
\begin{align}
    \log \frac{\alpha_{t_0}}{\alpha_t} &\geq  \left(\frac{t-t_0}{2}\right)\left[b_0^2 + f(s_0^2)\right]  - \left(\frac{|s_0^2-1|}{2}  + |b_0| s_0\right)C_1\sqrt{(t-t_0)\log  T} + \mathcal{O}(\log T).\label{eq:thm3-cs1-bd3}
\end{align}
We now show that bound (\ref{eq:thm3-cs1-bd3}) is increasing as a function of $t-t_0$. Taking the derivative of this bound with respect to $t-t_0$ yields:
\begin{align}
    \frac{b_0^2 + f(s_0^2)}{2}  - \frac{C_1}{2}\left(\frac{|s_0^2-1|}{2}  + |b_0| s_0\right)\sqrt{\frac{\log T}{t-t_0}}. \label{eq:thm3-deriv1}
\end{align}
Note that (\ref{eq:thm3-deriv1}) is decreasing as a function of $t-t_0$. Suppose that: 
\begin{align*}
    \min\left\{b_0^2,\frac{b_0^2}{s_0^2}\right\} \geq (s_0^2-1)^2 \implies |t-t_0| \geq \frac{C_{1,\beta} \log T}{\min\{b_0^2,b_0^2/s_0^{2}\}}, \;\forall\; t\in\mathcal{T}_{\beta}
\end{align*}
and thus (\ref{eq:thm3-deriv1}) is bounded below by: 
\begin{align*}
    \frac{b_0^2}{2} - \frac{C_1}{2}\left(\frac{|s_0^2-1|}{2}  + |b_0| s_0\right)\frac{\min\{|b_0|,|b_0|/s_0\}}{\sqrt{C_{1,\beta}}} &\geq \frac{b_0^2}{2} - \frac{C_1}{2\sqrt{C_{1,\beta}}}\left(\frac{\min\{b^2_0,b^2_0/s^2_0\}}{2} + \min\{b^2_0s_0,b_0^2\}\right) \tag{$\min\left\{b_0^2,b_0^2/s_0^2\right\} \geq (s_0^2-1)^2$} \\
    &\geq\frac{b_0^2}{2}\left(1 - \frac{3C_1}{2\sqrt{C_{1,\beta}}}\right).
\end{align*}
So (\ref{eq:thm3-deriv1}) will be positive if we set $C_{1,\beta} > 9C_1^2/4$. On the other hand, if:
\begin{align*}
    \min\left\{b_0^2,\frac{b_0^2}{s_0^2}\right\} \leq (s_0^2 - 1)^2 \implies |t-t_0| \geq \frac{C_{1,\beta} \log T}{ (s_0^2 - 1)^2}, \;\forall\; t\in\mathcal{T}_{\beta}
\end{align*}
then (\ref{eq:thm3-deriv1}) is bounded below by: 
\begin{align*}
    \frac{f(s_0^2)}{2}  - \frac{C_1}{2}\left(\frac{|s_0^2-1|}{2}  + |b_0| s_0\right)\frac{|s_0^2-1|}{\sqrt{C_{1,\beta}}}.
\end{align*}
Suppose that $s_0^2 > 1$ so that $(s_0^2 - 1)^2 \geq b_0^2/s_0^2$, then:
\begin{align*}
    \frac{f(s_0^2)}{2}  - \frac{C_1}{2\sqrt{C_{1,\beta}}}\left(\frac{(s_0^2-1)^2}{2}  +  s^2_0(s_0^2-1)^2\right) &\geq \frac{f(s_0^2)}{2}  - \frac{C_1\overline{s}^2 (s_0^2-1)^2}{\sqrt{C_{1,\beta}}}. \tag{$s_0^2 \leq \overline{s}^2 < \infty$}
\end{align*}
Suppose that $s_0^2 > \overline{\nu}^2 > 1$ for all $T$, then: 
\begin{align*}
    \frac{f(s_0^2)}{2}  - \frac{C_1\overline{s}^2 (s_0^2-1)^2}{\sqrt{C_{1,\beta}}} \geq \frac{f(\overline{\nu}^2)}{2}  - \frac{C_1\overline{s}^2 (\overline{s}^2-1)^2}{\sqrt{C_{1,\beta}}} \tag{$s_0^2 \leq \overline{s}^2 < \infty$}
\end{align*}
and we can pick $C_{1,\beta}$ large enough so that the RHS is positive. If $s_0^2 < \underline{\nu}^2$ for some $\underline{\nu}^2 < 1$ for all $T$, then we can use the same argument and the fact that $s_0 > \underline{s} > 0$ to show (\ref{eq:thm3-deriv1}) is positive for large $C_{1,\beta}$. On the other hand, if $\lim_{T\to\infty}s_0^2=1$, then:
\begin{align*}
    \lim_{T\to\infty} \frac{f(s_0^2)}{(s_0^2 
- 1)^2} = \frac{1}{2}
\end{align*}
so for large $T$, we will have: 
\begin{align*}
    \frac{f(s_0^2)}{(s_0^2-1)^2} - \frac{1}{2} > - \frac{1}{4} \implies f(s_0^2) > \frac{(s_0^2-1)^2}{4}
\end{align*}
and thus:
\begin{align*}
    \frac{f(s_0^2)}{2}  - \frac{C_1\overline{s}^2 (s_0^2-1)^2}{\sqrt{C_{1,\beta}}} \geq \frac{f(s_0^2)}{2}\left(1  - \frac{C_1\overline{s}^2 }{2\sqrt{C_{1,\beta}}}\right).
\end{align*}
So for $C_{1,\beta} > C_1^2 \overline{s}^2/4$ bound (\ref{eq:thm3-deriv1}) is positive.

Having established that bound (\ref{eq:thm3-cs1-bd3}) is increasing as a function of $t-t_0$, then (\ref{eq:thm3-cs1-bd3}) is minimized by setting:
\begin{align*}
    t &= \inf \;\{ t\in\mathcal{T}_{\beta} : t>t_0\} \\
    &= t_0 + \frac{C_{1,\beta}\log T}{\max\{\min\{b_0^2,b_0^2/s_0^2\}, (s_0^2-1)^2\}}.
\end{align*}
Suppose first that $\min\{b_0^2,b_0^2/s_0^2\} \geq (s_0^2-1)^2$, then returning to bound (\ref{eq:thm3-cs1-bd3}), there is some $B_1 > 0$ so that: 
\begin{align*}
    \log \frac{\alpha_{t_0}}{\alpha_t} &\geq \left[\frac{C_{1,\beta}}{2\min\{1,s_0^{-2}\}} - \left(\frac{|s_0^2-1|}{2\min\{|b_0|,|b_0|/s_0\}} + \frac{s_0}{\min\{1,s_0^{-1}\}}\right)C_1\sqrt{C_{1,\beta}}\right]\log T - B_1 \log T \\
    &\geq \left[\frac{\max\{1,s_0^2\}(C_{1,\beta}-3C_1\sqrt{C_{1,\beta}})}{2} - B_1\right] \log T. \tag{$|s_0^2-1| \leq \min\{|b_0|,|b_0|/s_0\}$}
\end{align*}
So if we set $C_{1,\beta} > \max\{(3C_1 + 1)^2, 4(C_\pi + B_1)^2\}$, then (\ref{eq:thm3-result}) holds. On the other hand, if $\min\{b_0^2,b_0^2/s_0^2\} \leq (s_0^2-1)^2$, then: 
\begin{align*}
    \log \frac{\alpha_{t_0}}{\alpha_t} &\geq \left[\frac{C_{1,\beta}f(s_0^2)}{2(s_0^2-1)^2} - \left(\frac{1}{2} + \frac{|b_0|s_0}{|s_0^2-1|}\right)C_1\sqrt{C_{1,\beta}}\right] \log T - B_1 \log T \\
    &\geq \left[\frac{C_{1,\beta}f(s_0^2)}{2(s_0^2-1)^2} - \frac{3\max\{s_0^2,1\}C_1\sqrt{C_{1,\beta}}}{2} - B_1\right] \log T. \tag{$|s^2_0-1| \geq \min\{|b_0|,|b_0|/s_0\}$} 
\end{align*}
Again, if $s_0^2 > \overline{\nu}^2 > 1$ for all $T$, then: 
\begin{align*}
    \log \frac{\alpha_{t_0}}{\alpha_t} &\geq \left[\frac{C_{1,\beta}f(\overline{\nu}^2)}{2(\overline{s}^2-1)^2} - \frac{3\overline{s}^2C_1\sqrt{C_{1,\beta}}}{2} - B_1\right] \log T
\end{align*}
and we can pick $C_{1,\beta}$ large enough so that the term in square brackets is greater than $C_\pi$. An identical argument gives the same result when $s_0^2 < \underline{\nu}^2$ for some $\underline{\nu}<1$. When $\lim_{T\to\infty}s_0^2=1$, then again for large $T$, we will have: 
\begin{align*}
    \frac{f(s_0^2)}{(s_0^2-1)^2} \to \frac{1}{2} > - \frac{1}{4} \implies \frac{f(s_0^2)}{(s_0^2-1)^2} - \frac{1}{2} > \frac{1}{4}
\end{align*}
and $\max\{s_0^2,1\} < 3/2$, so: 
\begin{align*}
    \log \frac{\alpha_{t_0}}{\alpha_t} &\geq\left[\frac{C_{1,\beta}}{8} - \frac{9C_1\sqrt{C_{1,\beta}}}{4} - B_1\right] \log T
\end{align*}
and again we can pick $C_{1,\beta}$ large enough so that the term in square brackets is greater than $C_\pi$. 

\subsubsection*{Case 2: $t < t_0$ and $t\in\mathcal{T}_\beta$.}

From (\ref{eq:thm3-delta_t}) we again have:
\begin{align*}
    \log \frac{\alpha_{t_0}}{\alpha_t} &= \log \left(\frac{T - t + 1}{T - t_0 +1} \right) - \frac{1}{2} \sum_{t' = t}^{t_0 - 1} y_{t'}^2 + \frac{(t_0-t)}{2} \\
    &\quad\: + \left(\frac{T - t +1}{2}\right)\log\left[\frac{\sum_{t'=t}^T y_{t'}^2}{T-t+1} - \left(\frac{\sum_{t'=t}^T y_{t'}}{T-t+1}\right)^2 \right] \\
    &\quad\: -\left(\frac{T - t_0 +1}{2}\right)\log\left[\frac{\sum_{t'=t_0}^T y_{t'}^2}{T-t_0+1} - \left(\frac{\sum_{t'=t_0}^T y_{t'}}{T-t_0+1}\right)^2 \right] \\
    &\quad\: + \mathcal{O}(1).
\end{align*}
After standardizing terms we have:
\small
\begin{align*}
    \sum_{t' = t}^{t_0 - 1} y_{t'}^2 &= (t_0-t) + \sum_{t' = t}^{t_0 - 1} (z_{t'}^2-1), \\
    \frac{\sum_{t'=t_0}^T y_{t'}^2}{T-t_0+1} - \left(\frac{\sum_{t'=t_0}^T y_{t'}}{T-t_0+1}\right)^2
    &= s_0^2 \left[1 + \frac{\sum_{t'=t_0}^T (z_{t'}^2-1)}{T-t_0+1} - \left(\frac{\sum_{t'=t_0}^T z_{t'}}{T-t_0+1}\right)^2\right].
\end{align*}
\normalsize
Therefore, we can write:
\small
\begin{align*}
    \log \frac{\alpha_{t_0}}{\alpha_t} &> \log \left(\frac{T - t + 1}{T - t_0 +1} \right)- \frac{1}{2} \sum_{t' = t}^{t_0 - 1} (z_{t'}^2-1) - \frac{(T-t_0+1)}{2}\log s_0^2 \notag  \\
    &\quad\: + \left(\frac{T - t +1}{2}\right)\log\left[\frac{\sum_{t'=t}^T y_{t'}^2}{T-t+1} - \left(\frac{\sum_{t'=t}^T y_{t'}}{T-t+1}\right)^2 \right] \notag \\
    &\quad\: -\left(\frac{T - t_0 +1}{2}\right)\log \left[1 + \frac{\sum_{t'=t_0}^T (z_{t'}^2-1)}{T-t_0+1} - \left(\frac{\sum_{t'=t_0}^T z_{t'}}{T-t_0+1}\right)^2\right] \notag \\
    &\quad\: + \mathcal{O}(1).
\end{align*}
\normalsize
By the same argument we used to show (\ref{eq:neg1-bd}), with probability one we have:
\small
\begin{align*}
    -\left(\frac{T - t_0 +1}{2}\right)\log \left[1 + \frac{\sum_{t'=t_0}^T (z_{t'}^2-1)}{T-t_0+1} - \left(\frac{\sum_{t'=t_0}^T z_{t'}}{T-t_0+1}\right)^2\right] \geq -\frac{1}{2}\sum_{t'=t_0}^T (z_{t'}^2-1) +\frac{1}{2(T-t_0+1)} \left(\sum_{t'=t_0}^T z_{t'}\right)^2
\end{align*}
\normalsize
and since $t < t_0$ implies $\log\left(\frac{T - t + 1}{T - t_0 +1} \right) > 0$, we have:
\small
\begin{align}
    \log \frac{\alpha_{t_0}}{\alpha_t} &> - \frac{1}{2} \sum_{t' = t}^{t_0 - 1} (z_{t'}^2-1) - \frac{(T-t_0+1)}{2}\log s_0^2 \notag  \\
    &\quad\: + \left(\frac{T - t +1}{2}\right)\log\left[\frac{\sum_{t'=t}^T y_{t'}^2}{T-t+1} - \left(\frac{\sum_{t'=t}^T y_{t'}}{T-t+1}\right)^2 \right] \notag \\
    &\quad\: -\frac{1}{2}\sum_{t'=t_0}^T (z_{t'}^2-1) +\frac{1}{2(T-t_0+1)} \left(\sum_{t'=t_0}^T z_{t'}\right)^2 \notag \\
    &\quad\: + \mathcal{O}(1). \label{eq:thm3-cs2-bd1}
\end{align}
\normalsize
For notational convenience, we can define the error term:
\footnotesize
\begin{align}
    \xi(T,t_0,t,s_0) &:= \frac{\sum_{t'=t}^{t_0-1} (z_{t'}^2 -1)}{T-t+1} - \left(\frac{\sum_{t'=t}^{t_0-1} z_{t'}}{T-t+1}\right)^2 + s_0^2\left[\frac{\sum_{t'=t_0}^{T} (z_{t'}^2-1)}{T-t+1} - \left(\frac{\sum_{t'=t_0}^{T} z_{t'}}{T-t+1}\right)^2\right] - \frac{2s_0\sum_{t'=t}^{t_0-1} z_{t'}\sum_{t'=t_0}^{T} z_{t'}}{(T-t+1)^2} \label{eq:thm3-error}
\end{align}
\normalsize
then we can write:
\small
\begin{align*}
    \frac{\sum_{t'=t}^T y_{t'}^2}{T-t+1} - \left(\frac{\sum_{t'=t}^T y_{t'}}{T-t+1}\right)^2 &= \frac{\sum_{t'=t}^{t_0-1} z_{t'}^2 + \sum_{t'=t_0}^{T} (s_0z_{t'} + b_0)^2}{T-t+1} - \left[\frac{\sum_{t'=t}^{t_0-1} z_{t'} + \sum_{t'=t_0}^{T} (s_0z_{t'} + b_0)}{T-t+1}\right]^2 \notag \\
    &= 1 + \frac{(T-t_0+1)(t_0-t)b_0^2}{(T-t+1)^2} + \frac{(T-t_0+1)(s_0^2-1)}{T-t+1} \notag\\
    &\quad + \frac{\sum_{t'=t}^{t_0-1} (z_{t'}^2 -1)}{T-t+1} - \left(\frac{\sum_{t'=t}^{t_0-1} z_{t'}}{T-t+1}\right)^2 +s_0^2\left[\frac{\sum_{t'=t_0}^{T} (z_{t'}^2-1)}{T-t+1} - \left(\frac{\sum_{t'=t_0}^{T} z_{t'}}{T-t+1}\right)^2\right] \notag\\
    &\quad + \frac{2}{(T-t+1)^2}\left[(t_0-t)b_0s_0\sum_{t'=t_0}^Tz_{t'} - s_0\sum_{t'=t}^{t_0-1} z_{t'}\sum_{t'=t_0}^{T} z_{t'} - (T-t_0+1)b_0\sum_{t'=t}^{t_0-1} z_{t'}\right]  \\
    &= 1 + \frac{(T-t_0+1)(t_0-t)b_0^2}{(T-t+1)^2} + \frac{(T-t_0+1)(s_0^2-1)}{T-t+1} \\
    &\quad + \frac{2}{(T-t+1)^2}\left[(t_0-t)b_0s_0\sum_{t'=t_0}^Tz_{t'} - (T-t_0+1)b_0\sum_{t'=t}^{t_0-1} z_{t'}\right] + \xi(T,t_0,t,s_0)
\end{align*}
\normalsize
On $\mathcal{E}$, for some constant $C_1>0$ we have:
\begin{align*}
    \left|\sum_{t'=t}^{t_0-1} z_{t'}\right| &\leq C_1\sqrt{(t_0-t)\log T } \tag{on $\mathcal{E}_1\subset\mathcal{E}$} \\
    \left|b_0\sum_{t'=t_0}^T z_{t'}\right| &\leq \frac{(T-t_0+1)b_0^2 }{4} \tag{on $\mathcal{E}_8\subset\mathcal{E}$}
\end{align*}
and thus:
\small
\begin{align*}
    \frac{\sum_{t'=t}^T y_{t'}^2}{T-t+1} - \left(\frac{\sum_{t'=t}^T y_{t'}}{T-t+1}\right)^2 &\geq 1 + \frac{(T-t_0+1)(t_0-t)b_0^2}{2(T-t+1)^2} + \frac{(T-t_0+1)(s_0^2-1)}{T-t+1} \\
    &\quad - \frac{2(T-t_0+1)|b_0|C_1\sqrt{(t_0-t)\log T}}{(T-t+1)^2}  + \xi(T,t_0,t,s_0)
\end{align*}
\normalsize
Returning to bound (\ref{eq:thm3-cs2-bd1}), we now have:
\begin{align}
    \log \frac{\alpha_{t_0}}{\alpha_t} &> - \frac{1}{2} \sum_{t' = t}^{t_0 - 1} (z_{t'}^2-1) - \frac{(T-t_0+1)}{2}\log s_0^2  + \left(\frac{T - t +1}{2}\right)\log\left[1 + \frac{\frac{(T-t_0+1)(s_0^2-1)}{T-t+1}}{1 + \xi(T,t_0,t,s_0)}\right] \notag  \\
    &\quad\: + \left(\frac{T - t +1}{2}\right)\log\left[1 + \frac{\frac{(T-t_0+1)(t_0-t)b_0^2 }{2(T-t+1)^2} - \frac{2(T-t_0+1)|b_0|C_1\sqrt{(t_0-t)\log T} }{(T-t+1)^2}}{1 + \frac{(T-t_0+1)(s_0^2-1)}{T-t+1} + \xi(T,t_0,t,s_0)}\right] \notag \\
    &\quad\: -\frac{1}{2}\sum_{t'=t_0}^T (z_{t'}^2-1) +\frac{1}{2(T-t_0+1)} \left(\sum_{t'=t_0}^T z_{t'}\right)^2 + \left(\frac{T - t +1}{2}\right)\log\left[1 + \xi(T,t_0,t,s_0)\right] \notag \\
    &\quad\: + \mathcal{O}(1). \label{eq:thm3-cs2-bd2}
\end{align}

We now treat the case of $t_0 - t \geq T-t_0+1$ and $t_0-t < T-t_0+1$ separately. 

\subsubsection*{Case 2.a: $t_0 - t \leq T-t_0+1$.}
We can start by rewriting bound (\ref{eq:thm3-cs2-bd2}) as:
\begin{align}
    \log \frac{\alpha_{t_0}}{\alpha_t} &> - \frac{1}{2} \sum_{t' = t}^{t_0 - 1} (z_{t'}^2-1) + \frac{(t_0-t)}{2}\log s_0^2  + \left(\frac{T - t +1}{2}\right)\log\left[1 + \frac{\frac{(t_0-t)(s_0^{-2}-1)}{T-t+1}}{1 + s_0^{-2}\xi(T,t_0,t,s_0)}\right] \notag  \\
    &\quad\: + \left(\frac{T - t +1}{2}\right)\log\left[1 + \frac{\frac{(T-t_0+1)(t_0-t)b_0^2 }{2(T-t+1)^2} - \frac{2(T-t_0+1)|b_0|C_1\sqrt{(t_0-t)\log T} }{(T-t+1)^2}}{1 + \frac{(T-t_0+1)(s_0^2-1)}{T-t+1} + \xi(T,t_0,t,s_0)}\right] \notag \\
    &\quad\: -\frac{1}{2}\sum_{t'=t_0}^T (z_{t'}^2-1) +\frac{1}{2(T-t_0+1)} \left(\sum_{t'=t_0}^T z_{t'}\right)^2 + \left(\frac{T - t +1}{2}\right)\log\left[1 + s_0^{-2}\xi(T,t_0,t,s_0)\right] \notag \\
    &\quad\: + \mathcal{O}(1). \label{eq:thm3-cs2-a-bd1}
\end{align}
Note that if $s_0^2=1$, then:
\begin{align*}
    s_0^{-2}\xi(T,t_0,t,s_0) =  \frac{\sum_{t'=t}^{T} (z_{t'}^2 -1)}{T-t+1} - \left(\frac{\sum_{t'=t}^{T} z_{t'}}{T-t+1}\right)^2
\end{align*}
and on $\mathcal{E}$:
\begin{align*}
    \left|\frac{\sum_{t'=t}^T z_{t'}}{T-t+1}\right| &\leq \frac{1}{2}, \tag{on $\mathcal{E}_3\subset\mathcal{E}$} \\
    \left| \frac{\sum_{t'=t}^T (z_{t'}^2-1)}{T-t+1}\right| &\leq \frac{1}{4}, \tag{on $\mathcal{E}_6\subset\mathcal{E}$} \\
\end{align*}
so $ \left|s_0^{-2}\xi(T,t_0,t,s_0)\right| \leq 1/2$. On the other hand, if $s_0^2\neq 1$, then since $(s_0^2-1)^2 = \mathcal{O}(1)$, Assumption \ref{assumption:scale} implies $T-t_0+1 \gg \log T$. On $\mathcal{E}$ we have: 
\begin{align*}
    \max\left\{\left|\sum_{t'=t_0}^{T} (z_{t'}^2 -1)\right|, \left|\sum_{t'=t}^{t_0-1} z_{t'}\right|, \left|\sum_{t'=t_0}^{T} z_{t'} \right|\right\} &\leq C_1\sqrt{(T-t+1)\log T} \tag{on $\mathcal{E}_1\cap\mathcal{E}_2\cap\mathcal{E}_7\subset\mathcal{E}$}
\end{align*}
Additionally since $T-t+1\geq T-t_0+1\gtrsim\log T$, on $\mathcal{E}_5$ for large for some $C_2 > C_1$ we can write:
\begin{align*}
    \left|\sum_{t'=t}^{t_0-1} (z_{t'}^2 -1)\right| \leq C_1 \sqrt{(t_0-1)\log T} + C_1\log T &\leq C_2 \sqrt{(T-t+1)\log T}. \tag{$T-t+1\geq t_0-t$ and $T-t+1\gtrsim \log T$}
\end{align*}
Then on $\mathcal{E}$:
\begin{align*}
    \left|s_0^{-2}\xi(T,t_0,t,s_0)\right| &\leq (1+s_0^{-2} + 2s_0^{-1})\left[\frac{C_2\sqrt{\log T}}{\sqrt{T-t+1}} + \frac{C_2^2\log T}{T-t+1}\right] \\
    &\leq8\underline{s}^2C_2^2\left[\frac{\sqrt{\log T}}{\sqrt{T-t_0+1}} + \frac{\log T}{T-t_0+1}\right]. \tag{$s_0^2 > \underline{s}^2$ and $T-t+1\geq T-t_0+1$}
\end{align*}
So $\left|s_0^{-2}\xi(T,t_0,t,s_0)\right| \to 0$ and we can use the inequality $\log(1+x) > x/(x+1)$ for $x > -1$ to write:
\footnotesize
\begin{align*}
    \left(\frac{T - t +1}{2}\right)\log\left[1 + s_0^{-2}\xi(T,t_0,t,s_0)\right] &\geq \left(\frac{T - t +1}{2}\right) \left[s_0^{-2}\xi(T,t_0,t,s_0) - \frac{s_0^{-4}\xi(T,t_0,t,s_0)^2}{1+s_0^{-2}\xi(T,t_0,t,s_0)}\right] \\
    &= \frac{1}{2s_0^2}\sum_{t'=t}^{t_0-1}(z_{t'}^2 - 1) - \frac{\left(\sum_{t'=t}^{t_0-1}z_{t'}\right)^2 }{2s_0^2(T-t+1)} + \frac{1}{2}\sum_{t'=t_0}^T (z_{t'}^2 - 1) -  \frac{\left(\sum_{t'=t_0}^{T}z_{t'}\right)^2 }{2(T-t+1)}  \\
    &\quad\; - \frac{\sum_{t'=t_0}^{T}z_{t'} \sum_{t'=t}^{t_0-1}z_{t'}}{s_0(T-t+1)} - \left(\frac{T - t +1}{2}\right)\frac{s_0^{-4}\xi(T,t_0,t,s_0)^2}{1+s_0^{-2}\xi(T,t_0,t,s_0)}.
\end{align*}
\normalsize
We showed above that if $s_0^2=1$, then $\left|s_0^{-2}\xi(T,t_0,t,s_0)\right| \leq 1/2$ on $\mathcal{E}$. Furthermore, since:
\begin{align*}
    \left|\sum_{t'=t}^T z_{t'}\right| &< C_1^{1/2}(T-t+1)^{3/4} \log^{1/4} T \tag{on $\mathcal{E}_4\subset\mathcal{E}$} \\
    \left|\sum_{t'=t}^T (z_{t'}^2-1)\right| &< C_1\sqrt{(T-t+1)\log T}, \tag{on $\mathcal{E}_7\subset\mathcal{E}$}
\end{align*}
then for this case on $\mathcal{E}$:
\begin{align*}
    s_0^{-4}\xi(T,t_0,t,s_0)^2 = \left[\frac{\sum_{t'=t}^{T} (z_{t'}^2 -1)}{T-t+1} - \left(\frac{\sum_{t'=t}^{T} z_{t'}}{T-t+1}\right)^2\right]^2 &\leq \frac{4C_1^2\log T}{T-t+1}
\end{align*}
and thus:
\begin{align*}
    \left(\frac{T - t +1}{2}\right)\frac{s_0^{-4}\xi(T,t_0,t,s_0)^2}{1+s_0^{-2}\xi(T,t_0,t,s_0)} &\leq 4C_1^2\log T.
\end{align*}
On the other hand, when $s_0^2\neq 1$, we showed above that on $\mathcal{E}$:
\begin{align*}
    \left|s_0^{-2}\xi(T,t_0,t,s_0)\right| &\leq 16\underline{s}^2C_2^2\sqrt{\frac{\log T}{T-t+1}} \to 0.
\end{align*}
So for $T$ large enough that $16\underline{s}^2C_2^2\sqrt{\frac{\log T}{T-t+1}} \leq 1/2$, we have:
\begin{align*}
    \left(\frac{T - t +1}{2}\right)\frac{s_0^{-4}\xi(T,t_0,t,s_0)^2}{1+s_0^{-2}\xi(T,t_0,t,s_0)} &\leq (16\underline{s}^2C_2^2)^2\log T.
\end{align*}
On $\mathcal{E}$ we also have:
\begin{align*}
    \frac{\left(\sum_{t'=t}^{t_0-1}z_{t'}\right)^2 }{2s_0^2(T-t+1)} &\leq \frac{C_1^2}{2\underline{s}^2} \log T \\
    \frac{\sum_{t'=t_0}^{T}z_{t'} \sum_{t'=t}^{t_0-1}z_{t'}}{s_0(T-t+1)} &\leq  \frac{C_1^2}{\underline{s}}\log T
\end{align*}
and, since $T-t+1\geq T-t_0+1$, we have:
\begin{align*}
    \frac{\left(\sum_{t'=t_0}^{T}z_{t'}\right)^2 }{2(T-t_0+1)} -  \frac{\left(\sum_{t'=t_0}^{T}z_{t'}\right)^2 }{2(T-t+1)} \geq 0
\end{align*}
so returning to bound (\ref{eq:thm3-cs2-a-bd1}) we have: 
\begin{align}
    \log \frac{\alpha_{t_0}}{\alpha_t} &>  \frac{(s_0^{-2}-1)}{2} \sum_{t' = t}^{t_0 - 1} (z_{t'}^2-1) + \frac{(t_0-t)}{2}\log s_0^2 + \left(\frac{T - t +1}{2}\right)\log\left[1 + \frac{\frac{(t_0-t)(s_0^{-2}-1)}{T-t+1}}{1 + s_0^{-2}\xi(T,t_0,t,s_0)}\right] \notag  \\
    &\quad\: + \left(\frac{T - t +1}{2}\right)\log\left[1 + \frac{\frac{(T-t_0+1)(t_0-t)b_0^2 }{2(T-t+1)^2} - \frac{2(T-t_0+1)|b_0|C_1\sqrt{(t_0-t)\log T} }{(T-t+1)^2}}{1 + \frac{(T-t_0+1)(s_0^2-1)}{T-t+1} + \xi(T,t_0,t,s_0)}\right] \notag \\
    &\quad\: + \mathcal{O}(\log T). \label{eq:thm3-cs2-a-bd2}
\end{align}
Next we can write:
\footnotesize
\begin{align*}
    \left(\frac{T - t +1}{2}\right)\log\left[1 + \frac{\frac{(t_0-t)(s_0^{-2}-1)}{T-t+1}}{1 + s_0^{-2}\xi(T,t_0,t,s_0)}\right]&= \left(\frac{T - t +1}{2}\right)\log\left[1 + \frac{(t_0-t)(s_0^{-2}-1)}{T-t+1}\right] \\
    &\quad\:+ \left(\frac{T - t +1}{2}\right)\log\left[1 - \left(\frac{ s_0^{-2}\xi(T,t_0,t,s_0)}{1 + s_0^{-2}\xi(T,t_0,t,s_0)}\right)\left(\frac{\frac{(t_0-t)(s_0^{-2}-1)}{T-t+1}}{1+\frac{(t_0-t)(s_0^{-2}-1)}{T-t+1}}\right)\right]
\end{align*}
\normalsize
and have just shown that:
\begin{align*}
    \frac{ s_0^{-2}\xi(T,t_0,t,s_0)}{1 + s_0^{-2}\xi(T,t_0,t,s_0)} &= \mathcal{O}\left(\frac{\sqrt{\log T}}{\sqrt{T-t_0+1}}\right), \\
    \frac{\frac{(t_0-t)(s_0^{-2}-1)}{T-t+1}}{1+\frac{(t_0-t)(s_0^{-2}-1)}{T-t+1}} &= \mathcal{O}\left(\frac{(t_0-t)(s_0^{-2}-1)}{T-t+1}\right)= \mathcal{O}(1). \tag{$s_0^{-2} = \mathcal{O}(1)$}
\end{align*}
Note that if $s_0^2 = 1$, then we can completely ignore the first line of (\ref{eq:thm3-cs2-a-bd2}), on the other hand, if $s_0^2\neq 1$, then we showed above that $\left|s_0^{-2}\xi(T,t_0,t,s_0)\right| \to 0$ on $\mathcal{E}$, so for some constant $C_3 > 0$, a first order Taylor approximation of $\log(1+x)$ around zero gives:
\footnotesize
\begin{align*}
     \left(\frac{T - t +1}{2}\right)\log\left[1 - \left(\frac{ \xi(T,t_0,t,s_0)}{1 + \xi(T,t_0,t,s_0)}\right)\left(\frac{\frac{(t_0-t)(s_0^{-2}-1)}{T-t+1}}{1+\frac{(t_0-t)(s_0^{-2}-1)}{T-t_0+1}}\right)\right] &= (t_0-t)(s_0^{-2}-1)\mathcal{O}\left(\sqrt{\frac{\log T}{T-t_0+1}}\right) + \mathcal{O}(\log T) \\
     &= (s_0^{-2}-1)\mathcal{O}\left(\sqrt{(t_0-t)\log T}\right) + \mathcal{O}(\log T) \tag{$t_0-t\leq T-t_0+1$} \\
     &\geq -\frac{C_3|s_0^{-2}-1|}{2}\sqrt{(t_0-t)\log T} + \mathcal{O}(\log T)
\end{align*}
\normalsize
We also have:
\begin{align*}
    \left|\sum_{t' = t}^{t_0 - 1} (z_{t'}^2-1)\right| < C_1\sqrt{(t_0-t)\log T} +\log T \tag{on $\mathcal{E}_5\subset\mathcal{E}$}
\end{align*}
and since $C_1 > C_3$ and $C_1|s_0^{-2}-1|\log T \leq C_1(\underline{s}^{-2}+1)\log T$ WLOG, then we can now lower bound (\ref{eq:thm3-cs2-a-bd2}) with:
\begin{align}
    \log \frac{\alpha_{t_0}}{\alpha_t} &>  -C_1|s_0^{-2}-1|\sqrt{(t_0-t)\log T} + \frac{(t_0-t)}{2}\log s_0^2 + \left(\frac{T - t +1}{2}\right)\log\left[1 + \frac{(t_0-t)(s_0^{-2}-1)}{T-t+1}\right] \notag  \\
    &\quad\: + \left(\frac{T - t +1}{2}\right)\log\left[1 + \frac{\frac{(T-t_0+1)(t_0-t)b_0^2 }{2(T-t+1)^2} - \frac{2(T-t_0+1)|b_0|C_1\sqrt{(t_0-t)\log T}}{(T-t+1)^2}}{1 + \frac{(T-t_0+1)(s_0^2-1)}{T-t+1} + \xi(T,t_0,t,s_0)}\right] \notag \\
    &\quad\: + \mathcal{O}(\log T). \label{eq:thm3-cs2-a-bd3}
\end{align}

We now split our analysis into the case of $\min\left\{b_0^2,b_0^2/s_0^2\right\} \geq (s_0^2-1)^2$ and $\min\left\{b_0^2,b_0^2/s_0^2\right\} < (s_0^2-1)^2$. However in both cases, we are assuming that $\max\{\min\left\{b_0^2,b_0^2/s_0^2\right\},(s_0^2-1)^2\} >0$ (i.e. $\mathcal{T}_\beta$ is non-empty for large enough $T$), and when $\min\left\{b_0^2,b_0^2/s_0^2\right\} >0$ we assume Assumption \ref{assumption:mean} holds, while when $(s_0^2-1)^2 >0$ we assume  Assumption \ref{assumption:scale} holds, so it is without loss of generality to assume Assumption \ref{assumption:mean} holds when $\min\left\{b_0^2,b_0^2/s_0^2\right\} \geq (s_0^2-1)^2$ and that Assumption \ref{assumption:scale} holds when  $\min\left\{b_0^2,b_0^2/s_0^2\right\} < (s_0^2-1)^2$.

\subsubsection*{Case 2.a.i: $\min\left\{b_0^2,b_0^2/s_0^2\right\} \geq (s_0^2-1)^2$}

Suppose that:
\begin{align*}
    \min\left\{b_0^2,\frac{b_0^2}{s_0^2}\right\} \geq (s_0^2-1)^2 &\implies |t-t_0| \geq \frac{C_{a,1} \log T}{\min\{b_0^2,b_0^2/s_0^{2}\}}, \;\forall\; t\in\mathcal{T}_{\beta}.
\end{align*}
For $C_{1,\beta} > 64C_1^2$, we can write:
\small
\begin{align*}
    \frac{(T-t_0+1)(t_0-t)b_0^2 }{2(T-t+1)^2} - \frac{2(T-t_0+1)|b_0|C_1\sqrt{(t_0-t)\log T} }{(T-t+1)^2} &= \frac{(T-t_0+1)(t_0-t)b_0^2 }{(T-t+1)^2} \left[\frac{1}{2} - \frac{2C_1\sqrt{\log T}}{|b_0|\sqrt{t_0-t}}\right] \\
    &\geq \frac{(T-t_0+1)(t_0-t)b_0^2 }{(T-t+1)^2} \left[\frac{1}{2} - \frac{2C_1}{\sqrt{C_{1,\beta}}}\right] \tag{$|t-t_0| \geq \frac{C_{1,\beta} \log T}{\min\{b_0^2,b_0^2/s_0^{2}\}}$} \\
    &> \frac{(T-t_0+1)(t_0-t)b_0^2 }{4(T-t+1)^2} . \tag{$C_{1,\beta} > 64C_1^2$}
\end{align*}
\normalsize
Furthermore, we have shown that $\left|\xi(T,t_0,t,s_0)\right| \leq 1$ for large $T$ on $\mathcal{E}$, and we have:
\begin{align*}
    -1 <\underline{s}^2 - 1 <\frac{(T-t_0+1)(s_0^2-1)}{T-t+1} < \overline{s}^2 -1 < \infty
\end{align*}
so we can replace (\ref{eq:thm3-cs2-a-bd3}) with the deterministic bound: 
\begin{align*}
    \log \frac{\alpha_{t_0}}{\alpha_t} &>  -C_1|s_0^{-2}-1|\sqrt{(t_0-t)\log T} + \frac{(t_0-t)}{2}\log s_0^2 + \left(\frac{T - t +1}{2}\right)\log\left[1 + \frac{(t_0-t)(s_0^{-2}-1)}{T-t+1}\right] \notag  \\
    &\quad\: + \left(\frac{T - t +1}{2}\right)\log\left[1 + \frac{(T-t_0+1)(t_0-t)b_0^2 }{16\overline{s}^2(T-t+1)^2}\right] + \mathcal{O}(\log T)
\end{align*}
Note that the term:
\begin{align*}
    \frac{(t_0-t)}{2}\log s_0^2 + \left(\frac{T - t +1}{2}\right)\log\left[1 + \frac{(t_0-t)(s_0^{-2}-1)}{T-t+1}\right]
\end{align*}
is decreasing in $t$ and is clearly equal to zero when $t=t_0$. To see this we can take the derivative with respect $t$ to get:
\begin{align*}
    \frac{1}{2}\left[ \frac{\frac{(T-t_0+1)(s_0^2 -1)}{T-t+1}}{1 + \frac{(T-t_0+1)(s_0^2 -1)}{T-t+1}}- \log\left[1 + \frac{(T-t_0+1)(s_0^2 -1)}{T-t+1}\right] \right] &\leq 0. \tag{$\log(1+x) > x/(1+x)$ for $x>-1$}
\end{align*}
Therefore:
\small
\begin{align}
    \log \frac{\alpha_{t_0}}{\alpha_t} &>  -C_1|s_0^{-2}-1|\sqrt{(t_0-t)\log T} + \left(\frac{T - t +1}{2}\right)\log\left[1 + \frac{(T-t_0+1)(t_0-t)b_0^2 }{16\overline{s}^2(T-t+1)^2}\right] + \mathcal{O}(\log T) \label{eq:thm3-cs2-a-i-bd1}
\end{align}
\normalsize
Suppose first that $\lim_{T\to\infty}b_0^2 = 0$, then for large $T$:
\footnotesize
\begin{align*}
    \lim_{T\to\infty}\frac{(T-t_0+1)(t_0-t)b_0^2 }{16\overline{s}^2(T-t+1)^2} =0 \implies  \log\left[1 + \frac{(T-t_0+1)(t_0-t)b_0^2 }{16\overline{s}^2(T-t+1)^2}\right] &= \frac{(T-t_0+1)(t_0-t)b_0^2 }{16\overline{s}^2(T-t+1)^2} + \mathcal{O}\left(\frac{(T-t_0+1)(t_0-t)^2b_0^4}{(T-t+1)^2}\right) \\
    &\geq \frac{(T-t_0+1)(t_0-t)b_0^2 }{32\overline{s}^2(T-t+1)^2}.
\end{align*}
\normalsize
So for large $T$, we can lower bound (\ref{eq:thm3-cs2-a-i-bd1}) in this case with:
\begin{align*}
     \log \frac{\alpha_{t_0}}{\alpha_t} &>  -C_1|s_0^{-2}-1|\sqrt{(t_0-t)\log T} +\frac{(T-t_0+1)(t_0-t)b_0^2 }{64\overline{s}^2(T-t+1)} + \mathcal{O}(\log T) \\
     &\geq (t_0-t)b_0^2\left[\frac{(T-t_0+1)}{64\overline{s}^2(T-t+1)} -\frac{C_1s_0^{-2} \min\{1,s_0^{-1}\}\sqrt{\log T}}{|b_0|\sqrt{t_0-t}}\right] + \mathcal{O}(\log T) \tag{$(s_0^2-1)^2 \leq\min\{b_0^2,b_0^2/s_0^2\}$} \\
     &\geq (t_0-t)b_0^2\left[\frac{1}{128\overline{s}^2} -\frac{C_1\underline{s}^{-2}}{\sqrt{C_{1,\beta}}}\right] + \mathcal{O}(\log T) \tag{$|t_0-t| \geq \frac{C_{1,\beta}\log^2 T}{\min\{b_0^2,b_0^2/s_0^2\}}$}
\end{align*}
so for $C_{1,\beta} > (128C_1 \overline{s}^2/\underline{s}^2)^2$ and some $B_1 > 0$, we have:
\begin{align*}
     \log \frac{\alpha_{t_0}}{\alpha_t} &> (t_0-t)b_0^2 + \mathcal{O}(\log T) \\
     &\geq (C_{1,\beta} - B_1)\log T  \tag{$|t_0-t| \geq \frac{C_{1,\beta}\log T}{\min\{b_0^2,b_0^2/s_0^2\}}$}
\end{align*}
and thus (\ref{eq:thm3-result}) holds for $C_{1,\beta} > B_1+C_\pi$.

Next, if $b_0^2 > \underline{b}^2$ for some $\underline{b}^2 > 0$, then we can show that the bound (\ref{eq:thm3-cs2-a-i-bd1}) is decreasing in $t$. We verify below that the second term is decreasing in $t$, and the first term disappears when $s_0^2=1$, so it is without loss of generality to assume that $s_0^2\neq 1$. Taking the derivative with respect to $t$ yields:
\small
\begin{align}
    &\frac{1}{2}\left[\frac{C_1|s_0^{-2}-1|\sqrt{\log T}}{\sqrt{t_0-t}} - \log\left[1 + \frac{(T-t_0+1)(t_0-t)b_0^2 }{16\overline{s}^2(T-t+1)^2}\right] +\frac{\frac{(T-t_0+1)(t_0-t)b_0^2 }{8\overline{s}^2(T-t+1)^2} - \frac{(T-t_0+1)b_0^2 }{16\overline{s}^2(T-t+1)}}{1 + \frac{(T-t_0+1)(t_0-t)b_0^2 }{16\overline{s}^2(T-t+1)^2}}\right] \label{eq:thm3-case-2-i-deriv}
\end{align}
\normalsize
Suppose that $b_0^2\to\infty$ as $T\to\infty$, then since we can write:
\begin{align*}
     - \log\left[1 + \frac{(T-t_0+1)(t_0-t)b_0^2 }{16\overline{s}^2(T-t+1)^2}\right] +\frac{\frac{(T-t_0+1)(t_0-t)b_0^2 }{8\overline{s}^2(T-t+1)^2} - \frac{(T-t_0+1)b_0^2 }{16\overline{s}^2(T-t+1)}}{1 + \frac{(T-t_0+1)(t_0-t)b_0^2 }{16\overline{s}^2(T-t+1)^2}} \leq -\frac{\frac{(T-t_0+1)^2b_0^2 }{16\overline{s}^2(T-t+1)^2}}{1 + \frac{(T-t_0+1)(t_0-t)b_0^2 }{16\overline{s}^2(T-t+1)^2}} \tag{$\log(1+x)\geq x/(1+x)$ for $x>-1$}
\end{align*}
and:
\begin{align*}
    \frac{(T-t_0+1)^2b_0^2 }{(T-t+1)^2} = \frac{b_0^2}{\left(1+\frac{t_0-t}{T-t_0+1}\right)^2} \geq \frac{b_0^2}{4}. \tag{$t_0-t \leq T-t_0+1$}
\end{align*}
Then for large $T$: 
\begin{align*}
    - \log\left[1 + \frac{(T-t_0+1)(t_0-t)b_0^2 }{16\overline{s}^2(T-t+1)^2}\right] +\frac{\frac{(T-t_0+1)(t_0-t)b_0^2 }{8\overline{s}^2(T-t+1)^2} - \frac{(T-t_0+1)b_0^2 }{16\overline{s}^2(T-t+1)}}{1 + \frac{(T-t_0+1)(t_0-t)b_0^2 }{16\overline{s}^2(T-t+1)^2}} \lesssim -\frac{T-t_0+1}{t_0-t}.
\end{align*}
At the same time we have:
\begin{align*}
    \frac{\frac{T-t_0+1}{t_0-t}}{\frac{C_1|s_0^{-2}-1|\sqrt{\log T}}{\sqrt{t_0-t}}} \geq \frac{\sqrt{T-t_0+1}}{C_1|\underline{s}^{-2}-1|\sqrt{\log T}}. \tag{$t_0-t \leq T-t_0+1$}
\end{align*}
Since $s_0^2\neq0$, then again $T-t_0+1 \gg \log T$ and the lower bound above is diverging as $T\to\infty$, and thus the first term in (\ref{eq:thm3-case-2-i-deriv}) is dominated by the remaining terms and we get that (\ref{eq:thm3-case-2-i-deriv}) is negative for large $T$. 

On the other hand, if $b_0^2 =\mathcal{O}(1)$ so that $b_0^2 < \overline{b}^2$ for some $\overline{b}^2 < \infty$, then since the first term of (\ref{eq:thm3-case-2-i-deriv}) is decreasing in $t$, we have:
\begin{align*}
    \frac{C_1|s_0^{-2}-1|\sqrt{\log T}}{\sqrt{t_0-t}}  &\leq \frac{C_1\min\{|b_0|,|b_0|/s_0\}\sqrt{\log T}}{s_0^{2}\sqrt{t_0-t}}  \tag{$(s_0^2-1)^2 \leq\min\{b_0^2,b_0^2/s_0^2\}$} \\
    &\leq \frac{C_1\min\{|b_0|,|b_0|/s_0\}\sqrt{\log T}}{s_0^{2}\sqrt{t_0-t}} \bigg|_{t=\sup\{t\in\mathcal{T}_{\beta}:t\leq t_0,\;t_0-t\leq T-t_0+1\}} \\
    &= \frac{C_1\min\{b_0^2,b_0^2/s^2_0\}}{s_0^{2}\sqrt{C_{1,\beta}}} \\
    &\leq \frac{C_1 \overline{b}^2 }{\underline{s}^4\sqrt{C_{1,\beta}}} \tag{$|b_0|\leq \overline{b} < \infty$}
\end{align*}
while defining:
\begin{align*}
     h(t):=\frac{(T-t_0+1)(t_0-t)b_0^2 }{16\overline{s}^2(T-t+1)}
\end{align*}
and taking the derivative of the other terms with respect to $t$ yields: 
\begin{align*}
    \frac{h''(t)}{1+ \frac{h(t)}{T-t+1}} - \frac{\left[\frac{h(t)}{T-t+1} + h'(t)\right]^2}{(T-t+1)\left[1+\frac{h(t)}{T-t+1}\right]^2}. 
\end{align*}
Since:
\begin{align*}
    h''(t) &= -\frac{(T-t_0+1)^2b_0^2 }{8\overline{s}^2(T-t+1)^3} < 0
\end{align*}
then:
\small
\begin{align*}
    &- \log\left[1 + \frac{(T-t_0+1)(t_0-t)b_0^2 }{16\overline{s}^2(T-t+1)^2}\right] +\frac{\frac{(T-t_0+1)(t_0-t)b_0^2 }{8\overline{s}^2(T-t+1)^2} - \frac{(T-t_0+1)b_0^2 }{16\overline{s}^2(T-t+1)}}{1 + \frac{(T-t_0+1)(t_0-t)b_0^2 }{16\overline{s}^2(T-t+1)^2}} \\
    &\leq - \log\left[1 + \frac{(T-t_0+1)(t_0-t)b_0^2 }{16\overline{s}^2(T-t+1)^2}\right] +\frac{\frac{(T-t_0+1)(t_0-t)b_0^2 }{8\overline{s}^2(T-t+1)^2} - \frac{(T-t_0+1)b_0^2 }{16\overline{s}^2(T-t+1)}}{1 + \frac{(T-t_0+1)(t_0-t)b_0^2 }{16\overline{s}^2(T-t+1)^2}}\bigg|_{t=\inf\{t\in\mathcal{T}_{\beta}:t\leq t_0,\;t_0-t\leq T-t_0+1\}} \\
    &= - \log\left[1 + \frac{b_0^2 }{32\overline{s}^2}\right] +\frac{\frac{b_0^2 }{32\overline{s}^2}}{1 + \frac{b_0^2 }{32\overline{s}^2}} \\
    &\leq - \log\left[1 + \frac{\underline{b}^2 }{32\overline{s}^2}\right] +\frac{\frac{\underline{b}^2 }{32\overline{s}^2}}{1 + \frac{\underline{b}^2 }{32\overline{s}^2}} \tag{$0<\underline{b} < |b_0|$}
\end{align*}
\normalsize
Once more, $\log(1+x) > x/(1+x)$ for $x>-1$, and since the last bound above is fixed with respect to $T$, then there is some $b^* > 0$ so that: 
\begin{align*}
     - \log\left[1 + \frac{\underline{b}^2 }{32\overline{s}^2}\right] +\frac{\frac{\underline{b}^2 }{32\overline{s}^2}}{1 + \frac{\underline{b}^2 }{32\overline{s}^2}} < -b^*.
\end{align*}
So if $C_{1,\beta} > (C_1 \overline{b}^2/ \underline{s}^4b^*)^2$, then (\ref{eq:thm3-case-2-i-deriv}) is negative and the bound (\ref{eq:thm3-cs2-a-i-bd1}) is minimized at:
\begin{align*}
    t &= \sup\;\{t \in \mathcal{T}_{\beta}:t<t_0\} \\
    &\leq t_0 - \frac{C_{1,\beta}\log T}{\min\{b_0^2,b_0^2/s_0^2\}}.
\end{align*}
and thus:
\begin{align*}
    \log \frac{\alpha_{t_0}}{\alpha_t} &> -\frac{C_1|s_0^2-1|\sqrt{C_{1,\beta}} \log T}{s_0^2\min\{|b_0|,|b_0|/s_0\}} \\
    &\quad\:+ \frac{T - t_0 +1}{2}\left(1 + \frac{C_{1,\beta}\log T}{\min\{b_0^2,b_0^2/s_0^2\}(T-t_0+1)}\right)\log\left[ 1 + \frac{\frac{C_{1,\beta}\log T}{16\overline{s}^2(T-t_0+1)\min\{1,s_0^{-2}\}}}{\left(1 + \frac{C_{1,\beta} \log T}{\min\{b_0^2,b_0^2/s_0^2\}(T-t_0+1)}\right)^2}\right] \\
    &\quad\:+ \mathcal{O}(\log T).
\end{align*}
By Assumption \ref{assumption:mean} we have:
\begin{align*}
    \min\{b_0^2,b_0^2/s_0^2\}(T-t_0+1) &\gg \log T
\end{align*}
so for any $C_{1,\beta} > 0$, for large $T$ we have:
\begin{align*}
    \frac{C_{1,\beta} \log T}{\min\{b_0^2,b_0^2/s_0^2\}(T-t_0+1)} \leq \sqrt{2}-1
\end{align*}
and:
\begin{align*}
    \log \frac{\alpha_{t_0}}{\alpha_t} &> -\frac{C_1|s_0^2-1|\sqrt{C_{1,\beta}}\log T}{s_0^2\min\{|b_0|,|b_0|/s_0\}}+\frac{T - t_0 +1}{2}\log\left[ 1 + \frac{C_{1,\beta}\log T}{32\overline{s}^2(T-t_0+1)}\right] + \mathcal{O}(\log T) 
\end{align*}
Suppose that $b_0^2 = \mathcal{O}(1)$, then it must be the case that $T-t_0+1 \gg \log T$, in which case we can use another first order Taylor approximation to write:
\begin{align*}
    \frac{T - t_0 +1}{2}\log\left[ 1 + \frac{C_{1,\beta}\log T}{32\overline{s}^2(T-t_0+1)}\right] = \frac{C_{1,\beta}\log T}{32\overline{s}^2} + \mathcal{O}\left(\log T\right)
\end{align*}
and since $(s_0^2 - 1)^2 \leq\min\{|b_0|,|b_0|/s_0\}$, for some $B_1 > 0$ we have:
\begin{align*}
    \log \frac{\alpha_{t_0}}{\alpha_t} &> -\frac{C_1\sqrt{C_{1,\beta}}\log T}{\underline{s}^2}+ \frac{C_{1,\beta}\log T}{32\overline{s}^2} - B_1 \log T 
\end{align*}
so we can select $C_{1,\beta} > 0$ large so that (\ref{eq:thm3-result}) holds. On the other hand, if $b_0^2 \to \infty$, then by Assumption \ref{assumption:mean}:
\begin{align*}
    T-t_0+1 \gtrsim \log T
\end{align*}
so we can write:
\begin{align*}
    \log \frac{\alpha_{t_0}}{\alpha_t} &> -\frac{C_1\overline{s}(\underline{s}^{-2} + 1)\sqrt{C_{1,\beta}}\log T}{|b_0|}+\frac{B_2\log T}{2}\log\left[ 1 + \frac{C_{1,\beta}}{32\overline{s}^2B_2}\right] -B_1 \log T 
\end{align*}
for some $B_2 > 0$. Since $|b_0| \to \infty$, for large $T$ we
\begin{align*}
    \frac{C_1\overline{s}(\underline{s}^{-2} + 1)}{|b_0|} \leq B_1
\end{align*}
and clearly we can pick $C_{1,\beta} > 0$ large so that:
\begin{align*}
    \log\left[ 1 + \frac{C_{1,\beta}}{32\overline{s}^2B_2}\right] > 2B_1 + C_\pi
\end{align*}
so again (\ref{eq:thm3-result}) holds for this case.

\subsubsection*{Case 2.a.ii: $\min\left\{b_0^2,b_0^2/s_0^2\right\} \leq (s_0^2-1)^2$}

Returning to bound (\ref{eq:thm3-cs2-a-bd3}), we can write:
\begin{align}
    \log \frac{\alpha_{t_0}}{\alpha_t} &>  -C_1|s_0^{-2}-1|\sqrt{(t_0-t)\log T} + \frac{(t_0-t)}{2}\log s_0^2 + \left(\frac{T - t +1}{2}\right)\log\left[1 + \frac{(t_0-t)(s_0^{-2}-1)}{T-t+1}\right] \notag  \\
    &\quad\: + \left(\frac{T - t +1}{2}\right)\log\left[1 - \frac{ \frac{2(T-t_0+1)|b_0|C_1\sqrt{(t_0-t)\log T} }{(T-t+1)^2}}{1 + \frac{(T-t_0+1)(s_0^2-1)}{T-t+1} + \xi(T,t_0,t,s_0)}\right] \notag \\
    &\quad\: + \mathcal{O}(\log T). \label{eq:thm3-cs2-a-ii-bd1}
\end{align}
Since $(s_0^2-1)^2 = \mathcal{O}(1)$ by assumption, then again Assumption \ref{assumption:scale} implies $ T-t_0+1 \gg \log T$. We have shown that on $\mathcal{E}$:
\begin{align*}
    \left|\xi(T,t_0,t,s_0)\right| &\leq (1+s_0^2 + 2s_0)\left[\frac{C_1\sqrt{\log T}}{\sqrt{T-t+1}} + \frac{C_1^2\log T}{T-t+1}\right]  \\
    &\leq  4 \overline{s}^2\left[\frac{C_1\sqrt{\log T}}{\sqrt{T-t_0+1}} + \frac{C_1^2\log T}{T-t_0+1}\right] 
\end{align*}
so for large $T$ we have $\xi(T,t_0,t,s_0) > - \underline{s}^2/2$, and we have:
\begin{align*}
    -1 <\underline{s}^2 - 1 <\frac{(T-t_0+1)(s_0^2-1)}{T-t+1} < \overline{s}^2 -1 < \infty
\end{align*}
so: 
\begin{align}
    \log \frac{\alpha_{t_0}}{\alpha_t} &>  -C_1|s_0^{-2}-1|\sqrt{\log T(t_0-t)} + \frac{(t_0-t)}{2}\log s_0^2 + \left(\frac{T - t +1}{2}\right)\log\left[1 + \frac{(t_0-t)(s_0^{-2}-1)}{T-t+1}\right] \notag  \\
    &\quad\: + \left(\frac{T - t +1}{2}\right)\log\left[1 - \frac{4(T-t_0+1)|b_0|C_1\sqrt{(t_0-t)\log T} }{\underline{s}^2(T-t+1)^2}\right] \notag \\
    &\quad\: + \mathcal{O}(\log T). \label{eq:thm3-cs2-a-ii-bd2}
\end{align}
Next, since $(s_0^2-1)^2 = \mathcal{O}(1)$ and $\min\{b_0^2,b_0^2/s_0^2\}\leq (s_0^2-1)^2$, then then there is some $\overline{b} > 0$ so that:
\begin{align*}
    \frac{4(T-t_0+1)|b_0|C_1\sqrt{(t_0-t)\log T} }{\underline{s}^2(T-t+1)^2} \leq \frac{4\overline{b}C_1\sqrt{\log T}}{\underline{s}^2\sqrt{T-t_0+1}}. \tag{$T-t+1 \geq T-t_0+1\geq t_0-t$}
\end{align*}
Since $T-t_0+1\gg \log T$ in this case, then the RHS above is converging to zero and we can use another first order Taylor approximation to write:
\footnotesize
\begin{align*}
    \left(\frac{T - t +1}{2}\right)\log\left[1 - \frac{4(T-t_0+1)|b_0|C_1\sqrt{(t_0-t)\log T}}{\underline{s}^2(T-t+1)^2}\right] &= -\frac{4(T-t_0+1)|b_0|C_1\sqrt{(t_0-t)\log T }}{\underline{s}^2(T-t+1)} + \mathcal{O}\left(\log T\right) \\
    &\geq -\frac{4\overline{s}C_1|s_0^2-1|}{\underline{s}^2} \sqrt{(t_0-t)\log T}\tag{$\min\{b_0^2,b_0^2/s_0^2\}\leq (s_0^2-1)^2$} + \mathcal{O}\left(\log T\right).
\end{align*}
\normalsize
So for some $C_2 > 8\overline{s}C_1/\underline{s}^2$, we can write:
\small
\begin{align*}
    \log \frac{\alpha_{t_0}}{\alpha_t} &>  -C_2|s_0^{2}-1|\sqrt{(t_0-t)\log T} + \frac{(t_0-t)}{2}\log s_0^2 + \left(\frac{T - t +1}{2}\right)\log\left[1 + \frac{(t_0-t)(s_0^{-2}-1)}{T-t+1}\right] +\mathcal{O}(\log T) \\
    &=h_T(t,t_0) 
\end{align*}
\normalsize
We now show that for any $t_0$, $h_T$ is decreasing in $t$. Analyzing the first derivative of $h_T$ we have:
\small
\begin{align*}
    \frac{\partial h_T}{\partial t} &= \frac{1}{2}\left[ C_2|s_0^{2}-1|\sqrt{\frac{\log T}{t_0-t}}- \log\left[1 + \frac{(s_0^2-1)(T-t_0+1)}{T-t+1}\right] + \frac{\frac{(T-t_0+1)(s_0^{2}-1)}{T-t+1}}{1 + \frac{(s_0^2-1)(T-t_0+1)}{T-t+1} }\right] \\
    &= \frac{1}{2}\left[ C_2|s_0^{2}-1|\sqrt{\frac{\log }{(T-t_0+1)x}}- \log\left[1 + \frac{(s_0^2-1)}{1 + x}\right] + \frac{\frac{(s_0^{2}-1)}{1+x}}{1 + \frac{(s_0^2-1)}{1 + x} }\right] \tag{$x = \frac{t_0-t}{T-t_0+1}$}
\end{align*}
\normalsize
Next, note that the term:
\begin{align*}
     g_1(x) := C_2|s_0^{2}-1|\sqrt{\frac{\log T}{(T-t_0+1)x}}
\end{align*}
is convex and decreasing as a function of $x$. On the other hand, if we define:
\begin{align*}
    g_2(x) &:= - \log\left[1 + \frac{(s_0^2-1)}{1 + x}\right] + \frac{\frac{(s_0^{2}-1)}{1+x}}{1 + \frac{(s_0^2-1)}{1 + x} }
\end{align*}
then we have:
\begin{align*}
    \frac{d g_2}{dx} = \frac{\frac{(s_0^{2}-1)^2}{(1+x)^3}}{\left(1 + \frac{(s_0^2-1)}{1 + x} \right)^2} > 0
\end{align*}
so this term is increasing as a function of $x$. Therefore:
\begin{align*}
    g_1(x) + g_2(x) &\leq g_1(x)\bigg|_{x = \inf\:\{x\in[0,1]\::\:t_0-t\leq T-t_0+1, \; t\in\mathcal{T}_{\beta}\} } +  g_2(x)\bigg|_{x = \sup\:\{x\in[0,1]\::\:t_0-t\leq T-t_0+1, \; t\in\mathcal{T}_{\beta}\} } \\
    &\leq g_1\left(\frac{C_{1,\beta}\log T}{(T-t_0+1)(s_0^2-1)^2}\right) + g_2\left(1\right) \\
    &= \frac{C_2(s_0^{2}-1)^2}{\sqrt{C_{1,\beta}}}- \log\left[1 + \frac{(s_0^2-1)}{2}\right] + \frac{\frac{(s_0^{2}-1)}{2}}{1 + \frac{(s_0^2-1)}{2} }.
\end{align*}
Taking the derivative of the second and third term with respect to $s_0$ gives:
\begin{align}
    -\frac{s_0(s_0^2-1)}{2\left(1 + \frac{s_0^2-1}{2}\right)^2} \label{eq:thm3-deriv2}
\end{align}
Suppose that there is some $\overline{\nu} > 1$ so that $s_0^2 > \overline{\nu}^2$ for all $T$, then (\ref{eq:thm3-deriv2}) is negative, implying:
\begin{align*}
    \frac{C_2(s_0^{2}-1)^2}{\sqrt{C_{1,\beta}}}- \log\left[1 + \frac{(s_0^2-1)}{2}\right] + \frac{\frac{(s_0^{2}-1)}{2}}{1 + \frac{(s_0^2-1)}{2}} &\leq \frac{C_2(\overline{s}^{2}-1)^2}{\sqrt{C_{1,\beta}}} - \log\left[1 + \frac{(\overline{\nu}^2-1)}{2}\right] + \frac{\frac{(\overline{\nu}^{2}-1)}{2}}{1 + \frac{(\overline{\nu}^2-1)}{2}}\\
    &:= \frac{C_2(\overline{s}^{2}-1)^2}{\sqrt{C_{1,\beta}}} -\nu^*.
\end{align*}
The inequality $\log(1+x)> x/(x+1)$ for $x > -1$ and the fact that $\overline{\nu}^2 > 1$ imply that $\nu^* > 0$, so if we pick $C_{1,\beta} > (C_1(\overline{s}^2-1)^2/\nu^*)^2$, then we will have $\partial h_T/\partial t < 0$. On the other hand, if there is some $\underline{\nu} < 1$ so that $s_0^2 < \overline{\nu}^2$ for all $T$, then (\ref{eq:thm3-deriv2}) is positive, implying:
\begin{align*}
    \frac{C_2(s_0^{2}-1)^2}{\sqrt{C_{1,\beta}}}- \log\left[1 + \frac{(s_0^2-1)}{2}\right] + \frac{\frac{(s_0^{2}-1)}{2}}{1 + \frac{(s_0^2-1)}{2}} &\leq \frac{C_2(\overline{s}^{2}-1)^2}{\sqrt{C_{1,\beta}}} - \log\left[1 + \frac{(\underline{\nu}^2-1)}{2}\right] + \frac{\frac{(\underline{\nu}^{2}-1)}{2}}{1 + \frac{(\underline{\nu}^2-1)}{2}} 
\end{align*}
and we can use the same argument to pick $C_{1,\beta}>0$ large enough so that $\partial h_T/\partial t < 0$. On the other hand, if $\lim_{T\to\infty} s_0^2 = 1$, we can again use the inequality $\log(1+x) > x/(1+x)$ for $x > -1$ to write:
\begin{align*}
    - \log\left[1 + \frac{(s_0^2-1)}{2}\right] + \frac{\frac{(s_0^{2}-1)}{2}}{1 + \frac{(s_0^2-1)}{2} } \leq 0
\end{align*}
and at the same time, a first order Taylor approximation of $\log(1+x)$ gives:
\begin{align*}
    - \log\left[1 + \frac{(s_0^2-1)}{2}\right] + \frac{\frac{(s_0^{2}-1)}{2}}{1 + \frac{(s_0^2-1)}{2} } &= \frac{(s_0^2-1)}{2} + \mathcal{O}\left((s_0^2-1)^2\right) - \frac{\frac{(s_0^2-1)}{2}}{1 + \frac{(s_0^2-1)}{2}} \\
    &= \mathcal{O}\left((s_0^2-1)^2\right)  + \frac{\frac{(s_0^2-1)^2}{4}}{1 + \frac{(s_0^2-1)}{2}} \\
    &= \mathcal{O}\left((s_0^2-1)^2\right)
\end{align*}
so this term is negative and converges to zero at rate $\mathcal{O}((s_0^2-1)^2)$. Therefore, there is some $B_1>0$ such that:
\begin{align*}
    \lim_{T\to\infty} \frac{\frac{C_2(s_0^{2}-1)^2}{\sqrt{C_{1,\beta}}}}{\log\left[1 + \frac{(s_0^2-1)}{2}\right] - \frac{\frac{(s_0^{2}-1)}{2}}{1 + \frac{(s_0^2-1)}{2} }} = \frac{B_1}{\sqrt{C_{1,\beta}}}.
\end{align*}
So if we pick $C_{1,\beta} > B_1^2$, then for large $T$ we will also have $\partial h_T/\partial t < 0$ in this case, confirming that $h_T(t,t_0)$ is decreasing in $t$, and thus:
\begin{align*}
    \log \frac{\alpha_{t_0}}{\alpha_t} &> h_T(t,t_0)\big|_{t = \sup\;\{t\in\mathcal{T}_{\beta}:t\leq t_0\}} \\
    &\geq h_T\left(t_0-\frac{C_{1,\beta} \log T}{(s_0^2-1)^2},t_0\right) \\
    &= -C_2\sqrt{C_{1,\beta}} \log T + \frac{C_{1,\beta}\log s_0^2}{2(s_0^2-1)^2}\log T \\
    &\quad\:+ \left(\frac{C_{1,\beta} \log T}{2(s_0^2-1)^2}\right)\left(1 +\frac{(T-t_0+1)(s_0^2-1)^2}{C_{1,\beta}\log T} \right)\log\left[1 + \frac{(s_0^{-2}-1)}{1 +\frac{(T-t_0+1)(s_0^2-1)^2}{C_{1,\beta} \log T}}\right] \\
    &\quad\:+ \mathcal{O}(\log T)
\end{align*}
By Assumption \ref{assumption:scale}:
\begin{align*}
    \lim_{T\to\infty}\frac{C_{1,\beta}\log T}{(T-t_0+1)(s_0^2-1)^2} = 0 \implies \lim_{T\to\infty} \frac{(s_0^{-2}-1)}{1 +\frac{(T-t_0+1)(s_0^2-1)^2}{C_{1,\beta}\log T}} =0.
\end{align*}
Therefore, another first order Taylor approximation gives:
\small
\begin{align*}
    &\left(\frac{C_{1,\beta}\log T}{2(s_0^2-1)^2}\right)\left(1 +\frac{(T-t_0+1)(s_0^2-1)^2}{C_{1,\beta}\log T} \right)\log\left[1 + \frac{(s_0^{-2}-1)}{1 +\frac{(T-t_0+1)(s_0^2-1)^2}{C_{1,\beta}\log T}}\right]\\
    &= \frac{C_{1,\beta}(s_0^{-2} -1)}{2(s_0^2-1)^2}\log T + \mathcal{O}\left(\frac{C_{1,\beta}^2\log^2 T}{(T-t_0+1)(s_0^2-1) + C_{1,\beta}\log T}\right) \\
    &= \frac{C_{1,\beta}(s_0^{-2} -1)}{2(s_0^2-1)^2} \log T + \mathcal{O}\left(\log T\right) \tag{$(T-t_0+1)(s_0^2-1)^2 \gg \log T$}
\end{align*}
\normalsize
and thus for some $B_1 > 0$:
\begin{align*}
    \log \frac{\alpha_{t_0}}{\alpha_t} &> \left(\frac{C_{1,\beta}f(s_0^{-2})}{2(s_0^2-1)^2}-C_2\sqrt{C_{1,\beta}} - B_1\right) \log T 
\end{align*}
Once again, if $s_0^2 \geq \overline{\nu}^2 > 1$ for all $T$, then $f(s_0^{-2})/(s_0^2-1)^2 \geq f(\overline{\nu}^{-2})/(\overline{s}^2-1)^2 > 0$, while if $s_0^2 \leq \underline{\nu}^2 < 1$, then $f(s_0^{-2})/(s_0^2-1)^2 \geq f(\underline{\nu}^{-2}) > 0$, and since:
\begin{align*}
    \lim_{s_0^2\to 1} \frac{f(s_0^{-2})}{(s_0^2-1)^2} = \frac{1}{2}
\end{align*}
then if $\lim_{T\to\infty} s_0^2 = 1$, then for large enough $T$, $f(s_0^{-2})/(s_0^2-1)^2 > 1/4$, so again we can pick $C_{1,\beta}>0$ large enough that:
\begin{align*}
    \frac{C_{1,\beta}f(s_0^{-2})}{2(s_0^2-1)^2}-C_2\sqrt{C_{1,\beta}} - B_1 
    > C_\pi
\end{align*}
proving (\ref{eq:thm3-result}) holds for this case.

\subsubsection*{Case 2.b: $t_0 - t \leq T-t_0+1$.}

By the same argument we used in Case 2.a, on $\mathcal{E}$ we have $|\xi(T,t_0,t,s_0)| > -1$ and we can use the inequality $\log(1+x) > x/(x+1)$ for $x > -1$ to write:
\small
\begin{align*}
    \left(\frac{T - t +1}{2}\right)\log\left[1 + \xi(T,t_0,t,s_0)\right] &\geq \left(\frac{T - t +1}{2}\right) \left[\xi(T,t_0,t,s_0) - \frac{\xi(T,t_0,t,s_0)^2}{1+\xi(T,t_0,t,s_0)}\right] \\
    &= \frac{1}{2}\sum_{t'=t}^{t_0-1}(z_{t'}^2 - 1) - \frac{\left(\sum_{t'=t}^{t_0-1}z_{t'}\right)^2 }{2(T-t+1)} + \frac{s_0^2}{2}\left[\sum_{t'=t_0}^T (z_{t'}^2 - 1) -  \frac{\left(\sum_{t'=t_0}^{T}z_{t'}\right)^2 }{(T-t+1)}\right]  \\
    &\quad\; - \frac{s_0\sum_{t'=t_0}^{T}z_{t'} \sum_{t'=t}^{t_0-1}z_{t'}}{(T-t+1)} - \left(\frac{T - t +1}{2}\right)\frac{\xi(T,t_0,t,s_0)^2}{1+\xi(T,t_0,t,s_0)}\\
    &\geq  \frac{1}{2}\sum_{t'=t}^{t_0-1}(z_{t'}^2 - 1) + \frac{s_0^2}{2} \sum_{t'=t_0}^T (z_{t'}^2 - 1) + \mathcal{O}(\log T)
\end{align*}
\normalsize
Returning to bound (\ref{eq:thm3-cs2-bd2}), we now have: 
\begin{align}
    \log \frac{\alpha_{t_0}}{\alpha_t} &>  \frac{(s_0^2-1)}{2}\sum_{t'=t_0}^T (z_{t'}^2-1)- \frac{(T-t_0+1)}{2}\log s_0^2  + \left(\frac{T - t +1}{2}\right)\log\left[1 + \frac{\frac{(T-t_0+1)(s_0^2-1)}{T-t+1}}{1 + \xi(T,t_0,t,s_0)}\right] \notag  \\
    &\quad\: + \left(\frac{T - t +1}{2}\right)\log\left[1 + \frac{\frac{(T-t_0+1)(t_0-t)b_0^2 }{2(T-t+1)^2} - \frac{2(T-t_0+1)|b_0|C_1\sqrt{(t_0-t)\log T }}{(T-t+1)^2}}{1 + \frac{(T-t_0+1)(s_0^2-1)}{T-t+1} + \xi(T,t_0,t,s_0)}\right] \notag \\
    &\quad\: + \mathcal{O}(\log T). \label{eq:thm3-cs2-b-bd1}
\end{align}
As in Case 2.a, we can completely ignore the first line of (\ref{eq:thm3-cs2-b-bd1}) if $s_0^2=1$, and if $s_0^2\neq 1$, then Assumption \ref{assumption:scale} implies $T-t_0+1\gg \log T$. We have shown that on $\mathcal{E}$:
\begin{align*}
    \left|\xi(T,t_0,t,s_0)\right| &\leq (1+s_0^{2} + 2s_0)\left[\frac{C_1\sqrt{\log T }x}{\sqrt{T-t+1}} + \frac{C_1^2\log T}{T-t+1}\right] \\
    &\leq 4\overline{s}^2\left[\frac{C_1\sqrt{\log T}}{\sqrt{T-t_0+1}} + \frac{C_1^2\log T}{T-t_0+1}\right] \tag{$T-t+1\geq T-t_0+1$}
\end{align*}
so again, $\left|\xi(T,t_0,t,s_0)\right| \to 0$ on $\mathcal{E}$ when $s_0^2\neq 1$. Next, we can write:
\footnotesize
\begin{align*}
    \left(\frac{T - t +1}{2}\right)\log\left[1 + \frac{\frac{(T-t_0+1)(s_0^2-1)}{T-t+1}}{1 + \xi(T,t_0,t,s_0)}\right] &= \left(\frac{T - t +1}{2}\right)\log\left[1 + \frac{(T-t_0+1)(s_0^2-1)}{T-t+1}\right] \\
    &\quad\:+ \left(\frac{T - t +1}{2}\right)\log\left[1 - \left(\frac{ \xi(T,t_0,t,s_0)}{1 + \xi(T,t_0,t,s_0)}\right)\left(\frac{\frac{(T-t_0+1)(s_0^2-1)}{T-t+1}}{1+\frac{(T-t_0+1)(s_0^2-1)}{T-t+1}}\right)\right].
\end{align*}
\normalsize
We have just shown that:
\begin{align*}
    \frac{ \xi(T,t_0,t,s_0)}{1 + \xi(T,t_0,t,s_0)} &= \mathcal{O}\left(\frac{\sqrt{\log T}}{\sqrt{T-t+1}}\right) =  \mathcal{O}\left(\frac{\sqrt{\log T}}{\sqrt{T-t_0+1}}\right) \\
    \frac{\frac{(T-t_0+1)(s_0^2-1)}{T-t+1}}{1+\frac{(T-t_0+1)(s_0^2-1)}{T-t+1}} &= \mathcal{O}\left(\frac{(s_0^2-1)(T-t_0+1)}{T-t+1}\right)= \mathcal{O}(1)
\end{align*}
so another first order Taylor approximation of $\log(1+x)$ around zero gives:
\footnotesize
\begin{align*}
     \left(\frac{T - t +1}{2}\right)\log\left[1 - \left(\frac{ \xi(T,t_0,t,s_0)}{1 + \xi(T,t_0,t,s_0)}\right)\left(\frac{\frac{(T-t_0+1)(s_0^2-1)}{T-t+1}}{1+\frac{(T-t_0+1)(s_0^2-1)}{T-t+1}}\right)\right] &= (s_0^2-1)\mathcal{O}\left(\sqrt{(T-t_0+1)\log T}\right) + \mathcal{O}(\log T).
\end{align*}
\normalsize
Similarly, for some $C_1>0$ on $\mathcal{E}$ we have:
\begin{align*}
    \left|\sum_{t' = t_0}^T (z_{t'}^2-1)\right|  \leq C_1\sqrt{(T-t_0+1)\log T} \tag{on $\mathcal{E}_7\subset\mathcal{E}$}
\end{align*}
So for some $C_2 \geq  C_1/2$, we can now rewrite (\ref{eq:thm3-cs2-b-bd1}) on $\mathcal{E}$ as:
\small
\begin{align}
    \log \frac{\alpha_{t_0}}{\alpha_t} &> -C_2|s_0^2-1|\sqrt{(T-t_0+1)\log T}- \frac{(T-t_0+1)}{2}\log s_0^2  + \left(\frac{T - t +1}{2}\right)\log\left[1 + \frac{(T-t_0+1)(s_0^2-1)}{T-t+1}\right] \notag  \\
    &\quad\: + \left(\frac{T - t +1}{2}\right)\log\left[1 + \frac{\frac{(T-t_0+1)(t_0-t)b_0^2 }{2(T-t+1)^2} - \frac{2(T-t_0+1)|b_0|C_1\sqrt{(t_0-t)\log T}}{(T-t+1)^2}}{1 + \frac{(T-t_0+1)(s_0^2-1)}{T-t+1} + \xi(T,t_0,t,s_0)}\right] \notag \\
    &\quad\: + \mathcal{O}(\log T). \label{eq:thm3-cs2-b-bd2}
\end{align}
\normalsize

\subsubsection*{Case 2.b.i: $\min\left\{b_0^2,b_0^2/s_0^2\right\} \geq (s_0^2-1)^2$}

As in case 2.a.i, we can assume that Assumption \ref{assumption:mean} holds in this case. Using the same argument from Case 2.a.i, it is straight forward to show that the term:
\begin{align*}
    - \frac{(T-t_0+1)}{2}\log s_0^2  + \left(\frac{T - t +1}{2}\right)\log\left[1 + \frac{(T-t_0+1)(s_0^2-1)}{T-t+1}\right]
\end{align*}
is decreasing in $t$ and equal to zero when $t = t_0$, and that on $\mathcal{E}$:
\begin{align*}
     \log\left[1 + \frac{\frac{(T-t_0+1)(t_0-t)b_0^2 }{2(T-t+1)^2} - \frac{2(T-t_0+1)|b_0|C_1\sqrt{(t_0-t)\log T}}{(T-t+1)^2}}{1 + \frac{(T-t_0+1)(s_0^2-1)}{T-t+1} + \xi(T,t_0,t,s_0)}\right] \geq \log\left[1 + \frac{(T-t_0+1)(t_0-t)b_0^2 }{16\overline{s}^2(T-t+1)^2}\right].
\end{align*}
So in this case we can write:
\small
\begin{align*}
    \log \frac{\alpha_{t_0}}{\alpha_t} &> -C_2|s_0^2-1|\sqrt{(T-t_0+1)\log T}  + \left(\frac{T - t +1}{2}\right)\log\left[1 + \frac{(T-t_0+1)(t_0-t)b_0^2 }{16\overline{s}^2(T-t+1)^2}\right] + \mathcal{O}(\log T). 
\end{align*}
\normalsize
The second term is decreasing in $t$ and $t_0 -t \geq T-t_0+1$ implies $t \leq 2t_0-T-1$ in this case so for some $B_1 > 0$:
\small
\begin{align*}
    \log \frac{\alpha_{t_0}}{\alpha_t} &> -C_2|s_0^2-1|\sqrt{(T-t_0+1)\log T} + (T-t_0+1)\log\left[1 + \frac{b_0^2 }{16^2\overline{s}^2}\right] - B_1 \log T. 
\end{align*}
\normalsize
Next we have:
\footnotesize
\begin{align*}
    -C_2|s_0^2-1|\sqrt{(T-t_0+1)\log T} + (T-t_0+1)\log\left[1 + \frac{b_0^2 }{16^2\overline{s}^2}\right] &\geq (T-t_0+1)\left(\log\left[1 + \frac{b_0^2 }{16^2\overline{s}^2}\right]-C_2(\overline{s}^2+1) \sqrt{\frac{\log T}{T-t_0+1}}\right).
\end{align*}
\normalsize
Suppose that $b_0^2\to\infty$ as $T\to\infty$, then since $T-t_0+1\gtrsim \log T$, there is some $B_2 > 0$ so that:
\begin{align*}
    \frac{\log T}{B_2}\left(\log\left[1 + \frac{b_0^2 }{16\overline{s}^2}\right]-\frac{C_2(\overline{s}^2+1)}{\sqrt{B_2}}\right)
\end{align*}
So (\ref{eq:thm3-result}) holds for $T$ large enough that:
\begin{align*}
    \log\left[1 + \frac{b_0^2 }{16^2\overline{s}^2}\right] > \frac{C_2(\overline{s}^2+1)}{\sqrt{B_2}} + B_2(B_1 + C_\pi).
\end{align*}
On the other hand, if $b_0^2 = \mathcal{O}(1)$, then Assumption \ref{assumption:mean} implies $T-t_0+1\gg \log T$. If $b_0^2 > \underline{b}^2 > 0$ for all $T$, then for large $T$ we have:
\begin{align*}
    \log\left[1 + \frac{b_0^2 }{16^2\overline{s}^2}\right]-C_2(\overline{s}^2+1) \sqrt{\frac{\log T}{T-t_0+1}} >\frac{1}{2}\log\left[1 + \frac{\underline{b}^2 }{16^2\overline{s}^2}\right] > 0
\end{align*}
and:
\begin{align*}
    \frac{(T-t_0+1)}{2}\log\left[1 + \frac{\underline{b}^2 }{16^2\overline{s}^2}\right] - \log T \geq C_\pi \log T.
\end{align*}
Finally, if $\lim_{T\to\infty}b_0^2=0$, then a first order Taylor expansion of $\log(1+x)$ gives: 
\begin{align*}
    \log\left[1 + \frac{b_0^2 }{16^2\overline{s}^2}\right] = \frac{b_0^2 }{16^2\overline{s}^2} + \mathcal{O}(b_0^4) \geq \frac{b_0^2 }{512\overline{s}^2} 
\end{align*}
where the last inequality holds for large $T$, and thus:
\begin{align*}
    \log \frac{\alpha_{t_0}}{\alpha_t} &> -C_2(\overline{s}^2+1)\sqrt{(T-t_0+1)\log T} + \frac{(T-t_0+1)b_0^2 }{512\overline{s}^2}  - B_1 \log T. 
\end{align*}
But Assumption \ref{assumption:mean} implies $(T-t_0+1)b_0^2 \gg \log T$, so the second term dominates for large $T$ we again get that (\ref{eq:thm3-result}) holds for this case.

\subsubsection*{Case 2.b.ii: $\min\left\{b_0^2,b_0^2/s_0^2\right\} < (s_0^2-1)^2$}

As in case 2.a.ii, we can assume that Assumption \ref{assumption:scale} holds in this case. Returning to bound (\ref{eq:thm3-cs2-b-bd2}), we can use the same argument from Case 2.a.ii to get:
\small
\begin{align*}
    \log \frac{\alpha_{t_0}}{\alpha_t} &> -C_2|s_0^2-1|\sqrt{(T-t_0+1)\log T}- \frac{(T-t_0+1)}{2}\log s_0^2  + \left(\frac{T - t +1}{2}\right)\log\left[1 + \frac{(T-t_0+1)(s_0^2-1)}{T-t+1}\right] \notag  \\
    &\quad\: + \left(\frac{T - t +1}{2}\right)\log\left[1 - \frac{4(T-t_0+1)|b_0|C_1\sqrt{(t_0-t)\log T }}{\underline{s}^2(T-t+1)^2}\right]  + \mathcal{O}(\log T).
\end{align*}
\normalsize
and:
\footnotesize
\begin{align*}
    \left(\frac{T - t +1}{2}\right)\log\left[1 - \frac{4(T-t_0+1)|b_0|C_1\sqrt{(t_0-t)\log T}}{\underline{s}^2(T-t+1)^2}\right] &= -\frac{4(T-t_0+1)|b_0|C_1\sqrt{(t_0-t)\log T}}{\underline{s}^2(T-t+1)} + \mathcal{O}\left(\log T\right) \\
    &\geq -\frac{4\overline{s}C_1|s_0^2-1|}{\underline{s}^2} \sqrt{(T-t_0+1)\log T} \tag{$T-t+1\geq t_0-t \geq T-t_0+1$ and $\min\{b_0^2,b_0^2/s_0^2\}\leq (s_0^2-1)^2$} + \mathcal{O}\left(\log T\right).
\end{align*}
\normalsize
So for $C_3 > C_2 + 4\overline{s}C_1/\underline{s}^2$ we have:
\small
\begin{align}
    \log \frac{\alpha_{t_0}}{\alpha_t} &> -C_3|s_0^2-1|\sqrt{(T-t_0+1)\log T}- \frac{(T-t_0+1)}{2}\log s_0^2  + \left(\frac{T - t +1}{2}\right)\log\left[1 + \frac{(T-t_0+1)(s_0^2-1)}{T-t+1}\right] \notag \\
    &\quad\:+ \mathcal{O}(\log T). \label{eq:thm3-cs2-b-ii-bd1}
\end{align}
\normalsize
The bound (\ref{eq:thm3-cs2-b-ii-bd1}) is decreasing as a function of $t$ and in this case, since $t_0-t>T-t_0+1$, then $t < 2t_0-T + 1$. Therefore, we have:
\small
\begin{align*}
    \log \frac{\alpha_{t_0}}{\alpha_t} &> -C_3|s_0^2-1|\sqrt{(T-t_0+1)\log T}-\frac{(T-t_0+1)}{2}\log s_0^2 + (T-t_0+1)\log\left[1 + \frac{(s_0^2-1)}{2}\right]  + \mathcal{O}(\log T) \\
    &= -C_3|s_0^2-1|\sqrt{(T-t_0+1)\log T} + (T-t_0+1)\log\left[1 + \frac{s_0+\frac{1}{s_0}-2}{2}\right] + \mathcal{O}(\log T)
\end{align*}
\normalsize
If there is some $\underline{\nu} <1$ so that $s_0 \leq \underline{\nu}$ for all $T$, then: 
\small
\begin{align*}
    \log \frac{\alpha_{t_0}}{\alpha_t} &> -C_3\sqrt{(T-t_0+1)\log T}  + (T-t_0+1)\log\left[1 + \frac{\underline{\nu} +\underline{\nu}^{-1}-2}{2}\right] + \mathcal{O}(\log T)
\end{align*}
\normalsize
and if there is some $\overline{\nu} >1$ so that $s_0 \geq \underline{\nu}$ for all $T$, then: 
\small
\begin{align*}
    \log \frac{\alpha_{t_0}}{\alpha_t} &> -C_3|\overline{s}^2-1|\sqrt{(T-t_0+1)\log T}  + (T-t_0+1)\log\left[1 + \frac{\overline{\nu} +\overline{\nu}^{-1}-2}{2}\right] + \mathcal{O}(\log T)
\end{align*}
\normalsize
and since $ \min\{\underline{\nu} +\underline{\nu}^{-1},  \overline{\nu} +\overline{\nu}^{-1}\}-2 > 0$ and $T-t_0+1 \gg \log T$, the second term dominates in both the bounds above and for large enough $T$ we will have $\log \frac{\alpha_{t_0}}{\alpha_t} \geq C_\pi \log^2 T$. On the other hand, if $\lim_{T\to\infty} s_0^2 = 1$, then we can use a first order Taylor approximation of $\log(1+x)$ around zero to write: 
\begin{align*}
    \log\left[1 + \frac{s_0+\frac{1}{s_0}-2}{2}\right] = \frac{s_0+\frac{1}{s_0}-2}{2} + \mathcal{O}\left(\left(s_0+\frac{1}{s_0}-2\right)^2\right).
\end{align*}
Since $\lim_{s_0\to 1} s_0+\frac{1}{s_0}-2 = 0$, then for large enough $T$, we have:
\begin{align*}
    \log\left[1 + \frac{s_0+\frac{1}{s_0}-2}{2}\right] = \frac{s_0+\frac{1}{s_0}-2}{4}
\end{align*}
and thus:
\small
\begin{align*}
    \log \frac{\alpha_{t_0}}{\alpha_t} &> -C_3|s_0^2-1|\sqrt{(T-t_0+1)\log T }+ (T-t_0+1)\left( \frac{s_0+\frac{1}{s_0}-2}{4}\right) + \mathcal{O}(\log T)
\end{align*}
\normalsize
Next, since:
\begin{align*}
    \lim_{s_0\to1} \frac{s_0+\frac{1}{s_0}-2}{(s_0^{2} - 1)^2} = \frac{1}{4}
\end{align*}
then for large enough $T$ we have:
\begin{align*}
    s_0+\frac{1}{s_0}-2 > \frac{(s_0^{2} - 1)^2}{8}
\end{align*}
and thus: 
\begin{align*}
    \log \frac{\alpha_{t_0}}{\alpha_t} &> -C_3\sqrt{(s_0^2-1)^2(T-t_0+1)\log T} + \frac{(T-t_0+1)(s_0^2-1)^2}{8}+ \mathcal{O}(\log T)
\end{align*}
By Assumption \ref{assumption:scale}, $(T-t_0+1)(s_0^2-1)^2 \gg \log T$, so again the second term dominates the others and (\ref{eq:thm3-result}) holds on $\mathcal{E}$ for large $T$ in this case.

\end{proof}
\subsection{Theorem \ref{theorem:alpha-mixing} Event Bounds}

\begin{lemma}[Theorem \ref{theorem:alpha-mixing} Event Bounds]\label{lemma:thm4-event-bound}

Let $\{y_t\}_{t\geq 1}$ be a univariate sequence of random variables satisfying Assumption \ref{assumption:alpha-mixing}. Let $\E[y_t] = b_0\mathbbm{1}\{t\geq t_0\}$ and $\normalfont{\Var}(y_t) = (s_0^2)^{\mathbbm{1}\{t\geq t_0\}}$, with $b_0 \in \mathbb{R}$ and $s_0 > 0$ so that it is possible that $b_0 = 0$ or $s^2_0 = 1$. Define the standardized term $z_t:= (y_t - b_0\mathbbm{1}\{t\geq t_0\})/s_0^{\mathbbm{1}\{t\geq t_0\}}$ and for any sequence $\{a_T\}_{T \geq 1}$ such that $a_T \to \infty$ and constants $C, C_{1,a}, C_{2,a}, C_{3,a} > 0$, define the set:
\begin{align*}
    \mathcal{T}_{a_T} &:= \left\{ 1 \leq t \leq  T - C_{2,a}a_T\log^2 T\::\: |t_0 - t| > \frac{C_{1,a}a_T\log^2 T}{\max\{\min\{b_0^2,b_0^2/s_0^2\},(s_0^2-1)^2\} } \right\} 
\end{align*}
and the events: 
\begin{align*}
    \mathcal{E}_1 &:= \bigcap_{t \in \mathcal{T}_{a_T}} \left\{\left|\sum_{t'=\min\{t_0,t\}}^{\max\{t_0,t\}-1} z_{t'}\right| < C_{3,a}\sqrt{a_T|t_0-t|}\log T\right\}, \\
    \mathcal{E}_2 &:= \bigcap_{t \in \mathcal{T}_{a_T}} \left\{\left|\sum_{t'=\min\{t_0,t\}}^{\max\{t_0,t\}-1} (z_{t'}^2 - 1) \right| <  C_{3,a}\sqrt{a_T|t_0-t|}\log T  \right\},  \\
    \mathcal{E}_3 &:=  \bigcap_{t =1}^T \left\{\left|\sum_{t'=t}^{T} z_{t'}\right| < C_{3,a}\sqrt{a_T(T-t+1)}\log T  \right\}, \\
    \mathcal{E}_4 &:=  \bigcap_{t =1}^T \left\{\left|\sum_{t'=t}^{T} (z_{t'}^2 - 1) \right| <  C_{3,a}\sqrt{a_T(T-t+1)}\log T  \right\},\\
    \mathcal{E}_5 &= \left\{\left|b_0\sum_{t'=t_0}^T z_{t'}\right| \leq \frac{(T-t_0+1)b_0^2 }{4} \right\},
\end{align*}
and the joint event $\mathcal{E} := \bigcap_{i=1}^4 \mathcal{E}_i$. Then if:
\begin{align*}
    \min\{b_0^2,b_0^2/s_0^2\} \min\{t_0,T-t_0+1\} \gg a_T\log^2T
\end{align*}
or:
\begin{align*}
    (s_0^2-1)^2 \min\{t_0,T-t_0+1\} \gg a_T\log^2T
\end{align*}
then there exist $C, C_{1,a}, C_{2,a}, C_{3,a}$ large enough so that we will have $\Pr(\mathcal{E}) \geq 1 - Ca_T^{-1}$. 
\end{lemma}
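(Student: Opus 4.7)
\textbf{Proof plan for Lemma \ref{lemma:thm4-event-bound}.}  The overall strategy is to mirror the proof of Lemma \ref{lemma:thm3-event-bound}, but replace the sub-Gaussian Chernoff-type bounds with the $\alpha$-mixing concentration inequality from Lemma \ref{lemma:alpha-mix-sum}.  By the union bound it suffices to show $\Pr(\mathcal{E}_i^c) \leq C_i\,a_T^{-1}$ for each $i \in \{1,\dots,5\}$ with constants $C_i$ that depend only on the mixing constants from Assumption \ref{assumption:alpha-mixing}; summing these gives the claim.  The key observation is that $\{z_t\}$ is not stationary, but it splits at $t_0$ into two stationary pieces: for $t < t_0$, $z_t = z_{0,t}$; for $t \geq t_0$, $z_t = z_{1,t}$.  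All sums in the events $\mathcal{E}_i$ will therefore be decomposed at $t_0$ before invoking Lemma \ref{lemma:alpha-mix-sum}.

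The first step is to verify that both $\{z_{i,t}\}_{t\geq 1}$ and $\{z_{i,t}^2 - 1\}_{t\geq 1}$ satisfy the hypotheses of Lemma \ref{lemma:Padilla23} for $i \in \{0,1\}$.  Since $z_{i,t}^2 - 1$ is a measurable function of $z_{i,t}$, the generated $\sigma$-algebras are nested, so the mixing coefficients of $\{z_{i,t}^2 - 1\}$ are bounded by $\alpha_{i,k} \leq e^{-Ck}$, and the exponential decay makes $\sum_k (k+1)^{\delta/2}\alpha_k^{\Delta/(2+\delta+\Delta)}$ finite for any $\delta,\Delta > 0$.  For the moment condition,
\[
\E\bigl[|z_{i,t}^2 - 1|^{2+\delta+\Delta}\bigr] \lesssim 1 + \E\bigl[|z_{i,t}|^{2(2+\delta+\Delta)}\bigr],
\]
which is uniformly bounded by Assumption \ref{assumption:alpha-mixing}(ii) provided we pick $\delta,\Delta > 0$ small enough that $2(2+\delta+\Delta) \leq 4 + D_1$; this is possible because $D_1 > 0$.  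Thus Lemma \ref{lemma:alpha-mix-sum} applies to each of $\{z_{0,t}\}$, $\{z_{1,t}\}$, $\{z_{0,t}^2-1\}$, and $\{z_{1,t}^2-1\}$ with the chosen sequence $a_T$, yielding uniform bounds of order $\sqrt{a_T n}\log T$ on partial sums of length $n$, with failure probability at most $1/a_T$.

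With this machinery, each event is handled directly.  For $\mathcal{E}_1$ and $\mathcal{E}_2$, split into the cases $t < t_0$ (a sum over $\{z_{0,t'}\}$ indexed from $t$ to $t_0-1$) and $t > t_0$ (a sum over $\{z_{1,t'}\}$ indexed from $t_0$ to $t-1$); Lemma \ref{lemma:alpha-mix-sum} applied to each stationary piece yields the required bound $C_{3,a}\sqrt{a_T|t_0-t|}\log T$ uniformly over $t \in \mathcal{T}_{a_T}$.  For $\mathcal{E}_3$ and $\mathcal{E}_4$, a given index $t$ may have its sum $\sum_{t'=t}^T$ straddle $t_0$; we decompose
\[
\sum_{t'=t}^T z_{t'} \;=\; \sum_{t'=t}^{t_0-1} z_{0,t'} \;+\; \sum_{t'=t_0}^T z_{1,t'},
\]
bound each piece by Lemma \ref{lemma:alpha-mix-sum}, and combine with $\sqrt{t_0-t} + \sqrt{T-t_0+1} \leq \sqrt{2(T-t+1)}$; the same split handles the $(z_{t'}^2 - 1)$ sums.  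Event $\mathcal{E}_5$ is trivial when $b_0 = 0$; when $b_0 \neq 0$ we are under Assumption \ref{assumption:mean}$^*$, which gives $(T-t_0+1)b_0^2 \gg a_T\log^2 T$, so the bound $|\sum_{t'=t_0}^T z_{1,t'}| \leq C_{3,a}\sqrt{a_T(T-t_0+1)}\log T$ from Lemma \ref{lemma:alpha-mix-sum} multiplied by $|b_0|$ is eventually dominated by $(T-t_0+1)b_0^2/4$.

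The main obstacle is not any single step but the bookkeeping: we must choose $C_{2,a}$ large enough that indices in $\mathcal{T}_{a_T}$ stay bounded away from $T$ by at least $C_{2,a} a_T \log^2 T$ (so the stationary tail $[t_0, T]$ has enough room for the $\alpha$-mixing concentration to give a meaningful bound), and $C_{1,a}$ large enough that the signal-strength lower bound on $|t_0 - t|$ dominates the $\sqrt{a_T|t_0-t|}\log T$ noise terms that will ultimately appear when Lemma \ref{lemma:thm4-event-bound} is plugged into the analogues of Theorems \ref{theorem:smcp}--\ref{theorem:smscp}.  Both constants can be tracked through the proof in the same manner as $C_{1,\beta}, C_{2,\beta}$ in Lemma \ref{lemma:thm3-event-bound}, with the substitution $\log T \mapsto a_T \log^2 T$ throughout.
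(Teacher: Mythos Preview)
Your proposal is correct and follows essentially the same route as the paper: union bound over the five events, split the non-stationary $\{z_t\}$ at $t_0$ into its two stationary pieces, and apply Lemma \ref{lemma:alpha-mix-sum} to each of $\{z_{i,t}\}$ and $\{z_{i,t}^2-1\}$ for $i\in\{0,1\}$. Your explicit verification that $\{z_{i,t}^2-1\}$ inherits the mixing coefficients and satisfies the moment condition of Lemma \ref{lemma:Padilla23} (by choosing $\delta,\Delta$ small enough that $2(2+\delta+\Delta)\le 4+D_1$) is a detail the paper asserts without elaboration, so your treatment is in fact slightly more careful.
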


\begin{proof}

By the union bound $\Pr(\mathcal{E}^c) = 1 - P(\cup_{i=1}^5 \mathcal{E}^c_i) < 1 - \sum^5_{i=1} \Pr(\mathcal{E}^c_i)$, so it is sufficient to show that $\Pr(\mathcal{E}^c_i) \leq \frac{C_i}{a_T}$ for each $i$. First, by Assumption \ref{assumption:alpha-mixing} there are stochastic processes $\{z_{0,t}\}_{t \geq 1}$ and $\{z_{1,t}\}_{t \geq 1}$ such that $z_{0,t} \sim F_0$, $z_{1,t} \sim F_1$, $\E[z_{0,t}] = \E[z_{1,t}] = 0$, $\Var(z_{0,t}) = \Var(z_{1,t}) = 1$, $y_t := z_{0,t} \mathbbm{1}_{\{t < t_0\}}  + (s_0z_{1,t} + b_0)\mathbbm{1}_{\{t \geq t_0\}}$, and:
    \begin{enumerate}[label=\normalfont(\roman*)]
        \item $\{z_{0,t}\}_{t \geq 1}$ and $\{z_{1,t}\}_{t \geq 1}$ are $\alpha$-mixing processes with respective coefficients $\{\alpha_{0,k}\}_{k\geq 1}$ and $\{\alpha_{1,k}\}_{k\geq 1}$ that satisfy $\max\{\alpha_{0,k}, \alpha_{1,k}\} \leq e^{-C k}$ for some $C > 0$. \vspace{-5pt}
        \item There exist constants $D_1, D_2 > 0$ such that $\sup_{t \geq 1} \max\{\E\left[|z_{0,t}|^{4+D_1}\right],\;\E\left[|z_{1,t}|^{4+D_1}\right]\}\leq D_2.$  
    \end{enumerate}
Therefore, the conditions of Lemma \ref{lemma:Padilla23} are met. Next we can write:
\small
\begin{align*}
    \Pr\left(\mathcal{E}_1^c\right) &= \Pr\left(\bigcup_{t \in \mathcal{T}_{a_T}} \left\{\left|\sum_{t'=\min\{t_0,t\}}^{\max\{t_0,t\}-1} z_{t'}\right| \geq C_{3,a}\sqrt{a_T|t_0-t|}\log T\right\} \right) \\
    &\leq \Pr\left(\bigcup_{t \in \mathcal{T}_{a_T} \::\: t < t_0} \left\{\left|\sum_{t'=t}^{t_0-1} z_{t'}\right| \geq C_{3,a}\sqrt{a_T(t_0-t)}\log T\right\} \right) + \Pr\left(\bigcup_{t \in \mathcal{T}_{a_T} \::\: t > t_0} \left\{\left|\sum_{t'=t_0}^{t-1} z_{t'}\right| \geq C_{3,a}\sqrt{a_T(t-t_0)}\log T\right\} \right) \tag{union bound} \\
    &\leq \Pr\left(\bigcup_{t = 1}^{t_0-1} \left\{\left|\sum_{t'=t}^{t_0-1} z_{t'}\right| \geq C_{3,a}\sqrt{a_T(t_0-t)}\log T\right\} \right) + \Pr\left(\bigcup_{t = t_0+1}^T \left\{\left|\sum_{t'=t_0}^{t-1} z_{t'}\right| \geq C_{3,a}\sqrt{a_T(t-t_0)}\log T\right\} \right) \\
    &= \Pr\left(\bigcup_{t = 1}^{t_0-1} \left\{\left|\sum_{t'=t}^{t_0-1} z_{0,t'}\right| \geq C_{3,a}\sqrt{a_T(t_0-t)}\log T\right\} \right) + \Pr\left(\bigcup_{t = t_0+1}^T \left\{\left|\sum_{t'=t_0}^{t-1} z_{1,t'}\right| \geq C_{3,a}\sqrt{a_T(t-t_0)}\log T\right\} \right) \tag{$z_{t'} = z_{0,t'} \mathbbm{1}\{t' < t_0\}  + z_{1,t'}\mathbbm{1}\{t' \geq t_0\}$}
\end{align*}
\normalsize
Since $\{z_{0,t}\}_{t\geq1}$ and $\{z_{1,t}\}_{t\geq1}$ satisfy the conditions of Lemma \ref{lemma:alpha-mix-sum}, for some $K_1,K_2 > 0$ we have:
\begin{align*}
    \Pr\left(\bigcup_{t = 1}^{t_0-1} \left\{\left|\sum_{t'=t}^{t_0-1} z_{0,t'}\right| \geq K_1\sqrt{a_T(t_0-t)}\log T\right\} \right) &\leq \frac{1}{a_T} \\
    \Pr\left(\bigcup_{t = t_0+1}^T \left\{\left|\sum_{t'=t_0}^{t-1} z_{1,t'}\right| \geq K_2\sqrt{a_T(t-t_0)}\log T\right\} \right) &\leq \frac{1}{a_T}
\end{align*}
So for $C_{3,a} > \max\{K_1,K_2\}$ we get $\Pr\left(\mathcal{E}_1^c\right) \leq \frac{2}{a_T}$. Since $\{z^2_{0,t}-1\}_{t\geq1}$ and $\{z^2_{1,t}-1\}_{t\geq1}$ also satisfy the conditions of Lemma \ref{lemma:alpha-mix-sum}, an identical argument shows we can pick $C_{3,a}$ large enough so that $\Pr\left(\mathcal{E}_2^c\right) \leq \frac{2}{a_T}$ as well. 

Next, note that if $t < t_0$ and: 
\begin{align*}
    \left|\sum_{t'=t}^{t_0-1} z_{0,t'}\right| < \frac{C_{3,a}}{2}\sqrt{a_T(t_0-t)}\log T
\end{align*}
and:
\begin{align*}
    \left|\sum_{t'=t_0}^{T} z_{1,t'}\right| < \frac{C_{3,a}}{2}\sqrt{a_T(T-t_0+1)}\log T
\end{align*}
then:
\begin{align*}
    \left|\sum_{t'=t}^{T} z_{t'}\right| < \frac{C_{3,a}}{2}\sqrt{a_T(T-t+1)}\log T + \frac{C_{3,a}}{2}\sqrt{a_T(t_0-t)}\log T \leq C_{3,a}\sqrt{a_T(T-t+1)}\log T. \tag{$T-t+1 > \max\{t_0-t,T-t_0+1\}$}
\end{align*}
So:
\small
\begin{align*}
     \Pr\left(\mathcal{E}^c_3\right) &\leq \Pr\left(\bigcup_{t=1}^{T} \left\{\left|\sum_{t'=t}^{T} z_{t'}\right| \geq C_{3,a}\sqrt{a_T(T-t+1)}\log T\right\} \right) \\
     &\leq\Pr\left(\bigcup_{t=1}^{t_0-1} \left\{\left|\sum_{t'=t}^{T} z_{t'}\right| \geq C_{3,a}\sqrt{a_T(T-t+1)}\log T\right\} \right) + \Pr\left(\bigcup_{t=t_0}^{T} \left\{\left|\sum_{t'=t}^{T} z_{1,t'}\right| \geq C_{3,a}\sqrt{a_T(T-t+1)}\log T\right\} \right) \tag{union bound, $z_t = z_{1,t}$ for $t\geq t_0$} \\
     &\leq\Pr\left(\bigcup_{t=1}^{t_0-1} \left\{\left|\sum_{t'=t}^{t_0-1} z_{0,t'}\right| \geq \frac{C_{3,a}}{2}\sqrt{a_T(t_0-t)}\log T\right\}\cup\left\{ \left|\sum_{t'=t_0}^{T} z_{1,t'}\right| \geq \frac{C_{3,a}}{2}\sqrt{a_T(T-t_0+1)}\log T\right\}  \right) \\
     &\quad\: + \Pr\left(\bigcup_{t=t_0}^{T} \left\{\left|\sum_{t'=t}^{T} z_{1,t'}\right| \geq C_{3,a}\sqrt{a_T(T-t+1)}\log T\right\} \right) \tag{union bound} \\
     &\leq \Pr\left(\bigcup_{t=1}^{t_0-1} \left\{\left|\sum_{t'=t}^{t_0-1} z_{0,t'}\right| \geq \frac{C_{3,a}}{2}\sqrt{a_T(t_0-t)}\log T\right\}  \right) + \Pr\left( \left|\sum_{t'=t_0}^{T} z_{1,t'}\right| \geq \frac{C_{3,a}}{2}\sqrt{a_T(T-t_0+1)}\log T  \right) \\
     &\quad\: + \Pr\left(\bigcup_{t=t_0}^{T} \left\{\left|\sum_{t'=t}^{T} z_{1,t'}\right| \geq C_{3,a}\sqrt{a_T(T-t+1)}\log T\right\} \right) \tag{union bound} \\
\end{align*}
\normalsize
Once again, by Lemma \ref{lemma:alpha-mix-sum} there is some $K_3 > 0$ such that:
\begin{align*}
    \Pr\left(\bigcup_{t=1}^{T} \left\{\left|\sum_{t'=t}^{T} z_{1,t'}\right| \geq K_3\sqrt{a_T(T-t+1)}\log T\right\} \right)  &\leq \frac{1}{a_T}
\end{align*}
So for $C_{3,a} \geq 2\max\{K_1,K_3\}$, we have $\Pr\left(\mathcal{E}_3^c\right) \leq \frac{3}{a_T}$. Again, we can use an identical argument to show that we can pick $C_{3,a}$ large enough so that $\Pr\left(\mathcal{E}_4^c\right) \leq \frac{3}{a_T}$.

Finally note that the event $\mathcal{E}_5$ holds trivially if $b_0=0$, otherwise if $b_0 \neq 0$ then by assumption $b_0^2(T-t_0+1)\gg a_T\log T$ and for $T$ large enough that $|b_0|\sqrt{T-t_0+1} \geq 4C_{3,a}\sqrt{a_T}\log T$:
\begin{align*}
    \Pr(\mathcal{E}_5^c) &= \Pr\left(\left|\sum_{t'=t_0}^T z_{t'}\right| > \frac{(T-t_0+1)|b_0| }{4}\right) \\
    &\leq \Pr\left(\left|\sum_{t'=t_0}^T z_{t'}\right| > C_{3,a}\sqrt{a_T(T-t_0+1)}\log T\right) \tag{$|b_0|\sqrt{T-t_0+1} \geq 4C_{3,a}\sqrt{a_T}\log T$} \\
    &\leq \Pr(\mathcal{E}^c_3).
\end{align*}

\end{proof}

\subsection{Proof of Theorem \ref{theorem:alpha-mixing}}
\label{app:alpha-mixing}

Below we show the proof of Theorem \ref{theorem:alpha-mixing} for the meanvar-scp model of Section \ref{sec:smscp}. Proving that Theorem \ref{theorem:alpha-mixing} holds for the mean-scp and var-scp models of Sections \ref{sec:smcp} and \ref{sec:sscp} respectively is a straightforward extension since the form of the log-evidence for the meanvar-scp model is essentially a combination of the log-evidence for the mean-scp and var-scp models. 

As we did for the proof of Theorem \ref{theorem:smscp} in Appendix \ref{app:localization-smscp}, for some sequence $\{a_T\}_{T\geq1}$ and constants $C_{1,a}, C_{2,a} > 0$, we define the set:
\begin{align*}
    \mathcal{T}_{a_T}:= \left\{1\leq t \leq T - \lfloor C_{2,a}a_T\log^{2} T \rfloor\::\: |t_0 - t| > \frac{C_{1,a}a_T\log^{2} T}{\max\{\min\{b^2_0,b_0^2/s_0^2\}, (s_0^2-1)^2\}} \right\}.
\end{align*}
We also assumed in Theorem \ref{theorem:alpha-mixing} that for large $T$ there is some constant $C_\pi > 0$ such that for all $t \in [T]$: 
\begin{align}
    \log \frac{\pi_t}{\pi_{t_0}} \leq C_\pi \log^2 T.
\end{align}
Therefore, by the same argument we used in the proof of Theorem \ref{theorem:smscp}, if we let $p(\mathbf{y}_{1:T} \:|\:\gamma = t_0 \:; \lambda_0, u_0, v_0)$ be the marginal likelihood for the meanvar-scp model and we show that under the conditions of Theorem \ref{theorem:alpha-mixing}, for any $a_T \to\infty$, we can pick $C_{1,a}, C_{2,a}, C_{3,a}>0$ large enough such that:
\begin{align}
    \lim_{T\to\infty}\Pr\left(\bigcap_{t\in\mathcal{T}_{a_T}}\left\{ \log \frac{p(\mathbf{y}_{1:T} \:|\:\gamma = t_0 \:; \lambda_0, u_0, v_0)}{p(\mathbf{y}_{1:T} \:|\:\gamma = t \:; \lambda_0, u_0, v_0)} > C_\pi \log^2 T\right\}\right) = 1 \label{eq:thm4-result}
\end{align}
then we will have proven the desired result.  We now prove that there exists a $C_{1,a}, C_{2,a} > 0$ such that the event in (\ref{eq:thm4-result}) holds on the event $\mathcal{E}$ defined in Lemma \ref{lemma:thm4-event-bound}.

\begin{proof}

As in the proof of Theorem \ref{theorem:smscp}, it is sufficient to show that $\log \frac{\alpha_{t_0}}{\alpha_t} > C_\pi \log^2 T$ for large $T$ on the event $\mathcal{E}$. To show this we again consider two cases:

\subsubsection*{Case 1: $t > t_0$, and $t \in \mathcal{T}_{a_T}$.}

Using the identical argument we used to get (\ref{eq:thm3-cs1-bd1}) in the proof of Theorem \ref{theorem:smscp} in Appendix \ref{app:localization-smscp} and the function $f(x) = x -\log(x) -1$, we can begin from:
\begin{align}
    \log \frac{\alpha_{t_0}}{\alpha_t} &\geq \log \left(\frac{T - t + 1}{T - t_0 +1} \right) + \left(\frac{t-t_0}{2}\right)\left[b_0^2 + f(s_0^2)\right]  + \frac{(s_0^2-1)}{2} \sum_{t' = t_0}^{t - 1} (z_{t'}^2-1) + b_0 s_0 \sum_{t' = t_0}^{t - 1} z_{t'} \notag \\
    &\quad\:  - \frac{\left(\sum_{t'=t}^T z_{t'}\right)^2}{2(T-t+1)} - \frac{\frac{(T-t+1)}{2}\left[\frac{\sum_{t'=t}^T (z_{t'}^2-1)}{T-t+1} - \left(\frac{\sum_{t'=t}^T z_{t'}}{T-t+1}\right)^2 \right]^2}{1 + \frac{\sum_{t'=t}^T (z_{t'}^2-1)}{T-t+1} - \left(\frac{\sum_{t'=t}^T z_{t'}}{T-t+1}\right)^2 } + \mathcal{O}(1). \label{eq:thm4-cs1-bd1}
\end{align}
On the event $\mathcal{E}$ defined in Lemma \ref{lemma:thm4-event-bound}, for some constant $C_1 > 0$ we have:
\begin{align*}
    \left|\sum_{t'=t_0}^{t-1} z_{t'}\right| &\leq C_1\sqrt{a_T(t-t_0)}\log  T, \tag{on $\mathcal{E}_1\subset\mathcal{E}$} \\
    \left|\sum_{t' = t_0}^{t - 1} (z_{t'}^2-1) \right| &\leq C_1\sqrt{a_T(t-t_0)}\log  T \tag{on $\mathcal{E}_2\subset\mathcal{E}$} 
\end{align*}
and thus:
\begin{align}
    \log \frac{\alpha_{t_0}}{\alpha_t} &\geq  \log\left(\frac{T - t + 1}{T - t_0 +1} \right)+\left(\frac{t-t_0}{2}\right)\left[b_0^2 + f(s_0^2)\right]  - \left(\frac{|s_0^2-1|}{2}  + |b_0| s_0\right)C_1\sqrt{a_T(t-t_0)}\log  T \notag \\
    &\quad\:  - \frac{\left(\sum_{t'=t}^T z_{t'}\right)^2}{2(T-t+1)} - \frac{\frac{(T-t+1)}{2}\left[\frac{\sum_{t'=t}^T (z_{t'}^2-1)}{T-t+1} - \left(\frac{\sum_{t'=t}^T z_{t'}}{T-t+1}\right)^2 \right]^2}{1 + \frac{\sum_{t'=t}^T (z_{t'}^2-1)}{T-t+1} - \left(\frac{\sum_{t'=t}^T z_{t'}}{T-t+1}\right)^2 } + \mathcal{O}(1). \label{eq:thm4-cs1-bd2}
\end{align}
Next, on the event $\mathcal{E}$ defined in Lemma \ref{lemma:thm4-event-bound}, for the same universal constant $C_1 > 0$ we have:
\begin{align*}
    \max\left\{\left|\sum_{t' = t}^{T} z_{t'}\right|,\left|\sum_{t' = t}^{T} (z^2_{t'}-1)\right|\;\right\} &< C_1 \sqrt{a_T(T-t+1)}\log T \tag{on $\mathcal{E}_3\cap\mathcal{E}_4\subset\mathcal{E}$} 
\end{align*}
and thus:
\begin{align*}
    -\frac{\left(\sum_{t'=t}^T z_{t'}\right)^2}{2(T-t+1)} > - \frac{C_1^2}{2}a_T\log^{2} T
\end{align*}
and:
\begin{align*}
     \left|\frac{\sum_{t'=t}^T (z_{t'}^2-1)}{T-t+1} - \left(\frac{\sum_{t'=t}^T z_{t'}}{T-t+1}\right)^2\right| &< \frac{C_1\sqrt{a_T}\log T}{\sqrt{T-t+1}} + \frac{C_1^2 a_T\log^{2} T}{T-t+1}. 
\end{align*}
Suppose we pick $C_{2,a} > 4C_1^2$, then since $t\in\mathcal{T}_{a_T}$ implies $T-t+1 \geq C_{2,a}a_T\log^2 T$, then: 
\begin{align*}
     \left|\frac{\sum_{t'=t}^T (z_{t'}^2-1)}{T-t+1} - \left(\frac{\sum_{t'=t}^T z_{t'}}{T-t+1}\right)^2\right| &< \frac{3}{4}
\end{align*}
and:
\begin{align*}
    \frac{\frac{(T-t+1)}{2}\left[\frac{\sum_{t'=t}^T (z_{t'}^2-1)}{T-t+1} - \left(\frac{\sum_{t'=t}^T z_{t'}}{T-t+1}\right)^2 \right]^2}{1 + \frac{\sum_{t'=t}^T (z_{t'}^2-1)}{T-t+1} - \left(\frac{\sum_{t'=t}^T z_{t'}}{T-t+1}\right)^2 } &< 2(T-t+1)\left[\frac{\sum_{t'=t}^T (z_{t'}^2-1)}{T-t+1} - \left(\frac{\sum_{t'=t}^T z_{t'}}{T-t+1}\right)^2 \right]^2 \\
    &< 2(T-t+1)\left[\frac{C_1\sqrt{a_T}\log T}{\sqrt{T-t+1}} + \frac{C_1^2 a_T\log^{2} T}{T-t+1}\right]^2 \\
    &= 2\left[C_1\sqrt{a_T}\log T + \frac{C_1^2 a_T\log^{2} T}{\sqrt{T-t+1}}\right]^2 \\
    &\leq 8 C_1^2 a_T\log^{2} T. \tag{$t\in\mathcal{T}_{a_T}\implies T-t+1 \geq C_{1}a_T\log^2 T$}
\end{align*}
We can now rewrite bound (\ref{eq:thm4-cs1-bd2}) as:
\begin{align}
    \log \frac{\alpha_{t_0}}{\alpha_t} &\geq \left(\frac{t-t_0}{2}\right)\left[b_0^2 + f(s_0^2)\right]  - \left(\frac{|s_0^2-1|}{2}  + |b_0| s_0\right)C_1\sqrt{a_T(t-t_0)}\log  T + \mathcal{O}(a_T\log^2 T) \label{eq:thm4-cs1-bd3}
\end{align}
where we have absorbed the $\log\left(\frac{T-t+1}{T-t_0+1}\right)$ term into the $\mathcal{O}(a_T\log^2 T)$ term. We now show that bound (\ref{eq:thm4-cs1-bd3}) is increasing as a function of $t-t_0$. Taking the derivative of this bound with respect to $t-t_0$ yields:
\begin{align}
    \frac{b_0^2 + f(s_0^2)}{2}  - \frac{C_1}{2}\left(\frac{|s_0^2-1|}{2}  + |b_0| s_0\right)\sqrt{\frac{a_T}{t-t_0}}\log T. \label{eq:thm4-deriv1}
\end{align}
Note that (\ref{eq:thm4-deriv1}) is decreasing as a function of $t-t_0$. Suppose that: 
\begin{align*}
    \min\left\{b_0^2,\frac{b_0^2}{s_0^2}\right\} \geq (s_0^2-1)^2 \implies |t-t_0| \geq \frac{C_{1,a} a_T\log^2T}{\min\{b_0^2,b_0^2/s_0^{2}\}}, \;\forall\; t\in\mathcal{T}_{a_T}
\end{align*}
and thus (\ref{eq:thm4-deriv1}) is bounded below by: 
\begin{align*}
    \frac{b_0^2}{2} - \frac{C_1}{2}\left(\frac{|s_0^2-1|}{2}  + |b_0| s_0\right)\frac{\min\{|b_0|,|b_0|/s_0\}}{\sqrt{C_{1,a}}} &\geq \frac{b_0^2}{2} - \frac{C_1}{2\sqrt{C_{1,a}}}\left(\frac{\min\{b^2_0,b^2_0/s^2_0\}}{2} + \min\{b^2_0s_0,b_0^2\}\right) \tag{$\min\left\{b_0^2,b_0^2/s_0^2\right\} \geq (s_0^2-1)^2$} \\
    &\geq\frac{b_0^2}{2}\left(1 - \frac{3C_1}{2\sqrt{C_{1,a}}}\right).
\end{align*}
So (\ref{eq:thm4-deriv1}) will be positive if we set $C_{1,a} > 9C_1^2/4$. On the other hand, if:
\begin{align*}
    \min\left\{b_0^2,\frac{b_0^2}{s_0^2}\right\} \leq (s_0^2 - 1)^2 \implies |t-t_0| \geq \frac{C_{1,a} a_T\log^2T}{ (s_0^2 - 1)^2}, \;\forall\; t\in\mathcal{T}_{a_T}
\end{align*}
then (\ref{eq:thm4-deriv1}) is bounded below by: 
\begin{align*}
    \frac{f(s_0^2)}{2}  - \frac{C_1}{2}\left(\frac{|s_0^2-1|}{2}  + |b_0| s_0\right)\frac{|s_0^2-1|}{\sqrt{C_{1,a}}}.
\end{align*}
Suppose that $s_0^2 > 1$ so that $(s_0^2 - 1)^2 \geq b_0^2/s_0^2$, then:
\begin{align*}
    \frac{f(s_0^2)}{2}  - \frac{C_1}{2\sqrt{C_{1,a}}}\left(\frac{(s_0^2-1)^2}{2}  +  s^2_0(s_0^2-1)^2\right) &\geq \frac{f(s_0^2)}{2}  - \frac{C_1\overline{s}^2 (s_0^2-1)^2}{\sqrt{C_{1,a}}}. \tag{$s_0^2 \leq \overline{s}^2 < \infty$}
\end{align*}
Suppose that $s_0^2 > \overline{\nu}^2 > 1$ for all $T$, then: 
\begin{align*}
    \frac{f(s_0^2)}{2}  - \frac{C_1\overline{s}^2 (s_0^2-1)^2}{\sqrt{C_{1,a}}} \geq \frac{f(\overline{\nu}^2)}{2}  - \frac{C_1\overline{s}^2 (\overline{s}^2-1)^2}{\sqrt{C_{1,a}}} \tag{$s_0^2 \leq \overline{s}^2 < \infty$}
\end{align*}
and we can pick $C_{1,a}$ large enough so that the RHS is positive. If $s_0^2 < \underline{\nu}^2$ for some $\underline{\nu}^2 < 1$ for all $T$, then we can use the same argument and the fact that $s_0 > \underline{s} > 0$ to show (\ref{eq:thm4-deriv1}) is positive for large $C_{1,a}$. On the other hand, if $\lim_{T\to\infty}s_0^2=1$, then:
\begin{align*}
    \lim_{T\to\infty} \frac{f(s_0^2)}{(s_0^2 
- 1)^2} = \frac{1}{2}
\end{align*}
so for large $T$, we will have: 
\begin{align*}
    \frac{f(s_0^2)}{(s_0^2-1)^2} - \frac{1}{2} > - \frac{1}{4} \implies f(s_0^2) > \frac{(s_0^2-1)^2}{4}
\end{align*}
and thus:
\begin{align*}
    \frac{f(s_0^2)}{2}  - \frac{C_1\overline{s}^2 (s_0^2-1)^2}{\sqrt{C_{1,a}}} \geq \frac{f(s_0^2)}{2}\left(1  - \frac{C_1\overline{s}^2 }{2\sqrt{C_{1,a}}}\right).
\end{align*}
So for $C_{1,a} > C_1^2 \overline{s}^2/4$ bound (\ref{eq:thm4-deriv1}) is positive.

Having established that bound (\ref{eq:thm4-cs1-bd3}) is increasing as a function of $t-t_0$, then (\ref{eq:thm4-cs1-bd3}) is minimized by setting:
\begin{align*}
    t &= \inf \;\{ t\in\mathcal{T}_{a_T} : t>t_0\} \\
    &= t_0 + \frac{C_{1,a}a_T\log^2T}{\max\{\min\{b_0^2,b_0^2/s_0^2\}, (s_0^2-1)^2\}}.
\end{align*}
Suppose first that $\min\{b_0^2,b_0^2/s_0^2\} \geq  (s_0^2-1)^2$, then returning to bound (\ref{eq:thm4-cs1-bd3}), there is some $B_1 > 0$ so that: 
\begin{align*}
    \log \frac{\alpha_{t_0}}{\alpha_t} &\geq \left[\frac{C_{1,a}}{2\min\{1,s_0^{-2}\}} - \left(\frac{|s_0^2-1|}{2\min\{|b_0|,|b_0|/s_0\}} + \frac{s_0}{\min\{1,s_0^{-1}\}}\right)C_1\sqrt{C_{1,a}}\right]a_T\log^2T - B_1a_T\log^2 T \\
    &\geq \left[\frac{\max\{1,s_0^2\}(C_{1,a}-3C_1\sqrt{C_{1,a}})}{2} - B_1\right]a_T\log^2T. \tag{$|s_0^2-1| \leq \min\{|b_0|,|b_0|/s_0\}$}
\end{align*}
So if we set $C_{1,a} > \max\{(3C_1 + 1)^2, 4(C_\pi + B_1)^2\}$, then (\ref{eq:thm4-result}) holds. On the other hand, if $\min\{b_0^2,b_0^2/s_0^2\} \leq (s_0^2-1)^2$, then: 
\begin{align*}
    \log \frac{\alpha_{t_0}}{\alpha_t} &\geq \left[\frac{C_{1,a}f(s_0^2)}{2(s_0^2-1)^2} - \left(\frac{1}{2} + \frac{|b_0|s_0}{|s_0^2-1|}\right)C_1\sqrt{C_{1,a}}\right]a_T\log^2T - B_1a_T\log^2 T \\
    &\geq \left[\frac{C_{1,a}f(s_0^2)}{2(s_0^2-1)^2} - \frac{3\max\{s_0^2,1\}C_1\sqrt{C_{1,a}}}{2} - B_1\right]a_T\log^2T. \tag{$|s^2_0-1| \geq \min\{|b_0|,|b_0|/s_0\}$} 
\end{align*}
Again, if $s_0^2 > \overline{\nu}^2 > 1$ for all $T$, then: 
\begin{align*}
    \log \frac{\alpha_{t_0}}{\alpha_t} &\geq \left[\frac{C_{1,a}f(\overline{\nu}^2)}{2(\overline{s}^2-1)^2} - \frac{3\overline{s}^2C_1\sqrt{C_{1,a}}}{2} - B_1\right]a_T\log^2T
\end{align*}
and we can pick $C_{1,a}$ large enough so that the term in square brackets is greater than $C_\pi$. An identical argument gives the same result when $s_0^2 < \underline{\nu}^2$ for some $\underline{\nu}<1$. When $\lim_{T\to\infty}s_0^2=1$, then again for large $T$, we will have: 
\begin{align*}
    \frac{f(s_0^2)}{(s_0^2-1)^2} \to \frac{1}{2} > - \frac{1}{4} \implies \frac{f(s_0^2)}{(s_0^2-1)^2} - \frac{1}{2} > \frac{1}{4}
\end{align*}
and $\max\{s_0^2,1\} < 3/2$, so: 
\begin{align*}
    \log \frac{\alpha_{t_0}}{\alpha_t} &\geq\left[\frac{C_{1,a}}{8} - \frac{9C_1\sqrt{C_{1,a}}}{4} - B_1\right]a_T\log^2T
\end{align*}
and again we can pick $C_{1,a}$ large enough so that the term in square brackets is greater than $C_\pi$. 

\subsubsection*{Case 2: $t < t_0$, and $t \in \mathcal{T}_{a_T}$.}

Using the identical argument we used to get (\ref{eq:thm3-cs2-bd1}) in the proof of Theorem \ref{theorem:smscp} in Appendix \ref{app:localization-smscp} we can begin from:
\begin{align}
    \log \frac{\alpha_{t_0}}{\alpha_t} &> - \frac{1}{2} \sum_{t' = t}^{t_0 - 1} (z_{t'}^2-1) - \frac{(T-t_0+1)}{2}\log s_0^2 \notag  \\
    &\quad\: + \left(\frac{T - t +1}{2}\right)\log\left[\frac{\sum_{t'=t}^T y_{t'}^2}{T-t+1} - \left(\frac{\sum_{t'=t}^T y_{t'}}{T-t+1}\right)^2 \right] \notag \\
    &\quad\: -\frac{1}{2}\sum_{t'=t_0}^T (z_{t'}^2-1) +\frac{1}{2(T-t_0+1)} \left(\sum_{t'=t_0}^T z_{t'}\right)^2 \notag \\
    &\quad\: + \mathcal{O}(1). \label{eq:thm4-cs2-bd1}
\end{align}
For notational convenience, we can define the error term:
\footnotesize
\begin{align}
    \xi(T,t_0,t,s_0) &:= \frac{\sum_{t'=t}^{t_0-1} (z_{t'}^2 -1)}{T-t+1} - \left(\frac{\sum_{t'=t}^{t_0-1} z_{t'}}{T-t+1}\right)^2 + s_0^2\left[\frac{\sum_{t'=t_0}^{T} (z_{t'}^2-1)}{T-t+1} - \left(\frac{\sum_{t'=t_0}^{T} z_{t'}}{T-t+1}\right)^2\right] - \frac{2s_0\sum_{t'=t}^{t_0-1} z_{t'}\sum_{t'=t_0}^{T} z_{t'}}{(T-t+1)^2} \label{eq:thm4-error}
\end{align}
\normalsize
then we can write:
\small
\begin{align*}
    \frac{\sum_{t'=t}^T y_{t'}^2}{T-t+1} - \left(\frac{\sum_{t'=t}^T y_{t'}}{T-t+1}\right)^2 &= \frac{\sum_{t'=t}^{t_0-1} z_{t'}^2 + \sum_{t'=t_0}^{T} (s_0z_{t'} + b_0)^2}{T-t+1} - \left[\frac{\sum_{t'=t}^{t_0-1} z_{t'} + \sum_{t'=t_0}^{T} (s_0z_{t'} + b_0)}{T-t+1}\right]^2 \notag \\
    &= 1 + \frac{(T-t_0+1)(t_0-t)b_0^2}{(T-t+1)^2} + \frac{(T-t_0+1)(s_0^2-1)}{T-t+1} \notag\\
    &\quad + \frac{\sum_{t'=t}^{t_0-1} (z_{t'}^2 -1)}{T-t+1} - \left(\frac{\sum_{t'=t}^{t_0-1} z_{t'}}{T-t+1}\right)^2 +s_0^2\left[\frac{\sum_{t'=t_0}^{T} (z_{t'}^2-1)}{T-t+1} - \left(\frac{\sum_{t'=t_0}^{T} z_{t'}}{T-t+1}\right)^2\right] \notag\\
    &\quad + \frac{2}{(T-t+1)^2}\left[(t_0-t)b_0s_0\sum_{t'=t_0}^Tz_{t'} - s_0\sum_{t'=t}^{t_0-1} z_{t'}\sum_{t'=t_0}^{T} z_{t'} - (T-t_0+1)b_0\sum_{t'=t}^{t_0-1} z_{t'}\right]  \\
    &= 1 + \frac{(T-t_0+1)(t_0-t)b_0^2}{(T-t+1)^2} + \frac{(T-t_0+1)(s_0^2-1)}{T-t+1} \\
    &\quad + \frac{2}{(T-t+1)^2}\left[(t_0-t)b_0s_0\sum_{t'=t_0}^Tz_{t'} - (T-t_0+1)b_0\sum_{t'=t}^{t_0-1} z_{t'}\right] + \xi(T,t_0,t,s_0)
\end{align*}
\normalsize
On $\mathcal{E}$, for some constant $C_1>0$ we have:
\begin{align*}
    \left|\sum_{t'=t}^{t_0-1} z_{t'}\right| &\leq C_1\sqrt{a_T(t_0-t)}\log T \tag{on $\mathcal{E}_1\subset\mathcal{E}$} \\
    \left|b_0\sum_{t'=t_0}^T z_{t'}\right| &\leq \frac{(T-t_0+1)b_0^2 }{4} \tag{on $\mathcal{E}_5\subset\mathcal{E}$}
\end{align*}
and thus:
\small
\begin{align*}
    \frac{\sum_{t'=t}^T y_{t'}^2}{T-t+1} - \left(\frac{\sum_{t'=t}^T y_{t'}}{T-t+1}\right)^2 &\geq 1 + \frac{(T-t_0+1)(t_0-t)b_0^2}{2(T-t+1)^2} + \frac{(T-t_0+1)(s_0^2-1)}{T-t+1} \\
    &\quad - \frac{2(T-t_0+1)|b_0|C_1\sqrt{a_T(t_0-t)}\log T }{(T-t+1)^2}  + \xi(T,t_0,t,s_0)
\end{align*}
\normalsize
Returning to bound (\ref{eq:thm4-cs2-bd1}), we now have:
\begin{align}
    \log \frac{\alpha_{t_0}}{\alpha_t} &> - \frac{1}{2} \sum_{t' = t}^{t_0 - 1} (z_{t'}^2-1) - \frac{(T-t_0+1)}{2}\log s_0^2  + \left(\frac{T - t +1}{2}\right)\log\left[1 + \frac{\frac{(T-t_0+1)(s_0^2-1)}{T-t+1}}{1 + \xi(T,t_0,t,s_0)}\right] \notag  \\
    &\quad\: + \left(\frac{T - t +1}{2}\right)\log\left[1 + \frac{\frac{(T-t_0+1)(t_0-t)b_0^2 }{2(T-t+1)^2} - \frac{2(T-t_0+1)|b_0|C_1\sqrt{a_T(t_0-t)}\log T }{(T-t+1)^2}}{1 + \frac{(T-t_0+1)(s_0^2-1)}{T-t+1} + \xi(T,t_0,t,s_0)}\right] \notag \\
    &\quad\: -\frac{1}{2}\sum_{t'=t_0}^T (z_{t'}^2-1) +\frac{1}{2(T-t_0+1)} \left(\sum_{t'=t_0}^T z_{t'}\right)^2 + \left(\frac{T - t +1}{2}\right)\log\left[1 + \xi(T,t_0,t,s_0)\right] \notag \\
    &\quad\: + \mathcal{O}(1). \label{eq:thm4-cs2-bd3}
\end{align}

As in the proof of Theorem \ref{theorem:smscp}, we treat the case of $t_0 - t \geq T-t_0+1$ and $t_0-t < T-t_0+1$ separately. 

\subsubsection*{Case 2.a: $t_0 - t \leq T-t_0+1$.}
We can start by rewriting bound (\ref{eq:thm4-cs2-bd3}) as:
\begin{align}
    \log \frac{\alpha_{t_0}}{\alpha_t} &> - \frac{1}{2} \sum_{t' = t}^{t_0 - 1} (z_{t'}^2-1) + \frac{(t_0-t)}{2}\log s_0^2  + \left(\frac{T - t +1}{2}\right)\log\left[1 + \frac{\frac{(t_0-t)(s_0^{-2}-1)}{T-t+1}}{1 + s_0^{-2}\xi(T,t_0,t,s_0)}\right] \notag  \\
    &\quad\: + \left(\frac{T - t +1}{2}\right)\log\left[1 + \frac{\frac{(T-t_0+1)(t_0-t)b_0^2 }{2(T-t+1)^2} - \frac{2(T-t_0+1)|b_0|C_1\sqrt{a_T(t_0-t)}\log T }{(T-t+1)^2}}{1 + \frac{(T-t_0+1)(s_0^2-1)}{T-t+1} + \xi(T,t_0,t,s_0)}\right] \notag \\
    &\quad\: -\frac{1}{2}\sum_{t'=t_0}^T (z_{t'}^2-1) +\frac{1}{2(T-t_0+1)} \left(\sum_{t'=t_0}^T z_{t'}\right)^2 + \left(\frac{T - t +1}{2}\right)\log\left[1 + s_0^{-2}\xi(T,t_0,t,s_0)\right] \notag \\
    &\quad\: + \mathcal{O}(1). \label{eq:thm4-cs2-a-bd1}
\end{align}
For some $C_1\geq1$, on $\mathcal{E}$ we have: 
\begin{align*}
    \max\left\{\left|\sum_{t'=t}^{t_0-1} (z_{t'}^2 -1)\right|, \left|\sum_{t'=t_0}^{T} (z_{t'}^2 -1)\right|, \left|\sum_{t'=t}^{t_0-1} z_{t'}\right|, \left|\sum_{t'=t_0}^{T} z_{t'} \right|\right\} &\leq C_1\sqrt{a_T(T-t+1)}\log T \tag{on $\cap_{i=1}^4\mathcal{E}_i\subset\mathcal{E}$}
\end{align*}
and thus:
\begin{align*}
    \left|s_0^{-2}\xi(T,t_0,t,s_0)\right| &\leq (1+s_0^{-2} + 2s_0^{-1})\left[\frac{C_1\sqrt{a_T}\log T}{\sqrt{T-t+1}} + \frac{C_1^2a_T\log^2T}{T-t+1}\right] \\
    &\leq\frac{8\underline{s}^2C_1^2}{C_{2,a}}. \tag{$s_0^2 > \underline{s}^2$ and $T-t+1\geq C_{2,a}a_T\log^2T$}
\end{align*}
So for $C_{2,a} > 8\underline{s}^2C_1^2$, we have $|s_0^{-2}\xi(T,t_0,t,s_0)| > -1$ and we can use the inequality $\log(1+x) > x/(x+1)$ for $x > -1$ to write:
\footnotesize
\begin{align*}
    \left(\frac{T - t +1}{2}\right)\log\left[1 + s_0^{-2}\xi(T,t_0,t,s_0)\right] &\geq \left(\frac{T - t +1}{2}\right) \left[s_0^{-2}\xi(T,t_0,t,s_0) - \frac{s_0^{-4}\xi(T,t_0,t,s_0)^2}{1+s_0^{-2}\xi(T,t_0,t,s_0)}\right] \\
    &= \frac{1}{2s_0^2}\sum_{t'=t}^{t_0-1}(z_{t'}^2 - 1) - \frac{\left(\sum_{t'=t}^{t_0-1}z_{t'}\right)^2 }{2s_0^2(T-t+1)} + \frac{1}{2}\sum_{t'=t_0}^T (z_{t'}^2 - 1) -  \frac{\left(\sum_{t'=t_0}^{T}z_{t'}\right)^2 }{2(T-t+1)}  \\
    &\quad\; - \frac{\sum_{t'=t_0}^{T}z_{t'} \sum_{t'=t}^{t_0-1}z_{t'}}{s_0(T-t+1)} - \left(\frac{T - t +1}{2}\right)\frac{s_0^{-4}\xi(T,t_0,t,s_0)^2}{1+s_0^{-2}\xi(T,t_0,t,s_0)}
\end{align*}
\normalsize
We also have:
\begin{align*}
    \frac{\left(\sum_{t'=t}^{t_0-1}z_{t'}\right)^2 }{2s_0^2(T-t+1)} &\leq \frac{C_1^2}{2\underline{s}^2}a_T\log^2T \\
    \frac{\sum_{t'=t_0}^{T}z_{t'} \sum_{t'=t}^{t_0-1}z_{t'}}{s_0(T-t+1)} &\leq  \frac{C_1^2}{\underline{s}}a_T\log^2T.
\end{align*}
For $C_{2,a} > 16\underline{s}^2C_1^2$:
\begin{align*}
    \left(\frac{T - t +1}{2}\right)\frac{s_0^{-4}\xi(T,t_0,t,s_0)^2}{1+s_0^{-2}\xi(T,t_0,t,s_0)} &\leq (T - t +1)(1+s_0^{-2} + 2s_0^{-1})^2\left[\frac{C_1\sqrt{a_T}\log T}{\sqrt{T-t+1}} + \frac{C_1^2a_T\log^2T}{T-t+1}\right]^2 \\
    &\leq  \frac{16 \underline{s}^2C_1^4a_T^2\log^4T}{T-t+1} \tag{$s_0^2 \geq \underline{s}^2$} \\
    &\leq C_1^2a_T\log^2T. \tag{$T-t+1\geq C_{2,a}a_T\log^2T > 16\underline{s}^2C_1^2a_T\log^2T$} 
\end{align*}
Furthermore, since $T-t+1\geq T-t_0+1$, we have:
\begin{align*}
    \frac{\left(\sum_{t'=t_0}^{T}z_{t'}\right)^2 }{2(T-t_0+1)} -  \frac{\left(\sum_{t'=t_0}^{T}z_{t'}\right)^2 }{2(T-t+1)} \geq 0
\end{align*}
so returning to bound (\ref{eq:thm4-cs2-a-bd1}) we have: 
\begin{align}
    \log \frac{\alpha_{t_0}}{\alpha_t} &>  \frac{(s_0^{-2}-1)}{2} \sum_{t' = t}^{t_0 - 1} (z_{t'}^2-1) + \frac{(t_0-t)}{2}\log s_0^2 + \left(\frac{T - t +1}{2}\right)\log\left[1 + \frac{\frac{(t_0-t)(s_0^{-2}-1)}{T-t+1}}{1 + s_0^{-2}\xi(T,t_0,t,s_0)}\right] \notag  \\
    &\quad\: + \left(\frac{T - t +1}{2}\right)\log\left[1 + \frac{\frac{(T-t_0+1)(t_0-t)b_0^2 }{2(T-t+1)^2} - \frac{2(T-t_0+1)|b_0|C_1\sqrt{a_T(t_0-t)}\log T }{(T-t+1)^2}}{1 + \frac{(T-t_0+1)(s_0^2-1)}{T-t+1} + \xi(T,t_0,t,s_0)}\right] \notag \\
    &\quad\: + \mathcal{O}(a_T\log^2 T). \label{eq:thm4-cs2-a-bd2}
\end{align}
Note that if $s_0^2 = 1$, then we can completely ignore the first line of (\ref{eq:thm4-cs2-a-bd2}), on the other hand, if $s_0^2\neq 1$, then by assumption $(s_0^2 -1)^2(T-t_0+1)\gg a_T\log^2T$. However, $0<\underline{s}^2\leq s_0^2\leq\overline{s}^2<\infty$ implies $(s_0^2 -1)^2 = \mathcal{O}(1)$, so we must have $T-t_0+1\gg a_T\log^2T$. We showed above that on $\mathcal{E}$:
\begin{align*}
    \left|s_0^{-2}\xi(T,t_0,t,s_0)\right| &\leq (1+s_0^{-2} + 2s_0^{-1})\left[\frac{C_1\sqrt{a_T}\log T}{\sqrt{T-t+1}} + \frac{C_1^2a_T\log^2T}{T-t+1}\right] \\
    &\leq 4\underline{s}^2\left[\frac{C_1\sqrt{a_T}\log T}{\sqrt{T-t_0+1}} + \frac{C_1^2a_T\log^2T}{T-t_0+1}\right] \tag{$T-t+1\geq T-t_0+1$}
\end{align*}
i.e., $\left|s_0^{-2}\xi(T,t_0,t,s_0)\right| \to 0$ on $\mathcal{E}$ when $s_0^2\neq 1$. Next, we can write:
\footnotesize
\begin{align*}
    \left(\frac{T - t +1}{2}\right)\log\left[1 + \frac{\frac{(t_0-t)(s_0^{-2}-1)}{T-t+1}}{1 + s_0^{-2}\xi(T,t_0,t,s_0)}\right]&= \left(\frac{T - t +1}{2}\right)\log\left[1 + \frac{(t_0-t)(s_0^{-2}-1)}{T-t+1}\right] \\
    &\quad\:+ \left(\frac{T - t +1}{2}\right)\log\left[1 - \left(\frac{ s_0^{-2}\xi(T,t_0,t,s_0)}{1 + s_0^{-2}\xi(T,t_0,t,s_0)}\right)\left(\frac{\frac{(t_0-t)(s_0^{-2}-1)}{T-t+1}}{1+\frac{(t_0-t)(s_0^{-2}-1)}{T-t+1}}\right)\right].
\end{align*}
\normalsize
We have just shown that:
\begin{align*}
    \frac{ s_0^{-2}\xi(T,t_0,t,s_0)}{1 + s_0^{-2}\xi(T,t_0,t,s_0)} &= \mathcal{O}\left(\frac{\sqrt{a_T}\log T}{\sqrt{T-t_0+1}}\right), \\
    \frac{\frac{(t_0-t)(s_0^{-2}-1)}{T-t+1}}{1+\frac{(t_0-t)(s_0^{-2}-1)}{T-t+1}} &= \mathcal{O}\left(\frac{(t_0-t)(s_0^{-2}-1)}{T-t+1}\right)= \mathcal{O}(1) \tag{$s_0^{-2} = \mathcal{O}(1)$}
\end{align*}
so for some constant $C_2 > 0$, a first order Taylor approximation of $\log(1+x)$ around zero gives:
\footnotesize
\begin{align*}
     \left(\frac{T - t +1}{2}\right)\log\left[1 - \left(\frac{ \xi(T,t_0,t,s_0)}{1 + \xi(T,t_0,t,s_0)}\right)\left(\frac{\frac{(t_0-t)(s_0^{-2}-1)}{T-t+1}}{1+\frac{(t_0-t)(s_0^{-2}-1)}{T-t_0+1}}\right)\right] &= (t_0-t)(s_0^{-2}-1)\mathcal{O}\left(\frac{\sqrt{a_T}\log T}{\sqrt{T-t_0+1}}\right) + \mathcal{O}(a_T\log^2 T) \\
     &= (s_0^{-2}-1)\mathcal{O}\left(\sqrt{a_T(t_0-t)}\log T\right) + \mathcal{O}(a_T\log^2 T) \tag{$t_0-t\leq T-t_0+1$} \\
     &\geq -\frac{C_2}{2}|s_0^{-2}-1|\sqrt{a_T(t_0-t)}\log T + \mathcal{O}(a_T\log^2 T)
\end{align*}
\normalsize
We also have:
\begin{align*}
    \left|\sum_{t' = t}^{t_0 - 1} (z_{t'}^2-1)\right| < C_1\sqrt{a_T(t_0-t)}\log T \tag{on $\mathcal{E}_2\subset\mathcal{E}$}
\end{align*}
and since $C_1 > C_2$ WLOG, then we can now lower bound (\ref{eq:thm4-cs2-a-bd2}) with:
\begin{align}
    \log \frac{\alpha_{t_0}}{\alpha_t} &>  -C_1|s_0^{-2}-1|\sqrt{a_T(t_0-t)}\log T + \frac{(t_0-t)}{2}\log s_0^2 + \left(\frac{T - t +1}{2}\right)\log\left[1 + \frac{(t_0-t)(s_0^{-2}-1)}{T-t+1}\right] \notag  \\
    &\quad\: + \left(\frac{T - t +1}{2}\right)\log\left[1 + \frac{\frac{(T-t_0+1)(t_0-t)b_0^2 }{2(T-t+1)^2} - \frac{2(T-t_0+1)|b_0|C_1\sqrt{a_T(t_0-t)}\log T }{(T-t+1)^2}}{1 + \frac{(T-t_0+1)(s_0^2-1)}{T-t+1} + \xi(T,t_0,t,s_0)}\right] \notag \\
    &\quad\: + \mathcal{O}(a_T\log^2 T). \label{eq:thm4-cs2-a-bd3}
\end{align}

\subsubsection*{Case 2.a.i: $\min\left\{b_0^2,b_0^2/s_0^2\right\} \geq (s_0^2-1)^2$}

Suppose that:
\begin{align*}
    \min\left\{b_0^2,\frac{b_0^2}{s_0^2}\right\} \geq (s_0^2-1)^2 &\implies |t-t_0| \geq \frac{C_{1,a} a_T\log^2T}{\min\{b_0^2,b_0^2/s_0^{2}\}}, \;\forall\; t\in\mathcal{T}_{a_T}.
\end{align*}
For $C_{1,a} > 64C_1^2$, we can write:
\small
\begin{align*}
    \frac{(T-t_0+1)(t_0-t)b_0^2 }{2(T-t+1)^2} - \frac{2(T-t_0+1)|b_0|C_1\sqrt{a_T(t_0-t)}\log T }{(T-t+1)^2} &= \frac{(T-t_0+1)(t_0-t)b_0^2 }{(T-t+1)^2} \left[\frac{1}{2} - \frac{2C_1\sqrt{a_T}\log T}{|b_0|\sqrt{t_0-t}}\right] \\
    &\geq \frac{(T-t_0+1)(t_0-t)b_0^2 }{(T-t+1)^2} \left[\frac{1}{2} - \frac{2C_1}{\sqrt{C_{1,a}}}\right] \tag{$|t-t_0| \geq \frac{C_{1,a} a_T\log^2T}{\min\{b_0^2,b_0^2/s_0^{2}\}}$} \\
    &> \frac{(T-t_0+1)(t_0-t)b_0^2 }{4(T-t+1)^2} . \tag{$C_{1,a} > 64C_1^2$}
\end{align*}
\normalsize
Furthermore, we have shown that on $\mathcal{E}$:
\begin{align*}
    \left|\xi(T,t_0,t,s_0)\right| &\leq (1+s_0^2 + 2s_0)\left[\frac{C_1\sqrt{a_T}\log T}{\sqrt{T-t+1}} + \frac{C_1^2a_T\log^2T}{T-t+1}\right]  \\
    &\leq  \frac{8\overline{s}^2C_1^2}{C_{2,a}} \tag{$s_0^2 < \overline{s}^2$ and $T-t+1\geq C_{2,a}a_T\log^2T$}
\end{align*}
and we have:
\begin{align*}
    -1 <\underline{s}^2 - 1 <\frac{(T-t_0+1)(s_0^2-1)}{T-t+1} < \overline{s}^2 -1 < \infty.
\end{align*}
So for $C_{2,a} > 8C_1^2$, in this case we have:
\begin{align*}
    \log \frac{\alpha_{t_0}}{\alpha_t} &>  -C_1|s_0^{-2}-1|\sqrt{a_T(t_0-t)}\log T + \frac{(t_0-t)}{2}\log s_0^2 + \left(\frac{T - t +1}{2}\right)\log\left[1 + \frac{(t_0-t)(s_0^{-2}-1)}{T-t+1}\right] \notag  \\
    &\quad\: + \left(\frac{T - t +1}{2}\right)\log\left[1 + \frac{(T-t_0+1)(t_0-t)b_0^2 }{16\overline{s}^2(T-t+1)^2}\right] + \mathcal{O}(a_T\log^2 T)
\end{align*}
Note that the term:
\begin{align*}
    \frac{(t_0-t)}{2}\log s_0^2 + \left(\frac{T - t +1}{2}\right)\log\left[1 + \frac{(t_0-t)(s_0^{-2}-1)}{T-t+1}\right]
\end{align*}
is decreasing in $t$ and is clearly equal to zero when $t=t_0$. To see this we can take the derivative with respect $t$ to get:
\begin{align*}
    \frac{1}{2}\left[ \frac{\frac{(T-t_0+1)(s_0^2 -1)}{T-t+1}}{1 + \frac{(T-t_0+1)(s_0^2 -1)}{T-t+1}}- \log\left[1 + \frac{(T-t_0+1)(s_0^2 -1)}{T-t+1}\right] \right] &\leq 0. \tag{$\log(1+x) > x/(1+x)$ for $x>-1$}
\end{align*}
Therefore:
\small
\begin{align}
    \log \frac{\alpha_{t_0}}{\alpha_t} &>  -C_1|s_0^{-2}-1|\sqrt{a_T(t_0-t)}\log T + \left(\frac{T - t +1}{2}\right)\log\left[1 + \frac{(T-t_0+1)(t_0-t)b_0^2 }{16\overline{s}^2(T-t+1)^2}\right] + \mathcal{O}(a_T\log^2 T) \label{eq:thm4-cs2-a-i-bd1}
\end{align}
\normalsize
Suppose first that $\lim_{T\to\infty}b_0^2 = 0$, then for large $T$:
\footnotesize
\begin{align*}
    \lim_{T\to\infty}\frac{(T-t_0+1)(t_0-t)b_0^2 }{16\overline{s}^2(T-t+1)^2} =0 \implies  \log\left[1 + \frac{(T-t_0+1)(t_0-t)b_0^2 }{16\overline{s}^2(T-t+1)^2}\right] &= \frac{(T-t_0+1)(t_0-t)b_0^2 }{16\overline{s}^2(T-t+1)^2} + \mathcal{O}\left(\frac{(T-t_0+1)(t_0-t)^2b_0^4}{(T-t+1)^2}\right) \\
    &\geq \frac{(T-t_0+1)(t_0-t)b_0^2 }{32\overline{s}^2(T-t+1)^2}.
\end{align*}
\normalsize
So for large $T$, we can lower bound (\ref{eq:thm4-cs2-a-i-bd1}) in this case with:
\begin{align*}
     \log \frac{\alpha_{t_0}}{\alpha_t} &>  -C_1|s_0^{-2}-1|\sqrt{a_T(t_0-t)}\log T +\frac{(T-t_0+1)(t_0-t)b_0^2 }{64\overline{s}^2(T-t+1)} + \mathcal{O}(a_T\log^2 T) \\
     &\geq (t_0-t)b_0^2\left[\frac{(T-t_0+1)}{64\overline{s}^2(T-t+1)} -\frac{C_1s_0^{-2} \min\{1,s_0^{-1}\}\sqrt{a_T}\log T}{|b_0|\sqrt{t_0-t}}\right] + \mathcal{O}(a_T\log^2 T) \tag{$(s_0^2-1)^2 \leq\min\{b_0^2,b_0^2/s_0^2\}$} \\
     &\geq (t_0-t)b_0^2\left[\frac{1}{128\overline{s}^2} -\frac{C_1\underline{s}^{-2}}{\sqrt{C_{1,a}}}\right] + \mathcal{O}(a_T\log^2 T) \tag{$|t_0-t| \geq \frac{C_{1,a}a_T\log^2 T}{\min\{b_0^2,b_0^2/s_0^2\}}$}
\end{align*}
so for $C_{1,a} > (128C_1 \overline{s}^2/\underline{s}^2)^2$ and some $B_1 > 0$, we have:
\begin{align*}
     \log \frac{\alpha_{t_0}}{\alpha_t} &> (t_0-t)b_0^2 + \mathcal{O}(a_T\log^2 T) \\
     &\geq (C_{1,a} - B_1)a_T\log^2 T  \tag{$|t_0-t| \geq \frac{C_{1,a}a_T\log^2 T}{\min\{b_0^2,b_0^2/s_0^2\}}$}
\end{align*}
and thus (\ref{eq:thm4-result}) holds for $C_{1,a} > B_1+C_\pi$.

Next, if $b_0^2 > \underline{b}^2$ for some $\underline{b}^2 > 0$, then we can show that the bound (\ref{eq:thm4-cs2-a-i-bd1}) is decreasing in $t$. We verify below that the second term is decreasing in $t$, and the first term disappears when $s_0^2=1$, so it is without loss of generality to assume that $s_0^2\neq 1$. Taking the derivative with respect to $t$ yields:
\small
\begin{align}
    &\frac{1}{2}\left[\frac{C_1|s_0^{-2}-1|\sqrt{a_T}\log T}{\sqrt{t_0-t}} - \log\left[1 + \frac{(T-t_0+1)(t_0-t)b_0^2 }{16\overline{s}^2(T-t+1)^2}\right] +\frac{\frac{(T-t_0+1)(t_0-t)b_0^2 }{8\overline{s}^2(T-t+1)^2} - \frac{(T-t_0+1)b_0^2 }{16\overline{s}^2(T-t+1)}}{1 + \frac{(T-t_0+1)(t_0-t)b_0^2 }{16\overline{s}^2(T-t+1)^2}}\right] \label{eq:case-2-i-deriv}
\end{align}
\normalsize
Suppose that $b_0^2\to\infty$ as $T\to\infty$, then since we can write:
\begin{align*}
     - \log\left[1 + \frac{(T-t_0+1)(t_0-t)b_0^2 }{16\overline{s}^2(T-t+1)^2}\right] +\frac{\frac{(T-t_0+1)(t_0-t)b_0^2 }{8\overline{s}^2(T-t+1)^2} - \frac{(T-t_0+1)b_0^2 }{16\overline{s}^2(T-t+1)}}{1 + \frac{(T-t_0+1)(t_0-t)b_0^2 }{16\overline{s}^2(T-t+1)^2}} \leq -\frac{\frac{(T-t_0+1)^2b_0^2 }{16\overline{s}^2(T-t+1)^2}}{1 + \frac{(T-t_0+1)(t_0-t)b_0^2 }{16\overline{s}^2(T-t+1)^2}} \tag{$\log(1+x)\geq x/(1+x)$ for $x>-1$}
\end{align*}
and:
\begin{align*}
    \frac{(T-t_0+1)^2b_0^2 }{(T-t+1)^2} = \frac{b_0^2}{\left(1+\frac{t_0-t}{T-t_0+1}\right)^2} \geq \frac{b_0^2}{4}. \tag{$t_0-t \leq T-t_0+1$}
\end{align*}
Then for large $T$: 
\begin{align*}
    - \log\left[1 + \frac{(T-t_0+1)(t_0-t)b_0^2 }{16\overline{s}^2(T-t+1)^2}\right] +\frac{\frac{(T-t_0+1)(t_0-t)b_0^2 }{8\overline{s}^2(T-t+1)^2} - \frac{(T-t_0+1)b_0^2 }{16\overline{s}^2(T-t+1)}}{1 + \frac{(T-t_0+1)(t_0-t)b_0^2 }{16\overline{s}^2(T-t+1)^2}} \lesssim -\frac{T-t_0+1}{t_0-t}.
\end{align*}
At the same time we have:
\begin{align*}
    \frac{\frac{T-t_0+1}{t_0-t}}{\frac{C_1|s_0^{-2}-1|\sqrt{a_T}\log T}{\sqrt{t_0-t}}} \geq \frac{\sqrt{T-t_0+1}}{C_1|\underline{s}^{-2}-1|\sqrt{a_T}\log T}. \tag{$t_0-t \leq T-t_0+1$}
\end{align*}
Since $s_0^2\neq0$, then by assumption $(s_0^2-1)^2(T-t_0+1) \gg a_T\log^2T$, but since $(s_0^2-1)^2 = \mathcal{O}(1)$, then $T-t_0+1 \gg a_T\log^2T$ and the lower bound above is diverging as $T\to\infty$, and thus the first term in (\ref{eq:case-2-i-deriv}) is dominated by the remaining terms and we get that (\ref{eq:case-2-i-deriv}) is negative for large $T$. 

On the other hand, if $b_0^2 =\mathcal{O}(1)$ so that $b_0^2 < \overline{b}^2$ for some $\overline{b}^2 < \infty$, then since the first term of (\ref{eq:case-2-i-deriv}) is decreasing in $t$, we have:
\begin{align*}
    \frac{C_1|s_0^{-2}-1|\sqrt{a_T}\log T}{\sqrt{t_0-t}}  &\leq \frac{C_1\min\{|b_0|,|b_0|/s_0\}\sqrt{a_T}\log T}{s_0^{2}\sqrt{t_0-t}}  \tag{$(s_0^2-1)^2 \leq\min\{b_0^2,b_0^2/s_0^2\}$} \\
    &\leq \frac{C_1\min\{|b_0|,|b_0|/s_0\}\sqrt{a_T}\log T}{s_0^{2}\sqrt{t_0-t}} \bigg|_{t=\sup\{t\in\mathcal{T}_{a_T}:t\leq t_0,\;t_0-t\leq T-t_0+1\}} \\
    &= \frac{C_1\min\{b_0^2,b_0^2/s^2_0\}}{s_0^{2}\sqrt{C_{1,a}}} \\
    &\leq \frac{C_1 \overline{b}^2 }{\underline{s}^4\sqrt{C_{1,a}}} \tag{$|b_0|\leq \overline{b} < \infty$}
\end{align*}
while defining:
\begin{align*}
     h(t):=\frac{(T-t_0+1)(t_0-t)b_0^2 }{16\overline{s}^2(T-t+1)}
\end{align*}
and taking the derivative of the other terms with respect to $t$ yields: 
\begin{align*}
    \frac{h''(t)}{1+ \frac{h(t)}{T-t+1}} - \frac{\left[\frac{h(t)}{T-t+1} + h'(t)\right]^2}{(T-t+1)\left[1+\frac{h(t)}{T-t+1}\right]^2}. 
\end{align*}
Since:
\begin{align*}
    h''(t) &= -\frac{(T-t_0+1)^2b_0^2 }{8\overline{s}^2(T-t+1)^3} < 0
\end{align*}
then:
\small
\begin{align*}
    &- \log\left[1 + \frac{(T-t_0+1)(t_0-t)b_0^2 }{16\overline{s}^2(T-t+1)^2}\right] +\frac{\frac{(T-t_0+1)(t_0-t)b_0^2 }{8\overline{s}^2(T-t+1)^2} - \frac{(T-t_0+1)b_0^2 }{16\overline{s}^2(T-t+1)}}{1 + \frac{(T-t_0+1)(t_0-t)b_0^2 }{16\overline{s}^2(T-t+1)^2}} \\
    &\leq - \log\left[1 + \frac{(T-t_0+1)(t_0-t)b_0^2 }{16\overline{s}^2(T-t+1)^2}\right] +\frac{\frac{(T-t_0+1)(t_0-t)b_0^2 }{8\overline{s}^2(T-t+1)^2} - \frac{(T-t_0+1)b_0^2 }{16\overline{s}^2(T-t+1)}}{1 + \frac{(T-t_0+1)(t_0-t)b_0^2 }{16\overline{s}^2(T-t+1)^2}}\bigg|_{t=\inf\{t\in\mathcal{T}_{a_T}:t\leq t_0,\;t_0-t\leq T-t_0+1\}} \\
    &= - \log\left[1 + \frac{b_0^2 }{32\overline{s}^2}\right] +\frac{\frac{b_0^2 }{32\overline{s}^2}}{1 + \frac{b_0^2 }{32\overline{s}^2}} \\
    &\leq - \log\left[1 + \frac{\underline{b}^2 }{32\overline{s}^2}\right] +\frac{\frac{\underline{b}^2 }{32\overline{s}^2}}{1 + \frac{\underline{b}^2 }{32\overline{s}^2}} \tag{$0<\underline{b} < |b_0|$}
\end{align*}
\normalsize
Once more, $\log(1+x) > x/(1+x)$ for $x>-1$, and since the last bound above is fixed with respect to $T$, then there is some $b^* > 0$ so that: 
\begin{align*}
     - \log\left[1 + \frac{\underline{b}^2 }{32\overline{s}^2}\right] +\frac{\frac{\underline{b}^2 }{32\overline{s}^2}}{1 + \frac{\underline{b}^2 }{32\overline{s}^2}} < -b^*.
\end{align*}
So if $C_{1,a} > (C_1 \overline{b}^2/ \underline{s}^4b^*)^2$, then (\ref{eq:case-2-i-deriv}) is negative and the bound (\ref{eq:thm4-cs2-a-i-bd1}) is minimized at:
\begin{align*}
    t &= \sup\;\{t \in \mathcal{T}_{a_T}:t<t_0\} \\
    &\leq t_0 - \frac{C_{1,a}a_T\log^2T}{\min\{b_0^2,b_0^2/s_0^2\}}.
\end{align*}
and thus:
\begin{align*}
    \log \frac{\alpha_{t_0}}{\alpha_t} &> -\frac{C_1|s_0^2-1|\sqrt{C_{1,a}}a_T\log^2T}{s_0^2\min\{|b_0|,|b_0|/s_0\}} \\
    &\quad\:+ \frac{T - t_0 +1}{2}\left(1 + \frac{C_{1,a}a_T\log^2 T}{\min\{b_0^2,b_0^2/s_0^2\}(T-t_0+1)}\right)\log\left[ 1 + \frac{\frac{C_{1,a}a_T\log^2 T}{16\overline{s}^2(T-t_0+1)\min\{1,s_0^{-2}\}}}{\left(1 + \frac{C_{1,a} a_T\log^2 T}{\min\{b_0^2,b_0^2/s_0^2\}(T-t_0+1)}\right)^2}\right] \\
    &\quad\:+ \mathcal{O}(a_T\log^2 T).
\end{align*}
By assumption we have:
\begin{align*}
    \min\{b_0^2,b_0^2/s_0^2\}(T-t_0+1) &\gg a_T\log^2 T
\end{align*}
so for any $C_{1,a} > 0$, for large $T$ we have:
\begin{align*}
    \frac{C_{1,a} a_T\log^2 T}{\min\{b_0^2,b_0^2/s_0^2\}(T-t_0+1)} \leq \sqrt{2}-1
\end{align*}
and:
\begin{align*}
    \log \frac{\alpha_{t_0}}{\alpha_t} &> -\frac{C_1|s_0^2-1|\sqrt{C_{1,a}}a_T\log^2T}{s_0^2\min\{|b_0|,|b_0|/s_0\}}+\frac{T - t_0 +1}{2}\log\left[ 1 + \frac{C_{1,a}a_T\log^2 T}{32\overline{s}^2(T-t_0+1)}\right] + \mathcal{O}(a_T\log^2 T) 
\end{align*}
Suppose that $b_0^2 = \mathcal{O}(1)$, then it must be the case that $T-t_0+1 \gg a_T\log^2T$, in which case we can use another first order Taylor approximation to write:
\begin{align*}
    \frac{T - t_0 +1}{2}\log\left[ 1 + \frac{C_{1,a}a_T\log^2 T}{32\overline{s}^2(T-t_0+1)}\right] = \frac{C_{1,a}a_T\log^2 T}{32\overline{s}^2} + \mathcal{O}\left(a_T\log^2T\right)
\end{align*}
and since $(s_0^2 - 1)^2 \leq\min\{|b_0|,|b_0|/s_0\}$, for some $B_1 > 0$ we have:
\begin{align*}
    \log \frac{\alpha_{t_0}}{\alpha_t} &> -\frac{C_1\sqrt{C_{1,a}}a_T\log^2T}{\underline{s}^2}+ \frac{C_{1,a}a_T\log^2 T}{32\overline{s}^2} - B_1 a_T\log^2 T 
\end{align*}
so we can select $C_{1,a} > 0$ large so that (\ref{eq:thm4-result}) holds. On the other hand, if $b_0^2 \to \infty$, then we make use of the assumption:
\begin{align*}
    T-t_0+1 \gtrsim a_T\log^2 T
\end{align*}
to write:
\begin{align*}
    \log \frac{\alpha_{t_0}}{\alpha_t} &> -\frac{C_1\overline{s}(\underline{s}^{-2} + 1)\sqrt{C_{1,a}}a_T\log^2T}{|b_0|}+\frac{B_2a_T\log^2T}{2}\log\left[ 1 + \frac{C_{1,a}}{32\overline{s}^2B_2}\right] -B_1 a_T\log^2 T 
\end{align*}
for some $B_2 > 0$. Since $|b_0| \to \infty$, for large $T$ we
\begin{align*}
    \frac{C_1\overline{s}(\underline{s}^{-2} + 1)}{|b_0|} \leq B_1
\end{align*}
and clearly we can pick $C_{1,a} > 0$ large so that:
\begin{align*}
    \log\left[ 1 + \frac{C_{1,a}}{32\overline{s}^2B_2}\right] > 2B_1 + C_\pi
\end{align*}
so again (\ref{eq:thm4-result}) holds for this case.

\subsubsection*{Case 2.a.ii: $\min\left\{b_0^2,b_0^2/s_0^2\right\} \leq (s_0^2-1)^2$}

Returning to bound (\ref{eq:thm4-cs2-a-bd3}), we can write:
\begin{align}
    \log \frac{\alpha_{t_0}}{\alpha_t} &>  -C_1|s_0^{-2}-1|\sqrt{a_T(t_0-t)}\log T + \frac{(t_0-t)}{2}\log s_0^2 + \left(\frac{T - t +1}{2}\right)\log\left[1 + \frac{(t_0-t)(s_0^{-2}-1)}{T-t+1}\right] \notag  \\
    &\quad\: + \left(\frac{T - t +1}{2}\right)\log\left[1 - \frac{ \frac{2(T-t_0+1)|b_0|C_1\sqrt{a_T(t_0-t)}\log T }{(T-t+1)^2}}{1 + \frac{(T-t_0+1)(s_0^2-1)}{T-t+1} + \xi(T,t_0,t,s_0)}\right] \notag \\
    &\quad\: + \mathcal{O}(a_T\log^2 T). \label{eq:thm4-cs2-a-ii-bd1}
\end{align}
Since $(s_0^2-1)^2 = \mathcal{O}(1)$ by assumption, then:
\begin{align*}
    (s_0^2-1)^2(T-t_0+1) \gg a_T\log^2 T \implies T-t_0+1 \gg a_T\log^2 T.
\end{align*}
We have shown that on $\mathcal{E}$:
\begin{align*}
    \left|\xi(T,t_0,t,s_0)\right| &\leq (1+s_0^2 + 2s_0)\left[\frac{C_1\sqrt{a_T}\log T}{\sqrt{T-t+1}} + \frac{C_1^2a_T\log^2T}{T-t+1}\right]  \\
    &\leq  4 \overline{s}^2\left[\frac{C_1\sqrt{a_T}\log T}{\sqrt{T-t_0+1}} + \frac{C_1^2a_T\log^2T}{T-t_0+1}\right] 
\end{align*}
so for large $T$ we have $\xi(T,t_0,t,s_0) > - \underline{s}^2/2$, and we have:
\begin{align*}
    -1 <\underline{s}^2 - 1 <\frac{(T-t_0+1)(s_0^2-1)}{T-t+1} < \overline{s}^2 -1 < \infty
\end{align*}
so: 
\begin{align}
    \log \frac{\alpha_{t_0}}{\alpha_t} &>  -C_1|s_0^{-2}-1|\sqrt{a_T(t_0-t)}\log T + \frac{(t_0-t)}{2}\log s_0^2 + \left(\frac{T - t +1}{2}\right)\log\left[1 + \frac{(t_0-t)(s_0^{-2}-1)}{T-t+1}\right] \notag  \\
    &\quad\: + \left(\frac{T - t +1}{2}\right)\log\left[1 - \frac{4(T-t_0+1)|b_0|C_1\sqrt{a_T(t_0-t)}\log T }{\underline{s}^2(T-t+1)^2}\right] \notag \\
    &\quad\: + \mathcal{O}(a_T\log^2 T). \label{eq:thm4-cs2-a-ii-bd2}
\end{align}
Next, since $(s_0^2-1)^2 = \mathcal{O}(1)$ and $\min\{b_0^2,b_0^2/s_0^2\}\leq (s_0^2-1)^2$, then then there is some $\overline{b} > 0$ so that:
\begin{align*}
    \frac{4(T-t_0+1)|b_0|C_1\sqrt{a_T(t_0-t)}\log T }{\underline{s}^2(T-t+1)^2} \leq \frac{4\overline{b}C_1\sqrt{a_T}\log T }{\underline{s}^2\sqrt{T-t_0+1}}. \tag{$T-t+1 \geq T-t_0+1\geq t_0-t$}
\end{align*}
Since $T-t_0+1\gg a_T\log^2T$ in this case, then the RHS above is converging to zero and we can use another first order Taylor approximation to write:
\footnotesize
\begin{align*}
    \left(\frac{T - t +1}{2}\right)\log\left[1 - \frac{4(T-t_0+1)|b_0|C_1\sqrt{a_T(t_0-t)}\log T }{\underline{s}^2(T-t+1)^2}\right] &= -\frac{4(T-t_0+1)|b_0|C_1\sqrt{a_T(t_0-t)}\log T }{\underline{s}^2(T-t+1)} + \mathcal{O}\left(a_T\log^2T\right) \\
    &\geq -\frac{4\overline{s}C_1|s_0^2-1|}{\underline{s}^2} \sqrt{a_T(t_0-t)}\log T\tag{$\min\{b_0^2,b_0^2/s_0^2\}\leq (s_0^2-1)^2$} + \mathcal{O}\left(a_T\log^2T\right).
\end{align*}
\normalsize
So for some $C_2 > 8\overline{s}C_1/\underline{s}^2$, we can write:
\small
\begin{align*}
    \log \frac{\alpha_{t_0}}{\alpha_t} &>  -C_2|s_0^{2}-1|\sqrt{a_T(t_0-t)}\log T + \frac{(t_0-t)}{2}\log s_0^2 + \left(\frac{T - t +1}{2}\right)\log\left[1 + \frac{(t_0-t)(s_0^{-2}-1)}{T-t+1}\right] +\mathcal{O}(a_T\log^2 T) \\
    &=h_T(t,t_0) 
\end{align*}
\normalsize
We now show that for any $t_0$, $h_T$ is decreasing in $t$. Analyzing the first derivative of $h_T$ we have:
\small
\begin{align*}
    \frac{\partial h_T}{\partial t} &= \frac{1}{2}\left[ C_2|s_0^{2}-1|\sqrt{\frac{a_T\log^2 T}{t_0-t}}- \log\left[1 + \frac{(s_0^2-1)(T-t_0+1)}{T-t+1}\right] + \frac{\frac{(T-t_0+1)(s_0^{2}-1)}{T-t+1}}{1 + \frac{(s_0^2-1)(T-t_0+1)}{T-t+1} }\right] \\
    &= \frac{1}{2}\left[ C_2|s_0^{2}-1|\sqrt{\frac{a_T\log^2 }{(T-t_0+1)x}}- \log\left[1 + \frac{(s_0^2-1)}{1 + x}\right] + \frac{\frac{(s_0^{2}-1)}{1+x}}{1 + \frac{(s_0^2-1)}{1 + x} }\right] \tag{$x = \frac{t_0-t}{T-t_0+1}$}
\end{align*}
\normalsize
Next, note that the term:
\begin{align*}
     g_1(x) := C_2|s_0^{2}-1|\sqrt{\frac{a_T\log^2 }{(T-t_0+1)x}}
\end{align*}
is convex and decreasing as a function of $x$. On the other hand, if we define:
\begin{align*}
    g_2(x) &:= - \log\left[1 + \frac{(s_0^2-1)}{1 + x}\right] + \frac{\frac{(s_0^{2}-1)}{1+x}}{1 + \frac{(s_0^2-1)}{1 + x} }
\end{align*}
then we have:
\begin{align*}
    \frac{d g_2}{dx} = \frac{\frac{(s_0^{2}-1)^2}{(1+x)^3}}{\left(1 + \frac{(s_0^2-1)}{1 + x} \right)^2} > 0
\end{align*}
so this term is increasing as a function of $x$. Therefore:
\begin{align*}
    g_1(x) + g_2(x) &\leq g_1(x)\bigg|_{x = \inf\:\{x\in[0,1]\::\:t_0-t\leq T-t_0+1, \; t\in\mathcal{T}_{a_T}\} } +  g_2(x)\bigg|_{x = \sup\:\{x\in[0,1]\::\:t_0-t\leq T-t_0+1, \; t\in\mathcal{T}_{a_T}\} } \\
    &\leq g_1\left(\frac{C_{1,a}a_T\log^2 T}{(T-t_0+1)(s_0^2-1)^2}\right) + g_2\left(1\right) \\
    &= \frac{C_2(s_0^{2}-1)^2}{\sqrt{C_{1,a}}}- \log\left[1 + \frac{(s_0^2-1)}{2}\right] + \frac{\frac{(s_0^{2}-1)}{2}}{1 + \frac{(s_0^2-1)}{2} }.
\end{align*}
Taking the derivative of the second and third term with respect to $s_0$ gives:
\begin{align}
    -\frac{s_0(s_0^2-1)}{2\left(1 + \frac{s_0^2-1}{2}\right)^2} \label{eq:thm4-deriv2}
\end{align}
Suppose that there is some $\overline{\nu} > 1$ so that $s_0^2 > \overline{\nu}^2$ for all $T$, then (\ref{eq:thm4-deriv2}) is negative, implying:
\begin{align*}
    \frac{C_2(s_0^{2}-1)^2}{\sqrt{C_{1,a}}}- \log\left[1 + \frac{(s_0^2-1)}{2}\right] + \frac{\frac{(s_0^{2}-1)}{2}}{1 + \frac{(s_0^2-1)}{2}} &\leq \frac{C_2(\overline{s}^{2}-1)^2}{\sqrt{C_{1,a}}} - \log\left[1 + \frac{(\overline{\nu}^2-1)}{2}\right] + \frac{\frac{(\overline{\nu}^{2}-1)}{2}}{1 + \frac{(\overline{\nu}^2-1)}{2}}\\
    &:= \frac{C_2(\overline{s}^{2}-1)^2}{\sqrt{C_{1,a}}} -\nu^*.
\end{align*}
The inequality $\log(1+x)> x/(x+1)$ for $x > -1$ and the fact that $\overline{\nu}^2 > 1$ imply that $\nu^* > 0$, so if we pick $C_{1,a} > (C_1(\overline{s}^2-1)^2/\nu^*)^2$, then we will have $\partial h_T/\partial t < 0$. On the other hand, if there is some $\underline{\nu} < 1$ so that $s_0^2 < \overline{\nu}^2$ for all $T$, then (\ref{eq:thm4-deriv2}) is positive, implying:
\begin{align*}
    \frac{C_2(s_0^{2}-1)^2}{\sqrt{C_{1,a}}}- \log\left[1 + \frac{(s_0^2-1)}{2}\right] + \frac{\frac{(s_0^{2}-1)}{2}}{1 + \frac{(s_0^2-1)}{2}} &\leq \frac{C_2(\overline{s}^{2}-1)^2}{\sqrt{C_{1,a}}} - \log\left[1 + \frac{(\underline{\nu}^2-1)}{2}\right] + \frac{\frac{(\underline{\nu}^{2}-1)}{2}}{1 + \frac{(\underline{\nu}^2-1)}{2}} 
\end{align*}
and we can use the same argument to pick $C_{1,a}$ large enough so that $\partial h_T/\partial t < 0$. On the other hand, if $\lim_{T\to\infty} s_0^2 = 1$, we can again use the inequality $\log(1+x) > x/(1+x)$ for $x > -1$ to write:
\begin{align*}
    - \log\left[1 + \frac{(s_0^2-1)}{2}\right] + \frac{\frac{(s_0^{2}-1)}{2}}{1 + \frac{(s_0^2-1)}{2} } \leq 0
\end{align*}
and at the same time, a first order Taylor approximation of $\log(1+x)$ gives:
\begin{align*}
    - \log\left[1 + \frac{(s_0^2-1)}{2}\right] + \frac{\frac{(s_0^{2}-1)}{2}}{1 + \frac{(s_0^2-1)}{2} } &= \frac{(s_0^2-1)}{2} + \mathcal{O}\left((s_0^2-1)^2\right) - \frac{\frac{(s_0^2-1)}{2}}{1 + \frac{(s_0^2-1)}{2}} \\
    &= \mathcal{O}\left((s_0^2-1)^2\right)  + \frac{\frac{(s_0^2-1)^2}{4}}{1 + \frac{(s_0^2-1)}{2}} \\
    &= \mathcal{O}\left((s_0^2-1)^2\right)
\end{align*}
so this term is negative and converges to zero at rate $\mathcal{O}((s_0^2-1)^2)$. Therefore, there is some $B_1>0$ such that:
\begin{align*}
    \lim_{T\to\infty} \frac{\frac{C_2(s_0^{2}-1)^2}{\sqrt{C_{1,a}}}}{\log\left[1 + \frac{(s_0^2-1)}{2}\right] - \frac{\frac{(s_0^{2}-1)}{2}}{1 + \frac{(s_0^2-1)}{2} }} = \frac{B_1}{\sqrt{C_{1,a}}}.
\end{align*}
So if we pick $C_{1,a} > B_1^2$, then for large $T$ we will also have $\partial h_T/\partial t < 0$ in this case, confirming that $h_T(t,t_0)$ is decreasing in $t$, and thus:
\begin{align*}
    \log \frac{\alpha_{t_0}}{\alpha_t} &> h_T(t,t_0)\big|_{t = \sup\;\{t\in\mathcal{T}_{a_T}:t\leq t_0\}} \\
    &\geq h_T\left(t_0-\frac{C_{1,a} a_T\log^2T}{(s_0^2-1)^2},t_0\right) \\
    &= -C_2\sqrt{C_{1,a}}a_T\log^2T + \frac{C_{1,a}\log s_0^2}{2(s_0^2-1)^2}a_T\log^2T \\
    &\quad\:+ \left(\frac{C_{1,a}a_T\log^2 T}{2(s_0^2-1)^2}\right)\left(1 +\frac{(T-t_0+1)(s_0^2-1)^2}{C_{1,a}a_T\log^2 T} \right)\log\left[1 + \frac{(s_0^{-2}-1)}{1 +\frac{(T-t_0+1)(s_0^2-1)^2}{C_{1,a}a_T\log^2 T}}\right] \\
    &\quad\:+ \mathcal{O}(a_T\log^2T)
\end{align*}
By assumption:
\begin{align*}
    \lim_{T\to\infty}\frac{C_{1,a}a_T\log^2 T}{(T-t_0+1)(s_0^2-1)^2} = 0 \implies \lim_{T\to\infty} \frac{(s_0^{-2}-1)}{1 +\frac{(T-t_0+1)(s_0^2-1)^2}{C_{1,a}a_T\log^2 T}} =0.
\end{align*}
Therefore, another first order Taylor approximation gives:
\small
\begin{align*}
    &\left(\frac{C_{1,a}a_T\log^2 T}{2(s_0^2-1)^2}\right)\left(1 +\frac{(T-t_0+1)(s_0^2-1)^2}{C_{1,a}a_T\log^2 T} \right)\log\left[1 + \frac{(s_0^{-2}-1)}{1 +\frac{(T-t_0+1)(s_0^2-1)^2}{C_{1,a}a_T\log^2 T}}\right]\\
    &= \frac{C_{1,a}(s_0^{-2} -1)}{2(s_0^2-1)^2}a_T\log^2 T + \mathcal{O}\left(\frac{C_{1,a}^2a_T\log^4 T}{(T-t_0+1)(s_0^2-1) + C_{1,a}a_T\log^2 T}\right) \\
    &= \frac{C_{1,a}(s_0^{-2} -1)}{2(s_0^2-1)^2}a_T\log^2 T + \mathcal{O}\left(a_T\log^2 T\right) \tag{$(T-t_0+1)(s_0^2-1)^2 \gg a_T\log^2T$}
\end{align*}
\normalsize
and thus for some $B_1 > 0$:
\begin{align*}
    \log \frac{\alpha_{t_0}}{\alpha_t} &> \left(\frac{C_{1,a}f(s_0^{-2})}{2(s_0^2-1)^2}-C_2\sqrt{C_{1,a}} - B_1\right)a_T\log^2T 
\end{align*}
Once again, if $s_0^2 \geq \overline{\nu}^2 > 1$ for all $T$, then $f(s_0^{-2})/(s_0^2-1)^2 \geq f(\overline{\nu}^{-2})/(\overline{s}^2-1)^2 > 0$, while if $s_0^2 \leq \underline{\nu}^2 < 1$, then $f(s_0^{-2})/(s_0^2-1)^2 \geq f(\underline{\nu}^{-2}) > 0$, and since:
\begin{align*}
    \lim_{s_0^2\to 1} \frac{f(s_0^{-2})}{(s_0^2-1)^2} = \frac{1}{2}
\end{align*}
then if $\lim_{T\to\infty} s_0^2 = 1$, then for large enough $T$, $f(s_0^{-2})/(s_0^2-1)^2 > 1/4$, so again we can pick $C_{1,a}$ large enough that:
\begin{align*}
    \frac{C_{1,a}f(s_0^{-2})}{2(s_0^2-1)^2}-C_2\sqrt{C_{1,a}} - B_1 
    > C_\pi
\end{align*}
proving (\ref{eq:thm4-result}) holds for this case.

\subsubsection*{Case 2.b: $t_0 - t \leq T-t_0+1$.}

For some $C_1\geq1$, on $\mathcal{E}$ we have: 
\begin{align*}
    \max\left\{\left|\sum_{t'=t}^{t_0-1} (z_{t'}^2 -1)\right|, \left|\sum_{t'=t_0}^{T} (z_{t'}^2 -1)\right|, \left|\sum_{t'=t}^{t_0-1} z_{t'}\right|, \left|\sum_{t'=t_0}^{T} z_{t'} \right|\right\} &\leq C_1\sqrt{a_T(T-t+1)}\log T \tag{on $\cap_{i=1}^4\mathcal{E}_i\subset\mathcal{E}$}
\end{align*}
and thus:
\begin{align*}
    \left|\xi(T,t_0,t,s_0)\right| &\leq (1+s_0^{2} + 2s_0)\left[\frac{C_1\sqrt{a_T}\log T}{\sqrt{T-t+1}} + \frac{C_1^2a_T\log^2T}{T-t+1}\right] \\
    &\leq\frac{8\overline{s}^2C_1^2}{C_{2,a}}. \tag{$s_0^2 < \overline{s}^2$ and $T-t+1\geq C_{2,a}a_T\log^2T$}
\end{align*}
So for $C_{2,a} > 8\overline{s}^2C_1^2$, we have $|\xi(T,t_0,t,s_0)| > -1$ and we can use the inequality $\log(1+x) > x/(x+1)$ for $x > -1$ to write:
\small
\begin{align*}
    \left(\frac{T - t +1}{2}\right)\log\left[1 + \xi(T,t_0,t,s_0)\right] &\geq \left(\frac{T - t +1}{2}\right) \left[\xi(T,t_0,t,s_0) - \frac{\xi(T,t_0,t,s_0)^2}{1+\xi(T,t_0,t,s_0)}\right] \\
    &= \frac{1}{2}\sum_{t'=t}^{t_0-1}(z_{t'}^2 - 1) - \frac{\left(\sum_{t'=t}^{t_0-1}z_{t'}\right)^2 }{2(T-t+1)} + \frac{s_0^2}{2}\left[\sum_{t'=t_0}^T (z_{t'}^2 - 1) -  \frac{\left(\sum_{t'=t_0}^{T}z_{t'}\right)^2 }{(T-t+1)}\right]  \\
    &\quad\; - \frac{s_0\sum_{t'=t_0}^{T}z_{t'} \sum_{t'=t}^{t_0-1}z_{t'}}{(T-t+1)} - \left(\frac{T - t +1}{2}\right)\frac{\xi(T,t_0,t,s_0)^2}{1+\xi(T,t_0,t,s_0)}
\end{align*}
\normalsize
We also have:
\begin{align*}
    \frac{\left(\sum_{t'=t}^{t_0-1}z_{t'}\right)^2 }{2(T-t+1)} &\leq \frac{C_1^2}{2}a_T\log^2T \\
    \frac{s_0\sum_{t'=t_0}^{T}z_{t'} \sum_{t'=t}^{t_0-1}z_{t'}}{(T-t+1)} &\leq \overline{s}C_1^2 a_T\log^2T \\
    \frac{s_0^2\left(\sum_{t'=t_0}^{T}z_{t'}\right)^2 }{2(T-t+1)} &\leq \frac{\overline{s}^2C_1^2}{2} a_T\log^2 T.
\end{align*}
For $C_{2,a} > 16\overline{s}^2C_1^2$:
\begin{align*}
    \left(\frac{T - t +1}{2}\right)\frac{\xi(T,t_0,t,s_0)^2}{1+\xi(T,t_0,t,s_0)} &\geq (T - t +1)(1+s_0^{2} + 2s_0)^2\left[\frac{C_1\sqrt{a_T}\log T}{\sqrt{T-t+1}} + \frac{C_1^2a_T\log^2T}{T-t+1}\right]^2 \\
    &\leq  \frac{16 \overline{s}^2C_1^4a_T^2\log^4T}{T-t+1} \tag{$s_0^2 \leq \overline{s}^2$} \\
    &\leq C_1^2a_T\log^2T. \tag{$T-t+1\geq C_{2,a}a_T\log^2T > 16\overline{s}^2C_1^2a_T\log^2T$} 
\end{align*}
Returning to bound (\ref{eq:thm4-cs2-bd3}), we now have: 
\begin{align}
    \log \frac{\alpha_{t_0}}{\alpha_t} &>  \frac{(s_0^2-1)}{2}\sum_{t'=t_0}^T (z_{t'}^2-1)- \frac{(T-t_0+1)}{2}\log s_0^2  + \left(\frac{T - t +1}{2}\right)\log\left[1 + \frac{\frac{(T-t_0+1)(s_0^2-1)}{T-t+1}}{1 + \xi(T,t_0,t,s_0)}\right] \notag  \\
    &\quad\: + \left(\frac{T - t +1}{2}\right)\log\left[1 + \frac{\frac{(T-t_0+1)(t_0-t)b_0^2 }{2(T-t+1)^2} - \frac{2(T-t_0+1)|b_0|C_1\sqrt{a_T(t_0-t)}\log T }{(T-t+1)^2}}{1 + \frac{(T-t_0+1)(s_0^2-1)}{T-t+1} + \xi(T,t_0,t,s_0)}\right] \notag \\
    &\quad\: + \mathcal{O}(a_T\log^2 T). \label{eq:thm4-cs2-b-bd1}
\end{align}
Note that if $s_0^2 = 1$, then we can completely ignore the first line of (\ref{eq:thm4-cs2-b-bd1}), on the other hand, if $s_0^2\neq 1$, then by assumption $(s_0^2 -1)^2(T-t_0+1)\gg a_T\log^2T$. However, $0<\underline{s}^2\leq s_0^2\leq\overline{s}^2<\infty$ implies $(s_0^2 -1)^2 = \mathcal{O}(1)$, so we must have $T-t_0+1\gg a_T\log^2T$. We showed above that on $\mathcal{E}$:
\begin{align*}
    \left|\xi(T,t_0,t,s_0)\right| &\leq (1+s_0^{2} + 2s_0)\left[\frac{C_1\sqrt{a_T}\log T}{\sqrt{T-t+1}} + \frac{C_1^2a_T\log^2T}{T-t+1}\right] \\
    &\leq 4\overline{s}^2\left[\frac{C_1\sqrt{a_T}\log T}{\sqrt{T-t_0+1}} + \frac{C_1^2a_T\log^2T}{T-t_0+1}\right] \tag{$T-t+1\geq T-t_0+1$}
\end{align*}
i.e., $\left|\xi(T,t_0,t,s_0)\right| \to 0$ on $\mathcal{E}$ when $s_0^2\neq 1$. Next, we can write:
\footnotesize
\begin{align*}
    \left(\frac{T - t +1}{2}\right)\log\left[1 + \frac{\frac{(T-t_0+1)(s_0^2-1)}{T-t+1}}{1 + \xi(T,t_0,t,s_0)}\right] &= \left(\frac{T - t +1}{2}\right)\log\left[1 + \frac{(T-t_0+1)(s_0^2-1)}{T-t+1}\right] \\
    &\quad\:+ \left(\frac{T - t +1}{2}\right)\log\left[1 - \left(\frac{ \xi(T,t_0,t,s_0)}{1 + \xi(T,t_0,t,s_0)}\right)\left(\frac{\frac{(T-t_0+1)(s_0^2-1)}{T-t+1}}{1+\frac{(T-t_0+1)(s_0^2-1)}{T-t+1}}\right)\right].
\end{align*}
\normalsize
We have just shown that:
\begin{align*}
    \frac{ \xi(T,t_0,t,s_0)}{1 + \xi(T,t_0,t,s_0)} &= \mathcal{O}\left(\frac{\sqrt{a_T}\log T}{\sqrt{T-t+1}}\right) =  \mathcal{O}\left(\frac{\sqrt{a_T}\log T}{\sqrt{T-t_0+1}}\right) \\
    \frac{\frac{(T-t_0+1)(s_0^2-1)}{T-t+1}}{1+\frac{(T-t_0+1)(s_0^2-1)}{T-t+1}} &= \mathcal{O}\left(\frac{(s_0^2-1)(T-t_0+1)}{T-t+1}\right)= \mathcal{O}(1)
\end{align*}
so another first order Taylor approximation of $\log(1+x)$ around zero gives:
\footnotesize
\begin{align*}
     \left(\frac{T - t +1}{2}\right)\log\left[1 - \left(\frac{ \xi(T,t_0,t,s_0)}{1 + \xi(T,t_0,t,s_0)}\right)\left(\frac{\frac{(T-t_0+1)(s_0^2-1)}{T-t+1}}{1+\frac{(T-t_0+1)(s_0^2-1)}{T-t+1}}\right)\right] &= (s_0^2-1)\mathcal{O}\left(\sqrt{(T-t_0+1)a_T}\log T\right) + \mathcal{O}(a_T\log^2 T).
\end{align*}
\normalsize
Similarly, for some $C_1>0$ on $\mathcal{E}$ we have:
\begin{align*}
    \left|\sum_{t' = t_0}^T (z_{t'}^2-1)\right|  \leq C_1\sqrt{(T-t_0+1)a_T}\log T \tag{on $\mathcal{E}_4\subset\mathcal{E}$}
\end{align*}
So for some $C_2 \geq  C_1/2$, we can now rewrite (\ref{eq:thm4-cs2-b-bd1}) on $\mathcal{E}$ as:
\small
\begin{align}
    \log \frac{\alpha_{t_0}}{\alpha_t} &> -C_2|s_0^2-1|\sqrt{(T-t_0+1)a_T}\log T- \frac{(T-t_0+1)}{2}\log s_0^2  + \left(\frac{T - t +1}{2}\right)\log\left[1 + \frac{(T-t_0+1)(s_0^2-1)}{T-t+1}\right] \notag  \\
    &\quad\: + \left(\frac{T - t +1}{2}\right)\log\left[1 + \frac{\frac{(T-t_0+1)(t_0-t)b_0^2 }{2(T-t+1)^2} - \frac{2(T-t_0+1)|b_0|C_1\sqrt{a_T(t_0-t)}\log T }{(T-t+1)^2}}{1 + \frac{(T-t_0+1)(s_0^2-1)}{T-t+1} + \xi(T,t_0,t,s_0)}\right] \notag \\
    &\quad\: + \mathcal{O}(a_T\log^2 T). \label{eq:thm4-cs2-b-bd2}
\end{align}
\normalsize

\subsubsection*{Case 2.b.i: $\min\left\{b_0^2,b_0^2/s_0^2\right\} \geq (s_0^2-1)^2$}

Using the same argument from Case 2.a.i, it is straight forward to show that the term:
\begin{align*}
    - \frac{(T-t_0+1)}{2}\log s_0^2  + \left(\frac{T - t +1}{2}\right)\log\left[1 + \frac{(T-t_0+1)(s_0^2-1)}{T-t+1}\right]
\end{align*}
is decreasing in $t$ and equal to zero when $t = t_0$, and that on $\mathcal{E}$:
\begin{align*}
     \log\left[1 + \frac{\frac{(T-t_0+1)(t_0-t)b_0^2 }{2(T-t+1)^2} - \frac{2(T-t_0+1)|b_0|C_1\sqrt{a_T(t_0-t)}\log T }{(T-t+1)^2}}{1 + \frac{(T-t_0+1)(s_0^2-1)}{T-t+1} + \xi(T,t_0,t,s_0)}\right] \geq \log\left[1 + \frac{(T-t_0+1)(t_0-t)b_0^2 }{16\overline{s}^2(T-t+1)^2}\right].
\end{align*}
So in this case we can write:
\small
\begin{align*}
    \log \frac{\alpha_{t_0}}{\alpha_t} &> -C_2|s_0^2-1|\sqrt{(T-t_0+1)a_T}\log T + \left(\frac{T - t +1}{2}\right)\log\left[1 + \frac{(T-t_0+1)(t_0-t)b_0^2 }{16\overline{s}^2(T-t+1)^2}\right] + \mathcal{O}(a_T\log^2 T). 
\end{align*}
\normalsize
The second term is decreasing in $t$ and $t_0 -t \geq T-t_0+1$ implies $t \leq 2t_0-T-1$ in this case so for some $B_1 > 0$:
\small
\begin{align*}
    \log \frac{\alpha_{t_0}}{\alpha_t} &> -C_2|s_0^2-1|\sqrt{(T-t_0+1)a_T}\log T + (T-t_0+1)\log\left[1 + \frac{b_0^2 }{16^2\overline{s}^2}\right] - B_1 a_T\log^2 T. 
\end{align*}
\normalsize
Next we have:
\footnotesize
\begin{align*}
    -C_2|s_0^2-1|\sqrt{(T-t_0+1)a_T}\log T + (T-t_0+1)\log\left[1 + \frac{b_0^2 }{16^2\overline{s}^2}\right] &\geq (T-t_0+1)\left(\log\left[1 + \frac{b_0^2 }{16^2\overline{s}^2}\right]-C_2(\overline{s}^2+1) \sqrt{\frac{a_T\log^2T}{T-t_0+1}}\right).
\end{align*}
\normalsize
Suppose that $b_0^2\to\infty$ as $T\to\infty$, then since $T-t_0+1\gtrsim a_T\log^2T$, there is some $B_2 > 0$ so that:
\begin{align*}
    \frac{a_T\log^2T}{B_2}\left(\log\left[1 + \frac{b_0^2 }{16\overline{s}^2}\right]-\frac{C_2(\overline{s}^2+1)}{\sqrt{B_2}}\right)
\end{align*}
So (\ref{eq:thm4-result}) holds for $T$ large enough that:
\begin{align*}
    \log\left[1 + \frac{b_0^2 }{16^2\overline{s}^2}\right] > \frac{C_2(\overline{s}^2+1)}{\sqrt{B_2}} + B_2(B_1 + C_\pi).
\end{align*}
On the other hand, if $b_0^2 = \mathcal{O}(1)$, then it must be the case that $T-t_0+1\gg a_T\log^2 T$. If $b_0^2 > \underline{b}^2 > 0$ for all $T$, then for large $T$ we have:
\begin{align*}
    \log\left[1 + \frac{b_0^2 }{16^2\overline{s}^2}\right]-C_2(\overline{s}^2+1) \sqrt{\frac{a_T\log^2T}{T-t_0+1}} >\frac{1}{2}\log\left[1 + \frac{\underline{b}^2 }{16^2\overline{s}^2}\right] > 0
\end{align*}
and:
\begin{align*}
    \frac{(T-t_0+1)}{2}\log\left[1 + \frac{\underline{b}^2 }{16^2\overline{s}^2}\right] - B_2a_T\log^2T \geq C_\pi a_T\log^2T.
\end{align*}
Finally, if $\lim_{T\to\infty}b_0^2=0$, then a first order Taylor expansion of $\log(1+x)$ gives: 
\begin{align*}
    \log\left[1 + \frac{b_0^2 }{16^2\overline{s}^2}\right] = \frac{b_0^2 }{16^2\overline{s}^2} + \mathcal{O}(b_0^4) \geq \frac{b_0^2 }{512\overline{s}^2} 
\end{align*}
where the last inequality holds for large $T$, and thus:
\begin{align*}
    \log \frac{\alpha_{t_0}}{\alpha_t} &> -C_2(\overline{s}^2+1)\sqrt{(T-t_0+1)a_T}\log T + \frac{(T-t_0+1)b_0^2 }{512\overline{s}^2}  - B_1 a_T\log^2 T. 
\end{align*}
But $(T-t_0+1)b_0^2 \gg a_T\log^2T$ by assumption, so the second term dominates for large $T$ we again get that (\ref{eq:thm4-result}) holds for this case.

\subsubsection*{Case 2.b.ii: $\min\left\{b_0^2,b_0^2/s_0^2\right\} < (s_0^2-1)^2$}

Returning to bound (\ref{eq:thm4-cs2-b-bd2}), we can use the same argument from Case 2.a.ii to get:
\small
\begin{align*}
    \log \frac{\alpha_{t_0}}{\alpha_t} &> -C_2|s_0^2-1|\sqrt{(T-t_0+1)a_T}\log T- \frac{(T-t_0+1)}{2}\log s_0^2  + \left(\frac{T - t +1}{2}\right)\log\left[1 + \frac{(T-t_0+1)(s_0^2-1)}{T-t+1}\right] \notag  \\
    &\quad\: + \left(\frac{T - t +1}{2}\right)\log\left[1 - \frac{4(T-t_0+1)|b_0|C_1\sqrt{a_T(t_0-t)}\log T }{\underline{s}^2(T-t+1)^2}\right]  + \mathcal{O}(a_T\log^2 T).
\end{align*}
\normalsize
and:
\footnotesize
\begin{align*}
    \left(\frac{T - t +1}{2}\right)\log\left[1 - \frac{4(T-t_0+1)|b_0|C_1\sqrt{a_T(t_0-t)}\log T }{\underline{s}^2(T-t+1)^2}\right] &= -\frac{4(T-t_0+1)|b_0|C_1\sqrt{a_T(t_0-t)}\log T }{\underline{s}^2(T-t+1)} + \mathcal{O}\left(a_T\log^2T\right) \\
    &\geq -\frac{4\overline{s}C_1|s_0^2-1|}{\underline{s}^2} \sqrt{a_T(T-t_0+1)}\log T\tag{$T-t+1\geq t_0-t \geq T-t_0+1$ and $\min\{b_0^2,b_0^2/s_0^2\}\leq (s_0^2-1)^2$} + \mathcal{O}\left(a_T\log^2T\right).
\end{align*}
\normalsize
So for $C_3 > C_2 + 4\overline{s}C_1/\underline{s}^2$ we have:
\small
\begin{align}
    \log \frac{\alpha_{t_0}}{\alpha_t} &> -C_3|s_0^2-1|\sqrt{(T-t_0+1)a_T}\log T- \frac{(T-t_0+1)}{2}\log s_0^2  + \left(\frac{T - t +1}{2}\right)\log\left[1 + \frac{(T-t_0+1)(s_0^2-1)}{T-t+1}\right] \notag \\
    &\quad\:+ \mathcal{O}(a_T\log^2 T). \label{eq:thm4-cs2-b-ii-bd1}
\end{align}
\normalsize
The bound (\ref{eq:thm4-cs2-b-ii-bd1}) is decreasing as a function of $t$ and in this case, since $t_0-t>T-t_0+1$, then $t < 2t_0-T + 1$. Therefore, we have:
\small
\begin{align*}
    \log \frac{\alpha_{t_0}}{\alpha_t} &> -C_3|s_0^2-1|\sqrt{(T-t_0+1)a_T}\log T-\frac{(T-t_0+1)}{2}\log s_0^2 + (T-t_0+1)\log\left[1 + \frac{(s_0^2-1)}{2}\right]  + \mathcal{O}(a_T\log^2 T) \\
    &= -C_3|s_0^2-1|\sqrt{(T-t_0+1)a_T}\log T + (T-t_0+1)\log\left[1 + \frac{s_0+\frac{1}{s_0}-2}{2}\right] + \mathcal{O}(a_T\log^2 T)
\end{align*}
\normalsize
If there is some $\underline{\nu} <1$ so that $s_0 \leq \underline{\nu}$ for all $T$, then: 
\small
\begin{align*}
    \log \frac{\alpha_{t_0}}{\alpha_t} &> -C_3\sqrt{(T-t_0+1)a_T}\log T  + (T-t_0+1)\log\left[1 + \frac{\underline{\nu} +\underline{\nu}^{-1}-2}{2}\right] + \mathcal{O}(a_T\log^2 T)
\end{align*}
\normalsize
and if there is some $\overline{\nu} >1$ so that $s_0 \geq \underline{\nu}$ for all $T$, then: 
\small
\begin{align*}
    \log \frac{\alpha_{t_0}}{\alpha_t} &> -C_3|\overline{s}^2-1|\sqrt{(T-t_0+1)a_T}\log T  + (T-t_0+1)\log\left[1 + \frac{\overline{\nu} +\overline{\nu}^{-1}-2}{2}\right] + \mathcal{O}(a_T\log^2 T)
\end{align*}
\normalsize
and since $ \min\{\underline{\nu} +\underline{\nu}^{-1},  \overline{\nu} +\overline{\nu}^{-1}\}-2 > 0$ and $T-t_0+1 \gg a_T\log^2 T$, the second term dominates in both the bounds above and for large enough $T$ we will have $\log \frac{\alpha_{t_0}}{\alpha_t} \geq C_\pi \log^2 T$. On the other hand, if $\lim_{T\to\infty} s_0^2 = 1$, then we can use a first order Taylor approximation of $\log(1+x)$ around zero to write: 
\begin{align*}
    \log\left[1 + \frac{s_0+\frac{1}{s_0}-2}{2}\right] = \frac{s_0+\frac{1}{s_0}-2}{2} + \mathcal{O}\left(\left(s_0+\frac{1}{s_0}-2\right)^2\right).
\end{align*}
Since $\lim_{s_0\to 1} s_0+\frac{1}{s_0}-2 = 0$, then for large enough $T$, we have:
\begin{align*}
    \log\left[1 + \frac{s_0+\frac{1}{s_0}-2}{2}\right] = \frac{s_0+\frac{1}{s_0}-2}{4}
\end{align*}
and thus:
\small
\begin{align*}
    \log \frac{\alpha_{t_0}}{\alpha_t} &> -C_3|s_0^2-1|\sqrt{(T-t_0+1)a_T}\log T + (T-t_0+1)\left( \frac{s_0+\frac{1}{s_0}-2}{4}\right) + \mathcal{O}(a_T\log^2 T)
\end{align*}
\normalsize
Next, since:
\begin{align*}
    \lim_{s_0\to1} \frac{s_0+\frac{1}{s_0}-2}{(s_0^{2} - 1)^2} = \frac{1}{4}
\end{align*}
then for large enough $T$ we have:
\begin{align*}
    s_0+\frac{1}{s_0}-2 > \frac{(s_0^{2} - 1)^2}{8}
\end{align*}
and thus: 
\begin{align*}
    \log \frac{\alpha_{t_0}}{\alpha_t} &> -C_3\sqrt{(s_0^2-1)^2(T-t_0+1)a_T}\log T + \frac{(T-t_0+1)(s_0^2-1)^2}{8}+ \mathcal{O}(a_T\log^2 T)
\end{align*}
By assumption, $(T-t_0+1)(s_0^2-1)^2 \gg a_T\log^2 T$, so again the second term dominates the others and (\ref{eq:thm4-result}) holds on $\mathcal{E}$ for large $T$ in this case.

\end{proof}

\subsection{Proof of Corollary \ref{cor:cred-sets}}
\label{app:cor-cred-sets}

\begin{proof}
In the proofs of Theorems \ref{theorem:smcp}, \ref{theorem:sscp}, and \ref{theorem:smscp}, we show that given the true change $t_0\in[T]$ and the posterior change-point probabilities $\overline{\boldsymbol{\pi}}_{1:T}$, we can pick $C_\beta > 0$ large enough when defining the localization rate $\epsilon_T$ so that:
\begin{align*}
    \lim_{T\to\infty} \Pr\left(\bigcap_{t \in [T] \;:\; |t_0-t| > \epsilon_T}\log\frac{\overline{\pi}_{t_0}}{\overline{\pi}_t} > 2 \log T\right) = 1.
\end{align*}
Thus, with high probability we will have:
\begin{align*}
    \sum_{t \in [T] \;:\; |t_0-t| > \epsilon_T} \overline{\pi}_t < \sum_{t \in [T] \;:\; |t_0-t| > \epsilon_T} \frac{\overline{\pi}_{t_0}}{T^2} \leq \frac{1}{T}.
\end{align*}
So all the mass of $\overline{\boldsymbol{\pi}}_{1:T}$ is concentrating in $\{t \in [T] : |t_0-t| \leq \epsilon_T\}$, meaning that for any $\alpha \in (0,1)$, (\ref{eq:cs}) will only pick coordinates within $\epsilon_T$ of $t_0$ with probability converging to one as as $T\to\infty$. Since $|\{t \in [T] : |t_0-t| \leq \epsilon_T\}| \leq 2\epsilon_T$ this proves the corollary. 

Note that in the argument above we ignored the end-point restrictions we built into the set $\mathcal{T}_\beta$ in the proofs of Theorems \ref{theorem:smcp}, \ref{theorem:sscp}, and \ref{theorem:smscp}, but if we change the $2 \epsilon_t$ bound to $3 \epsilon_T$, then result still holds since there are at most $\mathcal{O}(\log(\log T))$ end points of $[T]$ that could be added to $\mathcal{CS}(\alpha,\overline{\boldsymbol{\pi}}_{1:T})$. For the detection rule, this omission has no effect since $\log^{2+\delta} T$ still dominates $3\epsilon_T$. Proving that the result holds for Theorem \ref{theorem:alpha-mixing} follows an identical argument with the $\log T$ term above replaced by $\log^2 T$.
\end{proof}

\newpage
\section{Details of Variational Algorithms}
\subsection{Proof of Proposition \ref{prop:coord-ascent}}
\label{app:prop1-proof}

Define the $N$ variable and $N$ parameters blocks:
\begin{align*}
    \boldsymbol{\theta}_j &:=\{b_j,s_j, \tau_j\},&\overline{\boldsymbol{\theta}}_j &:= \{\overline{b}_{jt}, \overline{\omega}_{jt}, \overline{u}_{jt}, \overline{v}_{jt}, \overline{\pi}_{jt}\}_{t=1}^T, &  1\leq j\leq J,\\
    \boldsymbol{\theta}_\ell &:=\{b_\ell, \tau_\ell\}, & \overline{\boldsymbol{\theta}}_\ell &:= \{\overline{b}_{\ell t}, \overline{\omega}_{\ell t}, \overline{\pi}_{\ell t}\}_{t=1}^T, & J<\ell \leq J+L, \\
    \boldsymbol{\theta}_k &:=\{s_k, \tau_k\}, &\overline{\boldsymbol{\theta}}_k &:= \{ \overline{u}_{kt}, \overline{v}_{kt}, \overline{\pi}_{kt}\}_{t=1}^T, & J+L <k \leq N, 
\end{align*}
and let $\boldsymbol{\Theta} := \{{\boldsymbol{\theta}}_{1:L}, {\boldsymbol{\theta}}_{1:K},{\boldsymbol{\theta}}_{1:J}\}$. We see that Algorithm \ref{alg:mich} begins by initializing some $\overline{\boldsymbol{\theta}}_{1:N}$ that parametrizes a $q\in\mathcal{Q}_{\text{MF}}$, then Proposition \ref{prop:vb} below shows that the update step for $\overline{\boldsymbol{\theta}}_i$ in Algorithm \ref{alg:mich} is equivalent to solving $\max_{q_i} \text{\normalfont ELBO}(q)$ while holding all the other component distributions of $q$ fixed. Therefore, Algorithm \ref{alg:mich} is a coordinate ascent procedure for maximizing $\text{\normalfont ELBO}(q)$. Proposition \ref{prop:vb} also shows that the $\overline{\boldsymbol{\theta}}_i$ that parameterizes the solution to $\text{arg}\max_{q_i} \text{\normalfont ELBO}\left(q\right)$ is the unique maximizer of some continuously differentiable objective function $F$ when $\overline{\boldsymbol{\theta}}_{1:N}\setminus\overline{\boldsymbol{\theta}}_i$ is held fixed, so by Proposition 2.7.1 of \cite{Bertsekas97}, the sequence of parameters $\{\overline{\boldsymbol{\theta}}_{1:N}^{(n)}\}_{n\geq 1}$ returned by Algorithm \ref{alg:mich} is converging to some limit point. Because the set of distributions $\{q^{(n)}_i\}_{i=1}^N$ in (\ref{eq:mich-bs-post})-(\ref{eq:mich-s-post}) returned by the $n$\textsuperscript{th} iteration of Algorithm \ref{alg:mich} is fully characterized by $\overline{\boldsymbol{\theta}}^{(n)}_{1:N}$, the sequence of distributions $\{q^{(n)}\}_{n \geq 1}$, where $q^{(n)} := \prod_{i=1}^N q_i^{(n)}$, is converging to a stationary point of $\text{\normalfont ELBO}\left(q\right)$. Therefore, Proposition \ref{prop:coord-ascent} directly follows from Proposition \ref{prop:vb}.

\begin{proposition} 
\label{prop:vb}
Given the MICH model (\ref{eq:y-mich})-(\ref{eq:s-mich}), assume that $\omega_0,u_0,v_0,\pi_t >0$. Let $\{q_i\}_{i=1}^N$ be an arbitrary collection of distributions such that $q := \prod_{i=1}^N q_i$ and $\E_{q_j}[\lambda_{jt}], \E_{q_j}[\lambda_{jt}\mu_{j t}], \E_{q_j}[\lambda_{jt}\mu^2_{j t}]<\infty$ for any $1\leq j \leq J$, $\E_{q_\ell}[\mu_{\ell t}], \E_{q_\ell}[\mu^2_{\ell t}]<\infty$ for any $J<\ell \leq J+L$, and $\E_{q_k}[\lambda_{k t}]<\infty$ for any $J+L<k \leq N$. Define $\text{\normalfont ELBO}\left(q\right)$ as in (\ref{eq:elbo}), and the mean residual $\tilde{r}_t$, expected precision $\overline{\lambda}_t$ and variance correction term $\delta_t$ as in (\ref{eq:mod-resid})-(\ref{eq:delta}). Then: 
\vspace{-10pt}

\begin{enumerate}[label=\normalfont(\roman*)]
    \itemsep0em 

    \item For $J<\ell \leq J+L$, define the partial mean residual $\tilde{r}_{-\ell t} :=  \tilde{r}_{t} + \E_{q_{\ell}}[\mu_{\ell t}]$. Then $\text{\normalfont{arg}}\max_{q_\ell} \text{\normalfont ELBO}\left(q\right)$ is equivalent to the mean-scp posterior in (\ref{eq:gamma-post-cat1}) and (\ref{eq:b-smcp}) with parameters:
    \begin{align*}
        \overline{\boldsymbol{\theta}}_\ell = \normalfont{\texttt{mean-scp}}(\tilde{\mathbf{r}}_{-\ell,1:T} \:;\: \overline{\boldsymbol{\lambda}}_{1:T}, \omega_0, \boldsymbol{\pi}_{1:T})
    \end{align*}
    
    \item For $J+L <k \leq N$, define the partial scale residual $\overline{\lambda}_{-kt} := \E_{q_k}[\lambda_{k t}]^{-1}\overline{\lambda}_t$ and the variance corrected priors $\tilde{v}_{kt}$ and $\tilde{\pi}_{kt}$ as in (\ref{eq:v-k-corrected}) and (\ref{eq:pi-k-corrected}). Then $\text{\normalfont{arg}}\max_{q_k} \text{\normalfont ELBO}\left(q\right)$ is equivalent to the var-scp posterior in (\ref{eq:gamma-post-cat1}) and (\ref{eq:s-sscp}) with parameters:
    \begin{align}
        \overline{\boldsymbol{\theta}}_k = \normalfont{\texttt{var-scp}}(\tilde{\mathbf{r}}_{1:T} \:;\:\overline{\boldsymbol{\lambda}}_{-k,1:T}, u_0, \tilde{\mathbf{v}}_{k,1:T}, \tilde{\boldsymbol{\pi}}_{k,1:T})
    \end{align}

    \item For $1 \leq j \leq J$, define the partial mean residual $\tilde{r}_{-jt} := \tilde{r}_{t} +\E_{q_j}[\lambda_{jt}]^{-1} \E_{q_j}[\lambda_{jt} \mu_{j t}]$, the partial scale residual $\overline{\lambda}_{-jt} := \E_{q_j}[\lambda_{j t}]^{-1}\overline{\lambda}_t$, the partial correction term $\delta_{-jt}$ as per (\ref{eq:delta-j}), and the variance corrected priors $\tilde{v}_{jt}$ and $\tilde{\pi}_{jt}$ as in (\ref{eq:v-j-corrected}) and (\ref{eq:pi-j-corrected}). Then $\text{\normalfont{arg}}\max_{q_j} \text{\normalfont ELBO}\left(q\right)$ is equivalent to the meanvar-scp posterior in (\ref{eq:gamma-post-cat1}) and (\ref{eq:bs-smscp}) with parameters:
    \begin{align*}
        \overline{\boldsymbol{\theta}}_j := \normalfont{\texttt{meanvar-scp}}(\tilde{\mathbf{r}}_{-j,1:T} \:;\:\overline{\boldsymbol{\lambda}}_{-j,1:T}, \omega_0, u_0, \tilde{\mathbf{v}}_{j,1:T}, \tilde{\boldsymbol{\pi}}_{j,1:T})
    \end{align*}
\end{enumerate}
\vspace{-10pt}
Furthermore, when $q_\ell$, $q_k$, and $q_j$, have the parametric forms specified in (i)-(iii), then  $\text{\normalfont ELBO}\left(q\right)$ is equivalent to a continuously differentiable function $F(\overline{\boldsymbol{\theta}}_{1:N})$, and: i) $\overline{\boldsymbol{\theta}}_\ell$ is the unique maximizer of $F$ holding $\overline{\boldsymbol{\theta}}_{1:N}\setminus\overline{\boldsymbol{\theta}}_\ell$ fixed, ii) $\overline{\boldsymbol{\theta}}_k$ is the unique maximizer of $F$ holding $\overline{\boldsymbol{\theta}}_{1:N}\setminus\overline{\boldsymbol{\theta}}_k$ fixed, and iii) $\overline{\boldsymbol{\theta}}_j$ is the unique maximizer of $F$ holding $\overline{\boldsymbol{\theta}}_{1:N}\setminus\overline{\boldsymbol{\theta}}_j$ fixed, i.
\end{proposition}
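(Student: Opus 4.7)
The plan is to apply the standard coordinate-ascent variational inference (CAVI) identity: under the mean-field factorization (\ref{eq:mean-field}), for each block $\boldsymbol{\theta}_i$ and fixed $q_{-i} := \prod_{i' \neq i} q_{i'}$, the maximizer $\arg\max_{q_i}\text{ELBO}(q)$ is given by $q_i^{\star}(\boldsymbol{\theta}_i) \propto \exp\bigl(\E_{q_{-i}}[\log p(\mathbf{y}_{1:T}, \boldsymbol{\Theta})]\bigr)$ whenever this normalizes. The task then reduces to computing $\E_{q_{-i}}[\log p(\mathbf{y}_{1:T}, \boldsymbol{\Theta})]$ as a function of $\boldsymbol{\theta}_i$ and identifying it with the log of the appropriate SCP posterior. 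Throughout I would decompose $\log p(\mathbf{y}_{1:T},\boldsymbol{\Theta})$ into a likelihood term $\sum_t[-\tfrac{1}{2}\lambda_t(y_t-\mu_t)^2 + \tfrac{1}{2}\log \lambda_t]$ plus independent prior contributions, then substitute $\mu_t = \sum_i \mu_{it}$ and $\lambda_t = \prod_i \lambda_{it}$ to sort terms according to which $q_{i'}$ they depend on.

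For part (i), only the likelihood quadratic depends on $\boldsymbol{\theta}_\ell$. Expanding $-\tfrac{1}{2}\lambda_t(y_t-\mu_t)^2$ and taking expectation under $q_{-\ell}$ exploits the mean-field factorization: cross terms with other mean-only blocks $\ell'$ yield $\overline{\lambda}_t\,\E_{q_{\ell'}}[\mu_{\ell' t}]$, while cross terms with joint blocks $j$ require the $\mu_j$--$\lambda_j$ coupling under $q_j$ and yield $\overline{\lambda}_t\cdot\E_{q_j}[\lambda_{jt}\mu_{jt}]/\E_{q_j}[\lambda_{jt}]$, which is exactly the reweighting built into the definition of $\tilde r_{-\ell t}$ in Algorithm \ref{alg:mich}. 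Combined with the $\mu_{\ell t}^2$ term (which gives $\overline{\lambda}_t \mu_{\ell t}^2$), one obtains $-\tfrac{1}{2}\overline{\lambda}_t(\mu_{\ell t}-\tilde r_{-\ell t})^2$ plus constants. This matches the mean-scp likelihood applied to pseudo-observations $\tilde{\mathbf{r}}_{-\ell,1:T}$ with known precisions $\overline{\boldsymbol{\lambda}}_{1:T}$; since the prior on $(b_\ell,\tau_\ell)$ is unchanged, comparison with (\ref{eq:mean-scp-fn}) yields part (i).

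The main obstacle lies in parts (ii) and (iii), where the absence of conjugacy between $s_k$ (or $s_j$) and the other $\lambda_{-kt}$ factors forces the introduction of the variance corrections $\delta_t, \tilde v_{kt}, \tilde\pi_{kt}$ (and their analogues $\delta_{-jt}, \tilde v_{jt}, \tilde\pi_{jt}$). Isolating the $(s_k,\tau_k)$-dependent terms in $\E_{q_{-k}}[\log p]$ and matching to the var-scp log-joint derived in Section \ref{sec:sscp}, one finds that the CAVI expected log-joint would equal the var-scp log-joint applied to $\tilde{\mathbf{r}}_{1:T}$ with scale $\overline{\boldsymbol{\lambda}}_{-k,1:T}$ \emph{if} one could replace $\E_{q_{-k}}[\lambda_{-kt}(y_t-\mu_t)^2]$ by $\overline{\lambda}_{-kt}\tilde r_t^2$. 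The per-$t$ excess $\overline{\lambda}_{-kt}\delta_t := \E_{q_{-k}}[\lambda_{-kt}(y_t-\mu_t)^2] - \overline{\lambda}_{-kt}\tilde r_t^2$, as defined in (\ref{eq:delta}), is $q_{-k}$-measurable and enters the joint log-density linearly in $s_k$ at each $t\geq\tau_k$ and additively in $\tau_k$ via the $t<\tau_k$ sum. Hence it can be absorbed into a per-$t$ gamma-rate correction $\tilde v_{kt}$ (\ref{eq:v-k-corrected}) and a per-$t$ location-prior reweighting $\tilde\pi_{kt}$ (\ref{eq:pi-k-corrected}), after which the CAVI optimum is identical to the var-scp posterior (\ref{eq:var-scp-fn}), proving part (ii). Part (iii) is argued identically, now using $\tilde r_{-jt}$ (with the joint-component reweighting from part (i)) and $\delta_{-jt}$ in place of $\tilde r_t$ and $\delta_t$, producing the meanvar-scp posterior (\ref{eq:meanvar-scp-fn}).

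For the final claim, restricting each $q_i$ to the parametric form (\ref{eq:mich-bs-post})--(\ref{eq:mich-s-post}) reduces the ELBO to an explicit function $F(\overline{\boldsymbol{\theta}}_{1:N})$ built from moments and entropies of Gaussian, Gamma, and Normal-Gamma laws, together with log-gamma and digamma terms, each of which is smooth in the interior of its parameter space; positivity of $\omega_0, u_0, v_0, \pi_t$ keeps us in that interior. Uniqueness of each coordinate-wise maximizer follows directly from the CAVI identity: viewed as a functional of $q_i$ with $q_{-i}$ fixed, $\text{ELBO}(q) = -\mathrm{KL}(q_i\,\|\,\tilde p_i) + (\text{terms independent of } q_i)$ with $\tilde p_i \propto \exp(\E_{q_{-i}}[\log p])$. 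This KL divergence is strictly convex in $q_i$ with unique minimizer $q_i = \tilde p_i$; and parts (i)--(iii) identify $\tilde p_i$ as a member of the parametric family with parameters $\overline{\boldsymbol{\theta}}_i$, so $\overline{\boldsymbol{\theta}}_i$ is the unique coordinate-wise maximizer of $F$.
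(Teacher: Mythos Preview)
Your proposal is correct and, for parts (i)--(iii), follows essentially the same route as the paper: apply the standard CAVI identity $q_i^\star \propto \exp\bigl(\E_{q_{-i}}[\log p(\mathbf{y}_{1:T},\boldsymbol{\Theta})]\bigr)$, expand the Gaussian log-likelihood, and identify the result with the appropriate SCP posterior. Your explanation of how the variance-correction terms $\delta_t$ and $\delta_{-jt}$ arise---as the gap between $\E_{q_{-k}}[\lambda_{-kt}(y_t-\mu_t)^2]$ and $\overline{\lambda}_{-kt}\tilde r_t^2$, which then splits into a rate shift for $t\geq\tau_k$ and a prior reweighting for $t<\tau_k$---matches the paper's computation exactly.

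The one substantive difference is in the uniqueness argument. The paper establishes uniqueness of $\overline{\boldsymbol{\theta}}_i$ by writing out $F(\overline{\boldsymbol{\theta}}_{1:N})$ explicitly (expected log-likelihood minus block-wise KL terms) and then differentiating $F$ coordinate-by-coordinate with respect to each of $\overline{b}_{\ell t},\overline{\omega}_{\ell t},\overline{u}_{kt},\overline{v}_{kt},\overline{\pi}_{it}$, checking sign changes (and, for $\overline{\boldsymbol{\pi}}_{i,1:T}$, solving a Lagrangian on the simplex). This is tedious but makes the smoothness of $F$ completely explicit and shows directly that each first-order condition has a unique solution. Your argument---that $\text{ELBO}(q) = -\mathrm{KL}(q_i\,\|\,\tilde p_i) + \text{const}$ has $\tilde p_i$ as its unique maximizer over \emph{all} distributions, and since $\tilde p_i$ lies in the parametric family, the corresponding $\overline{\boldsymbol{\theta}}_i$ is the unique maximizer of $F$---is more elegant and avoids the calculus entirely, but it implicitly uses identifiability of the parametric families (distinct parameters yield distinct distributions), which you should state. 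Both approaches deliver what is needed for the Bertsekas convergence criterion; yours is shorter, the paper's is more self-contained and simultaneously verifies the $C^1$ claim.
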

\vspace{-10pt}

\begin{proof}

Consider the following generic model:
\begin{align*}
    \mathbf{y}_{1:T} \:|\: \boldsymbol{\Theta} &\;\sim f(\mathbf{y}_{1:T}\:|\: \boldsymbol{\Theta};\eta), \\ 
    \boldsymbol{\Theta} &\;= \{\boldsymbol{\theta}_1, \ldots, \boldsymbol{\theta}_M\}, \\
    \boldsymbol{\theta}_i &\overset{\text{ind.}}{\sim} g_i(\boldsymbol{\theta}_i),\; \sforall i \in \{1, \ldots, M\},
\end{align*}
where $\eta$ is nuisance parameter and $g_i$ is the prior over the block $\boldsymbol{\theta}_i$. Define the mean-field family of distributions over $\boldsymbol{\Theta}$:
\begin{align*} 
    \mathcal{Q}_{\text{MF}} &:= \left\{q \::\: q(\boldsymbol{\Theta}) = \prod_{i=1}^M q_i(\boldsymbol{\theta}_i)\right\}.
\end{align*}
The task of variational Bayesian inference is to find the $q^* \in \mathcal{Q}_{\text{MF}}$ that maximizes $\text{ELBO}(q;\eta)$ as defined in (\ref{eq:elbo}). If $q \in \mathcal{Q}_{\text{MF}}$ and we define:
\begin{align*}
    q_{-i}(\boldsymbol{\theta}_{-i}) := \prod_{1\leq i' \leq M,\:i'\neq i} q_{i'}(\boldsymbol{\theta}_{i'}),
\end{align*}
then we can write:
\begin{align*}
    \text{ELBO}(q;\eta) &= \int q(\boldsymbol{\Theta}) \log \frac{f(\mathbf{y}_{1:T}\:|\: \boldsymbol{\Theta};\eta) g(\boldsymbol{\Theta})}{q(\boldsymbol{\Theta})} \; d\boldsymbol{\Theta} \\
    &= \int \prod_{i'=1}^M q_{i'}(\boldsymbol{\theta}_{i'}) \log \frac{ f(\mathbf{y}_{1:T}\:|\: \boldsymbol{\Theta};\eta) \prod_{i'=1}^M g_{i'}(\boldsymbol{\theta}_{i'})}{\prod_{i'=1}^M q_{i'}(\boldsymbol{\theta}_{i'})} \; d\boldsymbol{\Theta} \\
    &= \int q_{i}(\boldsymbol{\theta}_i) \log \frac{\exp\left\{\E_{q_{-i}}[\log f(\mathbf{y}_{1:T}\:|\: \boldsymbol{\Theta};\eta)]\right\}g_{i}(\boldsymbol{\theta}_i)}{q_{i}(\boldsymbol{\theta}_i)} \; d\boldsymbol{\theta}_i - \sum_{\substack{1 \leq i' \leq M \\ i' \neq i}} \text{KL}(q_{i'}(\boldsymbol{\theta}_{i'})\:\lVert\: g_{i'}(\boldsymbol{\theta}_{i'})).
\end{align*}
Note that only the first term in the last line above depends on $q_i$, so if we treat $q_{-i}$ as fixed, then: 
\begin{align}
    \max_{q_i}  \; \text{ELBO}( q \:;\: \eta) \label{eq:max-elbo-m}
\end{align}
is equivalent to solving:
\begin{align*}
    \max_{q_i}  \; \int q_{i}(\boldsymbol{\theta}_i) \log \frac{\exp\left\{\E_{q_{-i}}[\log f(\mathbf{y}_{1:T}\:|\: \boldsymbol{\Theta};\eta)]\right\}g_{i}(\boldsymbol{\theta}_i)}{q_{i}(\boldsymbol{\theta}_i)} \; d\boldsymbol{\theta}_i
\end{align*}
If we define the normalizing constant:
\begin{align*}
    C_i := \int \exp\left\{\E_{q_{-i}}[\log f(\mathbf{y}_{1:T}\:|\: \boldsymbol{\Theta};\eta)]\right\} \; d\boldsymbol{\theta}_i
\end{align*}
then we can define a new distribution over $\boldsymbol{\theta}_i$:
\begin{align*}
    \tilde{q}_i(\boldsymbol{\theta}_i) := C^{-1}_i\exp\left\{\E_{q_{-i}}[\log f(\mathbf{y}_{1:T}\:|\: \boldsymbol{\Theta};\eta)]\right\}g_{i}(\boldsymbol{\theta}_i)
\end{align*}
Because $C_i$ does not depend on our choice of $q_i$, we have 
\begin{align*}
    \argmax{q_i} \; \text{ELBO}( q \:;\: \eta) &= \argmax{q_i} \; \int q_i(\boldsymbol{\theta}_i) \log \frac{\exp\left\{\E_{q_{-i}}[\log f(\mathbf{y}_{1:T}\:|\: \boldsymbol{\Theta};\eta)]\right\}g_i(\boldsymbol{\theta}_i)}{q_i(\boldsymbol{\theta}_i)} \; d\boldsymbol{\theta}_i\\
    &= \argmax{q_i} \; \int q_i(\boldsymbol{\theta}_i) \log \frac{C^{-1}_i\exp\left\{\E_{q_{-i}}[\log f(\mathbf{y}_{1:T}\:|\: \boldsymbol{\Theta};\eta)]\right\}g_i(\boldsymbol{\theta}_i)}{q_i(\boldsymbol{\theta}_i)} \; d\boldsymbol{\theta}_i \\
    &=  \argmax{q_i} \; \text{KL}(q_i(\boldsymbol{\theta}_i) \:\lVert\: \tilde{q}_i(\boldsymbol{\theta}_i)).
\end{align*}
If we have a closed-form expression for $\tilde{q}_i$, then the last line above shows that the solution to (\ref{eq:max-elbo-m}) is to set $q_i \equiv \tilde{q}_i$, i.e. we will have:
\begin{align}
    q_i(\boldsymbol{\theta}_i) \propto \exp\left\{\E_{q_{-i}}[\log f(\mathbf{y}_{1:T}\:|\: 
    \boldsymbol{\Theta};\eta)]\right\}g_i(\boldsymbol{\theta}_i). \label{eq:q-soln}
\end{align}
Therefore, to prove Proposition \ref{prop:vb}, we must show that (\ref{eq:q-soln}) implies that each $q_\ell$, $q_k$, and $q_j$ have the functional forms specified in (i)-(iii) of Proposition \ref{prop:vb}. To show this, we begin by defining the key terms of the backfitting procedure:
\begin{align}
    \tilde{r}_t &:= y_t - \sum_{\ell = 1}^L \E_{q_\ell}[\mu_{\ell t}] - \sum_{j = 1}^J \frac{\E_{q_j}[\lambda_{jt} \mu_{jt}]}{\E_{q_j}[\lambda_{jt}]}, \label{eq:mod-resid} \\
    \overline{\lambda}_t &:= \prod_{k=1}^K \E_{q_k}[\lambda_{kt}] \prod_{j=1}^J \E_{q_j}[\lambda_{jt}], \label{eq:lambda-bar} \\
    \delta_t &:= \sum_{\ell=1}^L \Var_{q_\ell}\left(\mu_{\ell t} \right) +  \sum_{j=1}^J\left(\frac{\E_{q_j}[\lambda_{jt} \mu^2_{jt}]}{\E_{q_j}[\lambda_{jt}]} -\frac{\E_{q_j}[\lambda_{jt} \mu_{jt}]^2}{\E_{q_j}[\lambda_{jt}]^2} \right). \label{eq:delta}
\end{align}
Note the slight abuse of notation above, as we are now using the symbols $\ell$, $k$, and $j$ both to index the model components as well as to indicate which type of change-point the component represents (mean, variance, or mean and variance). We also introduce notation to help simplify the derivation of the distributions $q_\ell$, $q_j$, and $q_j$. Let $i \in \{\ell, k, j\}$ be a generic block index, then we define:
\small
\begin{align*}
    \overline{\pi}_{i t} &:= \E_{q_i}\left[\mathbbm{1}{\left\{\tau_i = t\right\}}\right],\\
    \overline{b}_{i t} &:= \E_{q_i}[b_i \:|\: \tau_i = t], \\
    \overline{\mu}_{it} &:= \E_{q_i}[\mu_{i t}] = \sum_{t'=1}^{T} \mathbbm{1}{\{t \geq t'\}}\E_{q_i}[b_i \:|\:\tau_i = t']q_i(\tau_i =t') = \sum_{t'=1}^t \overline{b}_{i t'} \overline{\pi}_{i t'}, \\
    \overline{\mu}_t &:= \E_q[\mu_t] = \sum_{\ell=1}^L \overline{\mu}_{\ell t} + \sum_{j=1}^J \overline{\mu}_{jt}, \quad \overline{\mu}_{-jt} := \overline{\mu}_t - \overline{\mu}_{jt}, \quad \overline{\mu}_{-\ell t} := \overline{\mu}_t - \overline{\mu}_{\ell t}, \\
    \overline{\sigma}^2_{it} &:= \E_{q_i}\left[\left(b_i - \overline{b}_{i t}\right)^2\:|\: \tau_i = t\right], \\
    \overline{\mu^2_{it}} &:= \E_{q_i}[\mu^2_{it}] = \sum_{t'=1}^{T} \mathbbm{1}{\{t \geq t'\}}\E_{q_i}[b^2_i \:|\:\tau_i = t']q_i(\tau_i =t') = \sum_{t'=1}^t (\overline{b}^2_{i t'} + \overline{\sigma}^2_{i t'}) \overline{\pi}_{it'}, \\
    \overline{s}_{i t} &:= \E_{q_i}[s_i \:|\: \tau_i = t], \\
    \overline{\lambda}_{it} &:= \E_{q_i}[\lambda_{it}] = \sum_{t'=1}^T \E_{q_i}[s_i \:|\: \tau_i = t']^{\mathbbm{1}\{t\geq t'\}} q_i(\tau_i = t') = \sum_{t'=1}^t \overline{s}_{i t'} \overline{\pi}_{i t'} + \sum_{t'=t+1}^T \overline{\pi}_{i t'}.
\end{align*}
\normalsize
Let $p\left(\mathbf{y}_{1:T} \:|\:\boldsymbol{ \Theta}\right)$ denote the marginal likelihood (\ref{eq:dgp}) under the MICH model. Let $p_\ell$, $p_k$, and $p_j$ be the respective priors for the parameter blocks $\boldsymbol{\theta}_\ell$, $\boldsymbol{\theta}_k$, and $\boldsymbol{\theta}_j$. Note that we can write:
\begin{align*}
    \text{ELBO}(q\:) &= \int q(\boldsymbol{\Theta}) \log \frac{ p(\mathbf{y}_{1:T},\boldsymbol{\Theta})}{q(\boldsymbol{\Theta})} \; d\boldsymbol{\Theta} \\
    &= \E_q\left[\log p(\mathbf{y}_{1:T},\boldsymbol{\Theta})\right] - \sum_{\ell=1}^L \text{KL}(q_\ell\:\lVert\: p_\ell) - \sum_{k=1}^K \text{KL}(q_k\:\lVert\: p_k) -\sum_{j=1}^J \text{KL}(q_j\:\lVert\: p_j).
\end{align*}
Note that if $X\sim\text{Gamma}(\alpha,\beta)$, then $\E[\log X] = \psi(\alpha) - \log \beta$, where $\psi$ is the digamma function. So if $q_\ell$, $q_k$, and $q_j$ do in fact take the forms claimed in (i)-(iii) of Proposition \ref{prop:vb}, then we have::
\small
\begin{align*}
    \E_q\left[\log p(\mathbf{y}_{1:T},\boldsymbol{\Theta})\right]&= - \frac{1}{2} \sum_{t=1}^{T} \E_q[\lambda_t(y_t -\mu_t)^2] +  \frac{1}{2}\sum_{t=1}^T \left[\sum_{j=1}^J \E_{q_j}[\log \lambda_{jt}]+ \sum_{k=1}^K \E_{q_k}[\log \lambda_{kt}]\right] +C \\
    &= - \frac{1}{2} \sum_{t=1}^{T} \overline{\lambda}_t(\Tilde{r}_t^2 + \delta_t) + \frac{1}{2}\sum_{t=1}^T \sum_{j=1}^J (T-t+1)\overline{\pi}_{jt}[\psi(\overline{u}_{jt}) - \log\overline{v}_{jt}]  \\
    &\quad + \frac{1}{2}\sum_{t=1}^T \sum_{k=1}^K(T-t+1)\overline{\pi}_{kt}[\psi(\overline{u}_{kt}) - \log\overline{v}_{kt}] + C,
\end{align*}
\normalsize
where $C$ is a constant that does not depend on $\overline{\boldsymbol{\theta}}_{1:N}$. Noting that $\tilde{r}_t$ and $\delta_t$ can be expressed as continuously differentiable functions of known constants $\overline{\boldsymbol{\theta}}_{1:N}$, then we see that $\E_q\left[\log p(\mathbf{y}_{1:T},\boldsymbol{\Theta})\right]$ is a continuously differentiable function of $\overline{\boldsymbol{\theta}}_{1:N}$. Next, when $q_\ell$ takes the form in (i):
\begin{align*}
     \text{KL}(q_\ell\:\lVert\: p_\ell) &= \E_{q_\ell}\left[\sum_{t=1}^T \mathbbm{1}\{\tau_\ell = t\}\left(\log \frac{q(b_\ell|\tau_\ell =t)}{p(b_\ell|\tau_\ell =t)}+  \log \frac{\overline{\pi}_{\ell t}}{\pi_{\ell t}}\right) \right] \\
     &= \E_{q_\ell}\left[\sum_{t=1}^T \mathbbm{1}\{\tau_\ell = t\}\left(\frac{1}{2}\log\frac{\overline{\omega}_{\ell t}}{\omega_\ell}- \frac{\overline{\omega}_{\ell t}(b_\ell - \overline{b}_{\ell t})^2 - \omega_\ell b_\ell^2}{2} +  \log \frac{\overline{\pi}_{\ell t}}{\pi_{\ell t}}\right) \right] \\
     &= \sum_{t=1}^T \overline{\pi}_{\ell t}\left[\frac{1}{2}\log\frac{\overline{\omega}_{\ell t}}{\omega_\ell}- \frac{\overline{\omega}_{\ell t}\E_{q_\ell}\left[(b_\ell - \overline{b}_{\ell t})^2|\tau_\ell=t\right] - \omega_\ell \E_{q_\ell}\left[b_\ell^2|\tau_\ell=t\right]}{2} +  \log \frac{\overline{\pi}_{\ell t}}{\pi_{\ell t}}\right] \\
     &=\sum_{t=1}^T \overline{\pi}_{\ell t}\left[\frac{1}{2}\log\frac{\overline{\omega}_{\ell t}}{\omega_\ell}- \frac{1 - \omega_\ell (\overline{\omega}_{\ell t}^{-1} + \overline{b}^2_{\ell t})}{2} +  \log \frac{\overline{\pi}_{\ell t}}{\pi_{\ell t}}\right]. 
\end{align*}
\normalsize
So $\text{KL}(q_\ell\:\lVert\: p_\ell)$ is a continuously differentiable function of $\overline{\boldsymbol{\theta}}_\ell$. Next, when $q_k$ takes the form in (ii):
\scriptsize
\begin{align*}
    \text{KL}(q_k\:\lVert\: p_k) &= \E_{q_k}\left[\sum_{t=1}^T \mathbbm{1}\{\tau_k = t\}\left(\log \frac{q(s_k|\tau_k =t)}{p(s_k|\tau_k =t)}+  \log \frac{\overline{\pi}_{k t}}{\pi_{k t}}\right) \right]  \\
    &= \E_{q_k}\left[\sum_{t=1}^T \mathbbm{1}\{\tau_k = t\}\left(\overline{u}_{kt}\log\overline{v}_{kt} -u_k \log v_k -\log \frac{\Gamma(\overline{u}_{kt})}{\Gamma(u_k)} + (\overline{u}_{kt} -u_k)\log s_k -(\overline{v}_{kt} - v_k)s_k +  \log \frac{\overline{\pi}_{k t}}{\pi_{k t}}\right) \right] \\
    &= \sum_{t=1}^T \overline{\pi}_{kt}\left(\overline{u}_{kt}\log\overline{v}_{kt} -u_k \log v_k -\log \frac{\Gamma(\overline{u}_{kt})}{\Gamma(u_k)} + (\overline{u}_{kt} -u_k)\E_{q_k}[\log s_k|\tau_k=t] -(\overline{v}_{kt} - v_k)\E_{q_k}[s_k|\tau_k=t] +  \log \frac{\overline{\pi}_{k t}}{\pi_{k t}}\right) \\
    &= \sum_{t=1}^T \overline{\pi}_{kt}\left[\overline{u}_{kt}\log\overline{v}_{kt} -u_k \log v_k -\log \frac{\Gamma(\overline{u}_{kt})}{\Gamma(u_k)} + (\overline{u}_{kt} -u_k)[\psi(\overline{u}_{kt}) - \log \overline{v}_{kt}] -\frac{(\overline{v}_{kt} - v_k)\overline{u}_{kt}}{\overline{v}_{kt}} +  \log \frac{\overline{\pi}_{k t}}{\pi_{k t}}\right] \\
    &= \sum_{t=1}^T \overline{\pi}_{kt}\left[u_k \log \frac{\overline{v}_{kt}}{v_k} -\log \frac{\Gamma(\overline{u}_{kt})}{\Gamma(u_k)} + (\overline{u}_{kt} -u_k)\psi(\overline{u}_{kt}) -\frac{(\overline{v}_{kt} - v_k)\overline{u}_{kt}}{\overline{v}_{kt}} +  \log \frac{\overline{\pi}_{k t}}{\pi_{k t}}\right].
\end{align*}
\normalsize
So $\text{KL}(q_k\:\lVert\: p_k)$ is a continuously differentiable function of $\overline{\boldsymbol{\theta}}_k$. Next, when $q_j$ takes the form in (iii):
\scriptsize
\begin{align*}
     \text{KL}(q_j\:\lVert\: p_j) &= \E_{q_j}\left[\sum_{t=1}^T \mathbbm{1}\{\tau_j = t\}\left(\log \frac{q(b_j|s_j,\tau_j =t)}{p(b_j|s_,j\tau_j =t)} +\log \frac{q(s_j|\tau_j =t)}{p(s_j|,\tau_j =t)} + \log \frac{\overline{\pi}_{j t}}{\pi_{j t}}\right) \right] \\
     &= \E_{q_j}\left[\sum_{t=1}^T \mathbbm{1}\{\tau_j = t\}\left(\frac{1}{2}\log\frac{\overline{\omega}_{j t}}{\omega_j}- \frac{\overline{\omega}_{j t}s_j(b_j - \overline{b}_{j t})^2 - \omega_js_j b_j^2}{2}\right) \right] \\
     &\quad + \E_{q_j}\left[\sum_{t=1}^T \mathbbm{1}\{\tau_j = t\}\left(\overline{u}_{jt}\log\overline{v}_{jt} -u_j \log v_j -\log \frac{\Gamma(\overline{u}_{jt})}{\Gamma(u_j)} + (\overline{u}_{jt} -u_j)\log s_j -(\overline{v}_{jt} - v_j)s_j + \log \frac{\overline{\pi}_{j t}}{\pi_{j t}}\right) \right] \\
     &= \sum_{t=1}^T\overline{\pi}_{jt}\left[\frac{1}{2}\log\frac{\overline{\omega}_{j t}}{\omega_j}- \frac{\overline{\omega}_{j t}\E_{q_j}[s_j\E_{q_j}[(b_j - \overline{b}_{j t})^2|s_j,\tau_j=t]|\tau_j=t] - \omega_j\E_{q_j}[s_j\E_{q_j}[ b_j^2|s_j,\tau_j=t]|\tau_j=t]}{2} \right] \\
     &\quad + \sum_{t=1}^T\overline{\pi}_{jt}\left[\overline{u}_{jt}\log\overline{v}_{jt} -u_j \log v_j -\log \frac{\Gamma(\overline{u}_{jt})}{\Gamma(u_j)} + (\overline{u}_{jt} -u_j)\E_{q_j}[\log s_j|\tau_j=t] -(\overline{v}_{jt} - v_j)\E_{q_j}[s_j|\tau_j=t] +   \log \frac{\overline{\pi}_{j t}}{\pi_{j t}}\right] \\
     &= \sum_{t=1}^T\overline{\pi}_{jt}\left[\frac{1}{2}\log\frac{\overline{\omega}_{j t}}{\omega_j}- \frac{1 - \omega_j(\overline{\omega}_{jt}^{-1} +\overline{b}_{jt}^2\E_{q_j}[s_j|\tau_j=t])}{2} \right] \\
     &\quad + \sum_{t=1}^T \overline{\pi}_{jt}\left[u_j \log \frac{\overline{v}_{jt}}{v_j} -\log \frac{\Gamma(\overline{u}_{jt})}{\Gamma(u_j)} + (\overline{u}_{jt} -u_j)\psi(\overline{u}_{jt}) -\frac{(\overline{v}_{jt} - v_j)\overline{u}_{jt}}{\overline{v}_{jt}} +  \log \frac{\overline{\pi}_{j t}}{\pi_{j t}}\right] \\
     &= \sum_{t=1}^T\overline{\pi}_{jt}\left[\frac{1}{2}\log\frac{\overline{\omega}_{j t}}{\omega_j}- \frac{1}{2} + \frac{\omega_j}{2\overline{\omega}_{jt}} + \frac{\omega_j\overline{u}_{jt}\overline{b}_{jt}^2}{2\overline{v}_{jt}} + u_j \log \frac{\overline{v}_{jt}}{v_j} -\log \frac{\Gamma(\overline{u}_{jt})}{\Gamma(u_j)} + (\overline{u}_{jt} -u_j)\psi(\overline{u}_{jt}) -\frac{(\overline{v}_{jt} - v_j)\overline{u}_{jt}}{\overline{v}_{jt}} +  \log \frac{\overline{\pi}_{j t}}{\pi_{j t}}\right] 
\end{align*}
\normalsize
So $\text{KL}(q_j\:\lVert\: p_j)$ is a continuously differentiable function of $\overline{\boldsymbol{\theta}}_j$. Therefore, $\text{ELBO}(q\:)$ can be expressed as a continuously differentiable function of $\overline{\boldsymbol{\theta}}_{1:N}$, i.e. we can define:
\begin{align}
    F(\overline{\boldsymbol{\theta}}_{1:N}) := \text{ELBO}(q)  \label{eq:F}
\end{align}
We now show that the marginal distributions $q_\ell(\boldsymbol{\theta}_\ell)$, $q_k(\boldsymbol{\theta}_k)$, and $q_j(\boldsymbol{\theta}_j)$ that are implied by the expression in (\ref{eq:q-soln}) are precisely the same distributions we claim solve the appropriate maximization problems in (i)-(iii) of Proposition \ref{prop:vb}. Throughout the proof we take a general approach and allow the hyperparameters $\omega_i,u_i,v_i$, and $\boldsymbol{\pi}_{i,1:T}$ to vary by component $i$.

\textbf{Case (i):} 

For fixed a $\ell' \in\{1,\ldots,L\}$, we treat the distributions $\{q_\ell\}_{\ell\neq\ell'}$, $\{q_k\}_{k=1}^K$, and $\{q_j\}_{j=1}^J$ as fixed and known. Then by (\ref{eq:q-soln}), the solution to:
\begin{align}
    \max_{q_{\ell'}} \text{\normalfont ELBO}\left(q\right), \label{eq:elbo-l}
\end{align}
is to pick $q_{\ell'}$ so that:
\begin{align*}
    \log q_{\ell'}(\boldsymbol{\theta}_{\ell'}) &\underset{\boldsymbol{\theta}_{\ell'}}{\propto} \E_{q_{-\ell'}} \left[\log p\left(\mathbf{y} \:|\: \boldsymbol{\Theta}\right)\right] + \log p_{\ell'}(\boldsymbol{\theta}_{\ell'}) \\
    &\underset{\boldsymbol{\theta}_{\ell'}}{\propto} - \frac{1}{2}\sum_{t=1}^{T}  \E_{q_{-\ell'}}\left[\lambda_{t}\left(y_t - \mu_t\right)^2\right] + \log p_{\ell'}(\boldsymbol{\theta}_{\ell'}).
\end{align*}
Define the partial residual:
\begin{align*}
    r_{-\ell' t} &:= y_t - \mu_t + \mu_{\ell' t}.
\end{align*}
If we fix $\tau_{\ell'} = t'$, then we have:
\begin{align*}
    \log q_{\ell'}(b_{\ell'},\tau_{\ell'}=t') 
    &\underset{\boldsymbol{\theta}_{\ell'}}{\propto}- \frac{1}{2}\sum_{t=t'}^{T}  \E_{q_{-\ell'}}\left[\lambda_{t}\left(b_{\ell'}^2 - 2r_{-\ell't}b_{\ell'} \right)\right] - \frac{b_{\ell'}^2\omega_{\ell'}}{2} + \log \pi_{\ell' t} \\
    &\;= - \frac{1}{2}\sum_{t=t'}^{T}  \left(\E_{q_{-\ell'}}\left[\lambda_{t}\right]b_{\ell'}^2 - 2\E_{q_{-\ell'}}\left[\lambda_{t}r_{-\ell't}\right]b_{\ell'}\right) - \frac{b_{\ell'}^2\omega_{\ell'}}{2} + \log \pi_{\ell' t}\\
    &\;= -\frac{1}{2}\left[b_{\ell'}^2\left(\omega_{\ell'} +\sum_{t=t'}^{T} \overline{\lambda}_t\right) - 2 b_{\ell'}\sum_{t=t'}^{T}\E_{q_{-\ell'}}\left[\lambda_{t}r_{-\ell't}\right]\right]  + \log \pi_{\ell' t}.
\end{align*}
We can now calculate the posterior parameters for $b_{\ell'}$ given $\tau_{\ell'} = t'$:
\begin{align*}
    \overline{\omega}_{\ell't'} &= \omega_{\ell'} +\sum_{t=t'}^{T} \overline{\lambda}_t \\
    \overline{b}_{\ell't'} &= \overline{\omega}_{\ell't'}^{-1}  \sum_{t=t'}^{T} \E_{q_{-\ell'}}\left[\lambda_{t}r_{-\ell't}\right].
\end{align*}
We now have:
\begin{align*}
    \log q_{\ell'}(b_{\ell'},\tau_{\ell'}=t') 
    &\underset{\boldsymbol{\theta}_{\ell'}}{\propto} \log \overline{\omega}_{\ell't'} - \frac{\overline{\omega}_{\ell't'}(b_{\ell'} - \overline{b}_{\ell't'})^2}{2} - \log \overline{\omega}_{\ell't'} + \frac{\overline{\omega}_{\ell't'}\overline{b}^2_{\ell't'}}{2}+ \log \pi_{\ell' t'}
\end{align*}
The first two terms above are just the log density of a Gaussian distribution with mean $\overline{b}_{\ell't'} $ and precision $\overline{\omega}_{\ell't'}$. Therefore, the $q_{\ell'}$ that solves (\ref{eq:elbo-l}) satisfies:
\begin{align*}
    b_{\ell'} \:|\: \tau_{\ell'} = t' \sim \mathcal{N}(\overline{b}_{j't'}, \overline{\omega}^{-1}_{j't'}),
\end{align*}
which matches (\ref{eq:b-smcp}) of the mean-scp posterior. Integrating the joint density $q_{\ell'}(\boldsymbol{\theta}_{\ell'})$ with respect to $b_{\ell'}$ and calculating the normalizing constant, we are left with: 
\begin{align}
    q_{\ell'}(\tau_{\ell'} = t') &= \frac{\pi_{\ell't'} \overline{\omega}_{\ell't'}^{-\frac{1}{2}} \exp[\overline{\omega}_{\ell't'} \overline{b}^2_{\ell't'}/2]}{\sum_{t=1}^T \pi_{\ell't} \overline{\omega}_{\ell't}^{-\frac{1}{2}} \exp[\overline{\omega}_{\ell't} \overline{b}^2_{\ell't}/2]} \label{eq:optimal-pi-l}\\ 
    &:= \overline{\pi}_{\ell't'}, \notag
\end{align}
which matches (\ref{eq:gamma-post-cat1}). Therefore, the $q_{\ell'}$ that solves (\ref{eq:elbo-l}) has the same functional form as the posterior distribution of the mean-scp with parameters $\overline{\boldsymbol{\theta}}_{\ell'} := \{\overline{b}_{\ell't},\overline{\omega}_{\ell't},\overline{\pi}_{\ell't}\}_{t=1}^T$. It remains to show $\overline{\boldsymbol{\theta}}_{\ell'}$ is the same set of parameters defined in (i) of Proposition \ref{prop:vb}. Define a partial mean residual $\tilde{r}_{-\ell t} :=  \tilde{r}_{t} + \E_{q_{\ell}}[\mu_{\ell t}]$. Then we have:
\begin{align*}
    \E_{q_{-\ell}}\left[\lambda_{t}r_{-\ell t}\right] &= \E_{q_{-\ell}}\left[\prod_{j=1}^J\lambda_{jt}\prod_{k=1}^K\lambda_{kt} \left(y_t  - \sum_{\ell'\neq \ell} \mu_{\ell't} - \sum_{j=1}^J \mu_{jt}\right)\right] \\
    &=  \prod_{k=1}^K\E_{q_{k}}\left[\lambda_{kt}\right]\left(y_t  - \sum_{\ell'\neq \ell} \E_{q_{\ell'}}\left[\mu_{\ell' t}\right] - \sum_{j=1}^J \E_{q_{-\ell}}\left[\lambda_{jt}\mu_{jt}\prod_{j'\neq j} \lambda_{j't}\right]\right) \tag{by (\ref{eq:mean-field})} \\
    &= \overline{\lambda}_{t}\left(y_t  - \overline{\mu}_{-\ell t} -  \sum_{j=1}^J\overline{\lambda}_{jt}^{-1} \E_{q_{j}}\left[\lambda_{jt}\mu_{jt}\right]\right) \\
    &=  \overline{\lambda}_{t} \tilde{r}_{-\ell t}.
\end{align*}
Returning to the posterior parameters for $b_{\ell}$ and $\tau_{\ell}$, we can now write: 
\begin{align}
    \overline{\omega}_{\ell t} &=  \omega_{\ell} +  \sum_{t'=t}^{T} \overline{\lambda}_{t'}, \label{eq:optimal-omega-l} \\
    \overline{b}_{\ell t}  &= \overline{\omega}_{\ell t}^{-1}  \sum_{t'=t}^{T} \overline{\lambda}_{t'}\tilde{r}_{-\ell t'}, \label{eq:optimal-b-l}
\end{align}
Examining the mean-scp posterior parameters (\ref{eq:mean-scp-post-omega})-(\ref{eq:mean-scp-post-pi}), it is clear that:
\begin{align*}
    \overline{\boldsymbol{\theta}}_\ell = \normalfont{\texttt{mean-scp}}(\tilde{\mathbf{r}}_{-\ell,1:T} \:;\: \overline{\boldsymbol{\lambda}}_{1:T}, \omega_{\ell}, \boldsymbol{\pi}_{\ell,1:T}).
\end{align*}
We now show that $\overline{\boldsymbol{\theta}}_\ell$ is the unique maximizer of $F(\overline{\boldsymbol{\theta}}_{1:N})$ holding $\overline{\boldsymbol{\theta}}_{1:N}\setminus\overline{\boldsymbol{\theta}}_\ell$ fixed. First note that we have:
\begin{align*}
     F(\overline{\boldsymbol{\theta}}_{1:N}) &\underset{\overline{\boldsymbol{\theta}}_\ell}{\propto} - \frac{1}{2} \sum_{t=1}^{T} \overline{\lambda}_t(\Tilde{r}_t^2 + \delta_t) - \sum_{t=1}^T \overline{\pi}_{\ell t}\left[\frac{1}{2}\log\frac{\overline{\omega}_{\ell t}}{\omega_\ell}- \frac{1 - \omega_\ell (\overline{\omega}_{\ell t}^{-1} + \overline{b}^2_{\ell t})}{2} +  \log \frac{\overline{\pi}_{\ell t}}{\pi_{\ell t}}\right] \\
     &\underset{\overline{\boldsymbol{\theta}}_\ell}{\propto} - \frac{1}{2} \sum_{t=1}^{T} \overline{\lambda}_t\left[\left(\tilde{r}_{-\ell t} -\sum_{t'=1}^t \overline{b}_{\ell t'} \overline{\pi}_{\ell t'}\right)^2 + \sum_{t'=1}^t (\overline{b}^2_{\ell t'} + \overline{\omega}^{-1}_{\ell t'}) \overline{\pi}_{\ell t'} - \left(\sum_{t'=1}^t \overline{b}_{\ell t'} \overline{\pi}_{\ell t'}\right)^2\right] \\
     &\quad- \sum_{t=1}^T \overline{\pi}_{\ell t}\left[\frac{1}{2}\log\frac{\overline{\omega}_{\ell t}}{\omega_\ell}- \frac{1 - \omega_\ell (\overline{\omega}_{\ell t}^{-1} + \overline{b}^2_{\ell t})}{2} +  \log \frac{\overline{\pi}_{\ell t}}{\pi_{\ell t}}\right] \\
     &\underset{\overline{\boldsymbol{\theta}}_\ell}{\propto} - \frac{1}{2} \sum_{t=1}^{T} \overline{\lambda}_t\left[ \sum_{t'=1}^t (\overline{b}^2_{\ell t'} + \overline{\omega}^{-1}_{\ell t'}) \overline{\pi}_{\ell t'} -2\tilde{r}_{-\ell t}\sum_{t'=1}^t \overline{b}_{\ell t'} \overline{\pi}_{\ell t'}\right] \\
     &\quad- \sum_{t=1}^T \overline{\pi}_{\ell t}\left[\frac{1}{2}\log\frac{\overline{\omega}_{\ell t}}{\omega_\ell}- \frac{1 - \omega_\ell (\overline{\omega}_{\ell t}^{-1} + \overline{b}^2_{\ell t})}{2} +  \log \frac{\overline{\pi}_{\ell t}}{\pi_{\ell t}}\right]   
\end{align*}
Assume that $\overline{\pi}_{\ell t}> 0$ for the moment (we will momentarily show that $\pi_{\ell t} > 0$ implies that this is true for the optimal value of $\overline{\pi}_{\ell t}$), then the expression above is strictly concave as a function of $\overline{b}_{\ell t}$ and:
\begin{align*}
    \frac{\partial F}{\partial \overline{b}_{\ell t}} = \overline{\pi}_{\ell t}\sum_{t'=t}^T \overline{\lambda}_{t'}\tilde{r}_{-\ell t'} -  \left(\omega_\ell + \sum_{t'=t}^T\overline{\lambda}_{t'}\right)\overline{\pi}_{\ell t}\overline{b}_{\ell t}.
\end{align*}
Since $\overline{\pi}_{\ell t}> 0$, $F$ is uniquely maximized by setting:
\begin{align*}
    \overline{b}_{\ell t} &= \frac{\sum_{t'=t}^T \overline{\lambda}_{t'}\tilde{r}_{-\ell t'}}{\omega_\ell + \sum_{t'=t}^T\overline{\lambda}_{t'}}, 
\end{align*}
which matches (\ref{eq:optimal-b-l}). Next:
\begin{align*}
    \frac{\partial F}{\partial \overline{\omega}_{\ell t}} = \frac{\overline{\pi}_{\ell t}\left(\omega_\ell + \sum_{t'=t}^T\overline{\lambda}_{t}\right)} {2\overline{\omega}_{\ell t}^2} - \frac{\overline{\pi}_{\ell t}} {2\overline{\omega}_{\ell t}}     
    \begin{cases}
        >0, &\text{if } \overline{\omega}_{\ell t} < \omega_\ell + \sum_{t'=t}^T\overline{\lambda}_{t'}, \\
        =0, &\text{if } \overline{\omega}_{\ell t} = \omega_\ell + \sum_{t'=t}^T\overline{\lambda}_{t'}, \\
        <0, &\text{if } \overline{\omega}_{\ell t} > \omega_\ell + \sum_{t'=t}^T\overline{\lambda}_{t'}.
    \end{cases}
\end{align*}
So when $\overline{\pi}_{\ell t}> 0$, $F$ is uniquely maximized by setting:
\begin{align*}
    \overline{\omega}_{\ell t} = \omega_\ell + \sum_{t'=t}^T\overline{\lambda}_{t'}, 
\end{align*}
which matches (\ref{eq:optimal-omega-l}). Lastly, for any constants $\{C_t\}_{t=1}^T$, we can solve:
\begin{align*}
    \max_{\overline{\boldsymbol{\pi}}_{\ell, 1:T} \in \mathcal{S}^T} \sum_{t=1}^T C_t\overline{\pi}_{\ell t} - \overline{\pi}_{\ell t} \log \overline{\pi}_{\ell t} \text{ s.t. } \sum_{t=1}^T \overline{\pi}_{\ell t} = 1
\end{align*}
by solving:
\begin{align*}
     \max_{\overline{\boldsymbol{\pi}}_{\ell, 1:T} \in \mathcal{S}^T} \sum_{t=1}^T C_t\overline{\pi}_{\ell t} - \overline{\pi}_{\ell t} \log \overline{\pi}_{\ell t} - \alpha \left(1 - \sum_{t=1}^T \pi_{\ell t}\right) 
\end{align*}
where $\alpha \geq 0$ is the Lagrange multiplier. Furthermore, since  $C_t\overline{\pi}_{\ell t} - \overline{\pi}_{\ell t} \log \overline{\pi}_{\ell t}$ is strictly concave as a function of $\overline{\pi}_{\ell t}$, and $\mathcal{S}^T$ is closed and convex, the solution will be unique. The solution is determined by the following KKT conditions:
\begin{align*}
    C_t - 1-\log \overline{\pi}_{\ell t}  + \alpha &=0, \sforall t \in \{1, \ldots, T\} \\
    1 - \sum_{t=1}^T \overline{\pi}_{\ell t} &= 0.
\end{align*}
The first condition implies $\overline{\pi}_{\ell t} = \exp[C_t + \alpha - 1]$, which combined with the the second condition implies:
\begin{align}
   \alpha = 1-\log\left(\sum_{t=1}^T\exp[C_t] \right) \implies \overline{\pi}_{\ell t} = \frac{\exp[C_t]}{\sum_{t'=1}^T \exp[C_{t'}]}. \label{eq:optimal-pi}
\end{align}
Reexamining $F$, we see that:
\begin{align*}
    C_t&= -\frac{\left(\overline{b}_{\ell t}^2 + \overline{\omega}_{\ell t}^{-1}\right)\sum_{t'=t}^T\overline{\lambda}_{t'}}{2} +\overline{b}_{\ell t}\sum_{t'=t}^T\overline{\lambda}_{t'}\tilde{r}_{-\ell t'} - \frac{1}{2}\log\frac{\overline{\omega}_{\ell t}}{\omega_\ell} + \frac{1 - \omega_\ell (\overline{\omega}_{\ell t}^{-1} + \overline{b}^2_{\ell t})}{2} +\log \pi_{\ell t}\\
    &= -\frac{\left(\overline{b}_{\ell t}^2 + \overline{\omega}_{\ell t}^{-1}\right)\left(\omega_\ell + \sum_{t'=t}^T\overline{\lambda}_{t'}\right)}{2} +\overline{b}_{\ell t}\sum_{t'=t}^T\overline{\lambda}_{t'}\tilde{r}_{-\ell t'} - \frac{1}{2}\log\overline{\omega}_{\ell t} +\log \pi_{\ell t} + \frac{1}{2}\left(1 +\log \omega_\ell\right) 
\end{align*}
Note that $\pi_{\ell t} > 0$ and the expression above imply the optimal value of $\overline{\pi}_{\ell t}$ is positive for any values of $\overline{b}_{\ell t}$ and $\overline{\omega}_{\ell t}$, so we are at an interior solution and the optimal values of $\overline{b}_{\ell t}$ and $\overline{\omega}_{\ell t}$ must take the form (\ref{eq:optimal-b-l}) and (\ref{eq:optimal-omega-l}), but for these values we have:
\begin{align*}
    C_t &= -\frac{\left(\overline{b}_{\ell t}^2 + \overline{\omega}_{\ell t}^{-1}\right) \overline{\omega}_{\ell t}}{2} + \overline{\omega}_{\ell t}\overline{b}^2_{\ell t} - \frac{1}{2}\log\overline{\omega}_{\ell t} +\log \pi_{\ell t} + \frac{1}{2}\left(1 +\log \omega_\ell\right) \\
    &= \frac{\overline{\omega}_{\ell t}\overline{b}^2_{\ell t}}{2} - \frac{1}{2}\log\overline{\omega}_{\ell t} +\log \pi_{\ell t} + \frac{1}{2}\log \omega_\ell, 
\end{align*}
which along with (\ref{eq:optimal-pi}) implies:
\begin{align*}
      \overline{\pi}_{\ell t} = \frac{\pi_{\ell t} \overline{\omega}_{\ell t }^{-\frac{1}{2}} \exp[\overline{\omega}_{\ell t } \overline{b}^2_{\ell t }/2]}{\sum_{t'=1}^T \pi_{\ell t'} \overline{\omega}_{\ell t'}^{-\frac{1}{2}} \exp[\overline{\omega}_{\ell t'} \overline{b}^2_{\ell t'}/2]},
\end{align*}
which matches (\ref{eq:optimal-pi-l}).

\textbf{Case (ii):} 

For fixed $k' \in\{1,\ldots,K\}$, we treat the distributions $\{q(\boldsymbol{\theta}_\ell)\}_{\ell=1}^L$, $\{q(\boldsymbol{\theta}_k)\}_{k \neq k'}$, and $\{q(\boldsymbol{\theta}_{j})\}_{j=1}^J$ as fixed and known. Then by (\ref{eq:q-soln}), the solution to:
\begin{align}
    \max_{q_{k'}} \text{\normalfont ELBO}\left(q\right), \label{eq:elbo-k}
\end{align}
is to pick $q_{k'}$ so that:
\begin{align*}
    \log q_{k'}(\boldsymbol{\theta}_{k'}) &\underset{\boldsymbol{\theta}_{k'}}{\propto} \E_{q_{-k'}} \left[\log p\left(\mathbf{y}_{1:T} \:|\: \boldsymbol{\Theta}\right)\right] + \log p_{k'}(\boldsymbol{\theta}_{k'}) \\
    &\underset{\boldsymbol{\theta}_{k'}}{\propto}  \E_{q_{-k'}}\left[\frac{1}{2}\sum_{t=1}^{T} \log\lambda_t - \frac{1}{2}\sum_{t=1}^{T} \lambda_{t}\left(y_t - \mu_t\right)^2\right] + \log p_{k'}(\boldsymbol{\theta}_{k'}).
\end{align*}
Define $\lambda_{-k't} := \lambda_{k't}^{-1}\lambda_t$. If we fix $\tau_{k'} = t'$, then we have:
\begin{align*}
    \log q_{k'}(s_{k'},\tau_{k'} = t') &\underset{\boldsymbol{\theta}_{k'}}{\propto} \frac{T - t' + 1}{2} \log s_{k'} - \frac{s_{k'}}{2}\sum_{t=t'}^{T} \E_{q_{-k'}}\left[\lambda_{-k't}\left(y_t - \mu_t\right)^2\right] \\
    &\quad - \frac{1}{2}\sum_{t=1}^{t'-1} \E_{q_{-k'}}\left[\lambda_{-k't}\left(y_t - \mu_t\right)^2\right]  + (u_{k'}-1)\log s_{k'} - v_{k'} s_{k'} + \log \pi_{k't'}.
\end{align*}
Define:
\begin{align*}
    \overline{u}_{k't'} &= u_{k'} + \frac{T-t'+1}{2} \\
    \overline{v}_{k't'} &= v_{k'} + \frac{1}{2}\sum_{t=t'}^{T} \E_{q_{-k'}}\left[\lambda_{-k't}\left(y_t - \mu_t\right)^2\right].
\end{align*}
Then we have:
\begin{align*}
    \log q_{k'}(s_{k'},\tau_{k'} = t') 
    &\underset{\boldsymbol{\theta}_{k'}}{\propto} -  \log\Gamma(\overline{u}_{k't'}) + \overline{u}_{k't'} \log \overline{v}_{k't'} + \left(\overline{u}_{k't'} -1\right)\log s_{k'}  - \overline{v}_{k't'} s_{k'} \\
    &\quad + \log \pi_{k't'} + \log\Gamma(\overline{u}_{k't'}) - \overline{u}_{k't'} \log \overline{v}_{k't'} - \frac{1}{2}\sum_{t=1}^{t'-1} \E_{q_{-k'}}\left[\lambda_{-k't}\left(y_t - \mu_t\right)^2\right]
\end{align*}
The first line above is just the log density of a Gamma distribution with shape and rate parameters $\overline{u}_{k't'}$, and $\overline{v}_{k't'}$. Therefore, the $q_{k'}$ that solves (\ref{eq:elbo-k}) satisfies:
\begin{align*}
    s_{k'} \:|\: \tau_{k'} = t' \sim \text{Gamma}(\overline{u}_{k't'}, \overline{v}_{k't'}),
\end{align*}
which matches (\ref{eq:s-sscp}) of the var-scp posterior. Integrating the joint density $q_{k'}(\boldsymbol{\theta}_{k'})$ with respect to $s_{k'}$ and calculating the normalizing constant, we are left with: 
\begin{align*}
    q_{k'}(\tau_{k'} = t) &= \frac{\pi_{k't}\Gamma(\overline{u}_{k't})\overline{v}_{k't}^{-\overline{u}_{k't}}\exp\left( -\frac{1}{2} \sum_{t'=1}^{t-1} \E_{q_{-k'}}\left[\lambda_{-k't'}(y_{t'} - \mu_{t'})^2\right] \right)}{\sum_{t'=1}^T \pi_{k't'}\Gamma(\overline{u}_{k't'})\overline{v}_{k't'}^{-\overline{u}_{k't'}}\exp\left( -\frac{1}{2} \sum_{t''=1}^{t'-1} \E_{q_{-k'}}\left[\lambda_{-k't''}(y_{t''} - \mu_{t''})^2\right] \right)} \\
    &:= \overline{\pi}_{k't}, 
\end{align*}
which matches (\ref{eq:gamma-post-cat1}). Therefore, the $q_{k'}$ that solves (\ref{eq:elbo-k}) has the same functional form as the posterior distribution of the var-scp model with parameters $\overline{\boldsymbol{\theta}}_{k'} := \{\overline{u}_{k't},\overline{v}_{k't},\overline{\pi}_{k't}\}_{t=1}^T$. It remains is to show $\overline{\boldsymbol{\theta}}_{k'}$ is the same set of parameters defined in (ii) of Proposition \ref{prop:vb}. Define a partial scale residual $\overline{\lambda}_{-k t} := \overline{\lambda}_{kt}^{-1}\overline{\lambda}_t$. Then using the residual $\tilde{r}_t$ (\ref{eq:mod-resid}) and the variance correction term $\delta_t$ (\ref{eq:delta}), we have:
\footnotesize
\begin{align*}
    \E_{q_{-k}}\left[\lambda_{-kt}(y_t - \mu_t)^2\right] &= \E_{q_{-k}}\left[\prod_{j =1}^J\lambda_{jt} \prod_{k' \neq k} \lambda_{k't}\left(y_t  - \sum_{j'=1}^J \mu_{j't} - \sum_{\ell=1}^L \mu_{\ell t}\right)^2\right] \\
    &= \prod_{j =1}^J\E_{q_{j}}\left[\lambda_{jt}\right] \prod_{k'\neq k} \E_{q_{k'}}\left[\lambda_{k't}\right]\E_{q_{-k}}\left[\left(y_t  - \sum_{\ell=1}^L \mu_{\ell t}\right)^2\right] \\
    &\quad -2  \prod_{k'\neq k} \E_{q_{k'}}\left[\lambda_{k't}\right]\E_{q_{-k}}\left[y_t   - \sum_{\ell=1}^L \mu_{\ell t}\right] \sum_{j=1}^J \E_{q_{j}}\left[\lambda_{jt} \mu_{jt}\right]\prod_{j' \neq j}\E_{q_{j'}}[\lambda_{j't}]  \\
    &\quad +  \prod_{k' \neq k} \E_{q_{k'}}\left[ \lambda_{k't}\right]\E_{q_{-k}}\left[\sum_{j=1}^J\sum_{j'=1}^J  \prod_{j''=1}^J \lambda_{j''t} \mu_{jt} \mu_{j't}\right] \\
    &= \overline{\lambda}_{-kt}\left[\left(y_t  - \sum_{\ell=1}^L \overline{\mu}_{\ell t}\right)^2 + \sum_{\ell=1}^L \left(\overline{\mu^2_{\ell t}} - \overline{\mu}^2_{\ell t} \right) \right] \\
    &\quad -2  \overline{\lambda}_{-kt}\left(y_t   - \sum_{\ell=1}^L\overline{\mu}_{\ell t}\right) \sum_{j=1}^J \overline{\lambda}_{jt}^{-1} \E_{q_{j}}\left[\lambda_{jt} \mu_{jt}\right] \\
    &\quad + \overline{\lambda}_{-kt} \sum_{j=1}^J\sum_{j'=1}^J \overline{\lambda}_{jt}^{-1}\E_{q_{-k}}\left[\lambda_{jt} \mu_{jt}\right]\overline{\lambda}_{j't}^{-1}\E_{q_{j'}}\left[ \lambda_{j't}\mu_{j't}\right] \\
    &\quad + \overline{\lambda}_{-kt} \sum_{j=1}^J\left(\overline{\lambda}_{jt}^{-1}\E_{q_{j}}\left[ \lambda_{jt}\mu_{jt}^2\right] -\overline{\lambda}_{jt}^{-2}\E_{q_{j}}\left[ \lambda_{jt}\mu_{jt}\right]^2\right) \\
    &=  \overline{\lambda}_{-kt}\left(y_t  - \sum_{j=1}^J \overline{\lambda}_{jt}^{-1} \E_{q_{j}}\left[\lambda_{jt} \mu_{jt}\right] - \sum_{\ell=1}^L\overline{\mu}_{\ell t}\right)^2 \\
    &\quad + \overline{\lambda}_{-kt}\left[\sum_{\ell=1}^L \left(\overline{\mu^2_{\ell t}} - \overline{\mu}^2_{\ell t} \right) + \sum_{j=1}^J\left(\overline{\lambda}_{jt}^{-1}\E_{q_{j}}\left[ \lambda_{jt}\mu_{jt}^2\right] -\overline{\lambda}_{jt}^{-2}\E_{q_{j}}\left[ \lambda_{jt}\mu_{jt}\right]^2\right)\right] \\
    &= \overline{\lambda}_{-kt}(\tilde{r}^2_t + \delta_t). \tag{by (\ref{eq:mod-resid}) and (\ref{eq:delta})} 
\end{align*}
\normalsize
We now define variance corrected prior parameters for the rate of $s_{k}$ and location of the change-point:
\begin{align}
    \tilde{v}_{kt} &:= v_k + \frac{1}{2}\sum_{t'=t}^{T}\overline{\lambda}_{-kt'}\delta_{t'}, \label{eq:v-k-corrected} \\
    \tilde{\pi}_{kt} &:= \frac{\pi_{kt} \exp\left(-\frac{1}{2}\sum_{t'=1}^{t-1}\overline{\lambda}_{-kt'}\delta_{t'}\right)}{\sum_{t'=1}^T \pi_{kt'} \exp\left(-\frac{1}{2}\sum_{t''=1}^{t'-1}\overline{\lambda}_{-kt''}\delta_{t''}\right)}. \label{eq:pi-k-corrected}
\end{align}
Then, returning to the posterior parameters for $s_{k}$ when $\tau_{k}=t$, we now have:
\begin{align}
    \overline{u}_{kt} &= u_{k} + \frac{T-t+1}{2}, \label{eq:optimal-u-k} \\
    \overline{v}_{kt} &= v_{k} + \frac{1}{2}\sum_{t'=t}^{T}\overline{\lambda}_{-kt'}\tilde{r}^2_{t'} + \frac{1}{2}\sum_{t'=t}^{T}\overline{\lambda}_{-kt'}\delta_{t'} \label{eq:optimal-v-k}\\
    &= \tilde{v}_{kt} + \frac{1}{2}\sum_{t'=t}^{T}\overline{\lambda}_{-kt'}\tilde{r}^2_{t'}, \notag\\
    \overline{\pi}_{kt}  &= \frac{\exp\left(-\frac{1}{2}\sum_{t'=1}^{t-1}\overline{\lambda}_{-kt'}\delta_{t'}\right)\pi_{kt}\Gamma(\overline{u}_{kt})\overline{v}_{kt}^{-\overline{u}_{kt}}\exp\left( -\frac{1}{2} \sum_{t'=1}^{t-1} \overline{\lambda}_{-kt'}\tilde{r}_{t'}^2 \right)}{\sum_{t'=1}^T \exp\left(-\frac{1}{2}\sum_{t''=1}^{t'-1}\overline{\lambda}_{-kt''}\delta_{t''}\right) \pi_{kt'}\Gamma(\overline{u}_{kt'})\overline{v}_{kt'}^{-\overline{u}_{kt'}}\exp\left( -\frac{1}{2} \sum_{t''=1}^{t'-1} \overline{\lambda}_{-kt''}\tilde{r}_{t''}^2\right)} \label{eq:optimal-pi-k}\\
    &= \frac{\tilde{\pi}_{kt}\Gamma(\overline{u}_{kt})\overline{v}_{kt}^{-\overline{u}_{kt}}\exp\left( -\frac{1}{2} \sum_{t'=1}^{t-1} \overline{\lambda}_{-kt'}\tilde{r}_{t'}^2 \right)}{\sum_{t'=1}^T \tilde{\pi}_{kt'}\Gamma(\overline{u}_{kt'})\overline{v}_{kt'}^{-\overline{u}_{kt'}}\exp\left( -\frac{1}{2} \sum_{t''=1}^{t'-1} \overline{\lambda}_{-kt''}\tilde{r}_{t''}^2\right)}.\notag
\end{align}
Examining the var-scp posterior parameters (\ref{eq:var-scp-post-u})-(\ref{eq:var-scp-post-pi}), it is clear that:
\begin{align*}
    \overline{\boldsymbol{\theta}}_k = \normalfont{\texttt{var-scp}}(\tilde{\mathbf{r}}_{1:T} \:;\:\overline{\boldsymbol{\lambda}}_{-k,1:T}, u_k, \tilde{\mathbf{v}}_{k,1:T}, \tilde{\boldsymbol{\pi}}_{k,1:T}).  
\end{align*}
We now show that $\overline{\boldsymbol{\theta}}_k$ is the unique maximizer of $F(\overline{\boldsymbol{\theta}}_{1:N})$ holding $\overline{\boldsymbol{\theta}}_{1:N}\setminus\overline{\boldsymbol{\theta}}_k$ fixed. First note that we have:
\begin{align*}
     F(\overline{\boldsymbol{\theta}}_{1:N}) &\underset{\overline{\boldsymbol{\theta}}_k}{\propto} - \frac{1}{2} \sum_{t=1}^{T} \overline{\lambda}_t(\Tilde{r}_t^2 + \delta_t) + \frac{1}{2}\sum_{t=1}^T (T-t+1)\overline{\pi}_{kt}[\psi(\overline{u}_{kt}) - \log\overline{v}_{kt}]\\
     &\quad - \sum_{t=1}^T \overline{\pi}_{kt}\left[u_k \log \frac{\overline{v}_{kt}}{v_k} -\log \frac{\Gamma(\overline{u}_{kt})}{\Gamma(u_k)} + (\overline{u}_{kt} -u_k)\psi(\overline{u}_{kt}) -\frac{(\overline{v}_{kt} - v_k)\overline{u}_{kt}}{\overline{v}_{kt}} +  \log \frac{\overline{\pi}_{k t}}{\pi_{k t}}\right] \\
     &= - \frac{1}{2} \sum_{t=1}^{T} \left(\sum_{t'=1}^t  \frac{\overline{\pi}_{k t'}\overline{u}_{kt'}}{\overline{v}_{kt'}} + \sum_{t'=t+1}^T \overline{\pi}_{k t'}\right)\overline{\lambda}_{-kt}(\Tilde{r}_t^2 + \delta_t) + \frac{1}{2}\sum_{t=1}^T (T-t+1)\overline{\pi}_{kt}[\psi(\overline{u}_{kt}) - \log\overline{v}_{kt}]\\
     &\quad - \sum_{t=1}^T \overline{\pi}_{kt}\left[u_k \log \frac{\overline{v}_{kt}}{v_k} -\log \frac{\Gamma(\overline{u}_{kt})}{\Gamma(u_k)} + (\overline{u}_{kt} -u_k)\psi(\overline{u}_{kt}) -\frac{(\overline{v}_{kt} - v_k)\overline{u}_{kt}}{\overline{v}_{kt}} +  \log \frac{\overline{\pi}_{k t}}{\pi_{k t}}\right]. 
\end{align*}
First, we have:
\begin{align*}
    \frac{\partial F}{\partial \overline{v}_{kt}} = \frac{\overline{\pi}_{kt}\overline{u}_{kt}}{\overline{v}_{kt}^2} \left[v_k + \sum_{t'=t}^T \frac{\overline{\lambda}_{-kt'}(\tilde{r}^2_{t'} + \delta_{t'})}{2}\right] - \frac{\overline{\pi}_{kt}}{\overline{v}_{kt}}\left[u_k + \frac{(T-t+1)}{2}\right].
\end{align*}
Assume that $\overline{\pi}_{k t}> 0$ for the moment (we will momentarily show that $\pi_{k t} > 0$ implies that this is true for the optimal value of $\overline{\pi}_{k t}$). For any given $\overline{u}_{kt}$, define:
\begin{align*}
    \overline{v}_{kt}\circ(\overline{u}_{kt}) := \frac{\overline{u}_{kt}}{u_k + \frac{(T-t+1)}{2}}\left[v_k + \frac{1}{2}\sum_{t'=t}^T \overline{\lambda}_{-kt'}(\tilde{r}^2_{t'} + \delta_{t'})\right].
\end{align*}
Then:
\begin{align*}
\frac{\partial F}{\partial \overline{v}_{kt}}
    \begin{cases}
        > 0, &\text{if } \overline{v}_{kt} < \overline{v}_{kt}\circ(\overline{u}_{kt}), \\
        = 0, &\text{if } \overline{v}_{kt} = \overline{v}_{kt}\circ(\overline{u}_{kt}), \\
        < 0, &\text{if } \overline{v}_{kt} > \overline{v}_{kt}\circ(\overline{u}_{kt}).
    \end{cases}
\end{align*}
So $F$ is uniquely maximized by setting $\overline{v}_{kt} = \overline{v}_{kt}\circ(\overline{u}_{kt})$. Next we have:
\begin{align*}
    \frac{\partial F}{\partial \overline{u}_{k t}}\bigg|_{\overline{v}_{kt} = \overline{v}_{kt}\circ(\overline{u}_{kt})} &= \overline{\pi}_{kt}\psi'(\overline{u}_{kt})\left(u_k + \frac{T-t+1}{2} -\overline{u}_{kt} \right) - \frac{\overline{\pi}_{kt}}{\overline{u}_{kt}}\left(u_k + \frac{T-t+1}{2}\right) +\overline{\pi}_{kt}.
\end{align*}
We can use the following bounds for the trigamma function when $u>0$:
\begin{align*}
    \frac{1}{u} + \frac{1}{2u^2} \leq \psi'(u) \leq \frac{1}{u} + \frac{1}{u^2}
\end{align*}
to write:
\begin{align*}
     \frac{\overline{\pi}_{kt}}{2\overline{u}_{kt}^2}\left(u_k + \frac{T-t+1}{2} -\overline{u}_{kt} \right)\leq \frac{\partial F}{\partial \overline{u}_{k t}}\bigg|_{\overline{v}_{kt} = \overline{v}_{kt}\circ(\overline{u}_{kt})} \leq \frac{\overline{\pi}_{kt}}{\overline{u}_{kt}^2}\left(u_k + \frac{T-t+1}{2} -\overline{u}_{kt} \right).
\end{align*}
Therefore:
\begin{align*}
    \frac{\partial F}{\partial \overline{u}_{k t}}\bigg|_{\overline{v}_{kt} = \overline{v}_{kt}\circ(\overline{u}_{kt})}
    \begin{cases}
        > 0, &\text{if } \overline{u}_{kt} < u_k + \frac{T-t+1}{2}, \\
        = 0, &\text{if } \overline{u}_{kt} = u_k + \frac{T-t+1}{2}, \\
        < 0, &\text{if } \overline{u}_{kt} > u_k + \frac{T-t+1}{2}.
    \end{cases}
\end{align*}
Showing that $F$ is uniquely maximized when we set:
\begin{align*}
    \overline{u}_{kt} &= u_k + \frac{T-t+1}{2} \\
    \overline{v}_{kt} &= \overline{v}_{kt} \circ\left(u_k + \frac{T-t+1}{2}\right), \\
    &= v_k + \frac{1}{2}\sum_{t'=t}^T \overline{\lambda}_{-kt'}(\tilde{r}^2_{t'} + \delta_{t'}),
\end{align*}
which match (\ref{eq:optimal-u-k}) and (\ref{eq:optimal-v-k}). Lastly we can write:
\begin{align*}
    F(\overline{\boldsymbol{\theta}}_{1:N}) &\underset{\overline{\pi}_{kt}}{\propto} C_t \overline{\pi}_{kt} + \overline{\pi}_{kt} \log\overline{\pi}_{kt}, \\
    C_t &:= -\frac{1}{2}\left[\sum_{t'=1}^{t-1} \overline{\lambda}_{-kt'}(\tilde{r}_{t'}^2 + \delta_{t'}) + \frac{\overline{u}_{kt}}{\overline{v}_{kt}}\sum_{t'=t}^T  \overline{\lambda}_{-kt'}(\tilde{r}_{t'}^2 + \delta_{t'}) - (T-t+1)[\psi(\overline{u}_{kt}) - \log\overline{v}_{kt}]\right] \\
    &\quad- \left[[u_k \log \frac{\overline{v}_{kt}}{v_k} -\log \frac{\Gamma(\overline{u}_{kt})}{\Gamma(u_k)} + (\overline{u}_{kt} -u_k)\psi(\overline{u}_{kt}) -\frac{(\overline{v}_{kt} - v_k)\overline{u}_{kt}}{\overline{v}_{kt}} - \log \pi_{kt}\right].
\end{align*}
Note that $\pi_{k t} > 0$ and the expression above imply the optimal value of $\overline{\pi}_{\ell t}$ is positive for any values of $\overline{u}_{k t}$ and $\overline{v}_{k t}$, so we are at an interior solution and the optimal values of $\overline{u}_{k t}$ and $\overline{v}_{\ell t}$ must take the form (\ref{eq:optimal-u-k}) and (\ref{eq:optimal-v-k}), but for these values we have:
\begin{align*}
    C_t &:= -\frac{1}{2}\sum_{t'=1}^{t-1} \overline{\lambda}_{-kt'}(\tilde{r}_{t'}^2 + \delta_{t'}) - \overline{u}_{kt} \log\overline{v}_{kt} + +\log\Gamma(\overline{u}_{kt}) + \log \pi_{kt} + u_k \log v_k -\log\Gamma(u_k).
\end{align*}
which along with (\ref{eq:optimal-pi}) implies:
\begin{align*}
      \overline{\pi}_{k t} = \frac{\pi_{kt}\Gamma(\overline{u}_{kt})\overline{v}_{kt}^{-\overline{u}_{kt}}\exp\left[ -\frac{1}{2} \sum_{t'=1}^{t-1} \overline{\lambda}_{-kt'}(\tilde{r}_{t'}^2+\delta_{t'}) \right]}{\sum_{t'=1}^T  \pi_{kt'}\Gamma(\overline{u}_{kt'})\overline{v}_{kt'}^{-\overline{u}_{kt'}}\exp\left[ -\frac{1}{2} \sum_{t''=1}^{t'-1} \overline{\lambda}_{-kt''}(\tilde{r}_{t''}^2+\delta_{t''})\right]} ,
\end{align*}
which matches (\ref{eq:optimal-pi-k}).

\textbf{Case (iii):} For fixed $j' \in\{1,\ldots,J\}$, we treat the distributions $\{q_\ell(\boldsymbol{\theta}_\ell)\}_{\ell=1}^L$, $\{q_k(\boldsymbol{\theta}_k)\}_{k=1}^K$, and $\{q_j(\boldsymbol{\theta}_j)\}_{j\neq j'}$ as fixed and known. Then by (\ref{eq:q-soln}), the solution to:
\begin{align}
    \max_{q_{j'}} \text{\normalfont ELBO}\left(q\right), \label{eq:elbo-j}
\end{align}
is to pick $q_{j'}$ so that:
\begin{align*}
    \log q_{j'}(\boldsymbol{\theta}_{j'}) &\underset{\boldsymbol{\theta}_{j'}}{\propto} \E_{q_{-j'}} \left[\log p\left(\mathbf{y}_{1:T} \:|\: \boldsymbol{\Theta}\right)\right]  + \log p_{j'}(\boldsymbol{\theta}_{j'}) \\
    &\underset{\boldsymbol{\theta}_{j'}}{\propto}  \E_{q_{-j'}}\left[\frac{1}{2}\sum_{t=1}^{T} \log\lambda_t - \frac{1}{2}\sum_{t=1}^{T} \lambda_{t}\left(y_t - \mu_t\right)^2\right] + \log p_{j'}(\boldsymbol{\theta}_{j'}).
\end{align*}
Define $\lambda_{-j't} := \lambda_{j't}^{-1}\lambda_t$. If we fix $\tau_{j'} = t'$, then we have:
\begin{align*}
     \log q_{j'}(b_{j'}, s_{j'}, \tau_{j'} = t')&\underset{\boldsymbol{\theta}_{j'}}{\propto} \frac{T - t' + 1}{2} \log s_{j'}  -\frac{s_{j'}}{2} \sum_{t=t'}^{T} \E_{q_{-j'}}\left[\lambda_{-j't}(r_{-j't} - b_{j'})^2\right] \\
    &\quad -\frac{1}{2} \sum_{t=1}^{t'-1} \E_{q_{-j'}}\left[\lambda_{-j't}r_{-j't}^2\right] \\
    &\quad + \frac{1}{2}\log s_{j'} - \frac{s_{j'}\omega_{j'}b_{j'}^2}{2} + (u_{j'}-1)\log s_{j'} - v_{j'}s_{j'} + \log \pi_{j't'}.  
\end{align*}
Define:
\begin{align*}
    \overline{\omega}_{j't'} &=  \omega_{j'} +  \sum_{t=t'}^{T} \E_{q_{-j'}}\left[\lambda_{-j't}\right], \\
    \overline{b}_{j't'}  &= \overline{\omega}_{j't'}^{-1}  \sum_{t=t'}^{T} \E_{q_{-j'}}\left[\lambda_{-j't}r_{-j't}\right], \\
    \overline{u}_{j't'} &= u_{j'} + \frac{T-t'+1}{2}, \\
    \overline{v}_{j't'} &= v_{j'} - \frac{\overline{\omega}_{j't'}\overline{b}^2_{j't'}}{2} + \frac{1}{2}\sum_{t=t'}^{T}\E_{q_{-j'}}\left[\lambda_{-j't}r^2_{-j't}\right].
\end{align*}
Then we have:
\begin{align*}
    \log q_{j'}(b_{j'}, s_{j'}, \tau_{j'} = t')
    &\underset{\boldsymbol{\theta}_{j'}}{\propto} -\log\Gamma(\overline{u}_{j't'}) + \overline{u}_{j't'} \log \overline{v}_{j't'}  + \left(\overline{u}_{j't'} -1\right)\log s_{j'}  - \overline{v}_{j't'} s_{j'} \\
    &\quad + \frac{1}{2}\log (\overline{\omega}_{j't'}s_{j'}) - \frac{\overline{\omega}_{j't'} s_{j'}}{2} (b_{j'} - \overline{b}_{j't'} )^2 \\
    &\quad + \log \pi_{j't'} + \log\Gamma(\overline{u}_{j't'}) - \overline{u}_{j't'} \log \overline{v}_{j't'} - \frac{1}{2}\log \overline{\omega}_{j't'} -\frac{1}{2} \sum_{t=1}^{t'-1} \E_{q_{-j'}}\left[\lambda_{-j't}r_{-j't}^2\right]. 
\end{align*}
The first two lines above are just the log density of a Normal-Gamma distribution with parameters $\overline{b}_{j't'} $, $\overline{\omega}_{j't'}$, $\overline{u}_{j't'}$, and $\overline{v}_{j't'}$. Therefore, the $q_{j'}$ that solves (\ref{eq:elbo-j}) satisfies:
\begin{align*}
    b_{j'}, s_{j'} \:|\: \tau_{j'} = t' \sim \text{Normal-Gamma}(\overline{b}_{j't'} , \overline{\omega}_{j't'}, \overline{u}_{j't'}, \overline{v}_{j't'}),
\end{align*}
which matches (\ref{eq:bs-smscp}) of the meanvar-scp posterior. Integrating the joint density $q_{j'}(\boldsymbol{\theta}_{j'})$ with respect to $b_{j'}$ and $s_{j'}$ and calculating the normalizing constant, we are left with:
\begin{align*}
    q_{j'}(\tau_{j'} = t) &= \frac{\pi_{j't}\Gamma(\overline{u}_{j't})\overline{v}_{j't}^{-\overline{u}_{j't}}\overline{\omega}_{j't}^{-\frac{1}{2}}\exp\left( -\frac{1}{2} \sum_{t'=1}^{t-1} \E_{q_{-j'}}\left[\lambda_{-j't'}r_{-j't'}^2\right] \right)}{\sum_{t'=1}^T \pi_{j't'}\Gamma(\overline{u}_{j't'})\overline{v}_{j't'}^{-\overline{u}_{j't'}}\overline{\omega}_{j't'}^{-\frac{1}{2}}\exp\left( -\frac{1}{2} \sum_{t''=1}^{t'-1} \E_{q_{-j'}}\left[\lambda_{-j't''}r_{-j't''}^2\right] \right)} \\
    &:= \overline{\pi}_{j't}, 
\end{align*}
which matches (\ref{eq:gamma-post-cat1}). Therefore, the $q_{j'}$ that solves (\ref{eq:elbo-j}) has the same functional form as the posterior distribution of the meavar-scp model with parameters $\overline{\boldsymbol{\theta}}_{j'} := \{\overline{b}_{j't}, \overline{\omega}_{j't},\overline{u}_{j't},\overline{v}_{j't},\overline{\pi}_{j't}\}_{t=1}^T$. It remains to show $\overline{\boldsymbol{\theta}}_{j'}$ is the same set of parameters defined in (iii) of Proposition \ref{prop:vb}. We have:
\footnotesize
\begin{align*}
    \E_{q_{-j'}}\left[\lambda_{-j't}\right] &= \E_{q_{-j'}}\left[\prod_{j \neq j'}\lambda_{jt} \prod_{k=1}^K \lambda_{kt}\right] \\
    &= \prod_{j \neq j'}\E_{q_{j}}\left[\lambda_{jt}\right] \prod_{k=1}^K \E_{q_{k}}\left[\lambda_{kt}\right] \tag{by (\ref{eq:mean-field})} \\
    &= \overline{\lambda}_{-jt}, \\
\end{align*}
\begin{align*}
    \E_{q_{-j'}}\left[\lambda_{-j't}r_{-j't}\right] &= \E_{q_{-j'}}\left[\prod_{j \neq j'}\lambda_{jt} \prod_{k=1}^K \lambda_{kt}\left(y_t  - \sum_{j''\neq j'} \mu_{j''t} - \sum_{\ell=1}^L \mu_{\ell t}\right)\right] \\
    &= \prod_{j \neq j'}\E_{q_{j}}\left[\lambda_{jt}\right] \prod_{k=1}^K \E_{q_{k}}\left[\lambda_{kt}\right]\left(y_t  -\sum_{\ell=1}^L\E_{q_{\ell}}\left[\mu_{\ell t}\right]\right) \\
    &\quad-\prod_{k=1}^K \E_{q_{k}}\left[\lambda_{kt}\right]\E_{q_{-j'}}\left[\sum_{j'' \neq j'}\prod_{j \neq j'} \mu_{j''t}\lambda_{jt}\right]  \tag{by (\ref{eq:mean-field})} \\
    &=  \overline{\lambda}_{-j't} \left(y_t  -\sum_{\ell=1}^L\overline{\mu}_{\ell t}\right) \\
    &\quad-\prod_{k=1}^K\overline{\lambda}_{kt} \sum_{j'' \neq j'}\E_{q_{-j'}}\left[ \mu_{j''t}\lambda_{j''t}\prod_{j \not\in\{j',j''\}} \lambda_{jt}\right]  \\
    &= \overline{\lambda}_{-j't} \left(y_t  -\sum_{\ell=1}^L\overline{\mu}_{\ell t}\right) \\
    &\quad-\prod_{k=1}^K\overline{\lambda}_{kt} \sum_{j'' \neq j'}\E_{q_{j''}}\left[ \mu_{j''t}\lambda_{j''t} \right] \prod_{j \not\in\{j',j''\}} \E_{q_{j}}\left[\lambda_{jt}\right] \\
    &= \overline{\lambda}_{-j't} \left(y_t  -\sum_{\ell=1}^L\overline{\mu}_{\ell t}\right) \\
    &\quad-\prod_{k=1}^K\overline{\lambda}_{kt} \sum_{j \neq j'}\E_{q_{j}}\left[ \mu_{jt}\lambda_{jt} \right] \overline{\lambda}_{jt}^{-1}\overline{\lambda}_{-j't}\tag{multiply and divide by $\overline{\lambda}_{jt}$}\\
    &=\overline{\lambda}_{-j't} \left(y_t  -  \sum_{j \neq j'}\overline{\lambda}_{jt}^{-1}\E_{q_{j}}\left[ \mu_{jt}\lambda_{jt} \right] -\sum_{\ell=1}^L\overline{\mu}_{\ell t}\right), \\
\end{align*}
\begin{align*} 
    \E_{q_{j}}\left[ \mu_{jt}\lambda_{jt} \right] &= \E_{q_{j}}\left\{\E_{q_{j}}\left[\E_{q_{j}}\left[\mu_{j t}\:|\: s_j, \tau_j\right]\lambda_{j t}\:|\: \tau_j\right]\right\} \label{eq:lambda-beta} \\
    &= \sum_{t'=1}^T  \E_{q_{j}}\left[\E_{q_{j}}\left[\mu_{j t}\:|\: s_j, \tau_j = t'\right]\lambda_{j t}\:|\: \tau_j = t'\right]\overline{\pi}_{j t'} \\
    &= \sum_{t'=1}^t \overline{b}_{j t'}\E_{q_j}\left[s_j^{\mathbbm{1}\{t \geq t'\}}\:|\: \tau_j= t' \right] \overline{\pi}_{jt'} \tag{$\E_{q_j}\left[\mu_{j t}\:|\: s_j, \tau_j = t'\right] = \mathbbm{1}{\{t \geq t'\}} \overline{b}_{j t'}$} \\
    &= \sum_{t'=1}^t \overline{b}_{j t'}\overline{s}_{jt'} \overline{\pi}_{jt'} \\
    &= \sum_{t'=1}^t \frac{\overline{\pi}_{jt'}\overline{b}_{j t'}\overline{u}_{jt'}}{\overline{v}_{jt'}} 
\end{align*}
\normalsize
Define the partial mean residual $\tilde{r}_{-j t} := \tilde{r}_{t} + \E_{q_{j}}[\mu_{j t}]$, the partial scale residual $\overline{\lambda}_{-j t} := \overline{\lambda}_{jt}^{-1}\overline{\lambda}_t$, and the partial correction term: 
\begin{align}
    \delta_{-j't} := \delta_t - \overline{\lambda}_{j't}^{-1}\E_{q_{j'}}\left[ \lambda_{jt}\mu_{j't}^2\right] +\overline{\lambda}_{j't}^{-2}\E_{q_{j'}}\left[ \lambda_{j't}\mu_{j't}\right]^2. \label{eq:delta-j}
\end{align}
Then we have:
\begin{align*}
     \E_{q_{-j'}}\left[\lambda_{-j't}r_{-j't}\right] &=\overline{\lambda}_{-j't}\tilde{r}_{-j' t} 
\end{align*}
We also have:
\footnotesize
\begin{align*}
    \E_{q_{-j'}}\left[\lambda_{-j't}r^2_{-j't}\right] &=  \E_{q_{-j'}}\left[\prod_{j \neq j'}\lambda_{jt} \prod_{k=1}^K \lambda_{kt}\left(y_t  - \sum_{j''\neq j'} \mu_{j''t} - \sum_{\ell=1}^L \mu_{\ell t}\right)^2\right] \\
    &= \prod_{j \neq j'}\E_{q_{j}}\left[\lambda_{jt}\right] \prod_{k=1}^K \E_{q_{k}}\left[\lambda_{kt}\right]\E_{q_{-j'}}\left[\left(y_t  - \sum_{\ell=1}^L \mu_{\ell t}\right)^2\right] \\
    &\quad -2  \prod_{k=1}^K \E_{q_{k}}\left[\lambda_{kt}\right]\E_{q_{-j'}}\left[y_t   - \sum_{\ell=1}^L \mu_{\ell t}\right] \sum_{j''\neq j'} \E_{q_{j''}}\left[\lambda_{j''t} \mu_{j''t}\right]\prod_{j \not\in \{j',j''\}}\E_{q_{j}}[\lambda_{jt}]  \\
    &\quad +  \prod_{k=1}^K \E_{q_{k}}\left[ \lambda_{kt}\right]\E_{q_{-j'}}\left[\sum_{j^*\neq j'}\sum_{j''\neq j'}  \prod_{j \neq j'} \lambda_{jt} \mu_{j''t} \mu_{j^*t}\right] \\
    &= \overline{\lambda}_{-j't}\left[\left(y_t  - \sum_{\ell=1}^L \overline{\mu}_{\ell t}\right)^2 + \sum_{\ell=1}^L \left(\overline{\mu^2_{\ell t}} - \overline{\mu}^2_{\ell t} \right) \right] \\
    &\quad -2  \overline{\lambda}_{-j't}\left(y_t   - \sum_{\ell=1}^L\overline{\mu}_{\ell t}\right) \sum_{j\neq j'} \overline{\lambda}_{jt}^{-1} \E_{q_{j}}\left[\lambda_{jt} \mu_{jt}\right] \\
    &\quad + \overline{\lambda}_{-j't} \sum_{j\neq j'}\sum_{j''\neq j'} \overline{\lambda}_{j''t}^{-1}\E_{q_{j''}}\left[\lambda_{j''t} \mu_{j''t}\right]\overline{\lambda}_{jt}^{-1}\E_{q_{j}}\left[ \lambda_{jt}\mu_{jt}\right] \\
    &\quad + \overline{\lambda}_{-j't} \sum_{j\neq j'}\left(\overline{\lambda}_{jt}^{-1}\E_{q_{j}}\left[ \lambda_{jt}\mu_{jt}^2\right] -\overline{\lambda}_{jt}^{-2}\E_{q_{j}}\left[ \lambda_{jt}\mu_{jt}\right]^2\right) \\
    &=  \overline{\lambda}_{-j't}\left(y_t  - \sum_{j\neq j'} \overline{\lambda}_{j't}^{-1} \E_{q_{j}}\left[\lambda_{jt} \mu_{jt}\right] - \sum_{\ell=1}^L \overline{\mu}_{\ell t}\right)^2 \\
    &\quad + \overline{\lambda}_{-j't}\left[\sum_{\ell=1}^L \left(\overline{\mu^2_{\ell t}} - \overline{\mu}^2_{\ell t} \right) + \sum_{j\neq j'}\left(\overline{\lambda}_{jt}^{-1}\E_{q_{j}}\left[ \lambda_{jt}\mu_{jt}^2\right] -\overline{\lambda}_{jt}^{-2}\E_{q_{j}}\left[ \lambda_{jt}\mu_{jt}\right]^2\right)\right] \\
    &= \overline{\lambda}_{-j't}\left[\tilde{r}^2_{-j' t} +\sum_{\ell=1}^L \Var_{q_\ell} \left(\mu_{\ell t} \right) + \sum_{j\neq j'}\left(\overline{\lambda}_{jt}^{-1}\E_{q_{j}}\left[ \lambda_{jt}\mu_{jt}^2\right] -\overline{\lambda}_{jt}^{-2}\E_{q_{j}}\left[ \lambda_{jt}\mu_{jt}\right]^2\right)\right]\\
    &= \overline{\lambda}_{-j't}\left[\tilde{r}^2_{-j' t} + \delta_{-j't}\right]
\end{align*}
\normalsize
Finally, we have:
\small
\begin{align*}
    \E_{q_{j}}\left[\lambda_{jt}\mu^2_{jt}\right] &= \E_{q_{j}}\left\{\E_{q_{j}}\left[\E_{q_{j}}\left[\mu^2_{jt}\:|\: s_{j}, \tau_{j}\right]\lambda_{jt}\:|\: \tau_{j}\right]\right\} \\
    &= \sum_{t'=1}^T  \E_{q_{j}}\left[\E_{q_{j}}\left[\mu^2_{jt}\:|\: s_{j}, \tau_{j} = t'\right]\lambda_{jt}\:|\: \tau_{j}= t'\right] \overline{\pi}_{jt'} \\
    &= \sum_{t'=1}^t \E_{q_{j}}\left[(\overline{b}_{j t'}^2 + (\overline{\omega}_{j t'}s_{j})^{-1})s_j^{\mathbbm{1}\{t \geq t'\}}\:|\: \tau_j= t'\right] \overline{\pi}_{j t'} \tag{$\E_{q_{j}}[\mu^2_{jt}\:|\: s_j, \tau_j = t'] = \mathbbm{1}{\{t \geq t'\}} (\overline{b}_{jt'}^2 + (\overline{\omega}_{jt'}s_{j})^{-1})$} \\
    &= \sum_{t'=1}^t \left(\overline{b}^2_{jt'}\overline{s}_{jt'} +\overline{\omega}_{jt'}^{-1}\right) \overline{\pi}_{jt'} \\
    &= \sum_{t'=1}^t \overline{\pi}_{jt'}\left(\frac{\overline{b}^2_{jt'}\overline{u}_{jt'}}{\overline{v}_{jt'}} +\frac{1}{\overline{\omega}_{jt'}}\right) .
\end{align*}
\normalsize
We now define variance corrected prior parameters for the rate of $s_{j}$ and location of the change-point:
\begin{align}
    \tilde{v}_{jt} &:= v_j + \frac{1}{2}\sum_{t'=t}^{T}\overline{\lambda}_{-jt'}\delta_{-jt'}, \label{eq:v-j-corrected} \\
    \tilde{\pi}_{jt} &:= \frac{\pi_{jt} \exp\left(-\frac{1}{2}\sum_{t'=1}^{t-1}\overline{\lambda}_{-jt'}\delta_{-jt'}\right)}{\sum_{t'=1}^T \pi_{jt'} \exp\left(-\frac{1}{2}\sum_{t''=1}^{t'-1}\overline{\lambda}_{-jt''}\delta_{-jt''}\right)}. \label{eq:pi-j-corrected}
\end{align}
Then, returning to the posterior parameters for $b_{j}$ and $s_{j}$ when $\tau_{j}=t$, we now have:
\begin{align}
    \overline{\omega}_{jt} &=  \omega_{j} +  \sum_{t'=t}^{T} \overline{\lambda}_{-jt'}, \label{eq:optimal-omega-j}\\
    \overline{b}_{jt}  &= \overline{\omega}_{jt}^{-1}  \sum_{t'=t}^{T} \overline{\lambda}_{-jt'}\tilde{r}_{-j t'}, \label{eq:optimal-b-j}\\
    \overline{u}_{jt} &= u_{j} + \frac{T-t+1}{2}, \label{eq:optimal-u-j}\\
    \overline{v}_{jt} &= v_{j} - \frac{\overline{\omega}_{jt}\overline{b}^2_{jt}}{2} + \frac{1}{2}\sum_{t'=t}^{T}\overline{\lambda}_{-jt'}\tilde{r}^2_{jt'} + \frac{1}{2}\sum_{t'=t}^{T}\overline{\lambda}_{-jt'}\delta_{-jt'} \label{eq:optimal-v-j}\\
    &= \tilde{v}_{jt} - \frac{\overline{\omega}_{jt}\overline{b}^2_{jt}}{2} + \frac{1}{2}\sum_{t'=t}^{T}\overline{\lambda}_{-jt'}\tilde{r}^2_{jt'} + \frac{1}{2}\sum_{t'=t}^{T}\overline{\lambda}_{-jt'}\delta_{-jt'}, \notag \\
    \overline{\pi}_{jt}  &= \frac{\exp\left(-\frac{1}{2}\sum_{t'=1}^{t-1}\overline{\lambda}_{-jt'}\delta_{-jt'}\right)\pi_{jt}\Gamma(\overline{u}_{jt})\overline{v}_{jt}^{-\overline{u}_{jt}}\overline{\omega}_{jt}^{-\frac{1}{2}}\exp\left( -\frac{1}{2} \sum_{t'=1}^{t-1} \overline{\lambda}_{-jt'}\tilde{r}_{-jt'}^2 \right)}{\sum_{t'=1}^T \exp\left(-\frac{1}{2}\sum_{t''=1}^{t'-1}\overline{\lambda}_{-jt''}\delta_{-j't''}\right) \pi_{jt'}\Gamma(\overline{u}_{jt'})\overline{v}_{jt'}^{-\overline{u}_{jt'}}\overline{\omega}_{jt'}^{-\frac{1}{2}}\exp\left( -\frac{1}{2} \sum_{t''=1}^{t'-1} \overline{\lambda}_{-jt''}\tilde{r}_{-jt''}^2\right)} \label{eq:optimal-pi-j} \\
    &= \frac{\tilde{\pi}_{jt}\Gamma(\overline{u}_{jt})\overline{v}_{jt}^{-\overline{u}_{jt}}\overline{\omega}_{jt}^{-\frac{1}{2}}\exp\left( -\frac{1}{2} \sum_{t'=1}^{t-1} \overline{\lambda}_{-jt'}\tilde{r}_{-jt'}^2 \right)}{\sum_{t'=1}^T \tilde{\pi}_{jt'}\Gamma(\overline{u}_{jt'})\overline{v}_{jt'}^{-\overline{u}_{jt'}}\overline{\omega}_{jt'}^{-\frac{1}{2}}\exp\left( -\frac{1}{2} \sum_{t''=1}^{t'-1} \overline{\lambda}_{-jt''}\tilde{r}_{-jt''}^2\right)}. \notag
\end{align}
Examining the meanvar-scp posterior parameters (\ref{eq:meanvar-scp-post-omega})-(\ref{eq:meanvar-scp-post-pi}), it is clear that:
\begin{align*}
    \overline{\boldsymbol{\theta}}_j := \normalfont{\texttt{meanvar-scp}}(\tilde{\mathbf{r}}_{-j,1:T} \:;\:\overline{\boldsymbol{\lambda}}_{-j,1:T}, \omega_j, u_j, \tilde{\mathbf{v}}_{j,1:T}, \tilde{\boldsymbol{\pi}}_{j,1:T}).
\end{align*}
We now show that $\overline{\boldsymbol{\theta}}_j$ is the unique maximizer of $F(\overline{\boldsymbol{\theta}}_{1:N})$ holding $\overline{\boldsymbol{\theta}}_{1:N}\setminus\overline{\boldsymbol{\theta}}_j$ fixed. First note that we have:
\begin{align*}
     F(\overline{\boldsymbol{\theta}}_{1:N}) &\underset{\overline{\boldsymbol{\theta}}_j}{\propto} - \frac{1}{2} \sum_{t=1}^{T} \overline{\lambda}_t(\Tilde{r}_t^2 + \delta_t) + \frac{1}{2}\sum_{t=1}^T (T-t+1)\overline{\pi}_{jt}[\psi(\overline{u}_{jt}) - \log\overline{v}_{jt}]\\
     &\quad - \sum_{t=1}^T\overline{\pi}_{jt}\left[\frac{1}{2}\log\frac{\overline{\omega}_{j t}}{\omega_j}- \frac{1}{2} + \frac{\omega_j}{2\overline{\omega}_{jt}} + \frac{\omega_j\overline{u}_{jt}\overline{b}_{jt}^2}{2\overline{v}_{jt}}\right]\\
     &\quad - \sum_{t=1}^T\overline{\pi}_{jt}\left[u_j \log \frac{\overline{v}_{jt}}{v_j} -\log \frac{\Gamma(\overline{u}_{jt})}{\Gamma(u_j)} + (\overline{u}_{jt} -u_j)\psi(\overline{u}_{jt}) -\frac{(\overline{v}_{jt} - v_j)\overline{u}_{jt}}{\overline{v}_{jt}} +  \log \frac{\overline{\pi}_{j t}}{\pi_{j t}}\right].
\end{align*}
Note that: 
\small
\begin{align*}
    \overline{\lambda}_t(\Tilde{r}_t^2 + \delta_t) &= \overline{\lambda}_{jt}\overline{\lambda}_{-jt}\left[\left(\Tilde{r}_{-jt} - \frac{\E_{q_j}[\lambda_{jt}\mu_{jt}]}{\overline{\lambda}_{jt}}\right)^2 + \delta_{-jt} + \frac{\E_{q_j}[\lambda_{jt} \mu^2_{jt}]}{\overline{\lambda}_{jt}} -\frac{\E_{q_j}[\lambda_{jt} \mu_{jt}]^2}{\overline{\lambda}_{jt}^2}\right] \\
    &= \overline{\lambda}_{jt}\overline{\lambda}_{-jt}(\Tilde{r}_{-jt}^2 + \delta_{-jt}) + \overline{\lambda}_{-jt}(\E_{q_j}[\lambda_{jt} \mu^2_{jt}] - 2 \Tilde{r}_{-jt} \E_{q_j}[\lambda_{jt}\mu_{jt}]) \\
    &= \left(\sum_{t'=1}^t  \frac{\overline{\pi}_{j t'}\overline{u}_{jt'}}{\overline{v}_{jt'}} + \sum_{t'=t+1}^T \overline{\pi}_{j t'}\right)\overline{\lambda}_{-jt}(\Tilde{r}_{-jt}^2 + \delta_{-jt}) \\
    &\quad + \overline{\lambda}_{-jt}\left(\sum_{t'=1}^t \overline{\pi}_{jt'}\left(\frac{\overline{b}^2_{jt'}\overline{u}_{jt'}}{\overline{v}_{jt'}} +\frac{1}{\overline{\omega}_{jt'}}\right) - 2 \Tilde{r}_{-jt} \sum_{t'=1}^t \frac{\overline{\pi}_{jt'}\overline{b}_{j t'}\overline{u}_{jt'}}{\overline{v}_{jt'}} \right).
\end{align*}
\normalsize
Thus:
\begin{align*}
    F(\overline{\boldsymbol{\theta}}_{1:N}) &\underset{\overline{\boldsymbol{\theta}}_j}{\propto} - \frac{1}{2} \left(\sum_{t'=1}^t  \frac{\overline{\pi}_{j t'}\overline{u}_{jt'}}{\overline{v}_{jt'}} + \sum_{t'=t+1}^T \overline{\pi}_{j t'}\right)\overline{\lambda}_{-jt}(\Tilde{r}_{-jt}^2 + \delta_{-jt}) \\
    &\quad- \frac{1}{2} \sum_{t=1}^{T} \overline{\lambda}_{-jt}\left(\sum_{t'=1}^t \overline{\pi}_{jt'}\left(\frac{\overline{b}^2_{jt'}\overline{u}_{jt'}}{\overline{v}_{jt'}} +\frac{1}{\overline{\omega}_{jt'}}\right) - 2 \Tilde{r}_{-jt} \sum_{t'=1}^t \frac{\overline{\pi}_{jt'}\overline{b}_{j t'}\overline{u}_{jt'}}{\overline{v}_{jt'}} \right) \\
    &\quad+ \frac{1}{2}\sum_{t=1}^T (T-t+1)\overline{\pi}_{jt}[\psi(\overline{u}_{jt}) - \log\overline{v}_{jt}]\\
    &\quad - \sum_{t=1}^T\overline{\pi}_{jt}\left[\frac{1}{2}\log\frac{\overline{\omega}_{j t}}{\omega_j}- \frac{1}{2} + \frac{\omega_j}{2\overline{\omega}_{jt}} + \frac{\omega_j\overline{u}_{jt}\overline{b}_{jt}^2}{2\overline{v}_{jt}}\right]\\
    &\quad - \sum_{t=1}^T\overline{\pi}_{jt}\left[u_j \log \frac{\overline{v}_{jt}}{v_j} -\log \frac{\Gamma(\overline{u}_{jt})}{\Gamma(u_j)} + (\overline{u}_{jt} -u_j)\psi(\overline{u}_{jt}) -\frac{(\overline{v}_{jt} - v_j)\overline{u}_{jt}}{\overline{v}_{jt}} +  \log \frac{\overline{\pi}_{j t}}{\pi_{j t}}\right]. 
\end{align*}
Assume that $\overline{\pi}_{jt} > 0$ for the moment (we will momentarily show that $\pi_{j t} > 0$ implies that this is true for the optimal value of $\overline{\pi}_{j t}$), then the expression above is strictly concave as a function of $\overline{b}_{j t}$ and:
\begin{align*}
    \frac{\partial F}{\partial \overline{b}_{j t}} = \frac{\overline{\pi}_{j t}\overline{u}_{jt}}{\overline{v}_{jt}}\sum_{t'=t}^T \overline{\lambda}_{-jt'}\tilde{r}_{-j t'} -  \left(\omega_j + \sum_{t'=t}^T\overline{\lambda}_{-jt'}\right)\frac{\overline{\pi}_{j t} \overline{u}_{j t}\overline{b}_{j t}}{\overline{v}_{j t}},
\end{align*}
Since $\overline{\pi}_{j t}, \overline{u}_{j t}, \overline{v}_{j t}  > 0$, $F$ is uniquely maximized by setting:
\begin{align*}
    \overline{b}_{jt} &= \frac{\sum_{t'=t}^T \overline{\lambda}_{-jt'}\tilde{r}_{-j t'}}{\omega_j + \sum_{t'=t}^T\overline{\lambda}_{-jt'}}, 
\end{align*}
which matches (\ref{eq:optimal-b-j}). Next:
\begin{align*}
    \frac{\partial F}{\partial \overline{\omega}_{j t}} = \frac{\overline{\pi}_{jt}\left(\omega_j + \sum_{t'=t}^T\overline{\lambda}_{-jt}\right)} {2\overline{\omega}_{j t}^2} - \frac{\overline{\pi}_{j t}} {2\overline{\omega}_{jt}} 
    \begin{cases}
        >0, &\text{if } \overline{\omega}_{j t} < \omega_j + \sum_{t'=t}^T\overline{\lambda}_{-jt'}, \\
        =0, &\text{if } \overline{\omega}_{j t} = \omega_j + \sum_{t'=t}^T\overline{\lambda}_{-jt'}, \\
        <0, &\text{if } \overline{\omega}_{j t} > \omega_j + \sum_{t'=t}^T\overline{\lambda}_{-jt'}.
    \end{cases}
\end{align*}
So when $\overline{\pi}_{jt}> 0$, $F$ is uniquely maximized by setting:
\begin{align*}
    \overline{\omega}_{j t} = \omega_j + \sum_{t'=t}^T\overline{\lambda}_{-jt'}, 
\end{align*}
which matches (\ref{eq:optimal-omega-j}). Next, we have:
\small
\begin{align*}
    \frac{\partial F}{\partial \overline{v}_{jt}} = \frac{\overline{\pi}_{jt}\overline{u}_{jt}}{\overline{v}_{jt}^2} \left[v_j + \frac{\sum_{t'=t}^T \overline{\lambda}_{-jt'}(\tilde{r}^2_{-jt'} + \delta_{-jt'})}{2} + \frac{\overline{b}_{jt}^2\left(\omega_j + \sum_{t'=t}^T \overline{\lambda}_{-jt'}\right)}{2} - \overline{b}^2_{jt} \sum_{t'=t}^T\overline{\lambda}_{-jt'}\tilde{r}_{-jt'}\right] - \frac{\overline{\pi}_{jt}}{\overline{v}_{jt}}\left[u_j + \frac{(T-t+1)}{2}\right].
\end{align*}
\normalsize
For any given $\overline{b}_{jt},\overline{u}_{jt} $, define:
\small
\begin{align*}
    \overline{v}_{jt}\circ(\overline{u}_{jt}, \overline{b}_{jt}) := \frac{\overline{u}_{jt}}{u_j + \frac{(T-t+1)}{2}}\left[v_j + \frac{\sum_{t'=t}^T \overline{\lambda}_{-jt'}(\tilde{r}^2_{-jt'} + \delta_{-jt'})}{2} + \frac{\overline{b}_{jt}^2\left(\omega_j + \sum_{t'=t}^T \overline{\lambda}_{-jt'}\right)}{2}- \overline{b}^2_{jt} \sum_{t'=t}^T\overline{\lambda}_{-jt'}\tilde{r}_{-jt'}\right].
\end{align*}
\normalsize
Then:
\begin{align*}
\frac{\partial F}{\partial \overline{v}_{jt}}
    \begin{cases}
        > 0, &\text{if } \overline{v}_{jt} < \overline{v}_{jt}\circ(\overline{u}_{jt}, \overline{b}_{jt}), \\
        = 0, &\text{if } \overline{v}_{jt} = \overline{v}_{jt}\circ(\overline{u}_{jt}, \overline{b}_{jt}), \\
        < 0, &\text{if } \overline{v}_{jt} > \overline{v}_{jt}\circ(\overline{u}_{jt}, \overline{b}_{jt}).
    \end{cases}
\end{align*}
So $F$ is uniquely maximized by setting $\overline{v}_{jt} = \overline{v}_{jt}\circ(\overline{u}_{jt}, \overline{b}_{jt})$. Next we have:
\begin{align*}
    \frac{\partial F}{\partial \overline{u}_{j t}}\bigg|_{\overline{v}_{jt} = \overline{v}_{jt}\circ(\overline{b}_{jt},\overline{u}_{jt})} &= \overline{\pi}_{kt}\psi'(\overline{u}_{kt})\left(u_k + \frac{T-t+1}{2} -\overline{u}_{kt} \right) - \frac{\overline{\pi}_{kt}}{\overline{u}_{kt}}\left(u_k + \frac{T-t+1}{2}\right) +\overline{\pi}_{kt}.
\end{align*}
Therefore an identical argument as in part (ii) shows that $F$ is uniquely maximized when we set:
\begin{align*}
    \overline{u}_{jt} &= u_j + \frac{T-t+1}{2} 
\end{align*}
which matches (\ref{eq:optimal-u-j}). Plugging in the optimal values of $\overline{u}_{jt}$ into $\overline{v}_{jt}\circ(\overline{u}_{jt}, \overline{b}_{jt})$, we get:
\begin{align*}
    \overline{v}_{jt} = v_j + \frac{\sum_{t'=t}^T \overline{\lambda}_{-jt'}(\tilde{r}^2_{-jt'} + \delta_{-jt'})}{2} + \frac{\overline{b}_{jt}^2\left(\omega_j + \sum_{t'=t}^T \overline{\lambda}_{-jt'}\right)}{2}- \overline{b}^2_{jt} \sum_{t'=t}^T\overline{\lambda}_{-jt'}\tilde{r}_{-jt'}, 
\end{align*}
which is equivalent to:
\begin{align*}
    \overline{v}_{jt} = v_j + \frac{\sum_{t'=t}^T \overline{\lambda}_{-jt'}(\tilde{r}^2_{-jt'} + \delta_{-jt'})}{2} - \frac{\overline{\omega}_{jt}\overline{b}_{jt}^2}{2}, 
\end{align*}
at the optimal value of $\overline{\omega}_{jt}$, and which matches (\ref{eq:optimal-v-j}) at the optimal value of $\overline{b}_{jt}$. Finally, an identical argument as in parts (i) and (ii) gives that the $\overline{\pi}_{jt}$ that uniquely maximizes $F$ matches the value (\ref{eq:optimal-pi-j}), which is positive so long as $\pi_{jt}>0$.
\end{proof}
\subsection{Estimating \texorpdfstring{$\mu_0$}{mu0} and \texorpdfstring{$\lambda_0$}{lambda0} with Empirical Bayes}
\label{app:empirical-bayes}

In most cases, $\E[y_1]\neq 0$ and $\Var(y_1)\neq 1$; however, $\mathbf{y}_{1:T}$ is nonstationary so standardizing the data is not straightforward. In this case, it is desirable to include an intercept $\mu_0$ and initial scale $\lambda_0$ term in (\ref{eq:mu_t}) and (\ref{eq:lambda_t}):
\begin{align*}
    \mu_{t} &:= \mu_0 + \sum_{i = 1}^{J^*+L^*} \mu_{i t} := \sum_{j = 1}^{J^*} b_j\mathbbm{1}\{t\geq\tau_j\} + \sum_{\ell = J^* + 1}^{J^*+L^*} b_\ell\mathbbm{1}\{t\geq\tau_\ell\}, \\
    \lambda_t &:= \lambda_0\prod_{i=1}^{J^*+K^*} \lambda_{it} := \prod_{j=1}^{J^*} 
 s_{j}^{\mathbbm{1}\{t\geq\tau_{j}\}}\prod_{k=J^*+L^*+1}^{N} 
 s_{k}^{\mathbbm{1}\{t\geq\tau_{k}\}}.
\end{align*}
We treat $\mu_0$ and $\lambda_0$ as unknown nuisance parameters that must be estimated. One approach is to just set $\mu_0 = 0$ and $\lambda_0 = 1$ in Algorithm \ref{alg:mich}. Then, if $\mu_0 \neq 0$ or $\lambda_0 \neq 1$, Algorithm \ref{alg:mich} will attempt to detect a change-point at $t=1$ and center $\mu_{t}$ and $\lambda_t$ around $\mu_0$ and $\lambda_0$ respectively for all $t$ up until the first real change-point. The problem with fitting $\mu_0$ and $\lambda_0$ this way is that it requires MICH to search for a change-point whose location is known. Searching for an additional change-point requires MICH to estimate  $\mathcal{O}(T)$ additional parameters, when we only actually needed to fit two parameters $\mu_0$ and $\lambda_0$. For example, in the case of $\mu_0 \neq 0$ and $\lambda_0 \neq 1$, we end up having to estimate $5T$ additional parameters $\{\overline{b}_t, \overline{\tau}_t, \overline{u}_t, \overline{v}_t, \overline{\pi}_t\}_{t=1}^T$. Another problem with this approach is that we risk missing a real change-point, e.g. if the are $J^*>0$ true changes and we set $J=J^*$ to fit MICH, we will end up using one component of $J$ to fit $\mu_0$ and $\lambda_0$ leaving only $J^*-1$ components to fit the remaining $J^*$ true change-points. 

To circumvent these challenges, we adopt an empirical Bayes (EB) approach for estimating $\mu_0$ and $\lambda_0$. We seek the values $\mu_0$ and $\lambda_0$ that maximize the evidence $\log p(\mathbf{y}_{1:T}\:;\mu_0,\lambda_0)$. Since the evidence is now a function of $\mu_0$ and $\lambda_0$, so is the ELBO:
\begin{align}
    \text{ELBO}(q\:;\mu_0,\lambda_0) &:= \int q(\boldsymbol{\Theta}) \log \frac{ p(\mathbf{y}_{1:T},\boldsymbol{\Theta}\:;\mu_0,\lambda_0)}{q(\boldsymbol{\Theta})} \; d\boldsymbol{\Theta} \\
    &\;= \log p(\mathbf{y}_{1:T}\:;\mu_0,\lambda_0) - \text{KL}( q \:\lVert\: p)
\end{align}
Note that in the expression above, $\text{ELBO}(q;\mu_0,\lambda_0)$ is equivalent to $\log p(\mathbf{y}_{1:T};\mu_0,\lambda_0)$ plus a term that does not depend on $\mu_0$ or $\lambda_0$. Therefore, for fixed $q$, the $\mu_0$ and $\lambda_0$ that maximize the evidence are equivalent to: 
\begin{align}
    \argmax{\{\mu_0,\lambda_0\}\in\mathbb{R}\times\mathbb{R}_{++}} \; \text{ELBO}(q\:;\mu_0,\lambda_0),
\end{align}
which in turn is equivalent to:
\begin{align}\label{eq:EB-max}
    \argmax{\{\mu_0,\lambda_0\}\in\mathbb{R}\times\mathbb{R}_{++}} \; \E_{q} \left[\log p\left(\mathbf{y}_{1:T} \:|\: \boldsymbol{\Theta};\mu_0,\lambda_0\right)\right]. 
\end{align}
\cite{Wang20} use this same approach, and as they note, optimizing the ELBO over $\mu_0$ and $\lambda_0$ constitutes the M-step of an EM algorithm where the E-step is approximate (\citealp{Dempster77, Heskes03, Neal98}) do to the use of $q$. Define $\tilde{r}_t$, $\overline{\lambda}_t$ and $\delta_t$ as in (\ref{eq:mod-resid})-(\ref{eq:delta}) along with: 
\begin{align}
    \tilde{r}_{-0t} &:= \tilde{r}_t + \mu_0 \\
    \overline{\lambda}_{-0t} &:= \lambda_0^{-1}\overline{\lambda}_t
\end{align}
then (\ref{eq:EB-max}) can be rewritten as:
\begin{align}\label{eq:EB-max-simple}
    \argmax{\{\mu_0,\lambda_0\}\in\mathbb{R}\times\mathbb{R}_{++}} \; \frac{T }{2}\log\lambda_0 - \frac{\lambda_0}{2}\sum_{t=1}^{T} \overline{\lambda}_{-0t}\left[\left(\tilde{r}_{-0t}-\mu_0\right)^2 + \delta_t\right].
\end{align}
The solutions to (\ref{eq:EB-max-simple}) is given by:
\begin{align}
    \hat{\mu}_0 &= \frac{\sum_{t=1}^{T} \overline{\lambda}_{-0t}\tilde{r}_{-0t}}{\sum_{t=1}^{T} \overline{\lambda}_{-0t}} \label{eq:EB-max-solution-mu} \\
    \hat{\lambda}_0 &= \left[\frac{\sum_{t=1}^{T} \overline{\lambda}_{-0t}[\left(\tilde{r}_{-0t}-\hat{\mu}_0\right)^2 + \delta_t]}{T}\right]^{-1}. \label{eq:EB-max-solution-lambda}
\end{align}
We incorporate the estimation step for $\mu_0$ and $\lambda_0$ into Algorithm \ref{alg:mich-eb}.

\begin{algorithm}[!h]
\label{alg:mich-eb}
\caption{Empirical Bayes Variational Approximation to MICH Posterior.}

\small
\SetAlgoLined
  Inputs: $\mathbf{y}_{1:T},\:L,\:K,\:J,\:\omega_0,\:u_0,v_0,\boldsymbol{\pi}_{1:T}$\:; \\
  Initialize: $\mu_0,\:\lambda_0,\:\overline{\boldsymbol{\theta}}_{1:N}$\:;
  
  \Repeat {Convergence} {
  \For{$j=1$ \KwTo $J$} {
      $\tilde{r}_{-jt} := y_t -\mu_0- \sum_{j' \neq j} \E_{q_{j'}}[\lambda_{j't} \mu_{j't}] / \E_{q_{j'}}[\lambda_{j't}]- \sum_{\ell =J+ 1}^{J+L} \E_{q_\ell}[\mu_{\ell t}]$ \tcp*{j\textsuperscript{th} partial mean residual} 
      $\overline{\lambda}_{-jt} := \lambda_0\prod_{j' \neq j}\E_{q_{j'}}[\lambda_{j't}] \prod_{k=J+L+1}^N \E_{q_k}[\lambda_{kt}]$ \tcp*{j\textsuperscript{th} partial scale residual}
      Compute $\delta_{-j t}$, $\tilde{v}_{jt}$, and $\tilde{\pi}_{jt}$ by (\ref{eq:delta-j}), (\ref{eq:v-j-corrected}), and (\ref{eq:pi-j-corrected}) \tcp*{variance corrected priors} 
      $\overline{\boldsymbol{\theta}}_j := \texttt{meanvar-scp}(\tilde{\mathbf{r}}_{-j,1:T} \:;\:\overline{\boldsymbol{\lambda}}_{-j,1:T}, \omega_0, u_0, \tilde{\mathbf{v}}_{j,1:T}, \tilde{\boldsymbol{\boldsymbol{\pi}}}_{j,1:T})$ \tcp*{update meanvar-scp parameters}
    }
    \For{$\ell=J+1$ \KwTo $J+L$} {
      $\tilde{r}_{-\ell t} := y_t -\mu_0- \sum_{j =1}^J \E_{q_j}[\lambda_{jt} \mu_{jt}] / \E_{q_j}[\lambda_{jt}]- \sum_{\ell' \neq \ell} \E_{q_{\ell'}}[\mu_{\ell' t}]$ \tcp*{l\textsuperscript{th} partial mean residual}
      $\overline{\lambda}_{t} := \lambda_0\prod_{j=1}^J\E_{q_j}[\lambda_{jt}]\prod_{k=J+L+1}^N \E_{q_k}[\lambda_{kt}]$ \tcp*{precision of residual} 
      $\overline{\boldsymbol{\theta}}_\ell := \texttt{mean-scp}(\tilde{\mathbf{r}}_{-\ell,1:T} \:;\: \overline{\boldsymbol{\lambda}}_{1:T}, \omega_0, \boldsymbol{\boldsymbol{\pi}}_{1:T})$ \tcp*{update mean-scp parameters}
    }
    \For{$k=J+L+1$ \KwTo $N$} {
      $\tilde{r}_{t} := y_t  -\mu_0- \sum_{j=1}^J \E_{q_j}[\lambda_{jt} \mu_{jt}] / \E_{q_j}[\lambda_{jt}]- \sum_{\ell = J+1}^{J+L} \E_{q_\ell}[\mu_{\ell t}] $ \tcp*{mean residual} 
      $\overline{\lambda}_{-kt} := \lambda_0\prod_{j=1}^J\E_{q_j}[\lambda_{jt}]\prod_{k' \neq k} \E_{q_{k'}}[\lambda_{k't}] $ \tcp*{k\textsuperscript{th} partial scale residual} 
      Compute $\delta_t$, $\tilde{v}_{kt}$, and $\tilde{\pi}_{kt}$ by (\ref{eq:delta}), (\ref{eq:v-k-corrected}), and (\ref{eq:pi-k-corrected}) \tcp*{variance corrected priors} 
      $\overline{\boldsymbol{\theta}}_k := \texttt{var-scp}(\tilde{\mathbf{r}}_{1:T} 
      \:;\:\overline{\boldsymbol{\lambda}}_{-k,1:T}, u_0, \tilde{\mathbf{v}}_{k,1:T}, \tilde{\boldsymbol{\boldsymbol{\pi}}}_{k,1:T})$ \tcp*{update var-scp parameters}
    }
    $\tilde{r}_{-0t} := y_t - \sum_{j=1}^J \E_{q_j}[\lambda_{jt} \mu_{jt}] / \E_{q_j}[\lambda_{jt}]- \sum_{\ell = J+1}^{J+L} \E_{q_\ell}[\mu_{\ell t}] $\:; \\
    $\overline{\lambda}_{-0t} := \prod_{j=1}^J\E_{q_j}[\lambda_{jt}]\prod_{k=J+L+1}^N \E_{q_k}[\lambda_{kt}]\;$; \\
    Compute $\mu_0$ and $\lambda_0$ by (\ref{eq:EB-max-solution-mu}) and (\ref{eq:EB-max-solution-lambda}) \tcp*{EB step} 
  }
  \Return{Model Parameters: $\mu_0,\:\lambda_0,\:\overline{\boldsymbol{\theta}}_{1:N}$}.
\end{algorithm}

Note that Algorithm \ref{alg:mich-eb} embeds Algorithm \ref{alg:mich} with the additional step of maximizing the ELBO with respect to $\mu_0$ and $\lambda_0$ while holding the approximate posterior $q$ fixed. Therefore, Algorithm \ref{alg:mich-eb} also defines a coordinate ascent procedure. The objective function in (\ref{eq:EB-max-simple}) is continuously differentiable with respect to $\mu_0$ and $\lambda_0$, so if we can show that $\hat{\mu}_0$ and $\hat{\lambda}_0$ are the unique maximizers of (\ref{eq:EB-max-simple}), then the conditions of Proposition 2.7.1 of \cite{Bertsekas97} will still hold even with the added coordinate ascent step of maximizing the ELBO with respect to $\mu_0$ and $\lambda_0$. 

To prove the uniqueness of $\hat{\mu}_0$ and $\hat{\lambda}_0$, first note that the term:
\begin{align*}
    -\frac{1}{2}\sum_{t=1}^{T} \overline{\lambda}_{-0t}\left(\tilde{r}_{-0t}-\mu_0\right)^2
\end{align*}
is strictly concave as a function of $\mu_0$ and is maximized at $\hat{\mu}_0$. Therefore, for any $\mu_0\in\mathbb{R}\setminus\{\hat{\mu}_0\}$ we have:
\begin{align*}
    -\frac{1}{2}\sum_{t=1}^{T} \overline{\lambda}_{-0t}\left(\tilde{r}_{-0t}-\hat{\mu}_0\right)^2 > -\frac{1}{2}\sum_{t=1}^{T} \overline{\lambda}_{-0t}\left(\tilde{r}_{-0t}-\mu_0\right)^2
\end{align*}
and thus for any $\lambda_0>0$,
\begin{align*}
    \frac{T }{2}\log\lambda_0-\frac{\lambda_0}{2}\sum_{t=1}^{T} \overline{\lambda}_{-0t}\left(\tilde{r}_{-0t}-\hat{\mu}_0\right)^2 > \frac{T }{2}\log\lambda_0-\frac{\lambda_0}{2}\sum_{t=1}^{T} \overline{\lambda}_{-0t}\left(\tilde{r}_{-0t}-\mu_0\right)^2.
\end{align*}
Since the $\log$ is strictly concave, the left-hand side of the above inequality is uniquely maximized by $\hat{\lambda}_0$. So for any $\lambda_0 > 0$, we have 
\begin{align*}
    \frac{T }{2}\log\hat{\lambda}_0-\frac{\hat{\lambda}_0}{2}\sum_{t=1}^{T} \overline{\lambda}_{-0t}\left(\tilde{r}_{-0t}-\hat{\mu}_0\right)^2 &> \frac{T }{2}\log\lambda_0-\frac{\lambda_0}{2}\sum_{t=1}^{T} \overline{\lambda}_{-0t}\left(\tilde{r}_{-0t}-\hat{\mu}_0\right)^2 \\
    &> \frac{T }{2}\log\lambda_0-\frac{\lambda_0}{2}\sum_{t=1}^{T} \overline{\lambda}_{-0t}\left(\tilde{r}_{-0t}-\mu_0\right)^2.
\end{align*}
Proving that $\hat{\mu}_0$ and $\hat{\lambda}_0$ are the global maximizers of (\ref{eq:EB-max-simple}). Therefore, Proposition \ref{prop:coord-ascent} still holds with the included maximization step for $\mu_0$ and $\lambda_0$ and Algorithm \ref{alg:mich-eb} will also converge to a stationary point. 

\subsection{MICH for Multivariate Mean Changes}
\label{app:multi-mich}

In order to simplify the exposition of the MICH model in Section \ref{sec:mich}, we solely focused on detecting multiple mean and variance change-points for univariate data. However, when $d>1$, it is still possible for MICH to identify multiple breaks in the mean, assuming that: 
\begin{align}
    \mathbf{y}_{t} &\overset{\text{ind}.} {\sim}\mathcal{N}_d(\boldsymbol{\mu}_t,\boldsymbol{\Lambda}^{-1}_t), \label{eq:y-multi}\\
    \boldsymbol{\mu}_{t} &:= \boldsymbol{\mu}_0+ \sum_{\ell = 1}^{L^*} \boldsymbol{\mu}_{\ell t}, \label{eq:mu_t-multi} \\
    \boldsymbol{\mu}_{\ell t} &:= \mathbf{b}_\ell\mathbbm{1}\{t \geq\tau_\ell\}, \\
    \tau_\ell &\overset{\text{ind}.}{\sim} \text{Categorical}(\boldsymbol{\pi}_{1:T}) \\
    \mathbf{b}_\ell &\overset{\text{ind}.}{\sim} \mathcal{N}_d(\mathbf{0},\omega_0^{-1}\mathbf{I}_d), \label{eq:b-multi-prior}
\end{align}
where $\boldsymbol{\mu}_0 \in \mathbb{R}^d$ is an unknown intercept and $\boldsymbol{\Lambda}_{1:T}$ is a known sequence of precision matrices. We again define the variable and parameter blocks $\boldsymbol{\theta}_\ell := \{\mathbf{b}_\ell, \tau_\ell\}$ and $\overline{\boldsymbol{\theta}}_\ell := \{\overline{\mathbf{b}}_{\ell t}, \overline{\boldsymbol{\Omega}}_{\ell t}, \overline{\boldsymbol{\pi}}_{\ell t}\}_{t=1}^T$. Then Algorithm \ref{alg:mich-multi} shows how to modify Algorithm \ref{alg:mich-eb} to handle multivariate mean changes of the form (\ref{eq:mu_t-multi}).

\begin{algorithm}
\label{alg:mich-multi}
\caption{Variational Bayes Approximation to MICH Posterior for Mulivariate Mean Changes.}

\footnotesize
\SetAlgoLined
  Inputs: $\mathbf{y}_{1:T},\:\boldsymbol{\Lambda}_{1:T},\:L,\:\omega_0,\boldsymbol{\pi}_{1:T}\:;$ \\
  Initialize: $\boldsymbol{\mu}_0,\:\{\overline{\boldsymbol{\theta}}_\ell\}_{\ell=1}^L\:$;
  
  \Repeat {Convergence} {
    \For{$\ell=1$ \KwTo $L$} {
      $\overline{\mathbf{r}}_{-\ell t} := \mathbf{y}_t -\boldsymbol{\mu}_0 - \sum_{\ell' \neq \ell}^L  \sum_{t'=1}^t \overline{\mathbf{b}}_{\ell' t'} \overline{\pi}_{\ell' t'}$ \tcp*{l\textsuperscript{th} partial mean residual}
      $\overline{\boldsymbol{\theta}}_\ell := \texttt{mean-scp}(\overline{\mathbf{r}}_{-\ell,1:T} \:;\: \boldsymbol{\Lambda}_{1:T}, \omega_{0}, \boldsymbol{\boldsymbol{\pi}}_{1:T})$ \tcp*{update mean parameters}
    }
    $\overline{\mathbf{r}}_{t} := \mathbf{y}_t- \sum_{\ell=1}^L  \sum_{t'=1}^t \overline{\mathbf{b}}_{\ell t'} \overline{\pi}_{\ell t'}$ \tcp*{mean residual}
    $\boldsymbol{\mu}_0 :=\left(\sum_{t=1}^T\boldsymbol{\Lambda}_t\right)^{-1}\sum_{t=1}^T\boldsymbol{\Lambda}_t\overline{\mathbf{r}}_t $\tcp*{update intercept}
  }
  \Return{Posterior Parameters: $\boldsymbol{\mu}_0,\:\{\overline{\boldsymbol{\theta}}_\ell\}_{\ell=1}^L$}.
\end{algorithm}

As with Algorithms \ref{alg:mich} and \ref{alg:mich-eb}, Algorithm \ref{alg:mich-multi}  begins by initializing some values for $\overline{\boldsymbol{\theta}}_{1:L}$, which are then used to approximate the residual $\overline{\mathbf{r}}_{t}$. Each $\overline{\boldsymbol{\theta}}_\ell$ is then updated iteratively by fitting the mean-scp model to the partial residual $\overline{\mathbf{r}}_{-\ell t}$ in place of $\mathbf{y}_t$. Fitting each mean-scp model returns a distribution, so given the value of $\overline{\boldsymbol{\theta}}_{1:L}$ at the current iteration of Algorithm \ref{alg:mich-multi}, we have:
\begin{align}
    \mathbf{b}_\ell \:|\:  \overline{\mathbf{r}}_{-\ell,1:T},  \boldsymbol{\Lambda}_{1:T}, \tau_\ell = t &\sim \mathcal{N}_d\left(\overline{\mathbf{b}}_{\ell t}, \overline{\boldsymbol{\Omega}}^{-1}_{\ell t}\right). \label{eq:mich-multi-b-post}
\end{align}
Let $\{q_\ell\}_{\ell=1}^L$ stand in for the $L$ conditional distributions in (\ref{eq:mich-multi-b-post}) with $q_\ell(\tau_\ell = t) = \overline{\pi}_{\ell t}$. If we define $q:=\prod_{\ell=1}^L q_\ell$, then $q$ is a variational approximation to the true posterior $p:=p(\boldsymbol{\theta}_{1:L}\:\mathbf{y}_{1:T}; \boldsymbol{\mu}_0,\boldsymbol{\Lambda}_{1:T})$ that converges to a stationary point of the ELBO:
\begin{align*}
    \text{ELBO}(q\:;\:\boldsymbol{\mu}_0,\boldsymbol{\Lambda}_{1:T}) &:= \int q(\boldsymbol{\theta}_{1:L}) \log \frac{ p(\mathbf{y}_{1:T},\boldsymbol{\theta}_{1:L};\boldsymbol{\mu}_0,\boldsymbol{\Lambda}_{1:T})}{q(\boldsymbol{\theta}_{1:L})} \; d\boldsymbol{\theta}_{1:L} \\
    &\;= \log p(\mathbf{y}_{1:T};\boldsymbol{\mu}_0,\boldsymbol{\Lambda}_{1:T}) - \text{KL}( q \:\lVert\: p).
\end{align*}
This follows directly from the following result: 

\begin{proposition} 
\label{prop:vb-multi}
Assume the model (\ref{eq:y-multi})-(\ref{eq:b-multi-prior}) with $\omega_0, \pi_{ t} > 0$. Let $\{q_\ell\}_{\ell =1}^L$ be an arbitrary collection of distributions such that $q:=\prod_{\ell=1}^L q_\ell$ and each coordinate of $\E_{q_\ell}[\boldsymbol{\mu}_{\ell t}]$ and $\E_{q_\ell}[\lVert\boldsymbol{\mu}_{\ell t}\rVert_2^2]$ are finite. Define the residual $\overline{\mathbf{r}}_t := \mathbf{y}_t - \sum_{\ell=1}^L \E_{q_\ell}[\boldsymbol{\mu}_{\ell t}]$ and the partial residual $\overline{\mathbf{r}}_{-\ell t} := \overline{\mathbf{r}}_t + \E_{q_\ell}[\boldsymbol{\mu}_{\ell t}]$. Then:
\begin{align*}
    \text{\normalfont{arg}}\max_{q_\ell} \text{\normalfont ELBO}\left(q;\boldsymbol{\mu}_0,\boldsymbol{\Lambda}_{1:T}\right)
\end{align*}
is equivalent to the mean-scp posterior in (\ref{eq:gamma-post-cat1}) and (\ref{eq:b-smcp}) with parameters: 
\begin{align*}
    \overline{\boldsymbol{\theta}}_\ell = \normalfont{\texttt{mean-scp}}(\overline{\mathbf{r}}_{-\ell,1:T} \:;\: \boldsymbol{\Lambda}_{1:T}, \omega_{\ell}, \boldsymbol{\pi}_{\ell,1:T}).
\end{align*}
Furthermore, when each $q_\ell$ takes this functional form, then $\text{\normalfont ELBO}\left(q;\boldsymbol{\mu}_0,\boldsymbol{\Lambda}_{1:T}\right)$ is equivalent to a continuously differentiable function $F(\overline{\boldsymbol{\theta}}_{1:L};\boldsymbol{\mu}_0,\boldsymbol{\Lambda}_{1:T})$, and $\overline{\boldsymbol{\theta}}_\ell$ is the unique maximizer of $F$ holding $\{\overline{\boldsymbol{\theta}}_{\ell'}\}_{\ell'\neq \ell}$ fixed.
\end{proposition}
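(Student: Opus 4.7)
The plan is to mirror the proof of Proposition \ref{prop:vb}(i), adapted to the multivariate setting with general precision matrices $\boldsymbol{\Lambda}_t$ and an intercept $\boldsymbol{\mu}_0$. First, I would invoke the standard mean-field coordinate-ascent identity, which says that holding $\{q_{\ell'}\}_{\ell' \neq \ell}$ fixed, the maximizer satisfies
$$\log q_\ell(\mathbf{b}_\ell, \tau_\ell) \;\underset{\mathbf{b}_\ell,\tau_\ell}{\propto}\; \E_{q_{-\ell}}\!\left[\log p(\mathbf{y}_{1:T} \mid \boldsymbol{\Theta};\boldsymbol{\mu}_0, \boldsymbol{\Lambda}_{1:T})\right] + \log p_\ell(\mathbf{b}_\ell, \tau_\ell).$$
Conditioning on $\tau_\ell = t'$ and expanding the quadratic form in $\mathbf{y}_t - \boldsymbol{\mu}_t$ separately for $t < t'$ (where $\boldsymbol{\mu}_{\ell t} = \mathbf{0}$) and $t \geq t'$ (where $\boldsymbol{\mu}_{\ell t} = \mathbf{b}_\ell$), the only $\mathbf{b}_\ell$-dependent contributions come from the cross term $-2\,\mathbf{b}_\ell^\top \boldsymbol{\Lambda}_t \E_{q_{-\ell}}[\mathbf{y}_t - \boldsymbol{\mu}_0 - \sum_{\ell' \neq \ell} \boldsymbol{\mu}_{\ell' t}] = -2\,\mathbf{b}_\ell^\top \boldsymbol{\Lambda}_t \overline{\mathbf{r}}_{-\ell t}$ and the diagonal term $\mathbf{b}_\ell^\top \boldsymbol{\Lambda}_t \mathbf{b}_\ell$, exactly as in the scalar mean-scp derivation.

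Combining these with the isotropic Gaussian prior contribution $-\omega_0 \|\mathbf{b}_\ell\|_2^2 / 2$ and completing the square yields a $d$-variate Gaussian with precision and mean
$$\overline{\boldsymbol{\Omega}}_{\ell t'} = \omega_0 \mathbf{I}_d + \sum_{t=t'}^T \boldsymbol{\Lambda}_t, \qquad \overline{\mathbf{b}}_{\ell t'} = \overline{\boldsymbol{\Omega}}_{\ell t'}^{-1} \sum_{t=t'}^T \boldsymbol{\Lambda}_t \overline{\mathbf{r}}_{-\ell t},$$
which are precisely the mean-scp parameters one obtains by calling $\texttt{mean-scp}$ on the inputs $(\overline{\mathbf{r}}_{-\ell, 1:T};\boldsymbol{\Lambda}_{1:T}, \omega_0, \boldsymbol{\pi}_{\ell, 1:T})$. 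Integrating $\mathbf{b}_\ell$ against this Gaussian and normalizing across $t' \in [T]$ produces the categorical marginal $\overline{\pi}_{\ell t'} \propto \pi_{\ell t'} \det(\overline{\boldsymbol{\Omega}}_{\ell t'})^{-1/2} \exp(\overline{\mathbf{b}}_{\ell t'}^\top \overline{\boldsymbol{\Omega}}_{\ell t'} \overline{\mathbf{b}}_{\ell t'}/2)$, matching the mean-scp posterior.

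For the differentiability and uniqueness claim, I would substitute the parametric form (\ref{eq:mich-multi-b-post}) into $\text{ELBO}(q;\boldsymbol{\mu}_0,\boldsymbol{\Lambda}_{1:T})$, obtaining a function $F(\overline{\boldsymbol{\theta}}_{1:L};\boldsymbol{\mu}_0,\boldsymbol{\Lambda}_{1:T})$. The expected log-likelihood becomes a quadratic in the $\overline{\mathbf{b}}_{\ell t}$ and linear in $\text{tr}(\boldsymbol{\Lambda}_t \overline{\boldsymbol{\Omega}}_{\ell t'}^{-1})$, weighted by the $\overline{\pi}_{\ell t}$, while the mean-field KL decomposes into a sum of Gaussian-Gaussian KLs (smooth in $\overline{\mathbf{b}}_{\ell t}, \overline{\boldsymbol{\Omega}}_{\ell t}$ via the standard $\log\det$ and trace formulas) and a categorical KL (smooth in $\overline{\pi}_{\ell t}$). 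Differentiating with respect to $\overline{\mathbf{b}}_{\ell t}$ yields a linear equation whose unique root is the stated value; differentiating with respect to $\overline{\boldsymbol{\Omega}}_{\ell t}$ gives a matrix equation whose unique PD solution is also the stated value; and the argument for $\overline{\pi}_{\ell t}$ is identical to the univariate case in (i), applying Lagrangian duality to the simplex constraint together with the strict concavity of $-\overline{\pi}_{\ell t} \log \overline{\pi}_{\ell t}$.

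The main obstacle is establishing uniqueness for $\overline{\boldsymbol{\Omega}}_{\ell t}$, which is a symmetric positive-definite matrix rather than a scalar. Here I would use that $-\log\det(\cdot)$ is strictly convex on the PD cone and that the relevant trace-of-inverse terms are convex in $\overline{\boldsymbol{\Omega}}_{\ell t}^{-1}$, so that setting the matrix gradient to zero produces a unique stationary point which is necessarily the global maximizer over the cone. A secondary subtlety is verifying that the marginalizing integrals defining $\overline{\pi}_{\ell t'}$ are finite and strictly positive for every $t'$; positivity follows from $\pi_{\ell t'} > 0$ and the Gaussian normalizer being strictly positive, which also guarantees an interior optimum so that the KKT argument for the simplex constraint is valid without boundary considerations.
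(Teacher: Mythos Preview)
Your proposal is correct and takes essentially the same approach as the paper, which explicitly states that the proof ``follows an identical argument to the proof of Proposition \ref{prop:vb}(i)'' with the univariate $b_\ell$ and $\lambda_t$ replaced by their multivariate analogues $\mathbf{b}_\ell$ and $\boldsymbol{\Lambda}_t$. Your additional care with the matrix-valued $\overline{\boldsymbol{\Omega}}_{\ell t}$ via strict convexity of $-\log\det$ on the PD cone is exactly the adaptation needed and is more explicit than the paper itself.
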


The proof of Proposition \ref{prop:vb-multi} follows an identical argument to the proof of Proposition \ref{prop:vb} (i). We simply replace the univariate variables $b_{\ell}$ and $\lambda_t$ with their multivariate analogues $\mathbf{b}_{\ell}$ and $\boldsymbol{\Lambda}_t$. For the purpose of monitoring convergence (see Appendix \ref{app:convergence}), we characterize the function $F$:
\begin{align*}
    F(\overline{\boldsymbol{\theta}}_{1:L};\boldsymbol{\mu}_0,\boldsymbol{\Lambda}_{1:T}) &\quad:= \E_q\left[\log p(\mathbf{y}_{1:T},\boldsymbol{\mu}_0,\boldsymbol{\Lambda}_{1:T})\right] - \sum_{\ell=1}^L \text{KL}(q_\ell\:\lVert\: p_\ell)\\
    &\underset{\boldsymbol{\mu}_0,\boldsymbol{\theta}_{1:L}}{\propto} \frac{1}{2} \sum_{t=1}^T \log |\boldsymbol{\Lambda_t}| - \frac{1}{2} \sum_{t=1}^{T} \E_q\left[\left\lVert \boldsymbol{\Lambda}_t^{\frac{1}{2}}(\mathbf{y}_t -\boldsymbol{\mu}_t)\right\rVert_2^2\right] \\
    &\quad\quad\quad- \E_{q_\ell}\left[\sum_{t=1}^T \mathbbm{1}\{\tau_\ell = t\}\left(\log \frac{q(\mathbf{b}_\ell|\tau_\ell =t)}{p(\mathbf{b}_\ell|\tau_\ell =t)}+  \log \frac{\overline{\pi}_{\ell t}}{\pi_{\ell t}}\right) \right] \\
    &\quad=\frac{1}{2} \sum_{t=1}^T \log |\boldsymbol{\Lambda_t}| - \frac{1}{2} \sum_{t=1}^{T} \left( \left\lVert \boldsymbol{\Lambda}_t^{\frac{1}{2}}\overline{\mathbf{r}}_t\right\rVert_2^2 + \E_q\left[\left\lVert \boldsymbol{\Lambda}_t^{\frac{1}{2}}(\boldsymbol{\mu}_t -\boldsymbol{\overline{\mu}}_t)\right\rVert_2^2\right]\right) \\
    &\quad\quad -\frac{1}{2}\sum_{\ell = 1}^L\overline{\pi}_{\ell t}\left[\log|\overline{\boldsymbol{\Omega}}_{\ell t} - d \log \omega_\ell - E_{q_\ell}[\lVert\overline{\boldsymbol{\Omega}}_{\ell t}^{\frac{1}{2}}(\mathbf{b}_{\ell} - \overline{\mathbf{b}}_{\ell t})\rVert_2^2|\tau_\ell] +\omega_\ell E_{q_\ell}[\lVert\mathbf{b}_{\ell}\rVert_2^2|\tau_\ell] +\log \frac{\overline{\pi}_{\ell t}}{\pi_{\ell t}}\right] \\
    &\quad= \frac{1}{2} \sum_{t=1}^T \log |\boldsymbol{\Lambda_t}| - \frac{1}{2} \sum_{t=1}^{T} \left\{ \left\lVert \boldsymbol{\Lambda}_t^{\frac{1}{2}}\overline{\mathbf{r}}_t\right\rVert_2^2 + \sum_{\ell=1}^L\left(\E_{q_\ell}\left[\left\lVert \boldsymbol{\Lambda}_t^{\frac{1}{2}}\boldsymbol{\mu}_{\ell t}\right\rVert_2^2\right] - \left\lVert \boldsymbol{\Lambda}_t^{\frac{1}{2}}\overline{\boldsymbol{\mu}}_{\ell t}\right\rVert_2^2\right)\right\} \\
    &\quad\quad -\frac{1}{2}\sum_{\ell = 1}^L\overline{\pi}_{\ell t}\left[\log|\overline{\boldsymbol{\Omega}}_{\ell t}| - d \log \omega_\ell - d +\omega_\ell \left(\text{trace}\left(\overline{\boldsymbol{\Omega}}_{\ell t}^{-1}\right) + \lVert\overline{\mathbf{b}}_{\ell t}\rVert_2^2\right) +\log \frac{\overline{\pi}_{\ell t}}{\pi_{\ell t}} \right].
\end{align*}
Note that we can write:
\begin{align*}
    \E_{q_\ell}\left[\left\lVert \boldsymbol{\Lambda}_t^{\frac{1}{2}}\boldsymbol{\mu}_{\ell t}\right\rVert_2^2\right] &= \sum_{t'=1}^t \overline{\pi}_{\ell t'} \sum_{1\leq i,i'\leq d} \left[\boldsymbol{\Lambda}_t\right]_{ii'}\left(\left[\overline{\boldsymbol{\Omega}}^{-1}_{\ell t'}\right]_{ii'} + \left[\overline{\mathbf{b}}_{\ell t'}\right]_{i}\left[\overline{\mathbf{b}}_{\ell t'}\right]_{i'}\right) \\
    &= \sum_{t'=1}^t \overline{\pi}_{\ell t'} \left( \left\lVert \boldsymbol{\Lambda}_t^{\frac{1}{2}}\overline{\mathbf{b}}_{\ell t'}\right\rVert_2^2 +  \mathbf{1}_d^T\left[\boldsymbol{\Lambda}_t \odot\overline{\boldsymbol{\Omega}}^{-1}_{\ell t'}\right]\mathbf{1}_d\right),
\end{align*}
where $\odot$ denotes the Hadamard product. So:
\begin{align*}
    F(\overline{\boldsymbol{\theta}}_{1:L};\boldsymbol{\mu}_0,\boldsymbol{\Lambda}_{1:T}) &\underset{\boldsymbol{\mu}_0,\boldsymbol{\theta}_{1:L}}{\propto} \frac{1}{2} \sum_{t=1}^T \log |\boldsymbol{\Lambda}_t| - \frac{1}{2} \sum_{t=1}^{T} \left\{ \left\lVert \boldsymbol{\Lambda}_t^{\frac{1}{2}}\overline{\mathbf{r}}_t\right\rVert_2^2 - \sum_{\ell=1}^L \left\lVert \boldsymbol{\Lambda}_t^{\frac{1}{2}}\overline{\boldsymbol{\mu}}_{\ell t}\right\rVert_2^2\right\} \\
    &\quad\quad - \frac{1}{2} \sum_{t=1}^{T}\sum_{t'=1}^t\sum_{\ell=1}^L \overline{\pi}_{\ell t'} \left( \left\lVert \boldsymbol{\Lambda}_t^{\frac{1}{2}}\overline{\mathbf{b}}_{\ell t'}\right\rVert_2^2 + \mathbf{1}_d^T\left[\boldsymbol{\Lambda}_t \odot\overline{\boldsymbol{\Omega}}^{-1}_{\ell t'}\right]\mathbf{1}_d\right) \\
    &\quad\quad -\frac{1}{2}\sum_{\ell = 1}^L\overline{\pi}_{\ell t}\left[\log|\overline{\boldsymbol{\Omega}}_{\ell t}| - d \log \omega_\ell - d +\omega_\ell \left(\text{trace}\left(\overline{\boldsymbol{\Omega}}_{\ell t}^{-1}\right) + \lVert\overline{\mathbf{b}}_{\ell t}\rVert_2^2\right) + \log \frac{\overline{\pi}_{\ell t}}{\pi_{\ell t}}\right].
\end{align*}
As argued in Appendix \ref{app:empirical-bayes}, the empirical Bayes step for updating $\boldsymbol{\mu}_0$ in Algorithm \ref{alg:mich-multi} also constitutes a coordinate ascent step for maximizing the ELBO with respect to $\boldsymbol{\mu}_0$ seeing as:
\begin{align*}
     \E_{q} \left[\log p\left(\mathbf{y}_{1:T} \:|\: \boldsymbol{\Theta};\boldsymbol{\mu}_0,\boldsymbol{\Lambda}_{1:T}\right)\right] &\underset{\boldsymbol{\mu}_0,\boldsymbol{\theta}_{1:L}}{\propto} -\frac{1}{2} \left\lVert\left( \sum_{t=1}^T \boldsymbol{\Lambda}_t\right)^{\frac{1}{2}}\boldsymbol{\mu}_0\right\rVert_2^2 +\left\langle\boldsymbol{\mu}_0, \sum_{t=1}^T\boldsymbol{\Lambda}_t\overline{\mathbf{r}}_t \right\rangle,
\end{align*}
which is convex as a function of $\boldsymbol{\mu}_0$ and uniquely maximized on $\mathbb{R}^d$ at:
\begin{align*}
    \hat{\boldsymbol{\mu}}_0 := \left(\sum_{t=1}^T\boldsymbol{\Lambda}_t\right)^{-1}\sum_{t=1}^T\boldsymbol{\Lambda}_t\overline{\mathbf{r}}_t
\end{align*}
So once again, Algorithm \ref{alg:mich-multi} constitutes a coordinate ascent procedure. It is possible that the sequence of precision matrices is unknown. In this case, we can replace $\boldsymbol{\Lambda}_{1:T}$ with a consistent estimator $\hat{\boldsymbol{\Lambda}}_{1:T}$ in Algorithm \ref{alg:mich-multi}. For instance, if $\boldsymbol{\Lambda}_t = \boldsymbol{\Lambda} \sforall t \in [T]$, then $\Var(y_t - y_{t-1}) = 2\boldsymbol{\Lambda}^{-1}$ for each $t > 1$. We can define $\Tilde{y}_t := y_t - y_{t-1}$, then:
\begin{align*}
    \hat{\boldsymbol{\Lambda}}_T := \left[\frac{\sum_{t=1}^{T-1} \left(\Tilde{y}_t - \frac{\sum_{t=1}^{T-1}\Tilde{y}_t}{T-1}\right)\left(\Tilde{y}_t - \frac{\sum_{t=1}^{T-1}\Tilde{y}_t}{T-1}\right)^T}{2(T-2)}\right]^{-1}
\end{align*}
will be a consistent estimator of $\boldsymbol{\Lambda}$ that we can plug directly into Algorithm \ref{alg:mich-multi}.

\newpage
\section{Computational Details}
\subsection{Posterior Parameters}
\label{app:posterior-parameters}

\subsubsection{Mean-SCP}

The full posterior for the mean-scp model is given by:
\begin{align}
    \mathbf{b} \:|\: \tau = t, \: \mathbf{y}_{1:T} &\sim \mathcal{N}_d\left(\overline{\mathbf{b}}_{t}, \overline{\boldsymbol{\Omega}}_t^{-1}\right) \\
    \tau \:|\: \mathbf{y}_{1:T} &\sim \text{Categorical}(\overline{\boldsymbol{\pi}}_{1:T})  \\
    \overline{\boldsymbol{\Omega}}_t &= \omega_0\mathbf{I}_d + \sum_{t'=t}^{T} \boldsymbol{\Lambda}_{t'} \label{eq:mean-scp-post-omega}\\
    \overline{\mathbf{b}}_t &= \overline{\boldsymbol{\Omega}}_t^{-1}\sum_{t'=t}^{T} \boldsymbol{\Lambda}_{t'} \mathbf{y}_{t'} \label{eq:mean-scp-post-b}\\
    \overline{\pi}_t &\propto \pi_t|\overline{\boldsymbol{\Omega}}_t|^{-
    \frac{1}{2}}\exp\left[\frac{\lVert \overline{\boldsymbol{\Omega}}_t^{\frac{1}{2}} \overline{\mathbf{b}}_t\rVert_2^2}{2}\right].  \label{eq:mean-scp-post-pi}
\end{align}

\subsubsection{Var-SCP}

The full posterior for the var-scp model is given by:
\begin{align}
    s \:|\: \tau = t, \: \mathbf{y}_{1:T} &\sim \text{Gamma}\left(\overline{u}_{t}, \overline{v}_{t}\right) \\
    \tau \:|\: \mathbf{y}_{1:T}&\sim \text{Categorical}(\overline{\boldsymbol{\pi}}_{1:T}) \\
    \overline{u}_{t} &= u_0 + \frac{T - t + 1}{2} \label{eq:var-scp-post-u} \\
    \overline{v}_{t} &= v_0 + \frac{1}{2} \sum_{t'=t}^{T} \omega_{t'}y_{t'}^2 \label{eq:var-scp-post-v} \\
    \overline{\pi}_t &\propto  \frac{\pi_t\Gamma(\overline{u}_{t})}{\overline{v}_{t}^{\overline{u}_{t}}}\exp\left(- \frac{1}{2}\sum_{t'=1}^{t-1} \omega_{t'}y_{t'}^2\right). \label{eq:var-scp-post-pi}
\end{align}

Note that the summation in the last line may be ill-defined if $t = 1$. We use the convention that a sum is equal to zero if the indexing set is empty.

\subsubsection{MeanVar-SCP}

The full posterior for the meanvar-scp model is given by:
\begin{align}
    b \:|\: s, \tau = t, \mathbf{y}_{1:T} &\sim \mathcal{N}(\overline{b}_t, (\overline{\omega}_t s)^{-1}) \\
    s \:|\: \tau = t, \mathbf{y}_{1:T} &\sim \text{Gamma}(\overline{u}_t, \overline{v}_t) \\
    \tau \:|\: \mathbf{y}_{1:T} &\sim \text{Categorical}(\overline{\boldsymbol{\pi}}_{1:T}) \\
    \overline{\omega}_t &= \omega_0 + \sum_{t' = t}^{T} \omega_{t'} \label{eq:meanvar-scp-post-omega} \\
    \overline{b}_t &=  \sum_{t'=t}^{T} \frac{\omega_{t'} y_{t'}}{\overline{\omega}_t} \\
    \overline{u}_t &= u_0 + \frac{T  - t + 1}{2} \label{eq:meanvar-scp-post-b}\\
    \overline{v}_t &= v_0 - \frac{\overline{\omega}_t\overline{b}^2_t}{2} + \frac{1}{2} \sum_{t'=t}^{T} \omega_{t'}y_{t'}^2 \label{eq:meanvar-scp-post-v} \\
    \overline{\pi}_t &\propto \frac{\pi_t\Gamma(\overline{u}_t)}{\overline{v}_t^{\overline{u}_t}\overline{\omega}_t^{1/2} } \exp\left(-\frac{1}{2}\sum_{t'=1}^{t-1} \omega_{t'}y^2_{t'}\right). \label{eq:meanvar-scp-post-pi}
\end{align}

\subsection{Priors on Change-Point Locations}

\label{app:prior}

In the absence of any prior knowledge about the location of the change-point $\tau \in [T]$, one natural choice is the uniform prior, $\pi_t = T^{-1}$. While this choice of $\boldsymbol{\pi}_{1:T}$ satisfies the assumption of Theorems \ref{theorem:smcp}-\ref{theorem:alpha-mixing}
that for some constant $C_\pi > 0$:
\begin{align}
   \min_{t\in[T]} |\log \pi_{t}| \leq C_\pi \log T \label{eq:prior-bd}
\end{align}
and therefore guarantees asymptotic consistency for each of the single change-point models in Section \ref{sec:scp}, the uniform prior may reduce detection power or even result in false positives when $T$ is small. To see this, we first show that the uniform prior does not induce a uniform posterior in the absence of a change-point. In Figure \ref{fig:post-probs-plot}, we draw 10,000 replicates of $\mathbf{y}_{1:T}$ for $T = 50$ under the null model, i.e. where $y_t \overset{\text{i.i.d.}}{\sim}\mathcal{N}(0,1)$, then fit each SCP model to $\mathbf{y}_{1:T}$ and use the average of the resulting posterior probabilities $\overline{\boldsymbol{\pi}}_{1:T}$ to produce MCMC estimates of $\E[\overline{\boldsymbol{\pi}}_{1:T}]$ for each of SCP models.

\begin{figure}[!h]
    \centering
    \includegraphics[scale = 0.27]{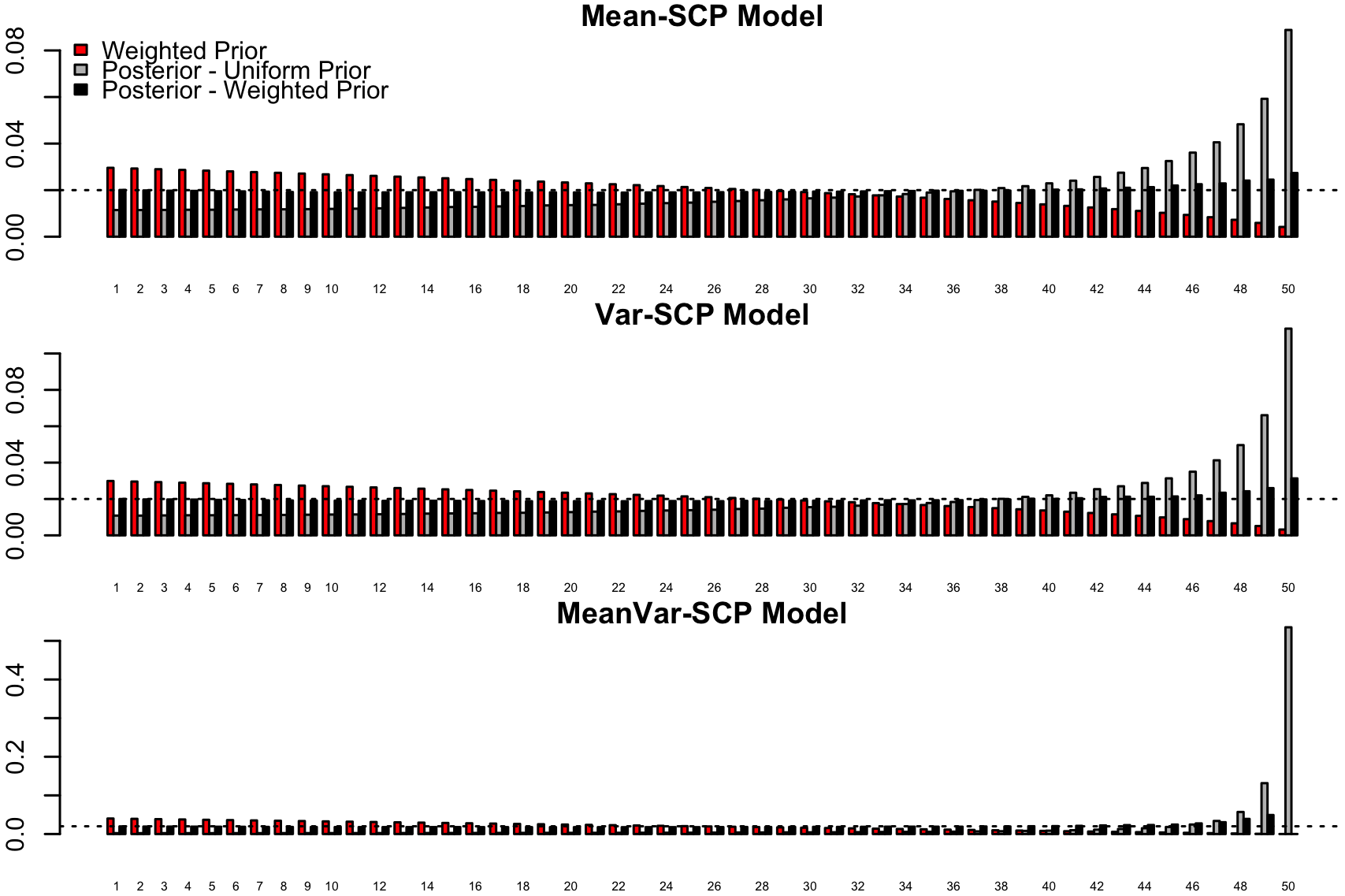}
    \caption{\textbf{Posterior Probabilities under Null Model}. MCMC estimates of $\E[\overline{\boldsymbol{\pi}}_{1:T}]$ under Null Model with $T = 50$. Gray bars show $\E[\overline{\boldsymbol{\pi}}_{1:T}]$ under the uniform prior and black bars show $\E[\overline{\boldsymbol{\pi}}_{1:T}]$ under the weighted prior (red bars).}
    \label{fig:post-probs-plot}
\end{figure}

Figure \ref{fig:post-probs-plot} shows an exponential increase in the posterior probabilities as $t$ approaches $T$. With a uniform prior, the SCP models may fail to detect a true change when the signal is weak or in small samples due to the large weight placed on times near $T$. For the var-scp and meanvar-scp models in particular, we see that the probabilities may be large enough to incorrectly detect a change-point in the vicinity of time $T$, even when no change is present. To rectify this behavior we propose selecting $\boldsymbol{\pi}_{1:T}$ so that under the null model we have:
\begin{align}\label{eq:prior-cond}
    \E\left[\log\overline{\pi}_t - \log \overline{\pi}_{t+1} \right] = 0.
\end{align}
In Figure \ref{fig:post-probs-plot} we also plot the weighted priors $\boldsymbol{\pi}_{1:T}$ that satisfy (\ref{eq:prior-cond}) and MCMC estimates of $\E[\overline{\boldsymbol{\pi}}_{t}]$ under this choice $\boldsymbol{\pi}_{1:T}$. We clearly see that these estimates adhere much more closely to the uniform dashed line at $T^{-1}$. Note that (\ref{eq:prior-cond}) also implies a uniform condition on the posterior probabilities since for any $r > t$:
\begin{align*}
    \E\left[\log\overline{\pi}_t - \log \overline{\pi}_r \right] &= \sum_{i=0}^{r-t-1} \E\left[\log\overline{\pi}_{t+i} - \log \overline{\pi}_{t+i+1} \right] = 0.
\end{align*}
We now show how to calculate $\pi_t$ so that (\ref{eq:prior-cond}) holds for each of the SCP models. In each case we show that, if $t_0$ is the true location of the change point

\subsubsection{Mean-SCP Prior}

For the mean-scp model in Section \ref{sec:smcp} we have:
\small
\begin{align*}
    \E\left[\log\overline{\pi}_t - \log \overline{\pi}_{t+1} \right] &= \log \pi_t - \log \pi_{t+1} - \frac{1}{2} \log \left|\omega_0\mathbf{I}_d + \sum_{t'=t}^{T} \boldsymbol{\Lambda}_{t'}\right| + \frac{1}{2} \log \left|\omega_0\mathbf{I}_d + \sum_{t'=t+1}^{T} \boldsymbol{\Lambda}_{t'}\right|\\
    &\quad + \frac{1}{2} \E\left[\left\lVert \left[\omega_0\mathbf{I}_d + \sum_{t'=t}^{T} \boldsymbol{\Lambda}_{t'}\right]^{-\frac{1}{2}} \sum_{t'=t}^{T} \boldsymbol{\Lambda}_{t'} \mathbf{y}_{t'}\right\rVert_2^2 \right] - \frac{1}{2} \E\left[\left\lVert \left[\omega_0\mathbf{I}_d + \sum_{t'=t+1}^{T} \boldsymbol{\Lambda}_{t'}\right]^{-\frac{1}{2}} \sum_{t'=t+1}^{T} \boldsymbol{\Lambda}_{t'} \mathbf{y}_{t'}\right\rVert_2^2 \right].
\end{align*}
\normalsize
Letting $\omega_0 \to 0$ and noting that $\sum_{t'=t}^{T} \boldsymbol{\Lambda}_{t'} \mathbf{y}_{t'} \sim \mathcal{N}_d\left(\mathbf{0}, \sum_{t'=t}^{T} \boldsymbol{\Lambda}_{t'}\right)$ under the null model, we get:
\small
\begin{align*}
    \E\left[\log\overline{\pi}_t - \log \overline{\pi}_{t+1} \right] &= \log \pi_t - \log \pi_{t+1} - \frac{1}{2} \log \left|\sum_{t'=t}^{T} \boldsymbol{\Lambda}_{t'}\right| + \frac{1}{2} \log \left|\sum_{t'=t+1}^{T} \boldsymbol{\Lambda}_{t'}\right|\\
    &\quad + \frac{1}{2} \E\left[\left\lVert \left[ \sum_{t'=t}^{T} \boldsymbol{\Lambda}_{t'}\right]^{-\frac{1}{2}} \sum_{t'=t}^{T} \boldsymbol{\Lambda}_{t'} \mathbf{y}_{t'}\right\rVert_2^2  \right] - \frac{1}{2} \E\left[\left\lVert \left[\sum_{t'=t+1}^{T} \boldsymbol{\Lambda}_{t'}\right]^{-\frac{1}{2}} \sum_{t'=t+1}^{T} \boldsymbol{\Lambda}_{t'} \mathbf{y}_{t'}\right\rVert_2^2  \right]\\
    &= \log \pi_t - \log \pi_{t+1} - \frac{1}{2} \log \left|\sum_{t'=t}^{T} \boldsymbol{\Lambda}_{t'}\right| + \frac{1}{2} \log \left|\sum_{t'=t+1}^{T} \boldsymbol{\Lambda}_{t'}\right|.
\end{align*}
\normalsize
If $\boldsymbol{\Lambda}_t = \boldsymbol{\Lambda}$ for all $t$, we can further simplify to:
\begin{align*}
    \E\left[\log\overline{\pi}_t - \log \overline{\pi}_{t+1} \right] &= \log \pi_t - \log \pi_{t+1} + \frac{d}{2} \log\left(\frac{T-t}{T-t+1}\right).
\end{align*}
Therefore, when (\ref{eq:prior-cond}) holds, we can set $\log \pi_1 = 0$ and solve for $\log \pi_t$ for $t > 1$ using the recurrence relation above, then normalize the sequence $\boldsymbol{\pi}_{1:T}$ to get:
\begin{align*}
    \log \pi_t &= \frac{d}{2}\log\left(\frac{T-t+1}{T}\right) - \log\left[\sum_{t'=1}^T \left(\frac{T-t+1}{T}\right)^{\frac{d}{2}}\right] \\
    &\geq - \left(1 + \frac{d}{2}\right)\log T. \tag{$\frac{T-t+1}{T} \leq 1 \sforall t \in [T]$ }
\end{align*}
So for fixed $d$ this choice of $\pi_t$ satisfies (\ref{eq:prior-bd}) with $C_\pi = 1 + d/2$.

\subsubsection{Var-SCP Prior}

For the var-scp model in Section \ref{sec:sscp} we have:
\small
\begin{align*}
    \E\left[\log\overline{\pi}_t - \log \overline{\pi}_{t+1} \right] &= \log \pi_t- \log \pi_{t+1} +\log \Gamma\left(u_0 +\frac{ T-t+1}{2}\right) - \log \Gamma\left(u_0 +\frac{ T-t}{2}\right) + \frac{1}{2} \E[\omega_ty_t^2] \\
    &\quad \: + \left(u_0 + \frac{T - t}{2}\right)\E\left[\log\left[v_0 +\frac{1}{2}\left(\sum_{t'=t+1}^T \omega_{t'}y_{t'}^2\right)\right]\right] \\
     &\quad \:- \left(u_0 + \frac{T - t +1}{2}\right)\E\left[\log\left[v_0 +\frac{1}{2}\left(\sum_{t'=t}^T \omega_{t'}y_{t'}^2\right)\right]\right].
\end{align*}
\normalsize
Letting $u_0,v_0 \to 0$ and noting that $\omega_{t}y^2_{t} \sim \chi^2_1$ under the null model, we get:
\small
\begin{align*}
    \E\left[\log\overline{\pi}_t - \log \overline{\pi}_{t+1} \right] &= \log \pi_t- \log \pi_{t+1} +\log \Gamma\left(\frac{T-t+1}{2}\right) - \log \Gamma\left(\frac{T-t}{2}\right) + \frac{1 + \log 2}{2}\\
    &\quad \: + \left(\frac{T - t}{2}\right)\E\left[\log\left(\sum_{t'=t+1}^T \omega_{t'}y_{t'}^2\right)\right] - \left(\frac{T - t +1}{2}\right)\E\left[\log\left(\sum_{t'=t}^T \omega_{t'}y_{t'}^2\right)\right].
\end{align*}
\normalsize
Since $\sum_{t'=t}^T \omega_{t'}y_{t'}^2 \sim \chi^2_{T-t+1}$, we have:
\begin{align*}
    \E\left[\log\left(\sum_{t'=t}^T \omega_{t'}y_{t'}^2\right)\right] = \psi\left(\frac{T-t+1}{2}\right) + \log 2
\end{align*}
where $\psi$ is the digamma function. Therefore:
\begin{align*}
    \E\left[\log\overline{\pi}_t - \log \overline{\pi}_{t+1} \right] &= \log \pi_t- \log \pi_{t+1} +\log \Gamma\left(\frac{T-t+1}{2}\right) - \log \Gamma\left(\frac{T-t}{2}\right) + \frac{1}{2}\\
    &\quad \: + \left(\frac{T - t}{2}\right)\psi\left(\frac{T-t}{2}\right)  - \left(\frac{T - t +1}{2}\right)\psi\left(\frac{T-t+1}{2}\right).
\end{align*}
So again we have defined a recurrence relation for calculating $\pi_t$. Recall Binet's first formula for the log gamma function for $x > 0$ (see e.g. p. 249 of \citealp{Whittaker96}): 
\begin{align*}
    \log \Gamma(x) = \left(x - \frac{1}{2}\right) \log x - x + \frac{1}{2} \log 2 \pi + \int_{0}^\infty\left(\frac{1}{2} - \frac{1}{t} + \frac{1}{e^t - 1} \right)\frac{e^{-tx}}{t} \;dt.
\end{align*}
Since in Theorem \ref{theorem:sscp} we only consider indexes $t$ such that $T-t+1 > \log^{\varepsilon} T$, we actually only need $\log \pi_t > -C_\pi \log T$ for just these indexes. Since $T-t+1 > \log^{\varepsilon} T \implies T-t+1 \to \infty$, we have the approximations:
\begin{align*}
    \log \Gamma\left(\frac{T-t+1}{2}\right) &\approx \left(\frac{T-t}{2}\right)\log\left(\frac{T - t +1}{2}\right) - \frac{T - t +1}{2} +\frac{1}{2} \log 2 \pi \\
    \psi\left(\frac{T-t+1}{2}\right)  &\approx \log \left(\frac{T-t+1}{2}\right) - \frac{1}{T-t+1}.
\end{align*}
Then we can write:
\begin{align*}
    \E\left[\log\overline{\pi}_t - \log \overline{\pi}_{t+1} \right] &\approx \log \pi_t- \log \pi_{t+1} + \frac{1}{2}\log\left(\frac{T - t}{T-t+1}\right).
\end{align*}
This is the same recurrence relation from the mean-scp model with $d=1$, so again the choice of $\pi_t$ implied by the var-scp recurrence relation satisfies (\ref{eq:prior-bd}).

\subsubsection{MeanVar-SCP Prior}

For the meanvar-scp model in Section \ref{sec:smscp} we have:
\small
\begin{align*}
    \E\left[\log\overline{\pi}_t - \log \overline{\pi}_{t+1} \right] &= \log \pi_t- \log \pi_{t+1} + \frac{1}{2} \E[\omega_ty_t^2] \\
    &\quad \: + \frac{1}{2}\log\left(\sum_{t'=t+1}^T \omega_{t'}\right) - \frac{1}{2}\log\left(\sum_{t'=t}^T \omega_{t'}\right) + \log \Gamma\left(\frac{u_0 + T-t+1}{2}\right) - \log \Gamma\left(\frac{u_0 + T-t}{2}\right)  \\
    &\quad \: + \left(u_0 + \frac{T - t}{2}\right)\E\left[\log\left(v_0 +\frac{1}{2}\sum_{t'=t+1}^T \omega_{t'}y_{t'}^2- \frac{\left(\sum_{t'=t+1}^T \omega_{t'}y_{t'}\right)^2}{2(\omega_0 + \sum_{t'=t+1}^T \omega_{t'})}\right)\right] \\
     &\quad \:- \left(u_0 + \frac{T - t +1}{2}\right)\E\left[\log\left(v_0 +\frac{1}{2}\sum_{t'=t}^T \omega_{t'}y_{t'}^2 - \frac{\left(\sum_{t'=t}^T \omega_{t'}y_{t'}\right)^2}{2(\omega_0 + \sum_{t'=t}^T \omega_{t'})}\right)\right].
\end{align*}
\normalsize
Letting $\omega_0, u_0,v_0 \to 0$ and noting that $\omega_{t}y^2_{t} \sim \chi^2_1$ under the null model, we get:
\small
\begin{align*}
    \E\left[\log\overline{\pi}_t - \log \overline{\pi}_{t+1} \right] &= \log \pi_t- \log \pi_{t+1} + \frac{1 + \log 2}{2} \\
    &\quad \: + \frac{1}{2}\log\left(\sum_{t'=t+1}^T \omega_{t'}\right) - \frac{1}{2}\log\left(\sum_{t'=t}^T \omega_{t'}\right) + \log \Gamma\left(\frac{ T-t+1}{2}\right) - \log \Gamma\left(\frac{T-t}{2}\right)  \\
    &\quad \: + \left(\frac{T - t}{2}\right)\E\left[\log\left(\sum_{t'=t+1}^T \omega_{t'}y_{t'}^2- \frac{\left(\sum_{t'=t+1}^T \omega_{t'}y_{t'}\right)^2}{\sum_{t'=t+1}^T \omega_{t'}}\right)\right] \\
     &\quad \:- \left(\frac{T - t +1}{2}\right)\E\left[\log\left(\sum_{t'=t}^T \omega_{t'}y_{t'}^2 - \frac{\left(\sum_{t'=t}^T \omega_{t'}y_{t'}\right)^2}{\sum_{t'=t}^T \omega_{t'}}\right)\right].
\end{align*}
\normalsize
Noting that the last line above is not well-defined when $t = T$, we can simply set $\pi_T = 0$. Since $\hat{\tau}_{\text{MAP}} \neq T$ in Theorem \ref{theorem:smscp}, this assumption is without loss of generality. For all other $t \in [T-1]$, since:
\begin{align*}
    \sum_{t'=t}^T \omega_{t'}y_{t'}^2 - \frac{\left(\sum_{t'=t}^T \omega_{t'}y_{t'}\right)^2}{\sum_{t'=t}^T \omega_{t'}} \sim \chi^2_{T-t}
\end{align*}
then we have:
\small
\begin{align*}
    \E\left[\log\overline{\pi}_t - \log \overline{\pi}_{t+1} \right] &= \log \pi_t- \log \pi_{t+1} + \frac{1}{2} + \frac{1}{2}\log\left(\sum_{t'=t+1}^T \omega_{t'}\right) - \frac{1}{2}\log\left(\sum_{t'=t}^T \omega_{t'}\right) \\
    &\quad \:+ \log \Gamma\left(\frac{ T-t+1}{2}\right) - \log \Gamma\left(\frac{T-t}{2}\right)  \\
    &\quad \: + \left(\frac{T - t}{2}\right)\psi\left(\frac{T-t-1}{2}\right) - \left(\frac{T - t +1}{2}\right)\psi\left(\frac{T-t}{2}\right).
\end{align*}
\normalsize
So again we have defined a recurrence relation for calculating $\pi_t$. Since $\boldsymbol{\tau}_{1:T}$ are known parameters, it is without loss of generality to assume that $\omega_t = 1$, then we have:   
then we have:
\small
\begin{align*}
    \E\left[\log\overline{\pi}_t - \log \overline{\pi}_{t+1} \right] &= \log \pi_t- \log \pi_{t+1} + \frac{1}{2} + \frac{1}{2}\log\left(\frac{T-t}{T-t+1}\right) + \log \Gamma\left(\frac{ T-t+1}{2}\right) - \log \Gamma\left(\frac{T-t}{2}\right)  \\
    &\quad \: + \left(\frac{T - t}{2}\right)\psi\left(\frac{T-t-1}{2}\right) - \left(\frac{T - t +1}{2}\right)\psi\left(\frac{T-t}{2}\right).
\end{align*}
\normalsize
Noting that the right hand side is just a sum of the recurrence relations for the mean-scp and var-scp models, then we can use the same argument that we used in those previous cases to establish that the choice of $\pi_t$ implied by the meanvar-scp recurrence relation satisfies (\ref{eq:prior-bd}).
\subsection{Merging Duplicate Components}
\label{app:merge-procedure}

As discussed in Section \ref{sec:merge-procedure}, Algorithm \ref{alg:mich} occasionally reaches a stationary point where a single change-point is split across multiple $\tau_i$'s. This behavior is plainly visible in Figure \ref{fig:duplicate}. We have generated $\mathbf{y}_{1:T}$ from Simulation \ref{sim:main} with $J^* = 10$, $T = 1000$, $\Delta_T = 50$, $C = \sqrt{200}$. After fitting the oracle version of MICH with $J = J^*$, we estimate two change-points $\hat{\tau}_2 = 179$ and $\hat{\tau}_3 = 184$ that clearly capture the the same true change at $\tau_2 = 179$, causing the model to miss the change-point at $\tau_{10} = 886$.

\begin{figure}[!h]
    \centering
    \includegraphics[scale = 0.175]{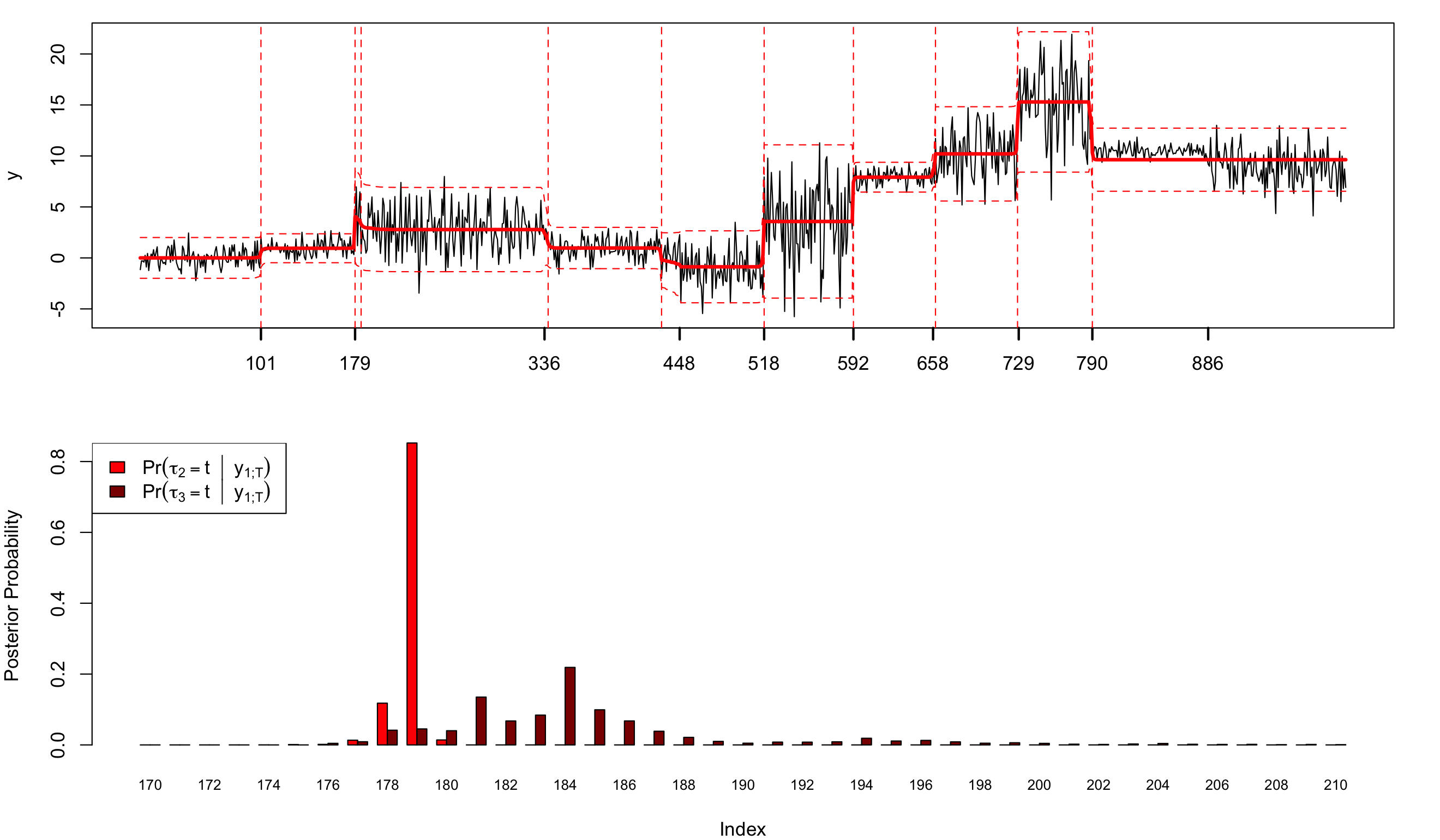}
    \caption{\textbf{MICH without Merging}. \textbf{Top}: Fit of $\mathbf{y}_{1:T}$ (black line) for $J^* = J = 10$ with true change points $\boldsymbol{\tau}_{1:J^*}$ (x-axis ticks), estimated changes $\hat{\boldsymbol{\tau}}_{1:\hat{J}}$ (vertical dashed red lines), and estimated mean (solid red line) and variance (dashed red lines) signals. \textbf{Bottom}: Estimated posterior distributions for $\hat{\tau}_2$ and $\hat{\tau}_3$.}
    \label{fig:duplicate}
\end{figure}

We propose a modification that helps Algorithm \ref{alg:mich} move out of these undesirable stationary points. If $\tau_i$ and $\tau_{i'}$ correspond to change-points of the same class, then by (\ref{eq:mean-field}) we have:
\begin{align}
    q(\tau_i = \tau_{i'}) = \sum_{t=1}^T q(\tau_{i'} = t | \tau_i = t ) q(\tau_i = t) = \sum_{t=1}^T q_{i'}(\tau_{i'} = t)q_i(\tau_i = t) = \langle\overline{\boldsymbol{\pi}}_{i',1:T}, \overline{\boldsymbol{\pi}}_{i,1:T} \rangle.
\end{align}
We therefore propose merging components $i$ and $i'$ if $\langle\overline{\boldsymbol{\pi}}_{i',1:T}, \overline{\boldsymbol{\pi}}_{i,1:T} \rangle$ exceeds some threshold $\beta > 0$. We use a greedy algorithm that merges the components $(i_1,i_2) := \arg\max_{i\neq i'} \langle\overline{\boldsymbol{\pi}}_{i',1:T}, \overline{\boldsymbol{\pi}}_{i,1:T}\rangle$ if $\langle\overline{\boldsymbol{\pi}}_{i_1,1:T}, \overline{\boldsymbol{\pi}}_{i_2,1:T} \rangle \geq \beta$. We continue merging components until the set $\{i,i' : i \neq i', \; \langle\overline{\boldsymbol{\pi}}_{i',1:T}, \overline{\boldsymbol{\pi}}_{i,1:T} \rangle\geq\beta\}$ is empty. When components $i$ and $i'$ both capture true change-points, then $\overline{\boldsymbol{\pi}}_{i,1:T}$ and $\overline{\boldsymbol{\pi}}_{i',1:T}$ tend to be sparse. Thus, when components $i$ and $i'$ correspond to distinct change-points, $\langle\overline{\boldsymbol{\pi}}_{i,1:T}, \overline{\boldsymbol{\pi}}_{i',1:T} \rangle$ tends to be vanishingly small. When components $i$ and $i'$ capture the same change-point, $\langle\overline{\boldsymbol{\pi}}_{i,1:T}, \overline{\boldsymbol{\pi}}_{i',1:T}\rangle$ will clearly be nonzero in most cases, so any small value of $\beta$ will correctly merge the true duplicates. After merging the duplicate components, we then remove the redundant components and use the current fit as an initialization for Algorithm \ref{alg:mich}, then use the automatic procedure described in Appendix \ref{app:LKJ-choice} to increase $L$, $K$, and $J$ to the desired totals.

To select a default for $\beta$, we consider the worst case scenario where $t^*_i$ and $t^*_{i+1}$ are consecutive change-points separated by the rate of the localization error in Theorems \ref{theorem:smcp}-\ref{theorem:smscp},, i.e. $|t^*_i - t^*_{i+1}| = a_T\log T$ for some $a_T\to\infty$. Suppose components $i$ and $i+1$ of MICH identify $t^*_i$ and $t^*_{i+1}$ respectively. Given a localization error $\epsilon_T = \mathcal{O}(\log T)$, for large enough $T$ we have $\mathbb{B}_{\epsilon_T}^{1}(t^*_i) \cap \mathbb{B}_{\epsilon_T}^{1}(t^*_{i+1}) = \emptyset$. In the proof of Corollary \ref{cor:cred-sets}, we show that $\overline{\pi}_{it} \leq T^{-2}$ with high probability for $t \not\in \mathbb{B}_{\epsilon_T}^{1}(t^*_i)$. Though this result is for the single change-point setting, in empirical results for the multiple change-point setting, we observe a similar rate of decay for the posterior probabilities that fall outside of the window defined by the localization error. Thus, with high probability:
\begin{align}
    \sum_{t \in \mathbb{B}_{\epsilon_{T}}^{1}(t^*_{i+1})} \overline{\pi}_{it} \overline{\pi}_{(i+1)t} \leq \frac{2\epsilon_T}{T^2}
\end{align}
A symmetric argument gives an identical bound for $t \in \mathbb{B}_{\epsilon_T}^{1}(t^*_{i})$, and for the remaining indices we have $\overline{\pi}_{it} \overline{\pi}_{(i+1)t} \leq T^{-4}$; therefore, the merge probability in this case is at most $\mathcal{O}(T^{-2}\log T)$, so any $\beta \gg \log T/T^2$ should avoid incorrectly merging components. Recall that for some choice of small $\delta >0$, we say that the $i$\textsuperscript{th} estimator $\hat{\tau}_{\text{MAP},i}$ as defined in (\ref{eq:map}) detects a change-point if $|\mathcal{CS}(\alpha,\overline{\boldsymbol{\pi}}_{i,1:T})|\leq \log^{1+\delta}T$, where the credible set $\mathcal{CS}(\alpha,\overline{\boldsymbol{\pi}}_{i,1:T})$ is defined as in (\ref{eq:cs}). If we set $\beta = \log^{1+\delta} T /T^2$, then we should avoid merging any truly distinct components with high probability as $T \to \infty$. For the case we observe in Figure \ref{fig:duplicate}, the merge probability $\langle \overline{\boldsymbol{\pi}}_{2,1:T},\overline{\boldsymbol{\pi}}_{3,1:T}\rangle \approx 0.044$ far exceeds $ \log^{1+\delta} T /T^2$ when $T = 1000$ and $\delta =1.01$. Therefore, our merge procedure combines $\hat{\tau}_2$ and $\hat{\tau}_3$. The resulting fit is now able to capture the change at $\tau_{10}$, as seen in Figure \ref{fig:merge}.

\begin{figure}[!h]
    \centering
    \includegraphics[scale=0.25]{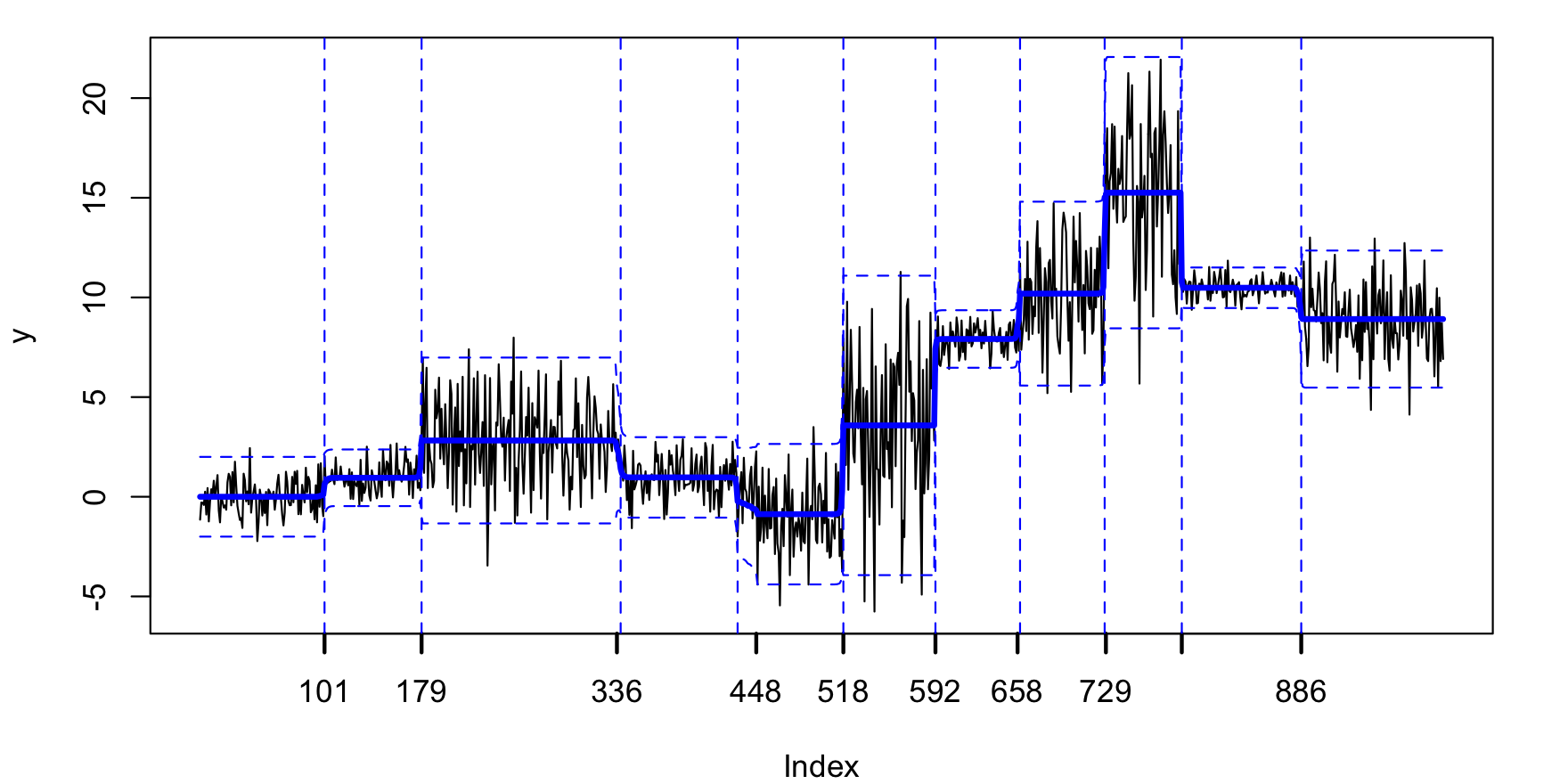}
    \caption{\textbf{MICH with Merging}. Fit of $\mathbf{y}_{1:T}$ (black line) from Figure \ref{fig:duplicate} using merge procedure with $\beta= \log^{1+\delta} T /T^2$ and $J^* = J = 10$. Plot shows true change points $\boldsymbol{\tau}_{1:J^*}$ (x-axis ticks), estimated changes $\hat{\boldsymbol{\tau}}_{1:\hat{J}}$ (vertical dashed blue lines), and estimated mean (solid blue line) and variance (dashed blue lines) signals.}
    \label{fig:merge}
\end{figure}

The merge procedure we have proposed may still run into issues when the model includes redundant components. As previously noted, when component $i$ does not capture a change-point, then $\overline{\boldsymbol{\pi}}_{i,1:T}$ tends to be diffuse with $\overline{\pi}_{it} \approx T^{-1}$. If component $i'$ does correspond to a change and $\overline{\boldsymbol{\pi}}_{i',1:T}$ is sparse, then we may end up with a situation where $\langle\overline{\boldsymbol{\pi}}_{i,1:T}, \overline{\boldsymbol{\pi}}_{i',1:T}\rangle \approx T^{-1}$. Under the default value of $\beta$ we have proposed, components $i$ and $i'$ will be erroneously merged. To avoid this outcome, we only merge components that we believe detect true change-points, i.e. we restrict merge candidates to only include the model components with $\alpha$-level credible sets that satisfy the detection rule with $\alpha = 0.9$ by default. The procedure tends to be insensitive to the choice of $\alpha$, so long as it is moderately large. For example, when $T = 100$, we have $\log^2 T \approx 25$, so for a diffuse $\overline{\boldsymbol{\pi}}_{i,1:T}$, we would rule component $i$ out as a merge candidate for any $\alpha \geq 0.25$.
\subsection{Choice of $L$, $K$, and $J$}
\label{app:LKJ-choice}

As in Section \ref{sec:LKJ}, assume that $L^*, K^*, J^* \geq 0$ are the true numbers of mean-only, variance-only, and joint mean and variance changes in $\mathbf{y}_{1:T}$. In the standard case where we only consider a single class of change-points, e.g. when $J^* > 0$ and $L^*=K^*=0$, we automatically select $J$ by starting from the null model with $J=0$. We then increment $J$ until the ELBO stops increasing. For finite samples, the ELBO does not necessarily increase monotonically on $[J^*]$, so we continue the search for an additional $\log T$ steps after local maximum is found. To increase the speed of the search, we can use the fitted parameters from the $J-1$ component model to initialize Algorithm \ref{alg:mich} after incrementing $J$. We can also restart the Algorithm \ref{alg:mich} from the null initialization after the ELBO decreases to prevent this incremental search procedure from leading to a suboptimal stationary point. To search over $L$, $K$, and $J$ simultaneously, we individually increment $L$, $K$, and $J$ and pick the direction that leads to the largest increase in $\text{ELBO}(q)$.

\subsection{ELBO and Convergence Criterion}
\label{app:convergence}

Because Algorithms \ref{alg:mich}-\ref{alg:mich-multi} define coordinate ascent procedures for maximizing the ELBO, then the ELBO is strictly increases after each iteration. Therefore, one possibility for a stopping rule is to terminate the procedure once the absolute or percentage increase in the ELBO falls below some error tolerance $\epsilon > 0$. Propositions \ref{prop:vb} and \ref{prop:vb-multi} show how to calculate the ELBO as a function $F$ of the posterior parameters (see (\ref{eq:F})) and guaranty the convergence of these parameters. Thus, the ELBO is guaranteed to converge as well. In practice, we set $\epsilon = 10^{-5}$ and terminate when the percentage increase in the ELBO falls below $100\epsilon\%$. See Appendix \ref{app:tol_sensitivity} for a sensitivity analysis for the choice of $\epsilon$. 

\newpage
\section{Simulations}
\label{app:simulations}

\subsection{Evaluation Metrics}

For a set of change-points $\boldsymbol{\tau}_{1:J^*}$ and their estimates $\hat{\boldsymbol{\tau}}_{1:\hat{J}}$, the Hausdorff distance is given by:
\begin{align}
    d_H(\hat{\boldsymbol{\tau}}_{1:\hat{J}}, \boldsymbol{\tau}_{1:J^*}) : =  \max_{0\leq j\leq J^*+1} \min_{0\leq i\leq\hat{J}+1} |\tau_j -\hat{\tau}_i| + \max_{0\leq j\leq \hat{J}+1} \min_{0\leq i\leq J^*+1} |\hat{\tau}_j -\tau_i|. \label{eq:hausdorff}
\end{align}
\cite{Futschik14} specifies the false positive sensitive location error (FPSLE) and false negative sensitive location error (FNSLE) as:
\begin{align}
    d_{\text{FPSLE}}(\hat{\boldsymbol{\tau}}_{1:\hat{J}}\lVert \boldsymbol{\tau}_{1:J}^*) &:= \frac{1}{2(\hat{J}+1)} \sum_{j=1}^{\hat{J}+1} |\hat{\tau}_{j-1} - \tau_{i_j - 1}| + |\hat{\tau}_{j} - \tau_{i_j}|, \label{eq:fpsle}\\
    \{i_j\}_{j=1}^{J^*+1} &:=\left\{i\in[J^*+1] : \tau_{i_j-1} < (\hat{\tau}_{j-1}+\hat{\tau}_{j})/2 \leq \tau_{i_j} \sforall j \in [J^*+1]\right\}. \notag \\
    d_{\text{FNSLE}}(\hat{\boldsymbol{\tau}}_{1:\hat{J}}\lVert \boldsymbol{\tau}_{1:J^*}) &:= d_{\text{FPSLE}}(\boldsymbol{\tau}_{1:J^*}\lVert \hat{\boldsymbol{\tau}}_{1:J^*}) \label{eq:fnsle}
\end{align}
Lastly, we calculate the coverage conditional on detection (CCD) by identifying every $\tau^{(m)}_i$ in the $m \in [M]$ replicated of the simulation that is within a $w:=\min\{T^{1/2}/2, 15\}$ sized window of some estimated $\hat{\tau}^{(m)}_j$. We then collect the $\alpha$ level credible sets $\mathcal{CS}(\alpha, \hat{\tau}^{(m)}_j)$ for each identified $\hat{\tau}^{(m)}_j$ in the window $w$ and determine what proportion cover their respective true changes $\tau^{(m)}_i$:
\begin{align}
    \text{CCD} := \frac{\sum_{m=1}^M\sum_{j=1}^{J^*} \mathbbm{1}\{\exists \; i \text{ s.t. } |\hat{\tau}^{(m)}_i - \tau^{(m)}_j| \leq w \text{ and } \tau^{(m)}_j \in \mathcal{CS}(\hat{\tau}^{(m)}_i)\}}{\sum_{m=1}^M\sum_{j=1}^{J^*} \mathbbm{1}\{\exists \; i \text{ s.t. } |\hat{\tau}^{(m)}_i - \tau^{(m)}_j| \leq w\}}. \label{eq:ccd}
\end{align}

\subsection{Software}
\label{app:simulations_software}

We fit H-SMUCE for $\alpha \in \{0.1,0.5\}$ using the \texttt{stepFit} function from the $\textbf{stepR}$ package\footnote{\url{https://cran.r-project.org/package=stepR}.} with the default settings. For the MOSUM procedure, we set the significance level $\alpha = 0.1$ and use the \texttt{multiscale.bottomUp} and \texttt{multiscale.localPrune} functions from the \textbf{mosum} package\footnote{\url{https://cran.r-project.org/package=mosum}.} to fit the model and automatically select the bandwidth parameter using the respective bottom-up merging and local pruning procedures described in \cite{Cho22}. As per the guidance in \cite{Meier21}, we set \texttt{var.est.method = "mosum.min"} to account for joint mean and variance changes. We fit PELT using the \texttt{cpt.meanvar} function with the default values from the \textbf{changepoint} package\footnote{\url{https://cran.r-project.org/package=changepoint}.} and fit NOT using the \texttt{not} function with from the \textbf{not} package\footnote{\url{https://cran.r-project.org/package=not}.} with \texttt{contrast = "pcwsConstMeanVar"}. We fit NSP using the \texttt{nsp\_selfnorm} function from the \textbf{nsp} package\footnote{\url{https://cran.r-project.org/package=nsp}.} with $M=$ 1,000 intervals and a detection threshold $\lambda_\alpha$ equal to the $\alpha =0.1$ threshold specified for the heterogeneous noise setting in Section 5.3 of \cite{Fryzlewicz24}. To calculate the conditional coverage as in (\ref{eq:ccd}), we use the midpoints of the sets returned by NSP in place of $\hat{\boldsymbol{\tau}}_{1:\hat{J}}$.

\subsection{Complete Simulation \ref{sim:main} Results}
\label{app:main_sim_results}

\begin{table}[htbp!] 
    \scriptsize
    \centering
    \begin{tabular}{l || l || r r r r | r r | r}
        \multicolumn{1}{c||}{Setting} & \multicolumn{1}{c||}{Method} & $|J^* - J|$ & Hausdorff & FPSLE & FNSLE & CI Len. & CCD & Time (s) \\ \hline\hline 
        $T =$ 100 & MICH-Auto & \textsuperscript{\dag\ddag}0.052 & \textsuperscript{\ddag}1.015 & \textsuperscript{\dag\ddag}0.308 & \textsuperscript{\ddag}0.201 & 1.482 & 0.972 & 0.250\\
        $J^* =$ 2 & MICH-Ora & \textsuperscript{*}0.000 & \textsuperscript{*}0.510 & \textsuperscript{*}0.089 & \textsuperscript{*}0.092 & 1.354 & 0.972 & 0.010\\
        Min. Space = 15 & H-SMUCE (0.1) & \textsuperscript{}0.134 & \textsuperscript{}2.981 & \textsuperscript{}0.721 & \textsuperscript{}1.439 & 23.315 & 0.966 & 0.188\\
         & H-SMUCE (0.5) & \textsuperscript{}0.059 & \textsuperscript{}1.644 & \textsuperscript{}0.460 & \textsuperscript{}0.412 & 16.192 & 0.986 & 0.061\\
         & MOSUM BUM & \textsuperscript{}0.083 & \textsuperscript{}2.940 & \textsuperscript{}0.790 & \textsuperscript{}0.617 & 12.043 & 0.956 & 0.018\\
         & MOSUM LP & \textsuperscript{}0.144 & \textsuperscript{}2.083 & \textsuperscript{}0.751 & \textsuperscript{}0.402 & 6.532 & 0.988 & 0.016\\
         & PELT & \textsuperscript{}0.105 & \textsuperscript{\dag}0.832 & \textsuperscript{}0.494 & \textsuperscript{\dag}0.148 &  &  & 0.004\\
         & NOT & \textsuperscript{}0.073 & \textsuperscript{}1.082 & \textsuperscript{}0.401 & \textsuperscript{}0.195 &  &  & 0.051\\
         & NSP & \textsuperscript{}0.712 & \textsuperscript{} & \textsuperscript{} & \textsuperscript{} & 36.062 & 1.000 & 3.661\\ \hline
        $T =$ 100 & MICH-Auto & \textsuperscript{\ddag}0.111 & \textsuperscript{\ddag}1.472 & \textsuperscript{\ddag}0.224 & \textsuperscript{\ddag}0.337 & 1.106 & 0.988 & 0.437\\
        $J^* =$ 5 & MICH-Ora & \textsuperscript{*}0.005 & \textsuperscript{}0.798 & \textsuperscript{}0.090 & \textsuperscript{}0.108 & 1.095 & 0.988 & 0.089\\
        Min. Space = 15 & H-SMUCE (0.1) & \textsuperscript{}1.643 & \textsuperscript{}17.449 & \textsuperscript{}3.318 & \textsuperscript{}5.469 & 15.405 & 0.727 & 0.022\\
         & H-SMUCE (0.5) & \textsuperscript{}0.576 & \textsuperscript{}8.520 & \textsuperscript{}1.064 & \textsuperscript{}1.879 & 12.792 & 0.901 & 0.061\\
         & MOSUM BUM & \textsuperscript{}0.922 & \textsuperscript{}13.416 & \textsuperscript{}1.881 & \textsuperscript{}2.892 & 24.490 & 0.998 & 0.021\\
         & MOSUM LP & \textsuperscript{}0.278 & \textsuperscript{}3.050 & \textsuperscript{}0.484 & \textsuperscript{}0.498 & 13.182 & 0.998 & 0.024\\
         & PELT & \textsuperscript{}0.182 & \textsuperscript{}0.773 & \textsuperscript{}0.236 & \textsuperscript{}0.072 &  &  & 0.003\\
         & NOT & \textsuperscript{\dag}0.043 & \textsuperscript{*\dag}0.589 & \textsuperscript{*\dag}0.083 & \textsuperscript{*\dag}0.055 &  &  & 0.050\\
         & NSP & \textsuperscript{}3.462 & \textsuperscript{} & \textsuperscript{} & \textsuperscript{} & 35.079 & 1.000 & 3.676\\ \hline \hline
    \end{tabular}
    \caption{\small \textbf{Simulation \ref{sim:main} Results}. Average statistics over 5,000 replicates from Simulation \ref{sim:main}. Bias $|J^* - J|$ assesses each model's ability to estimate correct number of changes (the lower the better). FPSLE and FNSLE assess each model's ability to accurately estimate the locations of the changes (the lower the better). The average confidence/credible set length (CI Len.) and coverage conditional on detection (CCD) summarize each method's ability to provide uncertainty quantification. MICH credible sets are constructed for $\alpha = 0.1$ and detections are determined with $\delta = 0.5$. Key: \textsuperscript{*}Best overall. \textsuperscript{\textdagger}Best excluding MICH-Ora. \textsuperscript{\ddag}Best among methods with uncertainty quantification excluding MICH-Ora.}
    \label{tab:main_sim_1}
\end{table}

\begin{table}[htbp!] 
    \scriptsize
    \centering
    \begin{tabular}{l || l || r r r r | r r | r}
    \multicolumn{1}{c||}{Setting} & \multicolumn{1}{c||}{Method} & $|J^* - J|$ & Hausdorff & FPSLE & FNSLE & CI Len. & CCD & Time (s) \\ \hline\hline 
    $T =$ 500 & MICH-Auto & \textsuperscript{\ddag}0.043 & \textsuperscript{\ddag}4.194 & \textsuperscript{\ddag}1.347 & \textsuperscript{\ddag}1.723 & 3.445 & 0.940 & 1.562\\
    $J^* =$ 2 & MICH-Ora & \textsuperscript{*}0.005 & \textsuperscript{}2.864 & \textsuperscript{*}0.626 & \textsuperscript{}0.836 & 3.313 & 0.942 & 0.066\\
    Min. Space = 15 & H-SMUCE (0.1) & \textsuperscript{}0.060 & \textsuperscript{}4.916 & \textsuperscript{}1.376 & \textsuperscript{}2.561 & 53.617 & 0.977 & 0.036\\
     & H-SMUCE (0.5) & \textsuperscript{}0.163 & \textsuperscript{}15.251 & \textsuperscript{}6.029 & \textsuperscript{}3.176 & 37.655 & 0.982 & 0.070\\
     & MOSUM BUM & \textsuperscript{}0.407 & \textsuperscript{}25.061 & \textsuperscript{}11.913 & \textsuperscript{}5.050 & 25.435 & 0.976 & 0.031\\
     & MOSUM LP & \textsuperscript{}0.460 & \textsuperscript{}21.648 & \textsuperscript{}10.404 & \textsuperscript{}5.067 & 17.821 & 0.987 & 0.033\\
     & PELT & \textsuperscript{}0.024 & \textsuperscript{*\dag}2.784 & \textsuperscript{}1.035 & \textsuperscript{*\dag}0.522 &  &  & 0.009\\
     & NOT & \textsuperscript{\dag}0.018 & \textsuperscript{}3.070 & \textsuperscript{\dag}0.907 & \textsuperscript{}0.579 &  &  & 0.139\\
     & NSP & \textsuperscript{}0.223 & \textsuperscript{} & \textsuperscript{} & \textsuperscript{} & 91.648 & 1.000 & 10.447\\ \hline
    $T =$ 500 & MICH-Auto & \textsuperscript{}0.031 & \textsuperscript{\ddag}3.896 & \textsuperscript{\ddag}1.160 & \textsuperscript{\ddag}0.994 & 3.588 & 0.941 & 1.548\\
    $J^* =$ 2 & MICH-Ora & \textsuperscript{*}0.001 & \textsuperscript{*}2.930 & \textsuperscript{*}0.579 & \textsuperscript{}0.614 & 3.466 & 0.942 & 0.061\\
    Min. Space = 30 & H-SMUCE (0.1) & \textsuperscript{\ddag}0.024 & \textsuperscript{}4.596 & \textsuperscript{}1.198 & \textsuperscript{}1.167 & 54.803 & 0.996 & 0.037\\
     & H-SMUCE (0.5) & \textsuperscript{}0.170 & \textsuperscript{}16.009 & \textsuperscript{}6.249 & \textsuperscript{}2.829 & 37.796 & 0.991 & 0.070\\
     & MOSUM BUM & \textsuperscript{}0.411 & \textsuperscript{}24.442 & \textsuperscript{}11.231 & \textsuperscript{}4.716 & 24.985 & 0.980 & 0.031\\
     & MOSUM LP & \textsuperscript{}0.451 & \textsuperscript{}21.891 & \textsuperscript{}10.071 & \textsuperscript{}5.140 & 18.404 & 0.987 & 0.033\\
     & PELT & \textsuperscript{}0.022 & \textsuperscript{\dag}2.974 & \textsuperscript{}1.004 & \textsuperscript{*\dag}0.562 &  &  & 0.009\\
     & NOT & \textsuperscript{\dag}0.016 & \textsuperscript{}3.166 & \textsuperscript{\dag}0.908 & \textsuperscript{}0.595 &  &  & 0.139\\
     & NSP & \textsuperscript{}0.141 & \textsuperscript{} & \textsuperscript{} & \textsuperscript{} & 92.570 & 1.000 & 10.679\\ \hline
    $T =$ 500 & MICH-Auto & \textsuperscript{\ddag}0.145 & \textsuperscript{\ddag}7.281 & \textsuperscript{\ddag}1.070 & \textsuperscript{}2.064 & 2.253 & 0.949 & 2.777\\
    $J^* =$ 5 & MICH-Ora & \textsuperscript{}0.040 & \textsuperscript{}4.999 & \textsuperscript{}0.648 & \textsuperscript{}1.039 & 2.197 & 0.950 & 0.541\\
    Min. Space = 15 & H-SMUCE (0.1) & \textsuperscript{}0.468 & \textsuperscript{}14.182 & \textsuperscript{}1.910 & \textsuperscript{}5.083 & 39.884 & 0.916 & 0.038\\
     & H-SMUCE (0.5) & \textsuperscript{}0.193 & \textsuperscript{}9.794 & \textsuperscript{}1.712 & \textsuperscript{\ddag}1.988 & 28.567 & 0.956 & 0.070\\
     & MOSUM BUM & \textsuperscript{}0.459 & \textsuperscript{}18.986 & \textsuperscript{}4.921 & \textsuperscript{}2.569 & 30.530 & 0.990 & 0.055\\
     & MOSUM LP & \textsuperscript{}0.581 & \textsuperscript{}23.289 & \textsuperscript{}5.017 & \textsuperscript{}4.664 & 16.407 & 0.995 & 0.044\\
     & PELT & \textsuperscript{}0.039 & \textsuperscript{*\dag}3.052 & \textsuperscript{}0.584 & \textsuperscript{}0.364 &  &  & 0.009\\
     & NOT & \textsuperscript{*\dag}0.018 & \textsuperscript{}3.124 & \textsuperscript{*\dag}0.447 & \textsuperscript{*\dag}0.346 &  &  & 0.143\\
     & NSP & \textsuperscript{}1.694 & \textsuperscript{} & \textsuperscript{} & \textsuperscript{} & 75.884 & 1.000 & 13.599\\ \hline
    $T =$ 500 & MICH-Auto & \textsuperscript{}0.087 & \textsuperscript{\ddag}6.261 & \textsuperscript{\ddag}0.886 & \textsuperscript{}1.203 & 2.387 & 0.943 & 2.597\\
    $J^* =$ 5 & MICH-Ora & \textsuperscript{}0.021 & \textsuperscript{}5.132 & \textsuperscript{}0.638 & \textsuperscript{}0.836 & 2.320 & 0.945 & 0.488\\
    Min. Space = 30 & H-SMUCE (0.1) & \textsuperscript{}0.158 & \textsuperscript{}9.345 & \textsuperscript{}1.159 & \textsuperscript{}2.134 & 41.832 & 0.978 & 0.038\\
     & H-SMUCE (0.5) & \textsuperscript{\ddag}0.077 & \textsuperscript{}6.743 & \textsuperscript{}1.129 & \textsuperscript{\ddag}0.746 & 28.893 & 0.994 & 0.070\\
     & MOSUM BUM & \textsuperscript{}0.323 & \textsuperscript{}13.757 & \textsuperscript{}3.184 & \textsuperscript{}1.510 & 24.593 & 0.993 & 0.056\\
     & MOSUM LP & \textsuperscript{}0.571 & \textsuperscript{}24.485 & \textsuperscript{}4.817 & \textsuperscript{}4.754 & 16.955 & 0.996 & 0.045\\
     & PELT & \textsuperscript{}0.038 & \textsuperscript{*\dag}3.211 & \textsuperscript{}0.592 & \textsuperscript{*\dag}0.365 &  &  & 0.009\\
     & NOT & \textsuperscript{*\dag}0.017 & \textsuperscript{}3.314 & \textsuperscript{*\dag}0.466 & \textsuperscript{}0.373 &  &  & 0.144\\
     & NSP & \textsuperscript{}1.416 & \textsuperscript{} & \textsuperscript{} & \textsuperscript{} & 74.979 & 1.000 & 14.193\\ \hline
    $T =$ 500 & MICH-Auto & \textsuperscript{}0.638 & \textsuperscript{\ddag}15.118 & \textsuperscript{\ddag}1.225 & \textsuperscript{}2.839 & 1.681 & 0.962 & 5.588\\
    $J^* =$ 10 & MICH-Ora & \textsuperscript{}0.197 & \textsuperscript{}10.309 & \textsuperscript{}0.811 & \textsuperscript{}1.265 & 1.707 & 0.963 & 2.383\\
    Min. Space = 15 & H-SMUCE (0.1) & \textsuperscript{}2.051 & \textsuperscript{}31.127 & \textsuperscript{}3.503 & \textsuperscript{}7.765 & 29.709 & 0.816 & 0.038\\
     & H-SMUCE (0.5) & \textsuperscript{}0.809 & \textsuperscript{}15.602 & \textsuperscript{}1.323 & \textsuperscript{}2.814 & 23.366 & 0.892 & 0.071\\
     & MOSUM BUM & \textsuperscript{\ddag}0.546 & \textsuperscript{}20.429 & \textsuperscript{}2.552 & \textsuperscript{\ddag}2.230 & 42.849 & 0.999 & 0.084\\
     & MOSUM LP & \textsuperscript{}0.819 & \textsuperscript{}24.649 & \textsuperscript{}2.617 & \textsuperscript{}3.600 & 16.647 & 0.998 & 0.064\\
     & PELT & \textsuperscript{}0.064 & \textsuperscript{*\dag}2.729 & \textsuperscript{}0.300 & \textsuperscript{}0.198 &  &  & 0.009\\
     & NOT & \textsuperscript{*\dag}0.015 & \textsuperscript{}2.813 & \textsuperscript{*\dag}0.218 & \textsuperscript{*\dag}0.197 &  &  & 0.144\\
     & NSP & \textsuperscript{}5.520 & \textsuperscript{} & \textsuperscript{} & \textsuperscript{} & 66.036 & 1.000 & 15.708\\ \hline
    $T =$ 500 & MICH-Auto & \textsuperscript{}0.296 & \textsuperscript{}10.875 & \textsuperscript{}0.868 & \textsuperscript{}1.468 & 1.697 & 0.955 & 5.287\\
    $J^* =$ 10 & MICH-Ora & \textsuperscript{}0.079 & \textsuperscript{}8.373 & \textsuperscript{}0.620 & \textsuperscript{}0.839 & 1.687 & 0.956 & 2.025\\
    Min. Space = 30 & H-SMUCE (0.1) & \textsuperscript{}1.026 & \textsuperscript{}29.855 & \textsuperscript{}2.499 & \textsuperscript{}4.651 & 31.285 & 0.918 & 0.039\\
     & H-SMUCE (0.5) & \textsuperscript{\ddag}0.128 & \textsuperscript{\ddag}6.923 & \textsuperscript{\ddag}0.495 & \textsuperscript{\ddag}0.706 & 24.504 & 0.986 & 0.071\\
     & MOSUM BUM & \textsuperscript{}0.194 & \textsuperscript{}9.114 & \textsuperscript{}0.983 & \textsuperscript{}0.736 & 41.213 & 1.000 & 0.093\\
     & MOSUM LP & \textsuperscript{}0.769 & \textsuperscript{}26.445 & \textsuperscript{}2.559 & \textsuperscript{}3.398 & 16.852 & 0.999 & 0.065\\
     & PELT & \textsuperscript{}0.060 & \textsuperscript{*\dag}2.839 & \textsuperscript{}0.306 & \textsuperscript{*\dag}0.206 &  &  & 0.009\\
     & NOT & \textsuperscript{*\dag}0.013 & \textsuperscript{}2.897 & \textsuperscript{*\dag}0.230 & \textsuperscript{}0.208 &  &  & 0.145\\
     & NSP & \textsuperscript{}4.968 & \textsuperscript{} & \textsuperscript{} & \textsuperscript{} & 65.723 & 1.000 & 16.480\\ \hline \hline
    \end{tabular}
    \caption{\small \textbf{Simulation \ref{sim:main} Results (Continued from Table \ref{tab:main_sim_1})}. Key: \textsuperscript{*}Best overall. \textsuperscript{\textdagger}Best excluding MICH-Ora. \textsuperscript{\ddag}Best among methods with uncertainty quantification excluding MICH-Ora.}
    \label{tab:main_sim_2}
\end{table}

\begin{table}[htbp!] 
    \scriptsize
    \centering
    \begin{tabular}{l || l || r r r r | r r | r}
    \multicolumn{1}{c||}{Setting} & \multicolumn{1}{c||}{Method} & $|J^* - J|$ & Hausdorff & FPSLE & FNSLE & CI Len. & CCD & Time (s) \\ \hline\hline 
    $T =$ 1000 & MICH-Auto & \textsuperscript{}0.042 & \textsuperscript{}8.421 & \textsuperscript{}2.826 & \textsuperscript{}3.106 & 5.229 & 0.933 & 2.668\\
    $J^* =$ 2 & MICH-Ora & \textsuperscript{*}0.004 & \textsuperscript{}5.933 & \textsuperscript{*}1.258 & \textsuperscript{}1.512 & 5.097 & 0.934 & 0.146\\
    Min. Space = 30 & H-SMUCE (0.1) & \textsuperscript{\ddag}0.028 & \textsuperscript{\ddag}8.031 & \textsuperscript{\ddag}2.323 & \textsuperscript{\ddag}2.121 & 89.516 & 0.995 & 0.047\\
     & H-SMUCE (0.5) & \textsuperscript{}0.182 & \textsuperscript{}31.377 & \textsuperscript{}12.809 & \textsuperscript{}5.619 & 63.084 & 0.993 & 0.078\\
     & MOSUM BUM & \textsuperscript{}0.197 & \textsuperscript{}34.276 & \textsuperscript{}13.073 & \textsuperscript{}8.990 & 41.426 & 0.982 & 0.059\\
     & MOSUM LP & \textsuperscript{}0.658 & \textsuperscript{}53.709 & \textsuperscript{}27.190 & \textsuperscript{}13.506 & 25.985 & 0.978 & 0.052\\
     & PELT & \textsuperscript{}0.011 & \textsuperscript{*\dag}4.885 & \textsuperscript{}1.378 & \textsuperscript{*\dag}0.930 &  &  & 0.017\\
     & NOT & \textsuperscript{\dag}0.007 & \textsuperscript{}5.032 & \textsuperscript{\dag}1.309 & \textsuperscript{}0.957 &  &  & 0.244\\
     & NSP & \textsuperscript{}0.120 & \textsuperscript{} & \textsuperscript{} & \textsuperscript{} & 142.515 & 1.000 & 16.688\\ \hline
    $T =$ 1000 & MICH-Auto & \textsuperscript{}0.036 & \textsuperscript{}8.061 & \textsuperscript{}2.730 & \textsuperscript{}2.577 & 5.565 & 0.924 & 2.680\\
    $J^* =$ 2 & MICH-Ora & \textsuperscript{*}0.003 & \textsuperscript{}5.754 & \textsuperscript{*}1.221 & \textsuperscript{}1.394 & 5.385 & 0.925 & 0.145\\
    Min. Space = 50 & H-SMUCE (0.1) & \textsuperscript{\ddag}0.016 & \textsuperscript{\ddag}7.575 & \textsuperscript{\ddag}2.045 & \textsuperscript{\ddag}1.475 & 90.608 & 0.999 & 0.047\\
     & H-SMUCE (0.5) & \textsuperscript{}0.182 & \textsuperscript{}31.610 & \textsuperscript{}12.730 & \textsuperscript{}5.594 & 62.755 & 0.995 & 0.078\\
     & MOSUM BUM & \textsuperscript{}0.214 & \textsuperscript{}35.690 & \textsuperscript{}13.667 & \textsuperscript{}8.758 & 41.798 & 0.989 & 0.060\\
     & MOSUM LP & \textsuperscript{}0.679 & \textsuperscript{}54.869 & \textsuperscript{}27.766 & \textsuperscript{}13.830 & 26.994 & 0.982 & 0.053\\
     & PELT & \textsuperscript{}0.014 & \textsuperscript{*\dag}5.247 & \textsuperscript{}1.522 & \textsuperscript{*\dag}1.006 &  &  & 0.017\\
     & NOT & \textsuperscript{\dag}0.008 & \textsuperscript{}5.404 & \textsuperscript{\dag}1.340 & \textsuperscript{}1.032 &  &  & 0.244\\
     & NSP & \textsuperscript{}0.081 & \textsuperscript{} & \textsuperscript{} & \textsuperscript{} & 142.772 & 1.000 & 16.833\\ \hline
    $T =$ 1000 & MICH-Auto & \textsuperscript{}0.133 & \textsuperscript{}12.730 & \textsuperscript{}2.045 & \textsuperscript{}3.331 & 3.529 & 0.939 & 5.172\\
    $J^* =$ 5 & MICH-Ora & \textsuperscript{}0.059 & \textsuperscript{}10.500 & \textsuperscript{}1.417 & \textsuperscript{}2.516 & 3.432 & 0.94 & 1.184\\
    Min. Space = 30 & H-SMUCE (0.1) & \textsuperscript{}0.118 & \textsuperscript{\ddag}11.494 & \textsuperscript{\ddag}1.563 & \textsuperscript{}2.819 & 63.992 & 0.981 & 0.048\\
     & H-SMUCE (0.5) & \textsuperscript{\ddag}0.113 & \textsuperscript{}15.943 & \textsuperscript{}3.181 & \textsuperscript{\ddag}1.697 & 43.95 & 0.994 & 0.078\\
     & MOSUM BUM & \textsuperscript{}0.210 & \textsuperscript{}24.164 & \textsuperscript{}4.614 & \textsuperscript{}4.365 & 54.762 & 0.987 & 0.117\\
     & MOSUM LP & \textsuperscript{}0.790 & \textsuperscript{}58.547 & \textsuperscript{}13.451 & \textsuperscript{}12.453 & 23.49 & 0.991 & 0.067\\
     & PELT & \textsuperscript{}0.024 & \textsuperscript{*\dag}5.395 & \textsuperscript{}0.968 & \textsuperscript{}0.669 &  &  & 0.017\\
     & NOT & \textsuperscript{*\dag}0.009 & \textsuperscript{}5.463 & \textsuperscript{*\dag}0.763 & \textsuperscript{*\dag}0.658 &  &  & 0.251\\
     & NSP & \textsuperscript{}1.035 & \textsuperscript{} & \textsuperscript{} & \textsuperscript{} & 111.56 & 1 & 21.27\\ \hline
    $T =$ 1000 & MICH-Auto & \textsuperscript{}0.094 & \textsuperscript{}10.987 & \textsuperscript{}1.850 & \textsuperscript{}2.263 & 3.683 & 0.933 & 4.844\\
    $J^* =$ 5 & MICH-Ora & \textsuperscript{}0.039 & \textsuperscript{}10.076 & \textsuperscript{}1.314 & \textsuperscript{}1.956 & 3.603 & 0.935 & 1.092\\
    Min. Space = 50 & H-SMUCE (0.1) & \textsuperscript{\ddag}0.043 & \textsuperscript{\ddag}9.224 & \textsuperscript{\ddag}1.209 & \textsuperscript{}1.512 & 64.677 & 0.995 & 0.048\\
     & H-SMUCE (0.5) & \textsuperscript{}0.102 & \textsuperscript{}14.312 & \textsuperscript{}2.718 & \textsuperscript{\ddag}1.489 & 43.846 & 0.995 & 0.08\\
     & MOSUM BUM & \textsuperscript{}0.146 & \textsuperscript{}17.999 & \textsuperscript{}3.562 & \textsuperscript{}2.300 & 49.614 & 0.997 & 0.123\\
     & MOSUM LP & \textsuperscript{}0.787 & \textsuperscript{}59.874 & \textsuperscript{}13.231 & \textsuperscript{}12.152 & 23.97 & 0.991 & 0.068\\
     & PELT & \textsuperscript{}0.026 & \textsuperscript{*\dag}5.672 & \textsuperscript{}0.999 & \textsuperscript{*\dag}0.705 &  &  & 0.017\\
     & NOT & \textsuperscript{*\dag}0.009 & \textsuperscript{}5.793 & \textsuperscript{*\dag}0.821 & \textsuperscript{}0.715 &  &  & 0.252\\
     & NSP & \textsuperscript{}0.783 & \textsuperscript{} & \textsuperscript{} & \textsuperscript{} & 112.256 & 1 & 22.079\\ \hline
    $T =$ 1000 & MICH-Auto & \textsuperscript{}0.411 & \textsuperscript{}21.607 & \textsuperscript{}1.831 & \textsuperscript{}3.539 & 2.539 & 0.942 & 11.42\\
    $J^* =$ 10 & MICH-Ora & \textsuperscript{}0.194 & \textsuperscript{}18.071 & \textsuperscript{}1.457 & \textsuperscript{}2.410 & 2.525 & 0.943 & 4.796\\
    Min. Space = 30 & H-SMUCE (0.1) & \textsuperscript{}0.645 & \textsuperscript{}28.032 & \textsuperscript{}2.228 & \textsuperscript{}4.696 & 47.512 & 0.943 & 0.05\\
     & H-SMUCE (0.5) & \textsuperscript{\ddag}0.140 & \textsuperscript{\ddag}11.707 & \textsuperscript{\ddag}1.078 & \textsuperscript{\ddag}1.136 & 35.031 & 0.988 & 0.08\\
     & MOSUM BUM & \textsuperscript{}0.498 & \textsuperscript{}31.897 & \textsuperscript{}2.930 & \textsuperscript{}4.704 & 81.425 & 0.999 & 0.194\\
     & MOSUM LP & \textsuperscript{}0.998 & \textsuperscript{}57.885 & \textsuperscript{}6.822 & \textsuperscript{}8.713 & 21.921 & 0.997 & 0.089\\
     & PELT & \textsuperscript{}0.034 & \textsuperscript{*\dag}5.093 & \textsuperscript{}0.551 & \textsuperscript{}0.434 &  &  & 0.017\\
     & NOT & \textsuperscript{*\dag}0.008 & \textsuperscript{}5.208 & \textsuperscript{*\dag}0.461 & \textsuperscript{*\dag}0.433 &  &  & 0.252\\
     & NSP & \textsuperscript{}3.734 & \textsuperscript{} & \textsuperscript{} & \textsuperscript{} & 91.141 & 1 & 26.14\\ \hline
    $T =$ 1000 & MICH-Auto & \textsuperscript{}0.252 & \textsuperscript{}17.505 & \textsuperscript{}1.552 & \textsuperscript{}2.251 & 2.665 & 0.936 & 11.365\\
    $J^* =$ 10 & MICH-Ora & \textsuperscript{}0.110 & \textsuperscript{}16.280 & \textsuperscript{}1.268 & \textsuperscript{}1.795 & 2.643 & 0.937 & 4.442\\
    Min. Space = 50 & H-SMUCE (0.1) & \textsuperscript{}0.285 & \textsuperscript{}21.714 & \textsuperscript{}1.583 & \textsuperscript{}2.744 & 49.112 & 0.978 & 0.05\\
     & H-SMUCE (0.5) & \textsuperscript{\ddag}0.053 & \textsuperscript{\ddag}8.219 & \textsuperscript{\ddag}0.767 & \textsuperscript{\ddag}0.666 & 35.455 & 0.997 & 0.081\\
     & MOSUM BUM & \textsuperscript{}0.153 & \textsuperscript{}15.538 & \textsuperscript{}1.454 & \textsuperscript{}1.597 & 83.603 & 1 & 0.212\\
     & MOSUM LP & \textsuperscript{}0.974 & \textsuperscript{}60.501 & \textsuperscript{}6.737 & \textsuperscript{}8.601 & 22.335 & 0.996 & 0.09\\
     & PELT & \textsuperscript{}0.034 & \textsuperscript{*\dag}5.451 & \textsuperscript{}0.602 & \textsuperscript{*\dag}0.471 &  &  & 0.017\\
     & NOT & \textsuperscript{*\dag}0.009 & \textsuperscript{}5.555 & \textsuperscript{*\dag}0.507 & \textsuperscript{}0.479 &  &  & 0.253\\
     & NSP & \textsuperscript{}3.208 & \textsuperscript{} & \textsuperscript{} & \textsuperscript{} & 90.513 & 1 & 27.481\\ \hline\hline
    \end{tabular}
    \caption{\small \textbf{Simulation \ref{sim:main} Results (Continued from Table \ref{tab:main_sim_2})}. Key: \textsuperscript{*}Best overall. \textsuperscript{\textdagger}Best excluding MICH-Ora. \textsuperscript{\ddag}Best among methods with uncertainty quantification excluding MICH-Ora.}
    \label{tab:main_sim_3}
\end{table}

\begin{figure}[!h]
    \centering
    \includegraphics[scale =0.42]{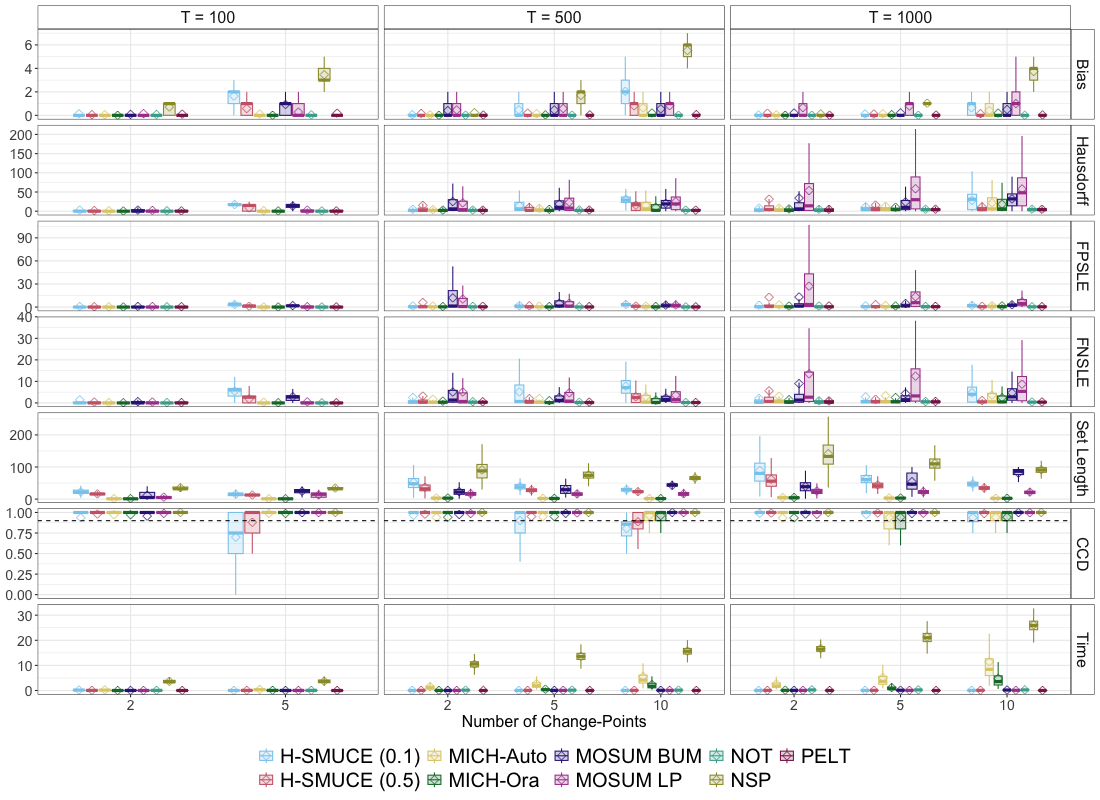}
    \caption{\textbf{Simulation \ref{sim:main} Results -- Small $\Delta_T$.} Box-plots of evaluation metrics for 5,000 replicates of Simulation \ref{sim:main} in small $\Delta_T$ setting with $\Delta_{100} = 15, \Delta_{500} = 15, \Delta_{1000} = 30$. Diamonds ($\Diamond$) display mean of each statistic. Bias $|J^* - J|$ assesses each model's ability to estimate correct number of changes (lower is better). Hausdorff, FPSLE, and FNSLE assess each model's ability to accurately estimate the locations of the changes (lower is better). Set Length and CCD report the average size and coverage of credible/confidence sets for methods that provide uncertainty quantification (dashed line indicates nominal coverage level for $\alpha = 0.1$). Time reports run-time for each method in seconds.}
    \label{fig:low_full_sim_plot}
\end{figure}

\begin{figure}[!h]
    \centering
    \includegraphics[scale =0.42]{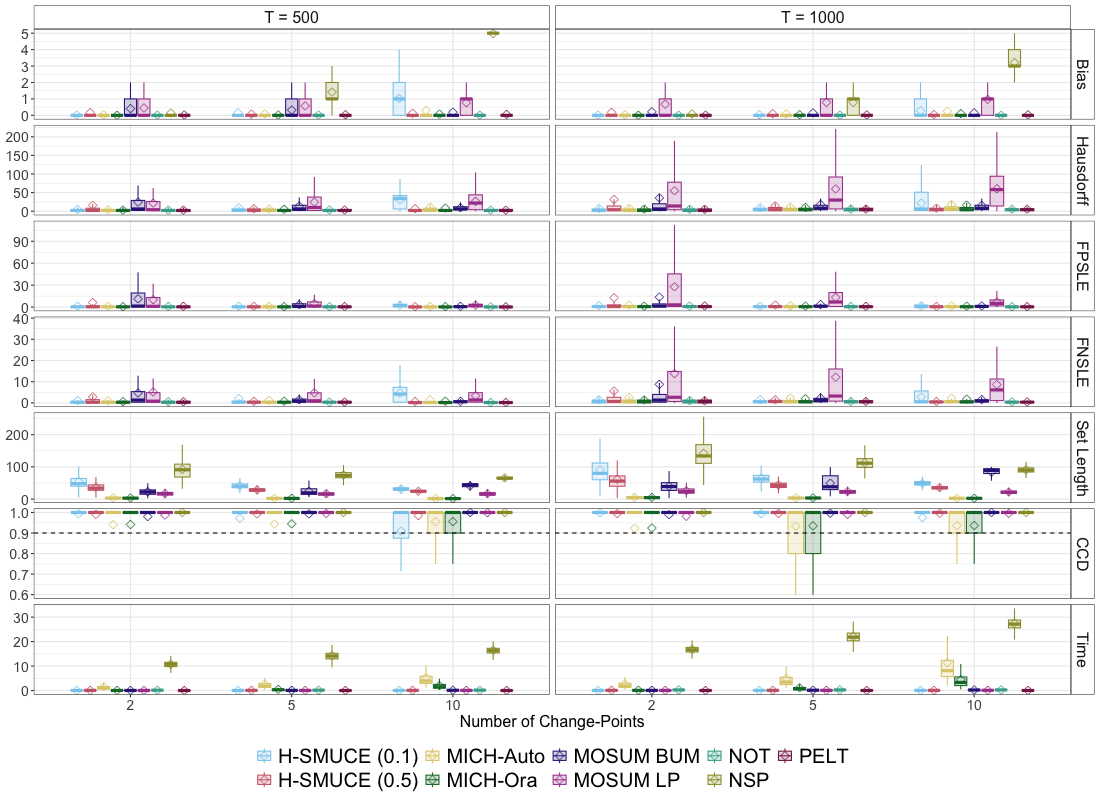}
    \caption{\textbf{Simulation \ref{sim:main} Results -- Large $\Delta_T$.} Box-plots of evaluation metrics for 5,000 replicates of Simulation \ref{sim:main} in large $\Delta_T$ setting with $\Delta_{500} = 30, \Delta_{1000} = 50$. Diamonds ($\Diamond$) display mean of each statistic. Bias $|J^* - J|$ assesses each model's ability to estimate correct number of changes (lower is better). Hausdorff, FPSLE, and FNSLE assess each model's ability to accurately estimate the locations of the changes (lower is better). Set Length and CCD report the average size and coverage of credible/confidence sets for methods that provide uncertainty quantification (dashed line indicates nominal coverage level for $\alpha = 0.1$). Time reports run-time for each method in seconds.}
    \label{fig:high_full_sim_plot}
\end{figure}

Figures \ref{fig:low_full_sim_plot} and \ref{fig:high_full_sim_plot} as well as Tables \ref{tab:main_sim_1}-\ref{tab:main_sim_3} display the results of the simulation study from  Section \ref{sec:simulations}. First note that every MICH model achieves the nominal coverage level of $90\%$ in each setting, while at the same time returning credible sets that are smaller than those returned by H-SMUCE, MOSUM, and NSP by a factor of 10-50. This empirical result reflects the the desirable localization properties we established in Section \ref{sec:localization}. Next, we see that MICH-Ora is virtually unbiased for $J^*$ in every setting (as we should expect), and in all but the most difficult setting with $T=$ 1,000 and $J^* = 10$, MICH-Ora either is the best performer as measured by the Hausdorff statistic, FPSLE, and FNSLE, or is off by a negligible factor. MICH-Ora is typically infeasible as we do not have knowledge of $J^*$, but we can always fit MICH-Auto, which performs admirably. In all but the most difficult settings ($T=$ 500, $J^* = 10$ and $T=$ 1,000, $J^* = 10$), we see that MICH-Auto is virtually unbiased for $J^*$ and performs either better or just as well as any other method (though the FNSLE of MICH appears to lag slightly behind that of PELT and NOT). 

MICH-Auto performs particularly well compared to the other methods that also provide uncertainty quantification. NSP and MOSUM both displaying significant bias for $J^*$ in each setting, and MOSUM consistently has the largest Hausdorff statistics, while and the regions returned by NSP are up to 50 times larger than the credible sets returned by MICH. MICH significantly outperforms H-SMUCE in the low observation settings with $T=100$. In particular, for the challenging setting where $T=100$ and $J^*=5$, MICH is essentially unbiased whereas H-SMUCE is off by 0.6-1.6 change-points and has performance metrics that are greater than those of MICH by a factor of 10. The performance of H-SMUCE begins to approach that of MICH as $T$ increases. When $T= $ 1,000 and $J^* = 10$, H-SMUCE with $\alpha = 0.5$ moderately outperforms MICH; however, MICH-Auto still outperforms H-SMUCE at the desired significance level $\alpha = 0.1$, particularly when $\Delta_T = 30$. In general, the performance of H-SMUCE heavily depends on the parameter $\alpha$. When $J^*/T$ is large, H-SMUCE with $\alpha = 0.5$ dominates the fit at $\alpha=0.1$ and vice versa when $J^*/T$ is small. 

Although MICH converges much faster than a Gibbs sampler implementation of the model (\ref{eq:y-mich})-(\ref{eq:s-mich}), the time needed to fit MICH still lags behind all other models except for NSP. The sensitivity analysis in Appendix \ref{app:tol_sensitivity} shows that increasing the convergence criterion to $\epsilon=10^{-5}$ brings the time to fit MICH in line with the other methods without sacrificing much in the way of performance. It is also possible to improve the performance of MICH in more challenging settings by further decreasing $\epsilon$. 

\subsection{Heavy-Tailed Simulation}

In this section we repeat Simulation \ref{sim:main}, only now we relax the Gaussianity assumption and adopt heavy-tailed distributions for $e_t$. Specifically, we still generate the change-points and jump sizes $\{\tau_j, \mu_j,\sigma_j\}$ as in Simulation \ref{sim:main}, however we now have:
\begin{align}
    y_t:= \sum_{j=0}^J \mu_j\mathbbm{1}_{\{\tau_j \leq t < \tau_{j+1}\}} + \frac{e_t}{\text{SD}(e_t)} \sum_{j=0}^J \sigma_j\mathbbm{1}_{\{\tau_j \leq t < \tau_{j+1}\}}, \label{eq:heavy_sim}
\end{align}
where $e_t$ follows a heavy tailed distribution. We fit MICH with error tolerance $\epsilon =10^{-8}$, determine detections with $\delta = 0.5$, and construct credible sets for $\alpha = 0.1$.

\begin{figure}[!h]
    \centering
    \includegraphics[scale =0.42]{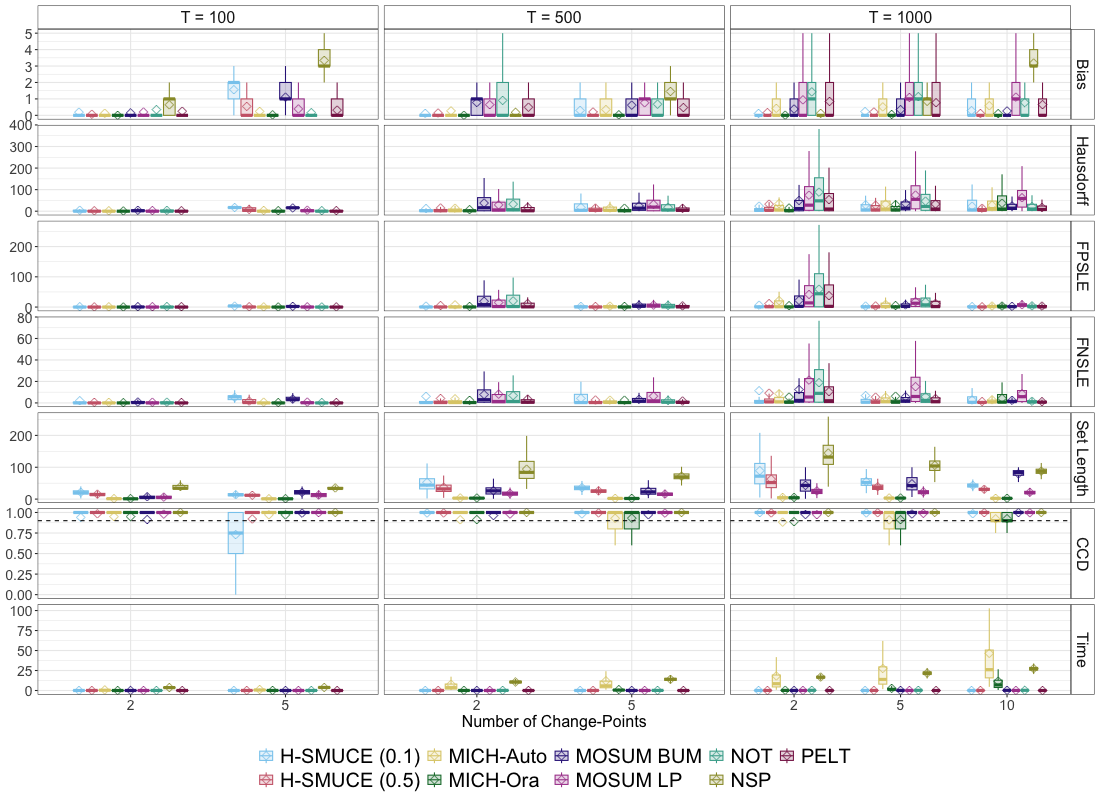}
    \caption{\small \textbf{Simulation \ref{sim:main} Results with $t_4$ Errors}. Box-plots of evaluation metrics for 5,000 replicates of $y_t$ generated according to (\ref{eq:heavy_sim}) with $e_t\sim t_4$. Diamonds ($\Diamond$) display mean of each statistic. Bias $|J^* - J|$ assesses each model's ability to estimate correct number of changes (lower is better). FPSLE and FNSLE assess each model's ability to accurately estimate the locations of the changes (lower is better). Set Length and CCD report the average size and coverage of credible/confidence sets for methods that provide uncertainty quantification (dashed line indicates nominal coverage level for $\alpha = 0.1$). Time reports run-time for each method in seconds.}
    \label{fig:t_sim_plot}
\end{figure}

\begin{table}[htbp!] 
    \scriptsize
    \centering
    \begin{tabular}{l || l || r r r r | r r | r}
        \multicolumn{1}{c||}{Setting} & \multicolumn{1}{c||}{Method} & $|J^* - J|$ & Hausdorff & FPSLE & FNSLE & CI Len. & CCD & Time (s) \\ \hline\hline 
        $T =$ 100 & MICH-Auto & \textsuperscript{}0.128 & \textsuperscript{\dag\ddag}2.056 & \textsuperscript{}0.814 & \textsuperscript{\ddag}0.546 & 1.578 & 0.950 & 0.768\\
        $J^* =$ 2 & MICH-Ora & \textsuperscript{*}0.001 & \textsuperscript{*}1.106 & \textsuperscript{*}0.233 & \textsuperscript{*}0.256 & 1.435 & 0.950 & 0.017\\
        Min. Space = 15 & H-SMUCE (0.1) & \textsuperscript{}0.194 & \textsuperscript{}4.615 & \textsuperscript{}1.154 & \textsuperscript{}2.064 & 22.139 & 0.967 & 0.021\\
         & H-SMUCE (0.5) & \textsuperscript{\dag\ddag}0.068 & \textsuperscript{}2.075 & \textsuperscript{\dag\ddag}0.545 & \textsuperscript{}0.589 & 15.799 & 0.988 & 0.059\\
         & MOSUM BUM & \textsuperscript{}0.145 & \textsuperscript{}4.298 & \textsuperscript{}1.271 & \textsuperscript{}0.845 & 8.095 & 0.912 & 0.017\\
         & MOSUM LP & \textsuperscript{}0.202 & \textsuperscript{}2.998 & \textsuperscript{}1.161 & \textsuperscript{}0.667 & 6.735 & 0.980 & 0.016\\
         & PELT & \textsuperscript{}0.251 & \textsuperscript{}2.125 & \textsuperscript{}1.172 & \textsuperscript{\dag}0.394 &  &  & 0.002\\
         & NOT & \textsuperscript{}0.348 & \textsuperscript{}3.711 & \textsuperscript{}1.651 & \textsuperscript{}0.728 &  &  & 0.050\\
         & NSP & \textsuperscript{}0.639 & \textsuperscript{} & \textsuperscript{} & \textsuperscript{} & 38.189 & 1.000 & 3.673\\ \hline
        $T =$ 100 & MICH-Auto & \textsuperscript{\ddag}0.240 & \textsuperscript{\ddag}3.541 & \textsuperscript{\ddag}0.496 & \textsuperscript{\ddag}0.707 & 1.153 & 0.972 & 1.273\\
        $J^* =$ 5 & MICH-Ora & \textsuperscript{*}0.031 & \textsuperscript{*}2.425 & \textsuperscript{*}0.281 & \textsuperscript{}0.354 & 1.191 & 0.971 & 0.137\\
        Min. Space = 15 & H-SMUCE (0.1) & \textsuperscript{}1.560 & \textsuperscript{}17.128 & \textsuperscript{}3.191 & \textsuperscript{}5.173 & 14.028 & 0.767 & 0.023\\
         & H-SMUCE (0.5) & \textsuperscript{}0.542 & \textsuperscript{}8.188 & \textsuperscript{}1.023 & \textsuperscript{}1.739 & 12.144 & 0.936 & 0.059\\
         & MOSUM BUM & \textsuperscript{}1.109 & \textsuperscript{}14.953 & \textsuperscript{}2.347 & \textsuperscript{}3.530 & 21.713 & 0.993 & 0.021\\
         & MOSUM LP & \textsuperscript{}0.389 & \textsuperscript{}4.861 & \textsuperscript{}0.744 & \textsuperscript{}0.793 & 12.939 & 0.997 & 0.024\\
         & PELT & \textsuperscript{}0.318 & \textsuperscript{}2.795 & \textsuperscript{}0.508 & \textsuperscript{}0.391 &  &  & 0.003\\
         & NOT & \textsuperscript{\dag}0.162 & \textsuperscript{\dag}2.508 & \textsuperscript{\dag}0.315 & \textsuperscript{*\dag}0.349 &  &  & 0.050\\
         & NSP & \textsuperscript{}3.344 & \textsuperscript{} & \textsuperscript{} & \textsuperscript{} & 35.476 & 1.000 & 3.805\\ \hline
        $T =$ 500 & MICH-Auto & \textsuperscript{}0.269 & \textsuperscript{}13.950 & \textsuperscript{}7.238 & \textsuperscript{\ddag}3.825 & 3.715 & 0.911 & 7.796\\
        $J^* =$ 2 & MICH-Ora & \textsuperscript{*}0.005 & \textsuperscript{*}7.122 & \textsuperscript{*}2.013 & \textsuperscript{*}2.465 & 3.426 & 0.914 & 0.134\\
        Min. Space = 30 & H-SMUCE (0.1) & \textsuperscript{\dag\ddag}0.121 & \textsuperscript{\dag\ddag}12.533 & \textsuperscript{\dag\ddag}3.173 & \textsuperscript{}6.110 & 52.502 & 0.996 & 0.032\\
         & H-SMUCE (0.5) & \textsuperscript{}0.142 & \textsuperscript{}14.904 & \textsuperscript{}5.038 & \textsuperscript{}4.176 & 37.831 & 0.994 & 0.067\\
         & MOSUM BUM & \textsuperscript{}0.777 & \textsuperscript{}37.873 & \textsuperscript{}19.136 & \textsuperscript{}7.766 & 29.209 & 0.962 & 0.034\\
         & MOSUM LP & \textsuperscript{}0.646 & \textsuperscript{}29.246 & \textsuperscript{}14.954 & \textsuperscript{}8.169 & 18.879 & 0.981 & 0.034\\
         & PELT & \textsuperscript{}0.496 & \textsuperscript{}18.114 & \textsuperscript{}11.671 & \textsuperscript{\dag}3.535 &  &  & 0.009\\
         & NOT & \textsuperscript{}0.908 & \textsuperscript{}32.736 & \textsuperscript{}20.255 & \textsuperscript{}6.824 &  &  & 0.141\\
         & NSP & \textsuperscript{}0.184 & \textsuperscript{} & \textsuperscript{} & \textsuperscript{} & 92.255 & 1.000 & 10.633\\ \hline
        $T =$ 500 & MICH-Auto & \textsuperscript{}0.351 & \textsuperscript{}15.142 & \textsuperscript{}3.157 & \textsuperscript{\ddag}2.581 & 2.396 & 0.929 & 12.118\\
        $J^* =$ 5 & MICH-Ora & \textsuperscript{*}0.034 & \textsuperscript{}13.872 & \textsuperscript{}2.151 & \textsuperscript{}2.570 & 2.333 & 0.931 & 1.146\\
        Min. Space = 30 & H-SMUCE (0.1) & \textsuperscript{}0.299 & \textsuperscript{}17.776 & \textsuperscript{}2.015 & \textsuperscript{}4.192 & 36.077 & 0.986 & 0.034\\
         & H-SMUCE (0.5) & \textsuperscript{\dag\ddag}0.205 & \textsuperscript{\ddag}14.518 & \textsuperscript{*\dag\ddag}1.850 & \textsuperscript{}2.722 & 25.868 & 0.995 & 0.067\\
         & MOSUM BUM & \textsuperscript{}0.621 & \textsuperscript{}24.285 & \textsuperscript{}5.814 & \textsuperscript{}3.268 & 25.656 & 0.972 & 0.058\\
         & MOSUM LP & \textsuperscript{}0.762 & \textsuperscript{}31.671 & \textsuperscript{}6.614 & \textsuperscript{}6.278 & 16.190 & 0.993 & 0.045\\
         & PELT & \textsuperscript{}0.474 & \textsuperscript{*\dag}13.335 & \textsuperscript{}3.873 & \textsuperscript{*\dag}1.777 &  &  & 0.009\\
         & NOT & \textsuperscript{}0.679 & \textsuperscript{}18.282 & \textsuperscript{}5.422 & \textsuperscript{}2.288 &  &  & 0.145\\
         & NSP & \textsuperscript{}1.453 & \textsuperscript{} & \textsuperscript{} & \textsuperscript{} & 71.626 & 1.000 & 14.008\\ \hline
        $T =$ 1000 & MICH-Auto & \textsuperscript{}0.426 & \textsuperscript{}34.942 & \textsuperscript{}21.144 & \textsuperscript{\dag\ddag}8.455 & 5.352 & 0.884 & 18.172\\
        $J^* =$ 2 & MICH-Ora & \textsuperscript{*}0.011 & \textsuperscript{*}14.949 & \textsuperscript{*}4.573 & \textsuperscript{*}5.457 & 5.228 & 0.889 & 0.347\\
        Min. Space = 50 & H-SMUCE (0.1) & \textsuperscript{\dag\ddag}0.117 & \textsuperscript{\dag\ddag}23.580 & \textsuperscript{\dag\ddag}6.219 & \textsuperscript{}11.368 & 88.268 & 0.999 & 0.043\\
         & H-SMUCE (0.5) & \textsuperscript{}0.167 & \textsuperscript{}32.728 & \textsuperscript{}11.590 & \textsuperscript{}9.107 & 63.666 & 0.995 & 0.076\\
         & MOSUM BUM & \textsuperscript{}0.384 & \textsuperscript{}49.666 & \textsuperscript{}22.002 & \textsuperscript{}12.455 & 45.575 & 0.974 & 0.063\\
         & MOSUM LP & \textsuperscript{}0.942 & \textsuperscript{}74.261 & \textsuperscript{}41.627 & \textsuperscript{}21.220 & 26.295 & 0.973 & 0.054\\
         & PELT & \textsuperscript{}0.846 & \textsuperscript{}53.419 & \textsuperscript{}37.553 & \textsuperscript{}10.486 &  &  & 0.017\\
         & NOT & \textsuperscript{}1.427 & \textsuperscript{}88.372 & \textsuperscript{}59.160 & \textsuperscript{}18.857 &  &  & 0.248\\
         & NSP & \textsuperscript{}0.121 & \textsuperscript{} & \textsuperscript{} & \textsuperscript{} & 142.987 & 1.000 & 16.825\\ \hline
        $T =$ 1000 & MICH-Auto & \textsuperscript{}0.513 & \textsuperscript{}32.891 & \textsuperscript{}8.601 & \textsuperscript{\ddag}5.213 & 3.648 & 0.911 & 27.516\\
        $J^* =$ 5 & MICH-Ora & \textsuperscript{*}0.053 & \textsuperscript{}30.453 & \textsuperscript{}5.017 & \textsuperscript{}6.259 & 3.486 & 0.913 & 2.723\\
        Min. Space = 50 & H-SMUCE (0.1) & \textsuperscript{}0.231 & \textsuperscript{}29.755 & \textsuperscript{*\dag\ddag}3.270 & \textsuperscript{}6.804 & 55.013 & 0.995 & 0.044\\
         & H-SMUCE (0.5) & \textsuperscript{\dag\ddag}0.204 & \textsuperscript{*\dag\ddag}27.892 & \textsuperscript{}3.658 & \textsuperscript{}5.408 & 38.754 & 0.996 & 0.076\\
         & MOSUM BUM & \textsuperscript{}0.346 & \textsuperscript{}36.451 & \textsuperscript{}7.124 & \textsuperscript{}5.690 & 48.742 & 0.981 & 0.124\\
         & MOSUM LP & \textsuperscript{}1.091 & \textsuperscript{}74.866 & \textsuperscript{}18.211 & \textsuperscript{}15.104 & 23.076 & 0.990 & 0.070\\
         & PELT & \textsuperscript{}0.770 & \textsuperscript{}33.104 & \textsuperscript{}11.830 & \textsuperscript{*\dag}4.062 &  &  & 0.016\\
         & NOT & \textsuperscript{}1.136 & \textsuperscript{}46.674 & \textsuperscript{}16.812 & \textsuperscript{}5.790 &  &  & 0.254\\
         & NSP & \textsuperscript{}0.850 & \textsuperscript{} & \textsuperscript{} & \textsuperscript{} & 105.509 & 1.000 & 21.752\\ \hline
        $T =$ 1000 & MICH-Auto & \textsuperscript{}0.583 & \textsuperscript{}27.740 & \textsuperscript{}3.449 & \textsuperscript{}2.881 & 2.584 & 0.923 & 46.445\\
        $J^* =$ 10 & MICH-Ora & \textsuperscript{*}0.105 & \textsuperscript{}36.927 & \textsuperscript{}3.347 & \textsuperscript{}4.249 & 2.550 & 0.924 & 11.605\\
        Min. Space = 50 & H-SMUCE (0.1) & \textsuperscript{}0.284 & \textsuperscript{}23.178 & \textsuperscript{}1.545 & \textsuperscript{}2.732 & 43.142 & 0.986 & 0.046\\
         & H-SMUCE (0.5) & \textsuperscript{\dag\ddag}0.114 & \textsuperscript{*\dag\ddag}13.832 & \textsuperscript{*\dag\ddag}0.951 & \textsuperscript{\ddag}1.308 & 30.987 & 0.997 & 0.077\\
         & MOSUM BUM & \textsuperscript{}0.264 & \textsuperscript{}24.142 & \textsuperscript{}2.293 & \textsuperscript{}2.420 & 79.544 & 0.999 & 0.214\\
         & MOSUM LP & \textsuperscript{}1.109 & \textsuperscript{}62.404 & \textsuperscript{}7.519 & \textsuperscript{}8.509 & 21.484 & 0.996 & 0.093\\
         & PELT & \textsuperscript{}0.650 & \textsuperscript{}17.279 & \textsuperscript{}3.264 & \textsuperscript{*\dag}1.225 &  &  & 0.016\\
         & NOT & \textsuperscript{}0.774 & \textsuperscript{}21.392 & \textsuperscript{}3.897 & \textsuperscript{}1.576 &  &  & 0.254\\
         & NSP & \textsuperscript{}3.183 & \textsuperscript{} & \textsuperscript{} & \textsuperscript{} & 87.515 & 1.000 & 27.390\\ \hline\hline
    \end{tabular}
    \caption{\small \textbf{Simulation \ref{sim:main} Results with $t_4$ Errors}. Average statistics over 5,000 replicates of $y_t$ generated according to (\ref{eq:heavy_sim}) with $e_t\sim t_4$. MICH is fit with error tolerance $\epsilon =10^{-8}$, credible sets are constructed for $\alpha = 0.1$, and detections are determined with $\delta = 0.5$. Key: \textsuperscript{*}Best overall. \textsuperscript{\textdagger}Best excluding MICH-Ora. \textsuperscript{\ddag}Best among methods with uncertainty quantification excluding MICH-Ora.}
    \label{tab:t_sim}
\end{table}

Figure \ref{fig:t_sim_plot} and Table \ref{tab:t_sim} show the results for this study with $e_t \sim t_4$. Across all metrics, we see that MICH-Ora is (unsurprisingly) the best performer or off from the best by a negligible factor. In the small sample settings with $T=100$, MICH-Auto is the next best performing method across all metrics (or again off by a negligible factor). As with Simulation \ref{sim:main}, MICH also returns the smallest credible sets among methods that provide uncertainty quantification by an order of magnitude, while still meeting the nominal coverage level. 

Figure \ref{fig:t_sim_plot} shows that MICH, PELT, NOT and H-SMUCE with $\alpha = 0.5$ are unbiased for the number of change-points in the median simulation. However, averaging the bias across all 5,000 replicates, H-SMUCE with $\alpha = 0.5$ significantly outperforms the other methods, particularly as $T$ and $J^*$ increase. The bias for PELT and NOT is notably affected by the presence of heavy-tailed errors, with both showing bias on the order of about an entire change-point for $T \geq 500$. The bias of MICH-Auto is also inflated as compared to Simulation \ref{sim:main} due to the presence of heavy-tailed errors, but is still roughly on the order of H-SMUCE with $\alpha = 0.1$. MICH-Auto performs notably better when $T=100$ and $J^* = 5$, where H-SMUCE is biased by nearly two whole change-points, once again showing that the performance of H-SMUCE depends heavily on the choice of $\alpha$. We also see that PELT and MICH-Auto outperform the other methods in terms of FNSLE, with the performance gap increasing as $T$ and $J^*$ increase (though H-SMUCE begins to catch up around $T=1000$). On the other hand, H-SMUCE beats all other methods with respect to the FPSLE, which is expected as H-SMUCE is designed to control the family-wise error rate for overestimating the number of change-points. 

\begin{figure}[!h]
    \centering
    \includegraphics[scale =0.42]{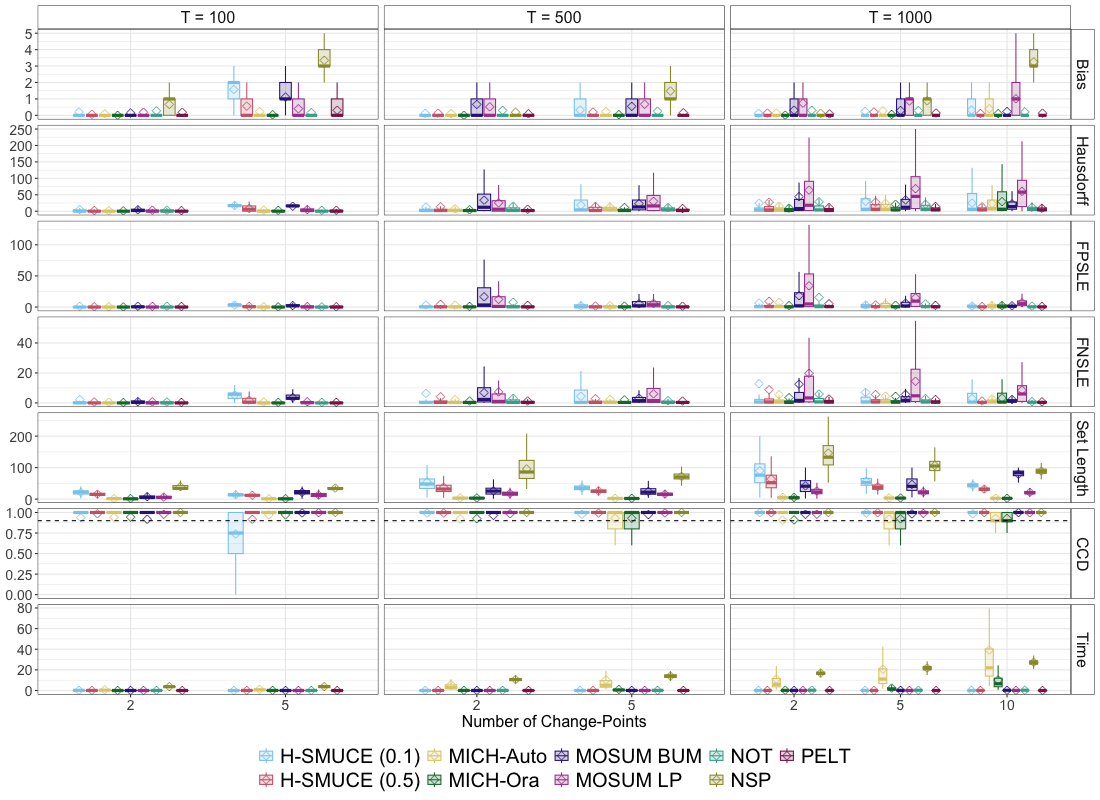}
    \caption{\small \textbf{Simulation \ref{sim:main} Results with Laplace Errors}. Box-plots of evaluation metrics for 5,000 replicates of $y_t$ generated according to (\ref{eq:heavy_sim}) with $e_t\sim \text{Laplace}(0,1)$. Diamonds ($\Diamond$) display mean of each statistic. Bias $|J^* - J|$ assesses each model's ability to estimate correct number of changes (lower is better). FPSLE and FNSLE assess each model's ability to accurately estimate the locations of the changes (lower is better). Set Length and CCD report the average size and coverage of credible/confidence sets for methods that provide uncertainty quantification (dashed line indicates nominal coverage level for $\alpha = 0.1$). Time reports run-time for each method in seconds.}
    \label{fig:laplace_sim_plot}
\end{figure}

\begin{table}[htbp!] 
    \scriptsize
    \centering
    \begin{tabular}{l || l || r r r r | r r | r}
        \multicolumn{1}{c||}{Setting} & \multicolumn{1}{c||}{Method} & $|J^* - J|$ & Hausdorff & FPSLE & FNSLE & CI Len. & CCD & Time (s) \\ \hline\hline 
        $T =$ 100 & MICH-Auto & \textsuperscript{}0.097 & \textsuperscript{\ddag}1.916 & \textsuperscript{}0.689 & \textsuperscript{\ddag}0.511 & 1.623 & 0.942 & 0.722\\
        $J^* =$ 2 & MICH-Ora & \textsuperscript{*}0.001 & \textsuperscript{*}1.006 & \textsuperscript{*}0.209 & \textsuperscript{*}0.232 & 1.440 & 0.944 & 0.017\\
        Min. Space = 15 & H-SMUCE (0.1) & \textsuperscript{}0.205 & \textsuperscript{}4.808 & \textsuperscript{}1.204 & \textsuperscript{}2.193 & 22.355 & 0.966 & 0.182\\
         & H-SMUCE (0.5) & \textsuperscript{\dag\ddag}0.066 & \textsuperscript{}2.054 & \textsuperscript{\dag\ddag}0.517 & \textsuperscript{}0.620 & 15.994 & 0.989 & 0.060\\
         & MOSUM BUM & \textsuperscript{}0.143 & \textsuperscript{}4.358 & \textsuperscript{}1.287 & \textsuperscript{}0.861 & 8.306 & 0.916 & 0.019\\
         & MOSUM LP & \textsuperscript{}0.191 & \textsuperscript{}2.971 & \textsuperscript{}1.148 & \textsuperscript{}0.703 & 6.851 & 0.977 & 0.017\\
         & PELT & \textsuperscript{}0.189 & \textsuperscript{\dag}1.756 & \textsuperscript{}0.916 & \textsuperscript{\dag}0.332 &  &  & 0.004\\
         & NOT & \textsuperscript{}0.274 & \textsuperscript{}3.187 & \textsuperscript{}1.341 & \textsuperscript{}0.633 &  &  & 0.050\\
         & NSP & \textsuperscript{}0.664 & \textsuperscript{} & \textsuperscript{} & \textsuperscript{} & 38.522 & 1.000 & 3.752\\ \hline
        $T =$ 100 & MICH-Auto & \textsuperscript{\ddag}0.229 & \textsuperscript{\ddag}3.551 & \textsuperscript{\ddag}0.459 & \textsuperscript{\ddag}0.673 & 1.164 & 0.971 & 1.193\\
        $J^* =$ 5 & MICH-Ora & \textsuperscript{*}0.037 & \textsuperscript{*}2.534 & \textsuperscript{*}0.292 & \textsuperscript{}0.373 & 1.196 & 0.970 & 0.138\\
        Min. Space = 15 & H-SMUCE (0.1) & \textsuperscript{}1.571 & \textsuperscript{}17.184 & \textsuperscript{}3.227 & \textsuperscript{}5.196 & 14.072 & 0.775 & 0.022\\
         & H-SMUCE (0.5) & \textsuperscript{}0.561 & \textsuperscript{}8.339 & \textsuperscript{}1.071 & \textsuperscript{}1.807 & 12.225 & 0.933 & 0.059\\
         & MOSUM BUM & \textsuperscript{}1.116 & \textsuperscript{}15.081 & \textsuperscript{}2.343 & \textsuperscript{}3.555 & 21.766 & 0.995 & 0.021\\
         & MOSUM LP & \textsuperscript{}0.400 & \textsuperscript{}4.947 & \textsuperscript{}0.751 & \textsuperscript{}0.804 & 13.077 & 0.998 & 0.024\\
         & PELT & \textsuperscript{}0.312 & \textsuperscript{}2.826 & \textsuperscript{}0.497 & \textsuperscript{}0.393 &  &  & 0.003\\
         & NOT & \textsuperscript{\dag}0.164 & \textsuperscript{\dag}2.568 & \textsuperscript{\dag}0.320 & \textsuperscript{*\dag}0.356 &  &  & 0.049\\
         & NSP & \textsuperscript{}3.351 & \textsuperscript{} & \textsuperscript{} & \textsuperscript{} & 35.426 & 1.000 & 3.794\\ \hline
        $T =$ 500 & MICH-Auto & \textsuperscript{\ddag}0.109 & \textsuperscript{\ddag}7.427 & \textsuperscript{\ddag}3.053 & \textsuperscript{\ddag}2.249 & 3.889 & 0.922 & 5.737\\
        $J^* =$ 2 & MICH-Ora & \textsuperscript{*}0.004 & \textsuperscript{*}4.346 & \textsuperscript{*}1.060 & \textsuperscript{}1.300 & 3.565 & 0.925 & 0.120\\
        Min. Space = 30 & H-SMUCE (0.1) & \textsuperscript{}0.126 & \textsuperscript{}12.461 & \textsuperscript{}3.169 & \textsuperscript{}6.352 & 54.024 & 0.996 & 0.034\\
         & H-SMUCE (0.5) & \textsuperscript{}0.130 & \textsuperscript{}13.503 & \textsuperscript{}4.409 & \textsuperscript{}4.116 & 38.126 & 0.994 & 0.067\\
         & MOSUM BUM & \textsuperscript{}0.663 & \textsuperscript{}33.526 & \textsuperscript{}16.477 & \textsuperscript{}6.830 & 28.573 & 0.964 & 0.033\\
         & MOSUM LP & \textsuperscript{}0.514 & \textsuperscript{}25.460 & \textsuperscript{}12.408 & \textsuperscript{}7.334 & 18.493 & 0.984 & 0.034\\
         & PELT & \textsuperscript{\dag}0.096 & \textsuperscript{\dag}5.982 & \textsuperscript{\dag}2.773 & \textsuperscript{*\dag}1.281 &  &  & 0.009\\
         & NOT & \textsuperscript{}0.302 & \textsuperscript{}15.272 & \textsuperscript{}7.752 & \textsuperscript{}3.063 &  &  & 0.141\\
         & NSP & \textsuperscript{}0.190 & \textsuperscript{} & \textsuperscript{} & \textsuperscript{} & 93.633 & 1.000 & 10.663\\ \hline
        $T =$ 500 & MICH-Auto & \textsuperscript{}0.231 & \textsuperscript{\ddag}12.443 & \textsuperscript{}2.037 & \textsuperscript{\ddag}2.296 & 2.412 & 0.930 & 10.229\\
        $J^* =$ 5 & MICH-Ora & \textsuperscript{*}0.044 & \textsuperscript{}11.116 & \textsuperscript{}1.529 & \textsuperscript{}2.030 & 2.326 & 0.930 & 1.059\\
        Min. Space = 30 & H-SMUCE (0.1) & \textsuperscript{}0.324 & \textsuperscript{}18.355 & \textsuperscript{}2.104 & \textsuperscript{}4.466 & 36.545 & 0.983 & 0.034\\
         & H-SMUCE (0.5) & \textsuperscript{\ddag}0.201 & \textsuperscript{}14.193 & \textsuperscript{\ddag}1.714 & \textsuperscript{}2.815 & 26.010 & 0.994 & 0.068\\
         & MOSUM BUM & \textsuperscript{}0.547 & \textsuperscript{}22.860 & \textsuperscript{}5.275 & \textsuperscript{}3.137 & 25.162 & 0.974 & 0.057\\
         & MOSUM LP & \textsuperscript{}0.676 & \textsuperscript{}30.030 & \textsuperscript{}5.881 & \textsuperscript{}6.026 & 16.151 & 0.994 & 0.045\\
         & PELT & \textsuperscript{\dag}0.156 & \textsuperscript{*\dag}8.100 & \textsuperscript{*\dag}1.444 & \textsuperscript{*\dag}1.277 &  &  & 0.009\\
         & NOT & \textsuperscript{}0.254 & \textsuperscript{}11.369 & \textsuperscript{}2.331 & \textsuperscript{}1.544 &  &  & 0.145\\
         & NSP & \textsuperscript{}1.479 & \textsuperscript{} & \textsuperscript{} & \textsuperscript{} & 71.943 & 1.000 & 13.989\\ \hline
        $T =$ 1000 & MICH-Auto & \textsuperscript{}0.138 & \textsuperscript{\ddag}15.618 & \textsuperscript{}7.330 & \textsuperscript{\ddag}5.498 & 5.711 & 0.907 & 11.056\\
        $J^* =$ 2 & MICH-Ora & \textsuperscript{*}0.012 & \textsuperscript{*}9.327 & \textsuperscript{*}2.448 & \textsuperscript{}3.538 & 5.361 & 0.910 & 0.300\\
        Min. Space = 50 & H-SMUCE (0.1) & \textsuperscript{}0.129 & \textsuperscript{}24.483 & \textsuperscript{\ddag}6.374 & \textsuperscript{}12.819 & 89.159 & 0.998 & 0.044\\
         & H-SMUCE (0.5) & \textsuperscript{}0.139 & \textsuperscript{}27.311 & \textsuperscript{}9.192 & \textsuperscript{}8.781 & 64.094 & 0.997 & 0.076\\
         & MOSUM BUM & \textsuperscript{}0.313 & \textsuperscript{}44.261 & \textsuperscript{}18.231 & \textsuperscript{}12.493 & 43.684 & 0.978 & 0.061\\
         & MOSUM LP & \textsuperscript{}0.742 & \textsuperscript{}64.184 & \textsuperscript{}34.084 & \textsuperscript{}19.776 & 26.329 & 0.979 & 0.054\\
         & PELT & \textsuperscript{\dag}0.087 & \textsuperscript{\dag}11.203 & \textsuperscript{\dag}5.275 & \textsuperscript{*\dag}2.599 &  &  & 0.016\\
         & NOT & \textsuperscript{}0.308 & \textsuperscript{}28.575 & \textsuperscript{}15.584 & \textsuperscript{}5.883 &  &  & 0.248\\
         & NSP & \textsuperscript{\ddag}0.128 & \textsuperscript{} & \textsuperscript{} & \textsuperscript{} & 144.308 & 1.000 & 16.852\\ \hline
        $T =$ 1000 & MICH-Auto & \textsuperscript{}0.272 & \textsuperscript{\ddag}23.641 & \textsuperscript{}4.491 & \textsuperscript{\ddag}4.538 & 3.751 & 0.921 & 20.625\\
        $J^* =$ 5 & MICH-Ora & \textsuperscript{*}0.058 & \textsuperscript{}21.102 & \textsuperscript{}3.031 & \textsuperscript{}4.401 & 3.562 & 0.923 & 2.376\\
        Min. Space = 50 & H-SMUCE (0.1) & \textsuperscript{}0.244 & \textsuperscript{}29.937 & \textsuperscript{\ddag}3.327 & \textsuperscript{}7.040 & 56.356 & 0.995 & 0.044\\
         & H-SMUCE (0.5) & \textsuperscript{\ddag}0.203 & \textsuperscript{}27.532 & \textsuperscript{}3.484 & \textsuperscript{}5.558 & 39.431 & 0.997 & 0.077\\
         & MOSUM BUM & \textsuperscript{}0.291 & \textsuperscript{}34.249 & \textsuperscript{}6.233 & \textsuperscript{}5.667 & 47.183 & 0.983 & 0.123\\
         & MOSUM LP & \textsuperscript{}0.871 & \textsuperscript{}68.413 & \textsuperscript{}14.869 & \textsuperscript{}14.461 & 22.882 & 0.989 & 0.069\\
         & PELT & \textsuperscript{\dag}0.141 & \textsuperscript{*\dag}14.238 & \textsuperscript{*\dag}2.747 & \textsuperscript{*\dag}2.263 &  &  & 0.016\\
         & NOT & \textsuperscript{}0.288 & \textsuperscript{}21.117 & \textsuperscript{}5.135 & \textsuperscript{}2.853 &  &  & 0.253\\
         & NSP & \textsuperscript{}0.883 & \textsuperscript{} & \textsuperscript{} & \textsuperscript{} & 106.239 & 1.000 & 21.744\\ \hline
        $T =$ 1000 & MICH-Auto & \textsuperscript{}0.405 & \textsuperscript{}23.629 & \textsuperscript{}2.403 & \textsuperscript{}2.633 & 2.632 & 0.928 & 39.192\\
        $J^* =$ 10 & MICH-Ora & \textsuperscript{*}0.108 & \textsuperscript{}29.129 & \textsuperscript{}2.442 & \textsuperscript{}3.220 & 2.604 & 0.929 & 10.139\\
        Min. Space = 50 & H-SMUCE (0.1) & \textsuperscript{}0.321 & \textsuperscript{}24.737 & \textsuperscript{}1.660 & \textsuperscript{}2.997 & 44.138 & 0.984 & 0.047\\
         & H-SMUCE (0.5) & \textsuperscript{\dag\ddag}0.113 & \textsuperscript{\ddag}13.226 & \textsuperscript{*\dag\ddag}0.914 & \textsuperscript{\ddag}1.262 & 31.837 & 0.998 & 0.077\\
         & MOSUM BUM & \textsuperscript{}0.234 & \textsuperscript{}22.341 & \textsuperscript{}2.047 & \textsuperscript{}2.307 & 79.207 & 0.999 & 0.214\\
         & MOSUM LP & \textsuperscript{}1.007 & \textsuperscript{}60.324 & \textsuperscript{}6.877 & \textsuperscript{}8.516 & 21.553 & 0.996 & 0.093\\
         & PELT & \textsuperscript{}0.133 & \textsuperscript{*\dag}9.076 & \textsuperscript{}0.999 & \textsuperscript{*\dag}0.723 &  &  & 0.016\\
         & NOT & \textsuperscript{}0.211 & \textsuperscript{}11.869 & \textsuperscript{}1.386 & \textsuperscript{}0.899 &  &  & 0.255\\
         & NSP & \textsuperscript{}3.251 & \textsuperscript{} & \textsuperscript{} & \textsuperscript{} & 88.510 & 1.000 & 27.316\\ \hline\hline
    \end{tabular}
    \caption{\small \textbf{Simulation \ref{sim:main} Results with Laplace Errors}. Average statistics over 5,000 replicates of $y_t$ generated according to (\ref{eq:heavy_sim}) with $e_t\sim\text{Laplace}(0,1)$. MICH credible sets are constructed for $\alpha = 0.1$ and detections are determined with $\delta = 0.5$. \textsuperscript{*}Best overall. Key: \textsuperscript{\textdagger}Best excluding MICH-Ora. \textsuperscript{\ddag}Best among methods with uncertainty quantification excluding MICH-Ora.}
    \label{tab:laplace_sim}
\end{table}

Figure \ref{fig:laplace_sim_plot} and Table \ref{tab:laplace_sim} show the results for this study with $e_t \sim \text{Laplace}(0,1)$. These results are mostly in line with what we observed in Simulation \ref{sim:main}, though we still see that each method performs worse due to the presence of non-Gaussian data. In particular, NOT performs significantly worse that PELT, whereas there was no notable difference between these methods in Simulation \ref{sim:main}. We again see that H-SMUCE is somewhat robust to the presence of non-Gaussian data when $\alpha = 0.5$, especially in terms of the bias and FPSLE. The performance of MICH-Auto is also somewhat robust, having bias that is comparable to PELT (or better when $T=100$), and Hausdorff and FNSLE statistics that are the lowest among the methods that provide uncertainty quantification in all but the last setting where $J^*=10$ and $T=1000$.

\subsection{Dependent Data Simulation}

In this section we repeat Simulation \ref{sim:main}, only now we relax the independence assumption and allow for serial correlation in the error terms $e_t \overset{\text{i.i.d.}}{\sim} \mathcal{N}(0,1)$. Specifically, we still generate the change-points and jump sizes $\{\tau_j, \mu_j,\sigma_j\}$ as Simulation \ref{sim:main}, however we now generate $y_t$ from the MA(2) process: 
\begin{align}
    y_t &:= \sum_{j=0}^J \mu_j\mathbbm{1}_{\{\tau_j \leq t < \tau_{j+1}\}} + \frac{e_t + \theta e_{t-1} + \theta^2 e_{t-2}}{1+\theta^2 + \theta^4} \sum_{j=0}^J \sigma_j\mathbbm{1}_{\{\tau_j \leq t < \tau_{j+1}\}},  \label{eq:dep_sim}
\end{align}
To account for dependence, we set $\delta = 1.1$ for the detection rule (see Section \ref{cor:cred-sets}). 

\begin{figure}[!h]
    \centering
    \includegraphics[scale =0.42]{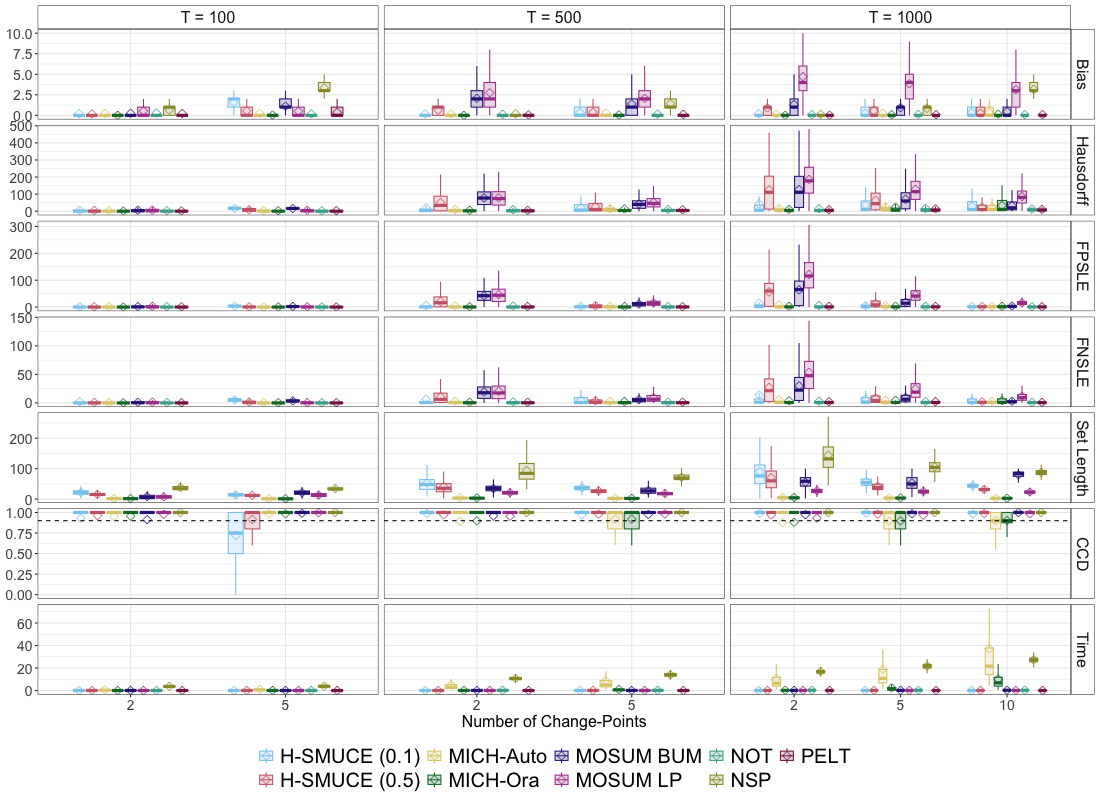}
    \caption{\small \textbf{Simulation \ref{sim:main} Results with MA(2) Errors -- Low Dependence}. Box-plots of evaluation metrics for 5,000 replicates of $y_t$ generated according to (\ref{eq:heavy_sim}) with MA(2) errors (\ref{eq:dep_sim}) and $\theta = 0.3$. Diamonds ($\Diamond$) display mean of each statistic. Bias $|J^* - J|$ assesses each model's ability to estimate correct number of changes (lower is better). FPSLE and FNSLE assess each model's ability to accurately estimate the locations of the changes (lower is better). Set Length and CCD report the average size and coverage of credible/confidence sets for methods that provide uncertainty quantification (dashed line indicates nominal coverage level for $\alpha = 0.1$). Time reports run-time for each method in seconds.}
    \label{fig:ma3_sim_plot}
\end{figure}

\begin{table}[htbp!] 
    \scriptsize
    \centering
    \begin{tabular}{l || l || r r r r | r r | r}
        \multicolumn{1}{c||}{Setting} & \multicolumn{1}{c||}{Method} & $|J^* - J|$ & Hausdorff & FPSLE & FNSLE & CI Len. & CCD & Time (s) \\ \hline\hline 
        $T =$ 100 & MICH-Auto & \textsuperscript{}0.096 & \textsuperscript{\ddag}1.902 & \textsuperscript{}0.681 & \textsuperscript{\ddag}0.504 & 1.620 & 0.942 & 0.344\\
        $J^* =$ 2 & MICH-Ora & \textsuperscript{*}0.001 & \textsuperscript{*}1.016 & \textsuperscript{*}0.212 & \textsuperscript{*}0.237 & 1.440 & 0.944 & 0.012\\
        Min. Space = 15 & H-SMUCE (0.1) & \textsuperscript{}0.205 & \textsuperscript{}4.808 & \textsuperscript{}1.204 & \textsuperscript{}2.193 & 22.355 & 0.966 & 0.194\\
         & H-SMUCE (0.5) & \textsuperscript{\dag\ddag}0.066 & \textsuperscript{}2.054 & \textsuperscript{\dag\ddag}0.517 & \textsuperscript{}0.620 & 15.994 & 0.989 & 0.060\\
         & MOSUM BUM & \textsuperscript{}0.143 & \textsuperscript{}4.358 & \textsuperscript{}1.287 & \textsuperscript{}0.861 & 8.306 & 0.916 & 0.018\\
         & MOSUM LP & \textsuperscript{}0.191 & \textsuperscript{}2.971 & \textsuperscript{}1.148 & \textsuperscript{}0.703 & 6.851 & 0.977 & 0.016\\
         & PELT & \textsuperscript{}0.189 & \textsuperscript{\dag}1.756 & \textsuperscript{}0.916 & \textsuperscript{\dag}0.332 &  &  & 0.004\\
         & NOT & \textsuperscript{}0.274 & \textsuperscript{}3.187 & \textsuperscript{}1.341 & \textsuperscript{}0.633 &  &  & 0.049\\
         & NSP & \textsuperscript{}0.664 & \textsuperscript{} & \textsuperscript{} & \textsuperscript{} & 38.522 & 1.000 & 3.689\\ \hline
        $T =$ 100 & MICH-Auto & \textsuperscript{\ddag}0.234 & \textsuperscript{\ddag}3.607 & \textsuperscript{\ddag}0.467 & \textsuperscript{\ddag}0.685 & 1.165 & 0.971 & 0.542\\
        $J^* =$ 5 & MICH-Ora & \textsuperscript{*}0.037 & \textsuperscript{*}2.519 & \textsuperscript{*}0.289 & \textsuperscript{}0.370 & 1.196 & 0.970 & 0.081\\
        Min. Space = 15 & H-SMUCE (0.1) & \textsuperscript{}1.571 & \textsuperscript{}17.184 & \textsuperscript{}3.227 & \textsuperscript{}5.196 & 14.071 & 0.775 & 0.022\\
         & H-SMUCE (0.5) & \textsuperscript{}0.561 & \textsuperscript{}8.336 & \textsuperscript{}1.071 & \textsuperscript{}1.806 & 12.225 & 0.933 & 0.059\\
         & MOSUM BUM & \textsuperscript{}1.117 & \textsuperscript{}15.083 & \textsuperscript{}2.343 & \textsuperscript{}3.555 & 21.766 & 0.995 & 0.021\\
         & MOSUM LP & \textsuperscript{}0.400 & \textsuperscript{}4.952 & \textsuperscript{}0.751 & \textsuperscript{}0.804 & 13.078 & 0.998 & 0.024\\
         & PELT & \textsuperscript{}0.312 & \textsuperscript{}2.826 & \textsuperscript{}0.497 & \textsuperscript{}0.393 &  &  & 0.003\\
         & NOT & \textsuperscript{\dag}0.164 & \textsuperscript{\dag}2.568 & \textsuperscript{\dag}0.320 & \textsuperscript{*\dag}0.356 &  &  & 0.049\\
         & NSP & \textsuperscript{}3.351 & \textsuperscript{} & \textsuperscript{} & \textsuperscript{} & 35.423 & 1.000 & 3.716\\ \hline
        $T =$ 500 & MICH-Auto & \textsuperscript{\ddag}0.114 & \textsuperscript{\ddag}7.954 & \textsuperscript{}3.285 & \textsuperscript{\ddag}2.570 & 3.911 & 0.922 & 1.951\\
        $J^* =$ 2 & MICH-Ora & \textsuperscript{*}0.005 & \textsuperscript{*}4.225 & \textsuperscript{*}1.004 & \textsuperscript{*}1.265 & 3.561 & 0.925 & 0.068\\
        Min. Space = 30 & H-SMUCE (0.1) & \textsuperscript{}0.126 & \textsuperscript{}12.462 & \textsuperscript{\ddag}3.169 & \textsuperscript{}6.352 & 54.044 & 0.996 & 0.035\\
         & H-SMUCE (0.5) & \textsuperscript{}0.130 & \textsuperscript{}13.503 & \textsuperscript{}4.409 & \textsuperscript{}4.116 & 38.138 & 0.994 & 0.068\\
         & MOSUM BUM & \textsuperscript{}0.663 & \textsuperscript{}33.562 & \textsuperscript{}16.491 & \textsuperscript{}6.839 & 28.593 & 0.964 & 0.033\\
         & MOSUM LP & \textsuperscript{}0.514 & \textsuperscript{}25.474 & \textsuperscript{}12.415 & \textsuperscript{}7.337 & 18.502 & 0.984 & 0.033\\
         & PELT & \textsuperscript{\dag}0.096 & \textsuperscript{\dag}5.982 & \textsuperscript{\dag}2.773 & \textsuperscript{\dag}1.281 &  &  & 0.009\\
         & NOT & \textsuperscript{}0.302 & \textsuperscript{}15.276 & \textsuperscript{}7.756 & \textsuperscript{}3.064 &  &  & 0.138\\
         & NSP & \textsuperscript{}0.190 & \textsuperscript{} & \textsuperscript{} & \textsuperscript{} & 93.639 & 1.000 & 10.464\\ \hline
        $T =$ 500 & MICH-Auto & \textsuperscript{}0.237 & \textsuperscript{\ddag}12.585 & \textsuperscript{}2.083 & \textsuperscript{\ddag}2.365 & 2.401 & 0.930 & 3.533\\
        $J^* =$ 5 & MICH-Ora & \textsuperscript{*}0.049 & \textsuperscript{}11.135 & \textsuperscript{}1.510 & \textsuperscript{}2.038 & 2.331 & 0.930 & 0.538\\
        Min. Space = 30 & H-SMUCE (0.1) & \textsuperscript{}0.325 & \textsuperscript{}18.366 & \textsuperscript{}2.106 & \textsuperscript{}4.468 & 36.548 & 0.983 & 0.034\\
         & H-SMUCE (0.5) & \textsuperscript{\ddag}0.201 & \textsuperscript{}14.204 & \textsuperscript{\ddag}1.715 & \textsuperscript{}2.817 & 26.012 & 0.994 & 0.068\\
         & MOSUM BUM & \textsuperscript{}0.547 & \textsuperscript{}22.856 & \textsuperscript{}5.276 & \textsuperscript{}3.137 & 25.162 & 0.974 & 0.056\\
         & MOSUM LP & \textsuperscript{}0.676 & \textsuperscript{}30.033 & \textsuperscript{}5.882 & \textsuperscript{}6.025 & 16.151 & 0.994 & 0.045\\
         & PELT & \textsuperscript{\dag}0.156 & \textsuperscript{*\dag}8.090 & \textsuperscript{*\dag}1.443 & \textsuperscript{*\dag}1.276 &  &  & 0.009\\
         & NOT & \textsuperscript{}0.254 & \textsuperscript{}11.359 & \textsuperscript{}2.330 & \textsuperscript{}1.543 &  &  & 0.142\\
         & NSP & \textsuperscript{}1.479 & \textsuperscript{} & \textsuperscript{} & \textsuperscript{} & 71.942 & 1.000 & 13.756\\ \hline
        $T =$ 1000 & MICH-Auto & \textsuperscript{}0.135 & \textsuperscript{\ddag}16.079 & \textsuperscript{}7.369 & \textsuperscript{\ddag}5.904 & 5.730 & 0.907 & 3.460\\
        $J^* =$ 2 & MICH-Ora & \textsuperscript{*}0.013 & \textsuperscript{*}9.802 & \textsuperscript{*}2.711 & \textsuperscript{}3.804 & 5.369 & 0.910 & 0.156\\
        Min. Space = 50 & H-SMUCE (0.1) & \textsuperscript{}0.129 & \textsuperscript{}24.482 & \textsuperscript{\ddag}6.373 & \textsuperscript{}12.818 & 89.142 & 0.998 & 0.044\\
         & H-SMUCE (0.5) & \textsuperscript{}0.139 & \textsuperscript{}27.309 & \textsuperscript{}9.191 & \textsuperscript{}8.781 & 64.085 & 0.997 & 0.076\\
         & MOSUM BUM & \textsuperscript{}0.313 & \textsuperscript{}44.259 & \textsuperscript{}18.230 & \textsuperscript{}12.492 & 43.682 & 0.978 & 0.060\\
         & MOSUM LP & \textsuperscript{}0.742 & \textsuperscript{}64.188 & \textsuperscript{}34.068 & \textsuperscript{}19.772 & 26.329 & 0.979 & 0.053\\
         & PELT & \textsuperscript{\dag}0.088 & \textsuperscript{\dag}11.281 & \textsuperscript{\dag}5.307 & \textsuperscript{*\dag}2.612 &  &  & 0.017\\
         & NOT & \textsuperscript{}0.309 & \textsuperscript{}28.693 & \textsuperscript{}15.649 & \textsuperscript{}5.909 &  &  & 0.243\\
         & NSP & \textsuperscript{\ddag}0.128 & \textsuperscript{} & \textsuperscript{} & \textsuperscript{} & 144.278 & 1.000 & 16.494\\ \hline
        $T =$ 1000 & MICH-Auto & \textsuperscript{}0.275 & \textsuperscript{\ddag}24.141 & \textsuperscript{}4.621 & \textsuperscript{\ddag}4.634 & 3.758 & 0.921 & 6.587\\
        $J^* =$ 5 & MICH-Ora & \textsuperscript{*}0.070 & \textsuperscript{}22.931 & \textsuperscript{}3.263 & \textsuperscript{}4.818 & 3.573 & 0.923 & 1.143\\
        Min. Space = 50 & H-SMUCE (0.1) & \textsuperscript{}0.244 & \textsuperscript{}29.938 & \textsuperscript{\ddag}3.327 & \textsuperscript{}7.040 & 56.347 & 0.995 & 0.045\\
         & H-SMUCE (0.5) & \textsuperscript{\ddag}0.203 & \textsuperscript{}27.533 & \textsuperscript{}3.484 & \textsuperscript{}5.558 & 39.433 & 0.997 & 0.077\\
         & MOSUM BUM & \textsuperscript{}0.291 & \textsuperscript{}34.248 & \textsuperscript{}6.233 & \textsuperscript{}5.667 & 47.208 & 0.983 & 0.121\\
         & MOSUM LP & \textsuperscript{}0.871 & \textsuperscript{}68.418 & \textsuperscript{}14.877 & \textsuperscript{}14.459 & 22.882 & 0.989 & 0.068\\
         & PELT & \textsuperscript{\dag}0.141 & \textsuperscript{*\dag}14.239 & \textsuperscript{*\dag}2.747 & \textsuperscript{*\dag}2.263 &  &  & 0.017\\
         & NOT & \textsuperscript{}0.288 & \textsuperscript{}21.123 & \textsuperscript{}5.149 & \textsuperscript{}2.854 &  &  & 0.249\\
         & NSP & \textsuperscript{}0.883 & \textsuperscript{} & \textsuperscript{} & \textsuperscript{} & 106.242 & 1.000 & 21.354\\ \hline
        $T =$ 1000 & MICH-Auto & \textsuperscript{}0.426 & \textsuperscript{}24.651 & \textsuperscript{}2.486 & \textsuperscript{}2.870 & 2.631 & 0.928 & 14.026\\
        $J^* =$ 10 & MICH-Ora & \textsuperscript{}0.141 & \textsuperscript{}30.665 & \textsuperscript{}2.551 & \textsuperscript{}3.474 & 2.612 & 0.929 & 4.776\\
        Min. Space = 50 & H-SMUCE (0.1) & \textsuperscript{}0.321 & \textsuperscript{}24.770 & \textsuperscript{}1.662 & \textsuperscript{}3.002 & 44.139 & 0.984 & 0.047\\
         & H-SMUCE (0.5) & \textsuperscript{*\dag\ddag}0.114 & \textsuperscript{\ddag}13.259 & \textsuperscript{*\dag\ddag}0.916 & \textsuperscript{\ddag}1.266 & 31.837 & 0.998 & 0.078\\
         & MOSUM BUM & \textsuperscript{}0.234 & \textsuperscript{}22.338 & \textsuperscript{}2.048 & \textsuperscript{}2.307 & 79.205 & 0.999 & 0.210\\
         & MOSUM LP & \textsuperscript{}1.007 & \textsuperscript{}60.334 & \textsuperscript{}6.879 & \textsuperscript{}8.521 & 21.544 & 0.996 & 0.091\\
         & PELT & \textsuperscript{}0.133 & \textsuperscript{*\dag}9.068 & \textsuperscript{}0.997 & \textsuperscript{*\dag}0.723 &  &  & 0.017\\
         & NOT & \textsuperscript{}0.210 & \textsuperscript{}11.856 & \textsuperscript{}1.384 & \textsuperscript{}0.898 &  &  & 0.250\\
         & NSP & \textsuperscript{}3.251 & \textsuperscript{} & \textsuperscript{} & \textsuperscript{} & 88.508 & 1.000 & 26.815\\ \hline\hline
    \end{tabular}
    \caption{\footnotesize \textbf{Simulation \ref{sim:main} Results with MA(2) Errors -- Low Dependence}. Average statistics over 5,000 replicates of $y_t$ generated according to MA(2) process (\ref{eq:dep_sim}) with $\theta = 0.3$. MICH credible sets are constructed for $\alpha = 0.1$ and detections are determined with $\delta = 1.1$. Key: \textsuperscript{*}Best overall. \textsuperscript{\textdagger}Best excluding MICH-Ora. \textsuperscript{\ddag}Best among methods with uncertainty quantification excluding MICH-Ora.}
    \label{tab:ma3_sim}
\end{table}

Figure \ref{fig:ma3_sim_plot} and Table \ref{tab:ma3_sim} show the results of this study in the low-dependence setting where $\theta = 0.3$. Again, MOSUM and NSP lag behind the other methods in the study, displaying larger bias across every setting where $J^* > 2$ and consistently having the worst localization errors in terms of the Hausdorff, FPSLE, and FNSLE statistics. MICH-Auto appears to be somewhat robust to the presence of dependence in this study, being unbiased in the median simulation and achieving the desired nominal coverage while still returning much smaller credible sets. MICH-Auto also has the lowest average bias (or is off by a negligible factor) in all settings except for when $T=1000$ and $J^*=10$. As before, the bias of MICH-Auto is notably lower than H-SMUCE when $\alpha = 0.1$ and $T=100$ (when $J^* = 5$, H-SMUCE is biased by nearly 2 whole change-points and does not even achieve 80\% coverage). Notably, the gap between the bias of MICH and PELT has narrowed as well, and as in the heavy-tailed simulation, the bias of NOT is heavily inflated in the presence of correlated data. Across all settings except the last where $T = 1000$ and $J^*=10$, MICH-Auto has the lowest FNSLE among methods that provide uncertainty quantification, and despite not providing any guarantees for false discovery control, the FPSLE of MICH-Auto is on par with H-SMUCE across each setting, even when $\alpha = 0.5$.

\begin{figure}[!h]
    \centering
    \includegraphics[scale =0.42]{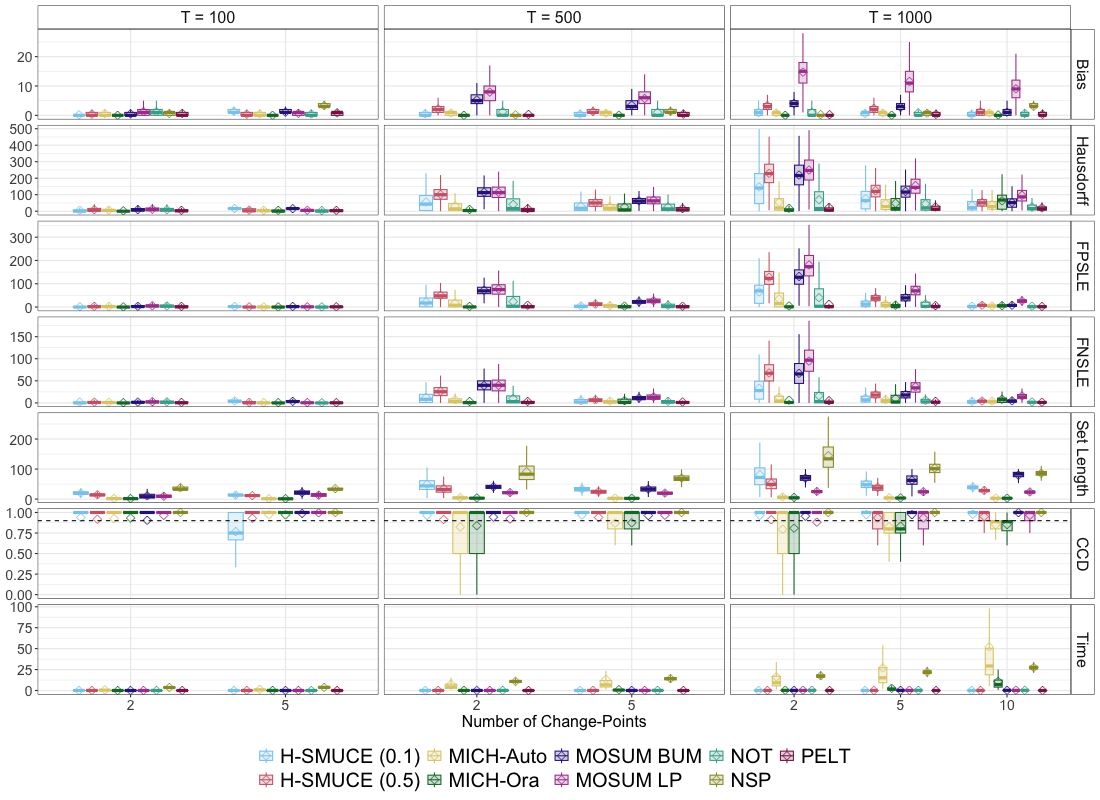}
    \caption{\small \textbf{Simulation \ref{sim:main} Results with MA(2) Errors -- High Dependence}. Box-plots of evaluation metrics for 5,000 replicates of $y_t$ generated according to (\ref{eq:heavy_sim}) with MA(2) errors (\ref{eq:dep_sim}) and $\theta = 0.7$. Diamonds ($\Diamond$) display mean of each statistic. Bias $|J^* - J|$ assesses each model's ability to estimate correct number of changes (lower is better). FPSLE and FNSLE assess each model's ability to accurately estimate the locations of the changes (lower is better). Set Length and CCD report the average size and coverage of credible/confidence sets for methods that provide uncertainty quantification (dashed line indicates nominal coverage level for $\alpha = 0.1$). Time reports run-time for each method in seconds.}
    \label{fig:ma7_sim_plot}
\end{figure}

\begin{table}[htbp!] 
    \scriptsize
    \centering
    \begin{tabular}{l || l || r r r r | r r | r}
        \multicolumn{1}{c||}{Setting} & \multicolumn{1}{c||}{Method} & $|J^* - J|$ & Hausdorff & FPSLE & FNSLE & CI Len. & CCD & Time (s) \\ \hline\hline 
        $T =$ 100 & MICH-Auto & \textsuperscript{}0.649 & \textsuperscript{}6.945 & \textsuperscript{}3.019 & \textsuperscript{\ddag}1.541 & 2.186 & 0.927 & 0.799\\
        $J^* =$ 2 & MICH-Ora & \textsuperscript{*}0.002 & \textsuperscript{*}2.366 & \textsuperscript{*}0.534 & \textsuperscript{*}0.603 & 1.616 & 0.935 & 0.017\\
        Min. Space = 15 & H-SMUCE (0.1) & \textsuperscript{\dag\ddag}0.183 & \textsuperscript{\ddag}5.697 & \textsuperscript{\dag\ddag}1.459 & \textsuperscript{}2.045 & 20.591 & 0.960 & 0.021\\
         & H-SMUCE (0.5) & \textsuperscript{}0.422 & \textsuperscript{}9.174 & \textsuperscript{}3.023 & \textsuperscript{}1.903 & 14.437 & 0.922 & 0.059\\
         & MOSUM BUM & \textsuperscript{}0.526 & \textsuperscript{}9.584 & \textsuperscript{}3.415 & \textsuperscript{}1.944 & 12.063 & 0.904 & 0.018\\
         & MOSUM LP & \textsuperscript{}1.236 & \textsuperscript{}11.757 & \textsuperscript{}5.304 & \textsuperscript{}2.817 & 10.499 & 0.965 & 0.018\\
         & PELT & \textsuperscript{}0.660 & \textsuperscript{\dag}5.227 & \textsuperscript{}2.752 & \textsuperscript{\dag}1.077 &  &  & 0.002\\
         & NOT & \textsuperscript{}1.244 & \textsuperscript{}9.390 & \textsuperscript{}4.630 & \textsuperscript{}2.383 &  &  & 0.050\\
         & NSP & \textsuperscript{}0.627 & \textsuperscript{} & \textsuperscript{} & \textsuperscript{} & 37.036 & 1.000 & 3.631\\ \hline
        $T =$ 100 & MICH-Auto & \textsuperscript{}0.477 & \textsuperscript{\ddag}5.184 & \textsuperscript{}0.841 & \textsuperscript{\ddag}0.834 & 1.263 & 0.972 & 1.350\\
        $J^* =$ 5 & MICH-Ora & \textsuperscript{*}0.024 & \textsuperscript{}4.999 & \textsuperscript{*}0.619 & \textsuperscript{}0.765 & 1.217 & 0.971 & 0.122\\
        Min. Space = 15 & H-SMUCE (0.1) & \textsuperscript{}1.384 & \textsuperscript{}16.187 & \textsuperscript{}2.777 & \textsuperscript{}4.547 & 13.575 & 0.802 & 0.022\\
         & H-SMUCE (0.5) & \textsuperscript{\dag\ddag}0.392 & \textsuperscript{}6.963 & \textsuperscript{\ddag}0.839 & \textsuperscript{}1.331 & 11.624 & 0.940 & 0.058\\
         & MOSUM BUM & \textsuperscript{}1.095 & \textsuperscript{}15.245 & \textsuperscript{}2.388 & \textsuperscript{}3.586 & 21.753 & 0.990 & 0.021\\
         & MOSUM LP & \textsuperscript{}0.700 & \textsuperscript{}6.101 & \textsuperscript{}1.140 & \textsuperscript{}0.865 & 13.450 & 0.996 & 0.024\\
         & PELT & \textsuperscript{}0.854 & \textsuperscript{}4.632 & \textsuperscript{}1.103 & \textsuperscript{}0.591 &  &  & 0.003\\
         & NOT & \textsuperscript{}0.459 & \textsuperscript{*\dag}3.800 & \textsuperscript{\dag}0.664 & \textsuperscript{*\dag}0.503 &  &  & 0.050\\
         & NSP & \textsuperscript{}3.348 & \textsuperscript{} & \textsuperscript{} & \textsuperscript{} & 35.082 & 1.000 & 3.750\\ \hline
        $T =$ 500 & MICH-Auto & \textsuperscript{}0.771 & \textsuperscript{\ddag}31.421 & \textsuperscript{\ddag}17.343 & \textsuperscript{\ddag}7.289 & 5.438 & 0.826 & 7.655\\
        $J^* =$ 2 & MICH-Ora & \textsuperscript{*}0.005 & \textsuperscript{*}10.715 & \textsuperscript{*}2.872 & \textsuperscript{*}3.309 & 3.818 & 0.842 & 0.130\\
        Min. Space = 30 & H-SMUCE (0.1) & \textsuperscript{}0.598 & \textsuperscript{}55.754 & \textsuperscript{}21.987 & \textsuperscript{}12.754 & 49.851 & 0.967 & 0.033\\
         & H-SMUCE (0.5) & \textsuperscript{}2.173 & \textsuperscript{}101.424 & \textsuperscript{}50.963 & \textsuperscript{}26.170 & 35.314 & 0.916 & 0.067\\
         & MOSUM BUM & \textsuperscript{}5.252 & \textsuperscript{}115.660 & \textsuperscript{}72.317 & \textsuperscript{}39.368 & 41.552 & 0.943 & 0.069\\
         & MOSUM LP & \textsuperscript{}7.923 & \textsuperscript{}114.569 & \textsuperscript{}77.307 & \textsuperscript{}39.577 & 21.692 & 0.923 & 0.067\\
         & PELT & \textsuperscript{}0.312 & \textsuperscript{\dag}15.736 & \textsuperscript{\dag}8.058 & \textsuperscript{\dag}3.349 &  &  & 0.009\\
         & NOT & \textsuperscript{}1.229 & \textsuperscript{}41.383 & \textsuperscript{}23.690 & \textsuperscript{}10.016 &  &  & 0.140\\
         & NSP & \textsuperscript{\dag\ddag}0.187 & \textsuperscript{} & \textsuperscript{} & \textsuperscript{} & 90.458 & 1.000 & 10.869\\ \hline
        $T =$ 500 & MICH-Auto & \textsuperscript{}0.727 & \textsuperscript{\ddag}26.750 & \textsuperscript{}6.203 & \textsuperscript{\ddag}4.144 & 3.121 & 0.871 & 13.441\\
        $J^* =$ 5 & MICH-Ora & \textsuperscript{*}0.042 & \textsuperscript{}26.191 & \textsuperscript{}4.134 & \textsuperscript{}5.120 & 2.727 & 0.874 & 0.964\\
        Min. Space = 30 & H-SMUCE (0.1) & \textsuperscript{\dag\ddag}0.385 & \textsuperscript{}30.606 & \textsuperscript{\ddag}5.018 & \textsuperscript{}5.008 & 33.757 & 0.972 & 0.034\\
         & H-SMUCE (0.5) & \textsuperscript{}1.267 & \textsuperscript{}51.333 & \textsuperscript{}12.698 & \textsuperscript{}6.821 & 25.297 & 0.940 & 0.067\\
         & MOSUM BUM & \textsuperscript{}3.476 & \textsuperscript{}62.738 & \textsuperscript{}22.576 & \textsuperscript{}11.011 & 34.266 & 0.963 & 0.076\\
         & MOSUM LP & \textsuperscript{}5.726 & \textsuperscript{}67.765 & \textsuperscript{}27.195 & \textsuperscript{}13.782 & 20.206 & 0.967 & 0.069\\
         & PELT & \textsuperscript{}0.429 & \textsuperscript{*\dag}15.934 & \textsuperscript{*\dag}3.642 & \textsuperscript{*\dag}2.135 &  &  & 0.009\\
         & NOT & \textsuperscript{}1.034 & \textsuperscript{}25.837 & \textsuperscript{}7.444 & \textsuperscript{}3.623 &  &  & 0.144\\
         & NSP & \textsuperscript{}1.338 & \textsuperscript{} & \textsuperscript{} & \textsuperscript{} & 69.955 & 1.000 & 14.180\\ \hline
        $T =$ 1000 & MICH-Auto & \textsuperscript{}0.809 & \textsuperscript{\ddag}54.233 & \textsuperscript{\ddag}34.467 & \textsuperscript{\ddag}12.779 & 7.590 & 0.800 & 16.332\\
        $J^* =$ 2 & MICH-Ora & \textsuperscript{*}0.008 & \textsuperscript{*}17.220 & \textsuperscript{*}4.823 & \textsuperscript{}6.128 & 5.385 & 0.812 & 0.351\\
        Min. Space = 50 & H-SMUCE (0.1) & \textsuperscript{}1.007 & \textsuperscript{}147.138 & \textsuperscript{}64.881 & \textsuperscript{}33.285 & 82.532 & 0.967 & 0.044\\
         & H-SMUCE (0.5) & \textsuperscript{}3.373 & \textsuperscript{}229.631 & \textsuperscript{}128.264 & \textsuperscript{}67.854 & 52.581 & 0.914 & 0.078\\
         & MOSUM BUM & \textsuperscript{}4.037 & \textsuperscript{}218.963 & \textsuperscript{}131.434 & \textsuperscript{}66.943 & 71.812 & 0.954 & 0.130\\
         & MOSUM LP & \textsuperscript{}14.724 & \textsuperscript{}249.611 & \textsuperscript{}180.790 & \textsuperscript{}94.650 & 25.236 & 0.883 & 0.295\\
         & PELT & \textsuperscript{\dag}0.216 & \textsuperscript{\dag}23.093 & \textsuperscript{\dag}11.806 & \textsuperscript{*\dag}5.094 &  &  & 0.017\\
         & NOT & \textsuperscript{}0.984 & \textsuperscript{}69.270 & \textsuperscript{}40.036 & \textsuperscript{}16.226 &  &  & 0.246\\
         & NSP & \textsuperscript{\ddag}0.244 & \textsuperscript{} & \textsuperscript{} & \textsuperscript{} & 145.666 & 1.000 & 17.566\\ \hline
        $T =$ 1000 & MICH-Auto & \textsuperscript{}0.752 & \textsuperscript{\ddag}46.834 & \textsuperscript{\ddag}12.380 & \textsuperscript{\ddag}7.002 & 4.553 & 0.828 & 26.944\\
        $J^* =$ 5 & MICH-Ora & \textsuperscript{*}0.056 & \textsuperscript{}48.197 & \textsuperscript{}8.102 & \textsuperscript{}10.103 & 4.003 & 0.833 & 2.699\\
        Min. Space = 50 & H-SMUCE (0.1) & \textsuperscript{\ddag}0.612 & \textsuperscript{}76.134 & \textsuperscript{}15.667 & \textsuperscript{}10.590 & 51.115 & 0.976 & 0.044\\
         & H-SMUCE (0.5) & \textsuperscript{}2.398 & \textsuperscript{}124.001 & \textsuperscript{}38.747 & \textsuperscript{}18.840 & 38.674 & 0.936 & 0.077\\
         & MOSUM BUM & \textsuperscript{}2.751 & \textsuperscript{}119.866 & \textsuperscript{}40.524 & \textsuperscript{}19.161 & 64.335 & 0.973 & 0.166\\
         & MOSUM LP & \textsuperscript{}11.548 & \textsuperscript{}155.310 & \textsuperscript{}73.463 & \textsuperscript{}36.615 & 24.708 & 0.934 & 0.376\\
         & PELT & \textsuperscript{\dag}0.328 & \textsuperscript{*\dag}25.897 & \textsuperscript{*\dag}5.986 & \textsuperscript{*\dag}3.571 &  &  & 0.017\\
         & NOT & \textsuperscript{}0.870 & \textsuperscript{}44.822 & \textsuperscript{}13.281 & \textsuperscript{}6.086 &  &  & 0.252\\
         & NSP & \textsuperscript{}0.732 & \textsuperscript{} & \textsuperscript{} & \textsuperscript{} & 103.073 & 1.000 & 22.018\\ \hline
        $T =$ 1000 & MICH-Auto & \textsuperscript{}0.815 & \textsuperscript{}39.548 & \textsuperscript{}5.072 & \textsuperscript{}4.138 & 3.000 & 0.850 & 52.075\\
        $J^* =$ 10 & MICH-Ora & \textsuperscript{*}0.125 & \textsuperscript{}62.608 & \textsuperscript{}6.214 & \textsuperscript{}8.088 & 2.958 & 0.852 & 10.903\\
        Min. Space = 50 & H-SMUCE (0.1) & \textsuperscript{\dag\ddag}0.343 & \textsuperscript{\ddag}35.573 & \textsuperscript{\ddag}3.197 & \textsuperscript{\ddag}3.491 & 39.565 & 0.979 & 0.046\\
         & H-SMUCE (0.5) & \textsuperscript{}1.228 & \textsuperscript{}52.427 & \textsuperscript{}7.785 & \textsuperscript{}4.330 & 28.883 & 0.950 & 0.078\\
         & MOSUM BUM & \textsuperscript{}1.053 & \textsuperscript{}53.169 & \textsuperscript{}7.635 & \textsuperscript{}4.758 & 79.771 & 0.998 & 0.226\\
         & MOSUM LP & \textsuperscript{}9.038 & \textsuperscript{}96.257 & \textsuperscript{}26.425 & \textsuperscript{}15.748 & 23.804 & 0.966 & 0.393\\
         & PELT & \textsuperscript{}0.374 & \textsuperscript{*\dag}20.128 & \textsuperscript{*\dag}2.597 & \textsuperscript{*\dag}1.605 &  &  & 0.017\\
         & NOT & \textsuperscript{}0.619 & \textsuperscript{}25.452 & \textsuperscript{}3.680 & \textsuperscript{}2.130 &  &  & 0.253\\
         & NSP & \textsuperscript{}3.058 & \textsuperscript{} & \textsuperscript{} & \textsuperscript{} & 85.841 & 1.000 & 27.522\\ \hline\hline
    \end{tabular}
    \caption{\footnotesize \textbf{Simulation \ref{sim:main} Results with MA(2) Errors -- High Dependence}. Average statistics over 5,000 replicates of $y_t$ generated according to MA(2) process (\ref{eq:dep_sim}) with $\theta = 0.7$. MICH credible sets are constructed for $\alpha = 0.1$ and detections are determined with $\delta = 1.1$. Key: \textsuperscript{*}Best overall. \textsuperscript{\textdagger}Best excluding MICH-Ora. \textsuperscript{\ddag}Best among methods with uncertainty quantification excluding MICH-Ora.}
    \label{tab:ma7_sim}
\end{table}

Figure \ref{fig:ma7_sim_plot} and Table \ref{tab:ma7_sim} show the results of this study in the high-dependence setting where $\theta = 0.7$. Here we see the performance of all methods beginning to break down. Although MICH-Ora remains unbiased, the average bias of MICH-Auto is now between 0.5 and 1 change-points. Though the credible sets are larger than in the other settings, MICH-Auto now fails to meet the desired 0.9 coverage when $T>100$, indicating we should likely increase $\delta$ beyond 1.1 when the serial correlation in $\mathbf{y}_{1:T}$ is larger. Despite the break down in performance, the bias of MICH-Auto is still on the order of NOT and H-SMUCE (in the small-sample setting with $T=100$, the bias of MICH-Auto is on the order of PELT or even smaller, but PELT has roughly half the bias of MICH-Auto when $T \geq 500$). As before, the performance of H-SMUCE depends heavily on the choice of $\alpha$, only now the difference is even more pronounced, e.g. when $T =100$ and $J^* = 5$, both MICH-Auto and H-SMUCE with $\alpha = 0.5$ are biased by approximately half a change-point, but H-SMUCE with $\alpha = 0.1$ is biased by 1.5 change-points and when $T =1000$ and $J^* = 2$, both MICH-Auto and H-SMUCE with $\alpha = 0.1$ are biased by approximately a single change-point, but H-SMUCE with $\alpha = 0.1$ is biased by over 3 change-points. 

Among the methods that provide uncertainty quantification, MICH-Auto has the lowest or near lowest localization errors as measured by the FNSLE and FPSLE across each setting. This difference is particularly pronounced in some settings, e.g. when $J^*=2$ and $T=500$ or $T=1000$, the FNSLE and FPSLE for MICH-Auto are roughly half that of the next best method (H-SMUCE with $\alpha = 0.5$). While the localization error of PELT remains robust even in the presence of high serial correlation, MICH-Auto is now consistently in line with or beating NOT. To Summarize, MICH-Auto is robust in the presence of low serial correlation, but begins to suffer as the strength of the dependence increases. Despite this, MICH is consistently one of the most robust performers among the methods that provide uncertainty quantification. 

\subsection{Multivariate Simulation}

In this section we adapt Simulation \ref{sim:main} for the case of multivariate mean changes to test the performance of Algorithm \ref{alg:mich-multi}.

\setcounter{algocf}{1}
\begin{simulation}[!h]
\begin{enumerate}
\small
    \itemsep0em 
    \item Fix the number of observations $T$, the dimension $d$, the number of change-points $L^*$, the proportion of active dimensions $p \in (0,1]$, the minimum spacing condition $\Delta_T$, and a constant $C > 0$.
    \item Draw $\boldsymbol{\tau}_{1:L^*}$ uniformly from $[T]$ subject to the minimum spacing condition $|\tau_{\ell+1} - \tau_\ell| \geq \Delta_T$ with $\tau_0 = 1$ and $\tau_{L^*+1} = T+1$. 
    \item Draw $\{U_i\}_{i=1}^d \sim \text{Uniform}(-2,2)$ and set $s_i := 2^{U_i}$ and $\boldsymbol{\Sigma} = \text{diag}(\mathbf{s}_{1:d})$.
    \item Let $A$ be a set of $d_0:=\lfloor p d\rfloor$ active coordinates drawn uniformly at random from $[d]$.
    \item Set $\boldsymbol{\mu}_0 :=0$, and for each $i \in [d]$ draw $\xi_{\ell, i} \sim \text{Bernoulli}(0.5)$ and set: 
    \vspace{-5pt}
    \begin{align*}
        \mu_{\ell, i} &:= \mu_{\ell - 1,i}  + \frac{C(1-2\xi_{\ell,i})s_i \mathbbm{1}_{\{i \in A\}}}{\sqrt{\min\{\tau_{\ell+1} - \tau_\ell, \tau_\ell - \tau_{\ell-1}\}}}.
    \end{align*}
    \vspace{-15pt}
    \item Draw $\mathbf{y}_t \overset{\text{ind.}}{\sim} \mathcal{N}_d\left(\sum_{j=0}^{L^*} \boldsymbol{\mu}_\ell\mathbbm{1}_{\{\tau_\ell \leq t < \tau_{\ell+1}\}}, \boldsymbol{\Sigma}\right)$.
    \vspace{-10pt}
\normalsize
\end{enumerate}
\caption{Multivariate Mean Simulation Study}
\label{sim:multi}
\end{simulation}

For each replicate of Simulation \ref{sim:multi}, we fit the oracle version of MICH with $L = L^*$ (MICH-Ora) and the version of the model with $L$ selected based on the ELBO (MICH-Auto). In each case, we fit MICH with error tolerance $\epsilon=10^{-5}$, and use $\delta = 0.5$ and $\alpha =0.1$ for detecting changes and constructing credible sets. We compare the performance of MICH to three other methods capable of detecting multivariate mean changes, including: i) the E-Divisive method of \cite{James15}, ii) the Two-Way
MOSUM ($\ell^2$-HD) method of \cite{Li23}, and iii) the informative sparse projection (Inspect) method of \cite{Wang17}. To fit E-Divisive, we use the \texttt{e.divisive} function from the \textbf{ecp} package.\footnote{\url{https://CRAN.R-project.org/package=ecp}.} Unlike MICH, E-Divisive requires knowledge of $\Delta_T$, so we set the argument \texttt{min.size} $=\Delta_T$ (though setting \texttt{min.size = 1} would provide a fairer comparison to MICH). We set the penalty argument \texttt{alpha = 2} to restrict E-Divisive to searching for mean changes, and set \texttt{R = 499} for the maximum number of random permutations as per the examples in \cite{James15}. To fit $\ell^2$-HD, we use the \texttt{ts\_hdchange} and \texttt{hdchange} functions  from the \textbf{L2hdchange} package\footnote{\url{https://CRAN.R-project.org/package=L2hdchange}.} with the default arguments. Lastly, to fit Inspect, we use the \texttt{inspect} function from the \textbf{InspectChangepoint} package\footnote{\url{https://cran.r-project.org/package=InspectChangepoint}.} with the penalty argument \texttt{lambda} set to the recommended default $\log(0.5d\log T)$. Note that for some replicates of Simulation \ref{sim:main}, this penalty is too small and leads to a compilation error in \texttt{inspect}. In this case, we take the ad-hoc approach of incrementing \texttt{lambda} until \texttt{inspect} compiles.

\begin{figure}[!h]
    \centering
    \includegraphics[scale = 0.31]{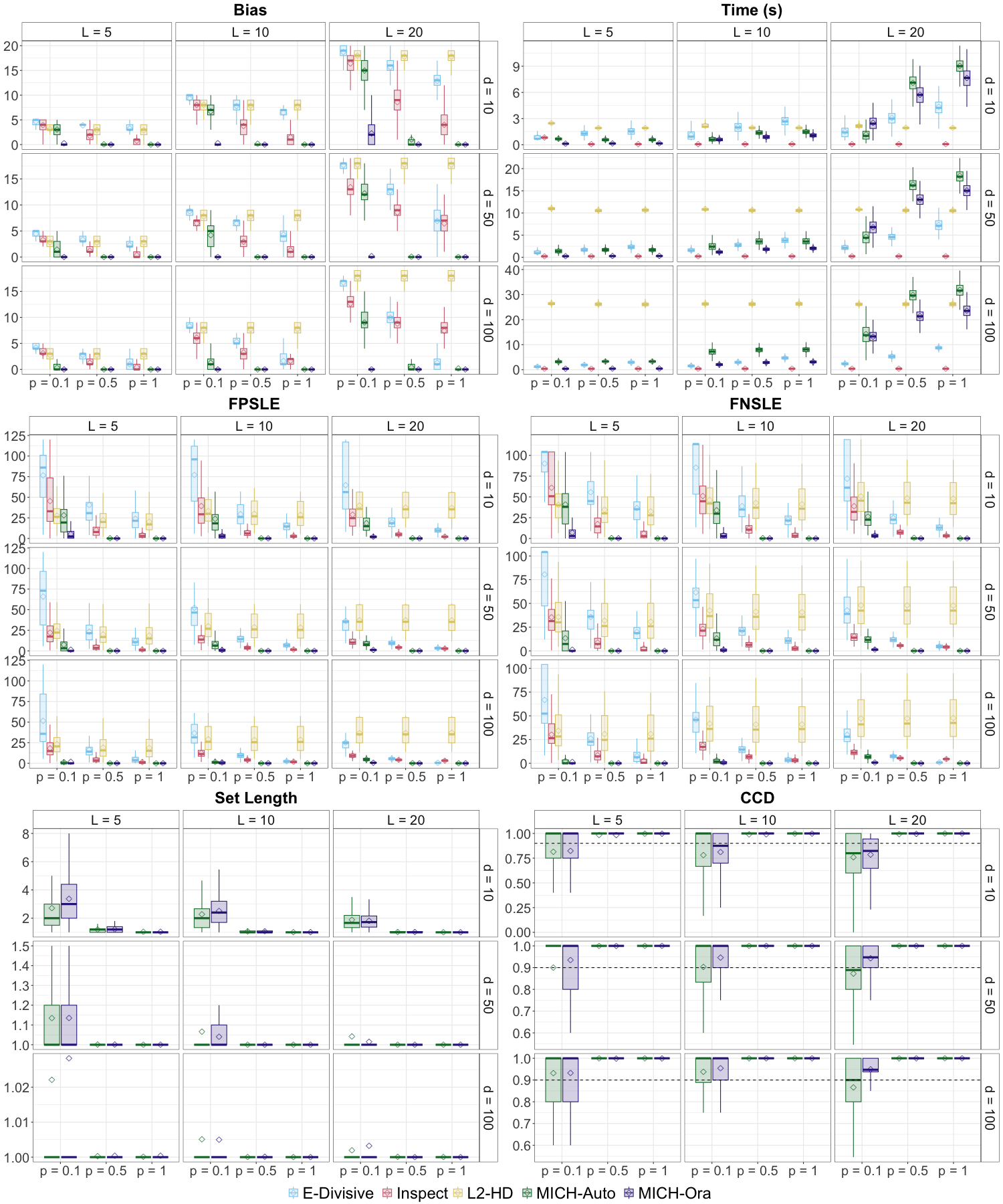}
    \caption{\small \textbf{Simulation \ref{sim:multi} Results}. Box-plots of evaluation metrics for 5,000 replicates of $\mathbf{y}_t$ from Simulation \ref{sim:multi} with $T = 250$, $\Delta_T = 10$, and $C = \sqrt{10}$. Diamonds ($\Diamond$) display mean of each statistic. Bias $|J^* - J|$ assesses each model's ability to estimate correct number of changes (lower is better). FPSLE and FNSLE assess each model's ability to accurately estimate the locations of the changes (lower is better). Set Length and CCD report the average size and coverage of MICH credible sets (dashed line indicates nominal coverage level for $\alpha = 0.1$). Time reports run-time for each method in seconds.}
    \label{fig:multi_sim_plot}
\end{figure}

Figure \ref{fig:multi_sim_plot} shows the results for 5,000 replicates of Simulation \ref{sim:multi} with $T = 250$, $\Delta_T = 10$, and $C = \sqrt{10}$, as well as for each combination of $d\in\{10,50,100\}$, $L^*\in\{5,10,20\}$, and $p \in \{0.1,0.5,1\}$. First note that as expected, MICH-Ora is unbiased across all settings and has virtually no localization error as measured by the FPSLE and FNSLE, even in the sparse setting with $p = 0.1$. Next, we see that outside of MICH-Ora, MICH-Auto consistently has the lowest bias, FPSLE, and FNSLE of all methods tested across each setting. As expected, the performance of MICH-Auto improves as both the dimension of the data $d$ and the proportion of active series $p$ increase. In fact, MICH-Auto is unbiased and has virtually no localization error for each value of $d$ once $p \geq 0.5$. Even in the sparse setting with $p=0.1$, MICH-Auto still beats the other methods, including Inspect, which is specifically designed to detect sparse changes. Somewhat surprising is the fact that the performance of $\ell^2$-HD, which is designed to detect dense signals, does not improve as $p$ increases. 

Lastly, we see that the computational cost of both MICH and E-Divisive is increasing with $d$ and $L^*$. The time needed to fit $\ell^2$-HD is constant across $L^*$ but increasing in $d$ and is typically much higher than the other methods only except when $L^*=20$. The time needed to fit Inspect is constant across each setting. Nonetheless, even when $L^* = 20$ and $d = 100$, both MICH-Auto and MICH-Ora still fit in under 30 seconds, and when $L^*=5$, they both run just as fast as Inspect. 

\subsubsection{Spatial Correlation}

Based on the estimation strategy for $\boldsymbol{\Sigma}$ outlined in Appendix \ref{app:empirical-bayes}, MICH should be robust to the presence of general correlation structures in $\boldsymbol{\Sigma}$. We test this empirically by repeating Simulation \ref{sim:multi} in with spatial correlation. We modify Step 3 of Simulation \ref{sim:multi} so that for some $\rho \in (-1,1)$ and any $i,j\in[d]$:
\begin{align}
    [\boldsymbol{\Sigma}]_{ij}:= \rho^{|i-j|}s_is_j \label{eq:spatial}
\end{align}

\begin{figure}[!h]
    \centering
    \includegraphics[scale = 0.31]{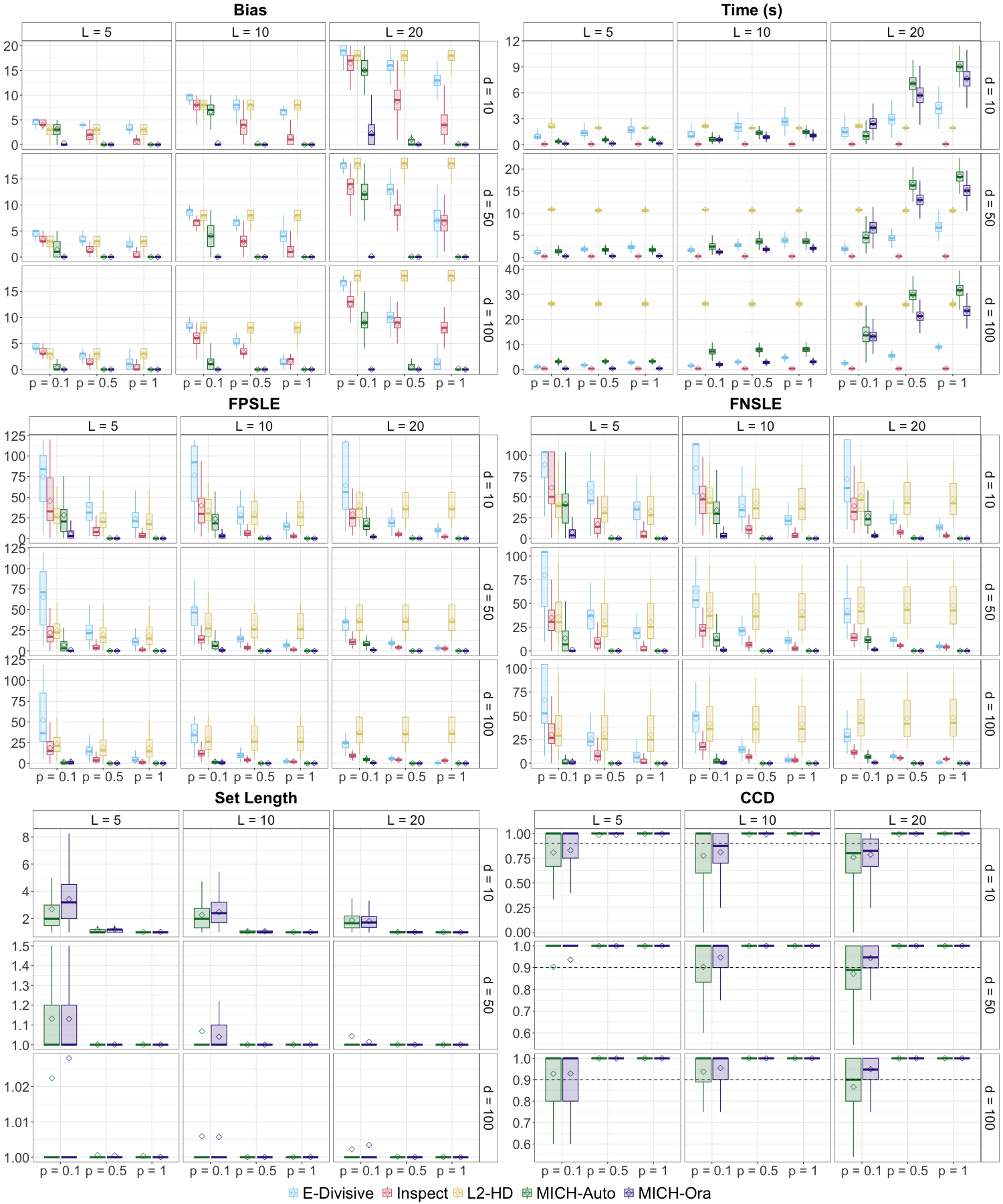}
    \caption{\small \textbf{Simulation \ref{sim:multi} Results -- Spatial Correlation}. Box-plots of evaluation metrics for 5,000 replicates of $\mathbf{y}_t$ from Simulation \ref{sim:multi} with $\boldsymbol{\Sigma}$ generated according to (\ref{eq:spatial}) with $T = 250$, $\Delta_T = 10$, and $C = \sqrt{10}$. Diamonds ($\Diamond$) display mean of each statistic. Bias $|J^* - J|$ assesses each model's ability to estimate correct number of changes (lower is better). FPSLE and FNSLE assess each model's ability to accurately estimate the locations of the changes (lower is better). Set Length and CCD report the average size and coverage of MICH credible sets (dashed line indicates nominal coverage level for $\alpha = 0.1$). Time reports run-time for each method in seconds.}
    \label{fig:corr_multi_sim_plot}
\end{figure}

Figure \ref{fig:corr_multi_sim_plot} shows the results for 5,000 replicates from Simulation \ref{sim:multi} under (\ref{eq:spatial}) with $\rho = 0.7$, $T = 250$, $\Delta_T = 10$, and $C = \sqrt{20}$. Overall, each method we test is robust to the presence of spatial correlation and our results do not noticeably differ from the results of Simulation \ref{sim:multi} with $\boldsymbol{\Sigma} = \text{diag}(\mathbf{s}_{1:d})$.

\subsubsection{Vanishing Signals}

As noted in Section \ref{sec:smscp-theory}, the mean-scp model uses a $\ell^2$-norm based aggregation to measure signal strength, meaning that MICH should be best suited for dense signals with many small jumps. To test this, we modify Step 5 of Simulation \ref{sim:multi} so that the jumps in $\boldsymbol{\mu}_{1:T}$ are decreasing as the dimension $d$ and the proportion of active series $p$ increase:
\begin{align}
    \mu_{\ell, i} &:= \mu_{\ell - 1,i}  + \frac{C(1-2\xi_{\ell,i})s_i \mathbbm{1}_{\{i \in A\}}}{\sqrt{d_0\min\{\tau_{\ell+1} - \tau_\ell, \tau_\ell - \tau_{\ell-1}\}}}. \label{eq:adaptive}
\end{align}

\begin{figure}[!h]
    \centering
    \includegraphics[scale = 0.31]{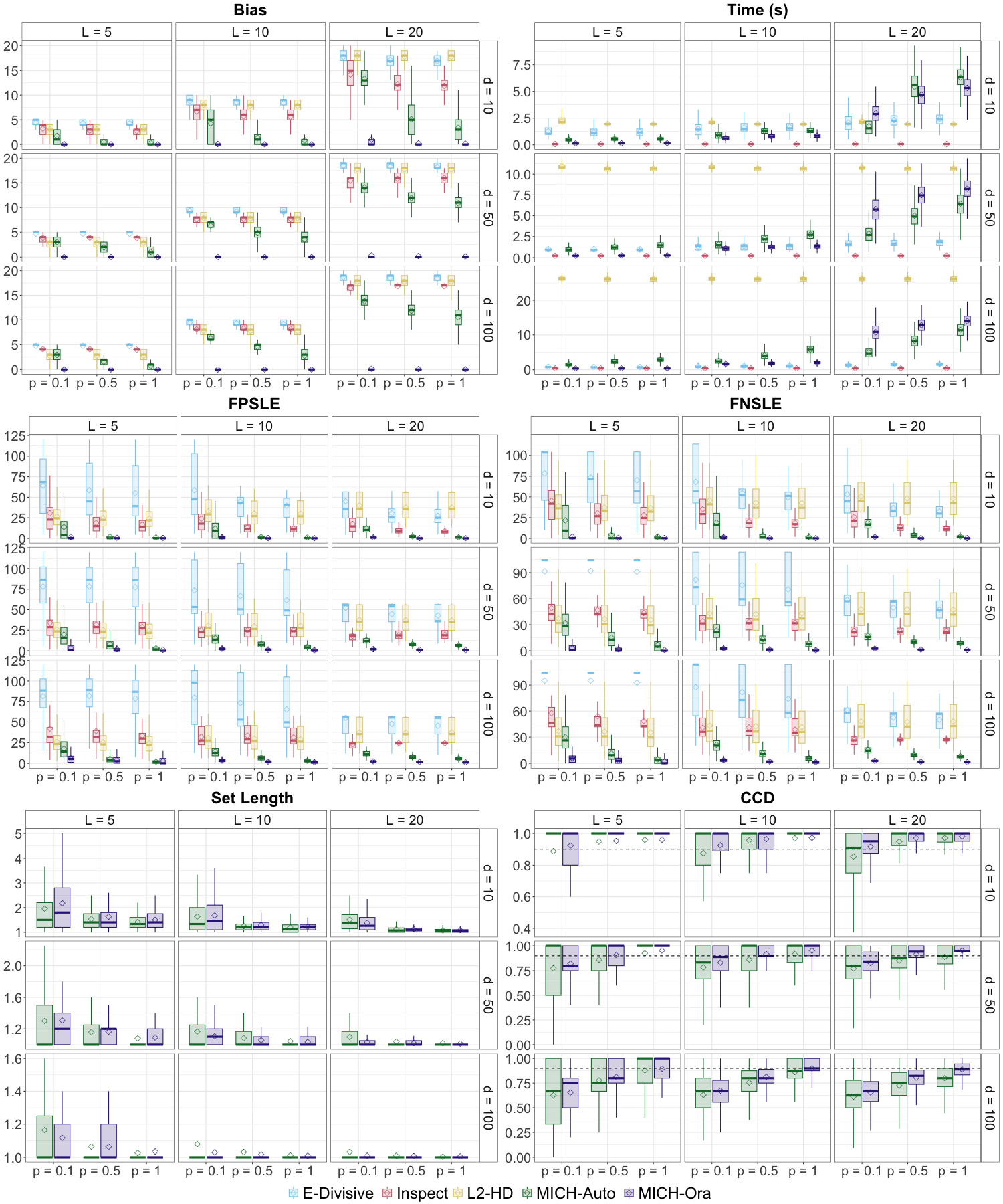}
    \caption{\small \textbf{Simulation \ref{sim:multi} Results -- Vanishing Signal}. Box-plots of evaluation metrics for 5,000 replicates of $\mathbf{y}_t$ from Simulation \ref{sim:multi} with $\boldsymbol{\mu}_{1:T}$ generated according to (\ref{eq:adaptive}) with $T = 250$, $\Delta_T = 10$, and $C = 5$. Diamonds ($\Diamond$) display mean of each statistic. Bias $|J^* - J|$ assesses each model's ability to estimate correct number of changes (lower is better). FPSLE and FNSLE assess each model's ability to accurately estimate the locations of the changes (lower is better). Set Length and CCD report the average size and coverage of MICH credible sets (dashed line indicates nominal coverage level for $\alpha = 0.1$). Time reports run-time for each method in seconds.}
    \label{fig:adapt_multi_sim_plot}
\end{figure}

Figure \ref{fig:adapt_multi_sim_plot} shows the results for 5,000 replicates from Simulation \ref{sim:multi} under (\ref{eq:adaptive}) with $T = 250$, $\Delta_T = 10$, and $C = 5$. As in Figure \ref{fig:multi_sim_plot}, we see that MICH outperforms the other methods across all metrics (though the bias of MICH is now roughly on the same order as $\ell^2$-HD for small $p$). In Figure \ref{fig:multi_sim_plot}, we saw the performance of MICH, Inspect and E-Divisive all improved as $p$ increased. Under (\ref{eq:adaptive}), the bias, FPSLE, and FNSLE for MICH-Auto are still decreasing as $p$ increases, but Figure \ref{fig:adapt_multi_sim_plot} shows that this is no longer the case for either Inspect or E-Divisive. This provides evidence for our hypothesis that MICH is best suited for detecting many small jumps in $\boldsymbol{\mu}_{1:T}$ among the methods we test. Notably, the evaluation metrics for MICH-Auto and Inspect were roughly on the same order in Figure \ref{fig:multi_sim_plot} for $p \geq 0.5$, but under (\ref{eq:adaptive}) we see a gap opening up between Inspect and MICH, with the performance on Inspect suffering much more in the presence of a vanishing signal. 

\subsection{Sensitivity Analysis}

\subsubsection{Sensitivity Analysis: Prior Parameters}
\label{app:prior_sensitivity}

To evaluate the effect of the prior parameters $\omega_0$, $u_0$, and $v_0$ on the output of MICH, we repeat Simulation \ref{sim:main} but now vary $\omega_0, u_0, v_0 \in \{10^{-1}, 10^{-3}, 10^{-5}\}$ when fitting MICH. Table \ref{tab:ora_prior_sa} shows the results of this study for the oracle version of MICH with $J=J^*$. In every setting, every metric is virtually unchanged regardless of the value for $\omega_0$, $u_0$, and $v_0$, indicating that Algorithm \ref{alg:mich} is not sensitive to the choice of $\omega_0$, $u_0$, $v_0$ for reasonably small values. Table \ref{tab:auto_prior_sa} shows similar results for this study using the automatic version of MICH where $J$ is selected using the value of the ELBO (see Appendix \ref{app:LKJ-choice}). There is a small decrease in performance for too small or large values of $\omega_0$, $u_0$, and $v_0$, but the magnitude of the difference is not significant, e.g. the difference in bias is less than half a change-point at most, and the evaluation metrics are within a standard error of each other (not displayed). Thus, the choice of prior parameters $\omega_0$, $u_0$, and $v_0$ does not seem make a meaningful difference and the default value of $10^{-3}$ should be appropriate in most use cases.

\subsubsection{Sensitivity Analysis: Detection Rule}
\label{app:delta_sensitivity}

According to the detection rule introduced in Section \ref{sec:cred-sets}, the $i^{\text{th}}$ component of MICH detects a change-point at significance level $\alpha\in(0,1)$ if $|\mathcal{CS}(\alpha, \overline{\boldsymbol{\pi}}_{1:T,i})| \leq \log^{1+\delta} T$ for some $\delta > 0$, where $\mathcal{CS}(\alpha, \overline{\boldsymbol{\pi}}_{1:T,i})$ is the $\alpha$ level credible set for the $i^{\text{th}}$ model component. Corollary \ref{cor:cred-sets} states that in the single change-point setting, as $T$ increases, the false negative rate (FNR) converges to zero for any $\delta > 0$, so long as the elements of $\mathbf{y}_{1:T}$ are independent. For dependent data, the FNR still goes to zero for $\delta > 1$. The analyst can test for dependence in the data and choose $\delta$ accordingly, or simply set $\delta > 1$, which is robust to dependence in $\mathbf{y}_{1:T}$ and will always result in the FNR going to zero as $T\to \infty$ in the single change-point setting. To evaluate the effect of $\delta$ on the output of MICH, we repeat the simulation study in Section \ref{sec:simulations} but now vary $\delta \in \{0.1, 0.5, 1.1\}$.

We should expect false negatives, and thus the FNSLE, to decrease as $\delta$ increases, thereby making the detection rule less restrictive. Tables \ref{tab:ora_delta_sa} and \ref{tab:auto_delta_sa} show that this is the case when $\delta$ increases from 0.1 to 0.5, but also that the FNSLE is essentially unchanged when $\delta$ increases from 0.5 to 1.1. We also see that increasing $\delta$ does not increase the number of false positives in any meaningful way. The FPSLE actually decreases when $\delta$ increases from $0.1$ to $0.5$ and only slightly increases when $\delta$ increases from $0.5$ to $1.1$ for MICH-Auto. In each case cases, the coverage conditional on detection is virtually unaffected by the choice of detection rule and the credible sets returned by MICH only increase slightly in size as $\delta$ increases. We chose the moderately robust $\delta = 0.5$ when fitting MICH in Simulation \ref{sim:main} seeing as the values of $\mathbf{y}_{1:T}$ are independent, but the results in Tables \ref{tab:ora_delta_sa} and \ref{tab:auto_delta_sa} show that the performance of MICH is virtually unchanged under the robust choice $\delta=1.1$. In fact, when $J=J^*$, Table \ref{tab:ora_delta_sa} shows that the oracle version of MICH tends performs the best with $\delta=1.1$. This suggest that a MICH user that is uncertain about the dependence structure of $\mathbf{y}_{1:T}$ can simply set the default $\delta = 1.1$ without necessarily having to worry about a tradeoff with performance. 

\subsubsection{Sensitivity Analysis: Convergence Criterion}
\label{app:tol_sensitivity}

Algorithm \ref{alg:mich} stops once the percentage increase in the ELBO falls below $100\epsilon\%$ for some $\epsilon > 0$ (see Appendix \ref{app:convergence}). To evaluate the effect of $\epsilon$ on the output of MICH, we repeat Simulation \ref{sim:main} but now vary $\epsilon \in \{10^{-3}, 10^{-5}, 10^{-7}, 10^{-10}\}$ when fitting MICH. Tables \ref{tab:ora_tol_sa} and \ref{tab:auto_tol_sa} show the respective results of this study for the oracle version of MICH with $J=J^*$ and the automatic version of MICH where $J$ is selected using the value of the ELBO. As one might expect, the performance of MICH generally improves as $\epsilon$ decreases, but the number of iterations and thus time until convergence increases. This is particularly true for larger values of $J^*$. However, Tables \ref{tab:ora_tol_sa} and \ref{tab:auto_tol_sa} also show that the gains from decreasing $\epsilon$ are diminishing, as the relative improvement of each metric is greater going from $\epsilon = 10^{-3}$ to $\epsilon = 10^{-5}$ than from $\epsilon = 10^{-7}$ to $\epsilon = 10^{-10}$, even though the increase in time to convergence is of the same order in either case. 
\newpage
\section{MICH for Poisson Data}
\label{app:poisson}

Consider the following model for a change-point in the rate of a Poisson sequence:
\begin{align}
    y_t\:|\: \lambda_t  &\overset{\text{ind.}}{\sim} \text{Poisson}(\lambda_t), \label{eq:pois-y}\\
    \lambda_t &:= \omega_t s^{\mathbbm{1}\{t\geq\tau\}}, \\
    \tau &\overset{\text{ind.}}{\sim} \text{Categorical}(\boldsymbol{\pi}_{1:T}),   \\
    s &\overset{\text{ind.}}{\sim} \text{Gamma}(u_0,v_0). \label{eq:pois-s}
    \vspace{-5pt}
\end{align}
As with the var-scp and meanvar-scp models, $\boldsymbol{\omega}_{1:T}$ represents the known component of $\boldsymbol{\lambda}_{1:T}$. Under the model (\ref{eq:pois-y})-(\ref{eq:pois-s}), we have: 
\begin{align}
    s \:|\:\mathbf{y}_{1:T},\tau = t &\sim \text{Gamma}(\overline{u}_{t}, \overline{v}_t), \\
    \tau \:|\:\mathbf{y}_{1:T} &\sim \text{Categorical}(\overline{\boldsymbol{\pi}}_{1:T}),  \\
    \overline{u}_{t} &= u_0 + \sum_{t=\tau}^T y_t,  \label{eq:pois-u-bar} \\
    \overline{v}_{t} &= u_0 + \sum_{t=\tau}^T \omega_t, \\
    \overline{\pi}_t &\propto \frac{\pi_t\Gamma(\overline{u}_t)}{\overline{v}_t^{\overline{u}_t}\exp[\sum_{t'=1}^{t-1}\omega_{t'}]}. \label{eq:pois-pi-bar}
\end{align}
We can easily generalize to multiple change-points by once again adopting the MICH framework:
\begin{align}
    y_t\:|\: \lambda_t  &\overset{\text{ind.}}{\sim} \text{Poisson}(\lambda_t), \label{eq:multi-pois-y} \\
    \lambda_t &:= \prod_{\ell=1}^{L} \lambda_{\ell t} := \prod_{\ell=1}^{L} s_{\ell}^{\mathbbm{1}\{t\geq\tau_{\ell}\}} \\
    \tau_\ell &\overset{\text{ind.}}{\sim} \text{Categorical}(\boldsymbol{\pi}_{1:T}),   \\
    s_\ell &\overset{\text{ind.}}{\sim} \text{Gamma}(u_0,v_0). \label{eq:multi-pois-s} 
     \vspace{-5pt}
\end{align}
Defining a function a function $\texttt{poiss-scp}(\mathbf{y}_{1:T};\boldsymbol{\omega}_{1:T}, u_0,v_0, \boldsymbol{\pi}_{1:T})$ that returns the posterior parameters (\ref{eq:pois-u-bar})-(\ref{eq:pois-pi-bar}), we can once again design a simple backfitting algorithm to efficiently approximate the posterior of model (\ref{eq:multi-pois-y})-(\ref{eq:multi-pois-s}). Algorithm \ref{alg:mich-poisson} shows in detail how we modify our approach for count data. 

\begin{algorithm}
\label{alg:mich-poisson}
\caption{Variational Bayes Approximation to MICH Posterior for Poisson Data.}

\footnotesize
\SetAlgoLined
  Inputs: $\mathbf{y}_{1:T},\:L,\:u_0,v_0,\boldsymbol{\pi}_{1:T}\:;$ \\
  Initialize: $\lambda_0,\:\{\overline{\lambda}_{\ell,1:T},\overline{\boldsymbol{\theta}}_\ell\}_{\ell=1}^L\:$;
  
  \Repeat {Convergence} {
    \For{$\ell=1$ \KwTo $L$} {
      $\overline{\lambda}_{-\ell t} := \prod_{\ell' \neq \ell}^L \overline{\lambda}_{\ell' t}$ \tcp*{l\textsuperscript{th} rate partial residual}
      $\overline{\boldsymbol{\theta}}_\ell := \texttt{poiss-scp}(\overline{\mathbf{y}}_{1:T} \:;\: \overline{\boldsymbol{\lambda}}_{-\ell,1:T}, u_{0}, v_0, \boldsymbol{\boldsymbol{\pi}}_{1:T})$ \tcp*{update parameters} 
      $\overline{\lambda}_{\ell t} := \sum_{t'=1}^{t-1} \overline{\pi}_{\ell t'} + \sum_{t'=t}^T \overline{\pi}_{\ell t'}\overline{v}_{\ell t'}^{-1}\overline{u}_{\ell t'}$ \tcp*{update l\textsuperscript{th} rate}
    }
    $\lambda_0 := \left(\sum_{t=1}^T \prod_{\ell=1}^L \overline{\lambda}_{\ell t}\right)^{-1}\sum_{t=1}^T y_t$\tcp*{EB update $\lambda_0$}
  }
  \Return{Posterior Parameters: $\lambda_0,\:\{\overline{\boldsymbol{\theta}}_\ell\}_{\ell=1}^L$}.
\end{algorithm}

\newpage

\section{Plots and Figures}
\label{app:plots}

\begin{table}[hbt!]
    \scriptsize
    \centering
    \begin{tabular}{l || c || r r r r | r r | r}
        \multicolumn{1}{c||}{Setting} & $\omega_0,u_0,v_0$ & $|J^* - J|$ & Hausdorff & FPSLE & FNSLE & CI Len. & CCD & Time (s) \\ \hline\hline 
        $T =$ 100 & $10^{-5}$ & 0.000 & 0.511 & 0.089 & 0.092 & 1.355 & 0.973 & 0.011\\
        $J^* =$ 2 & $10^{-3}$ & 0.000 & 0.510 & 0.089 & 0.092 & 1.354 & 0.972 & 0.010\\
        Min. Space = 15 & $10^{-1}$ & 0.000 & 0.536 & 0.095 & 0.096 & 1.363 & 0.970 & 0.009\\ \hline
        $T =$ 100 & $10^{-5}$ & 0.010 & 0.888 & 0.101 & 0.126 & 1.096 & 0.988 & 0.091\\
        $J^* =$ 5 & $10^{-3}$ & 0.005 & 0.798 & 0.090 & 0.108 & 1.095 & 0.988 & 0.089\\
        Min. Space = 15 & $10^{-1}$ & 0.015 & 1.388 & 0.163 & 0.209 & 1.181 & 0.987 & 0.074\\ \hline
        $T =$ 500 & $10^{-5}$ & 0.005 & 2.854 & 0.624 & 0.832 & 3.313 & 0.942 & 0.067\\
        $J^* =$ 2 & $10^{-3}$ & 0.005 & 2.864 & 0.626 & 0.836 & 3.313 & 0.942 & 0.066\\
        Min. Space = 15 & $10^{-1}$ & 0.005 & 2.869 & 0.618 & 0.829 & 3.327 & 0.942 & 0.068\\ \hline
        $T =$ 500 & $10^{-5}$ & 0.001 & 2.926 & 0.578 & 0.614 & 3.466 & 0.941 & 0.063\\
        $J^* =$ 2 & $10^{-3}$ & 0.001 & 2.930 & 0.579 & 0.614 & 3.466 & 0.942 & 0.061\\
        Min. Space = 30 & $10^{-1}$ & 0.001 & 2.916 & 0.560 & 0.594 & 3.473 & 0.940 & 0.060\\ \hline
        $T =$ 500 & $10^{-5}$ & 0.044 & 5.091 & 0.663 & 1.086 & 2.196 & 0.950 & 0.552\\
        $J^* =$ 5 & $10^{-3}$ & 0.040 & 4.999 & 0.648 & 1.039 & 2.197 & 0.950 & 0.541\\
        Min. Space = 15 & $10^{-1}$ & 0.051 & 5.978 & 0.788 & 1.368 & 2.264 & 0.950 & 0.606\\ \hline
        $T =$ 500 & $10^{-5}$ & 0.021 & 4.998 & 0.616 & 0.804 & 2.314 & 0.945 & 0.506\\
        $J^* =$ 5 & $10^{-3}$ & 0.021 & 5.132 & 0.638 & 0.836 & 2.320 & 0.945 & 0.488\\
        Min. Space = 30 & $10^{-1}$ & 0.033 & 6.148 & 0.773 & 1.124 & 2.366 & 0.943 & 0.532\\ \hline
        $T =$ 500 & $10^{-5}$ & 0.202 & 10.360 & 0.822 & 1.278 & 1.710 & 0.963 & 2.443\\
        $J^* =$ 10 & $10^{-3}$ & 0.197 & 10.309 & 0.811 & 1.265 & 1.707 & 0.963 & 2.383\\
        Min. Space = 15 & $10^{-1}$ & 0.232 & 12.853 & 1.065 & 1.817 & 1.784 & 0.962 & 2.640\\ \hline
        $T =$ 500 & $10^{-5}$ & 0.082 & 8.499 & 0.627 & 0.852 & 1.686 & 0.956 & 2.091\\
        $J^* =$ 10 & $10^{-3}$ & 0.079 & 8.373 & 0.620 & 0.839 & 1.687 & 0.956 & 2.025\\
        Min. Space = 30 & $10^{-1}$ & 0.130 & 12.017 & 0.924 & 1.367 & 1.765 & 0.954 & 2.145\\ \hline
        $T =$ 1000 & $10^{-5}$ & 0.004 & 5.926 & 1.257 & 1.511 & 5.097 & 0.934 & 0.149\\
        $J^* =$ 2 & $10^{-3}$ & 0.004 & 5.933 & 1.258 & 1.512 & 5.097 & 0.934 & 0.146\\
        Min. Space = 30 & $10^{-1}$ & 0.004 & 5.830 & 1.221 & 1.467 & 5.112 & 0.934 & 0.148\\ \hline
        $T =$ 1000 & $10^{-5}$ & 0.003 & 5.753 & 1.221 & 1.394 & 5.385 & 0.925 & 0.152\\
        $J^* =$ 2 & $10^{-3}$ & 0.003 & 5.754 & 1.221 & 1.394 & 5.385 & 0.925 & 0.145\\
        Min. Space = 50 & $10^{-1}$ & 0.003 & 5.689 & 1.211 & 1.390 & 5.395 & 0.926 & 0.141\\ \hline
        $T =$ 1000 & $10^{-5}$ & 0.060 & 10.505 & 1.394 & 2.527 & 3.428 & 0.940 & 1.249\\
        $J^* =$ 5 & $10^{-3}$ & 0.059 & 10.500 & 1.417 & 2.516 & 3.432 & 0.940 & 1.184\\
        Min. Space = 30 & $10^{-1}$ & 0.078 & 12.812 & 1.709 & 3.261 & 3.482 & 0.940 & 1.377\\ \hline
        $T =$ 1000 & $10^{-5}$ & 0.041 & 10.292 & 1.341 & 2.011 & 3.606 & 0.934 & 1.160\\
        $J^* =$ 5 & $10^{-3}$ & 0.039 & 10.076 & 1.314 & 1.956 & 3.603 & 0.935 & 1.092\\
        Min. Space = 50 & $10^{-1}$ & 0.058 & 12.265 & 1.550 & 2.579 & 3.635 & 0.935 & 1.226\\ \hline
        $T =$ 1000 & $10^{-5}$ & 0.192 & 18.048 & 1.464 & 2.412 & 2.530 & 0.943 & 5.229\\
        $J^* =$ 10 & $10^{-3}$ & 0.194 & 18.071 & 1.457 & 2.410 & 2.525 & 0.943 & 4.796\\
        Min. Space = 30 & $10^{-1}$ & 0.250 & 22.049 & 1.792 & 3.156 & 2.591 & 0.943 & 5.757\\ \hline
        $T =$ 1000 & $10^{-5}$ & 0.114 & 16.699 & 1.297 & 1.851 & 2.646 & 0.937 & 4.818\\
        $J^* =$ 10 & $10^{-3}$ & 0.110 & 16.280 & 1.268 & 1.795 & 2.643 & 0.937 & 4.442\\
        Min. Space = 50 & $10^{-1}$ & 0.164 & 21.902 & 1.715 & 2.621 & 2.710 & 0.938 & 5.149\\ \hline\hline
    \end{tabular}
    \caption{\small \textbf{Prior Parameter Sensitivity Analysis: MICH-Ora}. Simulation \ref{sim:main} repeated with $\omega_0, u_0, v_0 \in \{10^{-1}, 10^{-3}, 10^{-5}\}$ for oracle version of MICH with $J=J^*$ and $\alpha = 0.1$, $\delta = 0.5$ and $\epsilon=10^{-7}$.}
    \label{tab:ora_prior_sa}
\end{table}

\begin{table}[hbt!]
    \scriptsize
    \centering
    \begin{tabular}{l || c || r r r r | r r | r}
        \multicolumn{1}{c||}{Setting} & $\omega_0,u_0,v_0$ & $|J^* - J|$ & Hausdorff & FPSLE & FNSLE & CI Len. & CCD & Time (s) \\ \hline\hline 
        $T =$ 100 & $10^{-5}$ & 0.057 & 1.233 & 0.357 & 0.328 & 1.477 & 0.973 & 0.253\\
        $J^* =$ 2 & $10^{-3}$ & 0.052 & 1.015 & 0.308 & 0.201 & 1.482 & 0.972 & 0.250\\
        Min. Space = 15 & $10^{-1}$ & 0.051 & 0.964 & 0.301 & 0.172 & 1.484 & 0.970 & 0.314\\ \hline
        $T =$ 100 & $10^{-5}$ & 0.265 & 3.469 & 0.529 & 0.905 & 1.113 & 0.988 & 0.434\\
        $J^* =$ 5 & $10^{-3}$ & 0.111 & 1.472 & 0.224 & 0.337 & 1.106 & 0.988 & 0.437\\
        Min. Space = 15 & $10^{-1}$ & 0.069 & 1.267 & 0.177 & 0.178 & 1.200 & 0.987 & 0.682\\ \hline
        $T =$ 500 & $10^{-5}$ & 0.058 & 5.332 & 1.723 & 2.908 & 3.426 & 0.942 & 1.605\\
        $J^* =$ 2 & $10^{-3}$ & 0.043 & 4.194 & 1.347 & 1.723 & 3.445 & 0.940 & 1.562\\
        Min. Space = 15 & $10^{-1}$ & 0.041 & 3.885 & 1.293 & 1.202 & 3.465 & 0.940 & 1.934\\ \hline
        $T =$ 500 & $10^{-5}$ & 0.039 & 4.960 & 1.480 & 1.749 & 3.585 & 0.940 & 1.622\\
        $J^* =$ 2 & $10^{-3}$ & 0.031 & 3.896 & 1.160 & 0.994 & 3.588 & 0.941 & 1.548\\
        Min. Space = 30 & $10^{-1}$ & 0.034 & 3.627 & 1.161 & 0.755 & 3.599 & 0.940 & 1.883\\ \hline
        $T =$ 500 & $10^{-5}$ & 0.236 & 10.565 & 1.484 & 3.527 & 2.242 & 0.949 & 2.829\\
        $J^* =$ 5 & $10^{-3}$ & 0.145 & 7.281 & 1.070 & 2.064 & 2.253 & 0.949 & 2.777\\
        Min. Space = 15 & $10^{-1}$ & 0.122 & 5.993 & 1.024 & 1.363 & 2.323 & 0.950 & 4.922\\ \hline
        $T =$ 500 & $10^{-5}$ & 0.161 & 9.829 & 1.380 & 2.523 & 2.368 & 0.943 & 2.678\\
        $J^* =$ 5 & $10^{-3}$ & 0.087 & 6.261 & 0.886 & 1.203 & 2.387 & 0.943 & 2.597\\
        Min. Space = 30 & $10^{-1}$ & 0.082 & 5.056 & 0.878 & 0.771 & 2.436 & 0.943 & 4.806\\ \hline
        $T =$ 500 & $10^{-5}$ & 0.979 & 21.372 & 1.853 & 4.699 & 1.670 & 0.962 & 5.839\\
        $J^* =$ 10 & $10^{-3}$ & 0.638 & 15.118 & 1.225 & 2.839 & 1.681 & 0.962 & 5.588\\
        Min. Space = 15 & $10^{-1}$ & 0.452 & 12.335 & 1.103 & 1.955 & 1.784 & 0.962 & 12.029\\ \hline
        $T =$ 500 & $10^{-5}$ & 0.557 & 17.452 & 1.486 & 3.007 & 1.671 & 0.955 & 5.390\\
        $J^* =$ 10 & $10^{-3}$ & 0.296 & 10.875 & 0.868 & 1.468 & 1.697 & 0.955 & 5.287\\
        Min. Space = 30 & $10^{-1}$ & 0.230 & 8.156 & 0.792 & 0.899 & 1.779 & 0.954 & 11.692\\ \hline
        $T =$ 1000 & $10^{-5}$ & 0.058 & 11.358 & 3.687 & 5.428 & 5.202 & 0.933 & 2.773\\
        $J^* =$ 2 & $10^{-3}$ & 0.042 & 8.421 & 2.826 & 3.106 & 5.229 & 0.933 & 2.668\\
        Min. Space = 30 & $10^{-1}$ & 0.038 & 7.049 & 2.275 & 1.950 & 5.208 & 0.933 & 3.113\\ \hline
        $T =$ 1000 & $10^{-5}$ & 0.049 & 11.034 & 3.520 & 4.718 & 5.516 & 0.923 & 2.812\\
        $J^* =$ 2 & $10^{-3}$ & 0.036 & 8.061 & 2.730 & 2.577 & 5.565 & 0.924 & 2.680\\
        Min. Space = 50 & $10^{-1}$ & 0.040 & 7.346 & 2.578 & 1.875 & 5.578 & 0.925 & 3.026\\ \hline
        $T =$ 1000 & $10^{-5}$ & 0.213 & 19.667 & 2.894 & 6.339 & 3.501 & 0.938 & 5.751\\
        $J^* =$ 5 & $10^{-3}$ & 0.133 & 12.730 & 2.045 & 3.331 & 3.529 & 0.939 & 5.172\\
        Min. Space = 30 & $10^{-1}$ & 0.111 & 10.119 & 2.031 & 2.132 & 3.579 & 0.939 & 8.205\\ \hline
        $T =$ 1000 & $10^{-5}$ & 0.151 & 16.987 & 2.523 & 4.347 & 3.654 & 0.933 & 5.496\\
        $J^* =$ 5 & $10^{-3}$ & 0.094 & 10.987 & 1.850 & 2.263 & 3.683 & 0.933 & 4.844\\
        Min. Space = 50 & $10^{-1}$ & 0.095 & 9.072 & 1.819 & 1.495 & 3.741 & 0.935 & 7.562\\ \hline
        $T =$ 1000 & $10^{-5}$ & 0.630 & 31.001 & 2.592 & 5.969 & 2.500 & 0.943 & 13.930\\
        $J^* =$ 10 & $10^{-3}$ & 0.411 & 21.607 & 1.831 & 3.539 & 2.539 & 0.942 & 11.420\\
        Min. Space = 30 & $10^{-1}$ & 0.319 & 15.432 & 1.672 & 2.183 & 2.631 & 0.943 & 21.954\\ \hline
        $T =$ 1000 & $10^{-5}$ & 0.416 & 26.435 & 2.222 & 4.142 & 2.631 & 0.936 & 13.491\\
        $J^* =$ 10 & $10^{-3}$ & 0.252 & 17.505 & 1.552 & 2.251 & 2.665 & 0.936 & 11.365\\
        Min. Space = 50 & $10^{-1}$ & 0.240 & 12.327 & 1.478 & 1.323 & 2.763 & 0.938 & 20.687\\ \hline\hline
    \end{tabular}
    \caption{\small \textbf{Prior Parameter Sensitivity Analysis: MICH-Auto}. Simulation \ref{sim:main} repeated with $\omega_0, u_0, v_0 \in \{10^{-1}, 10^{-3}, 10^{-5}\}$ for MICH with $J$ selected using the ELBO and $\alpha = 0.1$, $\delta = 0.5$ and $\epsilon=10^{-7}$.}
    \label{tab:auto_prior_sa}
\end{table}

\begin{table}[hbt!]
    \scriptsize
    \centering
    \begin{tabular}{l || c || r r r r | r r | r}
        \multicolumn{1}{c||}{Setting} & \multicolumn{1}{c||}{$\delta$} & $|J^* - J|$ & Hausdorff & FPSLE & FNSLE & CI Len. & CCD & Time (s) \\ \hline\hline 
        $T =$ 100 & 0.1 & 0.000 & 0.510 & 0.089 & 0.092 & 1.354 & 0.972 & 0.010\\
        $J^* =$ 2 & 0.5 & 0.000 & 0.510 & 0.089 & 0.092 & 1.354 & 0.972 & 0.010\\
        Min. Space = 15 & 1.1 & 0.000 & 0.512 & 0.090 & 0.093 & 1.359 & 0.972 & 0.010\\ \hline
        $T =$ 100 & 0.1 & 0.006 & 0.796 & 0.089 & 0.108 & 1.094 & 0.988 & 0.089\\
        $J^* =$ 5 & 0.5 & 0.005 & 0.798 & 0.090 & 0.108 & 1.095 & 0.988 & 0.089\\
        Min. Space = 15 & 1.1 & 0.005 & 0.800 & 0.090 & 0.108 & 1.099 & 0.988 & 0.089\\ \hline
        $T =$ 500 & 0.1 & 0.014 & 4.434 & 1.084 & 1.571 & 3.240 & 0.942 & 0.066\\
        $J^* =$ 2 & 0.5 & 0.005 & 2.864 & 0.626 & 0.836 & 3.313 & 0.942 & 0.066\\
        Min. Space = 15 & 1.1 & 0.002 & 2.873 & 0.608 & 0.746 & 3.367 & 0.942 & 0.066\\ \hline
        $T =$ 500 & 0.1 & 0.013 & 4.633 & 1.008 & 1.424 & 3.377 & 0.941 & 0.061\\
        $J^* =$ 2 & 0.5 & 0.001 & 2.930 & 0.579 & 0.614 & 3.466 & 0.942 & 0.061\\
        Min. Space = 30 & 1.1 & 0.000 & 2.991 & 0.615 & 0.607 & 3.484 & 0.942 & 0.061\\ \hline
        $T =$ 500 & 0.1 & 0.052 & 5.277 & 0.657 & 1.179 & 2.156 & 0.950 & 0.541\\
        $J^* =$ 5 & 0.5 & 0.040 & 4.999 & 0.648 & 1.039 & 2.197 & 0.950 & 0.541\\
        Min. Space = 15 & 1.1 & 0.030 & 5.393 & 0.755 & 1.048 & 2.295 & 0.950 & 0.541\\ \hline
        $T =$ 500 & 0.1 & 0.029 & 5.229 & 0.630 & 0.907 & 2.292 & 0.945 & 0.488\\
        $J^* =$ 5 & 0.5 & 0.021 & 5.132 & 0.638 & 0.836 & 2.320 & 0.945 & 0.488\\
        Min. Space = 30 & 1.1 & 0.017 & 5.334 & 0.680 & 0.847 & 2.360 & 0.945 & 0.488\\ \hline
        $T =$ 500 & 0.1 & 0.226 & 9.988 & 0.749 & 1.300 & 1.648 & 0.963 & 2.386\\
        $J^* =$ 10 & 0.5 & 0.197 & 10.309 & 0.811 & 1.265 & 1.707 & 0.963 & 2.383\\
        Min. Space = 15 & 1.1 & 0.174 & 10.823 & 0.897 & 1.269 & 1.814 & 0.963 & 2.385\\ \hline
        $T =$ 500 & 0.1 & 0.094 & 8.230 & 0.594 & 0.841 & 1.661 & 0.956 & 2.024\\
        $J^* =$ 10 & 0.5 & 0.079 & 8.373 & 0.620 & 0.839 & 1.687 & 0.956 & 2.025\\
        Min. Space = 30 & 1.1 & 0.071 & 8.524 & 0.641 & 0.847 & 1.726 & 0.956 & 2.024\\ \hline
        $T =$ 1000 & 0.1 & 0.061 & 22.091 & 5.559 & 9.074 & 4.620 & 0.934 & 0.146\\
        $J^* =$ 2 & 0.5 & 0.004 & 5.933 & 1.258 & 1.512 & 5.097 & 0.934 & 0.146\\
        Min. Space = 30 & 1.1 & 0.001 & 4.967 & 1.031 & 1.112 & 5.169 & 0.934 & 0.146\\ \hline
        $T =$ 1000 & 0.1 & 0.069 & 24.465 & 6.145 & 9.911 & 4.830 & 0.924 & 0.145\\
        $J^* =$ 2 & 0.5 & 0.003 & 5.754 & 1.221 & 1.394 & 5.385 & 0.925 & 0.145\\
        Min. Space = 50 & 1.1 & 0.001 & 5.248 & 1.039 & 1.100 & 5.442 & 0.925 & 0.145\\ \hline
        $T =$ 1000 & 0.1 & 0.091 & 14.629 & 1.806 & 3.643 & 3.316 & 0.940 & 1.183\\
        $J^* =$ 5 & 0.5 & 0.059 & 10.500 & 1.417 & 2.516 & 3.432 & 0.940 & 1.184\\
        Min. Space = 30 & 1.1 & 0.046 & 10.730 & 1.554 & 2.384 & 3.594 & 0.940 & 1.183\\ \hline
        $T =$ 1000 & 0.1 & 0.066 & 13.631 & 1.668 & 2.861 & 3.508 & 0.935 & 1.093\\
        $J^* =$ 5 & 0.5 & 0.039 & 10.076 & 1.314 & 1.956 & 3.603 & 0.935 & 1.092\\
        Min. Space = 50 & 1.1 & 0.029 & 10.675 & 1.497 & 1.990 & 3.731 & 0.935 & 1.092\\ \hline
        $T =$ 1000 & 0.1 & 0.235 & 18.436 & 1.392 & 2.594 & 2.438 & 0.943 & 4.794\\
        $J^* =$ 10 & 0.5 & 0.194 & 18.071 & 1.457 & 2.410 & 2.525 & 0.943 & 4.796\\
        Min. Space = 30 & 1.1 & 0.158 & 19.411 & 1.674 & 2.442 & 2.738 & 0.943 & 4.793\\ \hline
        $T =$ 1000 & 0.1 & 0.143 & 16.207 & 1.207 & 1.897 & 2.575 & 0.937 & 4.440\\
        $J^* =$ 10 & 0.5 & 0.110 & 16.280 & 1.268 & 1.795 & 2.643 & 0.937 & 4.442\\
        Min. Space = 50 & 1.1 & 0.089 & 16.903 & 1.367 & 1.813 & 2.772 & 0.937 & 4.439\\ \hline\hline
    \end{tabular}
    \caption{\small \textbf{Detection Rule Sensitivity Analysis: MICH-Ora}. Simulation \ref{sim:main} repeated with $\delta \in \{0.1, 0.5, 1.1\}$ for oracle version of MICH with $J=J^*$ and $\alpha = 0.1$, $\epsilon = 10^{-7}$ and $\omega_0=u_0=v_0=10^{-3}$.}
    \label{tab:ora_delta_sa}
\end{table}

\begin{table}[hbt!]
    \scriptsize
    \centering
    \begin{tabular}{l || c || r r r r | r r | r}
        \multicolumn{1}{c||}{Setting} & \multicolumn{1}{c||}{$\delta$} & $|J^* - J|$ & Hausdorff & FPSLE & FNSLE & CI Len. & CCD & Time (s) \\ \hline\hline 
        $T =$ 100 & 0.1 & 0.044 & 0.917 & 0.270 & 0.183 & 1.425 & 0.972 & 0.248\\
        $J^* =$ 2 & 0.5 & 0.052 & 1.015 & 0.308 & 0.201 & 1.482 & 0.972 & 0.250\\
        Min. Space = 15 & 1.1 & 0.065 & 1.145 & 0.371 & 0.224 & 1.680 & 0.972 & 0.249\\ \hline
        $T =$ 100 & 0.1 & 0.110 & 1.465 & 0.222 & 0.336 & 1.102 & 0.988 & 0.436\\
        $J^* =$ 5 & 0.5 & 0.111 & 1.472 & 0.224 & 0.337 & 1.106 & 0.988 & 0.437\\
        Min. Space = 15 & 1.1 & 0.114 & 1.487 & 0.228 & 0.338 & 1.125 & 0.988 & 0.436\\ \hline
        $T =$ 500 & 0.1 & 0.048 & 5.847 & 1.726 & 2.569 & 3.292 & 0.940 & 1.556\\
        $J^* =$ 2 & 0.5 & 0.043 & 4.194 & 1.347 & 1.723 & 3.445 & 0.940 & 1.562\\
        Min. Space = 15 & 1.1 & 0.058 & 5.101 & 1.782 & 1.878 & 3.843 & 0.940 & 1.556\\ \hline
        $T =$ 500 & 0.1 & 0.036 & 5.543 & 1.477 & 1.811 & 3.426 & 0.941 & 1.547\\
        $J^* =$ 2 & 0.5 & 0.031 & 3.896 & 1.160 & 0.994 & 3.588 & 0.941 & 1.548\\
        Min. Space = 30 & 1.1 & 0.050 & 4.908 & 1.652 & 1.161 & 4.089 & 0.941 & 1.548\\ \hline
        $T =$ 500 & 0.1 & 0.143 & 7.392 & 1.023 & 2.166 & 2.182 & 0.949 & 2.773\\
        $J^* =$ 5 & 0.5 & 0.145 & 7.281 & 1.070 & 2.064 & 2.253 & 0.949 & 2.777\\
        Min. Space = 15 & 1.1 & 0.160 & 8.019 & 1.254 & 2.097 & 2.463 & 0.949 & 2.773\\ \hline
        $T =$ 500 & 0.1 & 0.081 & 6.122 & 0.821 & 1.256 & 2.324 & 0.943 & 2.596\\
        $J^* =$ 5 & 0.5 & 0.087 & 6.261 & 0.886 & 1.203 & 2.387 & 0.943 & 2.597\\
        Min. Space = 30 & 1.1 & 0.099 & 6.782 & 1.007 & 1.236 & 2.529 & 0.943 & 2.597\\ \hline
        $T =$ 500 & 0.1 & 0.637 & 14.946 & 1.183 & 2.856 & 1.646 & 0.962 & 5.595\\
        $J^* =$ 10 & 0.5 & 0.638 & 15.118 & 1.225 & 2.839 & 1.681 & 0.962 & 5.588\\
        Min. Space = 15 & 1.1 & 0.639 & 15.385 & 1.263 & 2.845 & 1.736 & 0.962 & 5.593\\ \hline
        $T =$ 500 & 0.1 & 0.294 & 10.788 & 0.851 & 1.477 & 1.669 & 0.955 & 5.283\\
        $J^* =$ 10 & 0.5 & 0.296 & 10.875 & 0.868 & 1.468 & 1.697 & 0.955 & 5.287\\
        Min. Space = 30 & 1.1 & 0.304 & 10.993 & 0.888 & 1.472 & 1.750 & 0.955 & 5.286\\ \hline
        $T =$ 1000 & 0.1 & 0.093 & 24.914 & 7.028 & 11.020 & 4.653 & 0.933 & 2.667\\
        $J^* =$ 2 & 0.5 & 0.042 & 8.421 & 2.826 & 3.106 & 5.229 & 0.933 & 2.668\\
        Min. Space = 30 & 1.1 & 0.064 & 9.591 & 3.768 & 2.968 & 6.194 & 0.933 & 2.667\\ \hline
        $T =$ 1000 & 0.1 & 0.090 & 26.596 & 7.366 & 11.168 & 4.871 & 0.924 & 2.675\\
        $J^* =$ 2 & 0.5 & 0.036 & 8.061 & 2.730 & 2.577 & 5.565 & 0.924 & 2.680\\
        Min. Space = 50 & 1.1 & 0.057 & 9.762 & 3.731 & 2.658 & 6.372 & 0.924 & 2.676\\ \hline
        $T =$ 1000 & 0.1 & 0.145 & 16.646 & 2.307 & 4.384 & 3.352 & 0.939 & 5.166\\
        $J^* =$ 5 & 0.5 & 0.133 & 12.730 & 2.045 & 3.331 & 3.529 & 0.939 & 5.172\\
        Min. Space = 30 & 1.1 & 0.155 & 14.595 & 2.576 & 3.460 & 3.912 & 0.939 & 5.166\\ \hline
        $T =$ 1000 & 0.1 & 0.103 & 14.807 & 2.114 & 3.276 & 3.527 & 0.933 & 4.845\\
        $J^* =$ 5 & 0.5 & 0.094 & 10.987 & 1.850 & 2.263 & 3.683 & 0.933 & 4.844\\
        Min. Space = 50 & 1.1 & 0.119 & 12.905 & 2.357 & 2.412 & 4.050 & 0.933 & 4.845\\ \hline
        $T =$ 1000 & 0.1 & 0.416 & 21.802 & 1.758 & 3.689 & 2.454 & 0.942 & 11.422\\
        $J^* =$ 10 & 0.5 & 0.411 & 21.607 & 1.831 & 3.539 & 2.539 & 0.942 & 11.420\\
        Min. Space = 30 & 1.1 & 0.424 & 22.692 & 1.999 & 3.582 & 2.736 & 0.942 & 11.421\\ \hline
        $T =$ 1000 & 0.1 & 0.244 & 17.352 & 1.466 & 2.315 & 2.592 & 0.936 & 11.369\\
        $J^* =$ 10 & 0.5 & 0.252 & 17.505 & 1.552 & 2.251 & 2.665 & 0.936 & 11.365\\
        Min. Space = 50 & 1.1 & 0.276 & 18.346 & 1.689 & 2.276 & 2.857 & 0.936 & 11.366\\ \hline\hline
    \end{tabular}
    \caption{\small \textbf{Detection Rule Sensitivity Analysis: MICH-Auto}. Simulation \ref{sim:main} repeated with $\delta \in \{0.1, 0.5, 1.1\}$ for MICH with $J$ selected using the ELBO and $\alpha = 0.1$, $\epsilon = 10^{-7}$ and $\omega_0=u_0=v_0=10^{-3}$.}
    \label{tab:auto_delta_sa}
\end{table}

\begin{table}[hbt!]
    \scriptsize
    \centering
    \begin{tabular}{l || l || r r r r | r r | r}
        \multicolumn{1}{c||}{Setting} & \multicolumn{1}{c||}{$\epsilon$} & $|J^* - J|$ & Hausdorff & FPSLE & FNSLE & CI Len. & CCD & Time (s) \\ \hline\hline 
        $T =$ 100 & $10^{-10}$ & 0.000 & 0.510 & 0.089 & 0.092 & 1.354 & 0.973 & 0.021\\
        $J^* =$ 2 & $10^{-7}$ & 0.000 & 0.510 & 0.089 & 0.092 & 1.354 & 0.972 & 0.010\\
        Min. Space = 15 & $10^{-5}$ & 0.002 & 0.701 & 0.133 & 0.146 & 1.491 & 0.969 & 0.007\\
         & $10^{-3}$ & 0.001 & 0.595 & 0.111 & 0.124 & 1.370 & 0.972 & 0.004\\ \hline
        $T =$ 100 & $10^{-10}$ & 0.005 & 0.875 & 0.102 & 0.124 & 1.095 & 0.988 & 0.298\\
        $J^* =$ 5 & $10^{-7}$ & 0.005 & 0.798 & 0.090 & 0.108 & 1.095 & 0.988 & 0.089\\
        Min. Space = 15 & $10^{-5}$ & 0.058 & 2.129 & 0.231 & 0.322 & 1.167 & 0.987 & 0.029\\
         & $10^{-3}$ & 0.031 & 1.317 & 0.144 & 0.195 & 1.120 & 0.988 & 0.010\\ \hline
        $T =$ 500 & $10^{-10}$ & 0.005 & 2.872 & 0.623 & 0.796 & 3.314 & 0.942 & 0.209\\
        $J^* =$ 2 & $10^{-7}$ & 0.005 & 2.864 & 0.626 & 0.836 & 3.313 & 0.942 & 0.066\\
        Min. Space = 15 & $10^{-5}$ & 0.032 & 5.308 & 1.505 & 2.680 & 3.262 & 0.941 & 0.024\\
         & $10^{-3}$ & 0.022 & 3.860 & 1.020 & 2.018 & 3.468 & 0.941 & 0.008\\ \hline
        $T =$ 500 & $10^{-10}$ & 0.001 & 2.904 & 0.573 & 0.600 & 3.463 & 0.942 & 0.191\\
        $J^* =$ 2 & $10^{-7}$ & 0.001 & 2.930 & 0.579 & 0.614 & 3.466 & 0.942 & 0.061\\
        Min. Space = 30 & $10^{-5}$ & 0.020 & 4.719 & 1.094 & 1.856 & 3.399 & 0.936 & 0.021\\
         & $10^{-3}$ & 0.010 & 3.658 & 0.807 & 1.321 & 3.564 & 0.941 & 0.008\\ \hline
        $T =$ 500 & $10^{-10}$ & 0.028 & 4.716 & 0.620 & 0.925 & 2.193 & 0.950 & 2.786\\
        $J^* =$ 5 & $10^{-7}$ & 0.040 & 4.999 & 0.648 & 1.039 & 2.197 & 0.950 & 0.541\\
        Min. Space = 15 & $10^{-5}$ & 0.129 & 8.945 & 1.117 & 2.377 & 2.171 & 0.954 & 0.136\\
         & $10^{-3}$ & 0.159 & 9.245 & 1.299 & 2.684 & 2.526 & 0.952 & 0.029\\ \hline
        $T =$ 500 & $10^{-10}$ & 0.014 & 4.749 & 0.598 & 0.729 & 2.312 & 0.945 & 2.345\\
        $J^* =$ 5 & $10^{-7}$ & 0.021 & 5.132 & 0.638 & 0.836 & 2.320 & 0.945 & 0.488\\
        Min. Space = 30 & $10^{-5}$ & 0.094 & 8.843 & 1.056 & 1.842 & 2.273 & 0.947 & 0.118\\
         & $10^{-3}$ & 0.096 & 8.501 & 1.076 & 1.853 & 2.505 & 0.945 & 0.029\\ \hline
        $T =$ 500 & $10^{-10}$ & 0.151 & 9.131 & 0.730 & 1.092 & 1.673 & 0.963 & 12.738\\
        $J^* =$ 10 & $10^{-7}$ & 0.197 & 10.309 & 0.811 & 1.265 & 1.707 & 0.963 & 2.383\\
        Min. Space = 15 & $10^{-5}$ & 0.334 & 12.067 & 0.870 & 1.683 & 1.642 & 0.963 & 0.539\\
         & $10^{-3}$ & 0.631 & 18.670 & 1.459 & 3.048 & 2.026 & 0.963 & 0.096\\ \hline
        $T =$ 500 & $10^{-10}$ & 0.053 & 7.419 & 0.557 & 0.724 & 1.671 & 0.956 & 11.077\\
        $J^* =$ 10 & $10^{-7}$ & 0.079 & 8.373 & 0.620 & 0.839 & 1.687 & 0.956 & 2.025\\
        Min. Space = 30 & $10^{-5}$ & 0.188 & 11.515 & 0.779 & 1.221 & 1.665 & 0.958 & 0.479\\
         & $10^{-3}$ & 0.412 & 19.159 & 1.336 & 2.316 & 1.910 & 0.956 & 0.096\\ \hline
        $T =$ 1000 & $10^{-10}$ & 0.004 & 6.030 & 1.265 & 1.501 & 5.092 & 0.934 & 0.606\\
        $J^* =$ 2 & $10^{-7}$ & 0.004 & 5.933 & 1.258 & 1.512 & 5.097 & 0.934 & 0.146\\
        Min. Space = 30 & $10^{-5}$ & 0.069 & 18.706 & 5.278 & 9.711 & 4.688 & 0.933 & 0.044\\
         & $10^{-3}$ & 0.030 & 8.792 & 2.352 & 4.739 & 5.389 & 0.936 & 0.013\\ \hline
        $T =$ 1000 & $10^{-10}$ & 0.004 & 5.867 & 1.251 & 1.467 & 5.386 & 0.925 & 0.635\\
        $J^* =$ 2 & $10^{-7}$ & 0.003 & 5.754 & 1.221 & 1.394 & 5.385 & 0.925 & 0.145\\
        Min. Space = 50 & $10^{-5}$ & 0.058 & 17.304 & 4.788 & 8.373 & 4.794 & 0.936 & 0.041\\
         & $10^{-3}$ & 0.017 & 8.137 & 2.142 & 3.332 & 5.652 & 0.926 & 0.013\\ \hline
        $T =$ 1000 & $10^{-10}$ & 0.034 & 8.706 & 1.140 & 1.820 & 3.413 & 0.941 & 6.506\\
        $J^* =$ 5 & $10^{-7}$ & 0.059 & 10.500 & 1.417 & 2.516 & 3.432 & 0.940 & 1.184\\
        Min. Space = 30 & $10^{-5}$ & 0.166 & 19.672 & 2.453 & 5.542 & 3.207 & 0.944 & 0.242\\
         & $10^{-3}$ & 0.216 & 21.359 & 2.999 & 6.910 & 3.940 & 0.941 & 0.050\\ \hline
        $T =$ 1000 & $10^{-10}$ & 0.023 & 9.010 & 1.185 & 1.627 & 3.598 & 0.935 & 5.499\\
        $J^* =$ 5 & $10^{-7}$ & 0.039 & 10.076 & 1.314 & 1.956 & 3.603 & 0.935 & 1.092\\
        Min. Space = 50 & $10^{-5}$ & 0.134 & 20.090 & 2.469 & 4.803 & 3.381 & 0.940 & 0.224\\
         & $10^{-3}$ & 0.154 & 20.168 & 2.690 & 5.233 & 3.970 & 0.936 & 0.050\\ \hline
        $T =$ 1000 & $10^{-10}$ & 0.127 & 14.369 & 1.140 & 1.797 & 2.467 & 0.943 & 27.027\\
        $J^* =$ 10 & $10^{-7}$ & 0.194 & 18.071 & 1.457 & 2.410 & 2.525 & 0.943 & 4.796\\
        Min. Space = 30 & $10^{-5}$ & 0.343 & 23.980 & 1.736 & 3.540 & 2.399 & 0.946 & 0.965\\
         & $10^{-3}$ & 0.661 & 38.122 & 2.992 & 6.499 & 3.001 & 0.944 & 0.170\\ \hline
        $T =$ 1000 & $10^{-10}$ & 0.057 & 12.157 & 0.974 & 1.269 & 2.601 & 0.937 & 23.742\\
        $J^* =$ 10 & $10^{-7}$ & 0.110 & 16.280 & 1.268 & 1.795 & 2.643 & 0.937 & 4.442\\
        Min. Space = 50 & $10^{-5}$ & 0.245 & 23.454 & 1.669 & 2.851 & 2.532 & 0.942 & 0.873\\
         & $10^{-3}$ & 0.470 & 37.722 & 2.766 & 5.118 & 3.010 & 0.939 & 0.172\\ \hline\hline
    \end{tabular}
    \caption{\small \textbf{Convergence Criterion Sensitivity Analysis: MICH-Ora}. Simulation \ref{sim:main} repeated with $\epsilon \in \{10^{-3}, 10^{-5}, 10^{-7}, 10^{-10}\}$ for oracle version of MICH with $J=J^*$ and $\alpha = 0.1$, $\delta = 0.5$ and $\omega_0=u_0=v_0=10^{-3}$.}
    \label{tab:ora_tol_sa}
\end{table}

\begin{table}[hbt!]
    \scriptsize
    \centering
    \begin{tabular}{l || l || r r r r | r r | r}
        \multicolumn{1}{c||}{Setting} & \multicolumn{1}{c||}{$\epsilon$} & $|J^* - J|$ & Hausdorff & FPSLE & FNSLE & CI Len. & CCD & Time (s) \\ \hline\hline 
        $T =$ 100 & $10^{-10}$ & 0.053 & 1.035 & 0.315 & 0.205 & 1.489 & 0.972 & 0.973\\
        $J^* =$ 2 & $10^{-7}$ & 0.052 & 1.015 & 0.308 & 0.201 & 1.482 & 0.972 & 0.250\\
        Min. Space = 15 & $10^{-5}$ & 0.045 & 1.183 & 0.349 & 0.335 & 1.540 & 0.968 & 1.529\\
         & $10^{-3}$ & 0.029 & 0.782 & 0.222 & 0.163 & 1.434 & 0.972 & 0.012\\ \hline
        $T =$ 100 & $10^{-10}$ & 0.109 & 1.441 & 0.222 & 0.327 & 1.107 & 0.988 & 2.032\\
        $J^* =$ 5 & $10^{-7}$ & 0.111 & 1.472 & 0.224 & 0.337 & 1.106 & 0.988 & 0.437\\
        Min. Space = 15 & $10^{-5}$ & 0.221 & 3.328 & 0.437 & 0.693 & 1.135 & 0.987 & 0.085\\
         & $10^{-3}$ & 0.191 & 2.478 & 0.381 & 0.648 & 1.117 & 0.988 & 0.021\\ \hline
        $T =$ 500 & $10^{-10}$ & 0.041 & 3.810 & 1.245 & 1.408 & 3.437 & 0.941 & 11.412\\
        $J^* =$ 2 & $10^{-7}$ & 0.043 & 4.194 & 1.347 & 1.723 & 3.445 & 0.940 & 1.562\\
        Min. Space = 15 & $10^{-5}$ & 0.083 & 7.500 & 2.569 & 4.841 & 3.294 & 0.939 & 0.185\\
         & $10^{-3}$ & 0.061 & 4.725 & 1.597 & 3.258 & 3.502 & 0.941 & 0.039\\ \hline
        $T =$ 500 & $10^{-10}$ & 0.030 & 3.699 & 1.078 & 0.849 & 3.591 & 0.941 & 12.830\\
        $J^* =$ 2 & $10^{-7}$ & 0.031 & 3.896 & 1.160 & 0.994 & 3.588 & 0.941 & 1.548\\
        Min. Space = 30 & $10^{-5}$ & 0.054 & 5.929 & 1.815 & 2.882 & 3.445 & 0.935 & 0.173\\
         & $10^{-3}$ & 0.035 & 4.163 & 1.326 & 1.743 & 3.624 & 0.940 & 0.038\\ \hline
        $T =$ 500 & $10^{-10}$ & 0.130 & 6.762 & 1.004 & 1.801 & 2.250 & 0.949 & 24.995\\
        $J^* =$ 5 & $10^{-7}$ & 0.145 & 7.281 & 1.070 & 2.064 & 2.253 & 0.949 & 2.777\\
        Min. Space = 15 & $10^{-5}$ & 0.267 & 11.900 & 1.592 & 3.888 & 2.157 & 0.952 & 0.375\\
         & $10^{-3}$ & 0.301 & 11.240 & 1.648 & 4.244 & 2.392 & 0.951 & 0.082\\ \hline
        $T =$ 500 & $10^{-10}$ & 0.071 & 5.508 & 0.786 & 0.972 & 2.380 & 0.944 & 22.970\\
        $J^* =$ 5 & $10^{-7}$ & 0.087 & 6.261 & 0.886 & 1.203 & 2.387 & 0.943 & 2.597\\
        Min. Space = 30 & $10^{-5}$ & 0.185 & 11.381 & 1.514 & 2.783 & 2.263 & 0.946 & 0.359\\
         & $10^{-3}$ & 0.160 & 9.086 & 1.294 & 2.325 & 2.460 & 0.944 & 0.082\\ \hline
        $T =$ 500 & $10^{-10}$ & 0.575 & 13.959 & 1.102 & 2.555 & 1.674 & 0.963 & 47.349\\
        $J^* =$ 10 & $10^{-7}$ & 0.638 & 15.118 & 1.225 & 2.839 & 1.681 & 0.962 & 5.588\\
        Min. Space = 15 & $10^{-5}$ & 0.758 & 17.512 & 1.405 & 3.452 & 1.612 & 0.962 & 0.937\\
         & $10^{-3}$ & 1.037 & 22.320 & 1.958 & 4.938 & 1.838 & 0.962 & 0.191\\ \hline
        $T =$ 500 & $10^{-10}$ & 0.249 & 9.762 & 0.760 & 1.256 & 1.686 & 0.955 & 42.987\\
        $J^* =$ 10 & $10^{-7}$ & 0.296 & 10.875 & 0.868 & 1.468 & 1.697 & 0.955 & 5.287\\
        Min. Space = 30 & $10^{-5}$ & 0.442 & 14.708 & 1.167 & 2.197 & 1.647 & 0.957 & 0.891\\
         & $10^{-3}$ & 0.610 & 19.439 & 1.615 & 3.144 & 1.799 & 0.955 & 0.196\\ \hline
        $T =$ 1000 & $10^{-10}$ & 0.030 & 7.518 & 2.265 & 2.294 & 5.199 & 0.934 & 32.231\\
        $J^* =$ 2 & $10^{-7}$ & 0.042 & 8.421 & 2.826 & 3.106 & 5.229 & 0.933 & 2.668\\
        Min. Space = 30 & $10^{-5}$ & 0.110 & 21.442 & 7.267 & 12.519 & 4.714 & 0.933 & 0.259\\
         & $10^{-3}$ & 0.056 & 9.254 & 3.376 & 5.157 & 5.340 & 0.931 & 0.071\\ \hline
        $T =$ 1000 & $10^{-10}$ & 0.027 & 6.985 & 2.252 & 1.786 & 5.527 & 0.925 & 33.539\\
        $J^* =$ 2 & $10^{-7}$ & 0.036 & 8.061 & 2.730 & 2.577 & 5.565 & 0.924 & 2.680\\
        Min. Space = 50 & $10^{-5}$ & 0.094 & 19.263 & 6.250 & 9.726 & 4.850 & 0.934 & 0.248\\
         & $10^{-3}$ & 0.042 & 8.142 & 3.163 & 3.373 & 5.626 & 0.926 & 0.071\\ \hline
        $T =$ 1000 & $10^{-10}$ & 0.102 & 11.307 & 1.703 & 2.776 & 3.495 & 0.940 & 61.906\\
        $J^* =$ 5 & $10^{-7}$ & 0.133 & 12.730 & 2.045 & 3.331 & 3.529 & 0.939 & 5.172\\
        Min. Space = 30 & $10^{-5}$ & 0.218 & 20.143 & 2.802 & 6.012 & 3.199 & 0.943 & 0.620\\
         & $10^{-3}$ & 0.273 & 20.226 & 3.283 & 7.396 & 3.743 & 0.940 & 0.154\\ \hline
        $T =$ 1000 & $10^{-10}$ & 0.071 & 10.023 & 1.567 & 1.919 & 3.665 & 0.934 & 62.226\\
        $J^* =$ 5 & $10^{-7}$ & 0.094 & 10.987 & 1.850 & 2.263 & 3.683 & 0.933 & 4.844\\
        Min. Space = 50 & $10^{-5}$ & 0.176 & 19.951 & 2.814 & 4.897 & 3.369 & 0.938 & 0.598\\
         & $10^{-3}$ & 0.186 & 17.059 & 2.889 & 4.778 & 3.859 & 0.934 & 0.152\\ \hline
        $T =$ 1000 & $10^{-10}$ & 0.366 & 20.076 & 1.654 & 3.220 & 2.513 & 0.942 & 121.808\\
        $J^* =$ 10 & $10^{-7}$ & 0.411 & 21.607 & 1.831 & 3.539 & 2.539 & 0.942 & 11.420\\
        Min. Space = 30 & $10^{-5}$ & 0.524 & 26.372 & 2.131 & 4.684 & 2.379 & 0.945 & 1.692\\
         & $10^{-3}$ & 0.787 & 36.019 & 3.124 & 7.224 & 2.786 & 0.943 & 0.386\\ \hline
        $T =$ 1000 & $10^{-10}$ & 0.197 & 15.685 & 1.300 & 1.913 & 2.636 & 0.937 & 119.104\\
        $J^* =$ 10 & $10^{-7}$ & 0.252 & 17.505 & 1.552 & 2.251 & 2.665 & 0.936 & 11.365\\
        Min. Space = 50 & $10^{-5}$ & 0.333 & 22.139 & 1.786 & 3.073 & 2.518 & 0.941 & 1.639\\
         & $10^{-3}$ & 0.507 & 30.850 & 2.661 & 4.889 & 2.856 & 0.937 & 0.387\\ \hline\hline
    \end{tabular}
    \caption{\small \textbf{Convergence Criterion Sensitivity Analysis: MICH-Auto}. Simulation \ref{sim:main} repeated with $\epsilon \in \{10^{-3}, 10^{-5}, 10^{-7}, 10^{-10}\}$ for MICH with $J$ selected using the ELBO and $\alpha = 0.1$, $\delta = 0.5$ and $\omega_0=u_0=v_0=10^{-3}$.}
    \label{tab:auto_tol_sa}
\end{table}

\begin{figure}[!h]
    \centering
    \begin{tikzpicture}[x=2cm,y=2cm]

  \node[obs]                   (y)      {$y_t$} ; 
  \factor[above=0.76cm of y] {y-f} {above:$\mathcal{N}$} {} {} ; %
  \node[latent, above=0.5cm of y, xshift = -2cm]    (mu)     {$\mu_t$} ; 
  \node[latent, above=0.5cm of y, xshift = 2cm]    (lambda) {$\lambda_t$} ; 

  \node[latent, left=0.5 of mu] (b_l) {$b_\ell$} ;
  \factor[left=of b_l, xshift=0.25cm] {b_l-f} {above:$\mathcal{N}$} {} {} ; %
  \node[const, left=of b_l, xshift=0.75cm] (tau_l) {$\omega_0$} ; %

  \node[latent, left=0.5 of mu, yshift=1cm] (gamma_l) {$\tau_\ell$} ;
  \factor[left=of gamma_l, xshift=0.25cm] {gamma_l-f} {above:Multi.} {} {} ; %
  \node[const, left=of gamma_l, xshift=0.75cm] (pi_l) {$\boldsymbol{\pi}_{\ell,1:T}$} ; %

  \node[latent, right=0.5 of lambda] (s_k) {$s_k$} ;
  \factor[right=of s_k, xshift=-0.25cm] {s_k-f} {above:$\mathcal{G}$} {} {} ; %
  \node[const, right=of s_k, xshift=-0.75cm, yshift = 0.5cm] (u_k) {$u_0$} ; %
  \node[const, right=of s_k, xshift=-0.75cm, yshift = -0.5cm] (v_k) {$v_0$} ; %

  \node[latent, right=0.5 of lambda, yshift=1cm] (gamma_k) {$\tau_k$} ;
  \factor[right=of gamma_k, xshift=-0.25cm] {gamma_k-f} {above:Multi.} {} {} ; %
  \node[const, right=of gamma_k, xshift=-0.75cm] (pi_k) {$\boldsymbol{\pi}_{k,1:T}$} ; %

  \node[latent, above=2cm of mu] (b_j) {$b_j$} ;
  \node[latent, above=2cm of lambda] (s_j) {$s_j$} ;
  \factor[above=0.2 of s_j, xshift=0cm] {s_j-f} {left:$\mathcal{G}$} {} {} ; %
  \node[const, above=0.5 of s_j, xshift=-0.25cm] (u_j) {$u_0$} ; %
  \node[const, above=0.5 of s_j, xshift=0.2cm] (v_j) {$v_0$} ; %
  
  \node[latent, above=1.2cm of y-f] (gamma_j) {$\tau_j$} ;
  \factor[above=0.61cm of gamma_j] {b_j-f} {below:$\mathcal{N}$} {} {} ; %
  \node[const, above=0.65cm of b_j-f] (tau_j) {$\omega_0$} ; %
  \factor[left=of gamma_j , xshift=0.5cm] {gamma_j-f} {above:Multi.} {} {} ; %
  \node[const, left=of gamma_j, xshift=1.3cm] (pi_j) {$\boldsymbol{\pi}_{j,1:T}$} ; %

  \edge {y-f} {y} ;
  \edge[-] {mu, lambda} {y-f} ; %

  \edge{gamma_l, b_l} {mu}
  \edge[-] {b_l-f} {b_l} ; %
  \edge[-] {gamma_l-f} {gamma_l} ; %
  \edge[-] {tau_l} {b_l-f} ; %
  \edge[-] {pi_l} {gamma_l-f} ; %

  \edge{gamma_k, s_k} {lambda}
  \edge[-] {s_k-f} {s_k} ; %
  \edge[-] {gamma_k-f} {gamma_k} ; %
  \edge[-] {u_k, v_k} {s_k-f} ; %
  \edge[-] {pi_k} {gamma_k-f} ; %

  \edge{gamma_j, b_j} {mu}
  \edge{b_j-f} {b_j} ; %
  \edge[-] {tau_j,s_j} {b_j-f} ; %
  \edge[-] {gamma_j-f} {gamma_j} ; %
  \edge[-] {pi_j} {gamma_j-f} ; %
  \edge{gamma_j, s_j} {lambda}
  \edge[-] {s_j-f} {s_j} ; %
  \edge[-] {u_j, v_j} {s_j-f} ; %
  
  \plate[thick, inner sep=.25cm]{y} {(y)(mu)(lambda)} {$1 \leq t \leq T$} ;
  \tikzset{plate caption/.append style={above right =0pt of #1.north west}}
  \plate[thick,color=violet]{mu lambda} {(mu)(lambda)(gamma_j)(gamma_j-f)(pi_j)(b_j)(b_j-f)(tau_j)(s_j)(s_j-f)(u_j)(v_j)}{\textcolor{violet}{$1 \leq j \leq J$}} ;
  \tikzset{plate caption/.append style={below left = 5pt of #1.south east}}
  \plate[thick,color=red]{lambda} {(lambda)(gamma_k)(gamma_k-f)(pi_k)(s_k)(s_k-f)(u_k)(v_k)}{\textcolor{red}{$J+L+1 \leq k \leq N$}} ;
  \tikzset{plate caption/.append style={below=10pt of #1.south west, xshift=0.75cm}}
  \plate[thick,color=blue]{mu} {(mu)(gamma_l)(gamma_l-f)(pi_l)(b_l)(b_l-f)(tau_l)}{\textcolor{blue}{$J+1 \leq \ell \leq J+ L$}} ;
\end{tikzpicture}
    \caption{\textbf{MICH Plate Diagram}. Graphical model depicting the directed acyclic graph specified by (\ref{eq:dgp}) and (\ref{eq:mu_t})-(\ref{eq:lambda_t}).}
    \label{fig:plate-diagram}
\end{figure}

\begin{figure}[!h]
    \centering
    \includegraphics[scale=0.2]{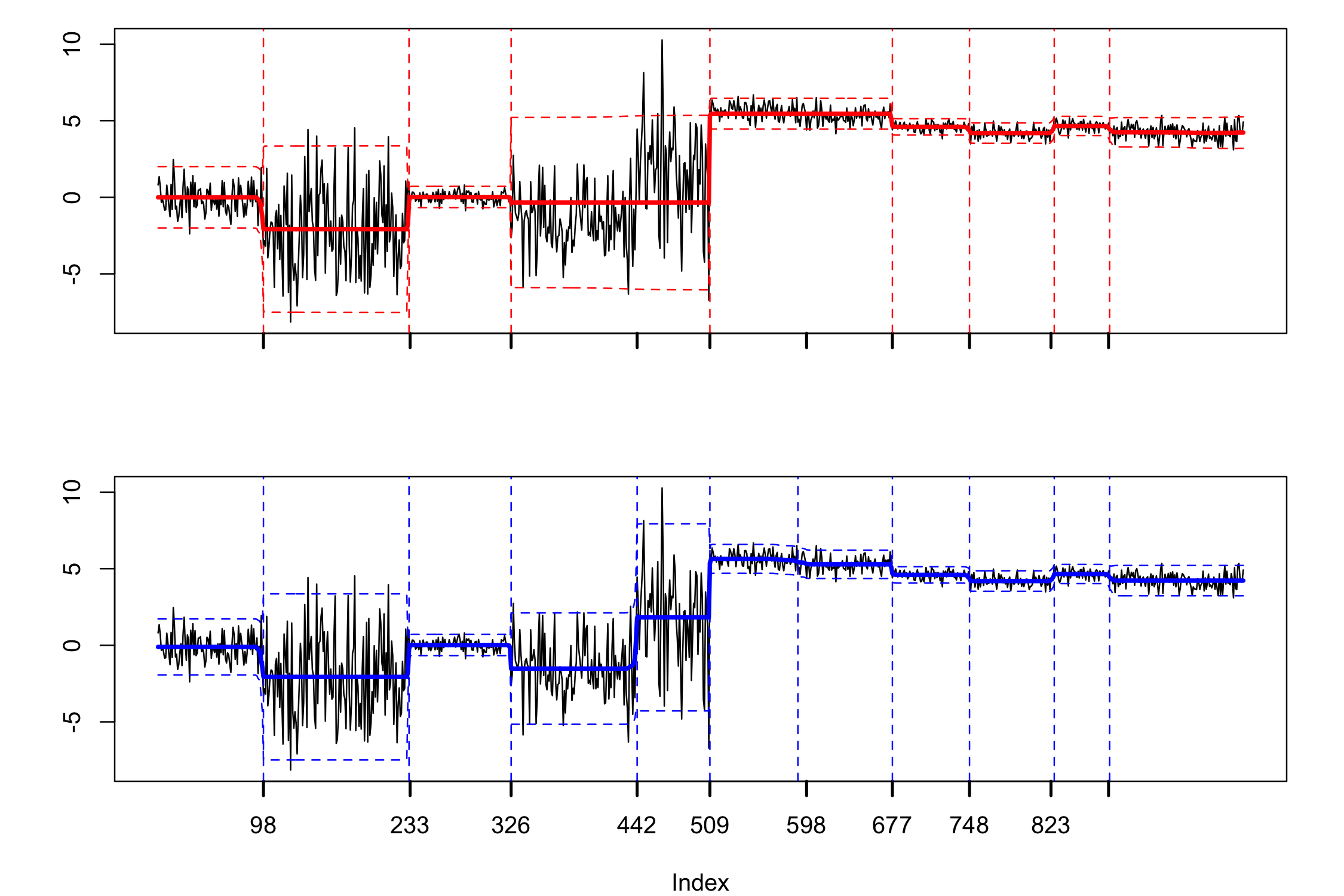}
    \caption{\textbf{MICH with Reversing}. Both plots show $\mathbf{y}_{1:T}$ (black line) generated from Simulation \ref{sim:main} with $J^* = 10$, $T = 1000$, $\Delta_T = 50$, $C = \sqrt{200}$, the true change-points (x-axis ticks), estimated change-points (vertical dashed lines), and estimated mean (solid line) and variance (dashed lines) signals. \textbf{Top}: MICH fit to $\mathbf{y}_{1:T}$ with $J=10$ selected using ELBO and $\hat{J}=8$. This fit misses the change-points at 442 and 598. \textbf{Bottom}: MICH fit to $\mathbf{y}_{T:1}$ with $J=10$ selected using ELBO and $\hat{J} = 10$. This fit recovers all of the change-points.}
    \label{fig:reverse-fit}
\end{figure}

\begin{figure}[!h]
    \centering
    \includegraphics[scale=0.36]{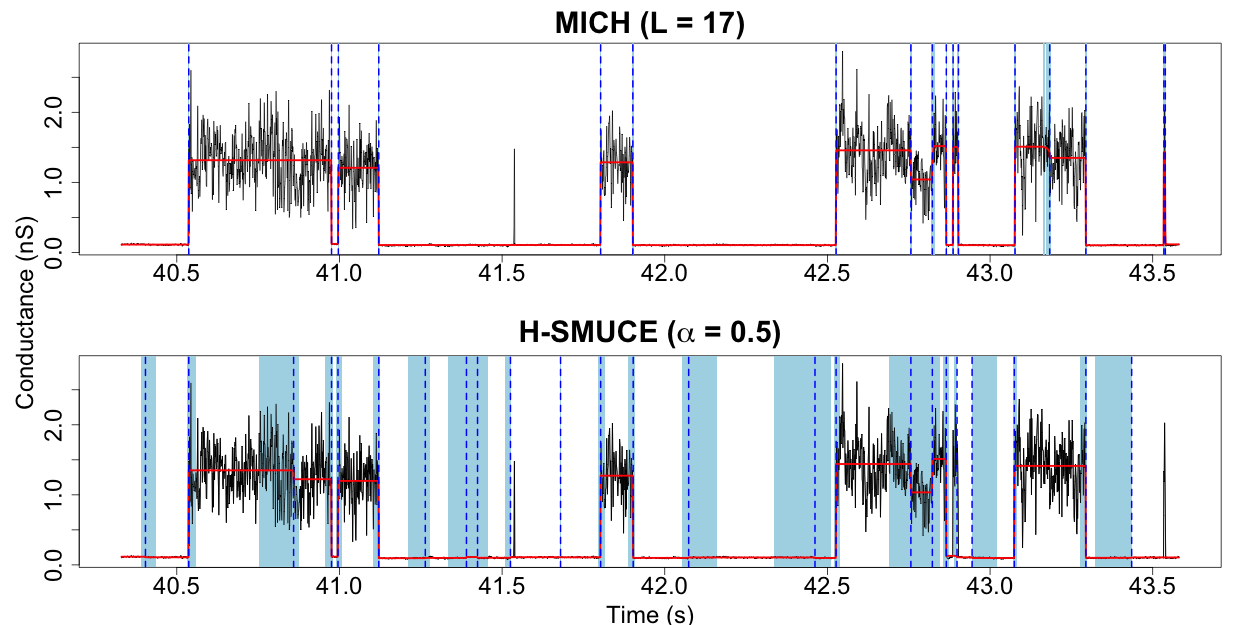}
    \caption{\textbf{Additional Ion Channel Model Fits}. Estimated mean signal (\textcolor{red}{\raisebox{0.5ex}{\rule{0.3cm}{1.5pt}}}) and change-points (\textcolor{blue}{\makebox[10pt][l]{\hdashrule[0.5ex]{10pt}{1.5pt}{4pt 1pt}}}) for 2,956 subsampled ionic current recordings (\textcolor{black}{\raisebox{0.5ex}{\rule{0.3cm}{1.5pt}}}) from a PorB porin collected by the Steinem lab. \textbf{Top}: MICH fit with $L = 17$ components and $\hat{J} = 17$ with 95\% credible sets shaded (\textcolor{cyan}{\raisebox{0.25ex}{\rule{0.3cm}{4pt}}}). \textbf{Bottom}: H-SMUCE fit with $\hat{J} = 24$ and 50\% confidence intervals shaded (\textcolor{cyan}{\raisebox{0.25ex}{\rule{0.3cm}{4pt}}})..}
    \label{fig:ion2}
\end{figure}

\begin{figure}[!h]
    \centering
    \includegraphics[scale=0.35]{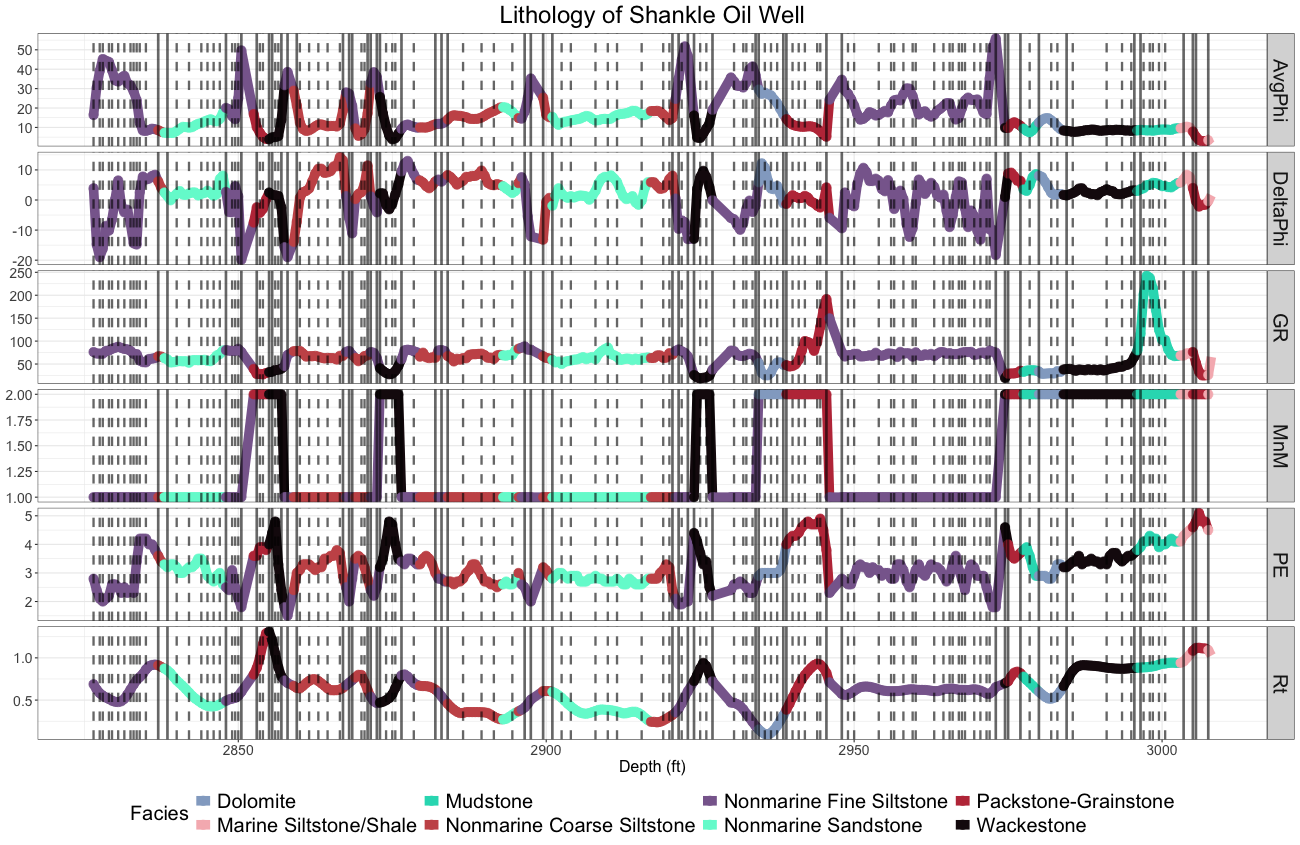}
    \caption{\textbf{\texttt{InspectChangepoint} Fit of Shankle Well Log}. $\hat{L}=140$ estimated changes (\textcolor{black}{\makebox[10pt][l]{\hdashrule[0.5ex]{10pt}{1.5pt}{4pt 1pt}}}) in the facies of the Shankle well from method of \cite{Wang17}. 75 estimated changes within one index of a true changes (\raisebox{0.5ex}{\rule{0.3cm}{1pt}}) and 32 true changes within one index of an estimated change. Note the Marine-Nonmarine layer must be omitted to fit \texttt{InspectChangepoint}.}
    \label{fig:well_inspect}
\end{figure}

\begin{figure}[!h]
    \centering
    \includegraphics[scale=0.35]{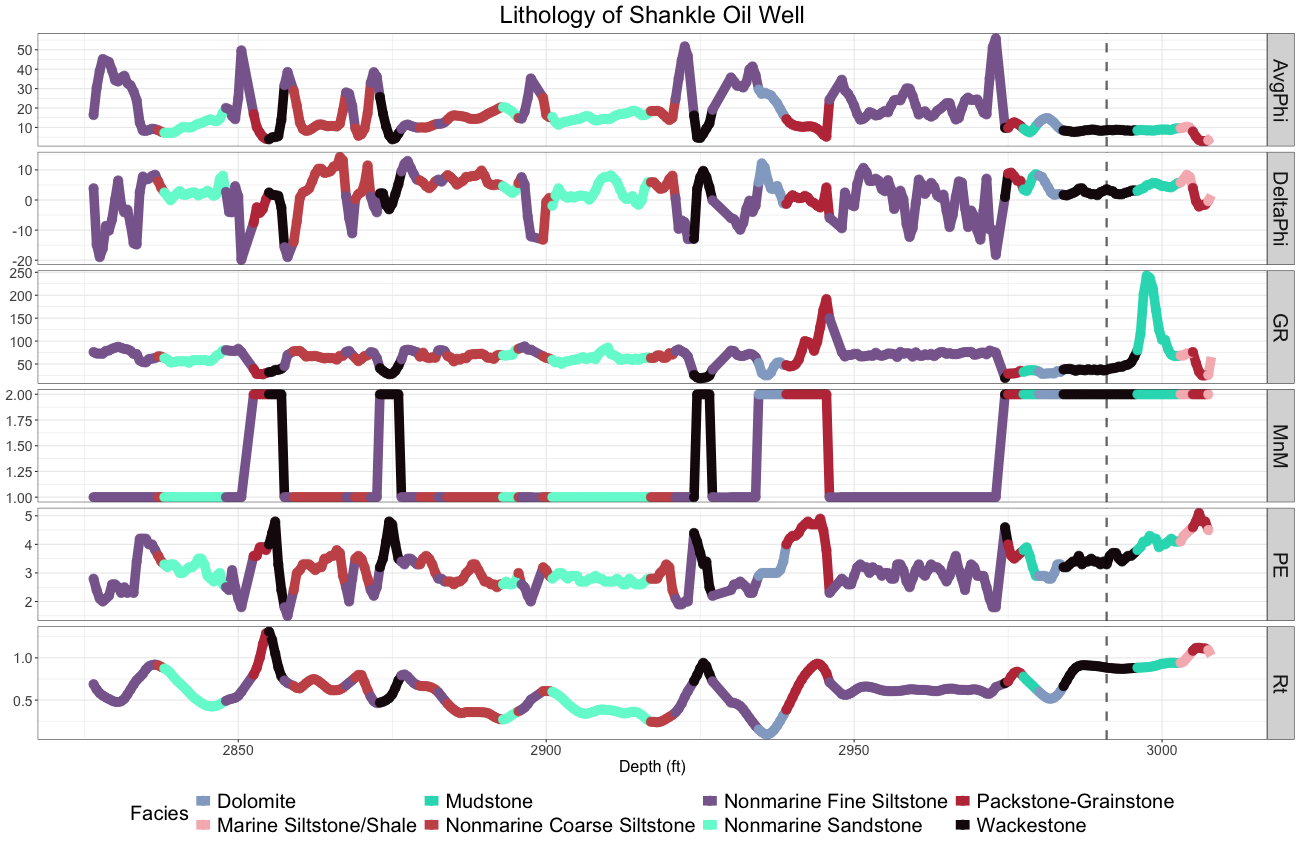}
    \caption{\textbf{\texttt{L2hdchange} Fit of Shankle Well Log}. $\hat{L}=1$ estimated change (\textcolor{black}{\makebox[10pt][l]{\hdashrule[0.5ex]{10pt}{1.5pt}{4pt 1pt}}}) in the facies of the Shankle well from method of \cite{Li23}.}
    \label{fig:well_l2}
\end{figure}

\end{document}